\newenvironment{CompactList}{%
  \setlength{\parskip}{0pt}
  \begin{itemize}[nosep]}{%
  \end{itemize}}
\definecolor{darkred}{rgb}{0.7,0.1,0.1}
\definecolor{darkgreen}{rgb}{0.1,0.7,0.1}
\definecolor{darkblue}{rgb}{0.1,0.1,0.7}
\definecolor{orange}{rgb}{1,0.5,0}
\definecolor{lightred}{RGB}{255,232,232}
\definecolor{lightgreen}{RGB}{232,255,232}
\definecolor{lightblue}{RGB}{232,255,255}
\definecolor{lightorange}{rgb}{1,0.7,0.3}
\definecolor{lightgray}{gray}{0.9}
\newcommand{\assume}[1]{\textcolor{darkred}{\bf\boldmath{#1}}}
\newcommand{\assumemm}[1]{\textcolor{darkred}{{#1}}}
\newcommand{\defcolor}{lightgreen}
\newcommand{\lemcolor}{lightblue}
\newcommand{\thmcolor}{lightred}
\newcommand{\rmkcolor}{lightgray}
\newcommand{\coloredbox}[2]{\fcolorbox{black}{#1}{#2}}
\newcommand{\defbox}[1]{\coloredbox{\defcolor}{#1}}
\newcommand{\lembox}[1]{\coloredbox{\lemcolor}{#1}}
\newcommand{\thmbox}[1]{\coloredbox{\thmcolor}{#1}}
\newcommand{\rmkbox}[1]{\coloredbox{\rmkcolor}{#1}}
\definecolor{metriccolor}{RGB}{225,255,255}
\definecolor{completecolor}{RGB}{195,255,255}
\definecolor{vectorcolor}{RGB}{255,225,225}
\definecolor{normedcolor}{RGB}{255,195,195}
\definecolor{innercolor}{RGB}{255,135,135}
\definecolor{hilbertcolor}{RGB}{212,162,212}
\definecolor{tanIII}{RGB}{205,135,63}
\definecolor{yellowII}{RGB}{238,238,0}
\definecolor{darkOliveGreenIII}{RGB}{162,205,90}
\definecolor{turquoiseII}{RGB}{0,229,238}
\newcommand{\colorA}{tanIII}
\newcommand{\colorB}{yellowII}
\newcommand{\colorC}{darkOliveGreenIII}
\newcommand{\colorD}{turquoiseII}
\tikzset{math3d/.style={x= {(-0.55cm,-0.55cm)}, z={(0cm,1cm)},y={(1cm,0cm)}}}
\newcommand{\thref}[1]{\ref{#1} ({\em \nameref{#1}})}
\newcommand{\threfc}[2]{\ref{#1} ({\em \nameref{#1}}, \uline{#2})}
\newcommand{\proofpar}[1]{{\bf\boldmath{#1}.}}
\newcommand{\proofparskip}[1]{\medskip\noindent\proofpar{#1}}
\newcommand{\FEM}{Finite Element Method}
\newcommand{\fem}{finite element method}
\newcommand{\vectorspace}{vector space}
\newcommand{\nonempty}{nonempty}
\newcommand{\nontrivial}{nontrivial}
\newcommand{\nonzero}{nonzero}
\newcommand{\nonaffine}{non-affine}
\newcommand{\nondegenerate}{nondegenerate}
\newcommand{\software}[1]{{\sf #1}}
\newcommand{\coq}{\software{Coq}}
\newcommand{\isabelle}{\software{Isabelle/HOL}}
\newcommand{\st}{{\,|\,}}
\newcommand{\leftst}{\;\left|\;}
\newcommand{\rightst}{\;\right|\;}
\newcommand{\cf}{cf.}
\newcommand{\eg}{e.g.}
\newcommand{\ie}{i.e.}
\newcommand{\half}{{\frac{1}{2}}}
\newcommand{\p}{\partial}
\newcommand{\dps}{\displaystyle}
\newcommand{\fhi}{\varphi}
\newcommand{\lbrabra}{[\hspace*{-0.12em}[}
\newcommand{\rbrabra}{]\hspace*{-0.12em}]}
\newcommand{\lacac}{\{\hspace*{-0.32em}\{}
\newcommand{\racac}{\}\hspace*{-0.32em}\}}
\newcommand{\makespace}[1]{\quad#1\quad}
\newcommand{\Equiv}{\Leftrightarrow}
\newcommand{\EQUIV}{\makespace{\Longleftrightarrow}}
\newcommand{\Implies}{\Rightarrow}
\newcommand{\IMPLIES}{\makespace{\Longrightarrow}}
\newcommand{\Conj}{\land}
\newcommand{\CONJ}{\makespace{\Conj}}
\newcommand{\Disj}{\lor}
\newcommand{\AND}{\makespace{\mathrm{and}}}
\newcommand{\eqdef}{\stackrel{\mathrm{def.}}{=}}
\newcommand{\equivdef}{\stackrel{\mathrm{def.}}{\EQUIV}}
\newcommand{\nin}{\not\in}
\newcommand{\fact}[1]{{#1!}}
\newcommand{\ffact}[1]{{#1\boldsymbol{!}}}
\newcommand{\kron}[2]{\delta_{#1#2}}
\newcommand{\kkron}[2]{\ddelta_{#1,#2}}
\newcommand{\binomnp}{\binom{n}{p}}
\newcommand{\binomkpdd}{\binom{k+d}{d}}
\newcommand{\calA}{\mathcal{A}}
\newcommand{\calB}{\mathcal{B}}
\newcommand{\calC}{\mathcal{C}}
\newcommand{\calE}{\mathcal{E}}
\newcommand{\calF}{\mathcal{F}}
\newcommand{\calH}{\mathcal{H}}
\newcommand{\calL}{\mathcal{L}}
\newcommand{\calM}{\mathcal{M}}
\newcommand{\calP}{\mathcal{P}}
\newcommand{\calR}{\mathcal{R}}
\newcommand{\calS}{\mathcal{S}}
\newcommand{\matC}{\mathbb{C}}
\newcommand{\matK}{\mathbb{K}}
\newcommand{\matN}{\mathbb{N}}
\newcommand{\matP}{\mathbb{P}}
\newcommand{\matQ}{\mathbb{Q}}
\newcommand{\matR}{\mathbb{R}}
\newcommand{\matS}{\mathbb{S}}
\renewcommand{\aa}{{\bf a}}
\newcommand{\cc}{{\bf c}}
\newcommand{\cco}{{\cc_0}}
\newcommand{\dd}{{\bf d}}
\newcommand{\ee}{{\bf e}}
\renewcommand{\gg}{{\bf g}}
\newcommand{\uu}{{\bf u}}
\newcommand{\vv}{{\bf v}}
\newcommand{\xx}{{\bf x}}
\newcommand{\xxo}{{\xx_0}}
\newcommand{\yy}{{\bf y}}
\newcommand{\yyo}{\yy_0}
\newcommand{\XX}{{\bf X}}
\newcommand{\zzero}{{\bf 0}}
\newcommand{\oone}{{\bf 1}}
\newcommand{\aalpha}{{\boldsymbol{\alpha}}}
\newcommand{\bbeta}{{\boldsymbol{\beta}}}
\newcommand{\ggamma}{{\boldsymbol{\gamma}}}
\newcommand{\ddelta}{{\boldsymbol{\delta}}}
\newcommand{\llambda}{{\boldsymbol{\lambda}}}
\newcommand{\mmu}{{\boldsymbol{\mu}}}
\newcommand{\pphi}{{\boldsymbol{\phi}}}
\newcommand{\ffhi}{{\boldsymbol{\fhi}}}
\newcommand{\cfhi}{\check{\fhi}}
\newcommand{\caalpha}{\check{\aalpha}}
\newcommand{\cbbeta}{\check{\bbeta}}
\newcommand{\cggamma}{\check{\ggamma}}
\newcommand{\cddelta}{\check{\ddelta}}
\newcommand{\ha}{\hat{a}}
\newcommand{\hf}{\hat{f}}
\newcommand{\hg}{\hat{g}}
\newcommand{\hp}{\hat{p}}
\newcommand{\hq}{\hat{q}}
\newcommand{\hv}{\hat{v}}
\newcommand{\hx}{\hat{x}}
\newcommand{\hH}{\hat{H}}
\newcommand{\hK}{\hat{K}}
\newcommand{\hP}{\hat{P}}
\newcommand{\hU}{\hat{U}}
\newcommand{\halpha}{\hat{\alpha}}
\newcommand{\hsigma}{\hat{\sigma}}
\newcommand{\hSigma}{\hat{\Sigma}}
\newcommand{\haa}{\hat{\aa}}
\newcommand{\hgg}{\hat{\gg}}
\newcommand{\hvv}{\hat{\vv}}
\newcommand{\hxx}{\hat{\xx}}
\newcommand{\hyy}{\hat{\yy}}
\newcommand{\hmmu}{{\hat{\mmu}}}
\newcommand{\hcalH}{\hat{\calH}}
\newcommand{\hcalL}{\hat{\calL}}
\newcommand{\hmatP}{\hat{\matP}}
\newcommand{\bpi}{{b^\prime_i}}
\newcommand{\fp}{{f^\prime}}
\newcommand{\ip}{{i^\prime}}
\newcommand{\jp}{{j^\prime}}
\newcommand{\up}{{u^\prime}}
\newcommand{\vp}{{v^\prime}}
\newcommand{\xp}{{x^\prime}}
\newcommand{\Ep}{{E^\prime}}
\newcommand{\Fp}{{F^\prime}}
\newcommand{\Kp}{{K^\prime}}
\newcommand{\alphap}{\alpha^\prime}
\newcommand{\lambdap}{{\lambda^\prime}}
\newcommand{\phip}{{\phi^\prime}}
\newcommand{\calBp}{{\calB^\prime}}
\newcommand{\calEp}{{\calE^\prime}}
\newcommand{\calFp}{{\calF^\prime}}
\newcommand{\ccp}{{\cc^\prime}}
\newcommand{\ccpo}{{\cc^\prime_0}}
\newcommand{\uup}{{\uu^\prime}}
\newcommand{\uupo}{{\uu^\prime_0}}
\newcommand{\vvp}{{\vv^\prime}}
\newcommand{\xxp}{{\xx^\prime}}
\newcommand{\xxpo}{{\xx^\prime_0}}
\newcommand{\yyp}{{\yy^\prime}}
\newcommand{\yypo}{{\yy^\prime_0}}
\newcommand{\aalphap}{\aalpha^\prime}
\newcommand{\bbetap}{\bbeta^\prime}
\newcommand{\xpp}{{x^{\prime\prime}}}
\newcommand{\tf}{{\tilde{f}}}
\newcommand{\tp}{{\tilde{p}}}
\newcommand{\tq}{{\tilde{q}}}
\newcommand{\tr}{{\tilde{r}}}
\newcommand{\ts}{{\tilde{s}}}
\newcommand{\txx}{\tilde{\xx}}
\newcommand{\tXX}{{\widetilde{\XX}}}
\newcommand{\taalpha}{{\widetilde{\aalpha}}}
\newcommand{\tbbeta}{{\widetilde{\bbeta}}}
\newcommand{\tggamma}{{\widetilde{\ggamma}}}
\newcommand{\tddelta}{{\widetilde{\ddelta}}}
\newcommand{\tfhi}{{\widetilde{\fhi}}}
\newcommand{\uaa}{\underline{\aa}}
\newcommand{\uvv}{\underline{\vv}}
\newcommand{\calAkd}{{\calA}^d_k}
\newcommand{\calAld}{{\calA}^d_l}
\newcommand{\calAko}{{\calA}^0_k}
\newcommand{\calAki}{{\calA}^1_k}
\newcommand{\calAkdmi}{{\calA}^{d-1}_k}
\newcommand{\calAkmid}{{\calA}^d_{k-1}}
\newcommand{\calAkdi}{{\calA}^d_{k,i}}
\newcommand{\calAkdo}{{\calA}^d_{k,0}}
\newcommand{\calAkdone}{{\calA}^d_{k,1}}
\newcommand{\calAiiidiiione}{{\calA}^3_{3,1}}
\newcommand{\calAkpid}{{\calA}^d_{k+1}}
\newcommand{\calAod}{{\calA}^d_0}
\newcommand{\calAid}{{\calA}^d_1}
\newcommand{\calAiiid}{{\calA}^d_3}
\newcommand{\calAiiidii}{{\calA}^2_3}
\newcommand{\calAiidiii}{{\calA}^3_2}
\newcommand{\calAiiidiii}{{\calA}^3_3}
\newcommand{\calCd}[1]{\calC^d_{#1}}
\newcommand{\calCkd}{\calCd{k}}
\newcommand{\calCkpid}{\calCd{k+1}}
\newcommand{\calCld}{\calCd{l}}
\newcommand{\calCod}{\calCd{0}}
\newcommand{\calCid}{\calCd{1}}
\newcommand{\calCiid}{\calCd{2}}
\newcommand{\calCiiid}{\calCd{3}}
\newcommand{\calCdpi}[1]{\calC^{d+1}_{#1}}
\newcommand{\calCkdpi}{\calCdpi{k}}
\newcommand{\calCdmi}[1]{\calC^{d-1}_{#1}}
\newcommand{\calCkpidmi}{\calCdmi{k+1}}
\newcommand{\calCki}{\calC^1_k}
\newcommand{\calCkii}{\calC^2_k}
\newcommand{\calCodii}{\calC^2_0}
\newcommand{\calCidii}{\calC^2_1}
\newcommand{\calCiidii}{\calC^2_2}
\newcommand{\calCiiidii}{\calC^2_3}
\newcommand{\calCiiidiii}{\calC^3_3}
\newcommand{\Cinf}{\calC^\infty}
\newcommand{\CinfRdR}{\Cinf(\matRd,\matR)}
\newcommand{\CinfRdRd}{\Cinf(\matRd,\matRd)}
\newcommand{\calSt}{{\widetilde{\calS}}}
\newcommand{\calStkd}[1]{\calSt^d_{k,#1}}
\newcommand{\calStkdi}{\calStkd{i}}
\newcommand{\calStkdpi}[1]{\calSt^{d+1}_{k,#1}}
\newcommand{\calStkdpii}{\calStkdpi{i}}
\newcommand{\calStkpid}[1]{\calSt^d_{k+1,#1}}
\newcommand{\calStkpido}{\calStkpid{0}}
\newcommand{\calStkpidi}{\calStkpid{i}}
\newcommand{\calSc}{\check{\calS}}
\newcommand{\calSckd}[1]{\calSc^d_{k,#1}}
\newcommand{\calSckdi}{\calSckd{i}}
\newcommand{\FElagP}[2]{\prescript{L\!a\!g}{}{\matP}_{#1}^{#2}}
\newcommand{\FElagPref}[2]{\prescript{L\!a\!g}{}{\hmatP}_{#1}^{#2}}
\newcommand{\FElagQ}[2]{\prescript{L\!a\!g}{}{\matQ}_{#1}^{#2}}
\newcommand{\matNstar}{\matN^\star}
\newcommand{\matNd}{\matN^d}
\newcommand{\matPdpi}[1]{\matP^{d+1}_{#1}}
\newcommand{\matPkdpi}{\matPdpi{k}}
\newcommand{\matPkmidpi}{\matPdpi{k-1}}
\newcommand{\matPd}[1]{\matP^d_{#1}}
\newcommand{\matPkd}{\matPd{k}}
\newcommand{\matPld}{\matPd{l}}
\newcommand{\matPnd}{\matPd{n}}
\newcommand{\matPkpid}{\matPd{k+1}}
\newcommand{\matPkmid}{\matPd{k-1}}
\newcommand{\matPkpld}{\matPd{k+l}}
\newcommand{\matPlmid}{\matPd{l-1}}
\newcommand{\matPnmid}{\matPd{n-1}}
\newcommand{\matPnmiid}{\matPd{n-2}}
\newcommand{\matPod}{\matPd{0}}
\newcommand{\matPid}{\matPd{1}}
\newcommand{\matPiid}{\matPd{2}}
\newcommand{\matQd}[1]{\matQ^d_{#1}}
\newcommand{\matQkd}{\matQd{k}}
\newcommand{\matQid}{\matQd{1}}
\newcommand{\matPl}[1]{\matP^l_{#1}}
\newcommand{\matPkl}{\matPl{k}}
\newcommand{\matPil}{\matPl{1}}
\newcommand{\matPdmi}[1]{\matP^{d-1}_{#1}}
\newcommand{\matPkdmi}{\matPdmi{k}}
\newcommand{\matPkpidmi}{\matPdmi{k+1}}
\newcommand{\matPkmidmi}{\matPdmi{k-i}}
\newcommand{\matPkpimidmi}{\matPdmi{k+1-i}}
\newcommand{\matPldmi}{\matPdmi{l}}
\newcommand{\matPndmi}{\matPdmi{n}}
\newcommand{\matPidmi}{\matPdmi{1}}
\newcommand{\matPodmi}{\matPdmi{0}}
\newcommand{\matPi}[1]{\matP^1_{#1}}
\newcommand{\matPki}{\matPi{k}}
\newcommand{\matPli}{\matPi{l}}
\newcommand{\matPoi}{\matPi{0}}
\newcommand{\matPii}{\matPi{1}}
\newcommand{\matPkpli}{\matPi{k+l}}
\newcommand{\matQi}[1]{\matQ^{1}_{#1}}
\newcommand{\matQki}{\matQi{k}}
\newcommand{\matPk}[1]{\matP^{#1}_k}
\newcommand{\matPkii}{\matPk{2}}
\newcommand{\matPkiii}{\matPk{3}}
\newcommand{\RT}{\mathbb{RT}}
\newcommand{\BDM}{\mathbb{BDM}}
\newcommand{\BDFM}{\mathbb{BDFM}}
\newcommand{\Ned}{\mathbb{N}}
\newcommand{\Hdiv}{{\bf H}(\mathrm{div})}
\newcommand{\Hcurl}{{\bf H}(\mathrm{curl})}
\newcommand{\matRplus}{\matR_+}
\newcommand{\matRd}{\matR^d}
\newcommand{\matRdmi}{\matR^{d-1}}
\newcommand{\matRkpi}{\matR^{k+1}}
\newcommand{\matRl}{\matR^l}
\newcommand{\matRlmi}{\matR^{l-1}}
\newcommand{\matRlpi}{\matR^{l+1}}
\newcommand{\matRn}{\matR^n}
\newcommand{\matRnpi}{\matR^{n+1}}
\newcommand{\matRp}{\matR^p}
\newcommand{\matRq}{\matR^q}
\renewcommand{\emptyset}{\varnothing}
\renewcommand{\leq}{\leqslant}
\renewcommand{\geq}{\geqslant}
\newcommand{\Interior}[1]{\stackrel{\circ}{#1}}
\newcommand{\Ball}{\calB}
\newcommand{\OpenBall}{\Ball^o}
\newcommand{\ind}[1]{{\pi^{#1}}}
\newcommand{\indl}{\ind{l}}
\newcommand{\indd}{\ind{d}}
\newcommand{\indinv}[1]{(\ind{#1})^{-1}}
\newcommand{\indlinv}{\indinv{l}}
\newcommand{\inddinv}{\indinv{d}}
\newcommand{\permc}[2]{{c^{#1}_{#2}}}
\newcommand{\permcd}[1]{{\permc{d}{#1}}}
\newcommand{\permcdo}{{\permcd{0}}}
\newcommand{\permcinv}[2]{{(c^{#1}_{#2})^{-1}}}
\newcommand{\permcdinv}[1]{\permcinv{d}{#1}}
\newcommand{\permtrsp}[2]{{\tau^{#1}_{#2}}}
\newcommand{\permtrspd}[1]{{\permtrsp{d}{#1}}}
\newcommand{\card}{\mathrm{card}}
\newcommand{\Span}[1]{{\mathrm{span}\left(#1\right)}}
\newcommand{\Ker}[1]{{\ker\left(#1\right)}}
\newcommand{\Rg}[1]{{\mathrm{rg}\left(#1\right)}}
\newcommand{\AffSp}[2]{{\calA\!f\!\!f\!\left(#1, #2\right)}}
\newcommand{\FSp}[2]{{{#2}^{#1}}}
\newcommand{\LSp}[2]{{\calL \left(#1, #2\right)}}
\newcommand{\LcSp}[2]{{\calL_c \left(#1, #2\right)}}
\newcommand{\FEF}{\FSp{E}{F}}
\newcommand{\LEF}{\LSp{E}{F}}
\newcommand{\LEpF}{\LSp{\Ep}{F}}
\newcommand{\LFG}{\LSp{F}{G}}
\newcommand{\LEG}{\LSp{E}{G}}
\newcommand{\AffEF}{\AffSp{E}{F}}
\newcommand{\AffcalEpF}{\AffSp{\calEp}{F}}
\newcommand{\AffcalEpcalFp}{\AffSp{\calEp}{\calFp}}
\newcommand{\AffcalFpcalEp}{\AffSp{\calFp}{\calEp}}
\newcommand{\AffFG}{\AffSp{F}{G}}
\newcommand{\AffEG}{\AffSp{E}{G}}
\newcommand{\AffFE}{\AffSp{F}{E}}
\newcommand{\AffRR}{\AffSp{\matR}{\matR}}
\newcommand{\AffRdR}{\AffSp{\matRd}{\matR}}
\newcommand{\AffRdRd}{\AffSp{\matRd}{\matRd}}
\newcommand{\AffRlRd}{\AffSp{\matRl}{\matRd}}
\newcommand{\AffRdmiRd}{\AffSp{\matRdmi}{\matRd}}
\newcommand{\AffRpRq}{\AffSp{\matRp}{\matRq}}
\newcommand{\LcEF}{\LcSp{E}{F}}
\newcommand{\Arrow}[2]{#1\to#2}
\newcommand{\ArEF}{\Arrow{E}{F}}
\newcommand{\ArcalEpF}{\Arrow{\calEp}{F}}
\newcommand{\ArcalEpcalFp}{\Arrow{\calEp}{\calFp}}
\newcommand{\ArRR}{\Arrow{\matR}{\matR}}
\newcommand{\ArRdRd}{\Arrow{\matRd}{\matRd}}
\newcommand{\ArRlRd}{\Arrow{\matR^l}{\matRd}}
\newcommand{\ArRpRq}{\Arrow{\matR^p}{\matR^q}}
\newcommand{\ArRdR}{\Arrow{\matRd}{\matR}}
\newcommand{\ArRdmiR}{\Arrow{\matR^{d-1}}{\matR}}
\newcommand{\Identity}{\mathrm{Id}}
\newcommand{\identity}[1]{\Identity_{#1}}
\newcommand{\vertiii}[1]{
  {\left\vert\kern-0.25ex\left\vert\kern-0.25ex\left\vert #1
    \right\vert\kern-0.25ex\right\vert\kern-0.25ex\right\vert}}
\newcommand{\norm}[2]{\left\|#2\right\|_{#1}}
\newcommand{\tnorm}[2]{\vertiii{#2}_{#1}}
\newcommand{\lto}[1]{<^{\mathrm{#1}}}
\newcommand{\ltlex}{\lto{lex}}
\newcommand{\ltcolex}{\lto{colex}}
\newcommand{\ltrevlex}{\lto{revlex}}
\newcommand{\ltsymlex}{\lto{symlex}}
\newcommand{\ltgrlex}{\lto{grlex}}
\newcommand{\ltgrevlex}{\lto{grevlex}}
\newcommand{\ltgrcolex}{\lto{grcolex}}
\newcommand{\ltgrsymlex}{\lto{grsymlex}}
\newcommand{\len}[1]{\left|#1\right|}
\newcommand{\nE}[1]{\norm{E}{#1}}
\newcommand{\nEdot}{\nE{\cdot}}
\newcommand{\nF}[1]{\norm{F}{#1}}
\newcommand{\nFdot}{\nF{\cdot}}
\newcommand{\tnEF}[1]{\tnorm{F, E}{#1}}
\newcommand{\famvert}[2]{{\lbrabra#1\rbrabra}^{#2}}
\newcommand{\famvertd}[1]{{\famvert{#1}{d}}}
\newcommand{\famvertdmi}[1]{{\famvert{#1}{d-1}}}
\newcommand{\famverti}[1]{{\famvert{#1}{1}}}
\newcommand{\famnode}[2]{{\lacac#1\racac}_{\!#2}}
\newcommand{\famnodekd}[1]{{\famnode{#1}{\calAkd}}}
\newcommand{\famnodekdmi}[1]{{\famnode{#1}{\calAkdmi}}}
\newcommand{\famnodekmid}[1]{{\famnode{#1}{\calAkmid}}}
\newcommand{\famnodeod}[1]{{\famnode{#1}{\calAod}}}
\newcommand{\famnodeid}[1]{{\famnode{#1}{\calAid}}}
\newcommand{\famnodeki}[1]{{\famnode{#1}{\calAki}}}
\newcommand{\hKgen}[1]{\hK^{#1}}
\newcommand{\hKi}{\hKgen{1}}
\newcommand{\hKd}{\hKgen{d}}
\newcommand{\hKl}{\hKgen{l}}
\newcommand{\Kv}[1]{K^{\famvert{\vv}{#1}}}
\newcommand{\Kvi}{K^{\famverti{v}}}
\newcommand{\Kvd}{\Kv{d}}
\newcommand{\Khvd}{K^{\famvertd{\hvv}}}
\newcommand{\Hvd}{H^{\famvertd{\vv}}}
\newcommand{\calHvd}{\calH^{\famvertd{\vv}}}
\newcommand{\calFvd}[1]{\calF^{\famvertd{\vv}}_{#1}}
\newcommand{\calFvdindo}{\calFvd{\ind{0}}}
\newcommand{\calFvdindl}{\calFvd{\indl}}
\newcommand{\calFvdindd}{\calFvd{\indd}}
\newcommand{\Fvd}[1]{F^{\famvertd{\vv}}_{#1}}
\newcommand{\Fvdindl}{\Fvd{\indl}}
\newcommand{\Fvdindd}{\Fvd{\indd}}
\newcommand{\barc}[1]{\lambda^{#1}}
\newcommand{\barcvd}{\barc{\famvertd{\vv}}}
\newcommand{\bbarc}[1]{\llambda^{#1}}
\newcommand{\bbarcvd}{\bbarc{\famvertd{\vv}}}
\newcommand{\isobaryc}[1]{g^{#1}}
\newcommand{\isobarycv}{\isobaryc{\famverti{v}}}
\newcommand{\bisobaryc}[1]{\gg^{#1}}
\newcommand{\bisobarycv}{\bisobaryc{\famvertd{\vv}}}
\newcommand{\hisobaryc}[1]{\hg^{#1}}
\newcommand{\hisobaryci}{\hisobaryc{1}}
\newcommand{\hbisobaryc}[1]{\hgg^{#1}}
\newcommand{\hbisobarycd}{\hbisobaryc{d}}
\newcommand{\Zkd}[2]{\zeta_{#1}^{#2}}
\newcommand{\zkd}{\Zkd{k}{d}}
\newcommand{\phigeoi}[1]{\varphi_{\mathrm{geo}}^{\famverti{#1}}}
\newcommand{\phigeoiv}{\phigeoi{v}}
\newcommand{\invphigeoiv}{\left(\phigeoiv\right)^{-1}}
\newcommand{\phigeo}{\ffhi_{\mathrm{geo}}}
\newcommand{\phigeod}[1]{{\phigeo^{\famvertd{#1}}}}
\newcommand{\phigeodv}{\phigeod{\vv}}
\newcommand{\invphigeod}[1]{\left(\phigeod{#1}\right)^{-1}}
\newcommand{\invphigeodv}{\invphigeod{\vv}}
\newcommand{\Ageod}[1]{A_{\mathrm{geo}}^{\famvertd{#1}}}
\newcommand{\Ageodv}{\Ageod{\vv}}
\newcommand{\invAgeod}[1]{\left(\Ageod{#1}\right)^{-1}}
\newcommand{\invAgeodv}{\invAgeod{\vv}}
\newcommand{\phind}[2]{\pphi_{#1}^{\famvertd{#2}}}
\newcommand{\phindl}[1]{\phind{\indl}{#1}}
\newcommand{\phindv}[1]{\phind{#1}{\vv}}
\newcommand{\phindlv}{\phindl{\vv}}
\newcommand{\phindd}[1]{\phind{\indd}{#1}}
\newcommand{\phinddv}{\phindd{\vv}}
\newcommand{\invphindlv}{\left(\phindl{\vv}\right)^{-1}}
\newcommand{\invphinddv}{\left(\phindd{\vv}\right)^{-1}}
\newcommand{\phipermcd}[1]{\phind{\permcd{#1}}{\vv}}
\newcommand{\invphipermcd}[1]{\left(\phipermcd{#1}\right)^{-1}}
\newcommand{\phipermtrspd}[1]{\phind{\permtrspd{#1}}{\vv}}
\newcommand{\Bind}[2]{B_{\ind{#1}}^{\famvertd{#2}}}
\newcommand{\Bindlv}{\Bind{l}{\vv}}
\newcommand{\fkd}[1]{f^d_{k,#1}}
\newcommand{\fkdo}{\fkd{0}}
\newcommand{\fkdi}{\fkd{i}}
\newcommand{\tfkdo}{\tf^d_{k,0}}
\newcommand{\thetinj}[2]{\theta_{#1}^{#2}}
\newcommand{\thetinjd}[1]{\thetinj{#1}{d}}
\newcommand{\thetinjdmi}[1]{\thetinj{#1}{d-1}}
\newcommand{\pder}{\partial}
\newcommand{\Diff}{\mathrm{D}}
\newcommand{\PropPP}{P}
\newcommand{\PropPPmn}{{\PropPP(m,n)}}
\newcommand{\PropQQ}{Q}
\newcommand{\PropQQm}{{\PropQQ(m)}}
\newcommand{\nsh}{n_{\mathrm{sh}}}
\newcommand{\hSigmakd}{\hSigma^d_k}
\newcommand{\hSigmaki}{\hSigma^1_k}
\newcommand{\calLp}{\calL^p}
\newcommand{\Ltwo}{L^2}
\newcommand{\ToPPki}{\texorpdfstring{$\matPki$}{Pk1}}
\newcommand{\ToPPid}{\texorpdfstring{$\matPid$}{P1d}}
\newcommand{\ToPFEki}{\texorpdfstring{$\FElagP{k}{1}$}{Pk1}}
\newcommand{\ToPFErefki}{\texorpdfstring{$\FElagPref{k}{1}$}{Pk1}}
\newcommand{\ToPFEkd}{\texorpdfstring{$\FElagP{k}{d}$}{Pkd}}
\declaretheoremstyle[
  headfont=\normalfont\bfseries,
  notefont=\normalfont\bfseries\itshape,
  bodyfont=\normalfont]{mydef}
\declaretheoremstyle[
  headfont=\normalfont\bfseries,
  notefont=\normalfont\bfseries\itshape,
  bodyfont=\normalfont\itshape]{mythm}
\declaretheoremstyle[
  headfont=\normalfont\itshape,
  notefont=\normalfont\itshape,
  bodyfont=\normalfont]{myrmk}
\newcommand{\mydeclaretheoremshaded}[3]{
  \declaretheorem[
    #1,
    shaded={
      textwidth=0.99\textwidth,
      bgcolor=#2,
      rulecolor=black,
      rulewidth=0.5pt}]{#3}
}
\newcommand{\mydeclaretheoremstyle}[3]{
  \mydeclaretheoremshaded{numberlike=theorem, style=#1}{#2}{#3}
}
\newcommand{\thm}[1]{#1 theorem} 
\newcommand{\BL}{Beppo Levi}
\newcommand{\BLt}{\thm{{\BL} (monotone convergence)}}
\newcommand{\Cara}{Carathéodory}
\newcommand{\Carat}{\thm{{\Cara}'s extension}}
\newcommand{\EdPkd}{of Euclidean division by monomial in~$\matPkd$}
\newcommand{\EdPkdL}{Lemma {\EdPkd}}
\newcommand{\EdPkdl}{lemma {\EdPkd}}
\newcommand{\Fl}{Fatou's lemma}
\newcommand{\FL}{Fatou--Lebesgue}
\newcommand{\FLt}{\thm{\FL}}
\newcommand{\Ldcv}{Lebesgue's dominated convergence}
\newcommand{\Ldcvt}{\thm{\Ldcv}}
\newcommand{\Ledcv}{Lebesgue's extended dominated convergence}
\newcommand{\Ledcvt}{\thm{\Ledcv}}
\newcommand{\LedcvLpt}{{\Ledcvt} in~$\calLp$}
\newcommand{\LM}{Lax--Milgram}
\newcommand{\LMT}{\thm{\LM}}
\newcommand{\LMC}{Lax--Milgram--Céa}
\newcommand{\LMCT}{\thm{\LMC}}
\newcommand{\RFi}{Riesz--Fischer}
\newcommand{\RFit}{\thm{\RFi}}
\newcommand{\RFr}{Riesz--Fréchet}
\newcommand{\RFrT}{\thm{{\RFr} representation}}
\newcommand{\Tont}{\thm{Tonelli}}
\newcommand{\UPkd}{of unisolvence of~$\FElagP{k}{d}$}
\newcommand{\UPkdT}{Theorem {\UPkd}}
\newcommand{\UPkdt}{theorem {\UPkd}}
\newcommand{\MIndAkd}{of multi-indices~$\calAkd$ and~$\calCkd$}
\newcommand{\MIndAkdD}{Definition {\MIndAkd}}
\newcommand{\LagPkd}{of Lagrange nodes}
\newcommand{\LagPkdD}{Definition {\LagPkd}}
\newcommand{\rLbbf}{Representation lemma for bounded bilinear forms}
\newcommand{\opTcs}{\thm{Orthogonal projection} for a complete subspace}
\newcommand{\fpT}{\thm{Fixed point}}
\newcommand{\Eutpm}{Existence and uniqueness of the tensor product measure}
\newcommand{\mct}{\thm{monotone class}}
\newcommand{\Dplt}{%
  \thm{Dynkin \texorpdfstring{$\pi$}{pi}--\texorpdfstring{$\lambda$}{lambda}}}
\newcommand{\Abst}{{\Dplt} / {\mct}}
\newcounter{prevtheorem}
\newcounter{nexttheorem}
  \thanks[serena]{%
    Inria \&
    CERMICS, École des Ponts, 77455 Marne-la-Vallée Cedex 2, France.
    \texttt{Francois.Clement@inria.fr}.}
  \thanks[lmac]{%
    Université de technologie de Compiègne,
    LMAC, 
    60203 Compiègne, France. 
    \texttt{Vincent.Martin@utc.fr}.}
\begin{document}

\RRNo{9557}
\makeRR

\chapter*{Foreword}

\newcommand{\HALRR}{https://hal.inria.fr/hal-04713897}
\newcommand{\Version}[1]{\href{{\HALRR}v#1}{Version~#1}}

\noindent
This document is intended to evolve over time.
Last version is release~1.0 ({\ie} version~1).\\
It is available at \url{\HALRR/}.

\bigskip\noindent
\Version{1} (release 1.0, 2024/09/30) is the first release.\\
It covers:
\begin{CompactList}
\item the general definition of finite element;
\item some results about simplicial geometry;
\item
  the construction of simplicial Lagrange finite elements on a segment,
  including results about $\matPki$~Lagrange polynomials;
\item
  the construction of simplicial Lagrange finite elements in
  dimension~$d\geq1$, including:
  \begin{CompactList}
  \item the construction of multi-indices of given maximum length;
  \item
    results about multivariate polynomials, such as
    the linear independence of monomials, and
    the Euclidean division by a monomial;
  \item
    results about~$\matPid$ Lagrange polynomials, such as
    their view as barycentric coordinates;
  \item
    results about affine geometric mappings such as
    the transformation of $l$-faces of dimension~$l\leq d$;
  \item results about Lagrange nodes and Lagrange linear forms of~$\matPkd$;
  \item the proofs of unisolvence of~$\matPkd$, and of face unisolvence.
  \end{CompactList}
\end{CompactList}

\dominitoc
\setcounter{minitocdepth}{3}

\tableofcontents

\part{Overview}
\label{p:overview}

\chapter{Introduction}
\label{c:introduction-1}

\section{Formal proof}
\label{s:formal-proof}

A formal proof is conducted in a logical framework that provides dedicated
computer programs to mechanically check the validity of the proof, the
so-called formal proof assistants.
Such formal proofs may concern known mathematical theorems, but also properties
of some piece of other computer programs, {\eg} see~\cite{har:fpt:08},
and~\cite[Glossary p. 343]{bol:tcm:14}.
This field of computer science is extremely popular as it allows to certify
with no doubt the behavior of critical programs.

Interactive theorem provers are now known to be able to tackle real analysis.
For instance, in the field of Ordinary Differential Equations (ODEs) with
{\isabelle}~\cite{ih:nao:12,imm:fvc:14,it:fod:16}, and
{\coq}~\cite{ms:pao:13}, or in the field of Partial Differential Equations
(PDEs), again with {\isabelle}~\cite{ap:ihf:16}, and
{\coq}~\cite{bol:wen:13,bol:tcm:14}.
In the latter example, the salient aspect is that the round-off error due to
the use of IEEE-754 floating-point arithmetic can also be fully taken into
account.
But the price to pay is that all the details of the proofs has to be dealt
with, and thus the availability of very detailed pen-and-paper proofs is a
major asset.

\section{Objective}
\label{s:objective}

Our current long-term purpose is to formally prove programs implementing the
{\FEM} (FEM).
The FEM is widely used to solve a broad class of PDEs, mainly because of its
sound mathematical foundation, see Section~\ref{ss:fem-ref}.
A description of the typical various components that are necessary to solve a
partial differential equation with the FEM is shown in Figure~\ref{f:FEM-map}.

Consider a physical stationary problem over a domain~$\Omega\subset\matRd$,
where~$d$ is often~1, 2 or~3.
First, a {\em continuous} variational formulation is set on some functional
spaces, typically Hilbert or Banach spaces (complete normed vector spaces,
equipped or not with a scalar product) over the domain, denoted by~$V$.
In all cases, $V$ is infinite-dimensional and thus cannot be represented on a
computer.
In favorable cases, a theorem, such as the {\LMT} ({\eg}
see~\cite{cm:lmt:16}), provides existence and uniqueness of the continuous
solution.
Here, ``continuous'' means that the solution is a general function defined
over the {\em whole}~$\Omega$, in contrast to a ``discrete'' solution that is
entirely defined by a {\em finite} number of data.

Second, a {\em discrete} variational formulation is built.
To do so, a mesh is constructed on~$\Omega$, {\ie} a finite set of polygons
that covers~$\Omega$ (with additional properties).
These polygons are called geometric elements.
A {\em finite element}, {\ie} a space of polynomials up to a certain degree
and the type of primary values that are computed (such as values at certain
points, means, fluxes), is chosen in these geometric elements.
This defines the finite dimensional approximation space~$V_h$ that
(inexactly) represents the continuous Hilbert or Banach space~$V$.
The discrete variational formulation is also chosen.
In some cases, it is simply the restriction of the continuous variational
formula to~$V_h$.
In most cases, additional terms are introduced to improve some numerical
features (such as robustness and convergence), this is often called the
numerical scheme.
A theorem should provide existence, uniqueness of the discrete solution, and
the convergence in some sense of this discrete solution towards the
continuous solution.
This discrete formulation also involves generally an approximation to compute
the integrals, which is called the quadrature.

Finally, the discrete system (linear or nonlinear) is built and solved
numerically.
The result is a vector of floating points that can be stored and used to
visualize the solution.

In addition, {\em a posteriori} computations can be devised to improve the
results, with a change of mesh for instance.
When the problem is unsteady, the discrete part of the procedure is
partly repeated: one tries to re-use as much as possible the computed data to
avoid numerous expensive computations.


\begin{figure}[t]
  \centering
      \begin{tikzpicture}
    \tikzstyle{rectr}=
      [draw, rounded corners=3pt, text width=3.1cm, align=center];
    \tikzstyle{rectd}=[draw, text width=2.2cm, align=center];
    \tikzstyle{fl}=[->, >=latex, very thick];
    \tikzstyle{flr}=[->, >=stealth', thin,dashed, red];

    \node[rectr, fill=\colorA] (CVF) at (4,5.5)
      {Continuous Variational Formulation};
    \node[rectr, fill=\colorB] (DVF) at (4,3.5)
      {Discrete Variational Formulation};
    \node[rectr, fill=\colorC] (Res) at (4,2) {Resolution};
    \node[rectr, fill=\colorD] (Vis) at (4,0.5) {Visualization};
    \node[rectr, fill=\colorD] (APos) at (7.7,2.75) {A posteriori};

    \draw[fl] (CVF) to (DVF);
    \draw[fl] (DVF) to (Res);
    \draw[fl] (Res) to (Vis);
    \draw[fl, dashed, rounded corners=3pt]
      (Res.east) -- (6,2) -- (6,3.5) -- (DVF.east);

    \node[rectd] (LM) at (0.5,5.5) {{\small {Lax--Milgram}}};
    \node[rectd] (Hilb) at (7.5,5.5) {{\small {Hilbert space}}};
    \node[rectd] (Msh) at (0.5,4.4) {{\small Mesh}};
    \node[rectd] (FE) at (0.5,3.8) {{\small {Finite Element}}};
    \node[rectd] (Quad) at (0.5,3.2) {{\small {Quadrature}}};
    \node[rectd] (Sch) at (0.5,2.6) {{\small Scheme}};
    \node[rectd] (LinS) at (0.5,1.6) {{\small Linear Solver}};
    \node[rectd] (NLS) at (0.5,0.7) {{\small Non Linear Solver}};

    \draw[flr] (LM) to (CVF);
    \draw[flr] (Hilb) to (CVF);
    \draw[flr] (LM) to (DVF);
    \draw[flr] (Msh.east) to[bend left] (DVF.170);
    \draw[flr] (FE) to (DVF);
    \draw[flr] (Quad.east) to (DVF.183);
    \draw[flr] (Sch.east) to[bend right] (DVF.191);
    \draw[flr] (LinS) to (Res);
    \draw[flr] (NLS.east) to[bend right] (Res.200);
  \end{tikzpicture}
  \caption[Components of a typical (stationary) FEM formulation.]{%
    Brief recall of the various components of a typical (stationary) FEM
    formulation.\protect\\
    The black thick arrows depict the usual construction steps when solving a
    problem via the FEM.
    The black thick dashed arrow shows the possible feedback loop that
    {\em a posteriori} analysis allows, to change the mesh, the finite
    element ({\ie} the order of polynomial approximation, the quadrature
    formula or the numerical scheme).
    The red dashed arrows depict when a theorem or a numerical tool takes
    place.}
  \label{f:FEM-map}
\end{figure}
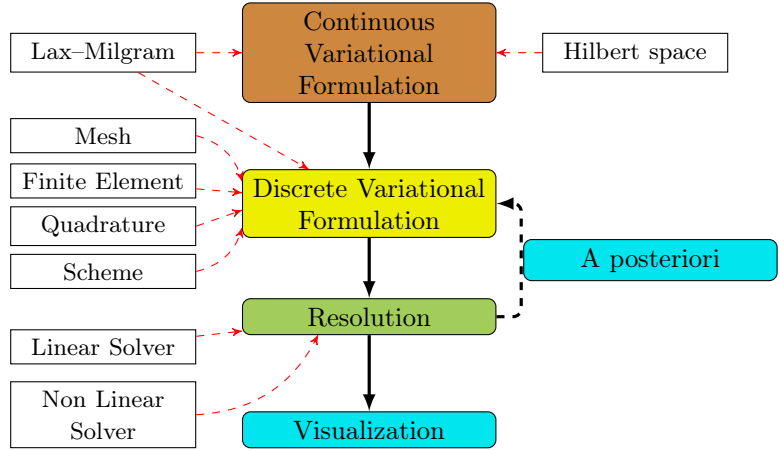

\bigskip

The {\LMT}, one of the key ingredients to establish the FEM in some cases,
was already addressed, see~\cite{cm:lmt:16} for a detailed pen-and-paper
proof, and~\cite{bol:cfp:17} for a formal proof in~{\coq}.
A part of the Lebesgue integration theory was also addressed,
see~\cite{cm:li:21} for detailed pen-and-paper proofs,
and~\cite{bol:cfl:22,bol:cfl:23} for a formalization in~{\coq}.

\bigskip

The present document is a further contribution to our ultimate goal.
It focuses only on the definition of the finite element itself as a triple,
see Chapter~\ref{c:fe}, and on the construction of Lagrange finite elements
of degree~$k$ on $d$-simplices, see Chapter~\ref{c:Pkd-lag-fe}.
These Lagrange finite elements are defined over simplices (segments, triangles
or tetrahedra when~$d=1,2,3$), with polynomials having a total degree no
greater than~$k$ (denoted $\matPkd$ in the sequel), and with nodal values that
are evenly distributed over the whole simplex, including its vertices.

\section{{The finite elements}}
\label{s:fem}

In this document, $d\in\matNstar$ denotes the space dimension (typically
$d\in\{1,2,3\}$), and $k\in\matN$ denotes the {\em degree of approximation},
{\ie} some polynomial degree.

Let us first define two polynomial spaces that are useful in the sequel.
$\matPkd$, as already said, is the {\vectorspace} of polynomials of~$d$
variables and of total degree at most~$k$,
\[
  \matPkd \eqdef \left\{
    \left( \xx \longmapsto
      \sum_{\aalpha \in \matNd, \sum_{i = 1}^d \alpha_i \leq k}
      a_\aalpha x_1^{\alpha_1} \ldots x_d^{\alpha_d} \right) : \ArRdR
    \rightst
    \left.
    \vphantom{%
      \sum_{\aalpha \in \matNd, \sum_{i = 1}^d \alpha_i \leq k}
      a_\aalpha x_1^{\alpha_1} \ldots x_d^{\alpha_d}}
    a_\aalpha \in \matR
  \right\}.
\]
$\matQkd$ is the {\vectorspace} of polynomials of~$d$ variables and of degree
at most~$k$ for each variable,
\[
  \matQkd \eqdef \left\{
    \left( \xx \longmapsto
      \sum_{\bbeta \in \matNd, \forall i \in [1..d], \beta_i \leq k}
      a_\bbeta x_1^{\beta_1} \ldots x_d^{\beta_d} \right) : \ArRdR
    \rightst
    \left.
      \vphantom{
      \sum_{\bbeta \in \matNd, \forall i \in [1..d], \beta_i \leq k}
      a_\bbeta x_1^{\beta_1} \ldots x_d^{\beta_d}}
    a_\bbeta \in \matR
  \right\}.
\]
$\matPkd$ is the polynomial approximation space for many simplicial elements,
thus it will be used in the whole document, whereas~$\matQkd$, used mainly for
hexahedra, is necessary for this introductory part only.
For all $k\in\matN$, we have obviously $\matPki=\matQki$, and, for all
$d\geq2$, $\matPkd\subsetneq\matQkd$.
Note that~$\matPid$ is the space of affine functions to~$\matR$.

\subsection{Some references}
\label{ss:fem-ref}

The mathematical literature dedicated to the FEM is very rich, FEM is a very
active field of research.
Thus, it is not intended in the present document to be comprehensive, nor to
review the various approaches to this method.
We limit ourselves to some books among many other references:
Ciarlet~\cite{cia:fem:78},
Ern~\cite{ern:ef:05},
Ern and Guermond~\cite{eg:tpf:04,eg:fe1:21,eg:fe2:21,eg:fe3:21},
Babu{\v{s}}ka and Strouboulis~\cite{bs:fem:01},
Quarteroni and Valli~\cite{qv:nap:94},
Brenner and Scott~\cite{bs:mtf:08},
and Zienkiewicz, Taylor and Zhu~\cite{ztz:fem:13}.
The first part of the Handbook of Numerical Analysis is dedicated to the finite
element methods~\cite{hna:cia:91,hna:wah:91,hna:rt:91,hna:bo:91,hna:fs:91}.

The interested reader will find many references in all these books.

\subsection{The finite element as a triple}
\label{ss:fe-triple}

To define a finite element, in addition to~$d$ and~$k$, some other notations
are necessary: $q\in\matNstar$ denotes the size of the physical unknowns
($q$~is generally~1 or~$d$), and~$\nsh\in\matNstar$ represents the number of
so-called local {\em shape functions}, {\ie} a number of linear forms on~$P$
(see below).
A ``correct'' finite element needs to satisfy the unisolvence property, that
relates~$\Sigma$ (see below) and~$P$ (and thus~$\nsh$ and~$k$).

Following~\cite{cia:fem:78} and~\cite{eg:fe1:21}, a finite element is
mathematically defined as a triple $(K,P,\Sigma)$, where:
\begin{itemize}
\item
  $K$~is a geometrical ``element'', {\ie} a closed bounded set with {\nonempty}
  interior.
  Typically, $K$~is a {\nondegenerate} polygonal set (a polyhedron), or the
  image of such a set by a regular function.

\item
  $P$~is a {\nonzero}, finite-dimensional {\vectorspace} of functions
  from~$\matRd$ to~$\matRq$.
  It is meant to be a space of polynomial functions, or the image of such a
  space via regular functions.

\item
  $\Sigma$ is a collection of linear forms on $P$,
  $\Sigma\eqdef(\sigma_i)_{i\in[1..\nsh]}$ where each $\sigma_i$ is a linear
  form  on~$P$.
  The application $\phi_\Sigma:P\to\matR^{\nsh}$ defined by
  $\phi_\Sigma(p)\eqdef(\sigma_i(p))_{i\in[1..\nsh]}$ is an isomorphism.
\end{itemize}

The bijectivity of~$\phi_\Sigma$ is called {\em unisolvence}.
The~$\sigma_i$ are the degrees of freedom, and~$\nsh$ is the number of degrees
of freedom.
It can be easily seen that the unisolvence requires that the dimension of $P$
be equal to~$\nsh$.

\bigskip

A finite element is generally defined over what is called a {\em reference}
geometrical element, a regular polyhedron~$\hK$: the reference FE is
denoted~$(\hK,\hP,\hSigma)$.
Typically, $\hK$~is the unit rectangle simplex, see below
Section~\ref{sss:geom-simplex-vs-quad} or Definition~\ref{d:ref-simplex}.
It can also be the unit $d$-cube.
This reference geometrical element~$\hK$ is transported to a so-called
{\em current} element of the mesh denoted by~$K$, by a geometric mapping that
we call~$\phigeo^K:\ArRdRd$, see a simple 2D~example in
Figure~\ref{f:geo-map-d=2}.
The finite element~$(K,P,\Sigma)$ on~$K$ is then somewhat the image
of~$(\hK,\hP,\hSigma)$ by $\phigeo^K$, see~\cite[Sec. 9.1]{eg:fe1:21}.

The 1D~Lagrange finite element is constructed in this way in
Chapter~\ref{c:Pk1-lag-fe}.
Note that in Chapter~\ref{c:Pkd-lag-fe}, the general case~$d\geq1$ is treated
the other way around: we deal directly with a general {nondegenerate}
$d$-simplex, and then introduce the reference Lagrange finite element as a
particular case.
The reason for this is explained at the end of
Section~\ref{s:sketch-of-the-proof-of-unisolvence-Pkd}.

According to the hypotheses on~$\phigeo^K$, $K$~can be a polyhedron with planar
faces, or not, and can be convex, or not.
Not all geometrical elements are possible, though.
In this document, we assume that the mesh is {\em affine}, which means
that~$\phigeo^K$ is supposed to be invertible and affine.
The {\nonaffine} meshes are more tricky to handle, see
Section~\ref{sss:geom-simplex-vs-quad} below.
Note that the geometric mapping is affine iff each of its components is
in~$\matPid$.
With a slight abuse of notation, the geometrical element is commonly
labelled~$\matPid$, $\matQid$ or~$\matPiid$\ldots when the components
of~$\phigeo^K$ are in the corresponding space.

In this document, the FE~triple is specified in Definition~\ref{d:fe-triple},
using the hypothesis of an affine mesh, thus~$K$ is assumed to be a polyhedron.

\bigskip

Note that one can also define a finite element as a quadruple, adding to the
previous triple a linear interpolation operator that maps continuously
functions in a larger space containing~$P$ (to be defined) to~$P$
itself, see~\cite[Sec. 5.3.]{eg:fe1:21}.
The construction of the interpolation operator is not covered in the present
version of the document.

\subsection{Lagrange finite elements}
\label{ss:fe-lag}

There exist various families of finite elements, that are more or less adapted
to the considered problem.
For each family, there are usually a version for the simplices, and another for
the quadrangles (when~$d=2$), or the hexahedra (when~$d=3$).
Some finite elements can also be constructed on other geometries such as
hexagons ($d=2$) or prisms ($d=3$).

First, let us discuss briefly the geometry of the FE. Then, we introduce the
most common families of~FE, before presenting the nodal finite elements, and in
particular Lagrange FE.

\subsubsection{Typology of geometries: simplex vs quadrangular}
\label{sss:geom-simplex-vs-quad}

When the geometrical element is a simplex, or when it is a $d$-cuboid (defined
as $\prod_{i=1}^d[x_i, y_i]$, for some $(x_i)_{i\in[1..d]}$ and
$(y_i)_{i\in[1..d]}$ in $\matR$, with $x_i<y_i$), the geometrical
mapping~$\phigeo^K$ is affine, thus covered by the assumption in this
document.
For cuboids, note that the~FE can be defined by a tensorization of~1D
finite elements, see~\cite[Sec. 6.4]{eg:fe1:21}.

However, dealing with {\nonaffine} geometric element may be necessary.
For instance, to better approximate the geometry of the domain (think of
airplane wings for instance), there exist {\em curved} simplicial finite
elements: for instance, $\matPiid$-simplices are simplices where the
geometrical mapping is in~$(\matPiid)^d$.
Thus, when~$d=2$, each edge of~$K$ is a part of a parabola,
see~\cite[Chap. 13]{eg:fe1:21}.
Another important {\nonaffine} mesh case in practice, is the
$\matQid$-quadrangles and hexahedra: in this case, $\phigeo^K$~is
in~$(\matQid)^d$ (thus not affine {\it a priori}).
In~3D in particular, this may give complicated geometries, as general
{\nondegenerate} hexahedra are not always convex, and their faces are not
necessarily planar.

This leads to complex numerical issues: meshing the domain~$\Omega$ can be a
very hard task (it is {\em not} simple, even with simplices).
One needs to avoid ``wrong''  geometrical elements (for instance
self-intersecting), the approximation results are also more involved, the
integration needs more care, and so on.

\bigskip

To conclude on geometrical element, we say once again that we limit ourselves
to affine, invertible geometrical mapping, and more specifically to geometrical
elements that are {\nondegenerate} standard simplices.

In this case, the reference simplex~$\hK$ is the unit rectangle simplex, see
Definition~\ref{d:ref-simplex}.
It is the convex envelop of the reference vertices
$\famvertd{\hvv}\eqdef(\hvv_0,\ldots,\hvv_n)$, where~$\hvv_0$ is the canonical
origin~$\zzero\eqdef(0,\dots,0)^T\in\matRd$, and for all $i\in[1..d]$, $\hvv_i$
corresponds to the $i$-th canonical basis vector
$\ee_i\eqdef(0,\dots,0,1,0,\dots,0)^T\in\matRd$ (the~1 is in position~$i$).

Given~$d+1$ points $\famvertd{\vv}\eqdef(\vv_0,\ldots,\vv_n)$ in~$\matRd$, the
simplex~$K^{\famvertd{\vv}}$ is the convex envelop of~$\famvertd{\vv}$, see
Definition~\ref{d:simplex}.
It is proven {\nondegenerate} iff~$\famvertd{\vv}$ is affinely independent.
The geometrical mapping that maps~$\hK$ to~$K$ is defined as, see
Defintion~\ref{d:geo-mapping},
$\phigeo^K(\hxx)\eqdef\sum_{i=0}^d\hcalL^{1,d}_i(\hxx)\,\vv_i$ where
$(\hcalL^{1,d}_i)_{i\in[0..d]}$ is the Lagrange basis of~$\matPid$, such that
$\hcalL^{1, d}_0(\hxx)\eqdef1-\sum_{j=0}^d\hx_j$, and for all $i\in[1..d]$,
$\hcalL^{1, d}_i(\hxx)\eqdef\hx_i$.
It is easy to see that~$\phigeo^K$ is affine.

\subsubsection{Some finite elements}
\label{sss:some-FE}

Various approximation spaces~$P$ and linear forms~$\Sigma$ exist in the
literature.
Just to mention some of the most popular (original references can be found
in~\cite{eg:fe1:21}):
\begin{itemize}
\item
  for the {\em nodal} FE, the linear forms are the evaluation at some
  points~$\aa_i$, for $i\in[1..\nsh]$, that are called the FE~{\em nodes}:
  $\sigma_i(p)=p(\aa_i)$.
  The Lagrange FE are the most common nodal FE, see~\cite[Sec.~6.4
  and~7.4]{eg:fe1:21}, and below.

\item
  the Hermite FE are nodal FE, but the linear forms are the evaluation at the
  nodes of both the function and its derivatives (assuming enough regularity
  for the function), see~\cite{cr:glh:72}.

\item
  for the {\em modal} FE in~1D, the linear forms are the integrals of the
  product of the function and Legendre polynomials.
  It can be tensorized for cuboid elements, see~\cite[Sec.~6.3.2
  and~6.4.2]{eg:fe1:21}.

\item
  canonical hybrid FE mix evaluations at nodal values and integrals of the
  function over edges, faces, and volumes (in~3D) of the geometrical element,
  see~\cite[Sec. 7.6]{eg:fe1:21}.

\item
  the~$\RT$, $\BDM$, $\BDFM$\ldots are {\em flux} FE,
  see~\cite[Chap. 14]{eg:fe1:21}.
  They aim at approximating~$\Hdiv$ functions (vector functions in~$(\Ltwo)^d$
  whose divergence is in~$\Ltwo$).
  Some of the linear forms are integrals of the normal of the function across
  the $(d-1)$-faces of the simplex.

\item
  the Nedelec family~$\Ned$ of~FE is meant for the approximation of~$\Hcurl$
  functions (functions in~$(\Ltwo)^d$ whose curl is in~$(\Ltwo)^d$).
  Some of the linear forms are integrals of the tangential of the function
  along the edges of the simplex, see~\cite[Chap. 15]{eg:fe1:21}.
\end{itemize}

\subsubsection{Some nodal finite elements, Lagrange finite elements}
\label{sss:nodal-fe-lag}

For the nodal finite elements, there exist various choices of nodes.
The Lagrange FE are based on Lagrange nodes: these nodes are evenly distributed
in the element, and when~$k\geq1$, the vertices are always some of the nodes,
see examples for the reference simplices in Figure~\ref{f:FE-lag-P1234}.
The precise definition using barycentric coordinates is given in
Definition~\ref{d:lag-nodes-Pkd}.
For cuboids, the Lagrange nodes are merely a tensorization of the segment
nodes.

For simplices, the approximation space is~$\matPkd$, and the Lagrange FE is
denoted~$\FElagP{k}{d}$.
In this document, we clearly separate the notations for the approximation
space and for the finite element triple.
In cuboids, the approximation space is~$\matQkd$, and the Lagrange FE is
denoted~$\FElagQ{k}{d}$.
In cuboids, it is possible to remove some internal nodes and to reduce the
space dimension of the approximation space, which reduces the computational
effort, while preserving the accuracy.
These~FE are called serendipity finite elements, and the approximation
space is denoted~$\matS_k^d(\subsetneq\matQkd)$,
see~\cite[Sec. 6.4.3]{eg:fe1:21}.

The polynomial interpolation problem consists in finding a polynomial~$p$ to
approximate a given function~$f$, with the constraint that $f(\xx_i)=p(\xx_i)$
for a finite family of chosen points $(\xx_i)_{i\in I}$.
It is well known that choosing Lagrange nodes as interpolation points gives
rise to an interpolating polynomial that can have very large oscillations.
These oscillations tend to increase when the polynomial degree~$k$ increases.
This may lead to poor approximation results and to numerical difficulty for
integration.
This is why some other nodes may be favored.
For instance, in 1D, one can take Gauss--Lobatto nodes, that are the two
vertices of the segment and the roots of some Chebyshev polynomials,
see~\cite[Sec.~6.3.5]{eg:fe1:21}.
This improves drastically the high oscillation effects.
The Gauss--Lobatto nodes can be easily tensorized for cuboid elements.

\bigskip

In the present document, we focus on Lagrange simplicial FE,
because~$\FElagP{1}{d}$ and~$\FElagP{2}{d}$ are the most popular FE, because
treating~$\FElagP{k}{d}$ for all~$k$ and~$d$ is of great interest by itself,
and because~ $\matPkd$ is an approximation space that is used for many~FE.
Thus Lagrange simplicial FE is a natural first step in our development.


\section{Our main sources}
\label{s:our-main-sources}

Our main source for the proofs in the present text is~\cite{eg:fe1:21}.
The part on affine geometry is inspired by~\cite{gos:cms4:98}.

Note that, because the support for differential calculus is still limited in
{\coq}, we choose to favor proof paths using algebraic arguments over those
using analysis results.
As a consequence, some of our proofs are distinct from those
of~\cite{eg:fe1:21}.
In particular, we do not rely on differential calculus to prove the lemmas on
geometric mappings that relate various configurations of simplices or faces.
As we assume that the meshes are affine ({\ie} in our case, made up of
simplices that are {\nondegenerate} and have planar hyperfaces), and as we do
not need to integrate to define the Lagrange finite element, the results on
affine maps and affine spaces are sufficient.
Nonetheless, some results of topology (and of analysis) are obviously required
to prove that {\nondegenerate} simplices have {\nonempty} interior.

To obtain the dimension of~$\matPkd$, defined as the linear span of monomials,
we prove the linear independence of these monomials by calculating their
partial derivatives and taking the value at~$\zzero$.
A possible alternative, that is not followed in the {\coq} formalization, is to
use the Euclidean division to prove an isomorphism between~$\matPkd$ and
$\matPkmid\times\matPkdmi$, and conclude on the dimension by induction (see
Remark~\ref{r:isom-dim-Pkd}).

\section{Contents}
\label{s:contents}

The present version of this document covers all material up to the construction
of the simplicial Lagrange finite elements in dimension~$d\geq1$.

After the general definition of the notion of finite element, and some results
about simplicial geometry, it starts with the construction of the simplicial
Lagrange finite elements on a segment including results about~$\matPki$
Lagrange polynomials; this step is mostly for pedagogical purpose.
Then, for any dimension~$d\geq1$, it covers the construction of multi-indices
of given maximum length.
This allows to write the definition of multivariate polynomial spaces as the
linear span of monomials, and some results such as
the linear independence of monomials using partial derivation and providing the
dimension of the considered space of polynomial,
the product of polynomials,
the composition of polynomials and affine mappings, and the
Euclidean division by a monomial.
Then, there are some results about~$\matPid$ Lagrange polynomials, with in
particular their view as barycentric coordinates.
This provides results about affine geometric mappings, such as the
transformation of $l$-faces of dimension~$l\leq d$.
This allows to pass from the reference simplex in dimension~$l$ to a current
element in dimension~$d$.
The case $l=d$ is the most important one and is treated first.
Next, some results about Lagrange nodes and Lagrange linear forms of~$\matPkd$
are given, before the proof of unisolvence of~$\matPkd$.
This allows to conclude with the construction of the simplicial Lagrange finite
elements in dimension~$d\geq1$, and the statement and proof of face
unisolvence.
The formalization in~{\coq} of most of these aspects is presented
in~\cite{mou:phd:24,bol:cfs:24}.

It is planned to add more results in a forthcoming version.

\section{Teaching}
\label{s:teaching}

This document is not primarily meant for teaching usage.
The objective was to be as comprehensive as possible in the proofs.
This led to very detailed demonstrations, and to a compact style of writing
that is not common, and may seem daunting to the uninformed reader.

However, the authors tried to give some insights on the FEM theory and
on the proofs and theorems in the introductory chapters.
They also strove to give some indications in the proofs when they felt it
necessary.
They believe that this document could be useful for interested teachers, and
dedicated students, in complement to the usual manuals.

\section{Disclaimer}
\label{s:disclaimer}

Note that the manuscript itself is not formally proved (and will never be).
Indeed, {\LaTeX} compilers are not formal proof tools!

Moreover, formalization is not just straightforward translation of mathematical
texts and formulas.
Some design choices have to be made and proof paths may differ, mainly to favor
usability of {\coq} theorems and ease formal developments.
Thus, there may exist differences between the mathematical setting presented
here, and the formal setting developed in {\coq}~\cite{mou:phd:24,bol:cfs:24}.

Hence, despite the care taken in its writing, this document might still be
prone to errors or holes in the demonstrations.
There could also exist simpler paths in the proofs.
Please, feel free to inform the authors of any such issue, and to share any
comments or suggestions\ldots

\section{Organization}
\label{s:organization}

Part~\ref{p:overview} of this document is organized as follows.
After the present introductory Chapter~\ref{c:introduction-1}, the notations
are collected in Chapter~\ref{c:notations}.
The chosen proof paths of the main results are then sketched in
Chapter~\ref{c:statements-and-sketches-of-the-proofs}.

Part~\ref{p:detailed-proofs} (Chapters~\ref{c:introduction-2}
to~\ref{c:Pkd-lag-fe}) is the core of this document.
In this part, the definitions are presented, and the lemmas and theorems are
stated with their detailed proofs.
Its organization is briefly described at the end of the second introductory
Chapter~\ref{c:introduction-2}.

Chapter~\ref{c:conclusions-perspectives} concludes and gives some
perspectives.

Finally, an appendix gathers the list of statements in
Chapter~\ref{c:lists-of-statements}, and explicit dependencies (both ways) in
Chapters~\ref{c:the-proof-cites-explicitly}
and~\ref{c:is-explicitly-cited-in-the-proof-of}.
The appendix is not intended for printing!

\chapter{Notations}
\label{c:notations}

In this chapter (as in most of the document), we use the following
conventions:
\begin{CompactList}
\item $d$ and~$l$ denote a dimension, a (usually nonzero) natural number;

\item
  $k$ denotes the order of a finite element, a natural number, {\eg} the
  maximum degree of polynomials in an approximation space;

\item
  the hat diacritical mark ``~$\hat{\ }~$'' denotes a reference quantity:
  a reference set, {\eg} the reference simplex~$\hKd$, or a reference element
  in a reference set, {\eg} a point~$\hxx$ in~$\hKd$, see below;

\item
  {\bf boldface} denotes finite families, typically of size~$d$, and
  more generally vectors in a {\vectorspace}, or points in an affine
  subspace:
  \begin{CompactList}
  \item
    {\bf boldface} small Greek letters~$\aalpha$ and~$\bbeta$ denote
    multi-indices in~$\matNd$;

  \item
    {\bf boldface} small letters, such as~$\xx$ and~$\hxx$, denote points or
    vectors in~$\matRd$;

  \item
     the components of a finite family~$\xx$ are usually denoted~$x_i$, and by
     default the indices of components start from 1;
  \end{CompactList}

\item
  the tilde (resp. check) diacritical mark ``~$\tilde{\ }~$'' (resp.
  ``~$\check{\ }~$'') denotes a family of size~$d-1$ that corresponds to the
  $d-1$ first (resp. last) components of the regular family of size~$d$, or a
  function of such family.
  For instance, for $\aalpha\in\matNd$, the multi-indices~$\taalpha$
  (resp. $\caalpha$) satisfy $\aalpha=(\alpha_1,\caalpha)=(\taalpha,\alpha_d)$,
  see below.
\end{CompactList}

\medskip\noindent
The following notations and conventions are used throughout this document.
\begin{itemize}[itemsep=0pt,topsep=0pt]
\item Logic:
  \begin{CompactList}
  \item
    Using a compound (tuple of elements of~$X$, or subset of~$X$) in an
    expression at a location where only a single element makes sense, is a
    shorthand for the same expression expanded for all elements of the
    compound;
    for instance, ``$\forall x,\xp\in X$'' means
    ``$\forall x\in X,\forall\xp\in X$'',
    ``$\forall(x_i)_{i\in I}\in X$'' means ``$\forall i\in I$, $x_i\in X$'',
    and ``$x,\xp\calR\xpp$'' means ``$x\calR\xpp$ and $\xp\calR\xpp$'';

  \item ``iff'' is a shorthand for ``if and only if''.
  \end{CompactList}

\item Naive set theory:
  \begin{CompactList}
  \item
    $\card(E)$ denotes the cardinal of some set~$E$, especially when it is
    finite, see Lemma~\ref{l:inj-implies-unisolvence};

  \item
    the cartesian products~$E^{d+1}$, $E^d\times E$ and $E\times E^d$ are
    assimilated,
    see Lemma~\ref{l:Pkd-nondecr-d};

  \item
    the set of functions from~$E$ to~$F$ is either denoted~$\FEF$, or through
    the type annotation ``$\ArEF$''.
    Both compact expressions ``let~$f\in\FEF$'' and ``let~$f:\ArEF$'' mean
    ``let~$f$ be a function from~$E$ to~$F$'';

  \item
    $\Rg{f}$ denotes the range of function~$f:E\to F$, {\ie} the image $f(E)$
    of its domain,
    see Lemma~\ref{l:surj-aff-map-is-full-linear-rg}.
  \end{CompactList}

\item Numbers:
  \begin{CompactList}
  \item
    \emph{positive}/\emph{negative}, and \emph{increasing}/\emph{decreasing}
    are meant in their \emph{strict} sense, {\ie} without the possibility of
    equality.
    Otherwise, we use \emph{nonnegative}/\emph{nonpositive}, and
    \emph{nondecreasing}/\emph{nonincreasing};

  \item
    $[n..p]$ denotes the integer interval $[n,p]\cap\matN$ of values from
    integer~$n$ up to integer~$p$ (both included);

  \item
    $\binom{n}{p}$ denotes the binomial coefficient, {\ie} the number of
    subsets of~$p$ elements of a set of~$n$ elements,
    see Definition~\ref{d:binom-coef};

  \item
    $\delta$ denotes the Kronecker delta function, {\ie} $\kron{i}{i}\eqdef1$
    for all $i\in\matN$, and $\kron{i}{j}\eqdef0$ when $i\neq j$,
    see Definition~\ref{d:canon-fam};

  \item
    $\thetinjd{i}:\Arrow{[0..d]}{[0..d+1]}$ is the ``jump'' enumeration
    function that skips~$i$, see Lemma~\ref{l:jump-enum};

  \item
    $\zzero$ denotes the family of zeros $(0,\ldots,0)$, and
    $\oone$ denotes the family of ones $(1,\ldots,1)$,
    see Definition~\ref{d:canon-fam};

  \item
    $\ee_i$ denotes the $i$-th canonical ``basis'' family with only a~1 in
    $i$-th position,
    see Definition~\ref{d:canon-fam};

  \item
    $\len{\aalpha}$ denotes the length of a multi-index, {\ie} the sum of its
    components,
    see Definition~\ref{d:len-multi-ind};

  \item
    $\ffact{\aalpha}$ denotes the factorial of a multi-index, {\ie} the product
    of the factorials of its components,
    see Definition~\ref{d:fact-multi-ind};

  \item
    $\kkron{\aalpha}{\bbeta}$ denotes the Kronecker delta of two multi-indices,
    {\ie} the product of the Kronecker deltas of their components,
    see Definition~\ref{d:kron-multi-ind};

  \item
    $\calAkd$ denotes the set of multi-indices in dimension~$d$ of sum at
    most~$k$,
    see Definitions~\ref{d:multi-ind-Ak1} (for $d\eqdef1$)
    and~\ref{d:multi-ind-Akd-Ckd};

  \item
    $\calCkd$ denotes the set of multi-indices in dimension~$d$ of sum equal
    to~$k$,
    see Definition~\ref{d:multi-ind-Akd-Ckd};

  \item
    $\calAkdi$ denotes the subset of~$\calAkd$ of multi-indices with zero
    $i$-th component,
    see Definition~\ref{d:multi-ind-Akd-Ckd};

  \item
    $\taalpha$ denotes the first $d-1$ components of a multi-index~$\aalpha$
    in~$\matNd$,
    see Remark~\ref{r:tilde-and-check-notations};

  \item
    $\caalpha$ denotes the last $d-1$ components of a multi-index~$\aalpha$
    in~$\matNd$,
    see Remark~\ref{r:tilde-and-check-notations};

  \item
    $\calSckdi$ (resp.~$\calStkdi$) denotes the $i$-th ``vertical''
    (resp. ``horizontal'') slice of~$\calCkd$,
    see Definition~\ref{d:slices-Sckdi-and-Stkdi};

  \item
    $\fkdo$ (resp. $\tfkdo$) denotes the function $\Arrow{\calAkdmi}{\calCkd}$
    on multi-indices that puts $k$ minus the length at the beginning (resp. at
    the end),
    see Lemma~\ref{l:card-Ckd-Akdm1};

  \item
    $\fkdi:\Arrow{\calAkdmi}{\calAkdi}$ denotes the function on multi-indices
    that inserts~0 in $i$-th position,
    see Lemma~\ref{l:card-Akdi-Akdm1}.
  \end{CompactList}

\item General topology:
  \begin{CompactList}
  \item
    $\Interior{K}$ denotes the interior of some subset~$K$ of some
    topological space,
    see Definition~\ref{d:fe-triple}.
  \end{CompactList}

\item Calculus:
  \begin{CompactList}
  \item
    $\pder^\bbeta$ denotes the partial derivative of order~$\bbeta$, a
    multi-index,
    see Lemma~\ref{l:pder-monom};

    \item
    $\Diff f$ denotes the differential of the function~$f$ of appropriate type,
    see Lemma~\ref{l:diff-lag-pol-ref}.
  \end{CompactList}

\item Linear and affine algebra:
  \begin{CompactList}
  \item
    $\Span{u_1,\ldots,u_n}$ denotes the linear span of the family of
    vectors $(u_i)_{i\in[1..n]}$, with the convention that
    $\Span{\emptyset}\eqdef\{\zzero\}$,
    see Definitions~\ref{LM-d:linear-span} and~\ref{d:pol-space-Pk1}, and
    Lemma~\ref{l:baryc-closure-is-aff-sub-sp};

  \item
    $\Ker{f}$ denotes the kernel of morphism~$f$ (usually a linear map from a
    {\vectorspace} to another),
    see Definition~\ref{LM-d:kernel}, and Lemma~\ref{l:im-ker-incl-ker};

  \item
    $\dim(E)$ denotes the finite dimension of a {\vectorspace}~$E$, {\ie} the
    common cardinality of its bases,
    see Lemma~\ref{l:inj-or-surj-and-dim-implies-bij};

  \item
    $\calM_{n,p}(\matR)$, or simply $\calM_{n,p}$, denotes the space of
    matrices with $n$~lines and $m$~columns,
    see Lemma~\ref{l:equiv-def-aff-map-finite-dim};

  \item
    for a matrix $A\in\calM_{n,p}(\matR)$, where $n,p>0$, the line
    $i\in[1..n]$ of $A$ is denoted by $\underline{A}_i$, and its column
    $j\in[1..p]$ is denoted by $A_j$,
    see Lemma~\ref{l:diff-lag-pol-P1d};

  \item
    $\LEF$ denotes the {\vectorspace} of linear maps from {\vectorspace}~$E$ to
    {\vectorspace}~$F$,
    see Definition~\ref{LM-d:set-of-linear-maps}, and
    Lemma~\ref{l:inj-or-surj-and-dim-implies-bij};

  \item
    $\calEp\eqdef\xxo+\Ep$ denotes the affine subspace of direction~$\Ep$, a
    vector subspace of~$E$, and origin~$\xxo\in E$,
    see Definition~\ref{d:aff-sub-sp};

  \item
    $\AffEF$ denotes the {\vectorspace} of affine maps from {\vectorspace}~$E$
    to {\vectorspace}~$F$,
    see Lemma~\ref{l:space-aff-maps}.
  \end{CompactList}

\item Geometry:
  \begin{CompactList}
  \item
    $\famvert{\vv}{n,d}\eqdef(\vv_0,\dots,\vv_n)$ denotes a family of $n+1$
    points in~$\matR^d$, usually indexed by~$[0..n]$,
    $\famvert{\vv}{d}$ is a shorthand for $\famvert{\vv}{d,d}$,
    see Definition~\ref{d:canon-fam};
    it is the notation used to represent the vertices of the geometry of finite
    elements,
    see Remark~\ref{r:fam-nodes-fam-verts};

  \item
    $\famvert{\vv_\indl}{l,d}$ is the sub-family
    $(\vv_{\indl(i)})_{i\in[0..l]}$ where $\indl:[0..l]\to[0..n]$,
    see Lemma~\ref{l:aff-indep-closed-by-sub-family};

  \item
    $\famvertd{\hvv}\eqdef(\zzero,\ee_1,\ldots,\ee_d)$ denotes the family of
    $d+1$ reference points in~$\matRd$,
    see Definition~\ref{d:fam-ref-aff-pts};

  \item
    $\bisobarycv$ denotes the isobarycenter of the family of
    points~$\famvertd{\vv}$;
    see Definition~\ref{d:isobaryc};

  \item
    $\hbisobarycd$ denotes the isobarycenter of the family of reference
    points~$\famvertd{\hvv}$;
    see Lemma~\ref{l:ref-isobaryc};

  \item
    $\Kvd$ denotes the simplex of vertices~$\famvertd{\vv}$ in~$\matRd$,
    see Definition~\ref{d:simplex};

  \item
    $\hKd$ denotes the reference simplex in~$\matRd$, its vertices are the
    reference points~$\famvertd{\hvv}$,
    see Definition~\ref{d:ref-simplex};

  \item
    $\famnodeki{a}$ denotes~$k+1$ nodes in~$\matR$,
    see Definitions~\ref{d:fam-aff-pts} and~\ref{d:lag-pol-Pk1};

  \item
    $\famnodeki{\ha}$ denotes the reference Lagrange nodes of~$\matPki$ that
    are equally distributed over the reference simplex $[0,1]$ in~$\matR$,
    see Definition~\ref{d:lag-nodes-Pk1-ref};

  \item
    $\famnode{\aa}{\calB_d}$ denotes a family of~$\card(\calB_d)$ points
    in~$\matR^d$,
    see Definition~\ref{d:fam-aff-pts},
    it is the notation used to represent the nodes of the geometry of nodal
    finite elements,
    see Remark~\ref{r:fam-nodes-fam-verts};

  \item
    $\famnodekd{\aa}$ denotes the Lagrange nodes of~$\matPkd$ that are equally
    distributed over the simplex~$\Kvd$ in~$\matRd$,
    see Definition~\ref{d:lag-nodes-Pkd};

  \item
    $\famnodekd{\haa}$ denotes the reference Lagrange nodes of~$\matPkd$ that
    are equally distributed over the reference simplex~$\hKd$ in~$\matRd$,
    see Lemma~\ref{l:lag-nodes-Pkd-ref};

  \item
    $\famvertd{\uvv}$ denotes the sub-vertices of the Lagrange nodes
    of~$\matPkd$ with respect to~$\vv_0$,
    see Definition~\ref{d:sub-vert-lag-nodes-Pkd};

  \item
    $\famnodekmid{\uaa}$ denotes the Lagrange sub-nodes of~$\matPkmid$
    associated with the sub-vertices~$\famvertd{\uvv}$ with respect to~$\vv_0$,
    see Lemma~\ref{l:Pkm1d-sub-nodes-sub-vert-are-some-nodes-Pkd};

  \item
    $\phigeoiv$ denotes the geometric mapping associated with~$\famverti{v}$,
    see Definition~\ref{d:geo-mapping-1d};

  \item
    $\phigeodv$ denotes the geometric mapping associated with~$\famvertd{\vv}$,
    see Definition~\ref{d:geo-mapping};

  \item
    $\Ageodv$ denotes the square matrix
    $\begin{pmatrix}\vv_1-\vv_0&\vv_2-\vv_0&\dots&\vv_d-\vv_0\end{pmatrix}$,
    see Lemma~\ref{l:prop-geo-mapping};

  \item
    $\barcvd_i(\xx)$ denotes the $i$-th barycentric coordinate of~$\xx$ with
    respect to~$\famvert{\vv}{d}$,
    see Lem\-ma~\ref{l:baryc-coor};

  \item
    $\calHvd_i$ denotes the $i$-th face hyperplane with respect to
    $\Kvd$, opposite the vertex~$\vv_i$, see Definition~\ref{d:face-hyperpl};

  \item
    $\hcalH^d_i\eqdef\calH^{\famvertd{\hvv}}_i$ denotes the $i$-th reference
    face hyperplane with respect to $\hKd$, opposite the reference
    vertex~$\hvv_i$,
    see Definition~\ref{d:face-hyperpl};

  \item
    $\Hvd_i$ denotes the $i$-th hyperface of simplex $\Kvd$, opposite the
    vertex~$\vv_i$,
    see Definition~\ref{d:hyperface};

  \item
    $\calFvdindl$ denotes the $l$-face affine space having the
    vertices~$\famvert{\vv_\indl}{l,d}$,
    see Definition~\ref{d:l-face-aff-space};

  \item
    $\Fvdindl$ denotes the $l$-face having the
    vertices~$\famvert{\vv_\indl}{l,d}$,
    see Definition~\ref{d:l-face};

  \item
    $\phindlv$ is the geometric $l$-face mapping associated with the
    $l$-face~$\Fvdindl$,
    see Definition~\ref{d:geo-l-face-mapping}.
  \end{CompactList}

\item Polynomials:
  \begin{CompactList}
  \item
    $X_k\eqdef(x\in\matR\mapsto x^k)$ denotes the monomial of a single variable
    of degree~$k$ in~$\matR$ (it is simply denoted~1 when $k\eqdef0$),
    see Definition~\ref{d:pol-space-Pk1};

  \item
    $\XX^\aalpha\eqdef(\xx\in\matRd\mapsto\prod_{i=1}^dx_i^{\alpha_i})$ denotes
    the monomial of~$d$ variables of degree~$\sum_{i=1}^d\alpha_i$,
    1~is a shortcut for~$\XX^\zzero$ (the constant function of value~1),
    and~$X_i$ is a shortcut for~$\XX^{\ee_i}$,
    see Definition~\ref{d:monom-kd};

  \item
    $\calP(\matRd)$ denotes the algebra of polynomials of~$d$ variables,
    see Lemma~\ref{l:deg-Pi-leq-k};

  \item
    $\deg(p)$ denotes the degree of polynomial~$p$, {\ie} the highest degree of
    its monomials with {\nonzero} coefficients, where the degree of a monomial
    is the sum of the degrees for each variable,
    see Definition~\ref{d:deg-pol};

  \item
    $\matPkd$ denotes the {\vectorspace} of polynomials of~$d$ variables and of
    degree at most~$k$,
    see Definition~\ref{d:pol-space-Pkd};

  \item
    $\xi^d$ denotes the isomorphism between~$\matPod$ and~$\matPodmi$ that
    keeps the value of (constant) polynomials,
    see Lemma~\ref{l:isom-P0d-P0dm1};

  \item
    $\zkd$ denotes the isomorphism between~$\matPkd$
    and~$\matPkdmi\times\matPkmid$ that realizes the Euclidean division
    by~$X_d$,
    see Lemma~\ref{l:isom-Pkd-Pkdm1xPkm1d};

  \item
    $\calL^{\famnodeki{a}}_i$ for $i\in[0..k]$ denotes the family of Lagrange
    polynomials in~$\matPki$ associated with nodes $\famnodeki{a}\in\matR$,
    see Definition~\ref{d:lag-pol-Pk1};

  \item
    $\hcalL^{k,1}_i\eqdef\calL^{\famnodeki{\ha}}_i$ for $i\in[0..k]$ denotes
    the family of reference Lagrange polynomials in~$\matPki$ associated with
    the reference Lagrange nodes $\famnodeki{\ha}\in\matR$,
    see Lemma~\ref{l:lag-basis-Pk1-ref};

  \item
    $\hcalL^{1,d}_i$ for $i\in[0..d]$ denotes the family of reference
    Lagrange polynomials of~$\matPid$,
    see Definition~\ref{d:lag-pol-P1d-ref};

  \item
    $\calL^{\famvertd{\vv}}_i\eqdef\hcalL^{1,d}_i\circ\invphigeodv$ for
    $i\in[0..d]$ denotes the family of Lagrange polynomials of~$\matPid$
    associated with~$\famvertd{\vv}$,
    see Lemma~\ref{l:lag-pol-P1d}.
  \end{CompactList}

\item Finite element:
  \begin{CompactList}
  \item
    $(K,P,\Sigma)$ denotes a finite element where~$K$ represents the geometry,
    $P$ the approximation space, and~$\Sigma$ the degrees of freedom,
    see Definition~\ref{d:fe-triple};

  \item
    $\phi_\Sigma$ denotes the collection of degrees of freedom applications,
    see Definition~\ref{d:fe-triple};

  \item
    $\Sigma^{\famnodekd{\aa}}=(\sigma_\aalpha)_{\aalpha\in\calAkd}$ denote the
    Lagrange linear forms associated with the Lagrange nodes of~$\matPkd$,
    see Definition~\ref{d:lag-lin-forms-Pkd};

  \item
    $\hSigmakd=(\hsigma_\aalpha)_{\aalpha\in\calAkd}$ denote the reference
    Lagrange linear forms associated with the reference Lagrange nodes
    of~$\matPkd$,
    see Definition~\ref{d:lag-lin-forms-Pkd-ref};

  \item
    $\FElagP{k}{d}\eqdef\left(\Kvd,\matPkd,\Sigma^{\famnodekd{\aa}}\right)$
    denotes the Lagrange finite element of degree~$k$ associated with
    vertices~$\famvertd{\vv}$,
    see Theorems~\ref{t:Pkd-lag-fe}, and~\ref{t:Pk1-lag-fe} for $d\eqdef1$;

  \item
    $\FElagPref{k}{d}\eqdef\left(\hKd,\matPkd,\hSigmakd\right)$ denotes the
    reference Lagrange finite element of degree~$k$ in dimension~$d$,
    see Theorems~\ref{t:Pkd-lag-fe-ref}, and~\ref{t:Pk1-lag-fe-ref} for
    $d\eqdef1$.
  \end{CompactList}
\end{itemize}

\bigskip

Note that the vector space $\matPkd$ of polynomials of total degree at
most~$k$ does not have the same notation as the Lagrange finite
element~$\FElagP{k}{d}$.
We have tried to visually separate the two concepts, although specialists
often use the same notation.

\chapter{Statements and sketches of proofs}
\label{c:statements-and-sketches-of-the-proofs}

This chapter gathers the sketches of the proofs of the main results that are
detailed in Part~\ref{p:detailed-proofs}.
Namely: the {\UPkdt}, the {\EdPkdl},
and a more technical discussion about ordering the multi-indices.

\section{Sketch of the proof {\UPkd}}
\label{s:sketch-of-the-proof-of-unisolvence-Pkd}

\begin{upkdthm}
  \mbox{}\\
  Let~$d\geq1$.
  Let~$k\geq1$.
  Let~$\famvertd{\vv}$ be $d+1$ affinely independent points in~$\matRd$.\\
  Then, $\left(\Kvd,\matPkd,\Sigma^{\famnodekd{\aa}}\right)$ satisfies the
  unisolvence property.
\end{upkdthm}

See Theorem~\ref{t:unisolvence-Pkd}, and the proof of
Lemma~\ref{l:lag-lin-forms-Pkd-inj}.
The definition of unisolvence is explained in Section~\ref{ss:fe-triple}.

The proof of the {\UPkdt} uses a double induction scheme, and the Euclidean
division of polynomials (see
Section~\ref{s:sketch-of-the-proof-of-Euclid-div-Pkd}).
The proof of the {\UPkdt} goes as follows:
\begin{itemize}
\item
  prove that~$\phi_{\Sigma^{\famnodekd{\aa}}}$ is injective by a double
  induction on~$d,k\geq1$ (see Lemma~\ref{l:lag-lin-forms-Pkd-inj}).\\
  This amounts to prove that if~$\famvertd{\vv}$ are $d+1$ affinely independent
  points in~$\matRd$, and if the polynomial~$p\in\matPkd$ is zero on all
  Lagrange nodes~$\famnodekd{\aa}$ ({\ie} for all $\aalpha\in\calAkd$,
  $p(\aa_\aalpha)=0$), then~$p$ is zero.
  We recall that the Lagrange nodes~$\famnodekd{\aa}$ are evenly distributed
  over the simplex whose vertices are $\famvertd{\vv}$.
  \begin{itemize}
  \item
    {\bf\boldmath{for $d=1$,}} the components of any polynomial in~$\matPki$ on
    the Lagrange basis are the values at the Lagrange nodes (see
    Lemma~\ref{l:decomp-Pk1-pol-in-lag-basis}), thus the cancellation of the
    latter implies the cancellation of the polynomial itself (see
    Lemma~\ref{l:lag-lin-forms-Pk1-inj});
  \item
    {\bf\boldmath{for $k=1$,}} the components of any polynomial in~$\matPid$ on
    the Lagrange basis are the values at the vertices (which are the Lagrange
    nodes in this case, see Lemmas~\ref{l:lag-nodes-Pid-are-vert}
    and~\ref{l:decomp-P1d-pol-in-lag-basis}), thus the cancellation of the
    latter implies the cancellation of the polynomial itself (see
    Lemma~\ref{l:lag-lin-forms-P1d-inj});
  \item
    {\bf\boldmath{for $d,k\geq2$,}} assume that the injectivity result holds
    for $(d-1,k)$ and $(d,k-1)$,\\
    and let~$p\in\matPkd$ vanishing on~$\famnodekd{\aa}$, the Lagrange nodes
    of~$\matPkd$ defined for the vertices~$\famvertd{\vv}$, then:
    \begin{itemize}
    \item
      {\bf step 1:} factorization using the injectivity result for $(d-1,k)$.
      This is done by using the fact that~$p$ is zero on the nodes of a
      hyperface of the current $d$-simplex, and passing this information on the
      reference $(d-1)$-simplex via an affine bijective mapping
      between~$\matRdmi$ and this hyperface:
      \begin{itemize}
      \item
        show that~$p_0\eqdef p\circ\phindv{\thetinjdmi{0}}$ vanishes
        on~$\famnodekdmi{\haa}$, the reference Lagrange nodes of~$\matPkdmi$,
        where~$\phindv{\thetinjdmi{0}}$ is the bijective geometric hyperface
        mapping (see Lem\-ma~\ref{l:geo-hyperface-mapping}) from~$\matRdmi$
        onto~$\calHvd_0$, the face hyperplane opposite the vertex~$\vv_0$;
      \item
        then, apply the injectivity result for~$(d-1,k)$ to obtain the
        cancellation of~$p_0$ on~$\matRdmi$, and thus the cancellation
        of~$p$ on the whole hyperplane~$\calHvd_0$ (and not only the nodes on
        this hyperplane);
      \item
        finally, use the Euclidean division on~$\matPkd$ to show the existence
        of~$q\in\matPkmid$ such that~$p=\barcvd_0\times q$, where~$\barcvd_0$
        is the barycentric coordinate associated with the vertex~$\vv_0$, (the
        hyperplane~$\calHvd_0$ is the place where~$\barcvd_0$ vanishes, see
        Lemmas~\ref{l:equiv-def-face-hyperpl}
        and~\ref{l:factor-zero-pol-hyperpl-Pkd});
      \end{itemize}
    \item
      {\bf step 2:} cancellation using the injectivity result for $(d,k-1)$.
      The principle here is to observe that~$q\in\matPkmid$ is zero on all the
      Lagrange nodes except the ones on~$\calHvd_0$, and that these cancelling
      nodes are precisely the Lagrange nodes of~$\matPkmid$ that are defined
      from sub-vertices (see Figure~\ref{f:sub-vert-sub-nodes_d3-k3}):
      \begin{itemize}
      \item
        show that~$q$ vanishes on the first Lagrange nodes of~$\matPkd$,
        indexed by~$\calAkmid$;
      \item
        then, show that those nodes are equal to~$\famnodekmid{\uaa}$, the
        Lagrange nodes of~$\matPkmid$ associated with the sub-vertices with
        respect to~$\vv_0$ (see
        Lemma~\ref{l:Pkm1d-sub-nodes-sub-vert-are-some-nodes-Pkd});
      \item
        finally, apply the injectivity result for~$(d,k-1)$ to obtain
        cancellation of~$q$, then of~$p$, and thus the injectivity result
        for~$(d,k)$;
      \end{itemize}
    \end{itemize}
  \end{itemize}
\item
  use equivalence between injectivity and bijectivity of linear maps when
  the dimensions of input and output spaces coincide.
\end{itemize}

The {\UPkdt} is used in the present document to build the~$\FElagP{d}{k}$
Lagrange finite element of degree~$k$ associated with simplicial
vertices~$\famvertd{\vv}$ (see Theorem~\ref{t:Pkd-lag-fe}).

\bigskip

Note that in this proof, one needs in step~2 to take sub-nodes in order to use
the induction hypothesis for $(d,k-1)$.
This requires to pass from the initial simplex to a different simplex.
This new simplex is the first one, minus a slice that is comprised between the
hyperplane~$\calHvd_0$ opposite~$\vv_0$, and the hyperplane parallel
to~$\calHvd_0$ that passes through the nodes closest to the hyperplane, see
Figure~\ref{f:sub-vert-sub-nodes_d3-k3}.
From this remark, note that if the initial simplex is the reference one, the
new simplex cannot be the reference simplex.
Thus, this proof requires to treat general {\nondegenerate} simplices and
not only the {\em reference} simplex.

This is why the unisolvence is proved directly in all generality for any
{\nondegenerate} simplices.
The unisolvence for the reference simplex is then simply a particular case.

\clearpage
\section{Sketch of the proof {\EdPkd}}
\label{s:sketch-of-the-proof-of-Euclid-div-Pkd}

\begin{edpkdlem}
  \mbox{}\hfill
  Let~$d\geq2$.
  Let~$k\geq1$.
  Let~$p\in\matPkd$.\\
  Then, there exist unique $\tp_0\in\matPkdmi$ and $p_1\in\matPkmid$, such that
  $p=\tp_0+X_d\,p_1$, which also writes
  for all $(x_1,\ldots,x_d)\in\matRd$,
  $p(x_1,\ldots,x_d)=\tp_0(x_1,\ldots,x_{d-1})+x_d\,p_1(x_1,\ldots,x_d)$.
\end{edpkdlem}

See Lemma~\ref{l:decomp-Pkd}.
The proof of the {\EdPkdl} uses the~``$\tilde{\ }$'' notation for functions for
which the last variable is dropped (see~$p$ and ~$\tp_0$ above, and
Remark~\ref{r:tilde-and-check-notations}).
The proof of the {\EdPkdl} goes as follows:
\begin{itemize}
\item
  {\bf existence:} by induction on~$k\geq1$,
  \begin{itemize}
  \item
    {\bf\boldmath{for $k=1$,}} the result is straightforward as
    $\matPid\eqdef\Span{1,X_1,X_2,\ldots,X_d}$ (the constant on~$X_d$
    is~$p_1$);
  \item
    {\bf\boldmath{for $k\geq1$,}} assume that the result holds for~$k$,\\
    and let~$p=\sum_{\aalpha\in\calAkpid}a_\aalpha\XX^\aalpha\in\matPkpid$,
    then:
    \begin{itemize}
    \item
      from the partition of the multi-index
      sets~$\calAkpid=\calAkd\uplus\calCkpid$ (see
      Definition~\ref{d:multi-ind-Akd-Ckd} and Lemma~\ref{l:Ckd-layers-Akd}),
      we have $p=q+r$ with
      $q=\sum_{\aalpha\in\calAkd}a_\aalpha\XX^\aalpha\in\matPkd$ and\\
      $r=\sum_{\aalpha\in\calCkpid}a_\aalpha\XX^\aalpha\in\matPkpid$;
    \item
      then, apply the result for~$k$ to~$q$, and obtain existence
      of~$\tq_0\in\matPkdmi\subset\matPkpidmi$
      and~$q_1\in\matPkmid\subset\matPkd$ such that $q=\tq_0+X_d\,q_1$;
    \item
      for all $i\in[0..k+1]$, for all $\taalpha_i\in\calCdmi{k+1-i}$,
      let~$b_{\taalpha_i}\eqdef a_{(\taalpha_i,i)}$, let
      \[
        \tr_0 \eqdef \sum_{\taalpha_0 \in \calCkpidmi}
        b_{\taalpha_0} \tXX^{\taalpha_0}
        \AND
        r_1 \eqdef \sum_{i = 1}^{k + 1}
        \sum_{\taalpha_i \in \calCdmi{k + 1 - i}}
        b_{\taalpha_i} \tXX^{\taalpha_i} X_d^{i - 1},
    \]
    then show that~$\tr_0\in\matPkpidmi$, $r_1\in\matPkd$ and $r=\tr_0+r_1$;
    \item
      finally show that $\tp_0\eqdef\tq_0+\tr_0\in\matPkpidmi$,
      $p_1\eqdef q_1+r_1\in\matPkd$ and $p=\tp_0+X_dp_1$;
    \end{itemize}
  \end{itemize}
\item
  {\bf uniqueness:} assume that $\tq_0+X_d\,q_1=\tr_0+X_d\,r_1$ with
  $\tq_0,\tr_0\in\matPkdmi$ and $q_1,r_1\in\matPkmid$,
  \begin{itemize}
  \item
    applying the equality to $\xx=(\txx,0)\in\matRd$ provides $\tq_0=\tr_0$;
  \item
    let $\sum_{\aalpha\in\calAkmid}a_\aalpha\XX^\aalpha$ be the decomposition
    on the monomial basis of $q_1-r_1\in\matPkmid$, then the linear
    independence of the monomial family in~$\matPkd$ and the cancellation\\
    $X_d(q_1-r_1)=0$ provide $(a_\aalpha)_{\aalpha\in\calAkmid}=\zzero$, thus
    $q_1=r_1$.
  \end{itemize}
\end{itemize}

The {\EdPkdl} is used in the present document to build an isomorphism
between~$\matPkd$ and~$\matPkdmi\times\matPkmid$ in
Lemma~\ref{l:isom-Pkd-Pkdm1xPkm1d}, to express a multivariate polynomial as a
polynomial of~$x_d$ in Lemma~\ref{l:Pkd-as-pol-xd}, to establish the degree of
the product of two polynomials in Lemma~\ref{l:prod-2-polynom-alt-proof}, and
the factorization of a polynomial vanishing in the last reference face
hyperplane~$\hcalH^d_d$ in Lemma~\ref{l:factor-zero-pol-last-ref-hyperpl} (used
for Lemma~\ref{l:factor-zero-pol-hyperpl-Pkd}, see
Section~\ref{s:sketch-of-the-proof-of-unisolvence-Pkd}).

Note that, instead of building $\matPkd$ as the linear span of monomials as
done here (see Definition~\ref{d:pol-space-Pkd}), it is possible to build it by
incrementing the dimension of the polynomial by setting the components of 1D
polynomial as polynomials of other variables.
The Lemma~\ref{l:Pkd-as-pol-xd} somewhat establishes the link between these two
views.

\clearpage
\section{Multi-index ordering}
\label{s:multi-ind-numbering}

The second part of the present document does not focus on the ordering of
multi-indices, but it may be interesting to say a few words on this topic.

\subsection{Multi-indices and the {\fem}}
\label{ss:multi-ind-fem}

For Lagrange finite elements in $d$-simplices, we deal with multi-indices
whose sum is at most~$k$.
\begin{mindakddef}
  \mbox{}\hfill
  Let~$d\geq1$.
  Let~$k\in\matN$.
  The {\em set of multi-indices of length at most~$k$ (resp. of length~$k$)}
  is denoted~$\calAkd$ (resp.~$\calCkd$), and is defined by
  \[
    \calAkd \eqdef \{ \aalpha \in \matNd \st \len{\aalpha} \leq k \}
    \quad (\mbox{resp. }
    \calCkd \eqdef \{ \aalpha \in \matNd \st \len{\aalpha} = k \}),
  \]
  where $\forall\aalpha\in\matNd$, $\len{\aalpha}\eqdef\sum_{i=1}^d\alpha_i$.

  Let~$i\in[1..d]$.
  $\calAkdi\eqdef\{\aalpha\in\calAkd\st\alpha_i=0\}$ is the
  {\em subset of~$\calAkd$ of multi-indices with zero $i$-th component}.
\end{mindakddef}
(See also Definition~\ref{d:multi-ind-Akd-Ckd}.)
These multi-indices are used for two purposes in this document.
First, they provide the multi-exponent of the multivariate monomial
$\XX^\aalpha\eqdef\prod_{i=1}^dX_i^{\alpha_i}$ (whose total degree
is~$\len{\aalpha}$, see Definition~\ref{d:monom-kd}).
Thus, a polynomial of total degree at most~$k$ is the sum of such monomials:
for $(a_\aalpha)_{\aalpha\in\calAkd}\in\matR$,
$p\eqdef\sum_{\aalpha\in\calAkd}a_\aalpha\XX^\aalpha$ (see
Definition~\ref{d:pol-space-Pkd} and the following statements).

Second, they index the Lagrange nodes in a (nondegenerate) $d$-simplex.
Let~$\Kvd$ be a simplex defined by its $d+1$ (affinely independent)
vertices  in~$\matRd$, denoted by~$\famvertd{\vv}\eqdef(\vv_0,\dots,\vv_d)$.
Then, the Lagrange nodes~$\famnodekd{\aa}$ are defined by
\begin{lagpkdnodesdef}
  \mbox{}\hfill
  Let~$d\geq1$.
  Let~$k\in\matN$.
  Let~$\famvertd{\vv}$ be $d+1$ points in~$\matRd$.\\
  The {\em Lagrange nodes of~$\matPkd$} are denoted
  $\famnodekd{\aa}=(\aa_\aalpha)_{\aalpha\in\calAkd}$, and are defined by
  \begin{align*}
    (k = 0)&&
    \aa_\zzero &\eqdef
      \bisobarycv \left( = \frac{1}{d + 1} \sum_{i = 0}^d \vv_i \right),&&\\
    (k \geq 1)&&
    \forall \aalpha \in \calAkd,\quad
    \aa_\aalpha &\eqdef
      \vv_0 + \sum_{i = 1}^d \frac{\alpha_i}{k} (\vv_i -  \vv_0).&&
  \end{align*}
\end{lagpkdnodesdef}
(See also Definition~\ref{d:lag-nodes-Pkd}.)
The case of constant polynomials ($k=0$) is special, as the node is
set at the isobarycenter of $\famvertd{\vv}$.
We focus here on the general case when $k\geq1$.
In this case, the Lagrange nodes can be written equivalently
\[
  \aa_\aalpha = \frac{k - \len{\aalpha}}{k} \vv_0
  + \sum_{i = 1}^d \frac{\alpha_i}{k}\ \vv_i,
  \quad \mbox{with }
  \frac{k - \len{\aalpha}}{k} + \sum_{i = 1}^d \frac{\alpha_i}{k} = 1,
\]
which exhibits the barycentric coordinates
$(\frac{k-\len{\aalpha}}{k},\frac{\alpha_1}{k},\dots,\frac{\alpha_d}{k})$
of~$\aa_\aalpha$ with respect to~$\famvertd{\vv}$ (see
Section~\ref{s:baryc-coord}).
For instance, in the reference simplex~$\hK^d$, whose reference vertices are
$\hvv_0\eqdef\zzero$ and for $i\in[1..d]$, $\hvv_i\eqdef\ee_i$, this expression
shows that each reference node~$\haa_\aalpha$ has coordinates
$(\haa_\aalpha)_i=\frac{\halpha_i}{k}$.
It is thus natural to represent a Lagrange node in a simplex by its multi-index
in~$\calAkd$, and to link it to the corresponding monomial, see
Figure~\ref{f:lag-k3-d2-d3}.
These correspondences are be used throughout the document.

\begin{figure}[htb]
  \centering
  \resizebox{0.75\linewidth}{!}{
    \begin{tikzpicture}[scale=4,math3d] 

  \def\kk{3}

  \def\colk{black}
  \def\colo{magenta}
  \def\coli{darkgreen}
  \def\colii{red}
  \def\coliii{blue}

  \def\opacity{0.4}
  \def\opacityi{0.7}
  \def\opacityii{0.6}

  \coordinate (AA) at (0,0,-0.25);
  \coordinate (CC) at ($(AA) + (0,1,0)$);
  \coordinate (DD) at ($(AA) + (0,0,1)$);
  \draw (AA) node[above left] {$\hvv_0$};
  \draw (CC) node[above=1.5pt] {$\hvv_1=(1,0)$};
  \draw (DD) node[right=1.5pt] {$\hvv_2=(0,1)$};
  \draw[line width=1.0pt,rounded corners=0.5pt] (AA) -- (CC) -- (DD) -- cycle;
  \node[color=\colk] (K2) at ($(AA) + (0,0.65,0.65)$) {$\hK_{2}$};

  \fill[color=\colo] (AA) circle (0.8pt);
  \node[color=\colo,below] (Nxyz) at (AA) {$\haa_{(0,0)}$};
  \newcount\y
  \foreach \x in {1,0} {
    \pgfmathsetcount{\y}{1-\x} 
    \coordinate (Axyz) at ($(AA) + 1/\kk*(0,\x,\y)$);
    \node[color=\coli,below] (Nxyz) at (Axyz) {$\haa_{(\x,\the\y)}$};
    \fill[color=\coli] (Axyz) circle (0.8pt);
  }
  \newcount\y
  \foreach \x in {2,1,...,0} {
    \pgfmathsetcount{\y}{2-\x} 
    \coordinate (Axyz) at ($(AA) + 1/\kk*(0,\x,\y)$);
    \node[color=\colii,below] (Nxyz) at (Axyz) {$\haa_{(\x,\the\y)}$};
    \fill[color=\colii] (Axyz) circle (0.8pt);
  }
  \newcount\y
  \foreach \x in {3,2,...,0} {
    \pgfmathsetcount{\y}{3-\x} 
    \coordinate (Axyz) at ($(AA) + 1/\kk*(0,\x,\y)$);
    \node[color=\coliii,below] (Nxyz) at (Axyz) {$\haa_{(\x,\the\y)}$};
    \fill[color=\coliii] (Axyz) circle (0.8pt);
  }

  \coordinate (eps) at ($1/\kk*(0,0.05,-0.05)$); 
  \coordinate (A10) at ($(AA) + 1/\kk*(0,1,0)$);
  \coordinate (A01) at ($(AA) + 1/\kk*(0,0,1)$);
  \coordinate (A20) at ($(AA) + 1/\kk*(0,2,0)$);
  \coordinate (A02) at ($(AA) + 1/\kk*(0,0,2)$);

  \node[color=\colo,below=12pt] (Nxyz) at (AA) {$\calCod$};
  \node[color=\coli,below=12pt] (Nxyz) at (A10) {$\calCid$};
  \node[color=\colii,below=12pt] (Nxyz) at (A20) {$\calCiid$};
  \node[color=\coliii,below=12pt] (Nxyz) at (CC) {$\calCiiid$};
  
  \draw[color=\coli,line width=0.7pt] (A10) -- ($(A01) + (eps)$);
  \draw[color=\colii,line width=0.7pt] (A20) -- ($(A02) + (eps)$);
  \draw[color=\coliii,dashed,line width=1.4pt] (CC) -- ($(DD) + (eps)$);


  \coordinate (A) at (0,2.0,0);
  \coordinate (B) at ($(A) + (1,0,0)$);
  \coordinate (C) at ($(A) + (0,1,0)$);
  \coordinate (D) at ($(A) + (0,0,1)$);
  \draw (A) node[left] {$\hvv_0$}; 
  \draw (B) node[below=12pt] {$\hvv_1=(1,0,0)$}; 
  \draw (C) node[above=2pt] {$\hvv_2=(0,1,0)$}; 
  \draw (D) node[right=1.5pt] {$\hvv_3=(0,0,1)$}; 
  \draw[line width=1.0pt,rounded corners=0.5pt] (A) -- (B) -- (D) -- cycle;
  \draw[line width=1.0pt,rounded corners=0.5pt] (A) -- (C) -- (B) -- cycle;
  \draw[line width=1.0pt,rounded corners=0.5pt] (B) -- (C) -- (D) -- cycle;
  \draw[line width=1.6pt,rounded corners=0.5pt] (A) -- (C) -- (D) -- cycle;

  \node[color=\colk] (K3) at ($(A) + (0,-0.3,0.7)$) {$\hK_{3}$};

  \pgfmathparse{1-1/\kk}\let\kkp\pgfmathresult
  \coordinate (B1) at ($\kkp*(A) + 1/\kk*(B)$);
  \coordinate (C1) at ($\kkp*(A) + 1/\kk*(C)$);
  \coordinate (D1) at ($\kkp*(A) + 1/\kk*(D)$);
  \pgfmathparse{1-2/\kk}\let\kkp\pgfmathresult
  \coordinate (B2) at ($\kkp*(A) + 2/\kk*(B)$);
  \coordinate (C2) at ($\kkp*(A) + 2/\kk*(C)$);
  \coordinate (D2) at ($\kkp*(A) + 2/\kk*(D)$);

  \draw[color=\coli,fill=\coli!40,fill opacity=\opacity] (B1) -- (C1) -- (D1) -- cycle;
  \draw[color=\colii,fill=\colii!20,fill opacity=\opacity] (B2) -- (C2) -- (D2) -- cycle;
  \draw[color=\coliii,fill=\coliii!20,fill opacity=\opacity] (B) -- (C) -- (D) -- cycle;

  \fill[color=\colo,fill opacity=\opacityii] (A) circle (0.8pt);
  \foreach \x in {1,0} {
    \pgfmathparse{1-\x}\let\YY\pgfmathresult
    \foreach \y in {0,...,\YY} {
      \pgfmathparse{1-\x-\y}\let\z\pgfmathresult
      \fill[color=\coli,fill opacity=\opacityii] ($(A) + 1/\kk*(\x,\y,\z)$) circle (0.8pt);
    }
  }
  \foreach \x in {2,1,...,0} {
    \pgfmathparse{2-\x}\let\YY\pgfmathresult
    \foreach \y in {0,...,\YY} {
      \pgfmathparse{2-\x-\y}\let\z\pgfmathresult
      \fill[color=\colii,fill opacity=\opacityii] ($(A) + 1/\kk*(\x,\y,\z)$) circle (0.8pt);
    }
  }
  \newcount\z
  \foreach \x in {3,2,...,0} {
    \pgfmathparse{3-\x}\let\YY\pgfmathresult
    \foreach \y in {0,...,\YY} {
      \pgfmathsetcount{\z}{3-\x-\y} 
      \coordinate (Axyz) at ($(A) + 1/\kk*(\x,\y,\z)$);
      \node[color=\coliii,below=-0.15] (Nxyz) at (Axyz) {$\haa_{(\x,\y,\the\z)}$};
      \fill[color=\coliii] (Axyz) circle (0.8pt);
    }
  }


\end{tikzpicture}
  }
  \caption[Lagrange nodes of the reference simplex]{%
    Lagrange nodes~$\famnodekd{\haa}$ of the reference simplex~$\hK_d$
    when~$d\in\{2,3\}$ and~$k=3$ (see Section~\ref{ss:multi-ind-fem}).\\
    Each node is depicted as a colored ball, and corresponds to a
    unique element of~$\calAiiid$.
    The colors correspond to degrees $l\leq3$ of polynomials, or equivalently
    to lengths of multi-indices ({\ie} in~$\calCld$ for $l\leq3$).
    In magenta, the node~$\haa_\zzero$ corresponds to constant polynomials
    (with degree~0) in~$\matPod$, and to the multi-index~$\zzero$ in the
    singleton~$\calCod$.
    In green, the nodes correspond to non-constant affine polynomials (with
    degree~1), and to the multi-indices $\ee_1,\ldots,\ee_d$ in~$\calCid$.
    In red, the nodes correspond to non-affine quadratic polynomials (with
    degree~2), and to multi-indices in~$\calCiid$.
    In blue, the nodes correspond to non-quadratic cubic polynomials (with
    degree~3), and to multi-indices in~$\calCiiid$.
    We observe in this picture that
    $\calAiiid=\calCod\uplus\calCid\uplus\calCiid\uplus\calCiiid$.}
  \label{f:lag-k3-d2-d3}
\end{figure}

\subsection{Monomial order}
\label{ss:monom-order}

Once these notations are set, we can present the various possibilities to order
multi-indices.
Such ordering should be a {\em monomial order}, {\ie} a total order that is
compatible with the monoid structure of the monomials: for all
$\aalpha,\bbeta,\ggamma\in\calAkd$, $\XX^\aalpha<\XX^\bbeta$ implies
$\XX^\aalpha\XX^\ggamma<\XX^\bbeta\XX^\ggamma$, which means that
$\aalpha<\bbeta$ implies $\aalpha+\ggamma<\bbeta+\ggamma$.
It is also generally required that for all $\aalpha\neq\zzero$,
$\XX^\zzero=1<\XX^\aalpha$.

Among various possibilities, {\eg} see~\cite[Chap. 2]{clo:iva:15}
and~\cite{wp:mo}, we present commonly used monomial orders, and variants
including one that reveals more convenient in our context (and that we call
``grsymlex'', see Section~\ref{ss:grsymlex}).

In the sequel, $\aalpha$ and~$\bbeta$ denote any multi-indices in~$\calAkd$.

Note that we use the following notations when $d\geq2$: the check
notation~$\caalpha$ denotes the {\em last} $d-1$ components of~$\aalpha$, and
the tilde notation~$\taalpha$ denotes the {\em first} ones.
Thus, when $\aalpha\eqdef(\alpha_1,\ldots,\alpha_d)$, we have
$\aalpha=(\alpha_1,\caalpha)=(\taalpha,\alpha_d)$, see also
Remark~\ref{r:tilde-and-check-notations}.

\subsection{Lexicographic order}
\label{ss:lex}

The {\em lexicographic order}, or simply {\em ``lex'' order}, can be
recursively defined as
\begin{equation}
  \label{e:lex}
  \aalpha \ltlex \bbeta \equivdef
  \left\{
    \begin{array}{l}
      \alpha_1 < \beta_1, \mbox { or}\\
      \alpha_1 = \beta_1 \CONJ d \geq 2 \CONJ \caalpha \ltlex \cbbeta.
    \end{array}
  \right.
\end{equation}
We have $\aalpha\ltlex\bbeta$ iff $\alpha_i<\beta_i$ for the {\em first}
index~$i$ for which~$\alpha_i$ and~$\beta_i$ differ.

Starting from the right, the {\em colexicographic order}, or simply
{\em ``colex'' order}, can be recursively defined as
\begin{equation}
  \label{e:colex}
  \aalpha \ltcolex \bbeta \equivdef
  \left\{
    \begin{array}{l}
      \alpha_d < \beta_d, \mbox { or}\\
      \alpha_d = \beta_d \CONJ d \geq 2 \CONJ \taalpha \ltcolex \tbbeta.
    \end{array}
  \right.
\end{equation}
We have $\aalpha\ltcolex\bbeta$ iff $\alpha_i<\beta_i$ for the {\em last}
index~$i$ for which~$\alpha_i$ and~$\beta_i$ differ.
The {\em colex} order is also called {\em inverse lexicographic order}, or
simply {\em ``invlex'' order}, {\eg} see~\cite[p.~61]{clo:iva:15}.

We may introduce the {\em symmetrical lexicographic order}, or simply
{\em ``symlex'' order}, as
\begin{equation}
  \label{e:symlex}
  \aalpha \ltsymlex \bbeta \equivdef \bbeta \ltlex \aalpha.
\end{equation}
It is the symmetrical of the lex order, we have $\aalpha\ltsymlex\bbeta$ iff
$\beta_i<\alpha_i$ for the {\em first} index~$i$ for which~$\alpha_i$
and~$\beta_i$ differ.

We may also define the {\em reverse lexicographic order}, or simply
{\em ``revlex'' order}, as
\begin{equation}
  \label{e:revlex}
  \aalpha \ltrevlex \bbeta \equivdef \bbeta \ltcolex \aalpha.
\end{equation}
It is the symmetrical of the colex order, we have $\aalpha\ltrevlex\bbeta$ iff
$\beta_i<\alpha_i$ for the {\em last} index~$i$ for which~$\alpha_i$
and~$\beta_i$ differ.
The {\em revlex} order is also called
{\em reverse inverse lexicographic order}, or simply {\em ``rinvlex'' order},
{\eg} see~\cite[p.~61]{clo:iva:15}.

The lex, colex, symlex, and revlex orders are monomial orders.
Note also that lex and colex are obviously equivalent when~$d=1$ (and so are
symlex and revlex).
Moreover, when $d=2$ and the multi-indices have the same length, lex and colex
orders are symmetrical.
Indeed, assume that $\alpha_1+\alpha_2=\beta_1+\beta_2$.
Then, we have $\alpha_1<\beta_1\Equiv\beta_2<\alpha_2$ and
$\alpha_1=\beta_1\Equiv\alpha_2=\beta_2$, {\ie}
$(\alpha_1,\alpha_2)\ltlex(\beta_1,\beta_2)\Equiv
(\beta_1,\beta_2)\ltcolex(\alpha_1,\alpha_2)$.
Thus, in that case ($\len{\aalpha}=\len{\bbeta}$ and $d=2$), lex and revlex
orders are equivalent, as well as colex and symlex.
See a 2D~example in Figure~\ref{f:tet-tria_k3_lex_colex}.

\begin{figure}[t]
  \centering
  \hfill
  \resizebox{0.375\linewidth}{!}{
    \begin{tikzpicture}[scale=4,math3d] 

  \def\kk{3}

  \def\colk{black}
  \def\colb{blue}

  \coordinate (AA) at (0,0,-0.25);
  \coordinate (CC) at ($(AA) + (0,1,0)$);
  \coordinate (DD) at ($(AA) + (0,0,1)$);
  \draw (AA) node[above left] {$\hvv_0$};
  \draw (CC) node[above=1.5pt] {$\hvv_1=(1,0)$};
  \draw (DD) node[right=1.5pt] {$\hvv_2=(0,1)$};
  \draw[line width=1.0pt,rounded corners=0.5pt] (AA) -- (CC) -- (DD) -- cycle;
  \node[color=\colk] (K2) at ($(AA) + (0,0.65,0.65)$) {$\hK_{2}$};

  \fill[color=\colk] (AA) circle (0.8pt);
  \node[color=\colk,below] (Nxyz) at (AA) {$\haa_{(0,0)}$};
  \newcount\y
  \foreach \x in {1,0} {
    \pgfmathsetcount{\y}{1-\x} 
    \coordinate (Axyz) at ($(AA) + 1/\kk*(0,\x,\y)$);
    \node[color=\colk,below] (Nxyz) at (Axyz) {$\haa_{(\x,\the\y)}$};
    \fill[color=\colk] (Axyz) circle (0.8pt);
  }
  \newcount\y
  \foreach \x in {2,1,...,0} {
    \pgfmathsetcount{\y}{2-\x} 
    \coordinate (Axyz) at ($(AA) + 1/\kk*(0,\x,\y)$);
    \node[color=\colk,below] (Nxyz) at (Axyz) {$\haa_{(\x,\the\y)}$};
    \fill[color=\colk] (Axyz) circle (0.8pt);
  }
  \newcount\y
  \foreach \x in {3,2,...,0} {
    \pgfmathsetcount{\y}{3-\x} 
    \coordinate (Axyz) at ($(AA) + 1/\kk*(0,\x,\y)$);
    \node[color=\colb,below] (Nxyz) at (Axyz) {$\haa_{(\x,\the\y)}$};
    \fill[color=\colb] (Axyz) circle (0.8pt);
  }

  \coordinate (eps0) at ($1/\kk*(0,0,-0.05)$);
  \coordinate (eps) at ($1/\kk*(0,0.05,-0.05)$);

  \coordinate (A00) at ($(AA)$);
  \coordinate (A03) at ($(AA) + 1/\kk*(0,0,3)$);
  \coordinate (A10) at ($(AA) + 1/\kk*(0,1,0)$);
  \coordinate (A12) at ($(AA) + 1/\kk*(0,1,2)$);
  \coordinate (A20) at ($(AA) + 1/\kk*(0,2,0)$);
  \coordinate (A21) at ($(AA) + 1/\kk*(0,2,1)$);
  \coordinate (A30) at ($(AA) + 1/\kk*(0,3,0)$);
  \draw[color=\colb,dashed,line width=1.4pt,->,>=latex]
    (A00) -- ($(A03) + (eps0)$);
  \draw[color=\colb,dashed,line width=1pt,->,>=latex]
    (A03) -- ($(A10) - (eps)$);
  \draw[color=\colb,dashed,line width=1pt,->,>=latex]
    (A10) -- ($(A12) + (eps0)$);
  \draw[color=\colb,dashed,line width=1pt,->,>=latex]
    (A12) -- ($(A20) - (eps)$);
  \draw[color=\colb,dashed,line width=1pt,->,>=latex]
    (A20) -- ($(A21) + (eps0)$);
  \draw[color=\colb,dashed,line width=1.4pt,->,>=latex]
    (A21) -- ($(A30) - (eps)$);

\end{tikzpicture}
  }
  \hfill
  \resizebox{0.375\linewidth}{!}{
    \begin{tikzpicture}[scale=4,math3d] 

  \def\kk{3}

  \def\colk{black}
  \def\colb{blue}

  \coordinate (AA) at (0,0,-0.25);
  \coordinate (CC) at ($(AA) + (0,1,0)$);
  \coordinate (DD) at ($(AA) + (0,0,1)$);
  \draw (AA) node[above left] {$\hvv_0$};
  \draw (CC) node[above=1.5pt] {$\hvv_1=(1,0)$};
  \draw (DD) node[right=1.5pt] {$\hvv_2=(0,1)$};
  \draw[line width=1.0pt,rounded corners=0.5pt] (AA) -- (CC) -- (DD) -- cycle;
  \node[color=\colk] (K2) at ($(AA) + (0,0.65,0.65)$) {$\hK_{2}$};

  \fill[color=\colk] (AA) circle (0.8pt);
  \node[color=\colk,below] (Nxyz) at (AA) {$\haa_{(0,0)}$};
  \newcount\y
  \foreach \x in {1,0} {
    \pgfmathsetcount{\y}{1-\x} 
    \coordinate (Axyz) at ($(AA) + 1/\kk*(0,\x,\y)$);
    \node[color=\colk,below] (Nxyz) at (Axyz) {$\haa_{(\x,\the\y)}$};
    \fill[color=\colk] (Axyz) circle (0.8pt);
  }
  \newcount\y
  \foreach \x in {2,1,...,0} {
    \pgfmathsetcount{\y}{2-\x} 
    \coordinate (Axyz) at ($(AA) + 1/\kk*(0,\x,\y)$);
    \node[color=\colk,below] (Nxyz) at (Axyz) {$\haa_{(\x,\the\y)}$};
    \fill[color=\colk] (Axyz) circle (0.8pt);
  }
  \newcount\y
  \foreach \x in {3,2,...,0} {
    \pgfmathsetcount{\y}{3-\x} 
    \coordinate (Axyz) at ($(AA) + 1/\kk*(0,\x,\y)$);
    \node[color=\colb,below] (Nxyz) at (Axyz) {$\haa_{(\x,\the\y)}$};
    \fill[color=\colb] (Axyz) circle (0.8pt);
  }

  \coordinate (eps0) at ($1/\kk*(0,-0.05,0)$);
  \coordinate (eps) at ($1/\kk*(0,-0.05,0.05)$);

  \coordinate (A00) at ($(AA)$);
  \coordinate (A30) at ($(AA) + 1/\kk*(0,3,0)$);
  \coordinate (A01) at ($(AA) + 1/\kk*(0,0,1)$);
  \coordinate (A21) at ($(AA) + 1/\kk*(0,2,1)$);
  \coordinate (A02) at ($(AA) + 1/\kk*(0,0,2)$);
  \coordinate (A12) at ($(AA) + 1/\kk*(0,1,2)$);
  \coordinate (A03) at ($(AA) + 1/\kk*(0,0,3)$);
  \draw[color=\colb,dashed,line width=1.4pt,->,>=latex]
    (A00) -- ($(A30) + (eps0)$);
  \draw[color=\colb,dashed,line width=1pt,->,>=latex]
    (A30) -- ($(A01) - (eps)$);
  \draw[color=\colb,dashed,line width=1pt,->,>=latex]
    (A01) -- ($(A21) + (eps0)$);
  \draw[color=\colb,dashed,line width=1pt,->,>=latex]
    (A21) -- ($(A02) - (eps)$);
  \draw[color=\colb,dashed,line width=1pt,->,>=latex]
    (A02) -- ($(A12) + (eps0)$);
  \draw[color=\colb,dashed,line width=1.4pt,->,>=latex]
    (A12) -- ($(A03) - (eps)$);

\end{tikzpicture}
  }
  \hfill
  \\
  \centering
  \hfill
  \resizebox{0.375\linewidth}{!}{
    \begin{tikzpicture}[scale=4,math3d] 

  \def\kk{3}

  \def\colk{black}
  \def\colb{blue}

  \coordinate (AA) at (0,0,-0.25);
  \coordinate (CC) at ($(AA) + (0,1,0)$);
  \coordinate (DD) at ($(AA) + (0,0,1)$);
  \draw (AA) node[above left] {$\hvv_0$};
  \draw (CC) node[above=1.5pt] {$\hvv_1=(1,0)$};
  \draw (DD) node[right=1.5pt] {$\hvv_2=(0,1)$};
  \draw[line width=1.0pt,rounded corners=0.5pt] (AA) -- (CC) -- (DD) -- cycle;
  \node[color=\colk] (K2) at ($(AA) + (0,0.65,0.65)$) {$\hK_{2}$};

  \fill[color=\colk] (AA) circle (0.8pt);
  \node[color=\colk,below] (Nxyz) at (AA) {$\haa_{(0,0)}$};
  \newcount\y
  \foreach \x in {1,0} {
    \pgfmathsetcount{\y}{1-\x} 
    \coordinate (Axyz) at ($(AA) + 1/\kk*(0,\x,\y)$);
    \node[color=\colk,below] (Nxyz) at (Axyz) {$\haa_{(\x,\the\y)}$};
    \fill[color=\colk] (Axyz) circle (0.8pt);
  }
  \newcount\y
  \foreach \x in {2,1,...,0} {
    \pgfmathsetcount{\y}{2-\x} 
    \coordinate (Axyz) at ($(AA) + 1/\kk*(0,\x,\y)$);
    \node[color=\colk,below] (Nxyz) at (Axyz) {$\haa_{(\x,\the\y)}$};
    \fill[color=\colk] (Axyz) circle (0.8pt);
  }
  \newcount\y
  \foreach \x in {3,2,...,0} {
    \pgfmathsetcount{\y}{3-\x} 
    \coordinate (Axyz) at ($(AA) + 1/\kk*(0,\x,\y)$);
    \node[color=\colb,below] (Nxyz) at (Axyz) {$\haa_{(\x,\the\y)}$};
    \fill[color=\colb] (Axyz) circle (0.8pt);
  }

  \coordinate (eps0) at ($1/\kk*(0,0,-0.05)$);
  \coordinate (eps1) at ($1/\kk*(0,0.05,0)$);
  \coordinate (eps2) at ($1/\kk*(0,0.05,-0.05)$);
  \coordinate (eps) at ($1/\kk*(0,-0.025,0.05)$);

  \coordinate (A00) at ($(AA)$);
  \coordinate (A03) at ($(AA) + 1/\kk*(0,0,3)$);
  \coordinate (A10) at ($(AA) + 1/\kk*(0,1,0)$);
  \coordinate (A12) at ($(AA) + 1/\kk*(0,1,2)$);
  \coordinate (A20) at ($(AA) + 1/\kk*(0,2,0)$);
  \coordinate (A21) at ($(AA) + 1/\kk*(0,2,1)$);
  \coordinate (A30) at ($(AA) + 1/\kk*(0,3,0)$);
  \draw[color=\colb,dashed,line width=1.4pt,<-,>=latex]
    ($(A00) - (eps0)$) -- (A03);
  \draw[color=\colb,dashed,line width=1pt,<-,>=latex]
    ($(A03) - (eps)$) -- (A10);
  \draw[color=\colb,dashed,line width=1pt,<-,>=latex]
    ($(A10) - (eps0)$) -- (A12);
  \draw[color=\colb,dashed,line width=1pt,<-,>=latex]
    ($(A12) - (eps)$) -- (A20);
  \draw[color=\colb,dashed,line width=1pt,<-,>=latex]
    ($(A20) - (eps0)$) -- (A21);
  \draw[color=\colb,dashed,line width=1.4pt,<-,>=latex]
    ($(A21) + (eps2)$) -- (A30);

\end{tikzpicture}
  }
  \hfill
  \resizebox{0.375\linewidth}{!}{
    \begin{tikzpicture}[scale=4,math3d] 

  \def\kk{3}

  \def\colk{black}
  \def\colb{blue}

  \coordinate (AA) at (0,0,-0.25);
  \coordinate (CC) at ($(AA) + (0,1,0)$);
  \coordinate (DD) at ($(AA) + (0,0,1)$);
  \draw (AA) node[above left] {$\hvv_0$};
  \draw (CC) node[above=1.5pt] {$\hvv_1=(1,0)$};
  \draw (DD) node[right=1.5pt] {$\hvv_2=(0,1)$};
  \draw[line width=1.0pt,rounded corners=0.5pt] (AA) -- (CC) -- (DD) -- cycle;
  \node[color=\colk] (K2) at ($(AA) + (0,0.65,0.65)$) {$\hK_{2}$};

  \fill[color=\colk] (AA) circle (0.8pt);
  \node[color=\colk,below] (Nxyz) at (AA) {$\haa_{(0,0)}$};
  \newcount\y
  \foreach \x in {1,0} {
    \pgfmathsetcount{\y}{1-\x} 
    \coordinate (Axyz) at ($(AA) + 1/\kk*(0,\x,\y)$);
    \node[color=\colk,below] (Nxyz) at (Axyz) {$\haa_{(\x,\the\y)}$};
    \fill[color=\colk] (Axyz) circle (0.8pt);
  }
  \newcount\y
  \foreach \x in {2,1,...,0} {
    \pgfmathsetcount{\y}{2-\x} 
    \coordinate (Axyz) at ($(AA) + 1/\kk*(0,\x,\y)$);
    \node[color=\colk,below] (Nxyz) at (Axyz) {$\haa_{(\x,\the\y)}$};
    \fill[color=\colk] (Axyz) circle (0.8pt);
  }
  \newcount\y
  \foreach \x in {3,2,...,0} {
    \pgfmathsetcount{\y}{3-\x} 
    \coordinate (Axyz) at ($(AA) + 1/\kk*(0,\x,\y)$);
    \node[color=\colb,below] (Nxyz) at (Axyz) {$\haa_{(\x,\the\y)}$};
    \fill[color=\colb] (Axyz) circle (0.8pt);
  }

  \coordinate (eps0) at ($1/\kk*(0,-0.05,0)$);
  \coordinate (eps) at ($1/\kk*(0,-0.05,0.05)$);

  \coordinate (A00) at ($(AA)$);
  \coordinate (A30) at ($(AA) + 1/\kk*(0,3,0)$);
  \coordinate (A01) at ($(AA) + 1/\kk*(0,0,1)$);
  \coordinate (A21) at ($(AA) + 1/\kk*(0,2,1)$);
  \coordinate (A02) at ($(AA) + 1/\kk*(0,0,2)$);
  \coordinate (A12) at ($(AA) + 1/\kk*(0,1,2)$);
  \coordinate (A03) at ($(AA) + 1/\kk*(0,0,3)$);
  \draw[color=\colb,dashed,line width=1.4pt,->,>=latex]
    (A03) -- ($(A12) + (eps)$);
  \draw[color=\colb,dashed,line width=1pt,->,>=latex]
    (A12) -- ($(A02) - (eps0)$);
  \draw[color=\colb,dashed,line width=1pt,->,>=latex]
    (A02) -- ($(A21) + (eps)$);
  \draw[color=\colb,dashed,line width=1pt,->,>=latex]
    (A21) -- ($(A01) - (eps0)$);
  \draw[color=\colb,dashed,line width=1pt,->,>=latex]
    (A01) -- ($(A30) + (eps)$);
  \draw[color=\colb,dashed,line width=1.4pt,->,>=latex]
    (A30) -- ($(A00) - (eps0)$);
\end{tikzpicture}
  }
  \hfill
  \caption[Lex, colex, symlex, and revlex orderings]{%
    Lex (top left), colex (top right), symlex (bottom left), and revlex (bottom
    right) orderings of~$\calAkd$ when $d=2$ and $k=3$ (see
    Section~\ref{ss:lex}).\\
    The increase in the order is represented by dashed arrows.
    For~$\calAiiidii$, we have
    $(0,0)\ltlex(0,1)\ltlex(0,2)\ltlex(0,3)\ltlex(1,0)\ltlex(1,1)\ltlex(1,2)\ltlex(2,0)\ltlex(2,1)\ltlex(3,0)$,
    and
    $(0,0)\ltcolex(1,0)\ltcolex(2,0)\ltcolex(3,0)\ltcolex(0,1)\ltcolex(1,1)\ltcolex(2,1)\ltcolex(0,2)\ltcolex(1,2)\ltcolex(0,3)$.\\
    The symlex order is the symmetrical of the lex order, and the
    revlex order is the symmetrical of the colex order.
    For instance, when the length is~3 (hypotenuse of the triangles, blue
    nodes), we have $(0,3)\ltlex(1,2)\ltlex(2,1)\ltlex(3,0)$, and also
    $(0,3)\ltrevlex(1,2)\ltrevlex(2,1)\ltrevlex(3,0)$.}
  \label{f:tet-tria_k3_lex_colex}
\end{figure}

The lex order and its variants are not convenient in practice here, as they do
not sort the monomials of a given polynomial according to their total degrees:
for instance, for $d=2$, let $p\eqdef X_1^0X_2^8$ and $q\eqdef X_1^1X_2^2$, we
have $p\ltlex q$ (as $0<1$), but $\deg(p)=8>3=\deg(q)$.
Thus, in the sequel, they are only be used to define other monomial orders.

\clearpage
\subsection{Graded lexicographic order}
\label{ss:grlex}

The {\em graded lexicographic order}, or simply {\em ``grlex'' order}, is
defined by
\begin{equation}
  \label{e:grlex}
  \aalpha \ltgrlex \bbeta \equivdef
  \left\{
    \begin{array}{l}
      \len{\aalpha} < \len{\bbeta}, \mbox { or}\\
      \len{\aalpha} = \len{\bbeta} \CONJ \aalpha \ltlex \bbeta.
    \end{array}
  \right.
\end{equation}
This amounts to first compare the length of multi-indices, and in case of
equality, use the standard lex order~\eqref{e:lex}.
Thus, when $\len{\aalpha}=\len{\bbeta}$, we have $\aalpha\ltgrlex\bbeta$ iff
$\alpha_i<\beta_i$ for the first index~$i$ for which~$\alpha_i$ and~$\beta_i$
differ.

We have the following equivalence, which may be seen as an alternative
recursive definition,
\begin{equation}
  \label{e:grlex-equiv}
  \aalpha \ltgrlex \bbeta \EQUIV
  \left\{
    \begin{array}{l}
      \len{\aalpha} < \len{\bbeta}, \mbox { or}\\
      \len{\aalpha} = \len{\bbeta}
      \CONJ \alpha_1 < \beta_1, \mbox{ or}\\
      \len{\aalpha} = \len{\bbeta} \CONJ \alpha_1 = \beta_1
      \CONJ d \geq 2 \CONJ \caalpha \ltgrlex \cbbeta,
    \end{array}
  \right.
\end{equation}
as when $\len{\aalpha}=\len{\bbeta}$ and $\alpha_1=\beta_1$, we have
$\len{\caalpha}=\len{\cbbeta}$, and lex and grlex are identical.
Note that the second case ($\len{\aalpha}=\len{\bbeta}$ and $\alpha_1<\beta_1$)
implies $d\geq2$.

This ordering is a monomial order.
It is also called {\em degree lexicographic order}, or simply
{\em ``deglex'' order}.
See~2D and~3D examples in Figure~\ref{f:tet-tria_k3_grlex}.

\begin{figure}[t]
  \centering
  \resizebox{0.75\linewidth}{!}{
    \begin{tikzpicture}[scale=4,math3d] 

  \def\kk{3}

  \def\colk{black}
  \def\colo{magenta}
  \def\coli{darkgreen}
  \def\colii{red}
  \def\coliii{blue}

  \def\opacity{0.4}
  \def\opacityi{0.7}
  \def\opacityii{0.6}

  \coordinate (AA) at (0,0,-0.25);
  \coordinate (CC) at ($(AA) + (0,1,0)$);
  \coordinate (DD) at ($(AA) + (0,0,1)$);
  \draw (AA) node[above left] {$\hvv_0$};
  \draw (CC) node[above=1.5pt] {$\hvv_1=(1,0)$};
  \draw (DD) node[right=1.5pt] {$\hvv_2=(0,1)$};
  \draw[line width=1.0pt,rounded corners=0.5pt] (AA) -- (CC) -- (DD) -- cycle;
  \node[color=\colk] (K2) at ($(AA) + (0,0.65,0.65)$) {$\hK_{2}$};

  \fill[color=\colo] (AA) circle (0.8pt);
  \node[color=\colo,below] (Nxyz) at (AA) {$\haa_{(0,0)}$};
  \newcount\y
  \foreach \x in {1,0} {
    \pgfmathsetcount{\y}{1-\x} 
    \coordinate (Axyz) at ($(AA) + 1/\kk*(0,\x,\y)$);
    \node[color=\coli,below] (Nxyz) at (Axyz) {$\haa_{(\x,\the\y)}$};
    \fill[color=\coli] (Axyz) circle (0.8pt);
  }
  \newcount\y
  \foreach \x in {2,1,...,0} {
    \pgfmathsetcount{\y}{2-\x} 
    \coordinate (Axyz) at ($(AA) + 1/\kk*(0,\x,\y)$);
    \node[color=\colii,below] (Nxyz) at (Axyz) {$\haa_{(\x,\the\y)}$};
    \fill[color=\colii] (Axyz) circle (0.8pt);
  }
  \newcount\y
  \foreach \x in {3,2,...,0} {
    \pgfmathsetcount{\y}{3-\x} 
    \coordinate (Axyz) at ($(AA) + 1/\kk*(0,\x,\y)$);
    \node[color=\coliii,below] (Nxyz) at (Axyz) {$\haa_{(\x,\the\y)}$};
    \fill[color=\coliii] (Axyz) circle (0.8pt);
  }

  \coordinate (eps0) at ($1/\kk*(0,0,-0.05)$); 
  \coordinate (eps) at ($1/\kk*(0,0.05,-0.05)$); 
  \node[color=\colo,below=12pt] (Nxyz) at (AA) {$\calCod$};
  \coordinate (A10) at ($(AA) + 1/\kk*(0,1,0)$);
  \coordinate (A01) at ($(AA) + 1/\kk*(0,0,1)$);
  \draw[color=\colo,dashed,line width=1.4pt,->,>=latex] (AA) -- ($(A01) + (eps0)$);
  \draw[color=\coli,dashed,line width=1pt,->,>=latex] (A01) -- ($(A10) - (eps)$);
  \node[color=\coli,below=12pt] (Nxyz) at (A10) {$\calCid$};
  \coordinate (A20) at ($(AA) + 1/\kk*(0,2,0)$);
  \coordinate (A02) at ($(AA) + 1/\kk*(0,0,2)$);
  \draw[color=\coli,dashed,line width=1pt,->,>=latex] ($(A10) + (eps)$) -- ($(A02) + (eps)$);
  \draw[color=\colii,dashed,line width=1pt,->,>=latex] (A02) -- ($(A20) - (eps)$);
  \node[color=\colii,below=12pt] (Nxyz) at (A20) {$\calCiid$};
  \coordinate (A30) at ($(AA) + 1/\kk*(0,3,0)$);
  \coordinate (CD) at ($1/2*(CC) + 1/2*(DD)$);
  \draw[color=\colii,dashed,line width=1pt,->,>=latex] ($(A20) + (eps)$) -- ($(DD) + (eps)$);
  \draw[color=\coliii,dashed,line width=1.4pt,->,>=latex]  ($(DD) + (eps)$) -- ($(A30) - (eps)$);
  \node[color=\coliii,below=12pt] (Nxyz) at (A30) {$\calCiiid$};

  \coordinate (A) at (0,2.0,0);
  \coordinate (B) at ($(A) + (1,0,0)$);
  \coordinate (C) at ($(A) + (0,1,0)$);
  \coordinate (D) at ($(A) + (0,0,1)$);
  \draw (A) node[left] {$\hvv_0$}; 
  \draw (B) node[below=12pt] {$\hvv_1=(1,0,0)$}; 
  \draw (C) node[above=2pt] {$\hvv_2=(0,1,0)$}; 
  \draw (D) node[right=1.5pt] {$\hvv_3=(0,0,1)$}; 
  \draw[line width=1.0pt,rounded corners=0.5pt] (A) -- (B) -- (D) -- cycle;
  \draw[line width=1.0pt,rounded corners=0.5pt] (A) -- (C) -- (B) -- cycle;
  \draw[line width=1.0pt,rounded corners=0.5pt] (B) -- (C) -- (D) -- cycle;
  \draw[line width=1.6pt,rounded corners=0.5pt] (A) -- (C) -- (D) -- cycle;

  \node[color=\colk] (K3) at ($(A) + (0,-0.3,0.7)$) {$\hK_{3}$};

  \pgfmathparse{1-1/\kk}\let\kkp\pgfmathresult
  \coordinate (B1) at ($\kkp*(A) + 1/\kk*(B)$);
  \coordinate (C1) at ($\kkp*(A) + 1/\kk*(C)$);
  \coordinate (D1) at ($\kkp*(A) + 1/\kk*(D)$);
  \pgfmathparse{1-2/\kk}\let\kkp\pgfmathresult
  \coordinate (B2) at ($\kkp*(A) + 2/\kk*(B)$);
  \coordinate (C2) at ($\kkp*(A) + 2/\kk*(C)$);
  \coordinate (D2) at ($\kkp*(A) + 2/\kk*(D)$);

  \draw[color=\coli,fill=\coli!40,fill opacity=\opacity] (B1) -- (C1) -- (D1) -- cycle;
  \draw[color=\colii,fill=\colii!20,fill opacity=\opacity] (B2) -- (C2) -- (D2) -- cycle;
  \draw[color=\coliii,fill=\coliii!20,fill opacity=\opacity] (B) -- (C) -- (D) -- cycle;

  \fill[color=\colo,fill opacity=\opacityii] (A) circle (0.8pt);
  \foreach \x in {1,0} {
    \pgfmathparse{1-\x}\let\YY\pgfmathresult
    \foreach \y in {0,...,\YY} {
      \pgfmathparse{1-\x-\y}\let\z\pgfmathresult
      \fill[color=\coli,fill opacity=\opacityii] ($(A) + 1/\kk*(\x,\y,\z)$) circle (0.8pt);
    }
  }
  \foreach \x in {2,1,...,0} {
    \pgfmathparse{2-\x}\let\YY\pgfmathresult
    \foreach \y in {0,...,\YY} {
      \pgfmathparse{2-\x-\y}\let\z\pgfmathresult
      \fill[color=\colii,fill opacity=\opacityii] ($(A) + 1/\kk*(\x,\y,\z)$) circle (0.8pt);
    }
  }
  \newcount\z
  \foreach \x in {3,2,...,0} {
    \pgfmathparse{3-\x}\let\YY\pgfmathresult
    \foreach \y in {0,...,\YY} {
      \pgfmathsetcount{\z}{3-\x-\y} 
      \coordinate (Axyz) at ($(A) + 1/\kk*(\x,\y,\z)$);
      \node[color=\coliii,below=-0.15] (Nxyz) at (Axyz) {$\haa_{(\x,\y,\the\z)}$};
      \fill[color=\coliii] (Axyz) circle (0.8pt);
    }
  }

  \coordinate (eps0) at ($1/\kk*(0.05,-0.05,0)$); 
  \coordinate (eps) at ($1/\kk*(0,0.05,-0.05)$); 
  \coordinate (A210) at ($(A) + 1/\kk*(2,1,0)$);
  \coordinate (A201) at ($(A) + 1/\kk*(2,0,1)$);
  \draw[color=\coliii,dashed,line width=1.4pt,<-,>=latex] (B) -- ($(A210) + (eps0)$);
  \draw[color=\coliii,dashed,line width=0.7pt,<-,>=latex] (A210) -- ($(A201) + (eps)$);
  \coordinate (A120) at ($(A) + 1/\kk*(1,2,0)$);
  \coordinate (A102) at ($(A) + 1/\kk*(1,0,2)$);
  \draw[color=\coliii,dashed,line width=0.7pt,<-,>=latex] ($(A201) + (eps)$) -- ($(A120) - (eps)$);
  \draw[color=\coliii,dashed,line width=0.7pt,<-,>=latex] (A120) -- ($(A102) + (eps)$);
  \coordinate (CD) at ($1/2*(C) + 1/2*(D)$);
  \draw[color=\coliii,dashed,line width=0.7pt,<-,>=latex] ($(A102) + (eps)$) -- ($(C) - (eps)$);
  \draw[color=\coliii,dashed,line width=1.4pt,<-,>=latex] (C) -- ($(D) + (eps)$);

\end{tikzpicture}
  }
  \caption[Grlex ordering]{%
    Grlex (or deglex) ordering of~$\calAkd$ when $d\in\{2,3\}$ and $k=3$
    (see Section~\ref{ss:grlex}).\\
    The increase in the order is represented by dashed arrows (only in the
    case $l=3$ when $d=3$, see Figure~\ref{f:lag-k3-d2-d3}).\\
    For~$\calAiiidii$ ($d=2$), we have
    $(0,0)\ltgrlex(0,1)\ltgrlex(1,0)\ltgrlex(0,2)\ltgrlex(1,1)\ltgrlex(2,0)\ltgrlex(0,3)\ltgrlex(1,2)\ltgrlex(2,1)\ltgrlex(3,0)$.\\
    For~$\calCiiidiii$ ($d=3$), we have
    $(0,0,3)\ltgrlex(0,1,2)\ltgrlex(0,2,1)\ltgrlex(0,3,0)\ltgrlex(1,0,2)\ltgrlex(1,1,1)\ltgrlex(1,2,0)\ltgrlex(2,0,1)\ltgrlex(2,1,0)\ltgrlex(3,0,0)$.}
  \label{f:tet-tria_k3_grlex}
\end{figure}

\clearpage
\subsection{Graded colexicographic order}
\label{ss:grcolex}

We design the {\em graded colexicographic order}, or simply
{\em ``grcolex'' order}, defined by
\begin{equation}
  \label{e:grcolex}
  \aalpha \ltgrcolex \bbeta \equivdef
  \left\{
    \begin{array}{l}
      \len{\aalpha} < \len{\bbeta}, \mbox { or}\\
      \len{\aalpha} = \len{\bbeta} \CONJ \aalpha \ltcolex \bbeta.
    \end{array}
  \right.
\end{equation}
This amounts to first compare the length of multi-indices, and in case of
equality, use the colex order~\eqref{e:colex}.
Thus, when $\len{\aalpha}=\len{\bbeta}$, we have $\aalpha\ltgrcolex\bbeta$ iff
$\alpha_i<\beta_i$ for the last index~$i$ where~$\alpha_i$ and~$\beta_i$
differ.

We have the following equivalence, which may be seen as an alternative
recursive definition,
\begin{equation}
  \label{e:grcolex-equiv}
  \aalpha \ltgrcolex \bbeta \EQUIV
  \left\{
    \begin{array}{l}
      \len{\aalpha} < \len{\bbeta}, \mbox { or}\\
      \len{\aalpha} = \len{\bbeta}
      \CONJ \alpha_d < \beta_d, \mbox{ or}\\
      \len{\aalpha} = \len{\bbeta} \CONJ \alpha_d = \beta_d
      \CONJ d \geq 2 \CONJ \taalpha \ltgrcolex \tbbeta.
    \end{array}
  \right.
\end{equation}
as when $\len{\aalpha}=\len{\bbeta}$ and $\alpha_d=\beta_d$, we have
$\len{\taalpha}=\len{\tbbeta}$, and colex and grcolex are identical.
Note that the second case ($\len{\aalpha}=\len{\bbeta}$ and $\alpha_d<\beta_d$)
implies $d\geq2$.

This ordering is a monomial order.
Compare on the same~2D and~3D examples the grcolex order in
Figure~\ref{f:tet-tria_k3_grcolex} and the grlex order in
Figure~\ref{f:tet-tria_k3_grlex}.

\begin{figure}[t]
  \centering
  \resizebox{0.75\linewidth}{!}{
    \begin{tikzpicture}[scale=4,math3d] 

  \def\kk{3}

  \def\colk{black}
  \def\colo{magenta}
  \def\coli{darkgreen}
  \def\colii{red}
  \def\coliii{blue}

  \def\opacity{0.4}
  \def\opacityi{0.7}
  \def\opacityii{0.6}

  \coordinate (AA) at (0,0,-0.25);
  \coordinate (CC) at ($(AA) + (0,1,0)$);
  \coordinate (DD) at ($(AA) + (0,0,1)$);
  \draw (AA) node[above left] {$\hvv_0$};
  \draw (CC) node[above=1.5pt] {$\hvv_1=(1,0)$};
  \draw (DD) node[right=1.5pt] {$\hvv_2=(0,1)$};
  \draw[line width=1.0pt,rounded corners=0.5pt] (AA) -- (CC) -- (DD) -- cycle;
  \node[color=\colk] (K2) at ($(AA) + (0,0.65,0.65)$) {$\hK_{2}$};

  \fill[color=\colo] (AA) circle (0.8pt);
  \node[color=\colo,below] (Nxyz) at (AA) {$\haa_{(0,0)}$};
  \newcount\y
  \foreach \x in {1,0} {
    \pgfmathsetcount{\y}{1-\x} 
    \coordinate (Axyz) at ($(AA) + 1/\kk*(0,\x,\y)$);
    \node[color=\coli,below] (Nxyz) at (Axyz) {$\haa_{(\x,\the\y)}$};
    \fill[color=\coli] (Axyz) circle (0.8pt);
  }
  \newcount\y
  \foreach \x in {2,1,...,0} {
    \pgfmathsetcount{\y}{2-\x} 
    \coordinate (Axyz) at ($(AA) + 1/\kk*(0,\x,\y)$);
    \node[color=\colii,below] (Nxyz) at (Axyz) {$\haa_{(\x,\the\y)}$};
    \fill[color=\colii] (Axyz) circle (0.8pt);
  }
  \newcount\y
  \foreach \x in {3,2,...,0} {
    \pgfmathsetcount{\y}{3-\x} 
    \coordinate (Axyz) at ($(AA) + 1/\kk*(0,\x,\y)$);
    \node[color=\coliii,below] (Nxyz) at (Axyz) {$\haa_{(\x,\the\y)}$};
    \fill[color=\coliii] (Axyz) circle (0.8pt);
  }

  \coordinate (eps0) at ($1/\kk*(0,-0.05,0)$); 
  \coordinate (eps) at ($1/\kk*(0,0.05,-0.05)$); 
  \node[color=\colo,below=12pt] (Nxyz) at (AA) {$\calCod$};
  \coordinate (A10) at ($(AA) + 1/\kk*(0,1,0)$);
  \coordinate (A01) at ($(AA) + 1/\kk*(0,0,1)$);
  \draw[color=\colo,dashed,line width=1.4pt,->,>=latex] (AA) -- ($(A10) + (eps0)$);
  \draw[color=\coli,dashed,line width=1pt,->,>=latex] (A10) -- ($(A01) + (eps)$);
  \node[color=\coli,below=12pt] (Nxyz) at (A10) {$\calCid$};
  \coordinate (A20) at ($(AA) + 1/\kk*(0,2,0)$);
  \coordinate (A02) at ($(AA) + 1/\kk*(0,0,2)$);
  \draw[color=\coli,dashed,line width=1pt,->,>=latex] ($(A01) + (eps)$) -- ($(A20) - (eps)$);
  \draw[color=\colii,dashed,line width=1pt,->,>=latex] (A20) -- ($(A02) + (eps)$);
  \node[color=\colii,below=12pt] (Nxyz) at (A20) {$\calCiid$};
  \coordinate (A30) at ($(AA) + 1/\kk*(0,3,0)$);
  \coordinate (CD) at ($1/2*(CC) + 1/2*(DD)$);
  \draw[color=\colii,dashed,line width=1pt,->,>=latex] ($(A02) + (eps)$) -- ($(A30) - (eps)$);
  \draw[color=\coliii,dashed,line width=1.4pt,->,>=latex] (A30) -- ($(DD) + (eps)$);
  \node[color=\coliii,below=12pt] (Nxyz) at (A30) {$\calCiiid$};

  \coordinate (A) at (0,2.0,0);
  \coordinate (B) at ($(A) + (1,0,0)$);
  \coordinate (C) at ($(A) + (0,1,0)$);
  \coordinate (D) at ($(A) + (0,0,1)$);
  \draw (A) node[left] {$\hvv_0$}; 
  \draw (B) node[below=12pt] {$\hvv_1=(1,0,0)$}; 
  \draw (C) node[above=2pt] {$\hvv_2=(0,1,0)$}; 
  \draw (D) node[right=1.5pt] {$\hvv_3=(0,0,1)$}; 
  \draw[line width=1.0pt,rounded corners=0.5pt] (A) -- (B) -- (D) -- cycle;
  \draw[line width=1.0pt,rounded corners=0.5pt] (A) -- (C) -- (B) -- cycle;
  \draw[line width=1.0pt,rounded corners=0.5pt] (B) -- (C) -- (D) -- cycle;
  \draw[line width=1.6pt,rounded corners=0.5pt] (A) -- (C) -- (D) -- cycle;

  \node[color=\colk] (K3) at ($(A) + (0,-0.3,0.7)$) {$\hK_{3}$};

  \pgfmathparse{1-1/\kk}\let\kkp\pgfmathresult
  \coordinate (B1) at ($\kkp*(A) + 1/\kk*(B)$);
  \coordinate (C1) at ($\kkp*(A) + 1/\kk*(C)$);
  \coordinate (D1) at ($\kkp*(A) + 1/\kk*(D)$);
  \pgfmathparse{1-2/\kk}\let\kkp\pgfmathresult
  \coordinate (B2) at ($\kkp*(A) + 2/\kk*(B)$);
  \coordinate (C2) at ($\kkp*(A) + 2/\kk*(C)$);
  \coordinate (D2) at ($\kkp*(A) + 2/\kk*(D)$);

  \draw[color=\coli,fill=\coli!40,fill opacity=\opacity] (B1) -- (C1) -- (D1) -- cycle;
  \draw[color=\colii,fill=\colii!20,fill opacity=\opacity] (B2) -- (C2) -- (D2) -- cycle;
  \draw[color=\coliii,fill=\coliii!20,fill opacity=\opacity] (B) -- (C) -- (D) -- cycle;

  \fill[color=\colo,fill opacity=\opacityii] (A) circle (0.8pt);
  \foreach \x in {1,0} {
    \pgfmathparse{1-\x}\let\YY\pgfmathresult
    \foreach \y in {0,...,\YY} {
      \pgfmathparse{1-\x-\y}\let\z\pgfmathresult
      \fill[color=\coli,fill opacity=\opacityii] ($(A) + 1/\kk*(\x,\y,\z)$) circle (0.8pt);
    }
  }
  \foreach \x in {2,1,...,0} {
    \pgfmathparse{2-\x}\let\YY\pgfmathresult
    \foreach \y in {0,...,\YY} {
      \pgfmathparse{2-\x-\y}\let\z\pgfmathresult
      \fill[color=\colii,fill opacity=\opacityii] ($(A) + 1/\kk*(\x,\y,\z)$) circle (0.8pt);
    }
  }
  \newcount\z
  \foreach \x in {3,2,...,0} {
    \pgfmathparse{3-\x}\let\YY\pgfmathresult
    \foreach \y in {0,...,\YY} {
      \pgfmathsetcount{\z}{3-\x-\y} 
      \coordinate (Axyz) at ($(A) + 1/\kk*(\x,\y,\z)$);
      \node[color=\coliii,below=-0.15] (Nxyz) at (Axyz) {$\haa_{(\x,\y,\the\z)}$};
      \fill[color=\coliii] (Axyz) circle (0.8pt);
    }
  }

  \coordinate (eps0) at ($1/\kk*(0,-0.05,+0.05)$); 
  \coordinate (eps) at ($1/\kk*(0,0.05,0)$); 
  \coordinate (A012) at ($(A) + 1/\kk*(0,1,2)$);
  \coordinate (A102) at ($(A) + 1/\kk*(1,0,2)$);
  \coordinate (A021) at ($(A) + 1/\kk*(0,2,1)$);
  \coordinate (A201) at ($(A) + 1/\kk*(2,0,1)$);
  \draw[color=\coliii,dashed,line width=1.5pt,<-,>=latex] (D) -- ($(A012) + (eps0)$);
  \draw[color=\coliii,dashed,line width=0.7pt,<-,>=latex] (A012) -- ($(A102) + (eps)$);
  \draw[color=\coliii,dashed,line width=0.7pt,<-,>=latex] ($(A102) + (eps)$) -- ($(A021) - (eps)$);
  \draw[color=\coliii,dashed,line width=0.7pt,<-,>=latex] (A021) -- ($(A201) + (eps)$);
  \coordinate (CD) at ($1/2*(C) + 1/2*(D)$);
  \draw[color=\coliii,dashed,line width=0.7pt,<-,>=latex] ($(A201) + (eps)$) -- ($(C) - (eps)$);
  \draw[color=\coliii,dashed,line width=1.5pt,<-,>=latex] (C) -- ($(B) + (eps)$);

\end{tikzpicture}
  }
  \caption[Grcolex ordering]{%
    Grcolex ordering of~$\calAkd$ when $d\in\{2,3\}$ and $k=3$
    (see Section~\ref{ss:grcolex}).\\
    The increase in the order is represented by dashed arrows (only in the
    case $l=3$ when $d=3$, see Figure~\ref{f:lag-k3-d2-d3}).\\
    For~$\calAiiidii$ ($d=2$), we have
    $(0,0)\ltgrcolex(1,0)\ltgrcolex(0,1)\ltgrcolex(2,0)\ltgrcolex(1,1)\ltgrcolex(0,2)\ltgrcolex(3,0)\ltgrcolex(2,1)\ltgrcolex(1,2)\ltgrcolex(0,3)$.\\
    For~$\calCiiidiii$ ($d=3$), we have
    $(3,0,0)\ltgrcolex(2,1,0)\ltgrcolex(1,2,0)\ltgrcolex(0,3,0)\ltgrcolex(2,0,1)\ltgrcolex(1,1,1)\ltgrcolex(0,2,1)\ltgrcolex(1,0,2)\ltgrcolex(0,1,2)\ltgrcolex(0,0,3)$.
  }
  \label{f:tet-tria_k3_grcolex}
\end{figure}

\clearpage
\subsection{Graded symmetric lexicographic order}
\label{ss:grsymlex}

We also design the {\em graded symmetric lexicographic order}, or simply
{\em ``grsymlex'' order}, defined by
\begin{equation}
  \label{e:grsymlex}
  \aalpha \ltgrsymlex \bbeta \equivdef
  \left\{
    \begin{array}{l}
      \len{\aalpha} < \len{\bbeta}, \mbox { or}\\
      \len{\aalpha} = \len{\bbeta} \CONJ \aalpha \ltsymlex \bbeta.
    \end{array}
  \right.
\end{equation}
This amounts to first compare the length of multi-indices, and in case of
equality, use the symlex order~\eqref{e:symlex}.
Thus, when $\len{\aalpha}=\len{\bbeta}$, we have $\aalpha\ltgrsymlex\bbeta$ iff
$\beta_i<\alpha_i$ for the first index~$i$ where~$\alpha_i$ and~$\beta_i$
differ.

We have the following equivalence, which may be seen as an alternative
recursive definition,
\begin{equation}
  \label{e:grsymlex-equiv-1}
  \aalpha \ltgrsymlex \bbeta \EQUIV
  \left\{
    \begin{array}{l}
      \len{\aalpha} < \len{\bbeta}, \mbox { or}\\
      \len{\aalpha} = \len{\bbeta}
      \CONJ \beta_1 < \alpha_1, \mbox{ or}\\
      \len{\aalpha} = \len{\bbeta} \CONJ \beta_1 = \alpha_1
      \CONJ d \geq 2 \CONJ \caalpha \ltgrsymlex \cbbeta.
    \end{array}
  \right.
\end{equation}
as when $\len{\aalpha}=\len{\bbeta}$ and $\beta_1=\alpha_1$, we have
$\len{\caalpha}=\len{\cbbeta}$, and symlex and grsymlex are identical.
Note that the second case ($\len{\aalpha}=\len{\bbeta}$ and $\beta_1<\alpha_1$)
implies $d\geq2$.

Moreover, the recursive definition may be simplified with the following
equivalence,
\begin{equation}
  \label{e:grsymlex-equiv-2}
  \aalpha \ltgrsymlex \bbeta \EQUIV
  \left\{
    \begin{array}{l}
      \len{\aalpha} < \len{\bbeta}, \mbox { or}\\
      \len{\aalpha} = \len{\bbeta}
      \CONJ d \geq 2 \CONJ \caalpha \ltgrsymlex \cbbeta.
    \end{array}
  \right.
\end{equation}
as when $\len{\aalpha}=\len{\bbeta}$ and $\beta_1<\alpha_1$, we have~$d\geq2$
and $\len{\caalpha}<\len{\cbbeta}$, {\ie} $\caalpha\ltgrsymlex\cbbeta$.
Note that this simplification is made possible by the symmetric aspect of
symlex.
Note also that it is not possible for the grlex and grcolex orders:
indeed, for instance for grlex,
$\len{\aalpha}=\len{\bbeta}\Conj\alpha_1<\beta_1$ implies
$\len{\cbbeta}<\len{\caalpha}$, {\ie} $\cbbeta\ltgrlex\caalpha$,
but not $\caalpha\ltgrlex\cbbeta$.

This ordering is a monomial order.
Note that the grcolex and grsymlex orders are identical when $d=1$ or~2, but
differ as soon as $d\geq3$.
Compare on the same~2D and~3D examples the grsymlex order in
Figure~\ref{f:tet-tria_k3_grsymlex}, and the grcolex order in
Figure~\ref{f:tet-tria_k3_grcolex}.

\begin{figure}[t]
  \centering
  \resizebox{0.75\linewidth}{!}{
    \begin{tikzpicture}[scale=4,math3d] 

  \def\kk{3}

  \def\colk{black}
  \def\colo{magenta}
  \def\coli{darkgreen}
  \def\colii{red}
  \def\coliii{blue}

  \def\opacity{0.4}
  \def\opacityi{0.7}
  \def\opacityii{0.6}

  \coordinate (AA) at (0,0,-0.25);
  \coordinate (CC) at ($(AA) + (0,1,0)$);
  \coordinate (DD) at ($(AA) + (0,0,1)$);
  \draw (AA) node[above left] {$\hvv_0$};
  \draw (CC) node[above=1.5pt] {$\hvv_1=(1,0)$};
  \draw (DD) node[right=1.5pt] {$\hvv_2=(0,1)$};
  \draw[line width=1.0pt,rounded corners=0.5pt] (AA) -- (CC) -- (DD) -- cycle;
  \node[color=\colk] (K2) at ($(AA) + (0,0.65,0.65)$) {$\hK_{2}$};

  \fill[color=\colo] (AA) circle (0.8pt);
  \node[color=\colo,below] (Nxyz) at (AA) {$\haa_{(0,0)}$};
  \newcount\y
  \foreach \x in {1,0} {
    \pgfmathsetcount{\y}{1-\x} 
    \coordinate (Axyz) at ($(AA) + 1/\kk*(0,\x,\y)$);
    \node[color=\coli,below] (Nxyz) at (Axyz) {$\haa_{(\x,\the\y)}$};
    \fill[color=\coli] (Axyz) circle (0.8pt);
  }
  \newcount\y
  \foreach \x in {2,1,...,0} {
    \pgfmathsetcount{\y}{2-\x} 
    \coordinate (Axyz) at ($(AA) + 1/\kk*(0,\x,\y)$);
    \node[color=\colii,below] (Nxyz) at (Axyz) {$\haa_{(\x,\the\y)}$};
    \fill[color=\colii] (Axyz) circle (0.8pt);
  }
  \newcount\y
  \foreach \x in {3,2,...,0} {
    \pgfmathsetcount{\y}{3-\x} 
    \coordinate (Axyz) at ($(AA) + 1/\kk*(0,\x,\y)$);
    \node[color=\coliii,below] (Nxyz) at (Axyz) {$\haa_{(\x,\the\y)}$};
    \fill[color=\coliii] (Axyz) circle (0.8pt);
  }

  \coordinate (eps0) at ($1/\kk*(0,-0.05,0)$); 
  \coordinate (eps) at ($1/\kk*(0,0.05,-0.05)$); 
  \node[color=\colo,below=12pt] (Nxyz) at (AA) {$\calCod$};
  \coordinate (A10) at ($(AA) + 1/\kk*(0,1,0)$);
  \coordinate (A01) at ($(AA) + 1/\kk*(0,0,1)$);
  \draw[color=\colo,dashed,line width=1.4pt,->,>=latex] (AA) -- ($(A10) + (eps0)$);
  \draw[color=\coli,dashed,line width=1pt,->,>=latex] (A10) -- ($(A01) + (eps)$);
  \node[color=\coli,below=12pt] (Nxyz) at (A10) {$\calCid$};
  \coordinate (A20) at ($(AA) + 1/\kk*(0,2,0)$);
  \coordinate (A02) at ($(AA) + 1/\kk*(0,0,2)$);
  \draw[color=\coli,dashed,line width=1pt,->,>=latex] ($(A01) + (eps)$) -- ($(A20) - (eps)$);
  \draw[color=\colii,dashed,line width=1pt,->,>=latex] (A20) -- ($(A02) + (eps)$);
  \node[color=\colii,below=12pt] (Nxyz) at (A20) {$\calCiid$};
  \coordinate (A30) at ($(AA) + 1/\kk*(0,3,0)$);
  \coordinate (CD) at ($1/2*(CC) + 1/2*(DD)$);
  \draw[color=\colii,dashed,line width=1pt,->,>=latex] ($(A02) + (eps)$) -- ($(A30) - (eps)$);
  \draw[color=\coliii,dashed,line width=1.4pt,->,>=latex] (A30) -- ($(DD) + (eps)$);
  \node[color=\coliii,below=12pt] (Nxyz) at (A30) {$\calCiiid$};

  \coordinate (A) at (0,2.0,0);
  \coordinate (B) at ($(A) + (1,0,0)$);
  \coordinate (C) at ($(A) + (0,1,0)$);
  \coordinate (D) at ($(A) + (0,0,1)$);
  \draw (A) node[left] {$\hvv_0$}; 
  \draw (B) node[below=12pt] {$\hvv_1=(1,0,0)$}; 
  \draw (C) node[above=2pt] {$\hvv_2=(0,1,0)$}; 
  \draw (D) node[right=1.5pt] {$\hvv_3=(0,0,1)$}; 
  \draw[line width=1.0pt,rounded corners=0.5pt] (A) -- (B) -- (D) -- cycle;
  \draw[line width=1.0pt,rounded corners=0.5pt] (A) -- (C) -- (B) -- cycle;
  \draw[line width=1.0pt,rounded corners=0.5pt] (B) -- (C) -- (D) -- cycle;
  \draw[line width=1.6pt,rounded corners=0.5pt] (A) -- (C) -- (D) -- cycle;

  \node[color=\colk] (K3) at ($(A) + (0,-0.3,0.7)$) {$\hK_{3}$};

  \pgfmathparse{1-1/\kk}\let\kkp\pgfmathresult
  \coordinate (B1) at ($\kkp*(A) + 1/\kk*(B)$);
  \coordinate (C1) at ($\kkp*(A) + 1/\kk*(C)$);
  \coordinate (D1) at ($\kkp*(A) + 1/\kk*(D)$);
  \pgfmathparse{1-2/\kk}\let\kkp\pgfmathresult
  \coordinate (B2) at ($\kkp*(A) + 2/\kk*(B)$);
  \coordinate (C2) at ($\kkp*(A) + 2/\kk*(C)$);
  \coordinate (D2) at ($\kkp*(A) + 2/\kk*(D)$);

  \draw[color=\coli,fill=\coli!40,fill opacity=\opacity] (B1) -- (C1) -- (D1) -- cycle;
  \draw[color=\colii,fill=\colii!20,fill opacity=\opacity] (B2) -- (C2) -- (D2) -- cycle;
  \draw[color=\coliii,fill=\coliii!20,fill opacity=\opacity] (B) -- (C) -- (D) -- cycle;

  \fill[color=\colo,fill opacity=\opacityii] (A) circle (0.8pt);
  \foreach \x in {1,0} {
    \pgfmathparse{1-\x}\let\YY\pgfmathresult
    \foreach \y in {0,...,\YY} {
      \pgfmathparse{1-\x-\y}\let\z\pgfmathresult
      \fill[color=\coli,fill opacity=\opacityii] ($(A) + 1/\kk*(\x,\y,\z)$) circle (0.8pt);
    }
  }
  \foreach \x in {2,1,...,0} {
    \pgfmathparse{2-\x}\let\YY\pgfmathresult
    \foreach \y in {0,...,\YY} {
      \pgfmathparse{2-\x-\y}\let\z\pgfmathresult
      \fill[color=\colii,fill opacity=\opacityii] ($(A) + 1/\kk*(\x,\y,\z)$) circle (0.8pt);
    }
  }
  \newcount\z
  \foreach \x in {3,2,...,0} {
    \pgfmathparse{3-\x}\let\YY\pgfmathresult
    \foreach \y in {0,...,\YY} {
      \pgfmathsetcount{\z}{3-\x-\y} 
      \coordinate (Axyz) at ($(A) + 1/\kk*(\x,\y,\z)$);
      \node[color=\coliii,below=-0.15] (Nxyz) at (Axyz) {$\haa_{(\x,\y,\the\z)}$};
      \fill[color=\coliii] (Axyz) circle (0.8pt);
    }
  }

  \coordinate (eps0) at ($1/\kk*(0.05,-0.05,0)$); 
  \coordinate (eps) at ($1/\kk*(0,0.05,-0.05)$); 
  \coordinate (A210) at ($(A) + 1/\kk*(2,1,0)$);
  \coordinate (A201) at ($(A) + 1/\kk*(2,0,1)$);
  \draw[color=\coliii,dashed,line width=1.4pt,->,>=latex] (B) -- ($(A210) + (eps0)$);
  \draw[color=\coliii,dashed,line width=0.7pt,->,>=latex] (A210) -- ($(A201) + (eps)$);
  \coordinate (A120) at ($(A) + 1/\kk*(1,2,0)$);
  \coordinate (A102) at ($(A) + 1/\kk*(1,0,2)$);
  \draw[color=\coliii,dashed,line width=0.7pt,->,>=latex] ($(A201) + (eps)$) -- ($(A120) - (eps)$);
  \draw[color=\coliii,dashed,line width=0.7pt,->,>=latex] (A120) -- ($(A102) + (eps)$);
  \coordinate (CD) at ($1/2*(C) + 1/2*(D)$);
  \draw[color=\coliii,dashed,line width=0.7pt,->,>=latex] ($(A102) + (eps)$) -- ($(C) - (eps)$);
  \draw[color=\coliii,dashed,line width=1.4pt,->,>=latex] (C) -- ($(D) + (eps)$);

\end{tikzpicture}
  }
  \caption[Grsymlex ordering]{%
    Grsymlex ordering of~$\calAkd$ when $d\in\{2,3\}$ and $k=3$
    (see Section~\ref{ss:grsymlex}).\\
    The increase in the order is represented by dashed arrows (only in the
    case $l=3$ when $d=3$, see Figure~\ref{f:lag-k3-d2-d3}).\\
    For~$\calAiiidii$ ($d=2$), we have (as in the case of grcolex)
    $(0,0)\ltgrsymlex(1,0)\ltgrsymlex(0,1)\ltgrsymlex(2,0)\ltgrsymlex(1,1)\ltgrsymlex(0,2)\ltgrsymlex(3,0)\ltgrsymlex(2,1)\ltgrsymlex(1,2)\ltgrsymlex(0,3)$.\\
    When $d\geq3$, grcolex and grsymlex differ.
    For instance, for~$\calCiiidiii$ (compare with
    Figure~\ref{f:tet-tria_k3_grcolex}), we have
    $(3,0,0)\ltgrsymlex(2,1,0)\ltgrsymlex(2,0,1)\ltgrsymlex(1,2,0)\ltgrsymlex(1,1,1)\ltgrsymlex(1,0,2)\ltgrsymlex(0,3,0)\ltgrsymlex(0,2,1)\ltgrsymlex(0,1,2)\ltgrsymlex(0,0,3)$.}
  \label{f:tet-tria_k3_grsymlex}
\end{figure}

\clearpage
\subsection{Graded reverse lexicographic order}
\label{ss:grevlex}

The {\em graded reverse lexicographic order}, or simply
{\em ``grevlex'' order}, is defined by
\begin{equation}
  \label{e:grevlex}
  \aalpha \ltgrevlex \bbeta \equivdef
  \left\{
    \begin{array}{l}
      \len{\aalpha} < \len{\bbeta}, \mbox { or}\\
      \len{\aalpha} = \len{\bbeta} \CONJ \aalpha \ltrevlex \bbeta.
    \end{array}
  \right.
\end{equation}
This amounts to first compare the length of multi-indices, and in case of
equality, use the revlex order~\eqref{e:revlex}.
Thus, when $\len{\aalpha}=\len{\bbeta}$, we have
$\aalpha\ltgrevlex\bbeta$ iff $\beta_i<\alpha_i$ for the last index~$i$
where~$\alpha_i$ and~$\beta_i$ differ.

We have the following equivalence, which may be seen as an alternative
recursive definition,
\begin{equation}
  \label{e:grevlex-equiv-1}
  \aalpha \ltgrevlex \bbeta \EQUIV
  \left\{
    \begin{array}{l}
      \len{\aalpha} < \len{\bbeta}, \mbox { or}\\
      \len{\aalpha} = \len{\bbeta}
      \CONJ \beta_d < \alpha_d, \mbox{ or}\\
      \len{\aalpha} = \len{\bbeta} \CONJ \beta_d = \alpha_d
      \CONJ d \geq 2 \CONJ \taalpha \ltgrevlex \tbbeta,
    \end{array}
  \right.
\end{equation}
as when $\len{\aalpha}=\len{\bbeta}$ and $\beta_d=\alpha_d$, we have
$\len{\taalpha}=\len{\tbbeta}$, and revlex and grevlex are identical.
Note that the second case ($\len{\aalpha}=\len{\bbeta}$ and $\beta_d<\alpha_d$)
implies $d\geq2$.

Moreover, the recursive definition may be simplified with the following
equivalence,
\begin{equation}
  \label{e:grevlex-equiv-2}
  \aalpha \ltgrevlex \bbeta \EQUIV
  \left\{
    \begin{array}{l}
      \len{\aalpha} < \len{\bbeta}, \mbox { or}\\
      \len{\aalpha} = \len{\bbeta}
      \CONJ d \geq 2 \CONJ \taalpha \ltgrevlex \tbbeta,
    \end{array}
  \right.
\end{equation}
as when $\len{\aalpha}=\len{\bbeta}$ and $\beta_d<\alpha_d$, we have~$d\geq2$
and $\len{\taalpha}<\len{\tbbeta}$, {\ie} $\taalpha\ltgrevlex\tbbeta$.

This ordering is a monomial order.
Note that the grlex and grevlex orders are identical when $d=1$ or~2, but
differ as soon as $d\geq3$.
Compare on the same~2D and~3D examples the grevlex order in
Figure~\ref{f:tet-tria_k3_grevlex}, and the grlex order in
Figure~\ref{f:tet-tria_k3_grlex}.

\begin{figure}[t]
  \centering
  \resizebox{0.75\linewidth}{!}{
    \begin{tikzpicture}[scale=4,math3d] 

  \def\kk{3}

  \def\colk{black}
  \def\colo{magenta}
  \def\coli{darkgreen}
  \def\colii{red}
  \def\coliii{blue}

  \def\opacity{0.4}
  \def\opacityi{0.7}
  \def\opacityii{0.6}

  \coordinate (AA) at (0,0,-0.25);
  \coordinate (CC) at ($(AA) + (0,1,0)$);
  \coordinate (DD) at ($(AA) + (0,0,1)$);
  \draw (AA) node[above left] {$\hvv_0$};
  \draw (CC) node[above=1.5pt] {$\hvv_1=(1,0)$};
  \draw (DD) node[right=1.5pt] {$\hvv_2=(0,1)$};
  \draw[line width=1.0pt,rounded corners=0.5pt] (AA) -- (CC) -- (DD) -- cycle;
  \node[color=\colk] (K2) at ($(AA) + (0,0.65,0.65)$) {$\hK_{2}$};

  \fill[color=\colo] (AA) circle (0.8pt);
  \node[color=\colo,below] (Nxyz) at (AA) {$\haa_{(0,0)}$};
  \newcount\y
  \foreach \x in {1,0} {
    \pgfmathsetcount{\y}{1-\x} 
    \coordinate (Axyz) at ($(AA) + 1/\kk*(0,\x,\y)$);
    \node[color=\coli,below] (Nxyz) at (Axyz) {$\haa_{(\x,\the\y)}$};
    \fill[color=\coli] (Axyz) circle (0.8pt);
  }
  \newcount\y
  \foreach \x in {2,1,...,0} {
    \pgfmathsetcount{\y}{2-\x} 
    \coordinate (Axyz) at ($(AA) + 1/\kk*(0,\x,\y)$);
    \node[color=\colii,below] (Nxyz) at (Axyz) {$\haa_{(\x,\the\y)}$};
    \fill[color=\colii] (Axyz) circle (0.8pt);
  }
  \newcount\y
  \foreach \x in {3,2,...,0} {
    \pgfmathsetcount{\y}{3-\x} 
    \coordinate (Axyz) at ($(AA) + 1/\kk*(0,\x,\y)$);
    \node[color=\coliii,below] (Nxyz) at (Axyz) {$\haa_{(\x,\the\y)}$};
    \fill[color=\coliii] (Axyz) circle (0.8pt);
  }

  \coordinate (eps0) at ($1/\kk*(0,0,-0.05)$); 
  \coordinate (eps) at ($1/\kk*(0,0.05,-0.05)$); 
  \node[color=\colo,below=12pt] (Nxyz) at (AA) {$\calCod$};
  \coordinate (A10) at ($(AA) + 1/\kk*(0,1,0)$);
  \coordinate (A01) at ($(AA) + 1/\kk*(0,0,1)$);
  \draw[color=\colo,dashed,line width=1.4pt,->,>=latex] (AA) -- ($(A01) + (eps0)$);
  \draw[color=\coli,dashed,line width=1pt,->,>=latex] (A01) -- ($(A10) - (eps)$);
  \node[color=\coli,below=12pt] (Nxyz) at (A10) {$\calCid$};
  \coordinate (A20) at ($(AA) + 1/\kk*(0,2,0)$);
  \coordinate (A02) at ($(AA) + 1/\kk*(0,0,2)$);
  \draw[color=\coli,dashed,line width=1pt,->,>=latex] ($(A10) + (eps)$) -- ($(A02) + (eps)$);
  \draw[color=\colii,dashed,line width=1pt,->,>=latex] (A02) -- ($(A20) - (eps)$);
  \node[color=\colii,below=12pt] (Nxyz) at (A20) {$\calCiid$};
  \coordinate (A30) at ($(AA) + 1/\kk*(0,3,0)$);
  \coordinate (CD) at ($1/2*(CC) + 1/2*(DD)$);
  \draw[color=\colii,dashed,line width=1pt,->,>=latex] ($(A20) + (eps)$) -- ($(DD) + (eps)$);
  \draw[color=\coliii,dashed,line width=1.4pt,->,>=latex]  ($(DD) + (eps)$) -- ($(A30) - (eps)$);
  \node[color=\coliii,below=12pt] (Nxyz) at (A30) {$\calCiiid$};

  \coordinate (A) at (0,2.0,0);
  \coordinate (B) at ($(A) + (1,0,0)$);
  \coordinate (C) at ($(A) + (0,1,0)$);
  \coordinate (D) at ($(A) + (0,0,1)$);
  \draw (A) node[left] {$\hvv_0$}; 
  \draw (B) node[below=12pt] {$\hvv_1=(1,0,0)$}; 
  \draw (C) node[above=2pt] {$\hvv_2=(0,1,0)$}; 
  \draw (D) node[right=1.5pt] {$\hvv_3=(0,0,1)$}; 
  \draw[line width=1.0pt,rounded corners=0.5pt] (A) -- (B) -- (D) -- cycle;
  \draw[line width=1.0pt,rounded corners=0.5pt] (A) -- (C) -- (B) -- cycle;
  \draw[line width=1.0pt,rounded corners=0.5pt] (B) -- (C) -- (D) -- cycle;
  \draw[line width=1.6pt,rounded corners=0.5pt] (A) -- (C) -- (D) -- cycle;

  \node[color=\colk] (K3) at ($(A) + (0,-0.3,0.7)$) {$\hK_{3}$};

  \pgfmathparse{1-1/\kk}\let\kkp\pgfmathresult
  \coordinate (B1) at ($\kkp*(A) + 1/\kk*(B)$);
  \coordinate (C1) at ($\kkp*(A) + 1/\kk*(C)$);
  \coordinate (D1) at ($\kkp*(A) + 1/\kk*(D)$);
  \pgfmathparse{1-2/\kk}\let\kkp\pgfmathresult
  \coordinate (B2) at ($\kkp*(A) + 2/\kk*(B)$);
  \coordinate (C2) at ($\kkp*(A) + 2/\kk*(C)$);
  \coordinate (D2) at ($\kkp*(A) + 2/\kk*(D)$);

  \draw[color=\coli,fill=\coli!40,fill opacity=\opacity] (B1) -- (C1) -- (D1) -- cycle;
  \draw[color=\colii,fill=\colii!20,fill opacity=\opacity] (B2) -- (C2) -- (D2) -- cycle;
  \draw[color=\coliii,fill=\coliii!20,fill opacity=\opacity] (B) -- (C) -- (D) -- cycle;

  \fill[color=\colo,fill opacity=\opacityii] (A) circle (0.8pt);
  \foreach \x in {1,0} {
    \pgfmathparse{1-\x}\let\YY\pgfmathresult
    \foreach \y in {0,...,\YY} {
      \pgfmathparse{1-\x-\y}\let\z\pgfmathresult
      \fill[color=\coli,fill opacity=\opacityii] ($(A) + 1/\kk*(\x,\y,\z)$) circle (0.8pt);
    }
  }
  \foreach \x in {2,1,...,0} {
    \pgfmathparse{2-\x}\let\YY\pgfmathresult
    \foreach \y in {0,...,\YY} {
      \pgfmathparse{2-\x-\y}\let\z\pgfmathresult
      \fill[color=\colii,fill opacity=\opacityii] ($(A) + 1/\kk*(\x,\y,\z)$) circle (0.8pt);
    }
  }
  \newcount\z
  \foreach \x in {3,2,...,0} {
    \pgfmathparse{3-\x}\let\YY\pgfmathresult
    \foreach \y in {0,...,\YY} {
      \pgfmathsetcount{\z}{3-\x-\y} 
      \coordinate (Axyz) at ($(A) + 1/\kk*(\x,\y,\z)$);
      \node[color=\coliii,below=-0.15] (Nxyz) at (Axyz) {$\haa_{(\x,\y,\the\z)}$};
      \fill[color=\coliii] (Axyz) circle (0.8pt);
    }
  }

  \coordinate (eps0) at ($1/\kk*(0,-0.05,+0.05)$); 
  \coordinate (eps) at ($1/\kk*(0,0.05,0)$); 
  \coordinate (A012) at ($(A) + 1/\kk*(0,1,2)$);
  \coordinate (A102) at ($(A) + 1/\kk*(1,0,2)$);
  \coordinate (A021) at ($(A) + 1/\kk*(0,2,1)$);
  \coordinate (A201) at ($(A) + 1/\kk*(2,0,1)$);
  \draw[color=\coliii,dashed,line width=1.5pt,->,>=latex] (D) -- ($(A012) + (eps0)$);
  \draw[color=\coliii,dashed,line width=0.7pt,->,>=latex] (A012) -- ($(A102) + (eps)$);
  \draw[color=\coliii,dashed,line width=0.7pt,->,>=latex] ($(A102) + (eps)$) -- ($(A021) - (eps)$);
  \draw[color=\coliii,dashed,line width=0.7pt,->,>=latex] (A021) -- ($(A201) + (eps)$);
  \coordinate (CD) at ($1/2*(C) + 1/2*(D)$);
  \draw[color=\coliii,dashed,line width=0.7pt,->,>=latex] ($(A201) + (eps)$) -- ($(C) - (eps)$);
  \draw[color=\coliii,dashed,line width=1.5pt,->,>=latex] (C) -- ($(B) + (eps)$);

\end{tikzpicture}
  }
  \caption[Grevlex ordering]{%
    Grevlex ordering of~$\calAkd$ when $d\in\{2,3\}$ and $k=3$
    (see Section~\ref{ss:grevlex}).\\
    The increase in the order is represented by dashed arrows (only in the
    case $l=3$ when $d=3$, see Figure~\ref{f:lag-k3-d2-d3}).\\
    For~$\calAiiidii$ ($d=2$), we have (as in the case of grlex)
    $(0,0)\ltgrevlex(0,1)\ltgrevlex(1,0)\ltgrevlex(0,2)\ltgrevlex(1,1)\ltgrevlex(2,0)\ltgrevlex(0,3)\ltgrlex(1,2)\ltgrevlex(2,1)\ltgrevlex(3,0)$.\\
    When $d\geq3$, grlex and grevlex differ.
    For instance, for~$\calCiiidiii$ (compare with
    Figure~\ref{f:tet-tria_k3_grlex}), we have
    $(0,0,3)\ltgrevlex(0,1,2)\ltgrevlex(1,0,2)\ltgrevlex(0,2,1)\ltgrevlex(1,1,1)\ltgrevlex(2,0,1)\ltgrevlex(0,3,0)\ltgrevlex(1,2,0)\ltgrevlex(2,1,0)\ltgrevlex(3,0,0)$.}
  \label{f:tet-tria_k3_grevlex}
\end{figure}



\clearpage
\subsection{Discussion}
\label{ss:discussion}

Several orders on~$\matNd$ may be used to number multi-indices and the grevlex
order of Section~\ref{ss:grevlex} is known to be well-suited for the division
of multivariate polynomials using Gröbner bases, {\eg}
see~\cite{wp:mo,clo:iva:15}.
However, in the context of the FEM, other desired properties enter the picture.
In this framework, a practical order should be consistent with
\begin{itemize}
\item[(i)] an increase of the degree (from~$k$ to~$k+1$):
  multi-indices of length at most~$k$ should be numbered before those of
  length~$k+1$, {\ie} for all $\aalpha\in\calCkd\subset\calAkd$ and for all
  $\bbeta\in\calCkpid$, we should have $\aalpha<\bbeta$;

\item[(ii)] an increase of the dimension (from~$d-1$ to~$d$):
  ``natural'' bijections between $d-1$-multi-indices of length at most~$k$ and
  $d$-multi-indices of length~$k$ should be increasing,
  \begin{itemize}
  \item[$\bullet$]
    $\fkdo\eqdef(\caalpha\in\calAkdmi
    \longmapsto(k-\len{\caalpha},\caalpha)\in\calCkd)$, or \\
    $\tfkdo\eqdef(\taalpha\in\calAkdmi
    \longmapsto(\taalpha,k-\len{\taalpha})\in\calCkd)$
    from Lemma~\ref{l:card-Ckd-Akdm1}, and

  \item[$\bullet$]
    for any $i\in[1..d]$, $\fkdi\eqdef(\aalphap\in\calAkdmi\longmapsto
    (\alphap_1,\dots,\alphap_{i-1},0,\alphap_i,\dots,\alphap_{d-1})\in\calAkdi)$
    from Lemma~\ref{l:card-Akdi-Akdm1},
  \end{itemize}
  {\ie} $\calAkdmi$, $\calCkd$ ($=\fkdo(\calAkdmi)=\tfkdo(\calAkdmi)$),
  and $\calAkdi(=\fkdi(\calAkdmi))$ share the same numbering;

\item[(iii)] the natural numbering of the multi-indices corresponding to the
  reference vertices~$(\hvv_i)_{i\in[0..d]}$ of the reference simplex,
  {\ie} $\zzero<k\ee_1<k\ee_2<\ldots<k\ee_d$.
\end{itemize}
Indeed, condition~(i) allows to easily sort monomials with respect to their
total degree.
Condition~(ii) allows to easily relate the face nodes and the volume nodes
during the computations.
And condition~(iii) allows to have {\em positive} simplices ({\ie} positively
oriented) when computing the integrals.

\bigskip

Obviously, condition~(i) disqualifies the {\bf lex} order and its variants of
Section~\ref{ss:lex}, but all ``graded'' orders are designed to comply
with it.

\bigskip

Obviously, we also have $\zzero<k\ee_1$ (from condition~(iii)) for all
``graded'' orders.

\bigskip

Conditions~(ii) and~(iii) are not met for the {\bf grlex} order of
Section~\ref{ss:grlex}, which is thus disqualified for our purpose.
To see this, here are some counter-examples, {\cf}
Figure~\ref{f:tet-tria_k3_grlex}.
For~(ii), with~$d=k=3$, we have $(1,0)\ltgrlex(0,2)$, while
$f^3_{3,0}(0,2)=(1,0,2)\ltgrlex(2,1,0)=f^3_{3,0}(1,0)$, and
$\tf^3_{3,0}(0,2)=(0,2,1)\ltgrlex(1,0,2)=\tf^3_{3,0}(1,0)$.
For~(iii), with~$d=2$ and~$k=3$, the vertex~$\hvv_1=\haa_{(3,0)}$ is numbered
after the vertex~$\hvv_2=\haa_{(0,3)}$.

\bigskip

The {\bf grcolex} order of Section~\ref{ss:grcolex} complies with~(iii).
Indeed, if $i<j$, then $\len{k\ee_i}=\len{k\ee_j}=k$, $j$~is the last index
where the components of~$k\ee_i$ and~$k\ee_j$ differ, and the $j$-th component
of~$k\ee_i$ is~0, whereas $(k\ee_j)_j=k>0$, thus $k\ee_i\ltcolex k\ee_j$, {\ie}
$k\ee_i\ltgrcolex k\ee_j$ (see Figure~\ref{f:tet-tria_k3_grcolex} in the
cases~$d\in\{2,3\}$ and~$k=3$).

However, condition~(ii) is not met for the {\bf grcolex} order, which is thus
also disqualified for our purpose.
Indeed, from Figure~\ref{f:tet-tria_k3_grcolex} with~$d=k=3$, we have
$(0,1)\ltgrcolex(2,0)$, while
\begin{gather*}
  f^3_{3,0} (2, 0) = (1, 2, 0) \ltgrcolex (2, 0, 1) = f^3_{3,0} (0, 1), \AND\\
  \tf^3_{3,0} (2, 0) = (2, 0, 1) \ltgrcolex (0, 1, 2) = \tf^3_{3,0} (0, 1).
\end{gather*}

\bigskip

The {\bf grevlex} order of Section~\ref{ss:grevlex} satisfies condition~(ii)
with $\tfkdo$ and $\fkdi$ (for all $i\in[1..d]$).
Indeed, we have the following.
\begin{itemize}
\item
  Let $\taalpha,\tbbeta\in\calAkdmi$ such that $\taalpha\ltgrevlex\tbbeta$.
  Let $\ggamma\eqdef\tfkdo(\taalpha)$ and $\ddelta\eqdef\tfkdo(\tbbeta)$.
  Both~$\ggamma$ and~$\ddelta$ have length~$k$, and by removing their last
  component, we obtain $\tggamma=\taalpha$ and $\tddelta=\tbbeta$.
  Hence, \eqref{e:grevlex-equiv-2} yields
  $\tfkdo(\taalpha)\ltgrevlex\tfkdo(\tbbeta)$.
  Thus, $\tfkdo$ is increasing with respect to {\bf grevlex}.

\item
  Let $i\in[1..d]$, and $\aalphap,\bbetap\in\calAkdmi$ such that
  $\aalphap\ltgrevlex\bbetap$.
  We have $\len{\fkdi(\aalphap)}=\len{\aalphap}$ and
  $\len{\fkdi(\bbetap)}=\len{\bbetap}$.
  Thus, from~\eqref{e:grevlex}, we have two cases:
  \begin{itemize}
  \item
    if $\len{\aalphap}<\len{\bbetap}$, then we have
    $\len{\fkdi(\aalphap)}<\len{\fkdi(\bbetap)}$, and thus
    $\fkdi(\aalphap)\ltgrevlex\fkdi(\bbetap)$.

  \item
    if $\len{\aalphap}=\len{\bbetap}$, then $\aalphap\ltrevlex\bbetap$,
    and $\len{\fkdi(\aalphap)}=\len{\fkdi(\bbetap)}$.
    The insertion of a~``0'' in position~$i$ in~$\aalphap$ and~$\bbetap$ does
    not alter the revlex order, and thus we also have
    $\fkdi(\aalphap)\ltrevlex\fkdi(\bbetap)$, {\ie}
    $\fkdi(\aalphap)\ltgrevlex\fkdi(\bbetap)$.
  \end{itemize}
  Thus, in both cases, $\fkdi$ is increasing with respect to {\bf grevlex}.
\end{itemize}

However, {\bf grevlex}  does not comply with condition~(iii), as the
vertex~$\hvv_1=\haa_{(3,0)}$ is numbered after the
vertex~$\hvv_2=\haa_{(0,3)}$, see Figure~\ref{f:tet-tria_k3_grevlex} (left).
For this reason, we also avoided {\bf grevlex}.

\bigskip

Finally, the {\bf grsymlex} order of Section~\ref{ss:grsymlex} satisfies the
three conditions.
Indeed, for condition~(ii), using~$\fkdo$ instead of~$\tfkdo$ as for grevlex,
we have the following.
\begin{itemize}
\item
  Let $\caalpha,\cbbeta\in\calAkdmi$ such that $\caalpha\ltgrsymlex\cbbeta$.
  Let $\ggamma\eqdef\fkdo(\caalpha)$ and $\ddelta\eqdef\fkdo(\cbbeta)$.
  Both~$\ggamma$ and~$\ddelta$ have length~$k$, and by removing their first
  component, we obtain $\cggamma=\caalpha$ and $\cddelta=\cbbeta$.
  Hence, \eqref{e:grsymlex-equiv-2} yields
  $\fkdo(\caalpha)\ltgrsymlex\fkdo(\cbbeta)$.
  Thus, $\fkdo$ is increasing with respect to {\bf grsymlex}.

\item
  Let $i\in[1..d]$, and $\aalphap,\bbetap\in\calAkdmi$ such that
  $\aalphap\ltgrsymlex\bbetap$.
  We have $\len{\fkdi(\aalphap)}=\len{\aalphap}$ and
  $\len{\fkdi(\bbetap)}=\len{\bbetap}$.
  Thus, from~\eqref{e:grsymlex}, we have two cases:
  \begin{itemize}
  \item
    if $\len{\aalphap}<\len{\bbetap}$, then we have
    $\len{\fkdi(\aalphap)}<\len{\fkdi(\bbetap)}$, and thus
    $\fkdi(\aalphap)\ltgrsymlex\fkdi(\bbetap)$.

  \item
    if $\len{\aalphap}=\len{\bbetap}$, then $\aalphap\ltsymlex\bbetap$,
    and $\len{\fkdi(\aalphap)}=\len{\fkdi(\bbetap)}$.
    The insertion of a~``0'' in position~$i$ in~$\aalphap$ and~$\bbetap$ does
    not alter the symlex order, and thus we also have
    $\fkdi(\aalphap)\ltsymlex\fkdi(\bbetap)$, {\ie}
    $\fkdi(\aalphap)\ltgrsymlex\fkdi(\bbetap)$.
  \end{itemize}
  Thus, in both cases, $\fkdi$ is increasing with respect to {\bf grsymlex}.
\end{itemize}

The {\bf grsymlex} order also complies with (iii).
Indeed, if $i<j$, then we have $\len{k\ee_i}=\len{k\ee_j}=k$, $i$~is the
first index where the components of~$k\ee_i$ and~$k\ee_j$ differ, and the
$i$-th component of~$k\ee_j$ is~0, whereas $(k\ee_i)_i=k>0$, thus
$k\ee_i\ltsymlex k\ee_j$, {\ie} $k\ee_i\ltgrsymlex k\ee_j$ (see
Figure~\ref{f:tet-tria_k3_grsymlex} in the cases~$d\in\{2,3\}$ and~$k=3$).

This is why the {\bf grsymlex} order was chosen for the {\coq} implementation.

\bigskip

Note that the numbering of~$\calAkd$ is not used in the present version of this
document, but is essential in the {\coq} formalization of the present work
in~\cite{mou:phd:24,bol:cfs:24}.

\part{Detailed proofs}
\label{p:detailed-proofs}

\chapter{Introduction}
\label{c:introduction-2}

Statements are displayed inside colored boxes.
Their nature can be identified at a glance by using the following color code:
\begin{center}
  \rmkbox{light gray is for remarks}, \qquad
  \defbox{light green for definitions},\\
  \lembox{light blue for lemmas}, \quad and \quad
  \thmbox{light red for theorems}.
\end{center}
Definitions and results have a number and a name.
Inside the bodies of proofs, pertinent statements are referenced using both
their number and name.
When appropriate, some hints are given about the application of the result,
either to specify arguments, or to provide justification or consequences;
they are \underline{underlined}.
Some useful definitions and results were already stated
in~\cite{cm:lmt:16} or in~\cite{cm:li:21}, which were respectively
devoted to the detailed proofs of the {\LMT} and for Lebesgue integration.
Those are numbered up to~{\theprevtheorem}, and the statements in the present
document are numbered starting from~{\thenexttheorem}.

Furthermore, as in~\cite{cm:lmt:16,cm:li:21}, the most basic results are
supposed to be known and are not detailed further;
they are displayed in \assume{bold dark red}.
These include:
\begin{itemize}
\item Naive set theory:
  definition and results about injective and bijective functions.

\item Numbers:
  \begin{CompactList}
  \item ordered abelian monoid properties of~$\matN$;
  \item ordered valued field properties of~$\matR$.
  \end{CompactList}

\item Linear algebra:
  \begin{CompactList}
  \item
    basic definitions and results about {\vectorspace}s and subspaces, such as
    freedom, generator, linear span, basis, dimension,
    and linear operations over functions;

  \item
    basic definitions and results about linear maps, such as
    distributivity of composition over addition,
    range/rank and kernel/nullity,
    the rank--nullity theorem,
    the characterization of surjectivity with the rank
    and injectivity with the nullity;

  \item
    some results in finite dimension, such as
    the rules for matrix--vector product,
    the incomplete basis theorem,
    and the dimension of the dual space.
  \end{CompactList}

\item
  Topology in metric spaces:
  definitions of interior, open ball, and continuity.

\item
  Calculus:
  basic definitions and results in~$\matRn$, such as
  the usual norms, the differentiability of affine maps ($\Cinf$), and the
  rules of differentiation.

\item
  Real analysis:
  basic definitions and results in~$\matR$, such as
  the rules of derivation.

\item
  Polynomials:
  basic definitions about univariate polynomials, such as the degree.
\end{itemize}

\bigskip

This part is organized as follows.
Chapter~\ref{c:complements} contains some results from various fields of
mathematics (arithmetics, linear and affine algebra, geometry, and univariate
polynomials), that are needed in the proofs for the finite element method.
We recall that the material stated in~\cite{cm:lmt:16,cm:li:21} may also be
used.
Then, Chapter~\ref{c:fe} is devoted to the general definition of finite
element, and Chapter~\ref{c:simplex} addresses simplicial geometry.
Lagrange finite elements on segments are presented in
Chapter~\ref{c:Pk1-lag-fe}, and finally Chapter~\ref{c:Pkd-lag-fe} is dedicated
to the general case of dimension~$d\geq1$.

\chapter{Complements}
\label{c:complements}
\minitoc

\section{Complements on natural numbers}
\label{s:compl-nat-num}

\begin{remark}
  In Figures~\ref{f:double-induction-by-diagonal}
  and~\ref{f:strong-double-induction}, we provide illustrations for the double
  inductions used in this document.

  In Figure~\ref{f:double-induction-by-diagonal}, a  double induction ``by
  diagonal'' is presented, see Lemma~\ref{l:double-induction-by-diagonal}.
  The proof of unisolvence of~$\matPkd$ relies on it, see
  Lemma~\ref{l:lag-lin-forms-Pkd-inj}.
  In Figure~\ref{f:strong-double-induction}, the strong double induction is
  shown, see Lemma~\ref{l:strong-double-induction}.
  It is used in Lemma~\ref{l:prod-2-polynom-alt-proof}.
\end{remark}

\begin{figure}[htb]
  \centering
  \resizebox{0.7\linewidth}{!}{
    \begin{tikzpicture}

  \draw[step=.5cm, very thin, black!30] (0,0) grid (4,2) ;
  \def\colo{magenta}
  \def\coli{darkgreen}
  \def\colii{red}
  \def\coliii{blue}

  \def\kk{2}
  \def\epsi{0.45}
  \def\epsii{0.35}

  \coordinate (A) at (0,0);
  \coordinate (B) at  ($ (A) + (4,0) $);
  \coordinate (C) at  ($ (A) + (0,2) $);

  \coordinate (Nm) at  ($ (A) + (1.5,0) $);
  \coordinate (Nn) at  ($ (A) + (0,1) $);
  \coordinate (N) at  ($ (Nm) + (Nn) - (A) $);

  \draw ($ (B) + (\epsii,0) $) node[above=2pt] {$\matN$};
  \draw ($ (C) + (0,\epsii) $) node[left=2pt] {$\matN$};

  \draw[line width=1.5pt,->,>=latex] (A) -- ($ (B) + (\epsi,0) $);
  \draw[line width=1.5pt,->,>=latex] (A) -- ($ (C) + (0,\epsi) $);

  \foreach \m in {0,1,...,8}  {
    \coordinate (Axy) at  ($ (A) + 1/\kk*(\m,0) $);
    \fill[color=\coli] (Axy) circle (1.5pt);
  }
  \foreach \n in {0,1,...,4}  {
    \coordinate (Axy) at  ($ (A) + 1/\kk*(0,\n) $);
    \fill[color=\coli] (Axy) circle (1.5pt);
  }

  \coordinate (Amn) at  ($ (N) + (0,0) $);
  \coordinate (Amp1n) at  ($ (N) + 1/\kk*(1,0) $);
  \coordinate (Amp2n) at  ($ (N) + 2/\kk*(1,0) $);

  \coordinate (Amm1np1) at  ($ (N) + 1/\kk*(-1,1) $);
  \coordinate (Amnp1) at  ($ (N) + 1/\kk*(0,1) $);
  \coordinate (Amp1np1) at  ($ (N) + 1/\kk*(1,1) $);
  \coordinate (Amp2np1) at  ($ (N) + 1/\kk*(2,1) $);

  \node[color=\coli,below] (Nmn) at ($ (Nm) - 1/\kk*(1,0)$) {%
    {\scriptsize $n$}};
  \node[color=\coli,above] (Nmn) at ($ (Nm) + (0,0)$) {%
    {\scriptsize $n\!+\!1$}};
  \node[color=\coli,below] (Nmn) at ($ (Nm) + 1/\kk*(1,0)$) {%
    {\scriptsize $n\!+\!2$}};
  \node[color=\coli,above] (Nmn) at ($ (Nm) + 2/\kk*(1,0)$) {%
    {\scriptsize $n\!+\!3$}};
  \node[color=\coli,left] (Nmn) at ($ (Nn) + (0,0)$) {{\scriptsize $m$}};
  \node[color=\coli,left] (Nmn) at ($ (Nn) + 1/\kk*(0,1)$) {%
    {\scriptsize $m\!+\!1$}};

  \fill[color=\coli] (Amn) circle (1.5pt);
  \fill[color=\coli] (Amp1n) circle (1.5pt);
  \fill[color=\coli] (Amp2n) circle (1.5pt);
  \fill[color=\coli] (Amm1np1) circle (1.5pt);
  \fill[color=\colii] (Amnp1) circle (1.5pt);
  \fill[color=\colii] (Amp1np1) circle (1.5pt);
  \fill[color=\colii] (Amp2np1) circle (1.5pt);

  \draw[color=\coli,line width=0.7pt,->,>=latex] (Amm1np1) -- (Amnp1);
  \draw[color=\coli,line width=0.7pt,->,>=latex] (Amn) -- (Amnp1);

  \draw[color=\colii,line width=0.7pt,->,>=latex] (Amnp1) -- (Amp1np1);
  \draw[color=\coli,line width=0.7pt,->,>=latex] (Amp1n) -- (Amp1np1);

  \draw[color=\colii,line width=0.7pt,->,>=latex] (Amp1np1) -- (Amp2np1);
  \draw[color=\coli,line width=0.7pt,->,>=latex] (Amp2n) -- (Amp2np1);

\end{tikzpicture}
  }
  \caption[Double induction scheme ``by diagonal'']{%
    Double induction scheme ``by diagonal'' (see
    Lemma~\ref{l:double-induction-by-diagonal}).\\
    Let a predicate~$\PropPP(m,n)$.
    Assume the initializations~$\PropPP(0,n)$ and~$\PropPP(m,0)$ for all~$m$
    and~$n$ (symbolized by green dots along the two axes).
    Then, if~$\PropPP(m,n+1)$ and~$\PropPP(m+1,n)$ implies~$\PropPP(m+1,n+1)$
    (represented by green dots and arrows), then~$\PropPP(m,n)$ holds
    everywhere.
    It is proven by induction on~$m$, then on~$n$.
    Thus, fixing~$m$, one proves successively~$\PropPP(m+1,k)$, for all~$k$
    (red arrows and dots).}
  \label{f:double-induction-by-diagonal}
\end{figure}

\begin{lemma}[double induction by diagonal]
  \label{l:double-induction-by-diagonal}
  \mbox{}\hfill
  Let~$\PropPP$ be a predicate on~$\matN^2$.
  Then, we have
  \begin{align}
    \nonumber
    & \big( \forall m \in \matN,\quad \PropPP(m,0) \big) \CONJ
      \big( \forall n \in \matN,\quad \PropPP(0,n) \big) \CONJ \\
    \nonumber
    & \big( \forall m, n \in \matN,\quad
      \PropPP (m, n + 1) \CONJ \PropPP (m + 1, n)
      \IMPLIES \PropPP (m + 1, n + 1) \big) \\
    \label{e:double-induction-by-diagonal}
    & \qquad \IMPLIES \forall m, n \in \matN,\quad \PropPP (m, n).
  \end{align}
\end{lemma}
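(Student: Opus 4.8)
The plan is to reduce this two-dimensional statement to two nested ordinary inductions over~$\matN$, exactly following the scheme drawn in Figure~\ref{f:double-induction-by-diagonal}: an outer induction on~$m$, and, for each fixed~$m$, an inner induction on~$n$. Accordingly, I would introduce the auxiliary predicate $\PropQQm \eqdef (\forall n \in \matN,\ \PropPP(m, n))$ on~$\matN$, and prove $\forall m \in \matN,\ \PropQQm$ by induction on~$m$; this is manifestly equivalent to the desired conclusion $\forall m, n \in \matN,\ \PropPP(m, n)$.

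For the outer base case, $\PropQQ(0)$ is literally the second hypothesis $\forall n \in \matN,\ \PropPP(0, n)$, so nothing is to be done. For the outer inductive step, I would assume $\PropQQm$, that is $\forall n,\ \PropPP(m, n)$, and aim to establish $\PropQQ(m+1)$, that is $\forall n,\ \PropPP(m+1, n)$. This inner goal I would in turn prove by induction on~$n$. The inner base case $\PropPP(m+1, 0)$ is the first hypothesis $\forall m,\ \PropPP(m, 0)$ instantiated at~$m+1$. For the inner step, I would assume the inner induction hypothesis $\PropPP(m+1, n)$ and feed it, together with $\PropPP(m, n+1)$, into the third (diagonal) hypothesis $\PropPP(m, n+1) \CONJ \PropPP(m+1, n) \IMPLIES \PropPP(m+1, n+1)$ to conclude $\PropPP(m+1, n+1)$. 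The crucial point is that $\PropPP(m, n+1)$ is supplied not by the inner induction but by the outer induction hypothesis $\PropQQm$ instantiated at~$n+1$.

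The step I expect to require the most care is this very last combination: the diagonal hypothesis needs \emph{two} inputs, $\PropPP(m+1, n)$ and $\PropPP(m, n+1)$, and only the first of these is produced by the inner induction. The second must come from the outer hypothesis, which is why the two inductions must be genuinely nested in this order (outer on~$m$, inner on~$n$) and why the outer hypothesis must be carried as the \emph{full} statement $\forall n,\ \PropPP(m, n)$ rather than at a single value of~$n$. Once this bookkeeping is set up correctly, both inductions close without any computation, and the implication~\eqref{e:double-induction-by-diagonal} follows.
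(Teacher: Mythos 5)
Your proposal is correct and follows exactly the paper's own argument: the same auxiliary predicate $\PropQQm\eqdef[\forall n\in\matN,\ \PropPPmn]$, the same outer induction on~$m$ with base case given by the hypothesis $\forall n,\ \PropPP(0,n)$, and the same inner induction on~$n$ in which $\PropPP(m,n+1)$ is drawn from the outer hypothesis~$\PropQQm$ before applying the diagonal implication. Nothing is missing.
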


\begin{proof}
  Let~$H_0$, $H_1$, and~$H_2$ be the three hypotheses in the
  implication~\eqref{e:double-induction-by-diagonal}.\\
  For all $m\in\matN$, let $\PropQQm\eqdef[\forall n\in\matN,\ \PropPPmn]$.

  \proofparskip{Induction: $\PropQQ(0)$}
  Trivial (hypothesis~$H_1$).

  \proofparskip{Induction: $\PropQQm\Implies\PropQQ(m+1)$}
  Let~$m\in\matN$, assume that $\PropQQm$ holds.\\
  \proofpar{Induction: $\PropPP(m+1,0)$}
  The property $\PropPP(m+1,0)$ holds by hypothesis~$H_0$.\\
  \proofpar{Induction: $\PropPP(m+1,n)\Implies\PropPP(m+1,n+1)$}
  Let~$n\in\matN$, assume that $\PropPP(m+1,n)$ holds.\\
  Then, from $\PropQQm$, we also have $\PropPP(m,n+1)$, and thus, from~$H_2$,
  we have $\PropPP(m+1,n+1)$.\\
  This concludes the induction on~$n$, and we have for all $n\in\matN$,
  $\PropPP(m+1,n)$, {\ie}, $\PropQQ(m+1)$.

  \medskip\noindent
  This concludes the induction on~$m$, and we have for all $m\in\matN$,
  $\PropQQm$.

  \medskip\noindent
  Therefore, we have $\forall m,n\in\matN$, $\PropPPmn$.
\end{proof}

\begin{remark}
  The strong induction principle stipulates that a predicate~$\PropPP$
  on~$\matN$ holds everywhere if it satisfies the property
  \begin{equation*}
    \forall n \in \matN,\quad
    \left( \forall n_1 < n,\quad \PropPP (n_1) \right)
    \IMPLIES \PropPP (n).
  \end{equation*}
  Note that there is no need to assume the base case~$\PropPP(0)$ since it
  follows from the previous hypothesis.
  Indeed, for all $n_1\in\matN$, $n_1<0$, which is false, implies anything,
  such as $\PropPP(n_1)$.

  In the next lemma, the condition
  $m_1\leq m\Conj n_1\leq n\Conj (m_1,n_1)\neq(m,n)$ can be reformulated as
  $(m_1<m\Conj n_1\leq n)\Disj(m_1=m\Conj n_1<n)$.
  As in the strong induction for a single integer, the initialization
  steps~$\PropPP(m,0)$ and~$\PropPP(0,n)$ are not needed for the double strong
  induction.
\end{remark}

\begin{figure}[htb]
  \centering
  \resizebox{0.7\linewidth}{!}{
    \begin{tikzpicture}

  \draw[step=.5cm, very thin, black!30] (0,0) grid (4,2) ;
  \def\colo{magenta}
  \def\coli{darkgreen}
  \def\colii{red}
  \def\coliii{blue}

  \def\kk{2}
  \def\epsi{0.45}
  \def\epsii{0.35}

  \coordinate (A) at (0,0);
  \coordinate (B) at  ($ (A) + (4,0) $);
  \coordinate (C) at  ($ (A) + (0,2) $);

  \coordinate (Nm) at  ($ (A) + (1.5,0) $);
  \coordinate (Nn) at  ($ (A) + (0,1) $);
  \coordinate (N) at  ($ (Nm) + (Nn) - (A) $);

  \draw ($ (B) + (\epsii,0) $) node[above=2pt] {$\matN$};
  \draw ($ (C) + (0,\epsii) $) node[left=2pt] {$\matN$};

  \draw[line width=1.5pt,->,>=latex] (A) -- ($ (B) + (\epsi,0) $);
  \draw[line width=1.5pt,->,>=latex] (A) -- ($ (C) + (0,\epsi) $);

  \foreach \n in {0,1,...,4}  {
    \foreach \m in {0,1,...,2}  {
      \coordinate (Axy) at  ($ (A) + 1/\kk*(\n,\m) $);
    \fill[color=\coli] (Axy) circle (1.5pt);
    }
  }
  \foreach \n in {0,1,...,3}  {
    \coordinate (Axy) at  ($ (A) + 1/\kk*(\n,3) $);
    \fill[color=\coli] (Axy) circle (1.5pt);
  }

  \coordinate (Anm) at  ($ (N) + 1/\kk*(1,1) $);
  \coordinate (Anm1m) at  ($ (N) + 1/\kk*(0,1) $);
  \coordinate (Anmm1) at  ($ (N) + 1/\kk*(1,0) $);
  \coordinate (Anm1mm1) at  ($ (N) + 1/\kk*(0,0) $);
  \fill[color=\colii] (Anm) circle (2pt);

  \node[color=\coli,below] (Nmn) at ($ (Nm) - 1/\kk*(1,0)$) {%
    {\scriptsize $n\!-\!2$}};
  \node[color=\coli,above] (Nmn) at ($ (Nm) + (0,0)$) {%
    {\scriptsize $n\!-\!1$}};
  \node[color=\coli,below] (Nmn) at ($ (Nm) + 1/\kk*(1,0)$) {%
    {\scriptsize $n$}};
  \node[color=\coli,left] (Nmn) at ($ (Nn) + (0,0)$) {{\scriptsize $m\!-\!1$}};
  \node[color=\coli,left] (Nmn) at ($ (Nn) + 1/\kk*(0,1)$) {%
    {\scriptsize $m$}};


\end{tikzpicture}
  }
  \caption[Strong double induction scheme]{%
    Strong double induction scheme (see
    Lemma~\ref{l:strong-double-induction}).\\
    The induction hypothesis is as follows: if the proposition holds on all the
    green nodes, then it holds on the red one.}
  \label{f:strong-double-induction}
\end{figure}

\begin{lemma}[strong double induction]
  \label{l:strong-double-induction}
  \mbox{}\hfill
  Let~$\PropPP$ be a predicate on~$\matN^2$.
  Then, we have
  \begin{align}
    \nonumber
    & \big[ \forall m, n \in \matN,\quad
      \big( \forall m_1 \leq m,\; \forall n_1 \leq n,\;
      (m_1, n_1) \neq (m, n) \Implies \PropPP (m_1, n_1) \big)
      \IMPLIES \PropPP (m, n) \big] \\
    \label{e:strong-double-induction}
    & \qquad \IMPLIES \forall m, n \in \matN,\quad \PropPP (m, n).
  \end{align}
\end{lemma}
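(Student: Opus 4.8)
The plan is to deduce this double strong induction from the ordinary (single-variable) strong induction principle recalled in the preceding remark, applied twice in a nested fashion. First I would name the hypothesis in~\eqref{e:strong-double-induction}, call it~$H$, and introduce the auxiliary predicate $\PropQQm\eqdef[\forall n\in\matN,\ \PropPPmn]$ on~$\matN$, exactly as in the proof of the double induction by diagonal (Lemma~\ref{l:double-induction-by-diagonal}). The goal $\forall m,n\in\matN$, $\PropPPmn$ is then equivalent to $\forall m\in\matN$, $\PropQQm$, which I would establish by strong induction on~$m$.

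For the outer step, I would fix~$m$ and assume the strong induction hypothesis $\forall m_1<m$, $\PropQQ(m_1)$; the task becomes proving $\PropQQm$, {\ie} $\forall n\in\matN$, $\PropPPmn$, which I would in turn prove by a second, inner strong induction, this time on~$n$. Fixing~$n$ and assuming $\forall n_1<n$, $\PropPP(m,n_1)$, it then remains to produce $\PropPPmn$ by invoking~$H$ at the point~$(m,n)$.

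The only real content is checking the premise of~$H$, namely that $\PropPP(m_1,n_1)$ holds for every $(m_1,n_1)$ with $m_1\leq m$, $n_1\leq n$ and $(m_1,n_1)\neq(m,n)$. Here I would use the reformulation highlighted in the remark preceding the lemma, which rewrites this condition as $(m_1<m\Conj n_1\leq n)\Disj(m_1=m\Conj n_1<n)$, and split into these two cases. In the first case $m_1<m$, so $\PropQQ(m_1)$ holds by the outer induction hypothesis and yields $\PropPP(m_1,n_1)$ for this (indeed any) $n_1$. In the second case $m_1=m$ and $n_1<n$, so $\PropPP(m,n_1)$ holds directly by the inner induction hypothesis. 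Thus the premise of~$H$ is satisfied, $H$ delivers $\PropPPmn$, the inner induction gives $\PropQQm$, and the outer induction gives $\forall m\in\matN$, $\PropQQm$, which is the desired conclusion. I do not anticipate any genuine obstacle: the subtle point to get right is simply this case analysis coming from the reformulation, together with the observation (as noted in the remark) that no separate base cases $\PropPP(m,0)$ or $\PropPP(0,n)$ are required, since for $m=0$ (resp.\ $n=0$) the corresponding strong induction hypotheses are vacuous.
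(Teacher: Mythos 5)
Your proposal is correct and follows essentially the same route as the paper's proof: a nested strong induction (outer on~$m$ via the auxiliary predicate~$\PropQQ$, inner on~$n$), with the premise of the hypothesis verified by the same case split $m_1<m$ versus $m_1=m\Conj n_1<n$. No gaps.
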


\begin{proof}
  For all $m\in\matN$, let $\PropQQm\eqdef[\forall n\in\matN,\ \PropPPmn]$.

  \proofparskip{Strong induction on~$m$:
    $(\forall m_1<m,\;\PropQQ(m_1))\Implies\PropQQm$}\\
  Let~$m\in\matN$, assume that $H_1: \forall m_1<m,\;\PropQQ(m_1)$ holds.
  Let us show that~$\PropQQm$ holds.

  \proofparskip{Strong induction on~$n$:
    $(\forall n_1<n,\;\PropPP(m,n_1))\Implies\PropPPmn$}\\
  Let~$n\in\matN$, assume that $H_2: \forall n_1<n,\;\PropPP(m,n_1)$ holds.
  Let us show that~$\PropPPmn$ holds.\\
  Let~$m_1\leq m$ and~$n_1\leq n$, assume that $(m_1,n_1)\neq(m,n)$.
  \proofpar{Case $m_1<m$}
  Then, $\PropPP(m_1,n_1)$ holds by~$H_1$.
  \proofpar{Case $m_1=m$}
  Then, $n_1<n$, and $\PropPP(m_1,n_1)=\PropPP(m,n_1)$ holds by~$H_2$.
  Thus, in both cases, $\PropPP(m_1,n_1)$ holds, and from the hypothesis
  in~\eqref{e:strong-double-induction}, we have~$\PropPPmn$.\\
  This concludes the strong induction on~$n$, and we have for all $n\in\matN$,
  $\PropPP(m,n)$, {\ie}, $\PropQQm$.

  \medskip\noindent
  This concludes the strong induction on~$m$, and we have for all $m\in\matN$,
  $\PropQQm$.

  \medskip\noindent
  Therefore, we have $\forall m,n\in\matN$, $\PropPPmn$.
\end{proof}

\begin{remark}
  \mbox{}\\
  Note that, for $n,p\in\matN$, we use the convention $n-p\eqdef0$ when $n<p$
  (to have a result in~$\matN$).
\end{remark}

\begin{definition}[binomial coefficient]
  \label{d:binom-coef}
  \mbox{}\\
  Let $n,p\in\matN$.
  Then, the {\em binomial coefficient} is defined as
  \begin{equation}
    \label{e:binom-coef}
    \binomnp \eqdef \frac{\fact{n}}{\fact{p}\;\fact{(n-p)}}
    \mbox{ when } p \leq n,
    \AND \binomnp \eqdef 0 \mbox{ otherwise}.
  \end{equation}
\end{definition}

\begin{remark}
  Note that the formula with factorials on the left of~\eqref{e:binom-coef}
  can be extended to the irregular case when $n<p$ (and yields the value~0)
  provided the use of the Euclidean division.
  Indeed, in this case $\binomnp=\frac{\fact{n}}{\fact{p}\fact{0}}$, whose
  Euclidean division is~0 because $\fact{n}<\fact{p}$.

    In the Pascal triangle formula~\eqref{e:prop-binom-coef-3}, the regular
    case $1\leq p\leq n-1$
    (with $n\geq1$) is extended with the convention
    of~\eqref{e:binom-coef}
    and the remark that $\binom{0-1}{p}=\binom{0}{p}$, for
    $p\in\matN$.
\end{remark}

\begin{lemma}[properties of the binomial coefficient]
  \label{l:prop-binom-coef}
  \mbox{}\hfill
  Let $n,p\in\matN$.
  Then, we have
  \begin{eqnarray}
    \label{e:prop-binom-coef-0}
    && \binom{n}{0} = \binom{n}{n} = 1, \\
    \label{e:prop-binom-coef-1}
    n \geq 1  &\IMPLIES& \binom{n}{1} = \binom{n}{n-1} = n ,\\
    \label{e:prop-binom-coef-2}
    p \leq n  &\IMPLIES& \binom{n}{n-p} = \binom{n}{p},\\
    \label{e:prop-binom-coef-3}
    (n \neq 0 \Disj p \neq 1) \CONJ p \neq 0 &\IMPLIES&
      \binomnp = \binom{n-1}{p-1} +\binom{n-1}{p}, \\
    \label{e:prop-binom-coef-4}
    p \geq 1 &\IMPLIES& \sum_{j=0}^n\binom{j+p-1}{p-1} = \binom{n+p}{p}, \\
    \label{e:prop-binom-coef-5}
    && \binom{n}{p} \in \matN.
  \end{eqnarray}
\end{lemma}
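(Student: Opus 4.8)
The plan is to establish the six identities in the order listed: the first three directly from Definition~\ref{d:binom-coef} by unfolding the factorial formula, and the last three by induction, each relying on the identities proved before it.

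First, \eqref{e:prop-binom-coef-0}, \eqref{e:prop-binom-coef-1}, and \eqref{e:prop-binom-coef-2} are immediate from the ``regular'' branch of~\eqref{e:binom-coef}, since in every case the hypothesis $p\leq n$ holds. For \eqref{e:prop-binom-coef-0}, $\binom{n}{0}=\frac{\fact{n}}{\fact{0}\,\fact{n}}=1$ and $\binom{n}{n}=\frac{\fact{n}}{\fact{n}\,\fact{0}}=1$. For \eqref{e:prop-binom-coef-1} with $n\geq1$, both $\binom{n}{1}$ and $\binom{n}{n-1}$ unfold to $\frac{\fact{n}}{(n-1)!}=n$. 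For \eqref{e:prop-binom-coef-2} with $p\leq n$, one writes $\binom{n}{n-p}=\frac{\fact{n}}{\fact{(n-p)}\,\fact{(n-(n-p))}}$ and uses $n-(n-p)=p$ to recover $\binomnp$.

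The core of the lemma is the Pascal rule \eqref{e:prop-binom-coef-3}, and this is where I expect the main obstacle, namely the boundary bookkeeping. Since $p\neq0$ we have $p\geq1$, and I would split on the position of $p$ relative to $n$. When $n=0$ the side condition forces $p\neq1$, hence $p\geq2$; here the left-hand side is $0$, and the right-hand side is $\binom{0-1}{p-1}+\binom{0-1}{p}$, which by the convention $n-p\eqdef0$ (for $n<p$) and the remark that $\binom{0-1}{p}=\binom{0}{p}$ equals $\binom{0}{p-1}+\binom{0}{p}=0+0$. When $n\geq1$ I treat the three subcases $p>n$ (both sides $0$, since $p-1>n-1$ and $p>n-1$), $p=n$ (both sides $1$, by \eqref{e:prop-binom-coef-0}), and the single \emph{regular} subcase $1\leq p\leq n-1$ (which forces $n\geq2$): there I put $\binom{n-1}{p-1}$ and $\binom{n-1}{p}$ over the common denominator $\fact{p}\,\fact{(n-p)}$ and use $p+(n-p)=n$ together with $p\cdot(p-1)!=\fact{p}$ and $(n-p)\cdot(n-1-p)!=\fact{(n-p)}$ to recombine the numerator into $\fact{n}$, yielding $\binomnp$. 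The side condition $(n\neq0\Disj p\neq1)\CONJ p\neq0$ is precisely what rules out the one configuration ($n=0$, $p=1$) where the claimed identity would fail.

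Finally, \eqref{e:prop-binom-coef-4} follows by induction on $n$ at fixed $p\geq1$: the base $n=0$ reads $\binom{p-1}{p-1}=\binom{p}{p}=1$ by \eqref{e:prop-binom-coef-0}, and the step rewrites $\sum_{j=0}^{n+1}\binom{j+p-1}{p-1}=\binom{n+p}{p}+\binom{n+p}{p-1}$, which equals $\binom{n+p+1}{p}$ by the instance of \eqref{e:prop-binom-coef-3} with upper index $n+p+1\geq2$ and lower index $p\geq1$ (whose side condition is thus satisfied). The integrality \eqref{e:prop-binom-coef-5} then follows by induction on $n$, with the statement quantified over all $p$: for $p=0$ use \eqref{e:prop-binom-coef-0}, for $p>n$ the value is $0\in\matN$, and for $1\leq p\leq n$ the side condition of \eqref{e:prop-binom-coef-3} holds (as $n\geq1$ and $p\geq1$), so $\binomnp=\binom{n-1}{p-1}+\binom{n-1}{p}$ is a sum of two elements of~$\matN$ by the induction hypothesis at~$n-1$.
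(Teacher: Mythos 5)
Your proposal is correct and follows essentially the same route as the paper: the first three identities by unfolding the factorial formula, Pascal's rule by a case split that isolates the single regular configuration $1\leq p\leq n-1$ and disposes of the degenerate ones via the convention $n-p\eqdef 0$ and $\binom{0-1}{p}=\binom{0}{p}$, then induction on~$n$ for \eqref{e:prop-binom-coef-4} and \eqref{e:prop-binom-coef-5} using \eqref{e:prop-binom-coef-3}. The only cosmetic differences are that the paper organizes the Pascal cases as $p<n$, $p=n$, $n<p$ rather than splitting first on $n=0$, and phrases the integrality argument as a strong induction (though, as in your version, only the value at $n-1$ is actually used).
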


\begin{proof}
  \mbox{}\\
  Properties~\eqref{e:prop-binom-coef-0}, \eqref{e:prop-binom-coef-1}  and
  \eqref{e:prop-binom-coef-2} are direct consequences of
  Definition~\thref{d:binom-coef}.

  \proofparskip{\eqref{e:prop-binom-coef-3}}
  \proofpar{Case $p<n$}
  Then, $n\geq1$ and $1\leq p\leq n-1$, and all binomials are regular.
  Thus, from
  Definition~\thref{d:binom-coef},
  we have
  \[
  \binom{n-1}{p-1} +\binom{n-1}{p} =
  \frac{\fact{(n-1)}}{\fact{(p-1)}\;\fact{(n-p)}} +
  \frac{\fact{(n-1)}}{\fact{p}\;\fact{(n-p-1)}} =
  \frac{\fact{(n-1)} (p + (n-p))}{\fact{p}\;\fact{(n-p)}}  =
  \binomnp.
  \]
  \proofpar{Case $p=n$}
  Then, $n\geq1$, $\binom{n-1}{n}=0$, and from~\eqref{e:prop-binom-coef-0}, the
  equality holds.\\
  \proofpar{Case $n<p$}
  Then, $p\geq2$, all three binomials are equal to~0, and the equality holds.

  \proofparskip{\eqref{e:prop-binom-coef-4}}
  Let $p\geq1$.
  For all $n\in\matN$, let $\PropPP(n)\eqdef
  \left[\sum_{j=0}^n\binom{j+p-1}{p-1}=\binom{n+p}{p}\right]$.\\
  \proofpar{Induction: $\PropPP(0)$}
  Direct consequence of~\eqref{e:prop-binom-coef-0}.\\
  \proofpar{Induction: $\PropPP(n)\Implies\PropPP(n+1)$}
  Let $n\in\matN$, assume that $\PropPP(n)$ holds.
  Then, from
  \assume{abelian monoid properties of~$\matN$}, and
  \eqref{e:prop-binom-coef-3} (with $(n+p+1,p)\neq(0,1)$ and $p\neq0$),
  we have
  \[
    \sum_{j = 0}^{n + 1} \binom{j + p - 1}{p - 1}
    = \sum_{j = 0}^n \binom{j + p - 1}{p - 1} + \binom{n + p}{p - 1}
    = \binom{n + p}{p} + \binom{n + p}{p - 1}
    = \binom{n + 1 + p}{p}.
  \]
  This concludes the induction on~$n$, and we have for all $n\in\matN$,
  $\PropPP(n)$.

  \proofparskip{\eqref{e:prop-binom-coef-5}}
  Let $\PropPP(n)\eqdef\left[\forall p\in\matN,\;\binom{n}{p}\in\matN\right]$.
  \proofpar{Strong induction on~$n$:
    $(\forall n_1<n,\;\PropPP(n_1))\Implies\PropPP(n)$}\\
  \proofpar{Case $n=0$ and $p=1$}
  Then, from
  Definition~\thref{d:binom-coef},
  we have $\binom{n}{p}=0\in\matN$.\\
  \proofpar{Case $p=0$}
  Then, from~\eqref{e:prop-binom-coef-0}, we have
  $\binom{n}{p}=1\in\matN$.\\
  \proofpar{Case $(n\neq0\Disj p\neq1)$ and $p\neq0$}
  Let~$n_1<n$, assume that $\binom{n_1}{p}\in\matN$ for all $p\in\matN$.\\
  Then, from~\eqref{e:prop-binom-coef-3}, and
  \assume{monoid properties of $\matN$},
  we have $\binom{n}{p}=\binom{n-1}{p-1}+\binom{n-1}{p}\in\matN$.\\
  This concludes the strong induction on~$n$, and we have for all $n\in\matN$,
  $\PropPP(n)$.
\end{proof}

\begin{definition}[canonic families]
  \label{d:canon-fam}
  \mbox{}\hfill
  Let~$d\geq1$.\\
  The notation~$\zzero$, resp.~$\oone$, represents the constant family of value~0,
  resp.~1, either in~$\matNd$ or in~$\matRd$.\\
  For all $i\in[1..d]$, $\ee_i$ denotes the element of the canonical ``basis''
  family in $\matNd$ or in $\matRd$, such that $(\ee_i)_j=\kron{i}{j}$ for all
  $j\in[1..d]$.\\
  Let~$n\in\matN$.
  The notation $\famvert{\vv}{n,d}\eqdef(\vv_0,\ldots,\vv_n)$ represents a
  family of $n+1$ points in~$\matRd$.
  When $n\eqdef d$, it is simply written $\famvert{\vv}{d}$.
\end{definition}

\begin{lemma}[circular permutation]
  \label{l:circ-permut}
  \mbox{}\\
  Let~$d\in\matN$, and $i\in[0..d]$.
  Let~$\permcd{i}$ be the function defined by
  \begin{equation}
    \label{e:circ-permut}
    \forall j \in [0..d],\qquad
    \permcd{i} (j) \eqdef
    \left\{
      \begin{array}{ll}
        j + i + 1 & \mbox{when } j < d - i,\\
        j - (d - i) & \mbox{otherwise}.
      \end{array}
    \right.
  \end{equation}
  Then, $\permcd{i}$ is bijective from~$[0..d]$ onto~$[0..d]$, and we have
  $\permcd{i}(d)=i$ and $d=\permcdinv{i}(i)$.

  In particular, we have $\permcd{d}=\identity{[0..d]}$.
\end{lemma}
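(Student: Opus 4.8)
The plan is to recognize that $\permcd{i}$ is simply the cyclic shift by $i+1$ modulo $d+1$: for every $j\in[0..d]$ one has $\permcd{i}(j)=(j+i+1)\bmod(d+1)$. Indeed, when $j<d-i$ we get $i+1\leq j+i+1\leq d$, so $j+i+1$ is already its own residue; and when $j\geq d-i$ we get $d+1\leq j+i+1\leq 2d+1$, so subtracting one copy of $d+1$ gives $j+i+1-(d+1)=j-(d-i)$, matching the second branch of~\eqref{e:circ-permut}. First I would record the resulting range bounds explicitly, since a formal proof needs exactly these: in the first branch $j$ runs over $[0..d-i-1]$ and $\permcd{i}(j)=j+i+1$ runs over $[i+1..d]$, while in the second branch $j$ runs over $[d-i..d]$ and $\permcd{i}(j)=j-(d-i)$ runs over $[0..i]$. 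In particular $\permcd{i}$ does take values in $[0..d]$, hence is a genuine self-map of $[0..d]$.

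For bijectivity I would exploit that these two subintervals split both source and target into complementary pieces. The domain decomposes as $[0..d]=[0..d-i-1]\uplus[d-i..d]$ and the codomain as $[0..d]=[i+1..d]\uplus[0..i]$. On each piece $\permcd{i}$ is a translation, hence strictly increasing, hence a bijection of $[0..d-i-1]$ onto $[i+1..d]$ and of $[d-i..d]$ onto $[0..i]$ respectively. Since the two image intervals are disjoint and cover $[0..d]$, the assembled map $\permcd{i}$ is a bijection from $[0..d]$ onto $[0..d]$. (This route gives injectivity and surjectivity simultaneously; alternatively one could prove injectivity by the same case analysis and invoke that an injective self-map of the finite set $[0..d]$ is bijective, among the assumed facts on injective and bijective functions.)

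The remaining assertions are then immediate. Since $i\geq0$ we have $d\geq d-i$, so $d$ lies in the ``otherwise'' branch and $\permcd{i}(d)=d-(d-i)=i$; applying $\permcdinv{i}$, which exists by the bijectivity just established, yields $d=\permcdinv{i}(i)$. Finally, when $i=d$ the threshold $d-i$ equals~$0$, so no $j\in[0..d]$ satisfies $j<d-i$ and every $j$ falls in the ``otherwise'' branch, giving $\permcd{d}(j)=j-(d-d)=j$ for all $j$, that is $\permcd{d}=\identity{[0..d]}$.

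The main obstacle is purely bookkeeping: pinning down the two-sided integer inequalities so that the two image intervals $[i+1..d]$ and $[0..i]$ are shown to be disjoint and exhaustive, and handling the degenerate endpoints (the empty first piece when $i=d$, and $d=0$) without gaps. There is no genuine mathematical difficulty, the entire content being that a shift by $i+1$ permutes the residues modulo $d+1$.
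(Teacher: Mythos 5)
Your proof is correct and follows essentially the same route as the paper's: decompose $[0..d]$ into $[0..d-i-1]\uplus[d-i..d]$, observe that $\permcd{i}$ is an increasing translation on each piece with disjoint images $[i+1..d]$ and $[0..i]$ covering $[0..d]$, and conclude bijectivity, with the identities $\permcd{i}(d)=i$ and $\permcd{d}=\identity{[0..d]}$ read off from the second branch. The modular-arithmetic framing at the start is a pleasant addition but not a substantive departure.
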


\begin{proof}
  From
  \assume{monoid properties of~$\matN$},
  \assume{the fact that monotony on disjoint parts implies injectivity}, and
  \assume{the fact that injectivity and cardinal equality imply bijectivity},
  the permutation~$\permcd{i}$ is increasing from~$[0..d-i-1]$ to~$[i+1..d]$
  and from~$[d-i..d]$ to~$[0..i]$,
  both $[0..d-i-1]\cap[d-i..d]$ and $[i+1..d]\cap[0..i]$ are empty, and
  $[0..d]=[0..d-i-1]\cup[d-i..d]$ equals~$[i+1..d]\cup[0..i]$,
  thus it is bijective from~$[0..d]$ onto itself.

  If~$j=d$, then $j\geq d-i$, and we have $\permcd{i}(d)=d-(d-i)=i$.

  If $i=d$, the set $\{j<d-d\}$ is empty, and we have
  $\permcd{d}(j)=j-(d-d)=j$ for all $j\in[0..d]$.
\end{proof}

\begin{lemma}[transposition]
  \label{l:trsp}
  \mbox{}\hfill
  Let~$d\in\matN$, and $i\in[0..d]$.
  Let~$\permtrspd{i}$ be the function defined by
  \begin{equation}
    \label{e:trsp}
    \forall j \in [0..d],\qquad
    \permtrspd{i} (j) \eqdef
    \left\{
      \begin{array}{ll}
        d & \mbox{when } j \eqdef i,\\
        i & \mbox{when } j \eqdef d,\\
        j & \mbox{otherwise}.
      \end{array}
    \right.
  \end{equation}
  Then, $\permtrspd{i}$ is involutive and bijective from~$[0..d]$
  onto~$[0..d]$.

  In particular, we have $\permtrspd{d}=\identity{[0..d]}$.
\end{lemma}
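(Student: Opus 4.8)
The plan is to obtain all three assertions from a single computation of the self-composition $\permtrspd{i}\circ\permtrspd{i}$, since an involution is automatically its own inverse and hence bijective. Before that, I would briefly check that~\eqref{e:trsp} really defines a function: the first two clauses overlap only when $i=d$, in which case they both prescribe the value $d=i$, so there is no inconsistency, and for every $j\in[0..d]$ exactly one value in $[0..d]$ is assigned.

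The core step is to prove involutivity by a case analysis on $j\in[0..d]$, showing $\permtrspd{i}(\permtrspd{i}(j))=j$ in each case. If $j=i$, then $\permtrspd{i}(i)=d$ and $\permtrspd{i}(d)=i=j$; if $j=d$, then $\permtrspd{i}(d)=i$ and $\permtrspd{i}(i)=d=j$; and if $j\neq i$ and $j\neq d$, then $\permtrspd{i}(j)=j$, and since $j$ is still neither $i$ nor $d$, applying the map again leaves it fixed, giving $j$. The only thing to watch is that the ``otherwise'' branch is used twice in the last case, which is legitimate precisely because $j\notin\{i,d\}$. This establishes $\permtrspd{i}\circ\permtrspd{i}=\identity{[0..d]}$.

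From this identity I would conclude, using the \assume{basic results about bijective functions}, that $\permtrspd{i}$ admits $\permtrspd{i}$ itself as a two-sided inverse, hence is bijective from $[0..d]$ onto $[0..d]$ and is involutive by definition. Finally, for the particular case $i=d$, I would read off directly from~\eqref{e:trsp} that each of the three branches sends $j$ to $j$ (the first two both give $d$ when $j\in\{d\}$, and the third fixes every other $j$), so $\permtrspd{d}=\identity{[0..d]}$. There is no real obstacle here; the only point requiring a little care is the overlap of the defining clauses when $i=d$, which I would dispatch explicitly to keep the argument airtight.
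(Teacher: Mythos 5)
Your proof is correct and follows essentially the same route as the paper: establish $\permtrspd{i}\circ\permtrspd{i}=\identity{[0..d]}$ (together with $\permtrspd{i}([0..d])\subset[0..d]$), deduce involutivity and bijectivity, and check the case $i=d$ directly. Your version merely spells out the case analysis and the clause-overlap check that the paper leaves implicit.
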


\begin{proof}
  From
  \assume{monoid properties of~$\matN$},
  we have $\permtrspd{i}([0..d])\subset[0..d]$, and
  $\permtrspd{i}\circ\permtrspd{i}=\identity{[0..d]}$,
  thus it is involutive and bijective from~$[0..d]$ onto itself.

  If $i=d$, we have $\permtrspd{d}(d)=d$, and, for all $j\in[0..d-1]$,
  $\permtrspd{d}(j)=j$, hence the result.
\end{proof}

\begin{lemma}[jump enumeration]
  \label{l:jump-enum}
  \mbox{}\\
  Let~$d\in\matN$, and $i\in[0..d+1]$.
  Let~$\thetinjd{i}$ be the function defined by
  \begin{equation}
    \label{e:jump-enum}
    \forall j \in [0..d],\qquad
    \thetinjd{i} (j) \eqdef
    \left\{
      \begin{array}{ll}
        j & \mbox{when } j < i,\\
        j + 1 & \mbox{otherwise}.
      \end{array}
    \right.
  \end{equation}
  Then, $\thetinjd{i}$ is injective from~$[0..d]$ to~$[0..d+1]$, and we have
  $\thetinjd{i}([0..d])=[0..d+1]\setminus\{i\}$.

  In particular, we have $\thetinjd{0}=\Identity_{[0..d]}+1$ and
  $\thetinjd{d+1}=\Identity_{[0..d]}$.
\end{lemma}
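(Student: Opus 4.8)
The plan is to prove the three assertions in turn, each by an elementary case analysis driven by the piecewise definition~\eqref{e:jump-enum} and relying only on the ordered monoid structure of~$\matN$. As a preliminary sanity check I would first confirm well-definedness, namely that $\thetinjd{i}$ really lands in~$[0..d+1]$: when $j < i$ one has $\thetinjd{i}(j) = j \leq d$, and when $j \geq i$ one has $\thetinjd{i}(j) = j + 1 \leq d+1$ since $j \leq d$. So the codomain annotation is legitimate.

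For injectivity, I would take $j, \jp \in [0..d]$ with $\thetinjd{i}(j) = \thetinjd{i}(\jp)$ and split according to the positions of~$j$ and~$\jp$ relative to~$i$. If both are below~$i$, or both are at least~$i$, then equality of the images gives $j = \jp$ at once. The only genuine work is the mixed case, say $j < i \leq \jp$: there $\thetinjd{i}(j) = j < i$ while $\thetinjd{i}(\jp) = \jp + 1 \geq i + 1$, so the two images cannot be equal and this case is vacuous. Hence $\thetinjd{i}$ is injective.

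For the image, I would establish the two inclusions forming $[0..d+1] \setminus \{i\}$ separately. That $i$ is omitted follows from the same dichotomy: $\thetinjd{i}(j) < i$ whenever $j < i$, and $\thetinjd{i}(j) > i$ whenever $j \geq i$, so $i \nin \thetinjd{i}([0..d])$. Conversely, given $m \in [0..d+1]$ with $m \neq i$, I distinguish $m < i$ — in which case $m \leq i-1 \leq d$, so $m \in [0..d]$ and $\thetinjd{i}(m) = m$ — and $m > i$ — in which case $m - 1 \geq i \geq 0$ and $m - 1 \leq d$, so $m-1 \in [0..d]$ and $\thetinjd{i}(m-1) = m$. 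This yields the reverse inclusion and hence the equality $\thetinjd{i}([0..d]) = [0..d+1] \setminus \{i\}$. The two special cases are then immediate: for $i = 0$ no index satisfies $j < 0$, so $\thetinjd{0}$ is $j \mapsto j+1$, i.e.\ $\Identity_{[0..d]} + 1$; for $i = d+1$ every $j \in [0..d]$ satisfies $j \leq d < d+1 = i$, so $\thetinjd{d+1} = \Identity_{[0..d]}$.

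There is no real conceptual obstacle here; the only point demanding care is the index bookkeeping in the surjectivity direction, where one must verify that the shifted index ($m$ or $m-1$) genuinely lies in~$[0..d]$. This is precisely where the hypothesis $i \in [0..d+1]$ (rather than the tighter $i \in [0..d]$) is consumed, through the estimate $m \leq i - 1 \leq d$ in the sub-case $m < i$.
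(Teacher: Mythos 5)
Your proof is correct and follows essentially the same route as the paper: injectivity by a case split on the positions of the two arguments relative to~$i$, and the image computed via the dichotomy $j<i$ versus $j\geq i$ (the paper phrases this as computing the images of the two sub-intervals $[0..i-1]$ and $[i..d]$, whereas you argue element-wise with explicit preimages, but the content is identical). Your closing remark correctly identifies where the hypothesis $i\in[0..d+1]$ is used.
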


\begin{proof}
  \proofpar{Injectivity}
  Let $j,k\in[0..d]$.
  Assume that $r\eqdef\thetinjd{i}(j)=\thetinjd{i}(k)$.\\
  \proofpar{Case $i\leq j,k$}
  Then $r=j+1=k+1$, and thus, from
  \assume{monoid properties of~$\matN$}, $j=k$.\\
  \proofpar{Case $j<i\leq k$}
  Then $r=j=k+1$, which contradicts $j<k$.\\
  \proofpar{Case $k<i\leq j$}
  Then $r=j+1=k$, which contradicts $k<j$.\\
  \proofpar{Case $j,k<i$}
  Then $r=j=k$.\\
  Thus, from
  \assume{the definition of injectivity},
  we have $j=k$ in all cases, and $\thetinjd{i}$ is injective.

  \proofparskip{Image}
  Direct consequence of
  \assume{monotony of image}, and
  \assume{monoid properties of~$\matN$},
  with $[0..d]=[0..i-1]\uplus[i..d]$, $\thetinjd{i}([0..i-1])=[0..i-1]$,
  $\thetinjd{i}([i..d])=[i+1..d+1]$, and
  $[0..i-1]\cup[i+1..d+1]$ equals $[0..d+1]\setminus\{i\}$.

  \proofparskip{Cases $i=0$ and $i=d$}
  Direct consequence of
  \assume{monoid properties of~$\matN$}.
\end{proof}

\section{Complements on monoids}
\label{s:compl-mon}

\begin{lemma}[image of ker is included in ker]
  \label{l:im-ker-incl-ker}
  \mbox{}\hfill
  Let $E,F$ be sets, and $G$ be a monoid.\\
  Let $f:\ArEF$ and $g:\Arrow{F}{G}$ be functions.
  Then, we have
  \begin{equation}
    \label{e:im-ker-incl-ker}
    f (\Ker{g \circ f}) \subset \Ker{g}.
  \end{equation}
\end{lemma}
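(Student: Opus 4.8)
The statement is a direct unfolding of the definition of kernel for a function valued in a monoid. The plan is to recall (Definition~\thref{LM-d:kernel}) that for any function $h$ valued in a monoid $G$, the kernel $\Ker{h}$ is the preimage of the neutral element of~$G$; thus $\Ker{g\circ f}=\{x\in E\st g(f(x))=0_G\}$ and $\Ker{g}=\{y\in F\st g(y)=0_G\}$, where $0_G$ denotes the neutral element of~$G$. The inclusion~\eqref{e:im-ker-incl-ker} is then an elementwise chase.

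First I would take an arbitrary $y\in f(\Ker{g\circ f})$. By definition of the direct image, there exists $x\in\Ker{g\circ f}$ such that $y=f(x)$. From $x\in\Ker{g\circ f}$ we get $(g\circ f)(x)=0_G$, that is $g(f(x))=0_G$. Substituting $y=f(x)$ yields $g(y)=0_G$, which is exactly the condition $y\in\Ker{g}$.

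Since $y$ was arbitrary in $f(\Ker{g\circ f})$, this establishes the desired inclusion $f(\Ker{g\circ f})\subset\Ker{g}$. There is no real obstacle here: the proof is a one-line substitution using the definition of composition and of the kernel as a preimage of the monoid's neutral element. The only point meriting a word of care is that the \emph{same} neutral element $0_G$ of~$G$ governs both $\Ker{g\circ f}$ and $\Ker{g}$, so the equality $g(f(x))=0_G$ transfers verbatim to $g(y)=0_G$; this is immediate and requires no monoid axiom beyond the existence of the neutral element.
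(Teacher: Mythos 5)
Your proof is correct and follows essentially the same route as the paper: unfold the kernel as the preimage of the neutral element (the paper's Definition of kernel, extended to monoid-valued functions), take $x\in\Ker{g\circ f}$, observe $g(f(x))=0$, and conclude $f(x)\in\Ker{g}$. Your version merely makes the element chase on the image side slightly more explicit; there is nothing to add or correct.
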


\begin{proof}
  Let $x\in\Ker{g\circ f}$.
  Then, from
  Definition~\threfc{LM-d:kernel}{%
    extended to functions taking values in a monoid},
  we have $g(f(x))=0$, and thus, $f(x)\in\Ker{g}$.
\end{proof}

\begin{lemma}[image of ker is ker]
  \label{l:im-ker-eq-ker}
  \mbox{}\hfill
  Let~$E,F$ be sets, and~$G$ be a monoid.\\
  Let~$f:\ArEF$ and~$g:\Arrow{F}{G}$ be functions.
  Assume that~$f$ is surjective.
  Then, we have
  \begin{equation}
    \label{e:im-ker-eq-ker}
    f (\Ker{g \circ f}) = \Ker{g}.
  \end{equation}
\end{lemma}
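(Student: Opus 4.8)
The plan is to prove the set equality by double inclusion. The inclusion $f(\Ker{g \circ f}) \subset \Ker{g}$ is already available: it is exactly the content of Lemma~\ref{l:im-ker-incl-ker}, which holds without any surjectivity hypothesis. Thus only the reverse inclusion $\Ker{g} \subset f(\Ker{g \circ f})$ remains to be shown, and this is precisely where the surjectivity of $f$ enters.

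For the reverse inclusion, I would start from an arbitrary $y \in \Ker{g}$, so that $g(y) = 0$ by the definition of the kernel of a monoid-valued function (Definition~\ref{LM-d:kernel}, extended to functions taking values in a monoid, as already invoked in the proof of Lemma~\ref{l:im-ker-incl-ker}). Since $f$ is surjective, there exists $x \in E$ with $f(x) = y$. Substituting then gives $(g \circ f)(x) = g(f(x)) = g(y) = 0$, so that $x \in \Ker{g \circ f}$, and hence $y = f(x) \in f(\Ker{g \circ f})$. As $y$ was arbitrary, this establishes $\Ker{g} \subset f(\Ker{g \circ f})$.

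Combining the two inclusions yields the claimed equality. There is no genuine obstacle here: beyond the previous lemma, the only ingredient is the existence of a preimage furnished by the definition of surjectivity, so the whole argument reduces to that one-line application together with the definition of the kernel.
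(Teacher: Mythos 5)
Your proposal is correct and follows exactly the paper's own argument: the forward inclusion is delegated to Lemma~\ref{l:im-ker-incl-ker}, and the reverse inclusion is obtained by taking a preimage of $y\in\Ker{g}$ via surjectivity and checking that it lies in $\Ker{g\circ f}$. Nothing to add.
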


\begin{proof}
  From
  Lemma~\thref{l:im-ker-incl-ker},
  we have $f(\Ker{g\circ f})\subset\Ker{g}$.

  Reciprocally, let $y\in\Ker{g}$.
  Then, from
  Definition~\threfc{LM-d:kernel}{$g(y)=0$}, and
  \assume{the definition of surjectivity (let $x$ such that $y=f(x)$)},
  we have $g(f(x))=g(y)=0$, and $x\in\Ker{g\circ f}$.
  Thus, $y=f(x)$ belongs to $f(\Ker{g\circ f})$.
\end{proof}

\section{Complements on linear algebra}
\label{s:compl-lin-alg}

\begin{remark}
  \mbox{}\\
  In this section, {\vectorspace}s are supposed to be defined over some scalar
  field~$\matK$, such as~$\matR$ or~$\matC$. 
\end{remark}

\begin{lemma}[vector subspace is invariant by translation]
  \label{l:sub-sp-inv-transl}
  \mbox{}\hfill
  Let~$E$ be a {\vectorspace}.
  Let~$\Ep$ be a vector subspace of~$E$.
  Let~$\uupo\in\Ep$.
  Then, we have $\Ep+\uupo\eqdef\{\uup+\uupo\st\uup\in\Ep\}=\Ep$.
\end{lemma}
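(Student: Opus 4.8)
The plan is to establish the set equality $\Ep + \uupo = \Ep$ by double inclusion, using only that a \vectorsubspace{} is closed under addition and under taking additive inverses (equivalently, under subtraction). Both inclusions are immediate once the right witness is produced, so the work is purely bookkeeping with the subspace axioms.

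First I would prove $\Ep + \uupo \subset \Ep$. A generic element of $\Ep + \uupo$ is of the form $\uup + \uupo$ with $\uup \in \Ep$; since $\uupo \in \Ep$ by hypothesis and $\Ep$ is stable under addition, the sum $\uup + \uupo$ belongs to $\Ep$. This yields the forward inclusion directly.

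Conversely, to show $\Ep \subset \Ep + \uupo$, I would start from an arbitrary $\vvp \in \Ep$ and exhibit it as an element of $\Ep + \uupo$. The natural witness is $\uup \eqdef \vvp - \uupo$: because $\Ep$ is a \vectorsubspace{} containing both $\vvp$ and $\uupo$, it contains their difference, so $\uup \in \Ep$; and then $\uup + \uupo = (\vvp - \uupo) + \uupo = \vvp$ displays $\vvp$ as a member of $\Ep + \uupo$. Combining the two inclusions gives the claimed equality.

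There is no genuine obstacle here; the statement is an immediate consequence of the subspace structure. The only step worth flagging in a fully formal development is the appeal to closure of $\Ep$ under the additive inverse to guarantee $\vvp - \uupo \in \Ep$, which is precisely what distinguishes a \vectorsubspace{} from a mere subset closed under addition and is what makes the backward inclusion go through.
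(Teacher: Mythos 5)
Your proof is correct and is exactly the argument the paper intends: the paper's proof is a one-line appeal to the characterization of subspaces as sets closed under the vector operations, and your double inclusion (forward by closure under addition, backward via the witness $\vvp-\uupo$, which lies in $\Ep$ by closure under subtraction) is precisely the unpacking of that citation. No differences worth noting.
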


\begin{proof}
  Direct consequence of
  Lemma~\thref{LM-l:closed-under-vector-operations-is-subspace}.
\end{proof}

\begin{lemma}[range of linear map is vector subspace]
  \label{l:rg-lin-map-is-sub-sp}
  \mbox{}\\
  Let~$E$ and~$F$ be {\vectorspace}s.
  Let~$f\in\LEF$.
  Then, the range~$f(E)$ is a vector subspace of~$F$.
\end{lemma}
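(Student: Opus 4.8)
The plan is to invoke the standard subspace criterion already used in the proof of Lemma~\thref{l:sub-sp-inv-transl}, namely Lemma~\thref{LM-l:closed-under-vector-operations-is-subspace}: a subset of~$F$ that contains the zero vector and is stable under addition and scalar multiplication is a vector subspace of~$F$. Thus it suffices to verify these three closure properties for the range~$f(E)$, after which the conclusion is immediate.

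First I would observe that $\zzero_F\in f(E)$. Indeed, $E$ contains~$\zzero_E$, and since $f$ is linear (in particular homogeneous, using the scalar~$0\in\matK$), we have $f(\zzero_E)=\zzero_F$, so the zero vector of~$F$ lies in the image. Next, for stability under addition, I would take $y_1,y_2\in f(E)$, select preimages $x_1,x_2\in E$ with $y_j=f(x_j)$, and use the additivity of~$f$ together with the closure of~$E$ under addition to write $y_1+y_2=f(x_1)+f(x_2)=f(x_1+x_2)\in f(E)$. Similarly, given $\lambda\in\matK$ and $y=f(x)\in f(E)$, the homogeneity of~$f$ and the closure of~$E$ under scalar multiplication give $\lambda y=\lambda f(x)=f(\lambda x)\in f(E)$, which settles stability under scalar multiplication.

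There is essentially no obstacle here: the statement is a direct unfolding of the definition of the image of a linear map and of the linearity axioms. The only point requiring a small amount of care is the (routine) selection of preimages, which uses nothing beyond the definition of~$f(E)$ as the set of values $f(x)$ for $x\in E$; once these are fixed, applying the subspace characterization lemma yields that $f(E)$ is a {\vectorsubspace} of~$F$.
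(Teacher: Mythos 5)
Your proof is correct and follows exactly the route of the paper, which cites the definition of a linear map, the definition of the range, and the closure-under-vector-operations subspace criterion (the same Lemma~\thref{LM-l:closed-under-vector-operations-is-subspace} you invoke). You have simply spelled out the three closure checks that the paper leaves as a ``direct consequence.''
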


\begin{proof}
  Direct consequence of
  Definition~\thref{LM-d:linear-map},
  \assume{the definition of the range of a function}, and
  Lemma~\thref{LM-l:closed-under-vector-operations-is-subspace}.
\end{proof}

\begin{lemma}[injectivity or surjectivity and dimension implies bijectivity]
  \label{l:inj-or-surj-and-dim-implies-bij}
  \mbox{}\\
  Let~$E$ and~$F$ be {\vectorspace}s.
  Let~$f\in\LEF$ be a linear map from~$E$ to~$F$.
  Then, we have
  \begin{align}
    \label{e:inj-or-surj-and-dim-implies-bij-1}
    \left( \infty > \dim E \geq \dim F \CONJ f \mbox{ injective} \right)
    & \IMPLIES f \mbox{ bijective}, \\
    \label{e:inj-or-surj-and-dim-implies-bij-2}
    (\dim E \leq \dim F < \infty \CONJ f \mbox{ surjective})
    & \IMPLIES f \mbox{ bijective}.
\end{align}
\end{lemma}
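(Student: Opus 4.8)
The plan is to derive each implication from its complementary property by means of the rank--nullity theorem, which relates the nullity $\dim\Ker{f}$ and the rank $\dim\Rg{f}$ to $\dim E$. A preliminary observation is that in both cases the stated inequalities force $\dim E<\infty$ (directly in the first case, and via $\dim E\leq\dim F<\infty$ in the second), so the rank--nullity identity $\dim E=\dim\Ker{f}+\dim\Rg{f}$ is available; here $\Rg{f}$ is a vector subspace of $F$ by Lemma~\ref{l:rg-lin-map-is-sub-sp}. I would also rely on the assumed characterizations of injectivity by nullity ($f$ injective iff $\Ker{f}=\{0\}$) and of surjectivity by rank.

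For implication~\eqref{e:inj-or-surj-and-dim-implies-bij-1}, I would assume $f$ injective with $\infty>\dim E\geq\dim F$. Injectivity gives $\Ker{f}=\{0\}$, so rank--nullity collapses to $\dim\Rg{f}=\dim E$. As $\Rg{f}$ is a subspace of the finite-dimensional space $F$, one has $\dim\Rg{f}\leq\dim F$; combining this with $\dim F\leq\dim E=\dim\Rg{f}$ yields $\dim\Rg{f}=\dim F$. Since a subspace of a finite-dimensional space having the full dimension must be the whole space, this gives $\Rg{f}=F$, i.e.\ $f$ is surjective; being also injective, $f$ is bijective.

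For implication~\eqref{e:inj-or-surj-and-dim-implies-bij-2}, I would assume $f$ surjective with $\dim E\leq\dim F<\infty$. Surjectivity means $\Rg{f}=F$, hence $\dim\Rg{f}=\dim F$, and rank--nullity gives $\dim E=\dim\Ker{f}+\dim F$. Nonnegativity of the nullity then forces $\dim E\geq\dim F$, which together with the hypothesis $\dim E\leq\dim F$ yields $\dim E=\dim F$ and therefore $\dim\Ker{f}=0$, i.e.\ $\Ker{f}=\{0\}$. By the characterization of injectivity by nullity, $f$ is injective; being also surjective, $f$ is bijective.

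I do not expect a genuine obstacle here: the argument is essentially bookkeeping built on the rank--nullity theorem together with monotonicity and nonnegativity of dimensions. The only point that requires care is ensuring that each use of rank--nullity and of the ``equal-dimensional subspace equals the ambient space'' fact is legitimately applied in finite dimension, which is why the finiteness of $\dim E$ (and, in the first case, of $\dim F$) must be extracted from the hypotheses before anything else.
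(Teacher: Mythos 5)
Your proof is correct and follows essentially the same route as the paper, which simply cites the rank--nullity theorem, the characterization of surjectivity by the rank, the zero-kernel characterization of injectivity, and $\dim\{\zzero\}=0$ as a ``direct consequence''; you have merely spelled out the bookkeeping that the paper leaves implicit.
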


\begin{proof}
  Direct consequences of
  \assume{the rank--nullity theorem},
  \assume{the characterization of surjectivity with the rank},
  Lemma~\thref{LM-l:injective-linear-map-has-zero-kernel}, and
  \assume{$\dim\{\zzero\}=0$}.
\end{proof}

\begin{lemma}[inverse of isomorphism is linear map]
  \label{l:inverse-of-isomorphism-is-linear-map}
  \mbox{}\hfill
  Let~$E$ and~$F$ be {\vectorspace}s.
  Let $f\in\LEF$ be a linear map from~$E$ to~$F$.
  Assume that~$f$ is an isomorphism from~$E$ onto~$F$.\\
  Then, $f^{-1}$~is a linear map from~$F$ to~$E$.
\end{lemma}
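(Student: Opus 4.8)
The plan is to verify directly that $f^{-1}$, which exists as a function $\Arrow{F}{E}$ precisely because $f$ is assumed bijective, preserves the two {\vectorspace} operations. Since these operations can be combined into a single linearity condition, I would fix arbitrary $y_1,y_2\in F$ and a scalar $\lambda\in\matK$, and invoke surjectivity of~$f$ to write $y_1=f(x_1)$ and $y_2=f(x_2)$ for some $x_1,x_2\in E$; equivalently, using injectivity to guarantee uniqueness, $x_1=f^{-1}(y_1)$ and $x_2=f^{-1}(y_2)$.

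First I would use the linearity of~$f$ (Definition~\thref{LM-d:linear-map}) to compute
\[
  f(\lambda x_1 + x_2) = \lambda f(x_1) + f(x_2) = \lambda y_1 + y_2.
\]
Then, applying $f^{-1}$ to both ends of this equality and using $f^{-1}\circ f=\identity{E}$ on the left-hand side, I obtain $\lambda x_1 + x_2 = f^{-1}(\lambda y_1 + y_2)$. Rewriting $x_1,x_2$ in terms of $f^{-1}$ gives $f^{-1}(\lambda y_1 + y_2)=\lambda f^{-1}(y_1)+f^{-1}(y_2)$, which is exactly the condition for $f^{-1}$ to be a linear map from~$F$ to~$E$.

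There is essentially no real obstacle in this argument: it is purely algebraic and needs neither finite-dimensionality nor any topological input. The only two points requiring any care are that the inverse is well-defined (secured by the bijectivity hypothesis, which also licenses the passage from $y_i$ to $x_i=f^{-1}(y_i)$), and that one applies the linearity of~$f$ in the forward direction before transporting the resulting identity back through~$f^{-1}$.
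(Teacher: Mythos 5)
Your proof is correct and follows essentially the same route as the paper: take preimages of the given vectors in~$F$, apply the linearity of~$f$ in the forward direction, and read off the linearity of~$f^{-1}$ from the resulting identity. The only cosmetic difference is that you verify the single condition $f^{-1}(\lambda y_1+y_2)=\lambda f^{-1}(y_1)+f^{-1}(y_2)$ whereas the paper uses the two-scalar form $\lambda v+\lambdap\vp$; these are equivalent characterizations of linearity.
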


\begin{proof}
  From
  Definition~\threfc{LM-d:isomorphism}{$f$~is bijective},
  there exists a mapping $f^{-1}$ from~$F$ to~$E$.
  Let $\lambda,\lambdap\in\matK$ be scalars.
  Let $v,\vp\in F$ be vectors.
  Let $u,\up\in E$ be the preimages of~$v$ and~$\vp$ ($f(u)=v$ and
  $f(\up)=\vp$).
  Then, from
  Definition~\threfc{LM-d:isomorphism}{$f$ is linear}, and
  Lemma~\thref{LM-l:linear-map-preserves-linear-combinations},
  we have
  \begin{equation*}
    f (\lambda u + \lambdap \up)
    = \lambda f (u) + \lambdap f (\up)
    = \lambda v + \lambdap \vp.
  \end{equation*}
  Thus, $\lambda u+\lambdap\up$ is the preimage of $\lambda v+\lambdap\vp$,
  {\ie}
  \begin{equation*}
    f^{-1} (\lambda v + \lambdap \vp)
    = \lambda u + \lambdap \up
    = \lambda f^{-1} (v) + \lambdap f^{-1} (\vp).
  \end{equation*}

  Therefore, from
  Lemma~\thref{LM-l:linear-map-preserves-linear-combinations},
  $f^{-1}$~is a linear map.
\end{proof}

\begin{lemma}[free family of dim elements is basis]
  \label{l:free-family-of-dim-elements-is-basis}
  \mbox{}\\
  Let $E$ be a {\vectorspace} of dimension $n>0$.
  Then, any free family of $n$ elements is a basis of $E$.
\end{lemma}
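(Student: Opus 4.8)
The plan is to reduce the statement to the bijectivity criterion of Lemma~\ref{l:inj-or-surj-and-dim-implies-bij} by encoding the given free family as a coordinate map. First I would fix a free family $(u_i)_{i\in[1..n]}$ of $n$ elements of~$E$ and introduce the map $\phi\in\LSp{\matK^n}{E}$ defined by $\phi((\lambda_i)_{i\in[1..n]})\eqdef\sum_{i=1}^n\lambda_i u_i$. This map is linear by construction, so it falls within the scope of Lemma~\ref{l:inj-or-surj-and-dim-implies-bij}, where the role of its input space~$E$ is played by~$\matK^n$ and the role of its output space~$F$ is played by the ambient space~$E$.

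Next I would observe that freedom of the family is, by the definition of a free family together with the definition of the kernel, exactly the statement that $\Ker{\phi}=\{\zzero\}$, so $\phi$ is injective. Since $\dim\matK^n=n=\dim E<\infty$, we have in particular $\infty>\dim\matK^n\geq\dim E$, so the hypotheses of~\eqref{e:inj-or-surj-and-dim-implies-bij-1} are met and the lemma yields that $\phi$ is bijective. Finally, surjectivity of~$\phi$ means precisely that every vector of~$E$ is a linear combination of the~$u_i$, {\ie} the family is generating; combined with the given freedom, the family is a basis of~$E$.

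There is no real obstacle here beyond making the dictionary between the family-theoretic notions (free, generating, basis) and the map-theoretic ones (injective, surjective, bijective) precise, which relies only on the assumed basic results about linear spans and bases. An alternative route, not using the coordinate map, would invoke the incomplete basis theorem to complete the free family~$\mathcal{F}$ to a basis~$\mathcal{B}$ of~$E$ with $\mathcal{F}\subseteq\mathcal{B}$; since $\dim E=n$ is the common cardinality of all bases, $\mathcal{B}$ would have~$n$ elements, forcing $\mathcal{F}=\mathcal{B}$, and hence $\mathcal{F}$ is already a basis. I would prefer the coordinate-map argument, as it directly reuses Lemma~\ref{l:inj-or-surj-and-dim-implies-bij} stated just above and fits the paper's preference for algebraic over analytic reasoning.
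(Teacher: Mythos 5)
Your argument is correct, but it is not the one the paper uses. The paper's proof is a one-liner: it invokes the incomplete basis theorem (listed among the assumed basic results) ``in the case where the free family is completed with an empty family'' --- that is, any free family extends to a basis, and since a basis of an $n$-dimensional space has exactly $n$ elements, the extension must add nothing, so the free family of $n$ elements is already a basis. This is precisely the alternative route you sketch in your last paragraph, so in a sense you found the paper's proof but demoted it to a footnote. Your preferred route --- encoding the family as the coordinate map $\phi\in\LSp{\matK^n}{E}$, identifying freedom with $\Ker{\phi}=\{\zzero\}$, and applying~\eqref{e:inj-or-surj-and-dim-implies-bij-1} with $\dim\matK^n=n=\dim E$ to get bijectivity and hence surjectivity --- is sound and has the virtue of reusing Lemma~\ref{l:inj-or-surj-and-dim-implies-bij}, stated just above in the same section; its cost is that it routes through the rank--nullity theorem and requires making explicit the dictionary between surjectivity of~$\phi$ and the generating property, whereas the paper's citation of the incomplete basis theorem does all of that in one stroke. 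Either proof would be acceptable here; the paper simply chose the shorter one.
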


\begin{proof}
  Direct consequence of
  \assume{the incomplete basis theorem}
  in the case where the free family is completed with an empty family.
\end{proof}

\section{Complements on affine algebra}
\label{s:compl-aff-alg}

\begin{remark}
  In this section, {\vectorspace}s are supposed to be defined over some scalar
  field~$\matK$, but some results are only stated for~$\matK\eqdef\matR$.
  Even though most of the latter can be extended to any field of characteristic
  not~2, such as~$\matC$, this is not considered here.
\end{remark}

\begin{remark}
  Note that the notion of affine space associated with a {\vectorspace} can be
  defined in an abstract way, either through a function that builds a vector
  from any two points, or equivalently, through a {\em translation} function
  that builds a point from any point and any vector.
  Barycenters correspond to linear combinations with coefficients of sum~1.
  Then, affine subspaces are those closed under barycenter, and affine
  functions (from an affine space to another) are those that preserve
  barycenters.
  Affine functions between two given affine spaces can be proved to form an
  affine space.

  Actually, any {\vectorspace} can be equipped with an affine structure, and
  vice versa.
  Moreover, in the present work, in most cases, we encounter affine structures
  associated to the {\vectorspace}s~$\matRn$, and the abstract vision is not
  mandatory.
  Thus, we choose to only define affine subspaces of a {\vectorspace}, and
  affine maps between two affine subspaces.
  Elements are called {\em vectors} to emphasize their linear properties, or
  {\em points} to emphasize their affine properties.
\end{remark}

\subsection{Affine subspaces}
\label{ss:aff-subspaces}

\begin{definition}[affine subspace]
  \label{d:aff-sub-sp}
  \mbox{}\\
  Let~$E$ be a {\vectorspace}.
  A subset~$\calEp$ of~$E$ is said to be an {\em affine subspace of~$E$} iff
  there exists a vector subspace~$\Ep$ of~$E$ and $\xxo\in E$ such that
  $\calEp=\xxo+\Ep\eqdef\{\xxo+\uup\st\uup\in\Ep\}$.

  Then, $\Ep$ is called the {\em direction} of the affine subspace~$\calEp$,
  and~$\xxo$ its {\em origin}.
  Elements of affine subspaces are called {\em points}.
\end{definition}

\begin{lemma}[origin is in affine subspace]
  \label{l:orig-is-in-aff-sub-sp}
  \mbox{}\\
  Let~$E$ be a {\vectorspace}.
  Let~$\Ep$ be a vector subspace of~$E$, and~$\xxo\in E$.
  Then, we have $\xxo\in\calEp$.
\end{lemma}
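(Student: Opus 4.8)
The plan is to unfold the definition of the affine subspace and to realize $\xxo$ as a trivial translate of itself. By Definition~\thref{d:aff-sub-sp}, the affine subspace under consideration is $\calEp \eqdef \xxo + \Ep = \{\xxo + \uup \st \uup \in \Ep\}$, so membership of a point in~$\calEp$ amounts to exhibiting a decomposition of that point as $\xxo + \uup$ for some $\uup \in \Ep$.

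First I would invoke the basic fact that every vector subspace contains the zero vector, hence $\zzero \in \Ep$. Taking $\uup \eqdef \zzero$ then yields $\xxo + \zzero = \xxo$, so $\xxo$ is of the required form $\xxo + \uup$ with $\uup \in \Ep$, and therefore $\xxo \in \calEp$ by the set-builder description of~$\calEp$.

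There is essentially no obstacle here: the result is a direct consequence of the definition of an affine subspace together with the presence of~$\zzero$ in any subspace. The only point worth making explicit is the use of $\zzero \in \Ep$ rather than leaving it implicit, since this is precisely what forces the origin~$\xxo$ to lie in the affine subspace it generates.
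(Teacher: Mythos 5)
Your proof is correct and matches the paper's argument exactly: the paper also proves this as a direct consequence of Definition~\ref{d:aff-sub-sp} together with the fact that $\zzero\in\Ep$ (cited there as closure of a subspace under vector operations), giving $\xxo=\xxo+\zzero\in\calEp$. Nothing is missing.
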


\begin{proof}
  Direct consequence of
  Definition~\thref{d:aff-sub-sp}, and
  Lemma~\threfc{LM-l:closed-under-vector-operations-is-subspace}{%
    $0\in\Ep$}.
\end{proof}

\begin{lemma}[equivalent definition of affine subspace]
  \label{l:equiv-def-aff-sub-sp}
  \mbox{}\\
  Let~$E$ be a {\vectorspace}.
  Let~$\calEp\subset E$.
  Let~$\xxpo\in\calEp$.\\
  Then, $\calEp$ is an affine subspace of~$E$ iff
  $\calEp-\xxpo\eqdef\{\xxp-\xxpo\st\xxp\in\calEp\}$ is a vector subspace
  of~$E$.

  Moreover, let~$\Ep$ be a vector subspace of~$E$, and let $\xxo\in E$,
  then we have
  \begin{equation}
    \label{e:equiv-def-aff-sub-sp}
    \forall \xx \in E,\quad
    \xx \in \xxo + \Ep \EQUIV \xx - \xxo \in \Ep.
  \end{equation}
\end{lemma}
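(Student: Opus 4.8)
The plan is to establish the membership characterization of the ``moreover'' part first, since it is merely an unfolding of the meaning of $\xxo + \Ep$, and then to deploy it as the workhorse for both directions of the main equivalence. I would therefore treat equation~\eqref{e:equiv-def-aff-sub-sp} as an auxiliary fact proved at the outset, even though it is stated second.

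For that ``moreover'' part, fix an arbitrary vector subspace $\Ep$, an arbitrary $\xxo \in E$, and $\xx \in E$. By Definition~\ref{d:aff-sub-sp}, the membership $\xx \in \xxo + \Ep$ says exactly that there is some $\uup \in \Ep$ with $\xx = \xxo + \uup$. I would observe that this is equivalent to $\xx - \xxo = \uup \in \Ep$, that is, $\xx - \xxo \in \Ep$; conversely, if $\xx - \xxo \in \Ep$, set $\uup \eqdef \xx - \xxo$ and write $\xx = \xxo + (\xx - \xxo) \in \xxo + \Ep$. This yields~\eqref{e:equiv-def-aff-sub-sp}.

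For the equivalence itself, the easy direction assumes that $\calEp - \xxpo$ is a vector subspace. I would then take $\calEp - \xxpo$ as direction and $\xxpo$ as origin: a direct set computation gives $\xxpo + (\calEp - \xxpo) = \{\xxpo + (\xxp - \xxpo) \st \xxp \in \calEp\} = \calEp$, so $\calEp$ is an affine subspace by Definition~\ref{d:aff-sub-sp}. For the other direction, assume $\calEp = \xxo + \Ep$ with $\Ep$ a vector subspace. Since $\xxpo \in \calEp$ by hypothesis, the ``moreover'' part applied to $\xxpo$ gives $\xxpo - \xxo \in \Ep$, hence $\xxo - \xxpo \in \Ep$ because $\Ep$ is stable under negation. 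A set computation then shows $\calEp - \xxpo = \{(\xxo + \uup) - \xxpo \st \uup \in \Ep\} = \Ep + (\xxo - \xxpo)$, and Lemma~\ref{l:sub-sp-inv-transl} (invariance of a vector subspace under translation by one of its own elements) collapses this to $\calEp - \xxpo = \Ep$, which is a vector subspace.

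The content is entirely elementary; the only point requiring genuine care is the bookkeeping of the two distinct base points, namely $\xxpo$ (the prescribed point of $\calEp$) and $\xxo$ (the origin supplied by the definition of affine subspace). The step that keeps the forward direction clean is recognizing $\xxo - \xxpo$ as an element of $\Ep$ so that Lemma~\ref{l:sub-sp-inv-transl} applies directly; without it one would be forced into proving the double inclusion $\calEp - \xxpo \subset \Ep$ and $\Ep \subset \calEp - \xxpo$ by hand, which is the main place a careless argument could slip.
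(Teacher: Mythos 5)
Your proof is correct and follows essentially the same route as the paper: the ``moreover'' equivalence is an unfolding of Definition~\ref{d:aff-sub-sp}, the reverse implication takes $\calEp-\xxpo$ as direction with origin $\xxpo$, and the forward implication writes $\calEp-\xxpo=\Ep-(\xxpo-\xxo)$ and collapses it to $\Ep$ via Lemma~\ref{l:sub-sp-inv-transl}. Proving the ``moreover'' part first and reusing it is only a cosmetic reordering of the paper's argument.
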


\begin{proof}
  \proofpar{First equivalence}
  Direct consequence of
  Definition~\threfc{d:aff-sub-sp}{%
    thus $\calEp$ equals $\xxo+\Ep$ with $\Ep\subset E$ subspace
    and $\xxo\in E$, $\xxpo=\xxo+\uupo$ and $\xxp=\xxo+\uup$ for all
    $\xxp\in\calEp$ with $\uupo,\uup\in\Ep$},
  Lemma~\threfc{l:sub-sp-inv-transl}{with $-\uupo$}, and
  Definition~\threfc{LM-d:space}{%
    additive group properties,
    thus $\xxp-\xxpo=\uup-\uupo$, and $\calEp-\xxpo=\Ep-\uupo=\Ep$}.

  \proofparskip{\eqref{e:equiv-def-aff-sub-sp}}
  Direct consequence of
  Definition~\thref{d:aff-sub-sp},
  Lemma~\thref{LM-l:closed-under-vector-operations-is-subspace}.
\end{proof}

\begin{lemma}[affine subspace is invariant by change of origin]
  \label{l:aff-sub-sp-inv-chg-orig}
  \mbox{}\\
  Let~$E$ be a {\vectorspace}.
  Let~$\Ep$ be a vector subspace of~$E$ and~$\xxo\in E$.
  Let~$\yyp\in\xxo+\Ep$.\\
  Then, we have $\xxo+\Ep=\yyp+\Ep$.
\end{lemma}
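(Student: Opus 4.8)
The plan is to reduce the claimed equality of affine subspaces to the translation invariance of the direction~$\Ep$ established in Lemma~\thref{l:sub-sp-inv-transl}. First I would exploit the hypothesis $\yyp \in \xxo + \Ep$ together with the equivalence~\eqref{e:equiv-def-aff-sub-sp} of Lemma~\thref{l:equiv-def-aff-sub-sp} to produce the translation vector $\wpr \eqdef \yyp - \xxo$ and to record that it belongs to~$\Ep$.

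Next, writing $\yyp = \xxo + \wpr$ and using the additive group properties of~$E$ (Definition~\thref{LM-d:space}, namely associativity and commutativity of vector addition), I would rewrite the translated set as
\[
  \yyp + \Ep
  = \{ \xxo + (\uup + \wpr) \st \uup \in \Ep \}
  = \xxo + (\Ep + \wpr).
\]
The decisive step is then to apply Lemma~\thref{l:sub-sp-inv-transl} with $\uupo \eqdef \wpr \in \Ep$, which yields $\Ep + \wpr = \Ep$; substituting this into the display gives exactly $\yyp + \Ep = \xxo + \Ep$.

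I do not expect a genuine obstacle here, since essentially all of the content is carried by Lemma~\thref{l:sub-sp-inv-transl}. The only care required is notational: one must track the set-translation $\Ep + \wpr \eqdef \{ \uup + \wpr \st \uup \in \Ep \}$ and match the right-translation convention of that lemma, which is the reason commutativity of~$+$ is invoked when turning $\wpr + \uup$ into $\uup + \wpr$. Should one prefer to avoid the set-translation rewriting altogether, an equally short alternative would be a double-inclusion argument, using~\eqref{e:equiv-def-aff-sub-sp} to characterize membership in each side by the condition that the difference with the respective origin lies in~$\Ep$, and then closing both inclusions by the subspace stability of~$\Ep$ under addition and subtraction.
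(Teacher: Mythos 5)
Your proposal is correct and follows essentially the same route as the paper: both extract the translation vector $\yyp-\xxo\in\Ep$ from the hypothesis (via Definition~\ref{d:aff-sub-sp} or the equivalence of Lemma~\ref{l:equiv-def-aff-sub-sp}) and then absorb it using Lemma~\ref{l:sub-sp-inv-transl}. The only difference is cosmetic — you manipulate the sets directly where the paper chains membership equivalences — so nothing further is needed.
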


\begin{proof}
  Direct consequence of
  Definition~\threfc{d:aff-sub-sp}{thus $\yyp=\xxo+\uupo$ with $\uupo\in\Ep$},
  Lemma~\thref{LM-l:closed-under-vector-operations-is-subspace},
  Lemma~\threfc{l:sub-sp-inv-transl}{with $-\uupo$ thus $\Ep-\uupo=\Ep$}, and
  Lemma~\threfc{l:equiv-def-aff-sub-sp}{%
    thus $\xx\in\yyp+\Ep$ iff
    $(\xx-\xxo)-\uupo\in\Ep=\Ep-\uupo$ iff $\xx-\xxo\in\Ep$
    iff $\xx\in\xxo+\Ep$}.
\end{proof}

\begin{lemma}[vector subspace is affine subspace]
  \label{l:vect_sub-sp-is-aff-sub-sp}
  \mbox{}\\
  Let~$E$ be a {\vectorspace}.
  Let~$\Ep$ be a vector subspace of~$E$.\\
  Then, $\Ep$ is an affine subspace of~$E$, of direction itself and of origin
  any of its points.
\end{lemma}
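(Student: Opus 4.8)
The plan is to apply Definition~\thref{d:aff-sub-sp} directly, with the subspace $\Ep$ playing the role of its own direction and an arbitrary point of $\Ep$ playing the role of origin. Concretely, to show that $\Ep$ is an affine subspace of $E$ I must exhibit a vector subspace of $E$ together with a point $\xxo \in E$ whose translate equals $\Ep$; the natural choice is the pair consisting of $\Ep$ itself and any $\xxo \in \Ep$.

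First I would fix an arbitrary $\xxo \in \Ep$ and reduce the goal to the set equality $\Ep = \xxo + \Ep$. The key step is then Lemma~\thref{l:sub-sp-inv-transl} (vector subspace is invariant by translation), applied with $\uupo \eqdef \xxo$: it yields $\Ep + \xxo = \Ep$. A small adjustment is needed because Definition~\thref{d:aff-sub-sp} places the origin on the left, as $\xxo + \Ep$, whereas the lemma places the translation vector on the right; these two sets coincide by commutativity of vector addition (Definition~\thref{LM-d:space}), so $\xxo + \Ep = \Ep + \xxo = \Ep$. Feeding this equality into Definition~\thref{d:aff-sub-sp} then shows that $\Ep$ is an affine subspace of direction $\Ep$ and origin $\xxo$.

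To justify that there is at least one admissible origin, and that indeed \emph{any} point of $\Ep$ may serve as one, I would note that $\zzero \in \Ep$ by Lemma~\thref{LM-l:closed-under-vector-operations-is-subspace}, so $\Ep$ is nonempty, and that the previous paragraph was carried out for an \emph{arbitrary} $\xxo \in \Ep$. I do not expect any genuine obstacle here: the only point requiring care is the left/right placement of the translation vector, which is resolved by commutativity; everything else is a direct instantiation of the definition and of the invariance-by-translation lemma.
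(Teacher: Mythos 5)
Your proof is correct and follows essentially the same route as the paper: both arguments hinge on Lemma~\ref{l:sub-sp-inv-transl} ({\em \nameref{l:sub-sp-inv-transl}}) to show that translating $\Ep$ by one of its own points leaves it unchanged. The paper merely packages the final step through Lemma~\ref{l:equiv-def-aff-sub-sp} ({\em \nameref{l:equiv-def-aff-sub-sp}}) instead of unfolding Definition~\ref{d:aff-sub-sp} directly as you do, which is an immaterial difference.
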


\begin{proof}
  Direct consequence of
  Lemma~\thref{l:sub-sp-inv-transl}, and
  Lemma~\thref{l:equiv-def-aff-sub-sp}.
\end{proof}

\begin{remark}
  For instance, note that~$E$ is also an affine subspace of itself, {\eg} with
  origin~$\zzero$.
\end{remark}

\begin{lemma}[affine subspace plus vector subspace is affine subspace]
  \label{l:aff-plus-vect-is-aff-sub-sp}
  \mbox{}\hfill
  Let~$E$ be a {\vectorspace}.
  Let~$\Ep$ be a vector subspace of~$E$, let~$\xxo\in E$, and
  let~$\calEp\eqdef\xxo+\Ep$.\\
  Let~$\Fp$ be a vector subspace of~$\Ep$.
  Then, we have
  $\calEp+\Fp\eqdef\{\xxp+\vvp\st\xxp\in\calEp\Conj\vvp\in\Fp\}=\calEp$.
\end{lemma}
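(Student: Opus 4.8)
The plan is to prove the announced set equality by double inclusion, reducing it to the fact that a vector subspace absorbs any of its subspaces under addition. Throughout, I recall from Definition~\thref{d:aff-sub-sp} that a point lies in $\calEp=\xxo+\Ep$ iff it can be written $\xxo+\uup$ for some $\uup\in\Ep$.

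First I would establish the inclusion $\calEp+\Fp\subset\calEp$. An arbitrary element of $\calEp+\Fp$ is of the form $\xxp+\vvp$ with $\xxp\in\calEp$ and $\vvp\in\Fp$. Writing $\xxp=\xxo+\uup$ with $\uup\in\Ep$, and using that $\Fp$ is a vector subspace of~$\Ep$ so that $\vvp\in\Ep$, the closure of~$\Ep$ under addition (Lemma~\thref{LM-l:closed-under-vector-operations-is-subspace}) gives $\uup+\vvp\in\Ep$. Hence $\xxp+\vvp=\xxo+(\uup+\vvp)\in\xxo+\Ep=\calEp$.

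For the reverse inclusion $\calEp\subset\calEp+\Fp$, I would use that $\zzero\in\Fp$ since $\Fp$ is a vector subspace (again Lemma~\thref{LM-l:closed-under-vector-operations-is-subspace}); then any $\xxp\in\calEp$ equals $\xxp+\zzero\in\calEp+\Fp$. Combining the two inclusions yields $\calEp+\Fp=\calEp$.

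I expect no genuine obstacle here: the only care needed is to unfold the set-builder notation for $\calEp+\Fp$ correctly and to invoke subspace closure and zero-membership at the right places. A more compact alternative would be to note $\calEp+\Fp=\xxo+(\Ep+\Fp)$ and prove $\Ep+\Fp=\Ep$ directly---the inclusion $\Ep+\Fp\subset\Ep$ from closure of~$\Ep$, and $\Ep\subset\Ep+\Fp$ from $\zzero\in\Fp$---which is the same computation; yet another route is to apply Lemma~\thref{l:sub-sp-inv-transl} slice by slice, since for each fixed $\vvp\in\Fp\subset\Ep$ one has $\Ep+\vvp=\Ep$.
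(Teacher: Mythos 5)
Your proof is correct and follows essentially the same route as the paper: for the forward inclusion you rewrite $\xxp+\vvp$ as $\xxo$ plus an element of~$\Ep$ obtained by subspace closure (the paper phrases your $\uup$ as $\xxp-\xxo$ via Lemma~\thref{l:equiv-def-aff-sub-sp}), and for the reverse inclusion you use $\zzero\in\Fp$. No gaps.
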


\begin{proof}
  \proofpar{$\calEp+\Fp\subset\calEp$}
  Direct consequence of
  Definition~\threfc{LM-d:space}{%
    additive abelian group properties,
    thus $\xxp+\vvp=\xxo+((\xxp-\xxo)+\vvp)$},
  Lemma~\threfc{l:equiv-def-aff-sub-sp}{\eqref{e:equiv-def-aff-sub-sp},
    thus $\xxp-\xxo\in\Ep$}, and
  Lemma~\threfc{LM-l:closed-under-vector-operations-is-subspace}{%
    thus $(\xxp-\xxo)\!+\!\vvp\in\Ep$}.

  \proofparskip{$\calEp+\Fp\supset\calEp$}
  Direct consequence of
  Definition~\thref{d:aff-sub-sp}, and
  Lemma~\threfc{LM-l:closed-under-vector-operations-is-subspace}{%
    thus $0\in\Fp$}.
\end{proof}

\begin{lemma}[closed under barycenter is affine subspace]
  \label{l:closed-under-baryc-is-aff-sub-sp}
  \mbox{}\hfill
  Let~$E$ be a real {\vectorspace}.\\
  Let~$\calEp\subset E$ be nonempty.
  Then, $\calEp$ is an affine subspace of~$E$ iff it is closed under
  barycenter,
  \begin{equation}
    \label{e:aff-sub-sp-is-closed-under-baryc}
      \forall n \in \matN,\;
    \forall (\vv_i)_{i \in [0..n]} \in \calEp,\;
    \forall (\mu_i)_{i \in [0..n]} \in \matR,\quad
    \sum_{i=0}^n \mu_i = 1 \IMPLIES
    \sum_{i=0}^n \mu_i \vv_i \in \calEp.
  \end{equation}
\end{lemma}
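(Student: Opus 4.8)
The plan is to prove the two implications separately. Throughout, I fix a point $\xxpo\in\calEp$, which exists since $\calEp$ is nonempty, and I set $\Ep\eqdef\calEp-\xxpo$. By Lemma~\thref{l:equiv-def-aff-sub-sp}, $\calEp$ is an affine subspace of~$E$ iff $\Ep$ is a vector subspace of~$E$; moreover, in that case $\calEp=\xxpo+\Ep$. Thus both directions reduce to relating closedness under barycenter of~$\calEp$ with the vector-subspace axioms for~$\Ep$, which I would verify through Lemma~\thref{LM-l:closed-under-vector-operations-is-subspace}.

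For the forward implication, I assume $\calEp$ is an affine subspace. Given $n\in\matN$, points $(\vv_i)_{i\in[0..n]}\in\calEp$, and scalars $(\mu_i)_{i\in[0..n]}\in\matR$ with $\sum_{i=0}^n\mu_i=1$, I write each $\vv_i=\xxpo+\uup_i$ where $\uup_i\eqdef\vv_i-\xxpo\in\Ep$. Using $\sum_i\mu_i=1$ together with the additive group structure of~$E$, I get $\sum_{i=0}^n\mu_i\vv_i=\xxpo+\sum_{i=0}^n\mu_i\uup_i$. Since $\Ep$ is a vector subspace, it is closed under linear combinations, so $\sum_i\mu_i\uup_i\in\Ep$, whence $\sum_i\mu_i\vv_i\in\xxpo+\Ep=\calEp$. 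This is a routine computation once the change of origin is performed.

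For the converse, I assume $\calEp$ is closed under barycenter and show that $\Ep$ is a vector subspace. First, $\zzero=\xxpo-\xxpo\in\Ep$, so $\Ep$ is nonempty. For closedness under linear combination, I take $\uup,\uupp\in\Ep$, say $\uup=\xxp-\xxpo$ and $\uupp=\xxpp-\xxpo$ with $\xxp,\xxpp\in\calEp$, together with scalars $\lambda,\lambdap\in\matR$. The crucial step is the identity $\xxpo+\lambda\uup+\lambdap\uupp=\lambda\xxp+\lambdap\xxpp+(1-\lambda-\lambdap)\xxpo$, whose three coefficients sum to~$1$. Hence the right-hand side is a barycenter of the three points $\xxp,\xxpp,\xxpo\in\calEp$ (the case $n=2$ of the hypothesis), so it lies in~$\calEp$, and therefore $\lambda\uup+\lambdap\uupp\in\calEp-\xxpo=\Ep$. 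By Lemma~\thref{LM-l:closed-under-vector-operations-is-subspace}, $\Ep$ is a vector subspace, and the equivalent-definition lemma gives the conclusion.

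The main obstacle is the converse direction, and specifically recognizing that a \emph{linear} combination in the direction~$\Ep$ corresponds to a \emph{barycentric} (sum-of-weights-one) combination in~$\calEp$: one must introduce the compensating weight $1-\lambda-\lambdap$ on the chosen origin~$\xxpo$ so that the total weight equals~$1$ and the hypothesis becomes applicable. Everything else is bookkeeping with the additive group axioms of~$E$ and the two cited lemmas.
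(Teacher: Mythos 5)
Your proof is correct, and its overall strategy is the same as the paper's: fix an origin $\xxpo\in\calEp$, pass to $\Ep\eqdef\calEp-\xxpo$ via Lemma~\ref{l:equiv-def-aff-sub-sp}, and reduce both directions to the vector-subspace axioms. The forward direction is essentially identical to the paper's. Where you differ is in how the converse verifies that $\Ep$ is a subspace: you establish closure under a general linear combination $\lambda\uup+\lambdap\uupp$ in one shot, using the three-point barycenter $\lambda\xxp+\lambdap\xxpp+(1-\lambda-\lambdap)\xxpo$ (the case $n=2$ of the hypothesis), whereas the paper splits the check into scalar multiplication (via the two-point barycenter $(1-a)\xxpo+a(\xxpo+\uu)$) and addition (via the midpoint $\half(\xxpo+\uu)+\half(\xxpo+\vv)$ followed by doubling). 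Your route is slightly more compact; the paper's route only ever invokes the hypothesis for two points at a time, which makes visible that closure under two-point barycenters already suffices (at the cost of the $\half$-then-double trick, harmless over $\matR$). Both arguments are sound, and your identity and its coefficient sum check out.
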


\begin{proof}
  \proofpar{From left to right}
  Direct consequence of
  Definition~\threfc{d:aff-sub-sp}{thus $\calEp$ is of the form $\xxo+\Ep$},
  Lemma~\threfc{l:equiv-def-aff-sub-sp}{\eqref{e:equiv-def-aff-sub-sp}},
  Definition~\threfc{LM-d:space}{%
    thus $\sum_{i=0}^n\mu_i\vv_i-\xxo=\sum_{i=0}^n\mu_i(\vv_j-\xxo)$}, and
  Lemma~\threfc{LM-l:closed-under-linear-combination-is-subspace}{%
    with $\vv_i-\xxo\in\Ep$}.

  \proofparskip{From right to left}
  Let~$\xxpo\in\calEp$.
  Assume that $\calEp$ is closed under barycenter.\\
  \proofpar{1. $0\in\calEp-\xxpo$} Trivial.\\
  \proofpar{2. $\calEp-\xxpo$ is closed under scalar multiplication}\\
  Let~$a\in\matR$ and $\uu\in\calEp-\xxpo$, {\ie} $\xxpo+\uu\in\calEp$.
  Then, from
  Definition~\thref{LM-d:space}, and
  hypothesis (with $(1-a)+a=1$),
  we have $\xxpo+a\uu=(1-a)\xxpo+a(\xxpo+\uu)\in\calEp$.
  Thus, $a\uu\in\calEp-\xxpo$.\\
  \proofpar{3. $\calEp-\xxpo$ is closed under addition}\\
  Let~$\uu,\vv\in\calEp-\xxpo$, {\ie} $\xxpo+\uu,\xxpo+\vv\in\calEp$.
  Then, from
  Definition~\thref{LM-d:space}, and
  hypothesis (with $\half+\half=1$),
  we have $\xxpo+\half(\uu+\vv)=\half(\xxpo+\uu)+\half(\xxpo+\vv)\in\calEp$.
  Thus, $\half(\uu+\vv)\in\calEp-\xxpo$, and from~2 (with $a\eqdef2$),
  we have $u+v=2(\half(\uu+\vv))\in\calEp-\xxpo$.\\
  Finally, from
  Lemma~\thref{LM-l:closed-under-vector-operations-is-subspace}, and
  Lemma~\thref{l:equiv-def-aff-sub-sp},
  $\calEp-\xxpo$ is a vector subspace, and $\calEp$ is an affine subspace.
\end{proof}

\begin{lemma}[barycenter closure is affine subspace]
  \label{l:baryc-closure-is-aff-sub-sp}
  \mbox{}\\
  Let~$d\geq1$.
  Let~$n\in\matN$.
  Let~$\famvert{\vv}{n,d}$ be $n+1$ points in~$\matRd$.
  Let~$\calF\eqdef\vv_0+\Span{\vv_j-\vv_0}_{j\in[1..n]}$.\\
  Then, $\calF$ is an affine subspace of~$\matRd$, and we have
  \begin{align}
    \label{e:baryc-closure-is-aff-sub-sp-1}
    \forall i \in [0..n],\quad
    \calF &= \vv_i + \Span{\vv_j - \vv_i}_{j \in [0..n] \setminus \{ i  \}}, \\
    \label{e:baryc-closure-is-aff-sub-sp-2}
    \calF &= \left\{ \sum_{j = 0}^n \mu_j \vv_j \in \matRd \rightst \left.
      (\mu_j)_{j \in [0..n]} \in \matR \CONJ \sum_{j = 0}^n \mu_j = 1 \right\},
  \end{align}
  with the convention that $\Span{\emptyset}\eqdef\{\zzero\}$, thus
  $\calF=\{\vv_0\}$ when $n\eqdef0$.
\end{lemma}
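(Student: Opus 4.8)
The plan is to treat the three assertions in order---that $\calF$ is an affine subspace, the symmetric description \eqref{e:baryc-closure-is-aff-sub-sp-2}, and the base-point-independent form \eqref{e:baryc-closure-is-aff-sub-sp-1}---deriving the last two together from a single symmetric computation.

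First I would dispose of the affine-subspace claim immediately: $\Span{\vv_j - \vv_0}_{j \in [1..n]}$ is a vector subspace of $\matRd$, since the linear span of any family is a subspace, so $\calF = \vv_0 + \Span{\vv_j - \vv_0}_{j \in [1..n]}$ has the form ``origin plus vector subspace'' and is an affine subspace by Definition~\thref{d:aff-sub-sp}. The convention $\Span{\emptyset} \eqdef \{\zzero\}$ gives $\calF = \{\vv_0\}$ when $n \eqdef 0$, so the degenerate case is covered at the same time.

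Next, writing $B$ for the right-hand side of \eqref{e:baryc-closure-is-aff-sub-sp-2}, I would prove, for each fixed $i \in [0..n]$, the equality $\vv_i + \Span{\vv_j - \vv_i}_{j \in [0..n] \setminus \{i\}} = B$ by double inclusion. For the inclusion into $B$, I expand a generic element $\vv_i + \sum_{j \neq i} \lambda_j (\vv_j - \vv_i)$ and regroup the coefficients, setting $\mu_i \eqdef 1 - \sum_{j \neq i} \lambda_j$ and $\mu_j \eqdef \lambda_j$ for $j \neq i$, so that $\sum_{j=0}^n \mu_j = 1$. For the reverse inclusion, I use the key identity $\sum_{j=0}^n \mu_j \vv_j - \vv_i = \sum_{j=0}^n \mu_j (\vv_j - \vv_i) = \sum_{j \neq i} \mu_j (\vv_j - \vv_i)$, valid because $\sum_{j=0}^n \mu_j = 1$ and the $j = i$ term vanishes, together with the equivalence \eqref{e:equiv-def-aff-sub-sp} of Lemma~\thref{l:equiv-def-aff-sub-sp}, to read off membership in $\vv_i + \Span{\vv_j - \vv_i}_{j \neq i}$.

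Finally, since the set $B$ contains no reference to $i$, all the sets $\vv_i + \Span{\vv_j - \vv_i}_{j \neq i}$ coincide with $B$; the case $i = 0$ is exactly the definition of $\calF$, which yields $\calF = B$, that is \eqref{e:baryc-closure-is-aff-sub-sp-2}, and the equality for arbitrary $i$ is precisely \eqref{e:baryc-closure-is-aff-sub-sp-1}. A shortcut for the inclusion $B \subset \calF$ is also available through Lemma~\thref{l:closed-under-baryc-is-aff-sub-sp}, since $\calF$ is an affine subspace containing every $\vv_j$, hence closed under barycenter and therefore containing every affine combination; but the direct computation above is self-contained. The only real care needed is the bookkeeping with the index set $[0..n] \setminus \{i\}$: the coefficient $\mu_i$ is not free but forced by the normalization $\sum_{j=0}^n \mu_j = 1$, and it is exactly the vanishing of the $j = i$ term in $\sum_{j=0}^n \mu_j(\vv_j - \vv_i)$ that makes the origin interchangeable among the $\vv_i$. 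None of this is deep, so I expect no serious obstacle beyond keeping the summation indices and the single lost degree of freedom straight.
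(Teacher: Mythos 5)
Your proposal is correct and follows essentially the same route as the paper: fix $i$, show that $\vv_i + \Span{\vv_j - \vv_i}_{j \neq i}$ equals the barycentric set by regrouping coefficients with $\mu_i \eqdef 1 - \sum_{j \neq i}\lambda_j$, and deduce both identities from the $i$-independence of that set (the paper writes this as a chain of equivalences rather than a double inclusion, which is only a cosmetic difference). The only divergence is that the paper isolates the case $n = 0$ as a separate first step, whereas you absorb it into the convention $\Span{\emptyset} \eqdef \{\zzero\}$; both are fine.
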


\begin{proof}
  \proofpar{Case $n=0$}
  Direct consequence of
  Definition~\threfc{d:aff-sub-sp}{with $\Ep=\{0\}$}, and
  Lemma~\thref{LM-l:trivial-subspaces}.

  \proofparskip{Case $n\geq1$}
  From
  Definition~\thref{d:aff-sub-sp},
  $\calF$ is an affine subspace.\\
  Let~$i\in[0..n]$.
  Let $\calF_i\eqdef\vv_i+\Span{\vv_j-\vv_i}_{j\in[0..n]\setminus\{i\}}$ and
  \[
    \calH \eqdef
    \left\{ \sum_{j = 0}^n \mu_j \vv_j \in \matRd \rightst \left.
      \forall j \in [0..n],\; \mu_j \in \matR \CONJ \sum_{j = 0}^n \mu_j = 1
    \right\}.
  \]
  Then, from
  Definition~\thref{LM-d:finite-dimensional-subspace}, and
  \assume{ring properties of~$\matR$},
  we have
  \begin{align*}
    \xx \in \calF_i
    &\EQUIV
      \exists (\lambda_j)_{j \in [0..n] \setminus \{ i \}} \in \matRn,\quad
      \xx = \vv_i + \sum_{j = 0, j \neq i}^n \lambda_j (\vv_j - \vv_i)\\
    &\EQUIV
      \exists (\lambda_j)_{j \in [0..n] \setminus \{ i \}} \in \matRn,\quad
      \xx = \left( 1 - \sum_{j = 0, j \neq i}^n \lambda_j \right) \vv_i +
      \sum_{j = 0, j \neq i}^n \lambda_j \vv_j\\
    &\EQUIV
      \exists (\mu_j)_{j \in [0..n]} \in \matRnpi,\quad
      \xx = \sum_{j = 0}^n \mu_j \vv_j \in \matRd
      \CONJ \sum_{j = 0}^n \mu_j = 1\\
    &\EQUIV \xx \in \calH,
  \end{align*}
  where $\mu_j\eqdef\lambda_j$ for all $j\in[0..n]\setminus\{i\}$, and
  $\mu_i\eqdef(1-\sum_{j=0,j\neq i}^n\lambda_j)$, so that $\sum_{j=0}^n\mu_j=1$
  (from left to right), and $\lambda_j\eqdef\mu_j$ for all
  $j\in[0..n]\setminus\{i\}$ (from right to left).\\
  Therefore, we have the two equalities.
\end{proof}

\subsection{Affine maps and submaps}
\label{ss:aff-maps-submaps}

\begin{definition}[affine map]
  \label{d:aff-map}
  \mbox{}\hfill
  Let~$E$ and~$F$ be {\vectorspace}s.\\
  Let~$\Ep$ be a vector subspace of~$E$ and~$\xxo\in E$.
  Let~$\calEp\eqdef\xxo+\Ep$.
  A function $f:\ArcalEpF$ is said to be an {\em affine map} iff
  there exists $\phi\in\LSp{\Ep}{F}$ and $\cc\in F$ such that
  \begin{equation}
    \label{e:aff-map}
    \forall \uup \in \Ep,\quad
    f (\xxo + \uup) = \cc + \phi (\uup),
  \end{equation}
  also denoted $f=\cc+\phi$ with origin~$\xxo$ (and we may omit the origin when
  there is no ambiguity).
\end{definition}

\begin{remark}
  For instance, note that a function $f:\ArEF$ ({\ie} with
  $\calEp\eqdef\Ep\eqdef E$, and origin~$\zzero$) is an affine mapping iff
  there exists~$\phi\in\LEF$ and~$\cc\in F$ such that~$f=\cc+\phi$, {\ie}
  for all vector $\uu\in E$, $f(\zzero+\uu)=\cc+\phi(\uu)$, or equivalently,
  for all point $\xx\in E$, $f(\xx)=\cc+\phi(\xx-\zzero)$.
\end{remark}

\begin{lemma}[equivalent definition of affine map]
  \label{l:equiv-def-aff-map}
  \mbox{}\hfill
  Let~$E$ and~$F$ be {\vectorspace}s.\\
  Let~$\Ep$ be a vector subspace of~$E$ and~$\xxo\in E$.
  Let~$\calEp\eqdef\xxo+\Ep$.
  Let~$f:\ArcalEpF$.\\
  Then, $f$ is an affine map iff
  there exists $\phi\in\LSp{\Ep}{F}$ and $\cc\in F$ such that
  \begin{equation}
    \label{e:equiv-def-aff-map}
    \forall \xxp \in \calEp,\quad
    f (\xxp) = \cc + \phi (\xxp - \xxo).
  \end{equation}
\end{lemma}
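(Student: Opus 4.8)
The plan is to prove the equivalence by the obvious change of variable $\uup = \xxp - \xxo$, which exchanges quantification over $\uup \in \Ep$ (as in Definition~\ref{d:aff-map}) with quantification over $\xxp \in \calEp$. The only thing to check is that this substitution stays inside the correct sets, so that $\phi$, which is defined only on~$\Ep$, is always applied to a legitimate argument. Both implications reuse the same pair $(\phi, \cc)$, so no new data need to be constructed.

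For the forward implication, I would assume $f$ is affine, so that Definition~\ref{d:aff-map} provides $\phi \in \LSp{\Ep}{F}$ and $\cc \in F$ with $f(\xxo + \uup) = \cc + \phi(\uup)$ for all $\uup \in \Ep$. Given an arbitrary $\xxp \in \calEp = \xxo + \Ep$, the equivalence~\eqref{e:equiv-def-aff-sub-sp} of Lemma~\ref{l:equiv-def-aff-sub-sp} guarantees that $\uup \eqdef \xxp - \xxo \in \Ep$, while the additive group structure gives $\xxo + \uup = \xxp$. Substituting this $\uup$ into the defining identity~\eqref{e:aff-map} yields $f(\xxp) = \cc + \phi(\xxp - \xxo)$, which is exactly~\eqref{e:equiv-def-aff-map}.

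For the converse, I would assume the existence of $\phi \in \LSp{\Ep}{F}$ and $\cc \in F$ satisfying~\eqref{e:equiv-def-aff-map}, and recover the form~\eqref{e:aff-map} required by Definition~\ref{d:aff-map}. For an arbitrary $\uup \in \Ep$, the point $\xxp \eqdef \xxo + \uup$ lies in $\calEp = \xxo + \Ep$ by the very definition of $\calEp$, and $\xxp - \xxo = \uup$. Applying~\eqref{e:equiv-def-aff-map} at this $\xxp$ then gives $f(\xxo + \uup) = \cc + \phi(\uup)$ for all $\uup \in \Ep$, so $f$ is an affine map with the same $\phi$ and $\cc$.

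There is essentially no hard step: the whole content is the bijectivity of the translation $\uup \mapsto \xxo + \uup$ between $\Ep$ and $\calEp$, supplied by the characterization of affine subspaces in Lemma~\ref{l:equiv-def-aff-sub-sp}. The only point demanding a little care — and the closest thing to an obstacle — is being explicit that $\xxp - \xxo \in \Ep$ in the forward direction, since otherwise $\phi(\xxp - \xxo)$ would be ill-typed; this is precisely what~\eqref{e:equiv-def-aff-sub-sp} delivers, so the well-definedness is never in doubt once that lemma is invoked.
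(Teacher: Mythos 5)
Your proof is correct and follows exactly the paper's route: the paper likewise derives the equivalence directly from Definition~\ref{d:aff-map}, Definition~\ref{d:aff-sub-sp}, and the equivalence~\eqref{e:equiv-def-aff-sub-sp} of Lemma~\ref{l:equiv-def-aff-sub-sp} (which supplies $\xxp-\xxo\in\Ep$). You have merely written out the change of variable $\uup=\xxp-\xxo$ that the paper leaves implicit.
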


\begin{proof}
  Direct consequence of
  Definition~\thref{d:aff-map},
  Definition~\thref{d:aff-sub-sp}, and
  Lemma~\threfc{l:equiv-def-aff-sub-sp}{\eqref{e:equiv-def-aff-sub-sp},
    thus $\xxp-\xxo\in\Ep$}.
\end{proof}

\begin{lemma}[change of origin in affine map]
  \label{l:change-orig-aff-map}
  \mbox{}\hfill
  Let~$E$ and~$F$ be {\vectorspace}s.
  Let~$\Ep$ be a vector subspace of~$E$ and~$\xxo\in E$.
  Let~$\calEp\eqdef\xxo+\Ep$.
  Let~$\phi\in\LSp{\Ep}{F}$ and~$\cc\in F$.
  Let~$f\eqdef\cc+\phi$ be an affine map (with origin~$\xxo$).
  Let~$\yypo\in\calEp$.
  Then, we have
  \begin{equation}
    \label{e:change-orig-aff-map}
    \forall \uup \in \Ep,\quad
    f (\yypo + \uup) = (\cc + \phi(\yypo - \xxo)) + \phi(\uup).
  \end{equation}
  Thus, setting $\ccp\eqdef\cc+\phi(\yypo-\xxo)$, we have $f=\ccp+\phi$ with
  origin~$\yypo$.
\end{lemma}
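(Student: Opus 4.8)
The plan is to pull the computation back to the defining property~\eqref{e:aff-map} of $f$ as an affine map with origin~$\xxo$, after rewriting the point $\yypo+\uup$ as $\xxo$ plus a displacement that lies in the direction~$\Ep$. First I would observe that, since $\yypo\in\calEp=\xxo+\Ep$, the equivalence~\eqref{e:equiv-def-aff-sub-sp} of Lemma~\thref{l:equiv-def-aff-sub-sp} yields $\yypo-\xxo\in\Ep$. Because $\Ep$ is a vector subspace and $\uup\in\Ep$, Lemma~\thref{LM-l:closed-under-vector-operations-is-subspace} then gives $(\yypo-\xxo)+\uup\in\Ep$, so that $\xxo+((\yypo-\xxo)+\uup)$ is a legitimate argument for the defining equation of~$f$.

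Next I would chain the equalities. Using the additive group axioms of~$E$ (Definition~\thref{LM-d:space}) to write $\yypo+\uup=\xxo+((\yypo-\xxo)+\uup)$, then applying~\eqref{e:aff-map} of Definition~\thref{d:aff-map} with the vector $(\yypo-\xxo)+\uup\in\Ep$, and finally using the linearity of~$\phi$ (Definition~\thref{LM-d:linear-map}), I obtain
\begin{equation*}
  f(\yypo+\uup)
  = f(\xxo+((\yypo-\xxo)+\uup))
  = \cc+\phi((\yypo-\xxo)+\uup)
  = \cc+\phi(\yypo-\xxo)+\phi(\uup),
\end{equation*}
which is exactly~\eqref{e:change-orig-aff-map} after grouping the first two terms.

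For the final assertion, I would set $\ccp\eqdef\cc+\phi(\yypo-\xxo)\in F$ and note that $\yypo\in\calEp$ entails $\calEp=\yypo+\Ep$ by Lemma~\thref{l:aff-sub-sp-inv-chg-orig}; hence $f:\ArcalEpF$ may be viewed with domain $\yypo+\Ep$, linear part $\phi\in\LSp{\Ep}{F}$, and constant $\ccp$, so that~\eqref{e:change-orig-aff-map} is precisely the statement $f=\ccp+\phi$ with origin~$\yypo$ in the sense of Definition~\thref{d:aff-map}. There is essentially no obstacle here: the only point requiring care is the very first step, namely justifying $\yypo-\xxo\in\Ep$ so that the defining property of the affine map can be invoked on $\yypo+\uup$; everything else is a one-line computation relying on the subspace and linearity axioms.
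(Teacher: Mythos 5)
Your proposal is correct and follows essentially the same route as the paper: establish $\yypo-\xxo\in\Ep$ via Lemma~\ref{l:equiv-def-aff-sub-sp}, use closure of the subspace to rewrite $\yypo+\uup$ as $\xxo$ plus a vector of~$\Ep$, apply the defining property of the affine map, and conclude by linearity of~$\phi$ and the group axioms. Your explicit appeal to Lemma~\ref{l:aff-sub-sp-inv-chg-orig} for the final assertion is a harmless (and arguably clarifying) addition to the paper's more compressed citation list.
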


\begin{proof}
  Direct consequence of
  Definition~\thref{d:aff-map},
  Definition~\thref{d:aff-sub-sp},
  Lemma~\threfc{l:equiv-def-aff-sub-sp}{%
    \eqref{e:equiv-def-aff-sub-sp}, thus $\yypo-\xxo\in\Ep$},
  Lemma~\threfc{LM-l:closed-under-vector-operations-is-subspace}{%
    thus $\yypo-\xxo+\uup\in\Ep$},
  Lemma~\thref{LM-l:linear-map-preserves-linear-combinations}, and
  Definition~\threfc{LM-d:space}{additive abelian group properties}.
\end{proof}

\begin{remark}
  One can thus change the origin of an affine mapping defined over $E$.
  Indeed, let an affine mapping $f:\ArEF$ ({\ie} with
  $\calEp\eqdef\Ep\eqdef E$, and origin~$\zzero$)
  such that~$f=\cc+\phi$. Thus, we have for all point $\xxo\in E$ and for all
  vector $\uu\in E$, $f(\xxo+\uu)=(\cc+\phi(\xxo-\zzero))+\phi(\uu)$.
\end{remark}

\begin{lemma}[range of affine map is affine subspace]
  \label{l:rg-aff-map-is-aff-sub-sp}
  \mbox{}\\
  Let~$E$ and~$F$ be {\vectorspace}s.
  Let~$\Ep$ be a vector subspace of~$E$ and~$\xxo\in E$.
  Let~$\calEp\eqdef\xxo+\Ep$.\\
  Let~$\phi\in\LEpF$ and~$\cc\in F$.
  Let~$f\eqdef\cc+\phi:\ArcalEpF$ be an affine map.\\
  Then, $f(\calEp)=\cc+\phi(\Ep)$ is an affine subspace of~$F$.
\end{lemma}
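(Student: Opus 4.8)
The plan is to prove the two assertions in turn: first the set equality $f(\calEp) = \cc + \phi(\Ep)$, and then the fact that the right-hand side is an affine subspace of $F$. Both follow almost immediately by unfolding the relevant definitions, so this is a short argument whose only delicate point is bookkeeping in the image computation.

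First I would compute the image directly. By Definition~\thref{d:aff-sub-sp}, the points of $\calEp$ are exactly those of the form $\xxo + \uup$ with $\uup \in \Ep$, and conversely every such sum lies in $\calEp$; hence $\uup$ ranges over all of $\Ep$ as $\xxp$ ranges over $\calEp$. By Definition~\thref{d:aff-map}, for each $\uup \in \Ep$ we have $f(\xxo + \uup) = \cc + \phi(\uup)$. Taking images therefore yields
\[
  f (\calEp) = \{ f (\xxo + \uup) \st \uup \in \Ep \} = \{ \cc + \phi (\uup) \st \uup \in \Ep \} = \cc + \phi (\Ep),
\]
where the final expression uses the translation notation $\cc + \phi(\Ep) = \{ \cc + \vv \st \vv \in \phi(\Ep) \}$ from Definition~\thref{d:aff-sub-sp}.

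Next I would recognize $\phi(\Ep)$ as a vector subspace of $F$. Since $\Ep$ is itself a {\vectorspace} and $\phi \in \LEpF$ is a linear map from $\Ep$ to $F$, its range $\phi(\Ep)$ is a vector subspace of $F$ by Lemma~\thref{l:rg-lin-map-is-sub-sp}. Applying Definition~\thref{d:aff-sub-sp} with the vector subspace $\phi(\Ep)$ of $F$ as direction and $\cc \in F$ as origin shows that $\cc + \phi(\Ep)$ is an affine subspace of $F$; combined with the equality above, this gives $f(\calEp) = \cc + \phi(\Ep)$ as an affine subspace, as claimed.

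I do not expect any genuine obstacle: the result is essentially the composition of ``the image of $\calEp$ under $f$ is the translate by $\cc$ of the range of $\phi$'' with ``the range of a linear map is a subspace''. The only step needing a little care is the set-image computation, where one must use $\calEp = \xxo + \Ep$ so that the vectors $\uup$ sweep out precisely $\Ep$, and the fact that $\phi$ is defined on all of $\Ep$, so that no point of $\calEp$ is missed and none is spuriously added.
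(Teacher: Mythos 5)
Your proof is correct and follows exactly the paper's route: the paper's proof is a one-line ``direct consequence of'' Definition~\ref{d:aff-map}, Lemma~\ref{l:rg-lin-map-is-sub-sp}, and Definition~\ref{d:aff-sub-sp}, which are precisely the three ingredients you unfold. You have simply written out the image computation and the final application of the definition of affine subspace in detail.
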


\begin{proof}
  Direct consequence of
  Definition~\thref{d:aff-map},
  Lemma~\thref{l:rg-lin-map-is-sub-sp}, and
  Definition~\thref{d:aff-sub-sp}.
\end{proof}

\begin{remark}
  Thus, affine maps can be considered as functions from an affine
  subspace~$\calEp$ of~$E$ to another affine subspace~$\calFp$ of~$F$.
\end{remark}

\begin{definition}[set of affine maps]
  \label{d:set-aff-maps}
  \mbox{}\\
  Let~$E$ and~$F$ be {\vectorspace}s.
  Let~$\Ep$ (resp.~$\Fp$) be a vector subspace of~$E$ (resp.~$F$).\\
  Let~$\calEp$ (resp.~$\calFp$) be an affine subspace of~$E$ (resp.~$F$) of
  direction~$\Ep$ (resp.~$\Fp$).\\
  The {\em set of affine maps from~$\calEp$ to~$\calFp$} is
  $\AffcalEpcalFp\eqdef\{f:\ArcalEpcalFp\st f\mbox{ is affine map}\}$.
\end{definition}

\begin{lemma}[space of affine maps]
  \label{l:space-aff-maps}
  \mbox{}\\
  Let~$E$ and~$F$ be {\vectorspace}s.
  Then, $\AffEF$ is a vector subspace of $\FEF$.
\end{lemma}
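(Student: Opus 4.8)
The plan is to apply the standard subspace criterion (Lemma~\thref{LM-l:closed-under-linear-combination-is-subspace}): I will check that $\AffEF$ is a nonempty subset of $\FEF$ that is closed under linear combinations. Throughout I will use the characterization of affine maps defined over the whole space~$E$, as in Definition~\thref{d:aff-map} and the remark following it: a function $f:\ArEF$ (taken with $\calEp\eqdef\Ep\eqdef E$ and origin~$\zzero$) is an affine map iff there exist $\cc\in F$ and $\phi\in\LEF$ such that $f=\cc+\phi$, {\ie} $f(\xx)=\cc+\phi(\xx)$ for all $\xx\in E$.

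First I would establish nonemptiness by exhibiting one affine map: the constant zero function of $\FEF$ is affine, since it equals $\cc+\phi$ with $\cc\eqdef\zzero\in F$ and $\phi\eqdef\zzero\in\LEF$ the zero linear map (linear by Definition~\thref{LM-d:linear-map}).

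Next comes closure under linear combinations, which is the heart of the argument. Let $f,g\in\AffEF$ and $\lambda,\mu\in\matK$. By Definition~\thref{d:aff-map} I write $f=\cc_f+\phi_f$ and $g=\cc_g+\phi_g$ with $\cc_f,\cc_g\in F$ and $\phi_f,\phi_g\in\LEF$. Then, using the definition of the vector operations on $\FEF$ and the additive abelian group structure of~$F$ (Definition~\thref{LM-d:space}), I would compute for all $\xx\in E$
\[
  (\lambda f + \mu g)(\xx)
  = \lambda (\cc_f + \phi_f(\xx)) + \mu (\cc_g + \phi_g(\xx))
  = (\lambda \cc_f + \mu \cc_g) + (\lambda \phi_f + \mu \phi_g)(\xx).
\]
Setting $\cc\eqdef\lambda\cc_f+\mu\cc_g\in F$ and $\phi\eqdef\lambda\phi_f+\mu\phi_g$, it remains to observe that $\phi\in\LEF$, which is exactly the fact that the set of linear maps $\LEF$ is itself a {\vectorspace}, hence stable under linear combinations; I would invoke this from the earlier material on linear maps (Definition~\thref{LM-d:set-of-linear-maps}). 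Thus $\lambda f+\mu g=\cc+\phi$ is affine, {\ie} $\lambda f+\mu g\in\AffEF$.

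I do not expect a genuine obstacle here; the only point requiring care is the bookkeeping of the two superposed layers of vector-space structure — the operations in~$F$ acting on the constant parts~$\cc$, and the operations in~$\LEF$ acting on the linear parts~$\phi$ — together with the verification that they recombine into the single operation of~$\FEF$. Once the linear parts are recognized as living in the {\vectorspace}~$\LEF$, closure of $\AffEF$ follows at once, and Lemma~\thref{LM-l:closed-under-linear-combination-is-subspace} lets me conclude that $\AffEF$ is a vector subspace of~$\FEF$.
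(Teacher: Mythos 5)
Your proof is correct and follows essentially the same route as the paper's: both reduce to the subspace criterion via the identity $\lambda f+\mu g=(\lambda\cc_f+\mu\cc_g)+(\lambda\phi_f+\mu\phi_g)$ and the fact that $\LEF$ is itself a {\vectorspace}. The only cosmetic difference is that you make the nonemptiness check explicit, which the paper leaves implicit in its citation of the subspace lemma.
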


\begin{proof}
  Direct consequence of
  Definition~\threfc{d:set-aff-maps}{%
    with $\calEp\eqdef\Ep\eqdef E$ and\linebreak $\calFp\eqdef\Fp\eqdef F$},
  Lemma~\threfc{LM-l:space-of-functions-to-space}{$\AffEF\subset\FEF$},
  Lemma~\threfc{LM-l:closed-under-linear-combination-is-subspace}{
    $\lambda f+\lambdap\fp=
    (\lambda \cco+\lambda\ccpo)+(\lambda\phi+\lambdap\phip)$}, and
  Lemma~\thref{LM-l:space-of-linear-maps}.
\end{proof}

\begin{lemma}[output restriction of affine map]
  \label{l:out-restr-aff-map}
  \mbox{}\hfill
  Let~$E$ and~$F$ be {\vectorspace}s.\\
  Let~$\Ep$ (resp.~$\Fp$) be a vector subspace of~$E$ (resp.~$F$).
  Let~$\xxo\in E$ and~$\yyo\in F$.\\
  Let~$\calEp\eqdef\xxo+\Ep$ and~$\calFp\eqdef\yyo+\Fp$.
  Let~$\phi\in\LEpF$ and~$\cc\in F$.
  Let~$f\eqdef\cc+\phi\in\AffcalEpF$.\\
  Assume that~$\phi(\Ep)\subset\Fp$ and~$\cc\in\calFp$.
  Then, we have $f(\calEp)\subset\calFp$ and $f\in\AffcalEpcalFp$.
\end{lemma}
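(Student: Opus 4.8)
The plan is to establish the two conclusions in order, relying entirely on the affine-algebra lemmas already proved just above. First I would dispatch the range inclusion $f(\calEp)\subset\calFp$. By Lemma~\thref{l:rg-aff-map-is-aff-sub-sp}, the image of the affine map $f=\cc+\phi$ is exactly $f(\calEp)=\cc+\phi(\Ep)$. Since by hypothesis $\cc\in\calFp$ and $\phi(\Ep)\subset\Fp$, every element $\cc+\phi(\uup)$ with $\uup\in\Ep$ is of the form (a point of $\calFp$) plus (a vector of $\Fp$). I would therefore invoke Lemma~\thref{l:aff-plus-vect-is-aff-sub-sp}, applied to the affine subspace $\calFp=\yyo+\Fp$ of direction~$\Fp$ with the subspace $\Fp$ itself added, to get $\calFp+\Fp=\calFp$; hence $f(\calEp)=\cc+\phi(\Ep)\subset\calFp+\Fp=\calFp$.

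For the second conclusion I would simply unfold Definition~\thref{d:set-aff-maps}: membership in $\AffcalEpcalFp$ means precisely that $f$ is a function from $\calEp$ into $\calFp$ that is an affine map. The codomain constraint $f(\calEp)\subset\calFp$ is exactly the inclusion just proved, and the affine-map property is inherited unchanged, since $f=\cc+\phi\in\AffcalEpF$ is already an affine map in the sense of Definition~\thref{d:aff-map} with linear part~$\phi$ and constant~$\cc$; restricting the codomain from $F$ to $\calFp$ does not alter the defining relation~\eqref{e:aff-map}. Thus $f\in\AffcalEpcalFp$.

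There is no genuine obstacle here; the only point requiring care is the bookkeeping of which vector subspace plays the role of the \emph{direction} when applying Lemma~\thref{l:aff-plus-vect-is-aff-sub-sp}. One must add a subspace \emph{of the direction}~$\Fp$ of $\calFp$, and the hypothesis $\phi(\Ep)\subset\Fp$ guarantees precisely that $\phi(\Ep)$ lies inside~$\Fp$ (with $\Fp$ itself being trivially a subspace of $\Fp$). A purely elementwise alternative would bypass these lemmas: given $\xxp\in\calEp$, set $\uup\eqdef\xxp-\xxo\in\Ep$ via Lemma~\thref{l:equiv-def-aff-sub-sp} so that $f(\xxp)=\cc+\phi(\uup)$, then write $\cc=\yyo+\wpr$ with $\wpr\in\Fp$ and observe $\wpr+\phi(\uup)\in\Fp$, yielding $f(\xxp)\in\yyo+\Fp=\calFp$. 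I would favor the lemma-based route, as it matches the compact ``direct consequence'' style of the surrounding text and keeps the argument to two short steps.
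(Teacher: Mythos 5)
Your proposal is correct and follows essentially the same route as the paper: compute the range via Lemma~\thref{l:rg-aff-map-is-aff-sub-sp} as $f(\calEp)=\cc+\phi(\Ep)$, show it lands in~$\calFp$ using the two hypotheses, and conclude with Definition~\thref{d:set-aff-maps}. The only (harmless) variation is in the middle step, where you close the inclusion via the absorption identity $\calFp+\Fp=\calFp$ from Lemma~\thref{l:aff-plus-vect-is-aff-sub-sp}, whereas the paper rewrites $\calFp=\cc+\Fp$ by the change-of-origin Lemma~\thref{l:aff-sub-sp-inv-chg-orig} (using $\cc\in\calFp$) and then uses $\phi(\Ep)\subset\Fp$ directly.
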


\begin{proof}
  Direct consequence of
  Lemma~\thref{l:rg-aff-map-is-aff-sub-sp},
  Lemma~\threfc{l:aff-sub-sp-inv-chg-orig}{%
    thus $f(\calEp)=\cc+\phi(\Ep)\subset\cc+\Fp=\calFp$}, and
  Definition~\thref{d:set-aff-maps}.
\end{proof}

\begin{lemma}[affine submap]
  \label{l:aff-sub-map}
  \mbox{}\hfill
  Let~$E$ and~$F$ be {\vectorspace}s.\\
  Let~$\phi\in\LEF$ and~$\cc\in F$.
  Let~$f\eqdef\cc+\phi\in\AffEF$ with origin~$\zzero$.\\
  Let~$\Ep$ (resp.~$\Fp$) be a vector subspace of~$E$ (resp.~$F$).
  Let~$\xxo\in E$ and~$\yyo\in F$.
  Let~$\calEp\eqdef\xxo+\Ep$ and~$\calFp\eqdef\yyo+\Fp$.
  Let~$\phip\eqdef\phi_{|\Ep}$ and~$\fp\eqdef f_{|\calEp}$.
  Let $\ccp\eqdef\cc+\phi(\xxo-\zzero)\in F$.\\
  Then, we have $\fp=\ccp+\phip$ with origin~$\xxo$, and
  $\fp\in\AffcalEpF$.

  Moreover, if~$\phip(\Ep)\subset\Fp$ and~$\ccp\in\calFp$, then we have
  $\fp\in\AffcalEpcalFp$, and~$\fp$ is called the
  {\em affine submap of~$f$ on~$\calEp$}.
\end{lemma}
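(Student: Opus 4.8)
The plan is to unwind the definition of affine map (Definition~\ref{d:aff-map}) by a one-line computation, and then to obtain the ``moreover'' part as a direct application of the output-restriction lemma (Lemma~\ref{l:out-restr-aff-map}). First I would fix $\uup\in\Ep$. Since $\xxo+\uup\in\calEp$, the restriction gives $\fp(\xxo+\uup)=f(\xxo+\uup)$. Because $f=\cc+\phi$ has origin~$\zzero$, Definition~\ref{d:aff-map} reads $f(\uu)=\cc+\phi(\uu)$ for every $\uu\in E$; evaluating at $\uu\eqdef\xxo+\uup$ and splitting $\phi(\xxo+\uup)=\phi(\xxo-\zzero)+\phi(\uup)$ by linearity of~$\phi$ (Lemma~\ref{LM-l:linear-map-preserves-linear-combinations}, together with $\xxo-\zzero=\xxo$), I get $\fp(\xxo+\uup)=(\cc+\phi(\xxo-\zzero))+\phi(\uup)=\ccp+\phi(\uup)$. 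This is exactly the content of the change-of-origin lemma (Lemma~\ref{l:change-orig-aff-map}) applied to~$f$ on~$E$ with new base point~$\xxo$, so either route produces the same constant~$\ccp$.

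Next I would record that $\phi(\uup)=\phip(\uup)$ for $\uup\in\Ep$ since $\phip=\phi_{|\Ep}$, and that $\phip$, being the restriction of the linear map~$\phi$ to the subspace~$\Ep$, belongs to~$\LEpF$. Therefore for all $\uup\in\Ep$ we have $\fp(\xxo+\uup)=\ccp+\phip(\uup)$, which is precisely the defining relation~\eqref{e:aff-map} of an affine map with linear part~$\phip$, constant~$\ccp$, and origin~$\xxo$. Hence $\fp=\ccp+\phip$ with origin~$\xxo$, and by Definition~\ref{d:set-aff-maps} (read with target affine subspace~$F$ itself) I conclude $\fp\in\AffcalEpF$.

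For the second assertion I would simply invoke Lemma~\ref{l:out-restr-aff-map} on the affine map $\fp=\ccp+\phip\in\AffcalEpF$ just built, with~$\phip$ and~$\ccp$ playing the roles of the linear part and constant there. Its two hypotheses, $\phip(\Ep)\subset\Fp$ and $\ccp\in\calFp$, are exactly the standing assumptions of the ``moreover'' part, so the lemma yields $\fp(\calEp)\subset\calFp$ and $\fp\in\AffcalEpcalFp$, which is the claimed membership for the affine submap.

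The work here is entirely routine; the only thing to watch is the bookkeeping. One must check that $\phip$ really lands in~$\LEpF$ rather than merely in~$\FSp{\Ep}{F}$, keep the identification $\xxo-\zzero=\xxo$ consistent so that the $\ccp$ of the statement coincides with the constant produced by the origin change, and, when applying Lemma~\ref{l:out-restr-aff-map}, remember that its $\phi$ and $\cc$ must be instantiated by $\phip$ and $\ccp$ (the data of~$\fp$) and not by the original~$\phi$ and~$\cc$.
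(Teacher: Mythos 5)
Your proposal is correct and follows essentially the same route as the paper: the identity $\fp(\xxo+\uup)=(\cc+\phi(\xxo-\zzero))+\phip(\uup)=\ccp+\phip(\uup)$ is obtained via the change-of-origin lemma (or the equivalent direct linearity computation), the restriction $\phip$ is noted to remain linear so that Definition~\ref{d:aff-map} applies with origin~$\xxo$, and the ``moreover'' part is the same application of Lemma~\ref{l:out-restr-aff-map} instantiated with $\fp$, $\phip$ and~$\ccp$. Nothing is missing.
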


\begin{proof}
  \proofpar{$\fp=\ccp+\phip\in\AffcalEpF$}
  Let~$\uup\in\Ep$.
  Then, from
  Definition~\threfc{d:aff-sub-sp}{thus $\xxo+\uup\in\calEp$},
  Definition~\threfc{d:aff-map}{%
    with $\calEp\eqdef\Ep\eqdef E$ and origin $\zzero$}, and
   Lemma~\threfc{l:change-orig-aff-map}{%
     for $f$, with $\xxo\eqdef\zzero$ and $\yypo\eqdef\xxo$},
  we have
  \[
    \fp (\xxo + \uup) = f (\xxo + \uup)
    = (\cc + \phi (\xxo - \zzero) ) + \phi (\uup)
    = \ccp + \phip (\uup).
  \]
  Then, from
  \assume{restriction to subspace preserves linearity
    (thus $\phip$ belongs to $\LSp{\Ep}{F}$)},
  Definition~\threfc{d:aff-map}{%
    with $\fp$, $\phip$, $\ccp$, $\calEp\eqdef\xxo+\Ep$}, and
  Definition~\threfc{d:set-aff-maps}{with $\calFp\eqdef\Fp\eqdef F$},
  we have $\fp=\ccp+\phip$ (with origin $\xxo$) and $\fp\in\AffcalEpF$.

  \proofparskip{$\fp\in\AffcalEpcalFp$}\\
  Direct consequence of
  Lemma~\threfc{l:out-restr-aff-map}{with $\fp$, $\phip$ and $\ccp$}.
\end{proof}

\begin{lemma}[equivalent definition of affine map (finite dimension)]
  \label{l:equiv-def-aff-map-finite-dim}
  \mbox{}\\
  Let~$p,q\geq1$.
  Let~$f:\ArRpRq$.
  Then, we have
  \begin{equation}
    \label{e:equiv-def-aff-map-finite-dim}
    f \in \AffRpRq \EQUIV
    \exists A \in \calM_{q,p},\;
    \exists \cc \in \matRq,\quad
    \forall \xx \in \matRp,\quad
    f (\xx) = \cc + A \xx.
  \end{equation}
\end{lemma}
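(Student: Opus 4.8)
The plan is to reduce the statement to the standard correspondence between linear maps and matrices in finite dimension, which is among the assumed basic results on finite-dimensional linear algebra. First I would specialize the notion of affine map to the case of a map defined on the whole space with origin~$\zzero$. By Definition~\thref{d:aff-map} (taken with $\calEp\eqdef\Ep\eqdef\matRp$ and origin~$\zzero$, exactly as in the remark following it), a function $f:\ArRpRq$ belongs to~$\AffRpRq$ iff there exist $\phi\in\LSp{\matRp}{\matRq}$ and $\cc\in\matRq$ such that $f=\cc+\phi$, that is, for all $\xx\in\matRp$, $f(\xx)=\cc+\phi(\xx)$ (using $\phi(\xx-\zzero)=\phi(\xx)$). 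This turns the goal into proving that having a linear part $\phi$ is equivalent to having a matrix~$A$.

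Then I would bridge between $\phi$ and $A$, treating the two implications separately. For the implication from left to right, given such a~$\phi$, I would invoke \assume{the rules for matrix--vector product} together with \assume{the representation of a linear map between finite-dimensional spaces by its matrix in the canonical bases}: there exists $A\in\calM_{q,p}$ with $\phi(\xx)=A\xx$ for all $\xx\in\matRp$, whence $f(\xx)=\cc+A\xx$ for all $\xx$. For the converse, given $A\in\calM_{q,p}$ and $\cc\in\matRq$ with $f(\xx)=\cc+A\xx$, I would observe that $\xx\mapsto A\xx$ is a linear map (again by the rules for the matrix--vector product), so it defines an element $\phi\in\LSp{\matRp}{\matRq}$, and then $f=\cc+\phi$ is an affine map with origin~$\zzero$ by Definition~\thref{d:aff-map}, {\ie} $f\in\AffRpRq$.

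The argument is thus a short chain of equivalences once the two ingredients are in place. The only genuine content — and hence the sole candidate for an obstacle — is the identification of $\LSp{\matRp}{\matRq}$ with $\calM_{q,p}$ through the correspondence $\phi\leftrightarrow A$ given by $\phi(\xx)=A\xx$. Since both the linearity of the matrix--vector product and the matrix representation of linear maps in finite dimension are taken as known in this document, no real difficulty remains, and I expect the detailed proof to be correspondingly brief.
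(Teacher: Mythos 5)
Your proposal is correct and follows essentially the same route as the paper: the paper's proof is a direct consequence of Definition~\ref{d:aff-map} specialized with $\calEp\eqdef\Ep\eqdef\matRp$, origin~$\zzero$ and $\phi\eqdef(\uu\longmapsto A\uu)$, together with the assumed definition of the matrix--vector product. Your slightly more explicit treatment of the two directions and of the correspondence $\phi\leftrightarrow A$ is just an expansion of the same argument.
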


\begin{proof}
  Direct consequence of
  Definition~\threfc{d:aff-map}{%
    with $\calEp\eqdef\Ep\eqdef\matRp$, $F\eqdef\matRq$,
    $\xxo\!\eqdef\!\zzero$, and $\phi\eqdef(\uu\longmapsto A\uu)$}, and
  \assume{the definition of matrix--vector product}.
\end{proof}

\begin{lemma}[affine map preserves barycenter]
  \label{l:aff-map-preserves-baryc}
  \mbox{}\hfill
  Let~$E$ and~$F$ be real {\vectorspace}s.\\
  Let~$f\in\AffEF$.
  Let~$n\in\matN$.
  Let~$\famvert{\vv}{n}$ be $n\!+\!1$ points in~$E$.
  Let~$(\mu_i)_{i\in[0..n]}\in\matR$.
  Then, we have
  \begin{equation}
    \label{e:aff-map-preserves-baryc}
    \sum_{i = 0}^n \mu_j = 1 \IMPLIES
    f \left( \sum_{i = 0}^n \mu_j \vv_j \right)
    = \sum_{i = 0}^n \mu_j f (\vv_j).
  \end{equation}
\end{lemma}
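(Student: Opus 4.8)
The plan is to unfold the definition of an affine map and then reduce the claim to the linearity of its linear part, using the constraint $\sum_{i=0}^n \mu_i = 1$ to absorb the constant term. First I would invoke Definition~\thref{d:aff-map} (together with the remark that follows it, applied in the case $\calEp\eqdef\Ep\eqdef E$ with origin~$\zzero$): since $f\in\AffEF$, there exist $\phi\in\LEF$ and $\cc\in F$ such that $f(\xx)=\cc+\phi(\xx)$ for all $\xx\in E$. I then assume the hypothesis $\sum_{i=0}^n\mu_i=1$ and aim to show the two sides of~\eqref{e:aff-map-preserves-baryc} agree.

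Next I would compute the left-hand side by applying $f$ to the barycenter and invoking the fact that $\phi$ preserves linear combinations (Lemma~\thref{LM-l:linear-map-preserves-linear-combinations}):
\[
  f\left( \sum_{i = 0}^n \mu_i \vv_i \right)
  = \cc + \phi\left( \sum_{i = 0}^n \mu_i \vv_i \right)
  = \cc + \sum_{i = 0}^n \mu_i \phi(\vv_i).
\]
Then I would expand the right-hand side, using the {\vectorspace} axioms (Definition~\thref{LM-d:space}: distributivity of scalar multiplication and the additive abelian group structure) together with the hypothesis $\sum_{i=0}^n\mu_i=1$:
\[
  \sum_{i = 0}^n \mu_i f(\vv_i)
  = \sum_{i = 0}^n \mu_i \left( \cc + \phi(\vv_i) \right)
  = \left( \sum_{i = 0}^n \mu_i \right) \cc + \sum_{i = 0}^n \mu_i \phi(\vv_i)
  = \cc + \sum_{i = 0}^n \mu_i \phi(\vv_i).
\]
Comparing the two displays gives the desired equality, which concludes the proof.

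There is no substantial obstacle here; the computation is elementary. The only point requiring care is the bookkeeping of the constant term~$\cc$: on the left it enters exactly once through the definition $f=\cc+\phi$, whereas on the right it is factored out of the sum as $\left(\sum_{i=0}^n\mu_i\right)\cc$, and it is precisely the barycentric constraint $\sum_{i=0}^n\mu_i=1$ that makes the two coincide. This is exactly what distinguishes barycentric combinations (preserved by all affine maps) from arbitrary linear combinations (preserved only when $\cc=\zzero$, {\ie} by linear maps).
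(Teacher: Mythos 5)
Your proof is correct and follows essentially the same route as the paper's: unfold $f=\cc+\phi$ via Definition~\ref{d:aff-map}, apply the linearity of~$\phi$, and use $\sum_{i=0}^n\mu_i=1$ to identify $\cc$ with $\left(\sum_{i=0}^n\mu_i\right)\cc$. The only cosmetic difference is that the paper transforms the left-hand side into the right-hand side in one chain of equalities, whereas you compute both sides separately and compare; the ingredients are identical.
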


\begin{proof}
  Assume that $\sum_{i=0}^n\mu_j=1$.
  Let~$\phi\in\LEF$ and~$\cc\in F$.
  Then, from
  Definition~\thref{d:aff-map},
  Lemma~\thref{LM-l:linear-map-preserves-linear-combinations}, and
  Definition~\threfc{LM-d:space}{%
    $(F,+)$ is an abelian group, distributivity},
  we have
  \begin{align*}
    f \left( \sum_{i = 0}^n \mu_j \vv_j \right)
    &= \cc + \phi \left( \sum_{i = 0}^n \mu_j \vv_j \right)
      = \left( \sum_{i = 0}^n \mu_j \right) \cc
      + \sum_{i = 0}^n \mu_j \phi (\vv_j)\\
    &= \sum_{i = 0}^n \mu_j (\cc + \phi (\vv_j))
      = \sum_{i = 0}^n \mu_j f (\vv_j).
  \end{align*}
\end{proof}

\begin{definition}[isobarycenter]
  \label{d:isobaryc}
  \mbox{}\hfill
  Let~$E$ be a real {\vectorspace}.
  Let~$n\in\matN$.\\
  Let $\famvert{\vv}{n}$ be $n+1$ points in~$E$.
  The {\em isobarycenter of~$\famvert{\vv}{n}$} is
  denoted~$\bisobaryc{\famvert{\vv}{n}}$, and is defined by
  \begin{equation}
    \label{e:isobaryc}
    \bisobaryc{\famvert{\vv}{n}} \eqdef \frac{1}{n + 1} \sum_{i = 0}^n \vv_i.
  \end{equation}
\end{definition}

\begin{lemma}[affine map preserves isobarycenter]
  \label{l:aff-map-preserves-isobaryc}
  \mbox{}\hfill
  Let~$E$ and~$F$ be real {\vectorspace}s.\\
  Let~$f\in\AffEF$.
  Let~$n\in\matN$.
  Let $\famvert{\vv}{n}$ be $n+1$ points in~$E$.
  Let $\famvert{f(\vv)}{n}\eqdef(f(\vv_i))_{i\in[0..n]}\in F$.
  Then, we have
  $f(\bisobaryc{\famvert{\vv}{n}})=\bisobaryc{\famvert{f(\vv)}{n}}$.
\end{lemma}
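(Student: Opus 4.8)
The plan is to recognize this statement as an immediate corollary of Lemma~\ref{l:aff-map-preserves-baryc} (\emph{affine map preserves barycenter}), specialized to uniform weights. Indeed, the isobarycenter is by Definition~\ref{d:isobaryc} the barycenter of $\famvert{\vv}{n}$ with all coefficients equal to $\frac{1}{n+1}$, so the result should follow by instantiating the general barycenter-preservation property with $\mu_i \eqdef \frac{1}{n+1}$ for all $i \in [0..n]$.

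Concretely, I would first set $\mu_i \eqdef \frac{1}{n+1}$ and observe that these coefficients sum to one, since $\sum_{i=0}^n \frac{1}{n+1} = \frac{n+1}{n+1} = 1$ (using the field properties of~$\matR$). This is the only verification required, and it is what licenses the application of the barycenter lemma.

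Then I would apply Lemma~\ref{l:aff-map-preserves-baryc} with this family of weights, which yields directly
\[
  f \left( \sum_{i = 0}^n \frac{1}{n + 1} \vv_i \right)
  = \sum_{i = 0}^n \frac{1}{n + 1} f (\vv_i).
\]
Finally, I would unfold Definition~\ref{d:isobaryc} on both sides: the left-hand argument is precisely $\bisobaryc{\famvert{\vv}{n}}$, and the right-hand side is precisely $\bisobaryc{\famvert{f(\vv)}{n}}$ by the very definition of $\famvert{f(\vv)}{n} \eqdef (f(\vv_i))_{i\in[0..n]}$. This gives the claimed equality $f(\bisobaryc{\famvert{\vv}{n}}) = \bisobaryc{\famvert{f(\vv)}{n}}$.

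There is no genuine obstacle here: the content lies entirely in the previously established barycenter lemma, and the present statement is just its uniform-weight reading. The proof is therefore a one-line consequence, the only subtlety being to record that the uniform coefficients form a valid barycentric system (their sum is~$1$).
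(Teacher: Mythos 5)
Your proposal is correct and matches the paper's own proof exactly: both instantiate Lemma~\ref{l:aff-map-preserves-baryc} with the uniform weights $\mu_i\eqdef\frac{1}{n+1}$, check that they sum to~$1$, and unfold Definition~\ref{d:isobaryc} on both sides. Nothing is missing.
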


\begin{proof}
  Direct consequence of
  Definition~\thref{d:isobaryc},
  Definition~\threfc{LM-d:space}{proper\-ties of linear operations},
  \assume{field properties of~$\matR$
    (with $n+1>0$, thus $\sum_{i=0}^n\frac{1}{n+1}=1$)}, and
  Lemma~\threfc{l:aff-map-preserves-baryc}{with $\mu_i\eqdef\frac{1}{n+1}$}.
\end{proof}

\begin{lemma}[affine maps are closed by composition]
  \label{l:aff-map-are-closed-by-composition}
  \mbox{}\\
  Let~$E$, $F$, and~$G$ be {\vectorspace}s.
  Let~$\phi\in\LEF$, $\cc\in F$, $\psi\in\LFG$ and~$\dd\in G$.\\
  Let~$f\eqdef\cc+\phi\in\AffEF$ and~$g\eqdef\dd+\psi\in\AffFG$,
  both with origin~$\zzero$.\\
  Then, we have
  \begin{equation}
    \label{e:aff-map-are-closed-by-composition}
    g \circ f = \dd + \psi (\cc - \zzero) + \psi \circ \phi
    \mbox{ with origin } \zzero
    \CONJ
    g \circ f \in \AffEG.
  \end{equation}
\end{lemma}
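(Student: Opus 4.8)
The plan is to unfold Definition~\ref{d:aff-map} for both $f$ and $g$, compose them pointwise, and rearrange the result into the canonical ``constant plus linear map'' form required by the same definition. Since both $f$ and $g$ are given with origin~$\zzero$, for every $\xx\in E$ the displacement $\xx-\zzero$ is simply~$\xx$, so that $f(\xx)=\cc+\phi(\xx)$ and, for every $\yy\in F$, $g(\yy)=\dd+\psi(\yy)$.

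First I would fix $\xx\in E$ and compute directly
\[
  (g\circ f)(\xx) = g(\cc+\phi(\xx)) = \dd + \psi(\cc+\phi(\xx)).
\]
Then I would apply the linearity of~$\psi$ (Lemma~\ref{LM-l:linear-map-preserves-linear-combinations}) to split the right-hand side as $\dd + \psi(\cc) + (\psi\circ\phi)(\xx)$, and observe that $\cc=\cc-\zzero$ in~$F$, so this reads $(\dd+\psi(\cc-\zzero)) + (\psi\circ\phi)(\xx)$.

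Next I would certify the two structural facts that make Definition~\ref{d:aff-map} applicable: that $\psi\circ\phi$ is a linear map from~$E$ to~$G$, which is the basic fact that a composition of linear maps is linear (assumed in the preliminaries), and that $\dd+\psi(\cc-\zzero)$ is an element of~$G$. With the constant $\dd+\psi(\cc-\zzero)\in G$, the linear part $\psi\circ\phi\in\LEG$, and origin~$\zzero$, Definition~\ref{d:aff-map} yields exactly $g\circ f = \dd+\psi(\cc-\zzero)+\psi\circ\phi$ with origin~$\zzero$; the membership $g\circ f\in\AffEG$ then follows from Definition~\ref{d:set-aff-maps} (with $\calEp\eqdef E$ and $\calFp\eqdef G$).

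There is no genuine obstacle here: the whole argument is a one-line application of linearity followed by bookkeeping. The only points needing care are remembering that the common origin~$\zzero$ collapses $\xx-\zzero$ to~$\xx$, and citing linearity of~$\psi$ (rather than mere additivity) so that the term $\psi(\cc+\phi(\xx))$ splits cleanly into $\psi(\cc)+\psi(\phi(\xx))$.
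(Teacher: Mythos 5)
Your proposal is correct and follows essentially the same route as the paper's proof: unfold the affine structure of $f$ and $g$ at origin~$\zzero$, compute $g(f(\xx))$ pointwise, split $\psi(\cc+\phi(\xx))$ by linearity of~$\psi$, invoke closure of linear maps under composition for $\psi\circ\phi\in\LEG$, and conclude via Definitions~\ref{d:aff-map} and~\ref{d:set-aff-maps}. The only cosmetic difference is that the paper routes the evaluation of $g$ through Lemma~\ref{l:equiv-def-aff-map} rather than the raw definition, which changes nothing of substance.
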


\begin{proof}
  Direct consequence of
  \assume{the definition of composition of functions},
  Definition~\threfc{d:aff-map}{with $f$ and $\xxo\eqdef\zzero$},
  Lemma~\threfc{l:equiv-def-aff-map}{with $g$,
    $\xxo\eqdef\zzero$ and $\xxp\eqdef(\ccp-\zzero)+\phi(\uu)$ for any
    $\uu\in E$},
  Definition~\threfc{LM-d:space}{properties of linear operations},
  Definition~\thref{LM-d:linear-map},
  Lemma~\threfc{LM-l:composition-of-linear-maps-is-bilinear}{%
    thus $\psi\circ\phi\in\LEG$}, and
  Definition~\thref{d:set-aff-maps}.
\end{proof}

\begin{lemma}[continuous affine map is continuous linear map]
  \label{l:cont-aff-map-is-cont-linear-map}
  \mbox{}\\
  Let~$(E,\nEdot)$ and~$(F,\nFdot)$ be normed {\vectorspace}s.
  Let~$\phi\in\LEF$, $\xxo\in E$ and~$\cc\in F$.\\
  Let $f\eqdef\cc+\phi\in\AffEF$ with origin~$\xxo$.
  Then, $f$ is continuous iff~$\phi\in\LcEF$.

  Moreover, in this case, $f$ is $\tnEF{\phi}$-Lipchitz continuous.
\end{lemma}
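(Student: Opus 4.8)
The plan is to reduce both implications to a single algebraic identity and then treat each direction as an essentially routine topological argument. First I would use Lemma~\thref{l:equiv-def-aff-map} to write $f$ in the form $f(\xx) = \cc + \phi(\xx - \xxo)$ for all $\xx \in E$. Exploiting the linearity of $\phi$ together with the additive group structure of $E$ and $F$, I would then record the key identity
\[
  \forall \xx, \yy \in E, \quad
  f (\xx) - f (\yy) = \phi (\xx - \xxo) - \phi (\yy - \xxo) = \phi (\xx - \yy),
\]
which says that the increments of the affine map $f$ coincide with $\phi$ applied to the increments of the argument. This identity is the whole content of the proof; once it is in hand, both directions of the equivalence and the Lipschitz bound follow quickly.

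For the direction asserting that $\phi \in \LcEF$ entails continuity of $f$, I would take the $F$-norm of this identity and invoke the operator-norm inequality for continuous linear maps, $\nF{\phi(\uu)} \leq \tnEF{\phi}\, \nE{\uu}$, to obtain
\[
  \nF{f (\xx) - f (\yy)} = \nF{\phi (\xx - \yy)} \leq \tnEF{\phi}\, \nE{\xx - \yy}.
\]
This is precisely $\tnEF{\phi}$-Lipschitz continuity, and Lipschitz continuity implies continuity; so this one computation settles both the forward implication and the ``moreover'' claim simultaneously. The finiteness of $\tnEF{\phi}$, required for the bound to be meaningful, is guaranteed exactly by the continuity (hence boundedness) of $\phi$.

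For the converse, assuming $f$ continuous and deducing $\phi \in \LcEF$, I would exhibit $\phi$ as a composition of continuous maps. From Definition~\thref{d:aff-map} we have $\phi(\uu) = f(\xxo + \uu) - \cc$, so $\phi$ factors as the translation $\uu \mapsto \xxo + \uu$, followed by $f$, followed by the translation $\yy \mapsto \yy - \cc$. Translations in a normed space are continuous, being isometries, $f$ is continuous by hypothesis, and the composition of continuous maps is continuous; hence $\phi$ is continuous. Since $\phi$ is already linear, belonging by assumption to $\LEF$, it therefore lies in $\LcEF$.

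The mathematics here is shallow, so the main obstacle is bookkeeping rather than insight: I must ensure that the operator-norm estimate for continuous linear maps, the continuity of translations, and the stability of continuity under composition are all available as previously established facts (they belong to the basic material on normed {\vectorspace}s developed in the companion work), and I must take care that the Lipschitz constant $\tnEF{\phi}$ is invoked only under the continuity hypothesis that renders it finite.
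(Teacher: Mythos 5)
Your proof is correct and follows essentially the same route as the paper's: both hinge on the increment identity $f(\xx)-f(\yy)=\phi(\xx-\yy)$ obtained from the equivalent definition of affine map and linearity, after which the equivalence and the Lipschitz bound reduce to the standard characterization of continuous linear maps via the operator norm. The only cosmetic difference is that you handle the converse by factoring $\phi$ through translations, whereas the paper reads continuity of $\phi$ at a point directly off the same increment identity; both are immediate.
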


\begin{proof}
  Direct consequence of
  Lemma~\thref{l:equiv-def-aff-map},
  Definition~\threfc{LM-d:space}{%
    properties of linear operations,
    thus $f(\xxp)-f(\xx)=\phi(\xxp-\xxo)-\phi(\xx-\xxo)$,
    and $\xx-\xxo=\xx-\zzero-(\xxo-\zzero)$},
  Definition~\threfc{LM-d:linear-map}{thus
    $f(\xxp)-f(\xx)=\phi(\xxp-\zzero)-\phi(\xx-\zzero)$},
  Definition~\thref{LM-d:pointwise-continuity},
  Definition~\thref{LM-d:continuity-in-a-point},
  Theorem~\thref{LM-t:continuous-linear-map}, and
  Theorem~\thref{LM-t:normed-space-of-continuous-linear-maps}.
\end{proof}


\begin{lemma}[injective affine submap is zero linear kernel]
  \label{l:inj-aff-sub-map-is-zero-linear-ker}
  \mbox{}\\
  Let~$E$ and~$F$ be {\vectorspace}s.
  Let~$\phi\in\LEF$ and~$\cc\in F$.
  Let~$f\eqdef\cc+\phi\in\AffEF$ with origin~$\zzero$.
  Let~$\Ep$ (resp.~$\Fp$) be a vector subspace of~$E$ (resp.~$F$).
  Let~$\xxo\in E$ and~$\yyo\in F$.\\
  Let~$\calEp\eqdef\xxo+\Ep$ and~$\calFp\eqdef\yyo+\Fp$.
  Let~$\phip\eqdef\phi_{|\Ep}$ and~$\fp\eqdef f_{|\calEp}$.
  Let~$\ccp\eqdef\cc+\phi(\xxo-\zzero)\in F$.\\
  Let~$\fp=\ccp+\phip$ with origin~$\xxo$
  be the affine submap of~$f$ on~$\calEp$.\\
  Then, $\fp$ is injective iff $\Ker{\phip}=\{0\}$.
\end{lemma}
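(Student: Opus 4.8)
The plan is to reduce the injectivity of the affine submap $\fp$ to that of the linear map $\phip$, and then to invoke the standard characterization of injectivity of a linear map through its kernel.

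First I would recall from Lemma~\thref{l:aff-sub-map} that $\fp=\ccp+\phip$ with origin~$\xxo$, so that for every $\uup\in\Ep$ one has $\fp(\xxo+\uup)=\ccp+\phip(\uup)$. Combined with Definition~\thref{d:aff-sub-sp}, every point of~$\calEp$ is of the form $\xxo+\uup$ for a \emph{unique} $\uup\in\Ep$, the uniqueness following from additive group cancellation in Definition~\thref{LM-d:space} (if $\xxo+\uup=\xxo+\vvp$ then $\uup=\vvp$). This unique-decomposition fact is what lets me pass freely between equalities of points of~$\calEp$ and equalities of the corresponding directions in~$\Ep$.

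Next I would prove the equivalence ``$\fp$ injective iff $\phip$ injective''. For the forward direction, assuming $\fp$ injective, I take $\uup,\vvp\in\Ep$ with $\phip(\uup)=\phip(\vvp)$; then $\fp(\xxo+\uup)=\ccp+\phip(\uup)=\ccp+\phip(\vvp)=\fp(\xxo+\vvp)$, whence $\xxo+\uup=\xxo+\vvp$ by injectivity, and cancellation gives $\uup=\vvp$. For the reverse direction, assuming $\phip$ injective, I take $\xxp,\yyp\in\calEp$ with $\fp(\xxp)=\fp(\yyp)$, write them as $\xxo+\uup$ and $\xxo+\vvp$, and read off $\ccp+\phip(\uup)=\ccp+\phip(\vvp)$; cancellation of~$\ccp$ (again Definition~\thref{LM-d:space}) yields $\phip(\uup)=\phip(\vvp)$, hence $\uup=\vvp$ and $\xxp=\yyp$.

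Finally I would conclude with Lemma~\thref{LM-l:injective-linear-map-has-zero-kernel}, which gives that the linear map $\phip$ is injective iff $\Ker{\phip}=\{0\}$, chaining this with the previous equivalence to obtain the claim. I do not expect a genuine obstacle here; the only point requiring care is the consistent use of the affine-submap representation $\fp(\xxo+\uup)=\ccp+\phip(\uup)$ together with group cancellation, which is precisely what makes the translation between the affine statement about $\fp$ and the linear statement about $\phip$ routine.
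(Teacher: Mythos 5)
Your proposal is correct and follows essentially the same route as the paper: both reduce injectivity of~$\fp$ to that of~$\phip$ via the representation $\fp(\xxo+\uup)=\ccp+\phip(\uup)$ from Lemma~\ref{l:aff-sub-map} and then conclude with the zero-kernel characterization of injective linear maps. The paper merely compresses your two-way cancellation argument into the single identity $\fp(\xxo+\vvp)-\fp(\xxo+\uup)=\phip(\vvp-\uup)$, which is the same computation.
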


\begin{proof}
  Direct consequence of
  Lemma~\thref{l:aff-sub-map},
  Definition~\threfc{LM-d:space}{%
    properties of linear operations,
    thus $\fp(\xxo+\vvp)-\fp(\xxo+\uup)=\phip(\vvp)-\phip(\uup)$},
  Definition~\threfc{LM-d:linear-map}{%
    thus $\phip(\vvp)-\phip(\uup)=\phip(\vvp-\uup)$},
  \assume{the definition of injectivity}, and
  Lemma~\thref{LM-l:injective-linear-map-has-zero-kernel}.
\end{proof}

\begin{lemma}[injective affine map is zero linear kernel]
  \label{l:inj-aff-map-is-zero-linear-ker}
  \mbox{}\hfill
  Let~$E$ and~$F$ be {\vectorspace}s.\\
  Let~$\phi\in\LEF$ and~$\cc\in F$.
  Let $f\eqdef\cc+\phi\in\AffEF$ with origin~$\zzero$.\\
  Then, $f$ is injective iff $\Ker{\phi}=\{0\}$.
\end{lemma}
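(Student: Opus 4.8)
The plan is to obtain this statement as the special case of the immediately preceding Lemma~\ref{l:inj-aff-sub-map-is-zero-linear-ker} (injective affine submap is zero linear kernel) in which the affine subspace on which one restricts is the whole space~$E$. That lemma already establishes the equivalence ``$\fp$ injective iff $\Ker{\phip}=\{0\}$'' for the affine submap~$\fp$ of~$f$ on an \emph{arbitrary} affine subspace~$\calEp$; here we simply instantiate it at $\calEp\eqdef E$.

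Concretely, I would apply Lemma~\ref{l:inj-aff-sub-map-is-zero-linear-ker} with $\Ep\eqdef E$ (trivially a vector subspace of itself) and $\xxo\eqdef\zzero$, so that $\calEp=\zzero+E=E$. The restricted data then collapse to the original data: $\phip=\phi_{|E}=\phi$, $\fp=f_{|E}=f$, and $\ccp=\cc+\phi(\zzero-\zzero)=\cc+\phi(\zzero)=\cc$, whence $\fp=f=\ccp+\phip$ with origin~$\zzero$. Reading off the conclusion of that lemma, ``$\fp$ injective iff $\Ker{\phip}=\{0\}$'' becomes exactly ``$f$ injective iff $\Ker{\phi}=\{0\}$'', which is the claim.

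Should a self-contained argument be preferred (bypassing the submap machinery), I would instead proceed directly: by Lemma~\ref{l:equiv-def-aff-map} we have $f(\xx)=\cc+\phi(\xx-\zzero)$ for all $\xx\in E$, so for any $\xx,\xxp\in E$ the difference is $f(\xxp)-f(\xx)=\phi(\xxp-\zzero)-\phi(\xx-\zzero)=\phi(\xxp-\xx)$ by linearity. Hence $f(\xx)=f(\xxp)$ is equivalent to $\xxp-\xx\in\Ker{\phi}$, which shows that $f$ is injective iff $\phi$ is injective; the equivalence ``$\phi$ injective iff $\Ker{\phi}=\{0\}$'' is then Lemma~\ref{LM-l:injective-linear-map-has-zero-kernel}.

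There is essentially no obstacle here: both routes are immediate. The only points requiring a moment's care are the harmless identifications $\zzero+E=E$ and $f_{|E}=f$, together with the computation $\ccp=\cc$ (which uses $\phi(\zzero)=\zzero$); these are pure bookkeeping, so the result follows at once from the previous lemma.
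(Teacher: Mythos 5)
Your primary argument is exactly the paper's proof: instantiate Lemma~\ref{l:inj-aff-sub-map-is-zero-linear-ker} with $\calEp\eqdef\Ep\eqdef E$ and $\xxo\eqdef\zzero$, so that $\phip=\phi$, $\fp=f$ and $\ccp=\cc$. The direct argument you sketch as an alternative is also sound, but the instantiation route is what the paper does and it is complete as you state it.
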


\begin{proof}
  Direct consequence of
  Lemma~\threfc{l:inj-aff-sub-map-is-zero-linear-ker}{%
    with $\calEp\eqdef\Ep\eqdef E$, $\calFp\eqdef\Fp\eqdef F$ and
    $\xxo=\yyo\eqdef\zzero$, thus $\phip=\phi$, $\fp=f$ and $\ccp=\cc$}.
\end{proof}

\begin{lemma}[surjective affine submap is full linear range]
  \label{l:surj-aff-sub-map-is-full-linear-rg}
  \mbox{}\\
  Let~$E$ and~$F$ be {\vectorspace}s.
  Let~$\phi\in\LEF$ and~$\cc\in F$.
  Let~$f\eqdef\cc+\phi\in\AffEF$ with origin~$\zzero$.
  Let~$\Ep$ (resp.~$\Fp$) be a vector subspace of~$E$ (resp.~$F$).
  Let~$\xxo\in E$ and~$\yyo\in F$.\\
  Let~$\calEp\eqdef\xxo+\Ep$ and~$\calFp\eqdef\yyo+\Fp$.
  Let~$\phip\eqdef\phi_{|\Ep}$ and~$\fp\eqdef f_{|\calEp}$.
  Let~$\ccp\eqdef\cc+\phi(\xxo-\zzero)\in F$.\\
  Assume that~$\phip(\Ep)\subset\Fp$ and~$\ccp\in\calFp$.\\
  Let~$\fp=\ccp+\phip$ with origin~$\xxo$ be the affine submap of~$f$
  on~$\calEp.$\\
  Then, $\fp$ is surjective iff $\phip(\Ep)=\Fp$.
\end{lemma}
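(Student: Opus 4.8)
The plan is to reduce surjectivity of the affine submap $\fp : \calEp \to \calFp$ to a surjectivity statement about its linear part $\phip$, exactly parallel to how Lemma~\thref{l:inj-aff-sub-map-is-zero-linear-ker} handled injectivity. First I would invoke Lemma~\thref{l:aff-sub-map}: from the standing hypotheses $\phip(\Ep)\subset\Fp$ and $\ccp\in\calFp$, it gives that $\fp=\ccp+\phip$ with origin~$\xxo$, that $\fp\in\AffcalEpcalFp$, and hence (via Definition~\thref{d:aff-map}) that $\fp(\xxo+\uup)=\ccp+\phip(\uup)$ holds for every $\uup\in\Ep$. This is the working formula for~$\fp$.

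Next, because $\ccp\in\calFp$, I would rewrite the target affine subspace using Lemma~\thref{l:aff-sub-sp-inv-chg-orig}, so that $\calFp=\ccp+\Fp$, i.e.\ $\ccp$ itself serves as the origin of~$\calFp$. Together with Definition~\thref{d:aff-sub-sp}, this means every point of~$\calEp$ is of the form $\xxo+\uup$ with $\uup\in\Ep$, and every point of~$\calFp$ is of the form $\ccp+\vvp$ with $\vvp\in\Fp$.

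Then I would unwind surjectivity directly. By the definition of surjectivity, $\fp$ is surjective iff for every $\yyp\in\calFp$ there is $\xxp\in\calEp$ with $\fp(\xxp)=\yyp$. Substituting $\xxp=\xxo+\uup$ and $\yyp=\ccp+\vvp$ and applying the working formula, the equation $\fp(\xxp)=\yyp$ becomes $\ccp+\phip(\uup)=\ccp+\vvp$, which by the additive group properties of Definition~\thref{LM-d:space} is equivalent to $\phip(\uup)=\vvp$. Hence $\fp$ is surjective iff for all $\vvp\in\Fp$ there exists $\uup\in\Ep$ with $\phip(\uup)=\vvp$, {\ie} iff $\Fp\subset\phip(\Ep)$. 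Combining this with the hypothesis $\phip(\Ep)\subset\Fp$ upgrades the inclusion to the equality $\phip(\Ep)=\Fp$, which is the claim.

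The hard part will be purely bookkeeping: keeping track of the two distinct origins and ensuring that the correct constant $\ccp\eqdef\cc+\phi(\xxo-\zzero)$ is used as the origin of~$\calFp$. This substitution is legitimate precisely because $\ccp\in\calFp$, and it is what makes the constant term cancel cleanly; getting that alignment right, rather than any genuine mathematical difficulty, is the only real point of care.
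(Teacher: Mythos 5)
Your proposal is correct and follows essentially the same route as the paper's proof: re-origin $\calFp$ at $\ccp$ (legitimate since $\ccp\in\calFp$), use the formula $\fp(\xxo+\uup)=\ccp+\phip(\uup)$ to cancel the constant, and conclude that surjectivity of $\fp$ is equivalent to $\Fp\subset\phip(\Ep)$, which the standing hypothesis $\phip(\Ep)\subset\Fp$ upgrades to equality. The only cosmetic difference is that you run both directions as a single chain of equivalences where the paper treats the two implications separately.
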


\begin{proof}
  \proofparskip{From left to right}
  Assume that~$\fp$ is surjective onto~$\calFp$.\\
  Let~$\vvp\in\Fp$.
  Then, from
  Definition~\threfc{d:aff-sub-sp}{with $\calFp$}, and
  Lemma~\threfc{l:aff-sub-sp-inv-chg-orig}{with $\yyp\eqdef\ccp\in\yyo+\Fp$},
  we have $\ccp+\vvp\in\ccp+\Fp=\calFp$.
  Thus, from
  \assume{the definition of surjectivity},
  Lemma~\thref{l:equiv-def-aff-map},
  Lemma~\thref{LM-l:closed-under-linear-combination-is-subspace}, and
  Lemma~\threfc{l:equiv-def-aff-sub-sp}{\eqref{e:equiv-def-aff-sub-sp}},
  there
  exists~$\xxp\in\calEp$ such that~$\fp(\xxp)=\ccp+\vvp$, and
  $\phip(\xxp-\xxo)=\fp(\xxp)-\ccp=\vvp$
  with~$\xxp-\xxo\in\Ep$.\\
  Therefore, we have~$\Fp\subset\phip(\Ep)$, and thus~$\phip(\Ep)=\Fp$.

  \proofparskip{From right to left}
  Assume that $\phip(\Ep)=\Fp$.
  Let~$\yyp\in\calFp$.
  Then, from
  Definition~\threfc{LM-d:space}{properties of linear operations},
  Lemma~\threfc{l:equiv-def-aff-sub-sp}{%
    \eqref{e:equiv-def-aff-sub-sp} with $\yyp$ and $\ccp$},
  Lemma~\thref{LM-l:closed-under-linear-combination-is-subspace}, and
  Definition~\thref{d:aff-map}, and
  Definition~\threfc{d:aff-sub-sp}{with $\calEp$},
  we have $\yyp-\ccp=(\yyp-\yyo)-(\ccp-\yyo)\in\Fp$, thus there
  exists~$\uup\in\Ep$ such that~$\phip(\uup)=\yyp-\ccp$, and
  $\fp(\xxo+\uup)=\ccp+\phip(\uup)=\yyp$ with $\xxo+\uup\in\calEp$.\\
  Therefore, from
  \assume{the definition of surjectivity},
  $\fp$ is surjective.
\end{proof}

\begin{lemma}[surjective affine map is full linear range]
  \label{l:surj-aff-map-is-full-linear-rg}
  \mbox{}\hfill
  Let~$E$ and~$F$ be {\vectorspace}s.\\
  Let~$\phi\in\LEF$ and~$\cc\in F$.
  Let $f\eqdef\cc+\phi\in\AffEF$ with origin~$\zzero$.\\
  Then, $f$ is surjective iff $\phi(E)=F$.
\end{lemma}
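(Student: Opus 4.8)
The plan is to obtain this statement as an immediate specialization of Lemma~\thref{l:surj-aff-sub-map-is-full-linear-rg}, mirroring the way Lemma~\thref{l:inj-aff-map-is-zero-linear-ker} was deduced from Lemma~\thref{l:inj-aff-sub-map-is-zero-linear-ker}. Concretely, I would instantiate the affine submap framework of that lemma with $\calEp \eqdef \Ep \eqdef E$, $\calFp \eqdef \Fp \eqdef F$, and origins $\xxo = \yyo \eqdef \zzero$. Under these choices the restrictions are trivial: $\phip = \phi_{|E} = \phi$ and $\fp = f_{|E} = f$, while the translated constant is $\ccp \eqdef \cc + \phi(\zzero - \zzero) = \cc$ since $\phi(\zzero) = \zzero$ by linearity.

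Before applying the submap lemma, I would check that its two side hypotheses hold. The condition $\phip(\Ep) \subset \Fp$ becomes $\phi(E) \subset F$, and $\ccp \in \calFp$ becomes $\cc \in F$; both are satisfied automatically because $F$ is the whole codomain. With these verified, Lemma~\thref{l:surj-aff-sub-map-is-full-linear-rg} yields that $\fp$ is surjective iff $\phip(\Ep) = \Fp$, which after substituting the specializations is exactly the desired equivalence: $f$ is surjective iff $\phi(E) = F$.

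I expect no real obstacle here — the content of the result already resides in the submap lemma, and the work is purely the bookkeeping of confirming that taking both affine subspaces to be the ambient vector spaces makes the side conditions vacuous and collapses $\phip$, $\fp$, $\ccp$ to $\phi$, $f$, $\cc$. For completeness one could instead give the short direct argument: surjectivity of $f$ means that for every $\yy \in F$ there is $\xx \in E$ with $\cc + \phi(\xx) = \yy$, {\ie} $\phi(\xx) = \yy - \cc$; as $\yy$ ranges over $F$ and $\cc \in F$, the vector $\yy - \cc$ ranges over all of $F$, so this solvability for all $\yy$ is equivalent to $\phi(E) = F$. The specialization route is preferred, as it reuses the already-established machinery and matches the paper's proof style.
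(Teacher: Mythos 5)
Your proposal is correct and matches the paper's proof exactly: the paper also derives this lemma as a direct specialization of Lemma~\ref{l:surj-aff-sub-map-is-full-linear-rg} with $\calEp\eqdef\Ep\eqdef E$, $\calFp\eqdef\Fp\eqdef F$, $\xxo=\yyo\eqdef\zzero$, noting that $\phip=\phi$, $\fp=f$, $\ccp=\cc\in F$ and $\phi(E)\subset F$ hold trivially.
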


\begin{proof}
  Direct consequence of
  Lemma~\threfc{l:surj-aff-sub-map-is-full-linear-rg}{%
    with $\calEp\eqdef\Ep\eqdef E$, $\calFp\eqdef\Fp\eqdef F$ and
    $\xxo=\yyo\eqdef\zzero$, thus $\phip=\phi$, $\fp=f$, and
    $\ccp=\cc\in F$, and $\phi(E)\subset F$}.
\end{proof}

\begin{lemma}[inverse of affine submap is affine submap]
  \label{l:inv-of-aff-sub-map-is-aff-sub-map}
  \mbox{}\\
  Let~$E$ and~$F$ be {\vectorspace}s.
  Let~$\phi\in\LEF$ and~$\cc\in F$.
  Let~$f\eqdef\cc+\phi\in\AffEF$ with origin~$\zzero$.
  Let~$\Ep$ (resp.~$\Fp$) be a vector subspace of~$E$ (resp.~$F$).
  Let~$\xxo\in E$ and~$\yyo\in F$.\\
  Let~$\calEp\eqdef\xxo+\Ep$ and~$\calFp\eqdef\yyo+\Fp$.
  Let~$\phip\eqdef\phi_{|\Ep}$ and~$\fp\eqdef f_{|\calEp}$.
  Let~$\ccp\eqdef\cc+\phi(\xxo-\zzero)\in F$.
  Assume that~$\phip(\Ep)\subset\Fp$ and~$\ccp\in\calFp$.\\
  Let~$\fp=\ccp+\phip$ with origin~$\xxo$ be the affine submap of~$f$
  on~$\calEp.$\\
  Then, $\fp$ is bijective in~$\AffcalEpcalFp$ iff~$\phip$ is an isomorphism
  in~$\LSp{\Ep}{\Fp}$.

  Moreover, in this case, we have $(\fp)^{-1}\in\AffcalFpcalEp$, and
  with~$\xxpo\eqdef\xxo+(\phip)^{-1}(\yyo-\ccp)\in\calEp$,
  \begin{align}
    \label{e:inv-of-aff-sub-map-is-aff-sub-map-1}
    \forall \yyp \in \calFp,\quad
    (\fp)^{-1} (\yyp) &= \xxo + (\phip)^{-1} (\yyp - \ccp)
      = \xxpo + (\phip)^{-1} (\yyp - \yyo),\\
    \label{e:inv-of-aff-sub-map-is-aff-sub-map-2}
    \forall \vvp \in \Fp,\quad
    (\phip)^{-1} (\vvp) &= (\fp)^{-1} (\ccp + \vvp) - \xxo.
  \end{align}
\end{lemma}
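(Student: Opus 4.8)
The plan is to treat the equivalence first and the inverse formula second, reducing everything to the two preceding submap lemmas. First I would record that the standing hypotheses $\phip(\Ep)\subset\Fp$ and $\ccp\in\calFp$ place us exactly in the situation of Lemma~\thref{l:aff-sub-map}, so that $\fp=\ccp+\phip$ with origin~$\xxo$ is a genuine element of~$\AffcalEpcalFp$ and $\phip\in\LSp{\Ep}{\Fp}$. For the equivalence, I would split bijectivity of~$\fp$ into injectivity and surjectivity: Lemma~\thref{l:inj-aff-sub-map-is-zero-linear-ker} turns injectivity into $\Ker{\phip}=\{0\}$, Lemma~\thref{l:surj-aff-sub-map-is-full-linear-rg} turns surjectivity into $\phip(\Ep)=\Fp$, and then Lemma~\thref{LM-l:injective-linear-map-has-zero-kernel} together with Definition~\thref{LM-d:isomorphism} packages ``$\Ker{\phip}=\{0\}$ and $\phip(\Ep)=\Fp$'' as ``$\phip$ is an isomorphism in~$\LSp{\Ep}{\Fp}$''. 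This direction is essentially bookkeeping.

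For the ``moreover'' part I would assume $\fp$ bijective, hence $\phip$ an isomorphism, and first invoke Lemma~\thref{l:inverse-of-isomorphism-is-linear-map} to get that $(\phip)^{-1}\in\LSp{\Fp}{\Ep}$ is linear. The key preliminary facts are the membership statements $\yyp-\ccp\in\Fp$ (for $\yyp\in\calFp$), $\yyo-\ccp\in\Fp$, and $\ccp+\vvp\in\calFp$ (for $\vvp\in\Fp$); each follows from $\ccp\in\calFp=\yyo+\Fp$ and closure of the subspace~$\Fp$ under the vector operations (Lemma~\thref{LM-l:closed-under-vector-operations-is-subspace}), after rewriting, e.g., $\yyp-\ccp=(\yyp-\yyo)-(\ccp-\yyo)$. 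With these in hand I would solve $\fp(\xxp)=\yyp$: using $\fp(\xxp)=\ccp+\phip(\xxp-\xxo)$ (Lemma~\thref{l:equiv-def-aff-map}), the equation reads $\phip(\xxp-\xxo)=\yyp-\ccp\in\Fp$, and applying the isomorphism inverse yields $\xxp=\xxo+(\phip)^{-1}(\yyp-\ccp)$, which is the first displayed formula.

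It then remains to rewrite this in change-of-origin form and to identify the second equation. The second equality in the first display comes from linearity of $(\phip)^{-1}$ applied to $\yyp-\ccp=(\yyp-\yyo)+(\yyo-\ccp)$, which splits off the constant $(\phip)^{-1}(\yyo-\ccp)$ and produces $\xxpo\eqdef\xxo+(\phip)^{-1}(\yyo-\ccp)$; I would also note $\xxpo\in\calEp$ since $(\phip)^{-1}(\yyo-\ccp)\in\Ep$. The second display is obtained by specializing the inverse formula at $\yyp\eqdef\ccp+\vvp$, giving $(\fp)^{-1}(\ccp+\vvp)=\xxo+(\phip)^{-1}(\vvp)$ and hence the claimed identity. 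Finally, the form $(\fp)^{-1}(\yyp)=\xxpo+(\phip)^{-1}(\yyp-\yyo)$ exhibits $(\fp)^{-1}$ as affine with origin~$\yyo$, linear part~$(\phip)^{-1}$ and constant~$\xxpo$, so by Lemma~\thref{l:equiv-def-aff-map} and the fact that $(\fp)^{-1}$ maps $\calFp$ onto $\calEp$ we conclude $(\fp)^{-1}\in\AffcalFpcalEp$. The only mild subtlety---and the closest thing to an obstacle---is being careful that every ``$-\ccp$'' or ``$-\yyo$'' lands in the correct subspace before $(\phip)^{-1}$ is applied, so that the inverse is genuinely defined on the vectors it is fed; once the membership facts are set up, the rest is routine algebra.
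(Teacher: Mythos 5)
Your proposal is correct and follows essentially the same route as the paper's proof: the equivalence is reduced to Lemmas~\ref{l:inj-aff-sub-map-is-zero-linear-ker} and~\ref{l:surj-aff-sub-map-is-full-linear-rg} plus the characterization of isomorphisms, and the inverse formula is obtained by solving $\fp(\xxp)=\ccp+\phip(\xxp-\xxo)=\yyp$ after checking the same membership facts ($\yyp-\ccp,\yyo-\ccp\in\Fp$, $\xxpo\in\calEp$) and using linearity of $(\phip)^{-1}$ for the change of origin; equation~\eqref{e:inv-of-aff-sub-map-is-aff-sub-map-2} then follows by specializing~\eqref{e:inv-of-aff-sub-map-is-aff-sub-map-1} at $\yyp\eqdef\ccp+\vvp$, exactly as the paper does. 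No gaps.
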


\begin{proof}
  \proofpar{Equivalence}
  Direct consequence of
  \assume{the definition of bijectivity},
  Lemma~\thref{l:inj-aff-sub-map-is-zero-linear-ker},
  Lemma~\thref{l:surj-aff-sub-map-is-full-linear-rg},
  Lemma~\thref{LM-l:injective-linear-map-has-zero-kernel},
  \assume{the characterization of surjectivity with full range}, and
  Definition~\thref{LM-d:isomorphism}.

  \proofparskip{%
    \eqref{e:inv-of-aff-sub-map-is-aff-sub-map-1} and
    $(\fp)^{-1}\in\AffcalFpcalEp$}
  Let~$\yyp\in\calFp$.
  Let~$\xxp\eqdef(\fp)^{-1}(\yyp)\in\calEp$.
  Then, from
  \assume{the definition of the inverse},
  Lemma~\thref{l:equiv-def-aff-map},
  Lemma~\threfc{l:equiv-def-aff-sub-sp}{thus $\ccp-\yyo,\yyp-\yyo\in\Fp$},
  Lemma~\threfc{LM-l:closed-under-vector-operations-is-subspace}{%
    thus $\yyo-\ccp=-(\ccp-\yyo)\in\Fp$
    and $\yyp-\ccp=(\yyo-\ccp)+(\yyp-\yyo)\in\Fp$},
  Definition~\threfc{d:aff-sub-sp}{%
    with $(\phip)^{-1}(\yyo-\ccp)\in\Ep$, thus $\xxpo\in\calEp$},
  Definition~\threfc{LM-d:space}{%
    additive abelian group properties for $\Ep$ and $\Fp$},
  Lemma~\thref{l:inverse-of-isomorphism-is-linear-map}, and
  Definition~\thref{d:aff-map},
  we have successively
  $\yyp=\fp(\xxp)=\ccp+\phip(\xxp-\xxo)$,
  $\phip(\xxp-\xxo)=\yyp-\ccp=(\yyo-\ccp)+(\yyp-\yyo)$, and thus
  \begin{gather*}
    (\fp)^{-1}(\yyp) = \xxp
    = \xxo + (\phip)^{-1} (\yyp - \ccp)
    = \xxpo + (\phip)^{-1} (\yyp - \yyo)
    \AND
    (\fp)^{-1} \in \AffcalFpcalEp.
  \end{gather*}

  \proofparskip{\eqref{e:inv-of-aff-sub-map-is-aff-sub-map-2}}
  Direct consequence of~\eqref{e:inv-of-aff-sub-map-is-aff-sub-map-1}, and
  Lemma~\threfc{LM-l:closed-under-linear-combination-is-subspace}{%
    for both~$E$ and~$F$}.
\end{proof}

\begin{lemma}[inverse of affine map is affine map]
  \label{l:inv-of-aff-map-is-aff-map}
  \mbox{}\hfill
  Let~$E$ and~$F$ be {\vectorspace}s.\\
  Let~$\phi\in\LEF$ and~$\cc\in F$.
  Let $f\eqdef\cc+\phi\in\AffEF$ with origin~$\zzero$.\\
  Then, $f$ is bijective iff~$\phi$ is an isomorphism.

  Moreover, in this case, we have $f^{-1}\in\AffFE$, and
  with~$\xxpo\eqdef\zzero+\phi^{-1}(\zzero-\cc)\in E$,
  \begin{align}
    \label{e:inv-of-aff-map-is-aff-map-2}
    \forall \yy \in F,\quad
    f^{-1} (\yy) &= \zzero + \phi^{-1} (\yy - \cc)
      = \xxpo + \phi^{-1} (\yy - \zzero),\\
    \label{e:inv-of-aff-map-is-aff-map-3}
    \forall \vv \in F,\quad
    \phi^{-1} (\vv) &= f^{-1} (\cc + \vv) - \zzero.
  \end{align}
\end{lemma}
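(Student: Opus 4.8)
The plan is to obtain this statement as an immediate specialization of the more general Lemma~\ref{l:inv-of-aff-sub-map-is-aff-sub-map} (inverse of affine submap is affine submap), exactly mirroring how its injectivity and surjectivity counterparts, Lemmas~\ref{l:inj-aff-map-is-zero-linear-ker} and~\ref{l:surj-aff-map-is-full-linear-rg}, were derived from their respective submap versions. The guiding observation is that a genuine affine map $f\colon\ArEF$ is nothing but an affine submap in which the ambient subspaces and the restriction domain already fill the whole spaces, so that all the auxiliary data of the submap lemma collapse back onto the data of the present statement.

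First I would invoke Lemma~\ref{l:inv-of-aff-sub-map-is-aff-sub-map} with the choices $\calEp\eqdef\Ep\eqdef E$, $\calFp\eqdef\Fp\eqdef F$, and $\xxo=\yyo\eqdef\zzero$. Under these choices one checks that $\phip=\phi_{|E}=\phi$ and $\fp=f_{|E}=f$, and that the origin-correction term reduces: since $\phi$ is linear, $\phi(\zzero)=\zzero$, hence $\ccp\eqdef\cc+\phi(\zzero-\zzero)=\cc$. The two standing hypotheses of the submap lemma, namely $\phip(\Ep)\subset\Fp$ and $\ccp\in\calFp$, become $\phi(E)\subset F$ and $\cc\in F$, both trivially true. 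Thus the submap lemma applies with no extra assumptions.

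With these identifications in place, the equivalence part of Lemma~\ref{l:inv-of-aff-sub-map-is-aff-sub-map} reads ``$f$ is bijective in $\AffEF$ iff $\phi$ is an isomorphism in $\LEF$'', which is precisely the stated characterization. For the formulas, the submap origin $\xxpo\eqdef\xxo+(\phip)^{-1}(\yyo-\ccp)$ specializes to $\xxpo=\zzero+\phi^{-1}(\zzero-\cc)$, matching the definition in the statement; substituting the same data into \eqref{e:inv-of-aff-sub-map-is-aff-sub-map-1} and~\eqref{e:inv-of-aff-sub-map-is-aff-sub-map-2} yields exactly \eqref{e:inv-of-aff-map-is-aff-map-2} and~\eqref{e:inv-of-aff-map-is-aff-map-3}, while $(\fp)^{-1}\in\AffcalFpcalEp$ becomes $f^{-1}\in\AffFE$.

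I expect no genuine obstacle here, as the entire content already lives in the submap lemma; the only care needed is the bookkeeping of the specialization. The one point worth recording explicitly is the reduction $\ccp=\cc$, which rests on $\phi(\zzero)=\zzero$ (linearity); everything else is a verbatim substitution of the trivial subspaces $\Ep=E$, $\Fp=F$ and the origins $\xxo=\yyo=\zzero$.
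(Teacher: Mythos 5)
Your proposal is correct and matches the paper's own proof, which likewise obtains the result as a direct specialization of Lemma~\ref{l:inv-of-aff-sub-map-is-aff-sub-map} with $\calEp\eqdef\Ep\eqdef E$, $\calFp\eqdef\Fp\eqdef F$ and $\xxo=\yyo\eqdef\zzero$, so that $\phip=\phi$, $\fp=f$ and $\ccp=\cc$. Your explicit remark that $\ccp=\cc$ rests on $\phi(\zzero)=\zzero$ is a helpful touch of bookkeeping that the paper leaves implicit.
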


\begin{proof}
  Direct consequence of
  Lemma~\threfc{l:inv-of-aff-sub-map-is-aff-sub-map}{%
    with\linebreak $\calEp\eqdef\Ep\eqdef E$, $\calFp\eqdef\Fp\eqdef F$ and
    $\xxo=\yyo\eqdef\zzero$, thus $\phip=\phi$, $\fp=f$, and
    $\ccp=\cc\in F$, and $\phi(E)\subset F$}.
\end{proof}

\section{Complements on real affine geometry}
\label{s:compl-aff-geom}

\begin{definition}[affinely independent family]
  \label{d:aff-indep-family}
  \mbox{}\\
  Let~$d\geq1$.
  Let~$n\in\matN$.
  Let $\famvert{\vv}{n,d}$ be $n+1$ points in~$\matRd$.\\
  Then, $\famvert{\vv}{n,d}$ is said {\em affinely independent} iff
  $(\vv_1-\vv_0,\ldots,\vv_n-\vv_0)$ is free in the {\vectorspace} $\matRd$.
\end{definition}

\begin{lemma}[equivalent definition of affinely independent family]
  \label{l:equiv-def-aff-indep-family}
  \mbox{}\\
  Let~$d\geq1$.
  Let~$n\in\matN$.
  Let $\famvert{\vv}{n,d}$ be $n+1$ points in~$\matRd$.
  Let $j\in[0..n]$.\\
  Then, $\famvert{\vv}{n,d}$ is affinely independent iff
  $(\vv_i-\vv_j)_{i\in[0..n]\setminus\{j\}}$ is free in the {\vectorspace} $\matRd$.
\end{lemma}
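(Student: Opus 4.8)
The plan is to factor both conditions through a single, manifestly $j$-independent characterization of affine independence, namely that no nontrivial family of scalars of sum zero annihilates $\famvert{\vv}{n,d}$. Concretely, I would first establish the auxiliary equivalence: the family $(\vv_i-\vv_j)_{i\in[0..n]\setminus\{j\}}$ is free in $\matRd$ if and only if, for every $(\mu_i)_{i\in[0..n]}\in\matR$ with $\sum_{i=0}^n\mu_i=0$, the relation $\sum_{i=0}^n\mu_i\vv_i=\zzero$ forces $\mu_i=0$ for all $i\in[0..n]$. This right-hand condition does not refer to $j$ at all; once it is shown equivalent to freedom for an \emph{arbitrary} index $j$, applying it both to the given $j$ and to the index $0$ used in Definition~\ref{d:aff-indep-family} yields the stated equivalence immediately, with no further work.

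For the auxiliary equivalence, the key algebraic device is the substitution $\mu_j=-\sum_{i\in[0..n]\setminus\{j\}}\mu_i$, which is precisely what the constraint $\sum_{i=0}^n\mu_i=0$ provides, together with the identity $\sum_{i=0}^n\mu_i\vv_i=\sum_{i\in[0..n]\setminus\{j\}}\mu_i(\vv_i-\vv_j)$, valid whenever $\sum_{i=0}^n\mu_i=0$ (split off the $i=j$ term and replace $\mu_j$ by $-\sum_{i\neq j}\mu_i$). From left to right, given freedom and a zero-sum family $(\mu_i)$ annihilating the points, I would rewrite $\sum_{i=0}^n\mu_i\vv_i$ as $\sum_{i\neq j}\mu_i(\vv_i-\vv_j)$ using the ring properties of $\matR$, invoke freedom to obtain $\mu_i=0$ for $i\neq j$, and then recover $\mu_j=0$ from the zero-sum constraint. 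From right to left, given a vanishing combination $\sum_{i\neq j}\lambda_i(\vv_i-\vv_j)=\zzero$, I would set $\mu_i\eqdef\lambda_i$ for $i\neq j$ and $\mu_j\eqdef-\sum_{i\neq j}\lambda_i$, verify that $\sum_{i=0}^n\mu_i=0$ and $\sum_{i=0}^n\mu_i\vv_i=\zzero$ via the same identity, and conclude $\lambda_i=0$ from the hypothesis.

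Each step uses only the definition of a free family and the additive and ring structure of $\matR$ (and of $\matRd$ as a {\vectorspace}); nothing beyond the material already assumed is needed. I expect no genuine mathematical difficulty here: the main obstacle is purely bookkeeping, namely handling the index set $[0..n]\setminus\{j\}$ cleanly, keeping the reindexing between the $(\lambda_i)$ and $(\mu_i)$ families straight, and making the split of $\sum_{i=0}^n$ into the $i=j$ term and the rest precise. Once the substitution $\mu_j=-\sum_{i\neq j}\mu_i$ and the sum identity are written out carefully, the whole argument is routine.
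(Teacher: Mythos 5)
Your proof is correct. It differs from the paper's in organization rather than in substance: the paper proves the transfer of freedom directly between two base points, establishing for $k\neq l$ the identity $\sum_{i\neq k}\mu_i(\vv_i-\vv_k)=\sum_{i\neq l}\tilde{\mu}_i(\vv_i-\vv_l)$ with $\tilde{\mu}_k\eqdef-\sum_{i\neq k}\mu_i$, and then applies this twice (with $(k,l)\eqdef(j,0)$ and $(k,l)\eqdef(0,j)$), after first disposing of the cases $n=0$ and $j=0$ separately. You instead factor both sides through the explicit $j$-independent condition that no nontrivial zero-sum family $(\mu_i)_{i\in[0..n]}$ satisfies $\sum_{i=0}^n\mu_i\vv_i=\zzero$, proving once, for arbitrary $j$, that this is equivalent to freedom of $(\vv_i-\vv_j)_{i\neq j}$. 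The underlying algebra is identical --- both arguments hinge on the substitution $\mu_j=-\sum_{i\neq j}\mu_i$ and the expansion of $\sum_{i\neq j}\mu_i(\vv_i-\vv_j)$; indeed the paper's intermediate expression is implicitly this same zero-sum combination. What your version buys is symmetry and uniformity: the auxiliary characterization makes the independence of the base point manifest, the equivalence is proved a single time and instantiated twice, and the cases $n=0$ and $j=0$ need no separate treatment. What the paper's version buys is that it never introduces a new intermediate notion, staying entirely within the language of free difference families. Either is acceptable; your bookkeeping of the index set $[0..n]\setminus\{j\}$ and of the passage between $(\lambda_i)$ and $(\mu_i)$ is exactly the care required, and there is no gap.
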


\begin{proof}
  \proofparskip{Case $n=0$}
  Direct consequence of
  Definition~\thref{d:aff-indep-family}, and
  \assume{the definition of freedom
    (the empty family in any {\vectorspace} is free)}.

  \proofparskip{Case $j=0$}
  Direct consequence of
  Definition~\thref{d:aff-indep-family}.

  \proofparskip{Case $n\geq1$ and $j\neq0$}
  \proofpar{(1)}
  Let $k,l\in[0..n]$, such that $k\neq l$.
  Let $(\mu_i)_{i\in[0..n]\setminus\{k\}}\in\matRn$.\\
  Then, from
  Definition~\threfc{LM-d:space}{properties of linear operations},
  we have
  \begin{align*}
    \sum_{i=0,i\neq k}^n \mu_i (\vv_i - \vv_k)
    &= \sum_{\substack{i=0\\i\neq k, i\neq l}}^n \mu_i \vv_i
    - \left( \sum_{i=0, i\neq k}^n \mu_i \right) \vv_k  + \mu_l \vv_l
    = \sum_{i=0, i\neq l}^n \tilde{\mu}_i \vv_i  + \mu_l \vv_l \\
    &= \sum_{i=0, i\neq l}^n \tilde{\mu}_i (\vv_i - \vv_l)
      + \left( \mu_l + \sum_{i=0, i\neq l}^n \tilde{\mu}_i \right) \vv_l \\
    &= \sum_{i=0, i\neq l}^n \tilde{\mu}_i (\vv_i - \vv_l)
      + \left( \tilde{\mu}_k + \sum_{i=0, i\neq k}^n \mu_i \right) \vv_l
    = \sum_{i=0,i\neq l}^n \tilde{\mu}_i (\vv_i - \vv_l),
  \end{align*}
  where we have set
  $\tilde{\mu}_k\eqdef-\sum_{i=0,i\neq k}^n \mu_i$, and
  $\tilde{\mu}_i\eqdef\mu_i$ for $i\in[0..n]\setminus\{k,l\}$.

  \proofparskip{(2) $(\vv_i-\vv_l)_{i\in[0..n]\setminus\{l\}}$ free $\Implies$
    $(\vv_i-\vv_k)_{i\in[0..n]\setminus\{k\}}$ free}\\
  Let $k,l\in[0..n]$, such that $k\neq l$.
  Assume that $(\vv_i-\vv_l)_{i\in[0..n]\setminus\{l\}}$ is free.\\
  Let $(\mu_i)_{i\in[0..n]\setminus\{k\}}\in\matRn$, such that
  $\sum_{i=0,i\neq k}^n\mu_i(\vv_i-\vv_{k})=0$.
  Then, from~(1), and
  \assume{the definition of freedom},
  we have $\sum_{i=0,i\neq l}^n\tilde{\mu}_i(\vv_i-\vv_{l})=0$, and
  \[
    \forall i \in [0..n] \setminus \{k, l\},\;
    \tilde{\mu}_i = \mu_i = 0, \AND
    \tilde{\mu}_{k} = -\sum_{i = 0, i\neq k}^n \mu_i = - \mu_l = 0.
  \]
  Thus, for all $i\in[0..n]\setminus\{k\}$, $\mu_i=0$, and
  the family $(\vv_i-\vv_{k})_{i\in[0..n]\setminus\{k\}}$ is free.

  \proofparskip{(3)}
  The equivalence is a direct consequence of~(2), first with $k\eqdef j$ and
  $l\eqdef0$ (from left to right), then with $k\eqdef0$ and $l\eqdef j$ (from
  right to left).
\end{proof}

\begin{lemma}[affinely independent family of 2 elements]
  \label{l:aff-indep-family-of-two-elements}
  \mbox{}\hfill
  Let~$d\geq1$.\\
  Let $\famvert{\vv}{1,d}\eqdef(\vv_0,\vv_1)$ be 2 points in~$\matRd$.
  Then, $\famvert{\vv}{1,d}$ is affinely independent iff $\vv_0\neq\vv_1$.
\end{lemma}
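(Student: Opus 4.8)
The plan is to reduce the statement directly to the definition of affine independence combined with the characterization of freedom for a one-element family. By Definition~\thref{d:aff-indep-family} applied with $n \eqdef 1$, the pair $\famvert{\vv}{1,d}$ is affinely independent iff the single-vector family $(\vv_1 - \vv_0)$ is free in the {\vectorspace} $\matRd$. This first equivalence is immediate and carries all the conceptual content; what remains is to characterize when a singleton family is free.

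Next, I would invoke the basic (assumed) fact about freedom: a family consisting of a single vector $\uu$ is free iff $\uu \neq \zzero$. Indeed, if $\uu = \zzero$, then $1 \cdot \uu = \zzero$ with $1 \neq 0$ exhibits a nontrivial vanishing linear combination, so the family is not free; conversely, if $\uu \neq \zzero$, then $\lambda \uu = \zzero$ forces $\lambda = 0$ by the {\vectorspace} axioms, so it is free. Applying this with $\uu \eqdef \vv_1 - \vv_0$ gives the second equivalence: $(\vv_1 - \vv_0)$ is free iff $\vv_1 - \vv_0 \neq \zzero$.

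Finally, I would use the additive abelian group structure of $\matRd$ (Definition~\thref{LM-d:space}) to rewrite $\vv_1 - \vv_0 \neq \zzero \EQUIV \vv_0 \neq \vv_1$. Chaining the three equivalences then yields the claim. There is no genuine obstacle here: the proof is a short composition of equivalences. The only point requiring (minimal) care is to treat the singleton-freedom characterization as one of the assumed elementary facts about freedom rather than re-deriving it inline, consistent with the conventions announced in Chapter~\ref{c:introduction-2}.
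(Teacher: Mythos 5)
Your proof is correct and follows exactly the same route as the paper: unfold Definition~\ref{d:aff-indep-family} to reduce to freedom of the singleton family $(\vv_1-\vv_0)$, then use the assumed fact that a free vector is nonzero. The paper states this in one line; your version merely spells out the same two steps in more detail.
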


\begin{proof}
  Direct consequence of
  Definition~\thref{d:aff-indep-family}, and
  \assume{a free vector is {\nonzero}}.
\end{proof}

\begin{lemma}[affinely independent family is closed by sub-family]
  \label{l:aff-indep-closed-by-sub-family}
  \mbox{}\\
  Let~$d\geq1$.
  Let~$n\in\matN$.
  Let $\famvert{\vv}{n,d}$ be $n+1$ affinely independent points in~$\matRd$.\\
  Let $l\in[0..n]$, and let $\indl$ be an injective map from $[0..l]$ into
  $[0..n]$.
  Let $j\in[0..l]$.\\
  Then, $(\vv_{\indl(i)}-\vv_{\indl(j)})_{i\in[0..l]\setminus\{j\}}$ is free in
  the {\vectorspace} $\matRd$.

  In other terms, the family
  $\famvert{\vv_\indl}{l,d}\eqdef(\vv_{\indl(i)})_{i\in[0..l]}$ is affinely
  independent in~$\matRd$.
\end{lemma}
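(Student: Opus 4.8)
The plan is to reduce everything to the equivalent characterization of affine independence (Lemma~\thref{l:equiv-def-aff-indep-family}) together with the basic fact that a subfamily of a free family is free. First I would apply Lemma~\thref{l:equiv-def-aff-indep-family} to the family~$\famvert{\vv}{n,d}$ with the base index~$\indl(j)\in[0..n]$. Since~$\famvert{\vv}{n,d}$ is affinely independent by hypothesis, this yields directly that the family $(\vv_i-\vv_{\indl(j)})_{i\in[0..n]\setminus\{\indl(j)\}}$ is free in~$\matRd$.

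Next I would observe that the target family $(\vv_{\indl(i)}-\vv_{\indl(j)})_{i\in[0..l]\setminus\{j\}}$ is, up to reindexing through~$\indl$, a subfamily of the free family just obtained. Concretely, for each $i\in[0..l]\setminus\{j\}$ we have $i\neq j$, hence $\indl(i)\neq\indl(j)$ by injectivity of~$\indl$, so that $\indl(i)\in[0..n]\setminus\{\indl(j)\}$; moreover the restriction of~$\indl$ to $[0..l]\setminus\{j\}$ remains injective. Thus $i\mapsto\indl(i)$ is an injection from $[0..l]\setminus\{j\}$ into $[0..n]\setminus\{\indl(j)\}$ that realizes the target family as a subfamily of the free family from the previous step. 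Invoking \assume{the fact that a subfamily of a free family is free}, the target family is free, which is the main conclusion. (The degenerate case $l=0$ forces $j=0$ and leaves the empty family, which is free; it is covered uniformly by this argument.)

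Finally, for the reformulation ``in other terms'', I would apply Lemma~\thref{l:equiv-def-aff-indep-family} once more, now to the $(l+1)$-point family~$\famvert{\vv_\indl}{l,d}$ with base index~$j$: the freeness of $(\vv_{\indl(i)}-\vv_{\indl(j)})_{i\in[0..l]\setminus\{j\}}$ is exactly the equivalent condition for~$\famvert{\vv_\indl}{l,d}$ to be affinely independent in~$\matRd$, as required by Definition~\thref{d:aff-indep-family}.

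The hard part will be purely the bookkeeping of the reindexing: one must check carefully that~$\indl$ sends $[0..l]\setminus\{j\}$ injectively into $[0..n]\setminus\{\indl(j)\}$, since this is precisely where the injectivity hypothesis on~$\indl$ is used essentially, and it is what licenses the passage from a free family to a free subfamily. Everything else is a direct appeal to the already-established equivalent definition of affine independence.
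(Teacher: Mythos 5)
Your proposal is correct and follows essentially the same route as the paper: both reduce to Lemma~\ref{l:equiv-def-aff-indep-family} applied to the full family with base point~$\indl(j)$, and then recognize the target family as an injectively re-indexed subfamily of the resulting free family. The only difference is that you invoke ``a subfamily of a free family is free'' as an assumed basic fact, whereas the paper proves that step inline by taking a vanishing linear combination and extending its coefficients by zero on $[0..n]\setminus\indl([0..l])$.
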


\begin{proof}
  Let $(\mu_{i})_{i\in[0..l]\setminus\{j\}}$ in $\matRl$ such that
  $\sum_{i\in[0..l]\setminus\{j\}}\mu_i(\vv_{\indl(i)}-\vv_{\indl(j)})=0$.\\
  Let $I\eqdef\indl([0..l])\subset[0..n]$, and $J\eqdef[0..n]\setminus I$.
  Thus, from
  \assume{the definition of bijectivity},
  $\indl$ is bijective from $[0..l]$ to~$I$, and $[0..n]=I\uplus J$ is a
  partition.
  Let $\jp\eqdef\indl(j)\in I$.
  Thus, we have
  \begin{equation*}
    0 = \sum_{\ip \in I \setminus \{ \jp \}}
    \mu_{\indlinv (\ip)} (\vv_\ip - \vv_\jp)
    + \sum_{\ip \in J} 0 (\vv_\ip - \vv_\jp)
    = \sum_{\ip \in [0..n] \setminus \{ \jp \}}
    \tilde{\mu}_\ip (\vv_\ip - \vv_\jp),
  \end{equation*}
  where we have set $\forall\ip\in I\setminus\{\jp\}$,
  $\tilde{\mu}_\ip\eqdef\mu_{\indlinv(\ip)}$, and
  $\forall\ip\in J$, $\tilde{\mu}_\ip\eqdef0$.
  Thus, from
  Lemma~\threfc{l:equiv-def-aff-indep-family}{with $j\eqdef\jp$}, and
  \assume{the definition of freedom},
  we deduce that for all $\ip\in I\setminus\{\jp\}$, $\mu_{\indlinv(\ip)}=0$,
  and thus for all $i\in[0..l]\setminus\{j\}$, $\mu_{i}=0$.
  Thus, from
  \assume{the definition of freedom}, and
  Definition~\thref{d:aff-indep-family},
  $(\vv_{\indl(i)}-\vv_{\indl(j)})_{i\in[0..l]\setminus\{j\}}$ is free, and
  $\famvert{\vv_\indl}{l,d}$ is affinely independent.
\end{proof}

\section{Complements on univariate polynomials}
\label{s:compl-univar-pol}

\begin{remark}
  See also Sections~\ref{s:Pk1-lag-pol} and~\ref{s:multivar-pol}.
\end{remark}

\begin{definition}[monomial of a single variable]
  \label{d:monom-k1}
  \mbox{}\hfill
  Let~$\alpha\in\matN$.
  The {\em monomial of degree~$\alpha$ of a single variable} is
  denoted~$X^\alpha$, and is defined by
  $X^\alpha\eqdef(x\in\matR\longmapsto x^\alpha\in\matR)$.
\end{definition}

\begin{definition}[polynomial space $\matPki$]
  \label{d:pol-space-Pk1}
  \mbox{}\hfill
  Let~$k\in\matN$.\\
  The {\em space of polynomials of degree at most~$k$ of a single variable}
  is denoted~$\matPki$, and is defined by
  \begin{equation}
    \label{e:pol-space-Pk1}
    \matPki \eqdef \Span{1, X, X^2, \ldots, X^k}
    = \left\{
      \left( x \mapsto \sum_{\alpha = 0}^k a_\alpha x^\alpha \right) : \ArRR
    \rightst
    \left.
      \vphantom{\sum_{\alpha = 0}^k a_\alpha x^\alpha}
      (a_\alpha)_{\alpha \in [0..k]} \in \matR
    \right\}.
  \end{equation}
\end{definition}

\begin{lemma}[$\matPki$ is space of degree at most $k$]
  \label{l:deg-Pi-leq-k}
  \mbox{}\hfill
  Let~$\calP(\matR)$ be the infinite-dimensional\\ space of polynomials of a
  single variable on~$\matR$.
  Let~$k\in\matN$.
  Then, we have
  \begin{equation}
    \label{e:deg-Pki-leq-k-1}
    \matPki = \{ p \in \calP (\matR) \st \deg p \leq k \}.
  \end{equation}

  For instance, we have
  \begin{align}
    \label{e:deg-Pki-leq-k-2}
    \matPoi &= \Span{1} = \{ (x \mapsto a_0) \st a_0 \in \matR \},\\
    \label{e:deg-Pki-leq-k-3}
    \matPii &= \Span{1, X}
    = \{ (x \mapsto a_0 + a_1 x) \st a_0, a_1 \in \matR \}
    = \AffRR.
  \end{align}
\end{lemma}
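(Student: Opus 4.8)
The plan is to establish the equality~\eqref{e:deg-Pki-leq-k-1} by double inclusion, and then to read off the two examples~\eqref{e:deg-Pki-leq-k-2} and~\eqref{e:deg-Pki-leq-k-3} as specializations.

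First I would unfold Definition~\thref{d:pol-space-Pk1}: every element of~$\matPki$ is a function $x\mapsto\sum_{\alpha=0}^k a_\alpha x^\alpha$ for some real coefficients $(a_\alpha)_{\alpha\in[0..k]}$. For the inclusion $\matPki\subset\{p\in\calP(\matR)\st\deg p\leq k\}$, I would take such a~$p$ and apply \assume{the definition of the degree of a univariate polynomial}: $\deg p$ is the largest index $\alpha\in[0..k]$ with $a_\alpha\neq0$ (with the usual convention for the zero polynomial), hence $\deg p\leq k$. For the reverse inclusion, I would take $p\in\calP(\matR)$ with $\deg p\leq k$; by \assume{the basic representation of univariate polynomials on the monomial family}, $p$ writes $\sum_{\alpha=0}^{\deg p} a_\alpha X^\alpha$, which I would pad with zero coefficients $a_{\deg p+1}=\cdots=a_k=0$ to exhibit $p=\sum_{\alpha=0}^k a_\alpha X^\alpha\in\matPki$. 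This yields the two inclusions, hence~\eqref{e:deg-Pki-leq-k-1}.

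The examples follow by substitution. Taking $k\eqdef0$ and $k\eqdef1$ in Definition~\thref{d:pol-space-Pk1}, and expanding $\Span{1}$ and $\Span{1,X}$, gives~\eqref{e:deg-Pki-leq-k-2} and the first two equalities of~\eqref{e:deg-Pki-leq-k-3} directly. For the last equality $\matPii=\AffRR$, I would invoke Lemma~\thref{l:equiv-def-aff-map-finite-dim} with $p\eqdef q\eqdef1$: it characterizes $\AffRR$ as the set of maps $x\mapsto\cc+A\xx$ with $A\in\calM_{1,1}$ and $\cc\in\matR$. Identifying the $1\times1$ matrix~$A$ with a scalar $a_1$ and $\cc$ with $a_0$, this is precisely $\{x\mapsto a_0+a_1 x\st a_0,a_1\in\matR\}$, which coincides with the explicit form of~$\matPii$ obtained just above.

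The result is elementary, and there is no genuinely hard step. The only point requiring care is the bookkeeping between the two notions of polynomial at play — the explicit finite-span representation fixed by Definition~\thref{d:pol-space-Pk1}, and the abstract degree inherited from $\calP(\matR)$ — together with the convention for the zero polynomial, which must be handled so that $0\in\matPki$ and so that the padding argument in the reverse inclusion stays valid in the degenerate case $p=0$.
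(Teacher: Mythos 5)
Your proposal is correct and follows essentially the same route as the paper: the paper's proof is a one-line ``direct consequence'' of Definition~\ref{d:pol-space-Pk1}, the assumed definition of the degree of a univariate polynomial, and Lemma~\ref{l:equiv-def-aff-map-finite-dim} with $p=q=1$, which is exactly the skeleton you unfold. Your double-inclusion write-up and the care with the zero-polynomial convention are just a more explicit rendering of the same argument.
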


\begin{proof}
  Direct consequence of
  \assume{the definition of the degree a univariate polynomial},
  Definition~\thref{d:pol-space-Pk1},
  Definition~\thref{d:monom-k1},
  Lemma~\threfc{l:equiv-def-aff-map-finite-dim}{with $p=q=1$}, and
  Lemma~\thref{l:space-aff-maps}.
\end{proof}

\begin{lemma}[monomials are free in $\matPki$]
  \label{l:monom-free-in-Pk1}
  \mbox{}\hfill
  Let~$k\in\matN$.
  Then, $(1,X,\ldots,X^k)$ is free in~$\matPki$.
\end{lemma}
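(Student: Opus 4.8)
The plan is to prove freedom along the standard route: assume a linear combination of the monomials is the zero vector and deduce that all its coefficients vanish. Here the ``vectors'' are functions, so I would let $(a_\alpha)_{\alpha\in[0..k]}\in\matR$ and suppose that the combination $\sum_{\alpha=0}^k a_\alpha X^\alpha$ is the zero function from~$\matR$ to~$\matR$. I would then recover the coefficients one at a time by differentiating and evaluating at~$0$, which mirrors the partial-derivative argument announced for the multivariate monomials (see Section~\ref{s:our-main-sources}).

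The key computation is the value at~$0$ of the $j$-th derivative of a single monomial. Using \assume{the rules of derivation}, for $j\in[0..k]$ and $\alpha\in[0..k]$ one obtains, by distinguishing the cases $\alpha<j$, $\alpha=j$, and $\alpha>j$,
\[
  (X^\alpha)^{(j)}(0) = \fact{j}\,\kron{\alpha}{j},
\]
since for $\alpha<j$ the $j$-th derivative is identically zero, for $\alpha=j$ it is the constant function of value~$\fact{j}$, and for $\alpha>j$ it is a monomial of positive degree, which vanishes at~$0$.

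From this the conclusion is immediate. As the combination is the zero function, so is each of its derivatives, hence its $j$-th derivative vanishes at~$0$; by linearity of the derivative and of evaluation this yields $\sum_{\alpha=0}^k a_\alpha (X^\alpha)^{(j)}(0) = a_j\,\fact{j} = 0$ for every $j\in[0..k]$. Since $\fact{j}\neq0$ in the field~$\matR$, we get $a_j=0$ for all~$j$, so the only vanishing linear combination is the trivial one, and by \assume{the definition of freedom} the family $(1,X,\ldots,X^k)$ is free in~$\matPki$. The only genuine work is the per-monomial formula $(X^\alpha)^{(j)}(0)=\fact{j}\,\kron{\alpha}{j}$; I expect its case split to be the main (though routine) obstacle, in particular making precise that a monomial of positive degree vanishes at~$0$ and that the $j$-th derivative of $X^j$ is the constant~$\fact{j}$.
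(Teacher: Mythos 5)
Your proof is correct and follows essentially the same route as the paper: the paper also assumes a vanishing linear combination, takes the $l$-th derivative at $x=0$ to obtain $\fact{l}\,a_l=0$, and concludes from the positivity of the factorial. Your explicit case split establishing $(X^\alpha)^{(j)}(0)=\fact{j}\,\kron{\alpha}{j}$ is just a more detailed rendering of the same argument.
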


\begin{proof}
  Let $(a_\alpha)_{\alpha\in[0..k]}$ such that for all $x\in\matR$,
  $\sum_{\alpha=0}^ka_\alpha x^\alpha=0$.
  Let~$l\in[0..k]$.
  Then, from
  \assume{the linearity of derivation},
  \assume{the positivity of factorial},
  \assume{the zero-product property in~$\matR$}, and
  \assume{the definition of freedom},
  the $l$-th derivative taken in $x=0$ provides $\fact{l}\,a_l=0$,
  thus $a_l=0$, and the family is free.
\end{proof}

\begin{lemma}[dimension of $\matPki$]
  \label{l:dim-Pk1}
  \mbox{}\\
  Let~$k\in\matN$.
  Then, $\matPki$ is a {\vectorspace} of dimension $\dim\matPki=k+1$.
\end{lemma}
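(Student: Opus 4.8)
The plan is to recognize the family of monomials $(1, X, X^2, \ldots, X^k)$ as a basis of~$\matPki$, and then read off the dimension as the cardinality of that basis.

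First I would invoke Definition~\thref{d:pol-space-Pk1}, according to which $\matPki$ is \emph{defined} as the linear span $\Span{1, X, X^2, \ldots, X^k}$; the family $(1, X, X^2, \ldots, X^k)$ is therefore a generating family of~$\matPki$. Next I would apply Lemma~\thref{l:monom-free-in-Pk1}, which asserts that this same family is free in~$\matPki$. A family that is simultaneously free and generating is a basis, so $(1, X, X^2, \ldots, X^k)$ is a basis of~$\matPki$.

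Finally I would count: this basis is indexed by~$[0..k]$ and hence consists of exactly $k+1$ elements. Since the dimension of a finite-dimensional {\vectorspace} is the common cardinality of its bases, this immediately yields $\dim\matPki = k+1$.

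There is essentially no obstacle here: the result is a direct combination of the defining span (Definition~\thref{d:pol-space-Pk1}), the freeness lemma (Lemma~\thref{l:monom-free-in-Pk1}), and the definition of dimension. The only point deserving a moment of care is the cardinality count, namely that the monomial family has $k+1$ elements and not~$k$; this comes from the index running from~$0$ to~$k$ inclusive, so that the constant monomial~$1$ (the case $\alpha\eqdef0$) is counted.
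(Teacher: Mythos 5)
Your proposal is correct and follows essentially the same route as the paper: the monomials $(1,X,\ldots,X^k)$ generate $\matPki$ by Definition~\ref{d:pol-space-Pk1}, are free by Lemma~\ref{l:monom-free-in-Pk1}, hence form a basis of cardinality $k+1$. The only cosmetic difference is that you conclude ``basis'' from ``free and generating'' directly, whereas the paper's proof instead cites Lemma~\ref{l:free-family-of-dim-elements-is-basis}; the substance is identical.
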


\begin{proof}
  Direct consequence of
  Definition~\thref{d:pol-space-Pk1},
  Lemma~\thref{l:monom-free-in-Pk1},
  Lemma~\thref{l:free-family-of-dim-elements-is-basis}, and
  \assume{the definition of the dimension of a {\vectorspace}.
}
\end{proof}

\begin{remark}
  Note in the next lemma that when~$p$ or~$q$ is 0, then by convention, its
  degree is~$-\infty$ (which is absorbing for addition in
  $\{-\infty\}\uplus\matN$), and the formula $\deg(pq)=-\infty=\deg(p)+\deg(q)$
  is still valid.
\end{remark}

\begin{lemma}[product of two univariate polynomials]
  \label{l:prod-2-polynom-univ}
  \mbox{}\\
  Let $k,l\in\matN$.
  Let $p\in\matPki$, and $q\in\matPli$.
  Then, we have $pq\in\matPkpli$.

  Moreover, if $\deg(p)=k$ and $\deg(q)=l$, then $\deg(pq)=k+l$.
\end{lemma}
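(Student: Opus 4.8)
The plan is to expand both polynomials on the monomial basis and multiply them out, reading off the degree from the collected coefficients. Writing $p=\sum_{i=0}^k a_i X^i$ and $q=\sum_{j=0}^l b_j X^j$ (possible by Definition~\thref{d:pol-space-Pk1}), the pointwise product is, by distributivity and commutativity of multiplication in~$\matR$ together with the identity $x^i x^j=x^{i+j}$,
\[
  pq = \sum_{i=0}^k \sum_{j=0}^l a_i b_j X^{i+j}.
\]
First I would reindex this double sum according to the total exponent $m=i+j$, which ranges over $[0..k+l]$, collecting $pq=\sum_{m=0}^{k+l} c_m X^m$ with $c_m\eqdef\sum_{i+j=m} a_i b_j$, where the inner sum ranges over pairs with $0\le i\le k$ and $0\le j\le l$. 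This exhibits $pq$ as a linear combination of $1,X,\ldots,X^{k+l}$, hence $pq\in\Span{1,X,\ldots,X^{k+l}}=\matPkpli$ by Definition~\thref{d:pol-space-Pk1}, which settles the first claim.

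For the degree statement, I would assume $\deg(p)=k$ and $\deg(q)=l$, so that the leading coefficients satisfy $a_k\neq0$ and $b_l\neq0$. From the membership just established and Lemma~\thref{l:deg-Pi-leq-k}, we already get $\deg(pq)\le k+l$, so it only remains to show that the coefficient of $X^{k+l}$ is nonzero. Among the pairs $(i,j)$ with $0\le i\le k$, $0\le j\le l$, and $i+j=k+l$, the only one is $(k,l)$ (if $i<k$ then $j=k+l-i>l$, a contradiction), so $c_{k+l}=a_k b_l$. By the zero-product property in~$\matR$ we have $a_k b_l\neq0$; and since the monomial family is free in~$\matPkpli$ (Lemma~\thref{l:monom-free-in-Pk1}), the expansion $pq=\sum_m c_m X^m$ is the unique representation on the monomial basis, so $c_{k+l}$ is genuinely the coefficient of $X^{k+l}$ in~$pq$. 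Hence $\deg(pq)\ge k+l$, and combined with the upper bound, $\deg(pq)=k+l$.

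This is essentially a bookkeeping argument resting on the ring axioms of~$\matR$ and the already-established description of~$\matPki$. The hard part will be the reindexing of the double sum into a single graded sum, and in particular justifying cleanly that no pair other than $(k,l)$ contributes to the top exponent; the appeal to freeness of the monomials is the other point to handle with care, since it is what lets us read the degree off the collected coefficients rather than off the (a priori ambiguous) product expression. Note that the hypothesis $k,l\in\matN$ with $\deg(p)=k$, $\deg(q)=l$ excludes the zero polynomial in the ``moreover'' part, so the $-\infty$ convention mentioned in the preceding remark plays no role here.
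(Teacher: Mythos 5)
Your proof is correct and follows essentially the same route as the paper: expand both polynomials on the monomial basis, collect the product as $\sum_{n=0}^{k+l}\bigl(\sum_{i+j=n}a_ib_j\bigr)X^n$ to get membership in $\matPkpli$, and observe that the top coefficient is $a_kb_l\neq 0$ by the zero-product property to get the degree formula. Your extra appeal to the freeness of the monomials to justify reading the degree off the collected coefficients is a harmless (and arguably welcome) elaboration of a step the paper leaves implicit.
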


\begin{proof}
  From
  \assume{the definition of univariate polynomial},
  let $(a_i)_{i\in[0..k]}\in\matRkpi$ and $(b_j)_{j\in[0..l]}$ in~$\matRlpi$
  such that we have $p=\sum_{i=0}^ka_iX^i$ and $q=\sum_{j=0}^lb_jX^j$.
  Thus, from
  \assume{commutative ring properties of $\matR$}, and
  \assume{the fact that $a_jb_{n-j}$ is {\it a priori} {\nonzero} iff
    $0\leq j\leq k$ and $0\leq n-j\leq l$ ({\ie} $n-l\leq j\leq n$)},
  we obtain
  \begin{equation}
    \label{e:poly-mul}
    pq = \sum_{n = 0}^{k+l}
    \sum_{j = \max \{ 0, n - l \}}^{\min \{ n, k \}} a_j b_{n-j} X^n.
  \end{equation}
  Thus, $pq\in\matPkpli$.

  \proofparskip{Degree formula}\\
  Direct consequence of
  \assume{the definition of univariate polynomial},
  \assume{the zero-product property in~$\matR$}, and
  formula~\eqref{e:poly-mul} (coefficients of highest degree, $a_k$,
  $b_l$ and $a_kb_l$ are all {\nonzero}).
\end{proof}

\chapter{Finite element}
\label{c:fe}

\begin{remark}
  The next definition is from from Ciarlet~\cite{cia:fem:78}, it is taken
  from~\cite[pp.~50--52]{eg:fe1:21}.
\end{remark}

\begin{definition}[finite element triple]
  \label{d:fe-triple}
  \mbox{}\hfill
  Let~$d\geq1$.
  Let~$\nsh\geq1$.
  Let~$q\in\{1,d\}$.\\
  Let~$K\subset\matRd$.
  Let~$P\subset\calF(K,\matRq)$.
  Let~$\Sigma\eqdef(\sigma_i)_{i\in[1..\nsh]}\in\LSp{P}{\matR}$.\\
  The triple~$(K,P,\Sigma)$ is called {\em finite element} iff
  \begin{enumerate}
  \item $K$~is a {\nontrivial} polyhedron, $\Interior{K}\neq\emptyset$.\\
    The polyhedron~$K$ represents the {\em geometry} of the finite element.
  \item $P$~is a {\nontrivial} finite-dimensional {\vectorspace} of functions,
    $P\neq\{0\}$.\\
    The {\vectorspace}~$P$ is called {\em approximation space} of the finite
    element.
  \item $\Sigma$~is such that $\phi_\Sigma:P\to\matR^{\nsh}$ defined by
    $\phi_\Sigma(p)\eqdef(\sigma_i(p))_{i\in[1..\nsh]}$ is an isomorphism.\\
    The linear forms~$\sigma_i$ are called {\em degrees of freedom} of the
    finite element, $\nsh$ is the {\em number of degrees of freedom}, and
    the bijectivity of~$\phi_\Sigma$ is called {\em unisolvence}.
  \end{enumerate}
\end{definition}

\begin{remark}
  \mbox{}\\
  The functions of~$P$ are typically polynomial functions, possibly composed
  with some smooth diffeomorphism.
  For any finite collection of linear forms~$\Sigma$, the mapping~$\phi_\Sigma$
  is obviously linear.
\end{remark}

\begin{lemma}[injectivity implies unisolvence]
  \label{l:inj-implies-unisolvence}
  \mbox{}\hfill
  Let~$d\geq1$.
  Let~$\nsh\geq1$.
  Let~$q\in\{1,d\}$.\\
  Let~$K\subset\matRd$ be a {\nontrivial} polyhedron.
  Let~$P\subset\calF(K,\matRq)$ be a {\nontrivial} finite-dimensional
  {\vectorspace}.
  Let~$\Sigma\eqdef(\sigma_i)_{i\in[1..\nsh]}\in\LSp{P}{\matR}$ be linear forms
  on~$P$.\\
  If $\dim P\geq\nsh=\card(\Sigma)$
  and~$\phi_\Sigma$ is injective,
  then $(K,P,\Sigma)$ is unisolvent
\end{lemma}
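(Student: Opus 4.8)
The plan is to observe that, by Definition~\ref{d:fe-triple}, the unisolvence of $(K,P,\Sigma)$ is \emph{exactly} the statement that the linear map $\phi_\Sigma\colon P\to\matR^{\nsh}$ is an isomorphism, and that this bijectivity is an immediate consequence of the injectivity hypothesis combined with the dimension count, through Lemma~\ref{l:inj-or-surj-and-dim-implies-bij}. So the whole argument reduces to checking that the premises of that lemma are met.

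First I would record that $\phi_\Sigma$ is a linear map from $P$ to $\matR^{\nsh}$: since each $\sigma_i\in\LSp{P}{\matR}$ is a linear form and $\phi_\Sigma(p)\eqdef(\sigma_i(p))_{i\in[1..\nsh]}$ is assembled componentwise, $\phi_\Sigma$ inherits linearity, as already noted in the remark following Definition~\ref{d:fe-triple}. Next I would identify the relevant dimensions: by hypothesis $P$ is finite-dimensional, so $\dim P<\infty$, while the codomain satisfies $\dim\matR^{\nsh}=\nsh$. The hypothesis $\dim P\geq\nsh=\card(\Sigma)$ then reads precisely $\infty>\dim P\geq\dim\matR^{\nsh}$.

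Then I would apply formula~\eqref{e:inj-or-surj-and-dim-implies-bij-1} of Lemma~\ref{l:inj-or-surj-and-dim-implies-bij}, with $E\eqdef P$, $F\eqdef\matR^{\nsh}$, and $f\eqdef\phi_\Sigma$: the three required conditions, namely $\dim P<\infty$, $\dim P\geq\dim\matR^{\nsh}$, and $\phi_\Sigma$ injective, are exactly the assumptions in hand. The lemma therefore yields that $\phi_\Sigma$ is bijective, hence an isomorphism onto $\matR^{\nsh}$, which is by definition the unisolvence of the triple $(K,P,\Sigma)$.

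There is no genuine obstacle here; the entire content has been packaged into Lemma~\ref{l:inj-or-surj-and-dim-implies-bij} (itself resting on the rank--nullity theorem, which forces $\dim P=\nsh$ and hence surjectivity once injectivity and $\dim P\geq\nsh$ hold). It is worth emphasizing that the assumptions on $K$ being a {\nontrivial} polyhedron and on $P$ being {\nontrivial} play \emph{no} role in the unisolvence conclusion itself; they are carried along only so that, together with the unisolvence just established, the triple qualifies as a finite element in the sense of Definition~\ref{d:fe-triple}. The sole points deserving an explicit word of justification are the linearity of $\phi_\Sigma$ and the equality $\dim\matR^{\nsh}=\nsh$.
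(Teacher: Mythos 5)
Your proof is correct and follows exactly the paper's own route: identify unisolvence with bijectivity of $\phi_\Sigma$ via Definition~\ref{d:fe-triple}, then apply \eqref{e:inj-or-surj-and-dim-implies-bij-1} of Lemma~\ref{l:inj-or-surj-and-dim-implies-bij} with $E\eqdef P$ and $F\eqdef\matR^{\nsh}$. The paper states this as a two-citation ``direct consequence''; you have merely unpacked the same argument, correctly checking its premises.
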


\begin{proof}
  Direct consequence of
  Definition~\thref{d:fe-triple}, and
  Lemma~\thref{l:inj-or-surj-and-dim-implies-bij}.
\end{proof}

\begin{lemma}[dimension of approximation space]
  \label{l:dim-of-P}
  \mbox{}\\
  Let $(K,P,\Sigma)$ be a finite element.
  Then, we have $\dim P=\nsh=\card(\Sigma)$.
\end{lemma}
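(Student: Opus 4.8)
The plan is to exploit directly the third defining condition of a finite element in Definition~\ref{d:fe-triple}, namely that $\phi_\Sigma$ is an isomorphism from $P$ onto $\matR^{\nsh}$, and to read off the dimension from this bijection. The identity $\card(\Sigma)=\nsh$ is immediate, since $\Sigma=(\sigma_i)_{i\in[1..\nsh]}$ is by definition a family indexed by the integer interval $[1..\nsh]$; so the only substance is to establish $\dim P=\nsh$.

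First I would record that $\phi_\Sigma\in\LSp{P}{\matR^{\nsh}}$ is bijective. From its injectivity and Lemma~\ref{LM-l:injective-linear-map-has-zero-kernel} we obtain $\Ker{\phi_\Sigma}=\{\zzero\}$, hence $\dim\Ker{\phi_\Sigma}=0$ (using the assumed fact $\dim\{\zzero\}=0$). From its surjectivity and the characterization of surjectivity by the range, we get $\Rg{\phi_\Sigma}=\matR^{\nsh}$, hence $\dim\Rg{\phi_\Sigma}=\nsh$ (the dimension of $\matR^{\nsh}$ being $\nsh$). Applying the rank--nullity theorem to $\phi_\Sigma$ then yields
\[
  \dim P = \dim\Ker{\phi_\Sigma} + \dim\Rg{\phi_\Sigma} = 0 + \nsh = \nsh.
\]

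Combining the two facts gives $\dim P=\nsh=\card(\Sigma)$, as claimed. There is no real obstacle here: the statement is a direct consequence of unisolvence, and the only care needed is to invoke the right elementary ingredients (the rank--nullity theorem, the zero kernel of an injective linear map, and $\dim\matR^{\nsh}=\nsh$) rather than to perform any computation. An even shorter route, if it were available in the cited corpus, would be to quote that isomorphic finite-dimensional {\vectorspace}s share the same dimension and apply it to $P\cong\matR^{\nsh}$; but the rank--nullity argument keeps the proof self-contained with the results already listed as known.
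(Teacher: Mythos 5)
Your proof is correct and follows the same route as the paper, whose entire proof is ``Direct consequence of the rank--nullity theorem'': you apply rank--nullity to the isomorphism $\phi_\Sigma$, using zero kernel from injectivity and full range $\matR^{\nsh}$ from surjectivity, with $\card(\Sigma)=\nsh$ holding by definition of the indexed family. You have simply spelled out the details that the paper leaves implicit.
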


\begin{proof}
  Direct consequence of
  \assume{the rank--nullity theorem}.
\end{proof}

\begin{lemma}[degrees of freedom are basis]
  \label{l:dof-basis}
  \mbox{}\\
  Let $(K,P,\Sigma)$ be a finite element.
  Then, $\Sigma$ is a basis of the dual {\vectorspace} $\LSp{P}{\matR}$.
\end{lemma}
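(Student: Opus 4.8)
The plan is to exhibit $\Sigma$ as a free family of the correct cardinality inside the dual space $\LSp{P}{\matR}$, and then to conclude via Lemma~\thref{l:free-family-of-dim-elements-is-basis}. First I would record the two dimension facts. By Lemma~\thref{l:dim-of-P}, we have $\dim P=\nsh=\card(\Sigma)$. By the assumed result on the dimension of the dual space, $\LSp{P}{\matR}$ is a {\vectorspace} with $\dim\LSp{P}{\matR}=\dim P=\nsh$. Since $\nsh\geq1$ (Definition~\thref{d:fe-triple}), this dual space is nontrivial, so Lemma~\thref{l:free-family-of-dim-elements-is-basis} will apply as soon as freedom is established.

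The core step is to prove that $\Sigma=(\sigma_i)_{i\in[1..\nsh]}$ is free in $\LSp{P}{\matR}$. Here I would exploit that $\phi_\Sigma$ is an isomorphism, hence surjective, by Definition~\thref{d:fe-triple}: for each $j\in[1..\nsh]$ there exists $p_j\in P$ with $\phi_\Sigma(p_j)=\ee_j$, the $j$-th canonical vector of $\matR^{\nsh}$ (see Definition~\thref{d:canon-fam}), which unfolds to $\sigma_i(p_j)=\kron{i}{j}$ for all $i\in[1..\nsh]$. Then, given scalars $(\lambda_i)_{i\in[1..\nsh]}$ with $\sum_{i=1}^{\nsh}\lambda_i\sigma_i=0$ in $\LSp{P}{\matR}$, evaluating this identity at $p_j$ gives $\sum_{i=1}^{\nsh}\lambda_i\kron{i}{j}=\lambda_j=0$. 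As this holds for every $j\in[1..\nsh]$, the family $\Sigma$ is free. Combining freedom with the dimension count, Lemma~\thref{l:free-family-of-dim-elements-is-basis} yields that $\Sigma$ is a basis of $\LSp{P}{\matR}$.

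I do not expect a genuine obstacle; the only points requiring care are the appeal to the dimension of the dual space (taken as known) and the use of the surjectivity of $\phi_\Sigma$ to manufacture the vectors $p_j$ playing the role of the predual basis. An alternative route would be to observe that $(p_j)_{j\in[1..\nsh]}=(\phi_\Sigma^{-1}(\ee_j))_{j\in[1..\nsh]}$ is a basis of $P$ (image of the canonical basis under the isomorphism $\phi_\Sigma^{-1}$) and that $\Sigma$ is precisely its dual basis; however, the freedom-plus-dimension argument stays closest to the lemmas already available in the excerpt, so I would favor it.
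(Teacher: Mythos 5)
Your proposal is correct and follows essentially the same route as the paper: reduce the claim to the freedom of~$\Sigma$ via Lemma~\ref{l:free-family-of-dim-elements-is-basis}, Lemma~\ref{l:dim-of-P} and the (assumed) identity $\dim\LSp{P}{\matR}=\dim P$, then use the bijectivity of~$\phi_\Sigma$ to kill a vanishing linear combination of the~$\sigma_i$. The only (harmless) difference is in the freedom step: you evaluate at the $\nsh$ preimages $\phi_\Sigma^{-1}(\ee_j)$ to obtain each $\lambda_j=0$ directly, whereas the paper evaluates at the single element $p\eqdef\phi_\Sigma^{-1}((a_i)_i)$ and concludes from $\sum_i a_i^2=0$ (a step that relies on~$\matR$ being ordered, which your variant avoids).
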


\begin{proof}
  \mbox{}\\
  From
  Lemma~\thref{l:free-family-of-dim-elements-is-basis},
  Lemma~\thref{l:dim-of-P}, and
  \assume{the fact that $\dim(\LSp{P}{\matR})=\dim P$},
  it suffices to show that~$\Sigma$ is free.\\
  Let~$a\eqdef(a_i)_{i\in[1..\nsh]}\in\matR$ such that
  $\sum_{i\in[1..\nsh]}a_i\sigma_i=0$.
  Then, with $p\eqdef\phi_\Sigma^{-1}(a)$, we have
  \[
    \sum_{i \in [1..\nsh]} a_i \sigma_i (p)
    = \sum_{i \in [1..\nsh]} a_i^2
    = 0,
  \]
  and thus for all $i\in[1..\nsh]$, $a_i=0$.
  Therefore, $\Sigma$ is free, and it is a basis of $\LSp{P}{\matR}$.
\end{proof}

\begin{remark}
  In the next definition, the {\em predual basis} of a
  basis~$\calBp=(\bpi)_{i\in[1..n]}$ of $\Ep\eqdef\LSp{E}{\matK}$, the dual of
  some {\vectorspace}~$E$ of dimension~$n$, is a
  basis~$\calB=(b_i)_{i\in[1..n]}$ of~$E$ such that its dual basis is~$\calBp$.
  This means that, for all $i,j\in[1..n]$, we have $\bpi(b_j)=\kron{i}{j}$.
\end{remark}

\begin{definition}[shape function]
  \label{d:shape-fun}
  \mbox{}\hfill
  Let $(K,P,\Sigma)$ be a finite element.\\
  The elements of the predual basis of~$\Sigma$ are called
  {\em shape functions}.
\end{definition}

\chapter{Simplicial geometry}
\label{c:simplex}

\begin{remark}
  This is an important example, see~\cite[Chap~7, pp.~75--86]{eg:fe1:21}.

  In this section, $d\in\matN^\star$ denotes the dimension and $k\in\matN$ the
  maximal degree of the polynomial approximation.
  The case $k=0$ corresponds to constant functions and is often excluded.
  In the same way, the case $d=0$ (points) is a limit case that is note
  considered (at least in a first step).
\end{remark}

\begin{definition}[family of points]
  \label{d:fam-aff-pts}
  \mbox{}\hfill
  Let~$d\geq1$.
  Let $\calB_d\subset\matNd$ be a finite set.\\
  For all $\bbeta\in\calB_d$, let $\aa_\bbeta\in\matRd$.
  Then, the whole family is denoted
  $\famnode{\aa}{\calB_d}\eqdef(\aa_\bbeta)_{\bbeta\in\calB_d}$.
\end{definition}

\begin{remark}
  \label{r:fam-nodes-fam-verts}
  The notation $\famnode{\aa}{\calB_d}$ is used to collect the {\em nodes}
  ({\ie} points where the nodal degrees of freedom are computed through the
  linear forms $\sigma_\bbeta$, with $\bbeta\in\calB_d$).
  They are numbered with multi-indices in $\matNd$.

  In contrast, the notation $\famvertd{\vv}$ defined in
  Definition~\ref{d:canon-fam} are typically used to collect the {\em vertices}
  of the geometry, with the canonical numbering of $[0..n]$.
\end{remark}

\begin{definition}[family of reference points]
  \label{d:fam-ref-aff-pts}
  \mbox{}\hfill
  Let~$d\geq1$.
  The family of {\em reference points (in~$\matRd$)} is
  denoted~$\famvertd{\hvv}$, and is defined by $\hvv_0\eqdef\zzero$ and for all
  $i\in[1..d]$, $\hvv_i\eqdef\ee_i$.
\end{definition}

\begin{lemma}[reference isobarycenter]
  \label{l:ref-isobaryc}
  \mbox{}\hfill
  Let~$d\geq1$.
  The {\em reference isobarycenter (in~$\matRd$)} is denoted by~$\hbisobarycd$
  and is defined as the isobarycenter of $\famvertd{\hvv}$.
  Thus, we have
  $\hbisobarycd=\frac{1}{d+1}\sum_{i=0}^d\hvv_i=\frac{1}{d+1}\sum_{i=1}^d\ee_i$,
  and for all $i\in[1..d]$, $(\hbisobarycd)_i=\frac{1}{d+1}$.
\end{lemma}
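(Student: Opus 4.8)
The plan is to unfold the definition of the isobarycenter and substitute the explicit reference points, so the proof is essentially a direct computation with no genuine difficulty.

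First I would invoke Definition~\thref{d:isobaryc} with $n\eqdef d$ applied to the family $\famvertd{\hvv}$, which gives immediately $\hbisobarycd=\frac{1}{d+1}\sum_{i=0}^d\hvv_i$. Then, using Definition~\thref{d:fam-ref-aff-pts}, namely $\hvv_0=\zzero$ and $\hvv_i=\ee_i$ for all $i\in[1..d]$, the $i=0$ summand contributes the zero vector and can be dropped (appealing to the additive monoid/group properties of $\matRd$ as recalled via Definition~\thref{LM-d:space}). This yields the second equality $\hbisobarycd=\frac{1}{d+1}\sum_{i=1}^d\ee_i$.

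Finally, to establish the component formula, I would fix $j\in[1..d]$ and compute componentwise: by Definition~\thref{d:canon-fam} we have $(\ee_i)_j=\kron{i}{j}$, so $(\hbisobarycd)_j=\frac{1}{d+1}\sum_{i=1}^d\kron{i}{j}$. Since $j\in[1..d]$, exactly one index in the sum (the one with $i=j$) gives $1$ and all the others give $0$, whence $\sum_{i=1}^d\kron{i}{j}=1$ and therefore $(\hbisobarycd)_j=\frac{1}{d+1}$.

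There is no real obstacle here. The only points requiring a modicum of care are the vanishing of the $\hvv_0$ term (which relies on $\zzero$ being the additive neutral element and on $\frac{1}{d+1}\zzero=\zzero$) and the evaluation of the Kronecker sum for an index lying in the valid range $[1..d]$; both are entirely routine consequences of the cited definitions.
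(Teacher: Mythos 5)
Your proof is correct and follows exactly the route the paper takes: the paper's proof is simply ``direct consequence of Definition~\ref{d:isobaryc} and Definition~\ref{d:fam-ref-aff-pts}'', and your computation merely spells out the routine substitution and Kronecker-sum evaluation that this citation leaves implicit.
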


\begin{proof}
  Direct consequence of
  Definition~\thref{d:isobaryc}, and
  Definition~\thref{d:fam-ref-aff-pts}.
\end{proof}

\begin{lemma}[family of reference points is affinely independent]
  \label{l:ref-affine-vert-is-affinely-indep}
  \mbox{}\\
  Let $d\geq 1$.
  Then, the family of reference points $\famvertd{\hvv}$ is affinely
  independent.
\end{lemma}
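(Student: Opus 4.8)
The plan is to unfold the definition of affine independence and reduce the claim to the freedom of the canonical ``basis'' family of~$\matRd$, which is a standard fact of linear algebra assumed throughout the document.

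First, I would invoke Definition~\thref{d:aff-indep-family} (with $n\eqdef d$): the family $\famvertd{\hvv}$ is affinely independent iff the family of differences $(\hvv_i-\hvv_0)_{i\in[1..d]}$ is free in the {\vectorspace}~$\matRd$. So it suffices to establish the freedom of this family of differences.

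Next, I would use Definition~\thref{d:fam-ref-aff-pts}, which gives $\hvv_0\eqdef\zzero$ and $\hvv_i\eqdef\ee_i$ for all $i\in[1..d]$. Together with the additive group properties of~$\matRd$, this yields $\hvv_i-\hvv_0=\ee_i-\zzero=\ee_i$ for every $i\in[1..d]$. Hence the family of differences is exactly the canonical ``basis'' family $(\ee_i)_{i\in[1..d]}$ introduced in Definition~\thref{d:canon-fam}. Finally, I would conclude by invoking \assume{the fact that the canonical basis of~$\matRd$ is free}, so that $(\ee_i)_{i\in[1..d]}$ is free, and therefore $\famvertd{\hvv}$ is affinely independent.

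There is essentially no obstacle here: the whole content of the statement collapses once one observes that subtracting the origin~$\hvv_0=\zzero$ leaves the canonical basis vectors unchanged, so the differences coincide with the canonical basis itself. The only external input is the freedom of the canonical basis of~$\matRd$, which belongs to the basic linear-algebra results supposed to be known. Consequently the proof is a short chain of direct consequences rather than a genuine argument.
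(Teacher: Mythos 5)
Your proof is correct and follows essentially the same route as the paper: unfold Definition~\ref{d:aff-indep-family}, observe via Definition~\ref{d:fam-ref-aff-pts} that the differences $\hvv_i-\hvv_0$ are exactly the canonical family $(\ee_i)_{i\in[1..d]}$ of Definition~\ref{d:canon-fam}, and conclude from the (assumed) fact that this family is free. The only cosmetic difference is that the paper factors this last fact through ``the canonical family is a basis'' plus ``a basis is free''.
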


\begin{proof}
  Direct consequence of
  Definition~\thref{d:fam-ref-aff-pts}
  Definition~\thref{d:canon-fam},
  Definition~\thref{d:aff-indep-family},
  \assume{the canonical family is basis}, and
  \assume{basis is free}.
\end{proof}

\begin{definition}[reference simplex]
  \label{d:ref-simplex}
  \mbox{}\\
  Let $d\geq 1$.
  The {\em reference simplex} is denoted~$\hKd$, and is defined as the unit
  rectangular simplex in~$\matRd$,
  \begin{equation}
    \label{e:ref-simplex}
    \hKd \eqdef \left\{
      \hxx \in \matRd
      \vphantom{\sum_{i=1}^d \hx_i \leq 1}
    \right.
    \leftst
      \forall i \in [1..d],\quad \hx_i \geq 0 \CONJ
      \sum_{i=1}^d \hx_i \leq 1
    \right\}.
  \end{equation}
  Its vertices are the reference points $\famvertd{\hvv}$.
\end{definition}

\begin{lemma}[coordinates in reference simplex are smaller than~1]
  \label{l:coord-in-ref-simplex-smaller-than-1}
  \mbox{}\\
  Let $d\geq 1$.
  Let $\hxx\in\hKd$.
  Then, we have for all $i\in[1..d]$, $\hx_i\leq1$.
\end{lemma}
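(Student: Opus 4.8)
The plan is to reduce the statement to the two defining conditions of the reference simplex and then perform a single arithmetic estimate. From Definition~\ref{d:ref-simplex}, the hypothesis $\hxx\in\hKd$ unfolds into two facts: first, nonnegativity of every coordinate, namely $\hx_j\geq0$ for all $j\in[1..d]$; and second, the sum constraint $\sum_{j=1}^d\hx_j\leq1$. These are exactly the ingredients needed, so the first step is simply to extract them by \assume{the definition of membership}.

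Next I would fix an arbitrary index $i\in[1..d]$ and isolate the $i$-th term in the sum. The key observation is that $\hx_i$ is bounded above by the full sum because the omitted terms are nonnegative:
\[
  \hx_i \leq \hx_i + \sum_{\substack{j=1\\j\neq i}}^d \hx_j
  = \sum_{j=1}^d \hx_j \leq 1.
\]
The first inequality uses that $\sum_{j\neq i}\hx_j\geq0$, which follows from the nonnegativity of each $\hx_j$; the middle equality is just regrouping the sum; and the final inequality is the sum constraint from the definition. Since $i$ was arbitrary, this yields $\hx_i\leq1$ for all $i\in[1..d]$.

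There is essentially no obstacle here: the only content is the bookkeeping of splitting off one summand and bounding the remaining nonnegative terms below by~$0$, which relies on \assume{the ordered valued field properties of~$\matR$} (compatibility of the order with addition, and nonnegativity of sums of nonnegative reals). In the \coq\ formalization the mild care needed is to handle the degenerate case $d=1$ uniformly, where the residual sum $\sum_{j\neq i}\hx_j$ is empty and equals~$0$; the argument above covers this without modification since the empty sum is nonnegative by convention.
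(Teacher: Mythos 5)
Your proof is correct and is essentially the paper's argument: both rest on the single inequality $\hx_i\leq\sum_{j=1}^d\hx_j\leq1$, obtained from the nonnegativity of the coordinates and the sum constraint in Definition~\ref{d:ref-simplex}. The only difference is presentational — the paper phrases it by contradiction (assuming $\hx_i>1$ forces the sum above $1$), while you argue directly; the mathematical content is identical.
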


\begin{proof}
  \mbox{}\\
  By contradiction, assume that there exists~$i$ such that $\hx_i>1$, then from
  Definition~\thref{d:ref-simplex}, and
  \assume{monotony of addition in~$\matRplus$},
  we have $\sum_{j=1}^d\hx_j\geq\hx_i>1$, which is impossible.
\end{proof}

\begin{lemma}[{\nontrivial} reference simplex]
  \label{l:non-trivial-ref-simplex}
  \mbox{}\hfill
  Let~$d\geq1$.
  Then, $\hKd$ has {\nonempty} interior.
\end{lemma}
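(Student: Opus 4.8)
The plan is to exhibit an explicit interior point of $\hKd$ together with an open ball around it that stays inside $\hKd$. The natural candidate is the reference isobarycenter $\hbisobarycd$, whose coordinates are all equal to $\frac{1}{d+1}$ by Lemma~\ref{l:ref-isobaryc}. At this point every constraint defining $\hKd$ (Definition~\ref{d:ref-simplex}) holds with strict slack: each coordinate satisfies $\frac{1}{d+1} > 0$, and the sum satisfies $\sum_{i=1}^d \frac{1}{d+1} = \frac{d}{d+1} < 1$, leaving a margin of $\frac{1}{d+1}$ below the threshold~$1$. The whole point of the proof is that this slack can absorb a small perturbation in every direction at once.

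First I would fix the radius $r \eqdef \frac{1}{d(d+1)}$ and work with the sup norm on $\matRd$ for convenience (all norms being equivalent, the choice is immaterial for nonemptiness of the interior). I would then take an arbitrary point $\hxx$ in the open ball of radius~$r$ centered at $\hbisobarycd$, that is, satisfying $|\hx_i - \frac{1}{d+1}| < r$ for all $i \in [1..d]$, and verify that $\hxx$ meets the two families of constraints of Definition~\ref{d:ref-simplex}. For the positivity constraints, from $\hx_i > \frac{1}{d+1} - r$ together with $r \leq \frac{1}{d+1}$ (which holds since $d \geq 1$), I get $\hx_i > 0$, hence $\hx_i \geq 0$. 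For the sum constraint, summing the $d$ upper bounds $\hx_i < \frac{1}{d+1} + r$ yields $\sum_{i=1}^d \hx_i < \frac{d}{d+1} + d\,r = \frac{d}{d+1} + \frac{1}{d+1} = 1$, hence $\sum_{i=1}^d \hx_i \leq 1$.

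Since this holds for every $\hxx$ in the ball, the open ball of radius~$r$ centered at $\hbisobarycd$ is entirely contained in $\hKd$; therefore $\hbisobarycd$ belongs to the interior of $\hKd$, and $\Interior{\hKd} \neq \emptyset$, which is the claim.

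The only step requiring any care — and the closest thing to an obstacle here — is the calibration of the radius. The slack $\frac{1}{d+1}$ available on the sum constraint is shared among the $d$ coordinates, so the per-coordinate radius must be divided by~$d$; this is exactly why the correct choice is $r = \frac{1}{d(d+1)}$ and not $\frac{1}{d+1}$. Using the sup norm keeps this bookkeeping transparent, whereas with the Euclidean norm one would drag along an extra $\sqrt{d}$ factor for no gain. Note also that this is the one place in the development where a genuinely topological ingredient (open balls in a metric space) is invoked, as announced in Section~\ref{s:our-main-sources}.
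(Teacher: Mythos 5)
Your proof is correct and takes essentially the same route as the paper's: both exhibit an explicit open ball for the sup norm contained in $\hKd$ and conclude from the definition of the interior. The only (immaterial) difference is the calibration — the paper centers the ball at $(\frac{1}{2d},\ldots,\frac{1}{2d})$ with radius $\frac{1}{2d}$, so each coordinate lies in $(0,\frac{1}{d})$ and the sum is automatically below~$1$, whereas you center at the isobarycenter with radius $\frac{1}{d(d+1)}$; both choices work.
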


\begin{proof}
  Let~$\calB\eqdef
  \OpenBall_{\infty}((\frac{1}{2d},\ldots,\frac{1}{2d}),\frac{1}{2d})$.
  Let $\hxx\in\calB$.
  Then, from
  \assume{the definition of the $\infty$-norm},
  \assume{valued field properties of~$\matR$}, and
  Definition~\thref{d:ref-simplex},
  we have $\forall i\in[1..d]$,
  $\left|\hxx_i-\frac{1}{2d}\right|<\frac{1}{2d}$, {\ie}
  $0<\hxx_i<\frac{1}{d}$, thus $\hxx_i>0$ and $\sum_{=0}^d\hxx_i<1$, and
  $\hxx\in\hKd$.
  Therefore, from
  \assume{open ball is open in metric space},
  \assume{open ball with positive radius is {\nonempty}
    (with $\frac{1}{2d}>0$)}, and
  \assume{the definition of the interior},
  $\calB\neq\emptyset$, is open, and included in of~$\hKd$, and thus $\hKd$ has
  a {\nonempty} interior.
\end{proof}

\begin{definition}[simplex]
  \label{d:simplex}
  \mbox{}\hfill
  Let~$d\geq1$.
  Let~$\famvertd{\vv}$ be $d+1$ points in~$\matRd$.\\
  The {\em simplex of vertices $\famvertd{\vv}$} is denoted~$\Kvd$, and is
  defined as the convex envelop of $\famvertd{\vv}$,
  \begin{equation}
    \label{e:simplex}
    \Kvd \eqdef \left\{
      \xx = \sum_{i=0}^d \mu_i \vv_i \in \matRd
      \vphantom{\sum_{i=0}^d \mu_i = 1}
    \right.
    \leftst
      \forall i \in [0..d],\quad \mu_i \geq 0 \CONJ
      \sum_{i=0}^d \mu_i = 1
    \right\}.
  \end{equation}
\end{definition}

\begin{remark}
  The proof that a simplex~$\Kvd$ with affinely independent vertices has
  {\nonempty} interior is given later, once the geometrical transformation that
  maps the reference simplex~$\hKd$ to~$\Kvd$ is defined, see
  Lemma~\ref{l:non-trivial-simplex}.
\end{remark}

\begin{lemma}[coordinates in simplex are smaller than~1]
  \label{l:coord-in-simplex-smaller-than-1}
  \mbox{}\hfill
  Let $d\geq 1$.
  Let~$\famvertd{\vv}$ be $d+1$ points in~$\matRd$.
  Let $\xx=\sum_{i=0}^d\mu_i\vv_i\in\Kvd$.
  Then, we have for all $i=[0..d]$, $\mu_i\leq1$.
\end{lemma}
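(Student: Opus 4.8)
The plan is to read off the conclusion directly from the definition of the simplex. By Definition~\ref{d:simplex}, the hypothesis $\xx=\sum_{i=0}^d\mu_i\vv_i\in\Kvd$ furnishes exactly the two constraints on the barycentric coefficients that I need: for all $j\in[0..d]$, $\mu_j\geq0$, and $\sum_{j=0}^d\mu_j=1$.

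First I would fix an arbitrary index $i\in[0..d]$ and isolate the $i$-th term in the normalization constraint, writing $\mu_i=1-\sum_{j\in[0..d]\setminus\{i\}}\mu_j$. Since each remaining coefficient $\mu_j$ (for $j\neq i$) is nonnegative, the subtracted sum $\sum_{j\neq i}\mu_j$ is nonnegative, and hence $\mu_i\leq1$. Equivalently, and slightly more directly, one can argue $\mu_i\leq\sum_{j=0}^d\mu_j=1$, using that adjoining the nonnegative summands $\mu_j$ ($j\neq i$) to $\mu_i$ can only increase the total.

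The only substantive fact invoked is the compatibility of the order on~$\matR$ with addition (nonnegative summands may be dropped without increasing a sum), which is part of the assumed ordered field properties of~$\matR$. There is no genuine obstacle here: the statement is an immediate consequence of nonnegativity together with the sum-to-one condition, and the argument parallels the reference-simplex counterpart in Lemma~\ref{l:coord-in-ref-simplex-smaller-than-1}, except that the present direct computation replaces the proof by contradiction used there.
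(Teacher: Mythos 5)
Your proof is correct and uses essentially the same argument as the paper: the key inequality $\mu_i\leq\sum_{j=0}^d\mu_j=1$, justified by the nonnegativity of the remaining coefficients from Definition~\ref{d:simplex} and the monotonicity of addition. The paper merely phrases it as a proof by contradiction (assuming $\mu_i>1$ and deriving $\sum_{j}\mu_j>1$), which is a trivial reformulation of your direct version.
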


\begin{proof}
  By contradiction, assume that there exists~$i$ such that $\mu_i>1$, then from
  Definition~\thref{d:simplex}, and
  \assume{monotony of addition in~$\matRplus$},
  we have $\sum_{i=0}^d\mu_i\geq\mu_i>1$, which is impossible.
\end{proof}

\begin{lemma}[simplex of reference vertices is reference simplex]
  \label{l:simplex-of-ref-vert-is-ref-simplex}
  \mbox{}\\
  Let $d\geq 1$.
  Then, we have  $\Khvd=\hKd$.
\end{lemma}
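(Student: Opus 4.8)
The plan is to establish the set equality $\Khvd=\hKd$ by double inclusion, making explicit the correspondence between the barycentric coefficients $(\mu_i)_{i\in[0..d]}$ appearing in Definition~\thref{d:simplex} and the Cartesian coordinates $(\hx_j)_{j\in[1..d]}$ appearing in Definition~\thref{d:ref-simplex}. The key computation, valid throughout, is that since $\hvv_0=\zzero$ and $\hvv_i=\ee_i$ for $i\in[1..d]$ (Definition~\thref{d:fam-ref-aff-pts}), any barycentric combination collapses to $\sum_{i=0}^d\mu_i\hvv_i=\sum_{i=1}^d\mu_i\ee_i$, whose $j$-th component is exactly $\mu_j$ for $j\in[1..d]$, using $(\ee_i)_j=\kron{i}{j}$ from Definition~\thref{d:canon-fam}. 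Hence whenever $\hxx=\sum_{i=0}^d\mu_i\hvv_i$ we have $\hx_j=\mu_j$ for all $j\in[1..d]$, and the affine constraint $\sum_{i=0}^d\mu_i=1$ lets us read off $\mu_0=1-\sum_{j=1}^d\hx_j$.

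For the inclusion $\Khvd\subset\hKd$, I would take $\hxx\in\Khvd$, written as $\hxx=\sum_{i=0}^d\mu_i\hvv_i$ with $\mu_i\geq0$ for all $i\in[0..d]$ and $\sum_{i=0}^d\mu_i=1$. The coordinate computation gives $\hx_j=\mu_j\geq0$ for each $j\in[1..d]$, which supplies the positivity constraints of $\hKd$; and $\sum_{j=1}^d\hx_j=\sum_{j=1}^d\mu_j=1-\mu_0\leq1$, where $\mu_0\geq0$ gives the upper bound. Thus $\hxx\in\hKd$. For the reverse inclusion $\hKd\subset\Khvd$, I would take $\hxx\in\hKd$ and exhibit the witness coefficients $\mu_j\eqdef\hx_j$ for $j\in[1..d]$ together with the slack coefficient $\mu_0\eqdef1-\sum_{j=1}^d\hx_j$. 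The hypotheses $\hx_j\geq0$ and $\sum_{j=1}^d\hx_j\leq1$ give $\mu_j\geq0$ for every $j\in[0..d]$, and by construction $\sum_{i=0}^d\mu_i=1$; it then only remains to verify $\sum_{i=0}^d\mu_i\hvv_i=\sum_{j=1}^d\hx_j\ee_j=\hxx$ by the same coordinate computation, so that $\hxx\in\Khvd$.

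The only real work is the index bookkeeping between the barycentric range $[0..d]$ and the coordinate range $[1..d]$, mediated by the slack variable $\mu_0=1-\sum_{j=1}^d\hx_j$; no deep result is required. In particular, affine independence of the reference vertices (Lemma~\thref{l:ref-affine-vert-is-affinely-indep}) would yield uniqueness of the $(\mu_i)_{i\in[0..d]}$, but this is not needed here, since for the first inclusion the coordinates are forced by $\hx_j=\mu_j$ and for the second inclusion we construct the coefficients explicitly.
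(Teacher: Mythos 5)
Your proof is correct and follows essentially the same route as the paper's: both rest on the correspondence $\mu_j=\hx_j$ for $j\in[1..d]$ and $\mu_0=1-\sum_{j=1}^d\hx_j$, with the paper phrasing the argument as a single equivalence of membership conditions for an arbitrary $\hxx\in\matRd$ rather than as two explicit inclusions. Your remark that the coefficients are forced (so affine independence is not needed) is consistent with the paper's implicit use of the canonical basis property.
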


\begin{proof}
  Let $\hxx\in\matRd$.
  Then, since
  \assume{the canon family is basis},
  we have $\hxx=\sum_{i=1}^d\hx_i\ee_i$.\\
  Let $\mu_j\eqdef\hx_j$ for all $j\in[1..d]$, and
  $\mu_0\eqdef1-\sum_{j=1}^d\hx_j$.
  Then, from
  Definition~\thref{d:fam-ref-aff-pts},
  we also have $\hxx=\sum_{i=0}^d\mu_i\hvv_i$ with $\sum_{i=0}^d\mu_i=1$.
  Thus, from
  \assume{ordered field properties of $\matR$},
  we obviously have the equivalence
  \[
    \forall i \in [0..d],\quad \mu_i \geq 0 \EQUIV
    \sum_{j = 1}^d \hx_j \leq 1, \AND
    \forall i \in [1..d],\quad \hx_i \geq 0.
  \]
  And finally, from
  Definition~\thref{d:ref-simplex}, and
  Definition~\thref{d:simplex},
  we have $\hxx\in\Khvd$ iff $\hxx\in\hKd$, hence the equality.
\end{proof}

\chapter{{\ToPFEki} Lagrange finite element on a segment}
\label{c:Pk1-lag-fe}
\minitoc

\begin{remark}
  In this chapter, we set $d\eqdef1$.
  General results for $d\geq1$ are provided in Chapter~\ref{c:Pkd-lag-fe}.
\end{remark}

\section{Multi-indices in dimension $d=1$}
\label{s:multi-ind-in-dim-1}

\begin{definition}[multi-indices $\calAki$]
  \label{d:multi-ind-Ak1}
  \mbox{}\\
  Let $k\in\matN$.
  The {\em set of multi-indices in dimension~1} is denoted~$\calAki$, and is
  defined by
  \begin{equation}
    \label{e:multi-ind-Ak1}
    \calAki \eqdef
    \left\{ \alpha \in \matN \st \alpha \leq k \right\}
    = [0..k].
  \end{equation}
\end{definition}

\begin{lemma}[cardinal of $\calAki$]
  \label{l:card-Ak1}
  \mbox{}\hfill
  Let $k\in\matN$.
  The number of elements of~$\calAki$ is
  \begin{equation}
    \label{e:card-Ak1}
    \card(\calAki) = \binom{k+1}{1} = k+1.
  \end{equation}
\end{lemma}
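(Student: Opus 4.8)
The plan is to reduce the statement to an elementary count of an integer interval and then verify the binomial identity by citing the properties already established.

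First I would invoke Definition~\ref{d:multi-ind-Ak1}, which gives directly $\calAki = [0..k]$. This turns the problem into computing the cardinality of the integer interval $[0..k]$. Since this interval consists of the consecutive integers $0, 1, \ldots, k$, it contains exactly $k+1$ elements; this is a basic property of integer intervals in~$\matN$, so I would simply assert $\card(\calAki) = k+1$ at this point.

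It then remains to check the equality $\binom{k+1}{1} = k+1$. For this I would apply property~\eqref{e:prop-binom-coef-1} of Lemma~\ref{l:prop-binom-coef}, taking $n \eqdef k+1$. The hypothesis $n \geq 1$ of that property is automatically satisfied, since $k \in \matN$ forces $k+1 \geq 1$; the property then yields $\binom{k+1}{1} = k+1$, matching the previously computed cardinal.

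There is essentially no obstacle here: the lemma is a triviality whose only purpose is to record the one-dimensional base case of the general cardinality formula for~$\calAkd$. The one point worth stating explicitly is that the side condition $n \geq 1$ in~\eqref{e:prop-binom-coef-1} holds for every admissible~$k$, so the two expressions $\card(\calAki)$, $\binom{k+1}{1}$, and $k+1$ coincide for all $k \in \matN$, including $k = 0$.
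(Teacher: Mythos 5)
Your proof is correct and follows exactly the paper's route: the paper also cites Definition~\ref{d:multi-ind-Ak1} (giving $\calAki=[0..k]$) together with property~\eqref{e:prop-binom-coef-1} of Lemma~\ref{l:prop-binom-coef}, and your remark that $n\eqdef k+1\geq1$ is precisely the side condition that makes the latter applicable.
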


\begin{proof}
  Direct consequence of
  Definition~\thref{d:multi-ind-Ak1}, and
  Lemma~\threfc{l:prop-binom-coef}{\eqref{e:prop-binom-coef-1}}.
\end{proof}

\section{{\ToPPki} Lagrange polynomials}
\label{s:Pk1-lag-pol}

\begin{remark}
  In the sequel, for a set of indices~$I$, and for a family of
  elements~$(a_i)_{i\in I}$, the expression ``there is no double
  in~$(a_i)_{i\in I}$'' means that the indexation is injective, {\ie} for all
  $i,j \in I$, $i\neq j$ implies $a_i\neq a_j$.
\end{remark}

\begin{definition}[Lagrange polynomials of $\matPki$]
  \label{d:lag-pol-Pk1}
  \mbox{}\\
  Let $k\in\matN$.
  Let $\famnodeki{a}$ be $k+1$ points in~$\matR$.
  Assume that there is no double in~$\famnodeki{a}$.
  The {\em Lagrange polynomials associated with~$\famnodeki{a}$} are denoted
  $\left(\calL^{\famnodeki{a}}_i\right)_{i\in[0..k]}$, and are defined by
  \begin{equation}
    \label{e:lag-pol-Pk1}
    \forall x \in \matR,\quad
    \calL^{\famnodeki{a}}_i (x) \eqdef
    \prod_{j=0, j\neq i}^k \frac{x - a_j}{a_i - a_j},
  \end{equation}
  with the convention that a product indexed by the empty set equals~1
  ({\ie} when $k\eqdef0$).
\end{definition}

\begin{lemma}[Lagrange polynomials is basis of $\matPki$]
  \label{l:lag-pol-is-basis-Pk1}
  \mbox{}\\
  Let $k\in\matN$.
  Let $\famnodeki{a}$ be $k+1$ points in~$\matR$.
  Assume that there is no double in~$\famnodeki{a}$.\\
  Then, the Lagrange polynomials associated with~$\famnodeki{a}$ form a
  basis of~$\matPki$, that satisfies
  \begin{align}
    \label{e:lag-basis-Pk1-1}
    \forall i \in [0..k],\quad
    & \deg \calL^{\famnodeki{a}}_i = k,\\
    \label{e:lag-basis-Pk1-2}
    \forall i, j \in [0..k],\quad
    & \calL^{\famnodeki{a}}_i(a_j) = \kron{i}{j},\\
    \label{e:lag-basis-Pk1-3}
    & \sum_{i=0}^k \calL^{\famnodeki{a}}_i = 1.
  \end{align}
\end{lemma}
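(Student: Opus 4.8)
The plan is to establish the three stated properties and then deduce that the Lagrange polynomials form a basis. First I would verify that each $\calL^{\famnodeki{a}}_i$ genuinely lies in $\matPki$ and compute its degree. From Definition~\ref{d:lag-pol-Pk1}, each $\calL^{\famnodeki{a}}_i$ is a product of $k$ factors, each affine in $x$ (namely $\frac{x-a_j}{a_i-a_j}$, which is well-defined since the $a_j$ are pairwise distinct). By repeated application of Lemma~\ref{l:prod-2-polynom-univ} on the product of univariate polynomials, the product of $k$ degree-one polynomials has degree exactly $k$, giving~\eqref{e:lag-basis-Pk1-1}; the leading coefficient is $\prod_{j\neq i}\frac{1}{a_i-a_j}\neq0$, so the degree is not lowered. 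This also shows $\calL^{\famnodeki{a}}_i\in\matPki$ via Lemma~\ref{l:deg-Pi-leq-k}.

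Next I would prove the interpolation property~\eqref{e:lag-basis-Pk1-2}. Evaluating at $x=a_j$ with $j\neq i$, the factor indexed by that same $j$ vanishes (numerator $a_j-a_j=0$), so $\calL^{\famnodeki{a}}_i(a_j)=0$. Evaluating at $x=a_i$, each factor becomes $\frac{a_i-a_j}{a_i-a_j}=1$, so the product is~$1$. This is exactly $\kron{i}{j}$. The partition-of-unity identity~\eqref{e:lag-basis-Pk1-3} I would handle by considering the polynomial $p\eqdef\sum_{i=0}^k\calL^{\famnodeki{a}}_i-1\in\matPki$: using~\eqref{e:lag-basis-Pk1-2}, for each $j\in[0..k]$ we have $p(a_j)=\sum_i\kron{i}{j}-1=1-1=0$, so $p$ is a polynomial of degree at most~$k$ vanishing at $k+1$ distinct points, hence $p=0$. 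This uses the basic fact (assumed known about univariate polynomials) that a nonzero polynomial of degree at most $k$ has at most $k$ roots.

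Finally, to conclude that $(\calL^{\famnodeki{a}}_i)_{i\in[0..k]}$ is a basis of $\matPki$, I would show the family is free, then invoke a dimension count. For freedom, suppose $\sum_{i=0}^k\lambda_i\calL^{\famnodeki{a}}_i=0$. Evaluating at each node $a_j$ and using~\eqref{e:lag-basis-Pk1-2} yields $\lambda_j=\sum_i\lambda_i\kron{i}{j}=0$ for every $j$, so all coefficients vanish and the family is free. Since $\dim\matPki=k+1$ by Lemma~\ref{l:dim-Pk1}, and we have a free family of exactly $k+1$ elements, Lemma~\ref{l:free-family-of-dim-elements-is-basis} gives that it is a basis (for $k\geq1$; the case $k=0$ is the trivial singleton $\{1\}$, handled directly).

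I expect the main obstacle to be organizational rather than conceptual: the degree computation in~\eqref{e:lag-basis-Pk1-1} requires care to confirm the leading coefficient is nonzero so that the degree is exactly $k$ and not smaller, which is where the no-double hypothesis is essential (it guarantees the denominators $a_i-a_j$ are nonzero). The partition-of-unity step relies on the root-counting fact for univariate polynomials; if that is not among the explicitly assumed basics, it would instead need to be derived, but it is standard and the excerpt lists univariate polynomial degree facts as known, so I would treat it as available.
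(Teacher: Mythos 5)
Your proposal is correct, and for the degree, the interpolation identity~\eqref{e:lag-basis-Pk1-2}, and the basis property it follows the same route as the paper (product of affine factors of nonzero leading coefficient for the degree, evaluation at the nodes for freedom, then the dimension count via Lemma~\ref{l:dim-Pk1} and Lemma~\ref{l:free-family-of-dim-elements-is-basis}). The one genuine divergence is the partition of unity~\eqref{e:lag-basis-Pk1-3}: you set $p\eqdef\sum_{i=0}^k\calL^{\famnodeki{a}}_i-1$ and invoke the fact that a nonzero univariate polynomial of degree at most~$k$ cannot vanish at $k+1$ distinct points. The paper instead proves the basis property \emph{first} and then expands the constant polynomial~$1$ in the Lagrange basis; evaluating that expansion at each node~$a_j$ and using~\eqref{e:lag-basis-Pk1-2} forces every coefficient to equal~$1$, which is exactly~\eqref{e:lag-basis-Pk1-3}. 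The two arguments are both valid, but the paper's ordering buys something in this formalization-oriented setting: the root-counting fact is \emph{not} among the assumed basics here (only the notion of degree of a univariate polynomial is), and deriving it would require the factor theorem or Euclidean division, machinery the document deliberately avoids (note that even the freedom of monomials is proved by differentiation at~$0$ rather than by counting roots). Your argument therefore carries a hidden prerequisite that the paper's does not; reordering so that the basis property precedes~\eqref{e:lag-basis-Pk1-3} removes it at no cost.
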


\begin{proof}
  \proofparskip{Degree~\eqref{e:lag-basis-Pk1-1}, in $\matPki$,
    and identity~\eqref{e:lag-basis-Pk1-2}}
  Direct consequence of
  Definition~\thref{d:lag-pol-Pk1},
  Lemma~\thref{l:prod-2-polynom-univ},
  Lemma~\thref{l:deg-Pi-leq-k}, and
  \assume{field properties of~$\matR$}.

  \proofparskip{Basis}
  Let $(\lambda_i)_{i\in[0..k]}$ in~$\matR$, such that
  $\sum_{i=0}^k\lambda_i\calL^{\famnodeki{a}}_i=0$.
  This implies that for all $x\in\matR$,
  $\sum_{i=0}^k\lambda_i\calL^{\famnodeki{a}}_i(x)=0$.
  Then, from~\eqref{e:lag-basis-Pk1-2}, and
  \assume{ring properties of~$\matR$},
  taking $x\eqdef a_j$ provides $\sum_{i=0}^k\lambda_i\kron{i}{j}=\lambda_j=0$.
  Thus, from
  \assume{the definition of freedom},
   Lemma~\thref{l:dim-Pk1}, and
   Lemma~\thref{l:free-family-of-dim-elements-is-basis},
   the Lagrange polynomials form a basis of~$\matPki$.

  \proofparskip{Identity~\eqref{e:lag-basis-Pk1-3}}
  From
  Definition~\threfc{d:pol-space-Pk1}{$1$ is in $\matPki$},
  the previous point (basis), and
  \assume{the definition of basis (basis is generator)},
  there exists $(\alpha_i)_{i\in[0..k]}$ in~$\matR$ such that
  $1=\sum_{i=0}^k\alpha_i\calL^{\famnodeki{a}}_i$,
  and thus for all $x\in\matR$,
  $1=\sum_{i=0}^k\alpha_i\calL^{\famnodeki{a}}_i(x)$.
  Let $j\in[0..k]$.
  Then, from~\eqref{e:lag-basis-Pk1-2},
  taking $x\eqdef a_j$ provides $1=\sum_{i=0}^k\alpha_i\kron{i}{j}=\alpha_j$,
  and we obtain~\eqref{e:lag-basis-Pk1-3}.
\end{proof}

\begin{lemma}[decomposition of $\matPki$ polynomial in Lagrange basis]
  \label{l:decomp-Pk1-pol-in-lag-basis}
  \mbox{}\hfill
  Let~$k\in\matN$.
  Let~$\famnodeki{a}$ be $k+1$ points in~$\matR$.
  Assume that there is no double in~$\famnodeki{a}$.
  Then, we have
  \begin{equation}
    \label{e:decomp-Pk1-pol-in-lag-basis}
    \forall p \in \matPki,\quad
    p = \sum_{i=0}^k p(a_i) \, \calL^{\famnodeki{a}}_i.
  \end{equation}
\end{lemma}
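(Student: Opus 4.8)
The plan is to leverage the preceding Lemma~\ref{l:lag-pol-is-basis-Pk1}, which already establishes that the family $\left(\calL^{\famnodeki{a}}_i\right)_{i\in[0..k]}$ is a basis of~$\matPki$ and satisfies the cardinal identity $\calL^{\famnodeki{a}}_i(a_j)=\kron{i}{j}$ (equation~\eqref{e:lag-basis-Pk1-2}). Since the statement is essentially the computation of the coordinates of an arbitrary $p\in\matPki$ in this basis, the whole argument reduces to a decomposition followed by an evaluation at the nodes.

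Concretely, first I would invoke the basis property together with \emph{the definition of a basis as a generating family} to obtain a (unique) family $(\lambda_i)_{i\in[0..k]}$ in~$\matR$ such that
\begin{equation*}
  p = \sum_{i=0}^k \lambda_i \, \calL^{\famnodeki{a}}_i,
\end{equation*}
which, evaluated pointwise, gives $p(x)=\sum_{i=0}^k \lambda_i\,\calL^{\famnodeki{a}}_i(x)$ for all $x\in\matR$. Then, fixing $j\in[0..k]$ and taking $x\eqdef a_j$, the identity~\eqref{e:lag-basis-Pk1-2} together with \emph{ring properties of~$\matR$} collapses the sum to
\begin{equation*}
  p(a_j) = \sum_{i=0}^k \lambda_i \, \kron{i}{j} = \lambda_j.
\end{equation*}
Substituting $\lambda_j=p(a_j)$ back into the decomposition yields exactly~\eqref{e:decomp-Pk1-pol-in-lag-basis}.

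There is no genuine obstacle here: the hypothesis that there is no double in~$\famnodeki{a}$ is needed only to apply Lemma~\ref{l:lag-pol-is-basis-Pk1} and is carried over verbatim. The only point requiring mild care is the bookkeeping of the quantifier over~$j$, so that the conclusion holds for every coordinate simultaneously; since $j\in[0..k]$ was arbitrary, the equality of coefficients $\lambda_j=p(a_j)$ holds for all~$j$, and the displayed formula follows. An alternative, slightly more abstract route would be to observe that both $p$ and $\sum_{i=0}^k p(a_i)\,\calL^{\famnodeki{a}}_i$ agree at the $k+1$ distinct nodes~$a_i$ and both lie in~$\matPki$, so their difference is a polynomial of degree at most~$k$ with $k+1$ roots, hence zero; but the basis-plus-evaluation argument above is the most direct and matches the style and dependencies already set up in the chapter.
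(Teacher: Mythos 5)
Your proof is correct and follows essentially the same route as the paper: decompose $p$ in the Lagrange basis given by Lemma~\ref{l:lag-pol-is-basis-Pk1}, evaluate at $x\eqdef a_j$, and use the identity~\eqref{e:lag-basis-Pk1-2} to identify each coefficient as $p(a_j)$. The alternative root-counting argument you mention at the end is not needed and is not the one the paper uses.
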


\begin{proof}
  Let $p\in\matPki$.
  Then, from
  Lemma~\threfc{l:lag-pol-is-basis-Pk1}{basis},
  there exists a unique $(\alpha_i)_{i\in[0..k]}$ in~$\matR$ such that for all
  $x\in\matR$,
  $p(x)=\sum_{i=0}^k\alpha_i\calL^{\famnodeki{a}}_i(x)$.\\
  Let $j\in[0..k]$.
  Then, from
  Lemma~\threfc{l:lag-pol-is-basis-Pk1}{\eqref{e:lag-basis-Pk1-2}},
  taking $x\eqdef a_j$ provides
  $p(a_j)=\sum_{i=0}^k\alpha_i\kron{i}{j}=\alpha_j$.
  Thus, the result.
\end{proof}

\begin{remark}
  In the following, instead of directly defining the finite element on a
  current cell, we choose to define the finite element on the reference cell
  first, then obtain the one on any current cell by the affine geometric
  transformation.
\end{remark}

\section{{\ToPFErefki} Lagrange finite element on the reference segment}
\label{s:Pk1-lag-fe-ref}

\begin{lemma}[reference simplex is {\nontrivial} in~$\matR$]
  \label{l:ref-simplex-non-trivial-in-R}
  \mbox{}\hfill
  The reference simplex~$\hKi$ is the segment $[0,1]$ of~$\matR$, whose
  vertices are~$\hv_0=0$ and~$\hv_1=1$ in~$\matR$.
  Its interior $\Interior{\hKi}$ is nonempty.
\end{lemma}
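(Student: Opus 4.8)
The plan is to prove the three assertions by direct specialization of the relevant definitions to $d \eqdef 1$, followed by a single appeal to the general nonemptiness result already available.

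First I would instantiate Definition~\ref{d:ref-simplex} at $d \eqdef 1$. The two quantified conditions defining $\hKd$ collapse: the conjunction $\forall i \in [1..1],\ \hx_i \geq 0$ reduces to the single inequality $\hx_1 \geq 0$, and the constraint $\sum_{i=1}^1 \hx_i \leq 1$ reduces to $\hx_1 \leq 1$. Reading $\matR^1$ as $\matR$ (so that the single coordinate $\hx_1$ is just the scalar $\hx$), the defining set becomes $\{\hx \in \matR \st 0 \leq \hx \leq 1\}$, which is exactly the segment $[0,1]$.

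Next, for the vertices, I would instantiate Definition~\ref{d:fam-ref-aff-pts} at $d \eqdef 1$, giving $\hvv_0 = \zzero$ and $\hvv_1 = \ee_1$. By Definition~\ref{d:canon-fam}, in dimension one the constant family $\zzero$ is the scalar $0$, and $\ee_1$ has its single component equal to $(\ee_1)_1 = \kron{1}{1} = 1$, hence $\ee_1 = 1$ as a scalar. Thus, under the same identification of $\matR^1$ with $\matR$, the vertices are $\hv_0 = 0$ and $\hv_1 = 1$, as claimed. Finally, the nonemptiness of $\Interior{\hKi}$ is immediate from Lemma~\ref{l:non-trivial-ref-simplex} taken with $d \eqdef 1$.

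No genuine mathematical difficulty arises here: every step is either the unfolding of a definition or an instance of an already proved lemma. The only point requiring care — and the closest thing to an obstacle — is the bookkeeping of the canonical identification of $\matR^1$ with $\matR$, ensuring that $\zzero$, $\ee_1$, and the single coordinate $\hx_1$ are consistently read as the scalars $0$, $1$, and $\hx$; in a fully formal setting this identification is the one subtlety worth stating explicitly.
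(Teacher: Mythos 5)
Your proof is correct and follows essentially the same route as the paper, which also obtains the result as a direct consequence of Definition~\ref{d:ref-simplex} (specialized to $d=1$) and Lemma~\ref{l:non-trivial-ref-simplex}. Your extra unfolding of Definitions~\ref{d:fam-ref-aff-pts} and~\ref{d:canon-fam} for the vertices, and the remark about identifying $\matR^1$ with $\matR$, only make explicit what the paper leaves implicit.
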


\begin{proof}
  Direct consequence of
  Definition~\thref{d:ref-simplex}, and
  Lemma~\thref{l:non-trivial-ref-simplex}.
\end{proof}

\begin{definition}[reference Lagrange nodes of $\matPki$]
  \label{d:lag-nodes-Pk1-ref}
  \mbox{}\\
  Let $k\in\matN$.
  The {\em reference Lagrange nodes of $\matPki$} are denoted
  $\famnodeki{\ha}$, and are defined by
  \begin{align}
    \label{e:lag-nodes-Pk1-ref-0}
    (k = 0)&& \ha_0 &\eqdef \hisobaryci = \frac{1}{2},&&\\
    \label{e:lag-nodes-Pk1-ref-1}
    (k \geq 1)&& \forall i \in [0..k],\quad \ha_i &\eqdef i \, \hat{h}
    \qquad\mbox{where } \hat{h} \eqdef \frac{1}{k}.&&
  \end{align}
\end{definition}

\begin{lemma}[reference Lagrange nodes are distinct]
  \label{l:lag-nodes-distinct-Pk1-ref}
  \mbox{}\hfill
  Let $k\in\matN$.
  Let $i,j\in\matN$.\\
  Then, the $(k+1)$ reference Lagrange nodes $\famnodeki{\ha}$ are distinct,
  and $0\leq i<j\leq k$ implies $\ha_i<\ha_j$.
\end{lemma}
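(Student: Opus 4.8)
The plan is to argue by cases on the value of $k$, mirroring the two-case structure of Definition~\thref{d:lag-nodes-Pk1-ref}. In the degenerate case $k=0$, the family $\famnodeki{\ha}$ consists of the single node $\ha_0=\hisobaryci=\frac12$. A one-element family trivially contains no double, so the distinctness claim holds; and the implication ``$0\leq i<j\leq k\Implies\ha_i<\ha_j$'' is vacuously true, since no integers satisfy $i<j\leq0$. It is worth dispatching this branch explicitly with the isobarycenter definition, because the alternative formula $\ha_i=i\hat h$ with $\hat h=\frac1k$ is not available when $k=0$.

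The substantive case is $k\geq1$, where $\ha_i=i\,\hat h$ with $\hat h=\frac1k$. The key observation is that $\hat h>0$, which follows from \assume{ordered field properties of~$\matR$} together with $k\geq1>0$ (so the reciprocal $\frac1k$ is positive). From this I would obtain the ordered part of the statement directly: for $0\leq i<j\leq k$, \assume{monotony of multiplication by a positive scalar in~$\matR$} yields $i\,\hat h<j\,\hat h$, that is $\ha_i<\ha_j$.

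Finally I would deduce distinctness from the strict monotonicity just established. Given two indices $i\neq j$ in $[0..k]$, either $i<j$ or $j<i$; the first case gives $\ha_i<\ha_j$ and the second gives $\ha_j<\ha_i$, so in both cases $\ha_i\neq\ha_j$ by \assume{antisymmetry of the strict order on~$\matR$}. Hence the indexation $i\mapsto\ha_i$ is injective, i.e.\ the $k+1$ nodes are pairwise distinct in the sense of the remark preceding Definition~\thref{d:lag-pol-Pk1}. There is no genuine obstacle here: the entire content reduces to the positivity of $\hat h$ and the resulting strict monotonicity, and the only real care needed is the clean separation of the $k=0$ branch from the general formula.
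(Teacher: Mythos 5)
Your proposal is correct and matches the paper's (very terse) proof, which simply invokes Definition~\thref{d:lag-nodes-Pk1-ref} and the ordered field properties of~$\matR$; your case split on $k=0$ versus $k\geq1$ and the positivity of $\hat{h}=\frac{1}{k}$ are exactly the details that terse citation elides. Nothing is missing.
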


\begin{proof}
  Direct consequence of
  Definition~\thref{d:lag-nodes-Pk1-ref}, and
  \assume{ordered field properties of~$\matR$}.
\end{proof}

\begin{lemma}[reference Lagrange basis of $\matPki$]
  \label{l:lag-basis-Pk1-ref}
  \mbox{}\hfill
  Let $k\in\matN$.\\
  Then, the Lagrange polynomials associated with the reference
  Lagrange nodes form a basis of~$\matPki$ that
  satisfies~\eqref{e:lag-basis-Pk1-1}, \eqref{e:lag-basis-Pk1-2}
  and~\eqref{e:lag-basis-Pk1-3}.

  They are called {\em reference Lagrange polynomials in $\matPki$}, and
  for all $i\in[0..k]$, we define the shorthand notation
  $\hcalL^{k,1}_i\eqdef\calL^{\famnodeki{\ha}}_i$.
\end{lemma}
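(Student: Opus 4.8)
The plan is to obtain this statement as a direct specialization of the general result Lemma~\ref{l:lag-pol-is-basis-Pk1} to the particular family of reference Lagrange nodes $\famnodeki{\ha}$. That lemma already asserts, for \emph{any} family of $k+1$ nodes with no double, that the associated Lagrange polynomials form a basis of~$\matPki$ and satisfy the three identities \eqref{e:lag-basis-Pk1-1}, \eqref{e:lag-basis-Pk1-2}, and~\eqref{e:lag-basis-Pk1-3}. Hence the only thing that is not already packaged is the verification of its single nontrivial hypothesis, namely that the indexation $i\mapsto\ha_i$ carries no double.

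First I would invoke Lemma~\ref{l:lag-nodes-distinct-Pk1-ref}, which states precisely that the $k+1$ reference Lagrange nodes $\famnodeki{\ha}$ are distinct. By the convention recalled in the remark preceding Definition~\ref{d:lag-pol-Pk1}, ``distinct'' is exactly the ``no double'' condition (injectivity of the indexation). Then I would apply Lemma~\ref{l:lag-pol-is-basis-Pk1} with $a\eqdef\ha$: this immediately delivers that $\left(\calL^{\famnodeki{\ha}}_i\right)_{i\in[0..k]}$ is a basis of~$\matPki$ and that \eqref{e:lag-basis-Pk1-1}, \eqref{e:lag-basis-Pk1-2}, and~\eqref{e:lag-basis-Pk1-3} hold, since those conclusions are insensitive to the specific choice of the (distinct) nodes. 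The shorthand $\hcalL^{k,1}_i\eqdef\calL^{\famnodeki{\ha}}_i$ is then purely a definition and needs no argument.

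There is no genuine obstacle here: the entire content reduces to matching the hypothesis of Lemma~\ref{l:lag-pol-is-basis-Pk1} with the output of Lemma~\ref{l:lag-nodes-distinct-Pk1-ref}. The only point requiring a little care is to state explicitly that the distinctness of Lemma~\ref{l:lag-nodes-distinct-Pk1-ref} is the ``no double'' phrasing used in the hypothesis, so that the reader sees the specialization is legitimate; everything else is a one-line citation.
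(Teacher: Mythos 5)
Your proposal is correct and coincides with the paper's own proof, which derives the result as a direct consequence of Lemma~\ref{l:lag-nodes-distinct-Pk1-ref} (supplying the ``no double'' hypothesis) and Lemma~\ref{l:lag-pol-is-basis-Pk1}. Nothing further is needed.
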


\begin{proof}
  Direct consequence of
  Lemma~\thref{l:lag-nodes-distinct-Pk1-ref}, and
  Lem\-ma~\thref{l:lag-pol-is-basis-Pk1}.
\end{proof}

\begin{definition}[reference Lagrange linear forms for $\matPki$]
  \label{d:lag-lin-forms-Pk1-ref}
  \mbox{}\\
  Let $k\in\matN$.
  The {\em reference Lagrange linear forms} associated with the reference
  Lagrange nodes of~$\matPki$ are denoted
  $\hSigmaki=(\hsigma_i)_{i\in[0..k]}$, and are defined by
  \begin{equation}
    \label{e:lag-lin-forms-Pk1-ref}
    \forall i \in [0..k],\;
    \forall \hf : \ArRR,\quad
    \hsigma_i (\hf) \eqdef \hf (\ha_i).
  \end{equation}
\end{definition}

\begin{lemma}[reference Lagrange linear forms for $\matPki$ are linear]
  \label{l:lag-lin-forms-are-linear-Pk1-ref}
  \mbox{}\\
  Let $k\in\matN$.
  Let $i\in[0..k]$.
  Then, the reference Lagrange linear form~$\hsigma_i$ is linear.
\end{lemma}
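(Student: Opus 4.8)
The plan is to observe that $\hsigma_i$ is nothing but the evaluation functional at the reference node~$\ha_i$, and that such an evaluation is linear precisely because the {\vectorspace} structure on the function space~$\FRR$ is defined pointwise. So the whole statement should reduce to the elementary fact that the vector operations on functions are computed argumentwise.

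First I would unfold Definition~\thref{d:lag-lin-forms-Pk1-ref}, which gives $\hsigma_i(\hf)=\hf(\ha_i)$ for every $\hf:\ArRR$; thus $\hsigma_i$ is a map from~$\FRR$ to~$\matR$. To establish its linearity I would fix scalars $\lambda,\lambdap\in\matR$ and functions $\hf,\hg:\ArRR$, and then compute, using the pointwise definition of the linear operations over functions (Definition~\threfc{LM-d:space}{properties of linear operations, applied pointwise to functions}), that
\[
  \hsigma_i(\lambda\hf+\lambdap\hg)
  =(\lambda\hf+\lambdap\hg)(\ha_i)
  =\lambda\,\hf(\ha_i)+\lambdap\,\hg(\ha_i)
  =\lambda\,\hsigma_i(\hf)+\lambdap\,\hsigma_i(\hg).
\]
I would then conclude by the characterization of linear maps as those preserving linear combinations (Lemma~\thref{LM-l:linear-map-preserves-linear-combinations}), so that $\hsigma_i$ is linear.

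There is no genuine obstacle here; the only subtlety worth flagging is that linearity is asserted for $\hsigma_i$ as a form on the \emph{full} function space~$\FRR$, and not merely on the polynomial space~$\matPki$, so the evaluation map must be read on~$\FRR$ (its later use as a linear form on~$\matPki$ being simply the restriction). With that reading fixed, the statement is a direct consequence of Definition~\thref{d:lag-lin-forms-Pk1-ref} together with the pointwise nature of the vector operations on~$\FRR$, and I would phrase the final proof in the same one-line ``direct consequence of'' style used for the other elementary linearity results in this chapter.
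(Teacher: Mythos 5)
Your proof is correct and matches the paper's own argument, which likewise derives the result directly from Definition~\ref{d:lag-lin-forms-Pk1-ref} and the pointwise definition of the linear operations over functions; you have simply spelled out the one-line computation. Your remark that linearity is asserted on the full function space $\FRR$ (with the action on $\matPki$ being a restriction) is also the correct reading.
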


\begin{proof}
  Direct consequence of
  Definition~\thref{d:lag-lin-forms-Pk1-ref}, and
  \assume{the definition of linear operations over functions}.
\end{proof}

\begin{lemma}[reference Lagrange linear forms for $\matPki$ are injective]
  \label{l:lag-lin-forms-Pk1-ref-inj}
  \mbox{}\\\
  Let $k\in\matN$.
  Then, $\phi_{\hSigmaki}$ is injective.
\end{lemma}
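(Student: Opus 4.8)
The plan is to deduce injectivity of $\phi_{\hSigmaki}$ from the fact that every polynomial of $\matPki$ is entirely determined by its values at the reference Lagrange nodes. First I would recall that, by Definition~\thref{d:lag-lin-forms-Pk1-ref}, the map $\phi_{\hSigmaki}$ sends $p\in\matPki$ to $(\hsigma_i(p))_{i\in[0..k]}=(p(\ha_i))_{i\in[0..k]}$, so that the scalar coefficients appearing below are precisely the components of $\phi_{\hSigmaki}(p)$.

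The central tool is the decomposition of a $\matPki$ polynomial in the Lagrange basis, Lemma~\thref{l:decomp-Pk1-pol-in-lag-basis}. To apply it with the reference nodes $\famnodeki{\ha}$, I first invoke Lemma~\thref{l:lag-nodes-distinct-Pk1-ref} to guarantee that these $k+1$ nodes are pairwise distinct, which is exactly the ``no double'' hypothesis required by that lemma. It then yields, for every $p\in\matPki$, the identity $p=\sum_{i=0}^k p(\ha_i)\,\hcalL^{k,1}_i$, where the $\hcalL^{k,1}_i$ are the reference Lagrange polynomials of Lemma~\thref{l:lag-basis-Pk1-ref}.

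From here injectivity is immediate. Given $p,q\in\matPki$ with $\phi_{\hSigmaki}(p)=\phi_{\hSigmaki}(q)$, that is $p(\ha_i)=q(\ha_i)$ for all $i\in[0..k]$, the decomposition above gives $p=\sum_{i=0}^k p(\ha_i)\,\hcalL^{k,1}_i=\sum_{i=0}^k q(\ha_i)\,\hcalL^{k,1}_i=q$. Equivalently, since the $\hsigma_i$ are linear by Lemma~\thref{l:lag-lin-forms-are-linear-Pk1-ref}, one may instead argue that the kernel is trivial, namely that $p(\ha_i)=0$ for all~$i$ forces $p=0$. I do not anticipate any real obstacle; the only points requiring care are recognizing that the coefficients in the Lagrange decomposition coincide with the images under the linear forms~$\hsigma_i$, and that the distinctness of the reference nodes is exactly what licenses the use of Lemma~\thref{l:decomp-Pk1-pol-in-lag-basis}.
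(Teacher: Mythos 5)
Your proposal is correct and follows essentially the same route as the paper: both rely on Lemma~\ref{l:lag-nodes-distinct-Pk1-ref} to license the Lagrange decomposition of Lemma~\ref{l:decomp-Pk1-pol-in-lag-basis}, and conclude that vanishing at all reference nodes forces the polynomial to vanish, i.e.\ that the kernel of~$\phi_{\hSigmaki}$ is trivial. The paper phrases this via the characterization of injective linear maps by a zero kernel, which is exactly the variant you mention at the end.
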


\begin{proof}
  Direct consequence of
  Definition~\thref{d:fe-triple},
  Lemma~\thref{l:lag-nodes-distinct-Pk1-ref},
  Definition~\threfc{d:lag-lin-forms-Pk1-ref}{%
    thus for all $\hp\in\matPki$, $\hsigma_i(\hp)$ is zero implies
    $\hp(\ha_i)=0$},
  Lemma~\threfc{l:decomp-Pk1-pol-in-lag-basis}{thus $\hp=0$},
  Definition~\thref{LM-d:kernel}, and
  Lemma~\thref{LM-l:injective-linear-map-has-zero-kernel}.
\end{proof}

\begin{lemma}[unisolvence of $\matPki$ (reference)]
  \label{l:unisolvence-Pk1-ref}
  \mbox{}\\
  Let $k\in\matN$.
  Then, $\left(\hKi,\matPki,\hSigmaki\right)$ satisfies the unisolvence
  property.
\end{lemma}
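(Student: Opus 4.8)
The plan is to read off that, by Definition~\ref{d:fe-triple}, the unisolvence property for $(\hKi,\matPki,\hSigmaki)$ is nothing but the bijectivity of the map $\phi_{\hSigmaki}$, and then to obtain this bijectivity from the injectivity that has already been proved, combined with a dimension count. Concretely, I would apply Lemma~\ref{l:inj-implies-unisolvence} (\emph{injectivity implies unisolvence}) to the triple $(\hKi,\matPki,\hSigmaki)$, so that the whole argument reduces to checking the hypotheses of that lemma.

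The hypotheses split into three groups, all of which are already available. First, the geometric and algebraic non-triviality: $\hKi$ is the segment $[0,1]$, a non-trivial polyhedron with non-empty interior by Lemma~\ref{l:ref-simplex-non-trivial-in-R}, and $\matPki$ is a non-trivial finite-dimensional {\vectorspace} since $\dim\matPki=k+1>0$ by Lemma~\ref{l:dim-Pk1}; moreover $\hSigmaki$ is a family of linear forms on $\matPki$ by Lemma~\ref{l:lag-lin-forms-are-linear-Pk1-ref}. Second, the dimension count: the family $\hSigmaki=(\hsigma_i)_{i\in[0..k]}$ has $\card(\hSigmaki)=k+1$ elements, since its index set $[0..k]=\calAki$ has cardinality $k+1$ by Lemma~\ref{l:card-Ak1}, and this matches $\dim\matPki=k+1$, so the required inequality $\dim\matPki\geq\nsh=\card(\hSigmaki)$ holds, in fact with equality. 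Third, injectivity: $\phi_{\hSigmaki}$ is injective by Lemma~\ref{l:lag-lin-forms-Pk1-ref-inj}.

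With these three groups in hand, Lemma~\ref{l:inj-implies-unisolvence} yields directly that $(\hKi,\matPki,\hSigmaki)$ is unisolvent, which is the claim. I do not expect any genuine obstacle here: the statement is a pure assembly of results established earlier in the chapter, and the only point requiring a little care is the bookkeeping of indices, namely identifying the index set $[0..k]$ used for $\hSigmaki$ with the range $[1..\nsh]$ of Definition~\ref{d:fe-triple} and confirming that $\nsh=k+1$. An alternative, slightly more self-contained route would bypass Lemma~\ref{l:inj-implies-unisolvence} and invoke Lemma~\ref{l:inj-or-surj-and-dim-implies-bij} directly, using the injectivity of $\phi_{\hSigmaki}$ together with $\dim\matPki=k+1=\dim\matR^{k+1}$ to conclude bijectivity; but reusing the dedicated Lemma~\ref{l:inj-implies-unisolvence} is cleaner and keeps the proof a one-line consequence.
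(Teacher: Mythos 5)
Your proof is correct and follows essentially the same route as the paper: injectivity of $\phi_{\hSigmaki}$ (Lemma~\ref{l:lag-lin-forms-Pk1-ref-inj}) plus the dimension count $\dim\matPki=k+1=\card(\hSigmaki)$ (Lemma~\ref{l:dim-Pk1}), fed into Lemma~\ref{l:inj-implies-unisolvence}. The only cosmetic difference is that the paper justifies $\card(\hSigmaki)=k+1$ via the distinctness of the reference nodes (Lemma~\ref{l:lag-nodes-distinct-Pk1-ref}) where you count the index set $[0..k]$ directly; this changes nothing of substance.
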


\begin{proof}
  Direct consequence of
  Lemma~\thref{l:dim-Pk1},
  Lemma~\thref{l:lag-nodes-distinct-Pk1-ref},
  \assume{the definition of order (reflexivity,
    thus $\dim\matPki=k+1$ is greater than or equal to
    $k+1=\card\left(\hSigmaki\right)$)},
  Lemma~\thref{l:lag-lin-forms-Pk1-ref-inj}, and
  Lemma~\thref{l:inj-implies-unisolvence}.
\end{proof}

\begin{theorem}[$\FElagPref{k}{1}$ reference Lagrange finite element]
  \label{t:Pk1-lag-fe-ref}
  \mbox{}\\
  Let $k\in\matN$.
  Then, $\FElagPref{k}{1}\eqdef\left(\hKi,\matPki,\hSigmaki\right)$ is a
  finite element.

  It is called the
  {\em reference Lagrange finite element of degree~$k$ in dimension~1}.
\end{theorem}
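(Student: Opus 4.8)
The plan is to apply Definition~\thref{d:fe-triple} directly, instantiated with $d\eqdef1$, $\nsh\eqdef k+1$, and $q\eqdef1$ (which is forced here, since $\{1,d\}=\{1\}$), and to verify its three defining conditions in turn. Each of these conditions has already been established in the preceding development, so the proof is essentially an assembly of earlier results rather than a genuinely new argument.

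First, I would check the geometric condition, namely that $\hKi$ is a nontrivial polyhedron with $\Interior{\hKi}\neq\emptyset$. This is exactly the content of Lemma~\thref{l:ref-simplex-non-trivial-in-R}, which identifies $\hKi$ with the segment $[0,1]\subset\matR$ and guarantees that its interior is nonempty. Second, I would verify that $\matPki$ is a nontrivial finite-dimensional vector space of functions into $\matRq=\matR$: that its elements are functions $\ArRR$ (which may be read in $\calF(\hKi,\matR)$ by restriction to $\hKi$) follows from Definition~\thref{d:pol-space-Pk1}, while Lemma~\thref{l:dim-Pk1} gives $\dim\matPki=k+1\geq1$, so in particular $\matPki\neq\{0\}$. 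Third, the unisolvence condition, that $\phi_{\hSigmaki}$ is an isomorphism, is precisely Lemma~\thref{l:unisolvence-Pk1-ref}.

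There is no real obstacle here; the statement merely packages three facts already proved. The only mild point of care is a bookkeeping one, reconciling the indexing of $\hSigmaki$ over $[0..k]$ with the $[1..\nsh]$ indexing of Definition~\thref{d:fe-triple} via $\nsh=k+1$ (and noting that linearity of the $\hsigma_i$ is supplied by Lemma~\thref{l:lag-lin-forms-are-linear-Pk1-ref}, so that $\hSigmaki$ indeed lies in the dual of $\matPki$). Once the three conditions are checked, the triple $\FElagPref{k}{1}=(\hKi,\matPki,\hSigmaki)$ satisfies Definition~\thref{d:fe-triple} and is therefore a finite element, which also legitimizes the name given in the statement.
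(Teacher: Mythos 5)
Your proof is correct and follows exactly the paper's route: it verifies the three conditions of Definition~\ref{d:fe-triple} (with $q\eqdef1$) via Lemma~\ref{l:ref-simplex-non-trivial-in-R}, Lemma~\ref{l:dim-Pk1}, and Lemma~\ref{l:unisolvence-Pk1-ref}. The extra remark about linearity of the $\hsigma_i$ and the $[0..k]$ versus $[1..\nsh]$ indexing is harmless bookkeeping that the paper leaves implicit.
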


\begin{proof}
  Direct consequence of
  Lemma~\thref{l:ref-simplex-non-trivial-in-R},
  Lemma~\thref{l:dim-Pk1},
  Lemma~\thref{l:unisolvence-Pk1-ref}, and
  Definition~\threfc{d:fe-triple}{with $q\eqdef1$}.
\end{proof}

\section{{\ToPFEki} Lagrange finite element on a current segment}
\label{s:Pk1-lag-fe-cur}

\begin{definition}[geometric mapping in dimension 1]
  \label{d:geo-mapping-1d}
  \mbox{}\hfill
  Let $\famverti{v}$ be two points in~$\matR$.\\
  The {\em geometric mapping associated with $\famverti{v}$} is
  denoted~$\phigeoiv$, and is defined by
  \begin{equation}
    \label{e:geo-mapping-1d}
    \forall \hx \in \matR,\quad
    \phigeoiv (\hx) \eqdef (v_1 - v_0) \hx + v_0.
  \end{equation}
\end{definition}

\begin{lemma}[properties of geometric mapping in dimension 1]
  \label{l:prop-geo-mapping-1d}
  \mbox{}\\
  Let $\famverti{v}$ be two points in~$\matR$.
  Then, $\phigeoiv\in\matPii$, {\ie} is affine, and we have
  \begin{align}
    \label{e:prop-geo-mapping-1d-1}
    \forall \hx \in \matR,\quad
    & \phigeoiv (\hx) = (1 - \hx) \, v_0 + \hx \, v_1
      = \hcalL^{1,1}_0(\hx) \, v_0 + \hcalL^{1,1}_1(\hx) \, v_1,\\
    \label{e:prop-geo-mapping-1d-2}
    \forall \hx \in \matR,\quad
    & \left( \phigeoiv \right)^\prime (\hx) = v_1 - v_0.
  \end{align}

  Moreover, if $v_0\neq v_1$, then $\phigeoiv$ is bijective, $\invphigeoiv$
  is affine, and we have
  \begin{equation}
    \label{e:prop-geo-mapping-1d-4}
    \forall x \in \matR,\quad
    \invphigeoiv (x) = \frac{x - v_0}{v_1 - v_0}.
  \end{equation}
\end{lemma}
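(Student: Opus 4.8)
The plan is to unfold Definition~\thref{d:geo-mapping-1d} and then dispatch each claim by a short computation, leaning on the already-established description of $\matPii$ and on the affine-algebra results of Section~\ref{ss:aff-maps-submaps}. Since the map is literally given by the formula $\phigeoiv(\hx) = (v_1 - v_0)\hx + v_0$, every assertion reduces to elementary manipulations in~$\matR$ together with one invocation of the inverse-of-affine-map lemma.

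First I would establish affineness and the left-hand equality of~\eqref{e:prop-geo-mapping-1d-1}. From Definition~\thref{d:geo-mapping-1d}, $\phigeoiv$ is the linear combination $(v_1-v_0)X + v_0\cdot 1$ of the monomials $1$ and~$X$, so $\phigeoiv \in \Span{1,X} = \matPii = \AffRR$ by Lemma~\thref{l:deg-Pi-leq-k} (equation~\eqref{e:deg-Pki-leq-k-3}). Expanding $(1-\hx)\,v_0 + \hx\,v_1 = v_0 + \hx(v_1-v_0)$ by the ring properties of~$\matR$ reproduces exactly $\phigeoiv(\hx)$, which gives the first equality.

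Next, for the second equality of~\eqref{e:prop-geo-mapping-1d-1} I would compute the two reference Lagrange polynomials explicitly. By Lemma~\thref{l:lag-basis-Pk1-ref}, $\hcalL^{1,1}_i = \calL^{\famnodeki{\ha}}_i$, and for $k=1$ the reference Lagrange nodes are $\ha_0 = 0$ and $\ha_1 = 1$ (Definition~\thref{d:lag-nodes-Pk1-ref}, with $\hat{h} = 1$). Substituting into the product formula of Definition~\thref{d:lag-pol-Pk1} yields $\hcalL^{1,1}_0(\hx) = \frac{\hx-1}{0-1} = 1-\hx$ and $\hcalL^{1,1}_1(\hx) = \frac{\hx-0}{1-0} = \hx$; hence $\hcalL^{1,1}_0(\hx)\,v_0 + \hcalL^{1,1}_1(\hx)\,v_1 = (1-\hx)\,v_0 + \hx\,v_1$, as claimed. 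The derivative formula~\eqref{e:prop-geo-mapping-1d-2} then follows immediately from the rules of derivation applied to $(v_1-v_0)\hx + v_0$.

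Finally, for bijectivity and the inverse I would route through the affine-algebra machinery. Writing $\phigeoiv = \cc + \phi$ with origin~$\zzero$, $\cc \eqdef v_0$, and $\phi \eqdef (\hx \mapsto (v_1-v_0)\hx)$, the hypothesis $v_0 \neq v_1$ makes $v_1-v_0$ a nonzero scalar, so $\phi$ is multiplication by an invertible element of~$\matR$ and is therefore an isomorphism of~$\matR$. Lemma~\thref{l:inv-of-aff-map-is-aff-map} then yields that $\phigeoiv$ is bijective with affine inverse, and equation~\eqref{e:inv-of-aff-map-is-aff-map-2} gives $\invphigeoiv(x) = \zzero + \phi^{-1}(x - v_0) = \frac{x-v_0}{v_1-v_0}$. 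This is a routine lemma with no real obstacle; the only points requiring a little care are pinning down the reference nodes $\ha_0,\ha_1$ so that the Lagrange polynomials come out as $1-\hx$ and~$\hx$, and casting the scalar multiplication~$\phi$ as an isomorphism in the precise sense demanded by Lemma~\thref{l:inv-of-aff-map-is-aff-map}.
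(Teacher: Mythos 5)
Your proposal is correct and follows essentially the same route as the paper: unfold Definition~\ref{d:geo-mapping-1d}, identify $\phigeoiv$ as an element of $\matPii=\AffRR$ via Lemma~\ref{l:deg-Pi-leq-k}, recognize the reference Lagrange nodes $\ha_0=0$, $\ha_1=1$ for $k=1$ so that $\hcalL^{1,1}_0=1-X$ and $\hcalL^{1,1}_1=X$, differentiate, and invoke Lemma~\ref{l:inv-of-aff-map-is-aff-map} with $\phi^{-1}(y)=\frac{y}{v_1-v_0}$ for the inverse. The only difference is cosmetic: you compute the Lagrange polynomials directly from Definition~\ref{d:lag-pol-Pk1} where the paper cites Lemma~\ref{l:lag-basis-Pk1-ref}, which amounts to the same thing.
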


\begin{proof}
  Direct consequence of
  Definition~\thref{d:geo-mapping-1d},
  Lemma~\thref{l:equiv-def-aff-map-finite-dim},
  Lemma~\thref{l:deg-Pi-leq-k},
  Definition~\threfc{d:lag-nodes-Pk1-ref}{%
    with $k\eqdef1$, thus $\famnodeki{\ha}\eqdef\famverti{\hv}$},
  Lemma~\threfc{l:lag-basis-Pk1-ref}{with $k\eqdef1$},
  \assume{the rules of derivation in~$\matR$}, and
  Lemma~\threfc{l:inv-of-aff-map-is-aff-map}{%
    with $\phi^{-1}(y)\eqdef\frac{y}{v_1-v_0}$}.
\end{proof}

\begin{lemma}[current simplex is {\nontrivial} in~$\matR$]
  \label{l:curr-simplex-non-trivial-in-R}
  \mbox{}\hfill
  Let $\famverti{v}$ be two points in~$\matR$.\\
  Then, the current simplex~$\Kvi$ is the segment
  $[\min\{v_0,v_1\},\max\{v_0,v_1\}]$ of~$\matR$.

  Moreover, we have
  \begin{equation}
    \label{e:curr-simplex-non-trivial-in-R}
    v_0 \neq v_1 \EQUIV \Interior{\Kvi} \neq \emptyset.
  \end{equation}
\end{lemma}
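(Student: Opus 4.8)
The plan is to work directly from Definition~\ref{d:simplex} specialised to $d \eqdef 1$, for which $\Kvi$ is the set of convex combinations $\mu_0 v_0 + \mu_1 v_1$ with $\mu_0, \mu_1 \geq 0$ and $\mu_0 + \mu_1 = 1$. First I would eliminate $\mu_0 = 1 - \mu_1$ and set $t \eqdef \mu_1$, so that the constraint $\mu_0, \mu_1 \geq 0 \wedge \mu_0 + \mu_1 = 1$ becomes equivalent to $t \in [0,1]$, and every point of $\Kvi$ reads $v_0 + t(v_1 - v_0)$ with $t \in [0,1]$ (and conversely). Then, from ordered field properties of $\matR$, the map $t \mapsto v_0 + t(v_1 - v_0)$ is monotone (increasing if $v_0 < v_1$, constant if $v_0 = v_1$, decreasing if $v_0 > v_1$), hence its image of $[0,1]$ is exactly $[\min\{v_0, v_1\}, \max\{v_0, v_1\}]$. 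This yields the first claim $\Kvi = [\min\{v_0, v_1\}, \max\{v_0, v_1\}]$ by double inclusion, uniformly in the three cases (the degenerate interval reducing to a point when $v_0 = v_1$).

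For the equivalence $v_0 \neq v_1 \EQUIV \Interior{\Kvi} \neq \emptyset$, I would prove the two directions separately, reusing the interval description just obtained. Assume first $v_0 \neq v_1$, so that $\min\{v_0,v_1\} < \max\{v_0,v_1\}$. Mimicking the explicit construction of Lemma~\ref{l:non-trivial-ref-simplex}, I would set $m \eqdef \half(v_0 + v_1)$ and $r \eqdef \half \len{v_1 - v_0} > 0$, and check that the open ball $\{x \in \matR \st \len{x - m} < r\}$ is a nonempty open set contained in $[\min\{v_0,v_1\}, \max\{v_0,v_1\}] = \Kvi$: indeed $\len{x - m} < r$ forces $\min\{v_0,v_1\} < x < \max\{v_0,v_1\}$. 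By the definition of the interior, this shows $\Interior{\Kvi} \neq \emptyset$.

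Conversely, assume $v_0 = v_1$. Then $\min\{v_0,v_1\} = \max\{v_0,v_1\} = v_0$, so $\Kvi = \{v_0\}$ is a singleton; since every open ball of positive radius in $\matR$ contains points distinct from its centre, no such ball can be contained in $\{v_0\}$, hence $v_0$ is not interior to $\Kvi$ and $\Interior{\Kvi} = \emptyset$. Combining the two directions gives the stated equivalence. The main obstacle is not the algebra but pinning down the two topological claims (a nondegenerate interval has nonempty interior, a singleton has empty interior) against the paper's definitions of open ball and interior; I would handle them exactly as in Lemma~\ref{l:non-trivial-ref-simplex}. An alternative would be to observe $\Kvi = \phigeoiv(\hKi)$ and transport the nonempty interior of $\hKi$ from Lemma~\ref{l:ref-simplex-non-trivial-in-R} through the affine bijection $\phigeoiv$ of Lemma~\ref{l:prop-geo-mapping-1d}, but that route needs a ``homeomorphism preserves interior'' statement not available in the excerpt, so I would prefer the direct argument above.
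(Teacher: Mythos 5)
Your proposal is correct and follows essentially the same route as the paper: identify $\Kvi$ with the segment $[\min\{v_0,v_1\},\max\{v_0,v_1\}]$ directly from Definition~\ref{d:simplex} with $d=1$, then reduce the equivalence to the fact that this segment has nonempty interior iff it is nondegenerate. The paper simply delegates the steps you spell out (the parametrisation by $t\in[0,1]$ and the open-ball/singleton argument) to the assumed basic facts ``the definition of a segment'' and ``the interior of a segment'', so your version is just a more explicit unfolding of the same proof.
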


\begin{proof}
  \proofpar{$\Kvi$}
  Direct consequence of
  Definition~\thref{d:simplex}, and
  \assume{the definition of a segment}.

  \proofparskip{Equivalence}
  Let $w_0\eqdef\min\{v_0,v_1\}$ and $w_1\eqdef\max\{v_0,v_1\}$.
  Then, from
  \assume{ordered field properties of~$\matR$}, and
  \assume{the interior of a segment},
  we have
  \[
    v_0 \neq v_1 \EQUIV w_0 < w_1 \EQUIV (w_0, w_1) \neq \emptyset
    \EQUIV \Interior{\Kvi} \neq \emptyset.
  \]
\end{proof}

\begin{lemma}[current simplex is image of reference in~$\matR$]
  \label{l:curr-simplex-im-ref-in-R}
  \mbox{}\\
  Let $\famverti{v}$ be two points in~$\matR$.
  Then, we have $\Kvi=\phigeoiv(\hKi)$.
\end{lemma}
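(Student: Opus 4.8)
The plan is to establish the set equality $\Kvi = \phigeoiv(\hKi)$ by proving, for an arbitrary $x \in \matR$, the equivalence $x \in \phigeoiv(\hKi) \EQUIV x \in \Kvi$. First I would recall from Lemma~\thref{l:ref-simplex-non-trivial-in-R} that $\hKi = [0,1]$, so that $x \in \phigeoiv(\hKi)$ means precisely that there is some $\hx \in [0,1]$ with $x = \phigeoiv(\hx)$; and I would rewrite $\phigeoiv$ in its barycentric form $\phigeoiv(\hx) = (1-\hx)\,v_0 + \hx\,v_1$, which is exactly~\eqref{e:prop-geo-mapping-1d-1} of Lemma~\thref{l:prop-geo-mapping-1d}. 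On the other side, by Definition~\thref{d:simplex} specialized to $d \eqdef 1$, membership $x \in \Kvi$ means that there are $\mu_0, \mu_1 \geq 0$ with $\mu_0 + \mu_1 = 1$ and $x = \mu_0 v_0 + \mu_1 v_1$.

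The key step is the change of variable identifying the reference coordinate with the second barycentric weight, namely $\mu_1 \eqdef \hx$ and $\mu_0 \eqdef 1 - \hx$. From left to right, given $\hx \in [0,1]$ these choices satisfy $\mu_0, \mu_1 \geq 0$ and $\mu_0 + \mu_1 = 1$, and the barycentric form shows $x = \mu_0 v_0 + \mu_1 v_1 \in \Kvi$. Conversely, given a decomposition $x = \mu_0 v_0 + \mu_1 v_1$ with nonnegative weights summing to one, I would set $\hx \eqdef \mu_1$; then $\hx \geq 0$ and $\hx = 1 - \mu_0 \leq 1$, so $\hx \in [0,1]$, and $x = (1-\hx) v_0 + \hx v_1 = \phigeoiv(\hx) \in \phigeoiv(\hKi)$. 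All of these manipulations reduce to ordered field properties of~$\matR$.

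There is essentially no hard part: the argument is a one-dimensional bookkeeping of the bijective correspondence between the single reference coordinate and the pair of barycentric weights subject to $\mu_0 + \mu_1 = 1$. The only point requiring a moment of care is to check that the constraint $\hx \in [0,1]$ matches exactly the constraints $\mu_0, \mu_1 \geq 0$ with $\mu_0 + \mu_1 = 1$ under this correspondence, which is immediate once one eliminates $\mu_0 = 1 - \mu_1$.

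Alternatively, I could give a more conceptual proof: $\phigeoiv$ is affine and maps the reference vertices to the current ones ($\phigeoiv(0) = v_0$ and $\phigeoiv(1) = v_1$), while $\hKi$ is itself the simplex of the reference vertices by Lemma~\thref{l:simplex-of-ref-vert-is-ref-simplex} (taken with $d \eqdef 1$); invoking that affine maps preserve barycenters (Lemma~\thref{l:aff-map-preserves-baryc}) would then send the convex envelop of the reference vertices onto the convex envelop of the current vertices. I expect to favor the direct computation above, however, since it avoids having to separately establish that the image of a convex envelop under an affine surjection onto its range is again the convex envelop of the images.
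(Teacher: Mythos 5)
Your proof is correct, but it takes a different (and arguably cleaner) route than the paper's. The paper's own proof is a one-line ``direct consequence'' of Definition~\thref{d:geo-mapping-1d}, Lemma~\thref{l:ref-simplex-non-trivial-in-R}, and Lemma~\thref{l:curr-simplex-non-trivial-in-R}: it leans on the explicit \emph{interval} characterizations $\hKi=[0,1]$ and $\Kvi=[\min\{v_0,v_1\},\max\{v_0,v_1\}]$, together with the formula $\phigeoiv(\hx)=(v_1-v_0)\hx+v_0$, so that the equality amounts to the fact that a monotone affine map of~$\matR$ sends a segment onto the segment between the images of its endpoints (which implicitly requires distinguishing the cases $v_1>v_0$, $v_1<v_0$, and $v_0=v_1$). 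You instead bypass the interval picture entirely and match the convex-combination definition of $\Kvi$ (Definition~\thref{d:simplex} with $d\eqdef1$) against the barycentric form~\eqref{e:prop-geo-mapping-1d-1} of $\phigeoiv$ via the substitution $\mu_1\eqdef\hx$, $\mu_0\eqdef1-\hx$. This buys uniformity: no case split on the sign of $v_1-v_0$ and no special treatment of $v_0=v_1$, and it is in fact exactly the argument the paper itself uses for the general-dimensional analogue $\phigeodv(\hKd)=\Kvd$, identity~\eqref{e:prop-geo-mapping-3bis} of Lemma~\thref{l:prop-geo-mapping}. The paper's route buys a shorter citation list in dimension one, where both simplices are already known explicitly as segments. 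Your alternative conceptual proof via preservation of barycenters is also sound, and your reason for not pursuing it (the need to argue that an affine image of a convex envelop is the convex envelop of the images) is legitimate.
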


\begin{proof}
  Direct consequence of
  Definition~\thref{d:geo-mapping-1d},
  Lemma~\thref{l:ref-simplex-non-trivial-in-R}, and
  Lemma~\thref{l:curr-simplex-non-trivial-in-R}.
\end{proof}

\begin{definition}[Lagrange nodes of $\matPki$]
  \label{d:lag-nodes-Pk1}
  \mbox{}\hfill
  Let $k\in\matN$.
  Let $\famverti{v}$ be two points in~$\matR$.\\
  The {\em Lagrange nodes of $\matPki$} are denoted $\famnodeki{a}$, and are
  defined by
  \begin{align}
    \label{e:lag-nodes-Pk1-0}
    (k = 0)&&
    a_0 &\eqdef \isobarycv = \frac{v_0 + v_1}{2},&&\\
    \label{e:lag-nodes-Pk1-1}
    (k \geq 1)&&
    \forall i \in [0..k],\quad
    a_i &\eqdef v_0 + i \, h
    \qquad\mbox{where } h \eqdef \frac{v_1 - v_0}{k}.&&
  \end{align}
\end{definition}

\begin{remark}
  The isobarycenter $\isobarycv$ is defined in Definition~\ref{d:isobaryc}.
\end{remark}

\begin{lemma}[Lagrange nodes are distinct]
  \label{l:lag-nodes-distinct-Pk1}
  \mbox{}\\
  Let $k\in\matN$.
  Let $i,j\in\matN$.
  Let $\famverti{v}$ be two distinct points in~$\matR$.\\
  Then, the $(k+1)$ Lagrange nodes $\famnodeki{a}$ are distinct,
  and $0\leq i<j\leq k$ implies $a_i\neq a_j$.

  Moreover, if $v_0<v_1$, then $h>0$, and $0\leq i<j\leq k$ implies $a_i<a_j$,
  and if $v_0>v_1$, then $h<0$, and $0\leq i<j\leq k$ implies $a_i>a_j$.
\end{lemma}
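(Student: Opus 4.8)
The plan is to mirror the case-analysis of the reference version, Lemma~\thref{l:lag-nodes-distinct-Pk1-ref}, while tracking the sign of the step~$h$: unlike the reference step $\hat{h}=\frac{1}{k}>0$, here $h=\frac{v_1-v_0}{k}$ may be positive or negative depending on the relative order of~$v_0$ and~$v_1$, which is precisely what splits the two refined monotonicity conclusions.

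First I would dispose of the degenerate case $k=0$. By Definition~\thref{d:lag-nodes-Pk1} there is then a single node $a_0=\isobarycv$, so there is no pair of indices $0\leq i<j\leq k$, and every claim holds vacuously: distinctness of a one-element family is immediate, and all three implications have an empty hypothesis.

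For $k\geq1$, I would first fix the sign of~$h$ using ordered field properties of~$\matR$. Since $v_0\neq v_1$ we have $v_1-v_0\neq0$, and since $k\geq1>0$ the quotient $h=\frac{v_1-v_0}{k}$ is nonzero, with $h>0$ when $v_0<v_1$ and $h<0$ when $v_0>v_1$. Next, for any $i,j\in[0..k]$, Definition~\thref{d:lag-nodes-Pk1} gives the common identity $a_j-a_i=(j-i)h$, which I would use as the single engine for all three conclusions. The distinctness then follows from the zero-product property in~$\matR$: if $0\leq i<j\leq k$ then $j-i\neq0$ and $h\neq0$, hence $a_j-a_i\neq0$, {\ie} $a_i\neq a_j$. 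For the ordered statements I would combine the sign of~$h$ with $j-i>0$: when $v_0<v_1$, the product $(j-i)h>0$ yields $a_i<a_j$, and when $v_0>v_1$, the product $(j-i)h<0$ yields $a_i>a_j$.

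There is no genuine obstacle here; the statement is an elementary consequence of Definition~\thref{d:lag-nodes-Pk1} together with the ordered field structure of~$\matR$. The only point requiring a little care is the bookkeeping between the unconditional distinctness claim (valid as soon as $v_0\neq v_1$) and the two sharper monotonicity claims (which additionally branch on the sign of $v_1-v_0$); keeping $a_j-a_i=(j-i)h$ as the shared computation keeps the three conclusions uniform and avoids any duplicated algebra.
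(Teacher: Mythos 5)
Your proof is correct and follows exactly the route the paper takes: the paper's own proof is simply ``direct consequence of Definition~\ref{d:lag-nodes-Pk1} and ordered field properties of~$\matR$'', and your elaboration via $a_j-a_i=(j-i)\,h$ together with the sign analysis of~$h$ is precisely the computation that citation leaves implicit.
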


\begin{proof}
  Direct consequence of
  Definition~\thref{d:lag-nodes-Pk1}, and
  \assume{ordered field properties of~$\matR$}.
\end{proof}

\begin{lemma}[Lagrange nodes of $\matPki$ are images of reference]
  \label{l:lag-nodes-Pk1-im-ref}
  \mbox{}\\
  Let $k\in\matN$.
  Let $\famverti{v}$ be two points in~$\matR$.
  Then, we have $\forall i\in[0..k]$, $a_i=\phigeoiv(\ha_i)$.
\end{lemma}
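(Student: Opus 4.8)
The plan is to prove the identity separately for each $i\in[0..k]$, splitting on the two regimes that appear in both Definition~\thref{d:lag-nodes-Pk1} and Definition~\thref{d:lag-nodes-Pk1-ref}, namely $k=0$ and $k\geq1$. In each regime the whole argument reduces to substituting the explicit value of the reference node~$\ha_i$ into the formula $\phigeoiv(\hx)=(v_1-v_0)\hx+v_0$ of Definition~\thref{d:geo-mapping-1d}, and then rearranging by the valued field properties of~$\matR$ until the defining expression of~$a_i$ is recovered.

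For $k\geq1$, I would use $\ha_i=i\,\hat{h}=i/k$ from~\eqref{e:lag-nodes-Pk1-ref-1} and compute
\[
  \phigeoiv(\ha_i) = (v_1-v_0)\,\frac{i}{k}+v_0 = v_0 + i\,\frac{v_1-v_0}{k} = v_0 + i\,h = a_i,
\]
the final two equalities being exactly the definitions of~$h$ and of~$a_i$ in~\eqref{e:lag-nodes-Pk1-1}. For $k=0$ the single node is an isobarycenter: $\ha_0=\hisobaryci=\tfrac12$ by~\eqref{e:lag-nodes-Pk1-ref-0} and $a_0=\isobarycv=\tfrac{v_0+v_1}{2}$ by~\eqref{e:lag-nodes-Pk1-0}. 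Here I would either substitute directly, obtaining $\phigeoiv(\tfrac12)=(v_1-v_0)\tfrac12+v_0=\tfrac{v_0+v_1}{2}=a_0$, or argue more structurally: since $\phigeoiv$ is affine (Lemma~\thref{l:prop-geo-mapping-1d}) and satisfies $\phigeoiv(0)=v_0$ and $\phigeoiv(1)=v_1$, preservation of the isobarycenter (Lemma~\thref{l:aff-map-preserves-isobaryc}) applied to the two-point family gives $\phigeoiv(\hisobaryci)=\isobarycv=a_0$ immediately.

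I do not expect any genuine obstacle: the statement is a bookkeeping identity relating the reference and current nodal families through the geometric mapping, and it holds by definition unwinding alone. The only point requiring mild care is keeping the two cases ($k=0$ versus $k\geq1$) in step with the corresponding clauses of the two node definitions, because the $k=0$ node is introduced as an isobarycenter rather than through the uniform spacing formula; the isobarycenter-preservation route above is the cleanest way to handle that case without an ad hoc computation.
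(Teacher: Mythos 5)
Your proof is correct and follows the same route as the paper, which simply cites Definitions~\ref{d:geo-mapping-1d}, \ref{d:lag-nodes-Pk1-ref}, and~\ref{d:lag-nodes-Pk1} and treats the identity as a direct consequence of unwinding them; your explicit case split on $k=0$ versus $k\geq1$ and the substitution computations are exactly the content left implicit there. The alternative isobarycenter-preservation argument for $k=0$ is a valid optional refinement but not needed.
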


\begin{proof}
  Direct consequence of
  Definition~\thref{d:geo-mapping-1d},
  Definition~\thref{d:lag-nodes-Pk1-ref}, and
  Definition~\thref{d:lag-nodes-Pk1}.
\end{proof}

\begin{lemma}[Lagrange basis of $\matPki$]
  \label{l:lag-basis-Pk1}
  \mbox{}\hfill
  Let $k\in\matN$.
  Let $\famverti{v}$ be two distinct points in~$\matR$.\\
  Then, the Lagrange polynomials associated with the Lagrange nodes form a
  basis of~$\matPki$ that satisfies~\eqref{e:lag-basis-Pk1-1},
  \eqref{e:lag-basis-Pk1-2} and~\eqref{e:lag-basis-Pk1-3}.
  They are called {\em current Lagrange polynomials in $\matPki$}.
\end{lemma}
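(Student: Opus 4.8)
The plan is to reduce this statement entirely to two earlier results, exactly as was done for the reference segment in Lemma~\ref{l:lag-basis-Pk1-ref}. The crucial observation is that the generic result on Lagrange polynomials, Lemma~\ref{l:lag-pol-is-basis-Pk1}, already delivers everything we need---namely that the polynomials form a basis of~$\matPki$, together with the three identities \eqref{e:lag-basis-Pk1-1}, \eqref{e:lag-basis-Pk1-2} and \eqref{e:lag-basis-Pk1-3}---under the single hypothesis that the interpolation nodes have no double. Consequently, the only thing that remains to be supplied here is precisely this no-double hypothesis, now for the \emph{current} Lagrange nodes $\famnodeki{a}$ of Definition~\ref{d:lag-nodes-Pk1}, rather than for the reference nodes $\famnodeki{\ha}$.

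First I would invoke Lemma~\ref{l:lag-nodes-distinct-Pk1}. Its hypothesis, that $\famverti{v}$ are two \emph{distinct} points of~$\matR$, is exactly what we are assuming; so it yields that the $k+1$ current Lagrange nodes $\famnodeki{a}$ are pairwise distinct, i.e.\ that there is no double in the family~$\famnodeki{a}$. This is the one nontrivial ingredient, and it is precisely where the distinctness assumption $v_0\neq v_1$ gets consumed: when $v_0=v_1$ the nodes collapse and the denominators $a_i-a_j$ in the definition of the Lagrange polynomials vanish, so the construction breaks down.

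Second, with the no-double property in hand, I would apply Lemma~\ref{l:lag-pol-is-basis-Pk1} directly to the family~$\famnodeki{a}$. That lemma produces the Lagrange polynomials $\bigl(\calL^{\famnodeki{a}}_i\bigr)_{i\in[0..k]}$, asserts they form a basis of~$\matPki$, and returns the three identities verbatim: $\deg\calL^{\famnodeki{a}}_i=k$ for all $i$, the interpolation property $\calL^{\famnodeki{a}}_i(a_j)=\kron{i}{j}$, and the partition-of-unity identity $\sum_{i=0}^k\calL^{\famnodeki{a}}_i=1$. This closes the proof. I expect no genuine obstacle here: the statement is a pure specialization of the generic Lagrange-basis lemma to the affinely placed nodes $a_i=v_0+i\,h$, and the proof is a two-line citation mirroring that of the reference case.
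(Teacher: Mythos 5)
Your proposal is correct and follows exactly the paper's own proof: cite Lemma~\ref{l:lag-nodes-distinct-Pk1} to obtain the no-double hypothesis from $v_0\neq v_1$, then apply Lemma~\ref{l:lag-pol-is-basis-Pk1} to the current nodes. Nothing is missing.
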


\begin{proof}
  Direct consequence of
  Lemma~\thref{l:lag-nodes-distinct-Pk1}, and
  Lemma~\thref{l:lag-pol-is-basis-Pk1}.
\end{proof}

\begin{lemma}[Lagrange polynomials of $\matPki$ are images of reference]
  \label{l:lag-pol-Pk1-im-ref}
  \mbox{}\\
  Let $k\in\matN$.
  Let $\famverti{v}$ be two distinct points in~$\matR$.
  Then, we have for all $i\in[0..k]$,
  \begin{equation}
    \label{e:lag-pol-Pk1-im-ref}
    \calL^{\famnodeki{a}}_i = \hcalL^{k,1}_i \circ \invphigeoiv
    \AND
    \hcalL^{k,1}_i = \calL^{\famnodeki{a}}_i \circ \phigeoiv.
  \end{equation}
\end{lemma}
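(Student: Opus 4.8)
The plan is to establish the first identity by a direct pointwise computation and then deduce the second by composing with $\phigeoiv$. Since $v_0\neq v_1$, Lemma~\thref{l:prop-geo-mapping-1d} (in particular~\eqref{e:prop-geo-mapping-1d-4}) guarantees that $\phigeoiv$ is a bijection whose inverse is the affine map $\invphigeoiv:x\mapsto\frac{x-v_0}{v_1-v_0}$, and Lemma~\thref{l:lag-nodes-Pk1-im-ref} gives $a_i=\phigeoiv(\ha_i)$, hence $\ha_i=\invphigeoiv(a_i)$ for all $i\in[0..k]$. The nodes $\famnodeki{a}$ are moreover pairwise distinct by Lemma~\thref{l:lag-nodes-distinct-Pk1}, so the defining product in Definition~\thref{d:lag-pol-Pk1} is well posed.

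First I would fix $x\in\matR$ and expand $(\hcalL^{k,1}_i\circ\invphigeoiv)(x)$ using the product formula of Definition~\thref{d:lag-pol-Pk1} applied to the reference nodes. For each $j\neq i$, the affinity of $\invphigeoiv$ together with $\ha_j=\invphigeoiv(a_j)$ yields $\invphigeoiv(x)-\ha_j=\frac{x-a_j}{v_1-v_0}$ and $\ha_i-\ha_j=\frac{a_i-a_j}{v_1-v_0}$, so the common factor $\frac{1}{v_1-v_0}$ cancels in each ratio and $\frac{\invphigeoiv(x)-\ha_j}{\ha_i-\ha_j}=\frac{x-a_j}{a_i-a_j}$. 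Taking the product over $j\in[0..k]\setminus\{i\}$ then gives exactly $\calL^{\famnodeki{a}}_i(x)$, again by Definition~\thref{d:lag-pol-Pk1}. As this holds for every $x\in\matR$, the two functions coincide, which proves $\calL^{\famnodeki{a}}_i=\hcalL^{k,1}_i\circ\invphigeoiv$.

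For the second identity I would compose the first one on the right with $\phigeoiv$ and use $\invphigeoiv\circ\phigeoiv=\Identity_{\matR}$, which yields $\calL^{\famnodeki{a}}_i\circ\phigeoiv=\hcalL^{k,1}_i\circ\invphigeoiv\circ\phigeoiv=\hcalL^{k,1}_i$.

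The argument is essentially a change of variables, so there is no deep obstacle; the two points requiring care are the degenerate case $k=0$, where both products are empty and reduce to the constant polynomial $1$ so that the identities are immediate, and the cancellation of $v_1-v_0$, which is legitimate precisely because $v_0\neq v_1$. An alternative, more algebraic route would note that $\hcalL^{k,1}_i\circ\invphigeoiv$ takes the value $\kron{i}{j}$ at each node $a_j$ (since $\invphigeoiv(a_j)=\ha_j$ and by~\eqref{e:lag-basis-Pk1-2}) and then invoke the decomposition Lemma~\thref{l:decomp-Pk1-pol-in-lag-basis} to identify it with $\calL^{\famnodeki{a}}_i$; however this would first require a lemma stating that post-composition with an affine map keeps a polynomial in $\matPki$, which is only developed in Chapter~\ref{c:Pkd-lag-fe}, so I would prefer the self-contained pointwise computation here.
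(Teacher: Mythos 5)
Your proof is correct and follows essentially the same route as the paper's: both rest on the change-of-variables identity $\frac{x-a_j}{a_i-a_j}=\frac{\hx-\ha_j}{\ha_i-\ha_j}$ under $x=\phigeoiv(\hx)$, in which the common factor $v_1-v_0$ cancels (legitimately, since $v_0\neq v_1$), together with the empty-product convention for $k=0$. The only cosmetic difference is that you run the substitution through $\invphigeoiv$ and then recover the second identity by composing with $\phigeoiv$, whereas the paper displays the identity in the forward direction; these are the same computation.
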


\begin{proof}
  Direct consequence of
  Definition~\thref{d:lag-pol-Pk1},
  Lemma~\thref{l:lag-basis-Pk1-ref},
  Lemma~\thref{l:lag-basis-Pk1},
  Lemma~\thref{l:lag-nodes-Pk1-im-ref},
  Definition~\thref{d:geo-mapping-1d}, and
  \assume{field properties of~$\matR$ (providing the following identities}
  \begin{gather*}
    \assumemm{%
    \frac{x - a_j}{a_i - a_j}
    = \frac{(v_1 - v_0) \, \hx + v_0 - (v_1 - v_0) \, \ha_j - v_0}%
    {(v_1 - v_0) \, \ha_i + v_0 - (v_1 - v_0) \, \ha_j - v_0}
    = \frac{\hx - \ha_j}{\ha_i - \ha_j},}\\
    \assumemm{\forall f : \ArRR,\quad 1 = 1 \circ f)}.
  \end{gather*}
\end{proof}

\begin{lemma}[geometric mapping of $\matPki$ is $\matPki$]
  \label{l:geo-mapping-of-Pk1-is-Pk1}
  \mbox{}\hfill
  Let $k\in\matN$.
  Let $\famverti{v}$ be two points in~$\matR$.\\
  Then, for all $p\in\matPki$, $p\circ\phigeoiv\in\matPki$.
  Moreover, if $v_0\neq v_1$, then, $p\circ\invphigeoiv\in\matPki$.
\end{lemma}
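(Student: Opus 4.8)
The plan is to reduce everything to the single observation that $\phigeoiv$ (and, when $v_0\neq v_1$, also $\invphigeoiv$) is an affine map, hence a polynomial of degree at most~$1$, and that precomposing a polynomial with such a map cannot raise its degree above~$k$. Concretely, I would first establish the auxiliary claim: for any $g\in\matPii$ and any $p\in\matPki$, one has $p\circ g\in\matPki$. The two assertions of the lemma then follow immediately by taking $g\eqdef\phigeoiv\in\matPii$ (from Lemma~\thref{l:prop-geo-mapping-1d}) for the first part, and $g\eqdef\invphigeoiv$, which Lemma~\thref{l:prop-geo-mapping-1d} also shows to be affine under the hypothesis $v_0\neq v_1$, for the second part.

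To prove the auxiliary claim, I would start from Definition~\thref{d:pol-space-Pk1} and write $p=\sum_{\alpha=0}^k a_\alpha X^\alpha$ for some coefficients $(a_\alpha)_{\alpha\in[0..k]}$ in~$\matR$. Using the linear operations over functions, composition on the right by the fixed map~$g$ commutes with this finite linear combination, so that $p\circ g=\sum_{\alpha=0}^k a_\alpha\,g^\alpha$ as functions, where $g^\alpha$ denotes the $\alpha$-fold pointwise product of~$g$ with itself. Since $\matPki$ is a {\vectorspace}, it suffices to check that each power $g^\alpha$ belongs to $\matPki$ for $\alpha\in[0..k]$.

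The powers are handled by induction on~$\alpha$, invoking the product formula of Lemma~\thref{l:prod-2-polynom-univ}. The base case $\alpha=0$ gives the constant $g^0=1\in\matPoi$. For the inductive step, assuming $g^\alpha\in\matPi{\alpha}$, I write $g^{\alpha+1}=g^\alpha\,g$, and since $g\in\matPii=\matPi{1}$, Lemma~\thref{l:prod-2-polynom-univ} yields $g^{\alpha+1}\in\matPi{\alpha+1}$. Finally, for $\alpha\leq k$ the inclusion $\matPi{\alpha}\subset\matPki$ follows from the degree characterization of Lemma~\thref{l:deg-Pi-leq-k} (a polynomial of degree at most~$\alpha$ has degree at most~$k$), so each $g^\alpha$ lies in $\matPki$, and hence so does the linear combination $p\circ g$.

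I expect no deep obstacle here; the only points requiring a little care are the interchange of the finite sum with composition (which rests solely on the pointwise definition of the linear operations on functions) and the bookkeeping of the graded inclusions $\matPi{\alpha}\subset\matPki$. Note that the degree formula in Lemma~\thref{l:prod-2-polynom-univ} is not needed: only its first, inclusion, assertion is used, so that no assumption on the exact degree of $g$ or of the intermediate powers is required.
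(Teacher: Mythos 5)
Your proof is correct, but it follows a genuinely different route from the paper's. The paper does not use the monomial basis at all: it splits into the cases $v_0=v_1$ (where $\phigeoiv$ is constant and $p\circ\phigeoiv=p(v_0)\in\matPoi\subset\matPki$) and $v_0\neq v_1$, and in the latter case it decomposes $p$ in the \emph{Lagrange} basis $(\calL^{\famnodeki{a}}_i)_i$ of Lemma~\thref{l:lag-basis-Pk1}, then invokes Lemma~\thref{l:lag-pol-Pk1-im-ref} to replace each $\calL^{\famnodeki{a}}_i\circ\phigeoiv$ by the reference Lagrange polynomial $\hcalL^{k,1}_i\in\matPki$ (and symmetrically for $\invphigeoiv$). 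That argument leans on the Lagrange machinery already built in the chapter but forces the case split, since the Lagrange basis only exists for distinct nodes. Your argument — expand $p$ on the monomials, note $p\circ g=\sum_\alpha a_\alpha g^\alpha$, and control the powers $g^\alpha\in\matPi{\alpha}\subset\matPki$ by induction via Lemma~\thref{l:prod-2-polynom-univ} — is uniform in $v_0,v_1$ (the constant map also lies in $\matPii$), needs only that $\phigeoiv$ and, when defined, $\invphigeoiv$ are affine (Lemma~\thref{l:prop-geo-mapping-1d} together with $\matPii=\AffRR$ from Lemma~\thref{l:deg-Pi-leq-k}), and is in fact the univariate specialization of the strategy the paper itself adopts later in dimension $d$ (Lemmas~\thref{l:aff-mapping-of-monom-is-Pkl} and~\thref{l:aff-mapping-of-Pkd-is-Pkl}). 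Both proofs are sound; yours is more elementary and generalizes more directly, while the paper's makes the ``current FE is the image of the reference FE'' picture explicit.
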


\begin{proof}
  Let $p\in\matPki$.

  \proofparskip{Case $v_0=v_1$}
  Direct consequence of
  Definition~\threfc{d:geo-mapping-1d}{thus, $\phigeoiv=v_0$ is constant}, and
  Definition~\threfc{d:pol-space-Pk1}{%
    thus, we have $p\circ\phigeoiv=p(v_0)$ lives
    in $\matPoi\subset\matPki$}.

  \proofparskip{Case $v_0\neq v_1$}
  Then, from
  Lemma~\threfc{l:lag-basis-Pk1}{basis},
  Lemma~\thref{l:lag-pol-Pk1-im-ref}, and
  Lemma~\threfc{l:lag-basis-Pk1-ref}{basis},
  there exists $(\alpha_i)_{i\in[0..k]}$ in~$\matR$ such that
  $p=\sum_{i=0}^k\alpha_i\calL^{\famnodeki{a}}_i$, and we have
  \[
    p \circ \phigeoiv
    = \sum_{i=0}^k \alpha_i \calL^{\famnodeki{a}}_i \circ \phigeoiv
    = \sum_{i=0}^k \alpha_i \hcalL^{k,1}_i \quad \in \matPki.
  \]
  Similarly, from
  Lemma~\threfc{l:lag-basis-Pk1-ref}{basis},
  Lemma~\thref{l:lag-pol-Pk1-im-ref}, and
  Lemma~\threfc{l:lag-basis-Pk1}{basis},
  there exists  $(\alpha_i)_{i\in[0..k]}$ in~$\matR$ such that
  $p=\sum_{i=0}^k\alpha_i\hcalL^{k,1}_i$, and we have
  \[
    p \circ \invphigeoiv
    = \sum_{i=0}^k \alpha_i \hcalL^{k,1}_i \circ \invphigeoiv
    = \sum_{i=0}^k \alpha_i \calL^{\famnodeki{a}}_i \quad \in \matPki.
  \]
\end{proof}

\begin{definition}[Lagrange linear forms for $\matPki$]
  \label{d:lag-lin-forms-Pk1}
  \mbox{}\\
  Let $k\in\matN$.
  Let $\famverti{v}$ be two points in~$\matR$.
  The {\em Lagrange linear forms} associated with the Lagrange nodes
  of~$\matPki$ are denoted
  $\Sigma^{\famnodeki{a}}=(\sigma_i)_{i\in[0..k]}$, and are defined by
  \begin{equation}
    \label{e:lag-lin-forms-Pk1}
    \forall i \in [0..k],\;
    \forall f : \ArRR,\quad
    \sigma_i (f) \eqdef f (a_i).
  \end{equation}
\end{definition}

\begin{lemma}[$\matPki$ Lagrange linear forms are linear]
  \label{l:lag-lin-forms-are-linear-Pk1}
  \mbox{}\hfill
  Let $k\in\matN$.\\
  Let $\famverti{v}$ be two points in~$\matR$.
  Let $i\in[0..k]$.
  Then, the Lagrange linear form~$\sigma_i$ is linear.
\end{lemma}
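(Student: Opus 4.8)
The plan is to unfold the definition of~$\sigma_i$ and reduce the claim to the pointwise definition of linear operations on functions. By Definition~\thref{d:lag-lin-forms-Pk1}, for any $f:\ArRR$ we have $\sigma_i(f)\eqdef f(a_i)$; that is, $\sigma_i$ is nothing but the evaluation functional at the fixed node~$a_i$. Establishing its linearity therefore amounts to checking that it commutes with arbitrary linear combinations of functions in~$\FRR$.

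First I would fix $f,g:\ArRR$ and scalars $\lambda,\lambdap\in\matR$. Then, applying Definition~\thref{d:lag-lin-forms-Pk1} together with \assume{the definition of linear operations over functions} (which states that linear combinations of functions are evaluated pointwise), I would compute
\[
  \sigma_i(\lambda f + \lambdap g)
  = (\lambda f + \lambdap g)(a_i)
  = \lambda\, f(a_i) + \lambdap\, g(a_i)
  = \lambda\, \sigma_i(f) + \lambdap\, \sigma_i(g).
\]
The decisive step is the middle equality, where the value of the combined function at~$a_i$ is resolved into the combination of the individual values; this is exactly the content of the pointwise definition of the vector space structure on~$\FRR$.

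This argument mirrors verbatim the one already used for the reference Lagrange linear forms in Lemma~\thref{l:lag-lin-forms-are-linear-Pk1-ref}, the only difference being that the current nodes~$a_i$ replace the reference nodes~$\ha_i$. I expect no genuine obstacle: the statement is a direct consequence of the evaluation nature of~$\sigma_i$ and the pointwise structure of the function space, so no property of the particular nodes (such as their distinctness or their being images of the reference nodes under~$\phigeoiv$) is required. Consequently the proof can be stated as a one-line ``direct consequence'' in the style of the reference case.
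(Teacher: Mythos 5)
Your proof is correct and follows exactly the paper's route: the paper disposes of this lemma as a direct consequence of Definition~\ref{d:lag-lin-forms-Pk1} and the pointwise definition of linear operations over functions, which is precisely the computation you spell out. Your observation that no property of the nodes (distinctness, relation to the reference nodes) is needed is also consistent with the paper's statement and proof.
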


\begin{proof}
  Direct consequence of
  Definition~\thref{d:lag-lin-forms-Pk1}, and
  \assume{the definition of linear operations over functions.}
\end{proof}

\begin{lemma}[$\matPki$ Lagrange linear forms are images of reference]
  \label{l:lag-lin-forms-Pk1-im-ref}
  \mbox{}\hfill
  Let $k\in\matN$.
  Let $\famverti{v}$ be two points in~$\matR$.
  Then, we have
  $\forall i\in[0..k]$, $\forall f:\ArRR$, $\sigma_i(f)=\hsigma_i(\hf)$,
  where $\hf=f\circ\phigeoiv$.
\end{lemma}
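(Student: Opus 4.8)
The plan is to observe that this identity is a direct unfolding of the two definitions of the Lagrange linear forms, combined with the node-image property already established for dimension~1. First I would fix $i\in[0..k]$ and a function $f:\ArRR$, and set $\hf\eqdef f\circ\phigeoiv$ exactly as in the statement. Unfolding the reference linear form via Definition~\thref{d:lag-lin-forms-Pk1-ref} gives $\hsigma_i(\hf)=\hf(\ha_i)$, and unfolding the composition yields $\hf(\ha_i)=f(\phigeoiv(\ha_i))$.

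Next I would invoke Lemma~\thref{l:lag-nodes-Pk1-im-ref}, which asserts precisely that $a_i=\phigeoiv(\ha_i)$, so that $f(\phigeoiv(\ha_i))=f(a_i)$. Finally, Definition~\thref{d:lag-lin-forms-Pk1} identifies $f(a_i)$ with $\sigma_i(f)$, closing the chain $\hsigma_i(\hf)=\sigma_i(f)$. Each step is either a definitional rewrite or a single appeal to the node-image lemma, so there is no genuine obstacle; the only point requiring attention is to evaluate the composition $\hf=f\circ\phigeoiv$ at the \emph{reference} node $\ha_i$ rather than the current node $a_i$, which is exactly the mismatch that Lemma~\thref{l:lag-nodes-Pk1-im-ref} reconciles. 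In the compact ``direct consequence'' style of the surrounding proofs, this amounts to citing Definition~\thref{d:lag-lin-forms-Pk1-ref}, the definition of composition, Lemma~\thref{l:lag-nodes-Pk1-im-ref}, and Definition~\thref{d:lag-lin-forms-Pk1}.
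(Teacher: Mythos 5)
Your proposal is correct and matches the paper's own proof, which cites exactly Definition~\ref{d:lag-lin-forms-Pk1}, Lemma~\ref{l:lag-nodes-Pk1-im-ref}, and Definition~\ref{d:lag-lin-forms-Pk1-ref}; your chain $\hsigma_i(\hf)=\hf(\ha_i)=f(\phigeoiv(\ha_i))=f(a_i)=\sigma_i(f)$ is just the explicit unfolding of that citation list. Nothing is missing.
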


\begin{proof}
  Direct consequence of
  Definition~\thref{d:lag-lin-forms-Pk1},
  Lemma~\thref{l:lag-nodes-Pk1-im-ref}, and
  Definition~\thref{d:lag-lin-forms-Pk1-ref}.
\end{proof}

\begin{remark}
  Note that Lemmas~\ref{l:curr-simplex-im-ref-in-R},
  \ref{l:geo-mapping-of-Pk1-is-Pk1}, and~\ref{l:lag-lin-forms-Pk1-im-ref}
  imply that the current~$\matPki$ FE $(\Kvi,\matPki,\Sigma^{\famnodeki{a}})$
  can be considered as the image of the reference~$\matPki$ FE
  $(\hKi,\matPki,\hSigmaki)$ by~$\phigeoiv$.
\end{remark}

\begin{lemma}[Lagrange linear forms for $\matPki$ are injective]
  \label{l:lag-lin-forms-Pk1-inj}
  \mbox{}\\
  Let $k\in\matN$.
  Let $\famverti{v}$ be two distinct points in~$\matR$.
  Then, $\phi_{\Sigma^{\famnodeki{a}}}$ is injective.
\end{lemma}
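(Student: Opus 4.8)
The plan is to show that the kernel of the linear map $\phi_{\Sigma^{\famnodeki{a}}}$ is trivial, which, by Lemma~\thref{LM-l:injective-linear-map-has-zero-kernel}, is equivalent to injectivity. This mirrors exactly the reference-segment argument of Lemma~\thref{l:lag-lin-forms-Pk1-ref-inj}, the only difference being that here the nodes are the current Lagrange nodes $\famnodeki{a}$ rather than the reference ones. The linearity of $\phi_{\Sigma^{\famnodeki{a}}}$ is already guaranteed, since each component $\sigma_i$ is linear by Lemma~\thref{l:lag-lin-forms-are-linear-Pk1}, and $\phi_\Sigma$ is linear by Definition~\thref{d:fe-triple}.

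First I would fix $\hp\in\matPki$ with $\phi_{\Sigma^{\famnodeki{a}}}(\hp)=\zzero$ (understood in $\matR^{k+1}$), i.e.\ $\hp\in\Ker{\phi_{\Sigma^{\famnodeki{a}}}}$. Unravelling Definition~\thref{d:fe-triple} together with Definition~\thref{d:lag-lin-forms-Pk1} turns this into the condition that for all $i\in[0..k]$, $\sigma_i(\hp)=\hp(a_i)=0$. The key preliminary fact I need is that the $k+1$ nodes $\famnodeki{a}$ are pairwise distinct, so that the hypothesis ``there is no double in $\famnodeki{a}$'' of the decomposition lemma is met: this is precisely where the assumption $v_0\neq v_1$ enters, via Lemma~\thref{l:lag-nodes-distinct-Pk1}.

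Then I would invoke Lemma~\thref{l:decomp-Pk1-pol-in-lag-basis}, which expresses every polynomial of $\matPki$ as $\hp=\sum_{i=0}^k \hp(a_i)\,\calL^{\famnodeki{a}}_i$. Since all coefficients $\hp(a_i)$ vanish, this sum is zero, hence $\hp=0$. Therefore $\Ker{\phi_{\Sigma^{\famnodeki{a}}}}=\{0\}$ (Definition~\thref{LM-d:kernel}), and Lemma~\thref{LM-l:injective-linear-map-has-zero-kernel} concludes that $\phi_{\Sigma^{\famnodeki{a}}}$ is injective.

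There is no genuine obstacle here: the proof is essentially a one-line consequence chain, with the only point requiring care being the bookkeeping that $v_0\neq v_1$ is exactly what provides node distinctness and thereby unlocks the decomposition lemma. I expect this to be stated, as in the reference case, as a ``direct consequence'' of Definition~\thref{d:fe-triple}, Lemma~\thref{l:lag-nodes-distinct-Pk1}, Definition~\thref{d:lag-lin-forms-Pk1}, Lemma~\thref{l:decomp-Pk1-pol-in-lag-basis}, Definition~\thref{LM-d:kernel}, and Lemma~\thref{LM-l:injective-linear-map-has-zero-kernel}.
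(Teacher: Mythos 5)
Your proof is correct and follows exactly the paper's own argument: the paper also proves this as a direct consequence of Definition~\ref{d:fe-triple}, Lemma~\ref{l:lag-nodes-distinct-Pk1} (node distinctness from $v_0\neq v_1$), Definition~\ref{d:lag-lin-forms-Pk1}, Lemma~\ref{l:decomp-Pk1-pol-in-lag-basis}, Definition~\ref{LM-d:kernel}, and Lemma~\ref{LM-l:injective-linear-map-has-zero-kernel}. The only cosmetic quibble is that you write $\hp$ for a polynomial on the current (not reference) segment, where plain $p$ would match the paper's conventions.
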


\begin{proof}
  Direct consequence of
  Definition~\thref{d:fe-triple},
  Lemma~\thref{l:lag-nodes-distinct-Pk1},
  Definition~\threfc{d:lag-lin-forms-Pk1}{%
    thus for all $p\in\matPki$, $\sigma_i(p)=0$ implies $p(a_i)=0$},
  Lemma~\threfc{l:decomp-Pk1-pol-in-lag-basis}{thus $p=0$},
  Definition~\thref{LM-d:kernel}, and
  Lemma~\thref{LM-l:injective-linear-map-has-zero-kernel}.
\end{proof}

\begin{lemma}[unisolvence of $\matPki$]
  \label{l:unisolvence-Pk1}
  \mbox{}\hfill
  Let $k\in\matN$.
  Let $\famverti{v}$ be two distinct points in~$\matR$.\\
  Then, $\left(\Kvi,\matPki,\Sigma^{\famnodeki{a}}\right)$ satisfies the
  unisolvence property.
\end{lemma}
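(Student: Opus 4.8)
The plan is to invoke Lemma~\thref{l:inj-implies-unisolvence} (injectivity implies unisolvence), exactly as was done for the reference segment in Lemma~\thref{l:unisolvence-Pk1-ref}. That result reduces unisolvence of the triple $\left(\Kvi,\matPki,\Sigma^{\famnodeki{a}}\right)$ to two facts: that the dimension of the approximation space is at least the number of degrees of freedom, and that the map $\phi_{\Sigma^{\famnodeki{a}}}$ is injective.

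First I would establish the dimension count. By Lemma~\thref{l:dim-Pk1}, we have $\dim\matPki=k+1$. The family $\Sigma^{\famnodeki{a}}=(\sigma_i)_{i\in[0..k]}$ is indexed by $[0..k]$, so $\nsh=\card\left(\Sigma^{\famnodeki{a}}\right)=k+1$. Hence $\dim\matPki=k+1\geq k+1=\nsh$ holds trivially by reflexivity of the order on~$\matN$.

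Next I would supply the injectivity of $\phi_{\Sigma^{\famnodeki{a}}}$. Since the two vertices $\famverti{v}$ are distinct by hypothesis, Lemma~\thref{l:lag-nodes-distinct-Pk1} guarantees that the $k+1$ Lagrange nodes $\famnodeki{a}$ are pairwise distinct, which is precisely the hypothesis needed for Lemma~\thref{l:lag-lin-forms-Pk1-inj} to apply and yield injectivity of $\phi_{\Sigma^{\famnodeki{a}}}$.

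There is no real obstacle here: the argument is a verbatim transcription of the reference case, with the reference-segment lemmas replaced by their current-segment counterparts. The only point requiring attention is to track where the distinctness hypothesis on $\famverti{v}$ is used, since it enters solely through the injectivity lemma (and, upstream, through distinctness of the nodes); this is why the statement assumes $v_0\neq v_1$. Combining the dimension bound and the injectivity through Lemma~\thref{l:inj-implies-unisolvence} then gives unisolvence of the triple.
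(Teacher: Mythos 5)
Your proposal is correct and follows exactly the paper's own proof: it cites Lemma~\ref{l:dim-Pk1}, Lemma~\ref{l:lag-nodes-distinct-Pk1}, Lemma~\ref{l:lag-lin-forms-Pk1-inj}, and Lemma~\ref{l:inj-implies-unisolvence} in the same roles, with the same reflexivity argument for the dimension count. Nothing to add.
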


\begin{proof}
  Direct consequence of
  Lemma~\thref{l:dim-Pk1},
  Lemma~\thref{l:lag-nodes-distinct-Pk1},
  \assume{the definition of order (reflexivity,
    thus $\dim\matPki=k+1$ is greater than or equal to
    $k+1=\card\left(\Sigma^{\famnodeki{a}}\right)$)},
  Lemma~\thref{l:lag-lin-forms-Pk1-inj}, and
  Lemma~\thref{l:inj-implies-unisolvence}.
\end{proof}

\begin{theorem}[$\FElagP{k}{1}$ Lagrange finite element]
  \label{t:Pk1-lag-fe}
  \mbox{}\hfill
  Let $k\in\matN$.
  Let $\famverti{v}$ be two distinct points in~$\matR$.
  Then, $\FElagP{k}{1}\eqdef\left(\Kvi,\matPki,\Sigma^{\famnodeki{a}}\right)$ is
  a finite element.

  It is called the
  {\em Lagrange finite element of degree~$k$ in dimension~1 associated
    with vertices~$\famverti{v}$}.
\end{theorem}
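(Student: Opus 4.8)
The plan is to verify the three defining conditions of Definition~\ref{d:fe-triple} (with $q\eqdef1$), since each has already been secured by a preceding lemma; the theorem is then an immediate assembly, exactly parallel to the reference case in Theorem~\ref{t:Pk1-lag-fe-ref}. First I would handle the geometry by invoking Lemma~\ref{l:curr-simplex-non-trivial-in-R}: because $v_0$ and $v_1$ are distinct, the current simplex $\Kvi$ is the segment $[\min\{v_0,v_1\},\max\{v_0,v_1\}]$ of~$\matR$, which is a {\nontrivial} polyhedron, and the equivalence $v_0\neq v_1\Equiv\Interior{\Kvi}\neq\emptyset$ furnishes the {\nonempty} interior required by the first condition.

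Next, for the approximation space, Definition~\ref{d:pol-space-Pk1} exhibits $\matPki$ as a space of functions $\ArRR$ (matching $\matRq=\matR$ since $q\eqdef1$), while Lemma~\ref{l:dim-Pk1} gives $\dim\matPki=k+1\geq1$, so that $\matPki$ is a {\nonzero} finite-dimensional {\vectorspace}; this settles the second condition. Then, for unisolvence, I would apply Lemma~\ref{l:unisolvence-Pk1}, which states precisely that $\left(\Kvi,\matPki,\Sigma^{\famnodeki{a}}\right)$ satisfies the unisolvence property, {\ie} that $\phi_{\Sigma^{\famnodeki{a}}}$ is bijective; this is the third condition. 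Collecting the three items, Definition~\ref{d:fe-triple} concludes that the triple is a finite element.

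The only genuine content lives upstream: the substantial work is Lemma~\ref{l:unisolvence-Pk1}, which itself rests on the injectivity of the Lagrange linear forms (Lemma~\ref{l:lag-lin-forms-Pk1-inj}) and on the decomposition of a $\matPki$ polynomial in the Lagrange basis (Lemma~\ref{l:decomp-Pk1-pol-in-lag-basis}). Given those, there is no real obstacle at this level; the distinctness hypothesis $v_0\neq v_1$ is exactly what propagates through both the geometry lemma and the unisolvence lemma, so the theorem follows by direct citation rather than by any fresh argument.
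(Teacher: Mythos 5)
Your proposal is correct and matches the paper's own proof, which likewise assembles Definition~\ref{d:fe-triple} (with $q\eqdef1$) from Lemma~\ref{l:curr-simplex-non-trivial-in-R}, Lemma~\ref{l:dim-Pk1}, and Lemma~\ref{l:unisolvence-Pk1}. There is nothing to add.
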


\begin{proof}
  Direct consequence of
  Lemma~\thref{l:curr-simplex-non-trivial-in-R},
  Lemma~\thref{l:dim-Pk1},
  Lemma~\thref{l:unisolvence-Pk1}, and
  Definition~\threfc{d:fe-triple}{with $q\eqdef1$}.
\end{proof}

\chapter{{\ToPFEkd} Lagrange finite element in dimension $d\geq1$}
\label{c:Pkd-lag-fe}
\minitoc

\begin{remark}
  In this chapter, we consider a simplex in~$\matRd$ with~$d\geq1$.
  The case~$d=1$ corresponds to a segment, and is treated in detail in
  Chapter~\ref{c:Pk1-lag-fe}.

  In Figure~\ref{f:FE-lag-P1234}, we present the Lagrange nodes on simplices
  for several dimensions and degrees.
\end{remark}

\begin{figure}[htb]
  \centering
  \resizebox{1\linewidth}{!}{
      \begin{tikzpicture}[scale=2,math3d] 
        \def\kkmax{4}

        \def\colo{magenta}
        \def\coli{green}
        \def\colii{red}
        \def\coliii{blue}

	\coordinate (AA) at ($ (0,0,1.5) $);
	\coordinate (CC) at  ($ (AA) + (0,1,0) $);
	\draw[line width=1.0pt] (AA) -- (CC);
        \coordinate (Axyz) at  ($ 0.5*(AA) + 0.5*(CC) $) ;
        \fill[color=\colii] (Axyz) circle (1.5pt);
        \node[color=\colii,below] (Nxyz) at (Axyz) {{\tiny ${(0)}$}};  
        \foreach \K in {1,...,\kkmax}  {
	  \coordinate (AA) at ($ (0,0,1.5) + 1.5*(0,\K,0) $);
	  \coordinate (CC) at  ($ (AA) + (0,1,0) $);
	  \draw[line width=1.0pt] (AA) -- (CC);
          
          \foreach \x in {0,1,...,\K}  {
            \coordinate (Axyz) at  ($ (AA) + 1/\K*(0,\x,0) $) ;
            \ifthenelse{\x<\K}
                       { \fill[color=\coliii] (Axyz) circle (1.5pt);
                         \node[color=\coliii,below] (Nxyz) at (Axyz) {{\tiny ${(\x)}$}}; }
                       { \fill[color=\colii] (Axyz) circle (1.5pt);
                 \node[color=\colii,below] (Nxyz) at (Axyz) {{\tiny ${(\x)}$}}; } ; 
          }
        }
          
	\coordinate (AA) at ($ (0,0,0) $);
	\coordinate (CC) at  ($ (AA) + (0,1,0) $);
	\coordinate (DD) at  ($ (AA) + (0,0,1) $);
	\draw[line width=1.0pt,rounded corners=0.5pt] (AA) -- (CC) -- (DD) -- cycle;
        \coordinate (Axyz) at  ($ 1/3*(AA) + 1/3*(CC) + 1/3*(DD) $) ;
        \fill[color=\colii] (Axyz) circle (1.5pt);
        \node[color=\colii,below] (Nxyz) at (Axyz) {{\tiny ${(0,0)}$}};  
        \foreach \K in {1,...,\kkmax}  {
	  \coordinate (AA) at ($ (0,0,0) + 1.5*(0,\K,0) $);
	  \coordinate (CC) at  ($ (AA) + (0,1,0) $);
	  \coordinate (DD) at  ($ (AA) + (0,0,1) $);
	  \draw[line width=1.0pt,rounded corners=0.5pt] (AA) -- (CC) -- (DD) -- cycle;

          \foreach \k in {0,1,...,\K}  {
            \newcount\y
            \foreach \x in {\k,...,0}  {
              \pgfmathsetcount{\y}{\k-\x} 
              \coordinate (Axyz) at  ($ (AA) + 1/\K*(0,\x,\y) $) ;
              \ifthenelse{\k<\K}
                         {\fill[color=\coliii] (Axyz) circle (1.5pt);
                         \node[color=\coliii,below] (Nxyz) at (Axyz) {{\tiny ${(\x,\the\y)}$}}; }
                         {\fill[color=\colii] (Axyz) circle (1.5pt);
                           \node[color=\colii,below] (Nxyz) at (Axyz) {{\tiny ${(\x,\the\y)}$}}; } ; 
            }
          }
        }
        
	\coordinate (A) at ($ (0,0,-2) $);
	\coordinate (B) at  ($ (A) + (1,0,0) $);
	\coordinate (C) at  ($ (A) + (0,1,0) $);
	\coordinate (D) at  ($ (A) + (0,0,1) $);
	\draw[line width=1.0pt,rounded corners=0.5pt] (A) -- (B) -- (D) -- cycle;
	\draw[line width=1.0pt,rounded corners=0.5pt] (A) -- (C) -- (B) -- cycle;
	\draw[line width=1.0pt,rounded corners=0.5pt] (B) -- (C) -- (D) -- cycle;
	\draw[line width=1.6pt,rounded corners=0.5pt] (A) -- (C) -- (D) -- cycle;
        \coordinate (Axyz) at  ($ 1/4*(A) + 1/4*(B) + 1/4*(C) + 1/4*(D) $) ;
        \fill[color=\colii] (Axyz) circle (1.5pt);
        \node[color=\colii,above] (Nxyz) at (Axyz) {{\tiny ${(0,0,0)}$}};  

        \foreach \K in {1,...,\kkmax}  {
	  \coordinate (A) at ($ (0,0,-2) + 1.5*(0,\K,0) $);
	  \coordinate (B) at  ($ (A) + (1,0,0) $);
	  \coordinate (C) at  ($ (A) + (0,1,0) $);
	  \coordinate (D) at  ($ (A) + (0,0,1) $);
	  \draw[line width=1.0pt,rounded corners=0.5pt] (A) -- (B) -- (D) -- cycle;
	  \draw[line width=1.0pt,rounded corners=0.5pt] (A) -- (C) -- (B) -- cycle;
	  \draw[line width=1.0pt,rounded corners=0.5pt] (B) -- (C) -- (D) -- cycle;
	  \draw[line width=1.6pt,rounded corners=0.5pt] (A) -- (C) -- (D) -- cycle;

          \newcount\z
          \foreach \k in {0,1,...,\K}  {
            \foreach \x in {\k,...,0}  {
              \pgfmathparse{\k-\x}\let\YY\pgfmathresult
              \foreach \y in {0,...,\YY} {
                \pgfmathsetcount{\z}{\k-\x-\y} 
                \coordinate (Axyz) at  ($ (A) + 1/\K*(\x,\y,\z) $) ;
                \ifthenelse{\k<\K}
                           {\fill[color=\coliii] (Axyz) circle (1.5pt); }
                           {\fill[color=\colii] (Axyz) circle (1.5pt) ;
                             \node[color=\colii,below] (Nxyz) at  (Axyz) {{\tiny ${(\x,\y,\the\z)}$}}; };
              }
            }
          }
        }
  \end{tikzpicture}}
  \caption[Lagrange nodes of the reference simplex]{%
    Lagrange nodes of the reference simplex for $d=1,2,3$ (from top to
    bottom), and
    $k=0,1,2,3,4$ (from left to right).
    The nodes corresponding to the highest degree are depicted in red,
    and the others in blue.
    The multi-indices are indicated in all cases when $d=1,2$, and only for the
    highest degree when $d=3$.}
  \label{f:FE-lag-P1234}
\end{figure}
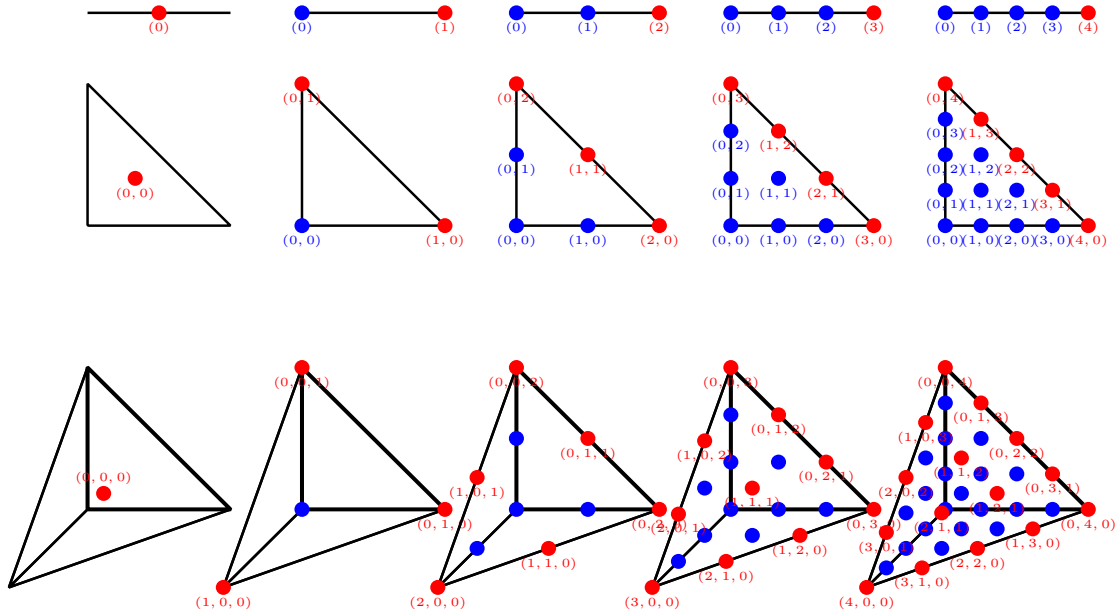

\section{Multi-indices}
\label{s:multi-indices}

\subsection{Some useful notations for multi-indices}
\label{ss:notations-mutli-indices}

\begin{definition}[length of multi-indices]
  \label{d:len-multi-ind}
  \mbox{}\hfill
  Let~$d\geq1$.
  Let~$\aalpha\in\matNd$ be a multi-index.\\
  The {\em length} of~$\aalpha$ is denoted~$\len{\aalpha}$, and is defined by
  $\len{\aalpha}\eqdef\sum_{i=1}^d\alpha_i\in\matN$.
\end{definition}

\begin{lemma}[length of multi-indices is additive]
  \label{l:len-multi-ind-is-add}
  \mbox{}\\
  Let~$d\geq1$.
  Then, the length of multi-indices is additive,
  for all $\aalpha,\bbeta\in\matNd$,
  $\len{\aalpha+\bbeta}=\len{\aalpha}+\len{\bbeta}$.
\end{lemma}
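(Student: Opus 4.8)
The plan is to unfold the definition of length and reduce the claim to the distribution of a finite sum over componentwise addition, which is governed by the abelian monoid structure of~$\matN$. Concretely, the length of a multi-index is defined in Definition~\thref{d:len-multi-ind} as $\len{\aalpha}\eqdef\sum_{i=1}^d\alpha_i$, so the first step is to write $\len{\aalpha+\bbeta}=\sum_{i=1}^d(\aalpha+\bbeta)_i$. Since addition on~$\matNd$ is defined componentwise, we have $(\aalpha+\bbeta)_i=\alpha_i+\beta_i$ for each $i\in[1..d]$, and the sum becomes $\sum_{i=1}^d(\alpha_i+\beta_i)$.

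The second and final step is to split this single sum into two, which is exactly the content of the lemma:
\[
  \len{\aalpha + \bbeta}
  = \sum_{i=1}^d (\alpha_i + \beta_i)
  = \sum_{i=1}^d \alpha_i + \sum_{i=1}^d \beta_i
  = \len{\aalpha} + \len{\bbeta}.
\]
The only nontrivial ingredient here is the rearrangement of the terms, which relies on the commutativity and associativity of addition in~$\matN$, {\ie} the \assume{abelian monoid properties of~$\matN$} that the document takes as known. In a fully formal treatment this splitting would itself be a one-line induction on~$d$ (the base case $d=1$ being trivial and the inductive step reassociating a single extra pair of summands), but at the level of detail of this text it may simply be invoked as a property of the monoid~$(\matN,+)$.

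There is no genuine obstacle in this proof: it is a direct consequence of Definition~\thref{d:len-multi-ind} together with componentwise addition on~$\matNd$ and the \assume{abelian monoid properties of~$\matN$}. The main point of care is purely presentational, namely making explicit that the equality $\sum_{i=1}^d(\alpha_i+\beta_i)=\sum_{i=1}^d\alpha_i+\sum_{i=1}^d\beta_i$ is where the monoid axioms are used, rather than leaving the whole chain as a single unjustified step.
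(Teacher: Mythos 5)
Your proof is correct and follows exactly the route of the paper, which justifies the lemma as a direct consequence of Definition~\thref{d:len-multi-ind} and the \assume{abelian monoid properties of~$\matN$}; you merely make explicit the componentwise unfolding and the splitting of the sum that the paper leaves implicit.
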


\begin{proof}
  Direct consequence of
  Definition~\thref{d:len-multi-ind}, and
  \assume{abelian monoid properties of~$\matN$}.
\end{proof}

\begin{definition}[factorial of multi-indices]
  \label{d:fact-multi-ind}
  \mbox{}\hfill
  Let~$d\geq1$.
  Let~$\aalpha\in\matNd$ be a multi-index.\\
  The {\em factorial} of~$\aalpha$ is denoted~$\ffact{\aalpha}$, and is defined
  by $\ffact{\aalpha}\eqdef\prod_{i=1}^d(\fact{\alpha_i})\in\matNstar$.
\end{definition}

\begin{lemma}[factorial of multi-index is positive]
  \label{l:fact-multi-ind-pos}
  \mbox{}\hfill
  Let~$d\geq1$.
  Let~$\aalpha\in\matNd$.
  Then, $\ffact{\aalpha}>0$.
\end{lemma}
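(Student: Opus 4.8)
The plan is to unfold the definition of the factorial of a multi-index and reduce the claim to two elementary facts about~$\matN$: that the factorial of any natural number is positive, and that a finite product of positive natural numbers is positive.

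First I would invoke Definition~\thref{d:fact-multi-ind} to write $\ffact{\aalpha}=\prod_{i=1}^d\fact{\alpha_i}$. For each index $i\in[1..d]$, the component~$\alpha_i$ is a natural number, so \assume{the positivity of factorial} gives $\fact{\alpha_i}\geq1>0$. Then, using \assume{the fact that a finite product of positive natural numbers is positive} (equivalently, that $\matNstar$ is stable under multiplication, which follows from the \assume{monoid properties of~$\matN$} together with the absence of zero divisors), the product $\prod_{i=1}^d\fact{\alpha_i}$ is positive. Therefore $\ffact{\aalpha}>0$.

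There is no real obstacle here: the result is essentially already recorded in the defining Definition~\thref{d:fact-multi-ind}, which states $\ffact{\aalpha}\in\matNstar$. The only thing to make explicit is the passage from the positivity of each factor to the positivity of the product, and even that is routine once the product is recognized as a product over the finite (nonempty, since $d\geq1$) index set $[1..d]$.
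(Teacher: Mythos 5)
Your proof is correct and follows exactly the paper's route: unfold Definition~\ref{d:fact-multi-ind}, use the positivity of the factorial of each component, and conclude by closedness of multiplication in~$\matNstar$. Nothing to add.
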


\begin{proof}
  Direct consequence of
  Definition~\thref{d:fact-multi-ind},
  \assume{the positivity of factorial}, and
  \assume{closedness of multiplication in~$\matNstar$}.
\end{proof}

\begin{definition}[Kronecker delta of multi-indices]
  \label{d:kron-multi-ind}
  \mbox{}\hfill
  Let~$d\geq1$.
  Let~$\aalpha,\bbeta\in\matNd$.\\
  The {\em Kronecker delta} of~$\aalpha$ and~$\bbeta$ is
  denoted~$\kkron{\aalpha}{\bbeta}$, and is defined by
  $\kkron{\aalpha}{\bbeta}\eqdef\prod_{i=1}^d\kron{\alpha_i}{\beta_i}$.
\end{definition}

\begin{lemma}[value of Kronecker delta of multi-indices]
  \label{l:kron-multi-ind-val}
  \mbox{}\hfill
  Let~$d\geq1$.
  Let~$\aalpha,\bbeta\in\matNd$.\\
  Then, we have $\kkron{\aalpha}{\bbeta}$ equals~1 when $\aalpha=\bbeta$, and~0
  otherwise.
\end{lemma}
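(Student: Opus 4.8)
The plan is to unfold the definition of the Kronecker delta of multi-indices from Definition~\ref{d:kron-multi-ind}, namely $\kkron{\aalpha}{\bbeta}=\prod_{i=1}^d\kron{\alpha_i}{\beta_i}$, and then to proceed by a case distinction on whether $\aalpha=\bbeta$, reducing in each case to the scalar values $\kron{\alpha_i}{\beta_i}$ recalled in Definition~\ref{d:canon-fam} (the scalar Kronecker delta equals~1 on the diagonal and~0 off it).

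First I would treat the case $\aalpha=\bbeta$. Here equality of the two families means that $\alpha_i=\beta_i$ for every $i\in[1..d]$, so each factor of the product satisfies $\kron{\alpha_i}{\beta_i}=1$ by Definition~\ref{d:canon-fam}. Using the fact that a finite product of ones equals one (abelian monoid properties of~$\matN$), I would conclude $\kkron{\aalpha}{\bbeta}=1$.

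Next I would treat the case $\aalpha\neq\bbeta$. Since the two families are distinct, there exists an index $i_0\in[1..d]$ with $\alpha_{i_0}\neq\beta_{i_0}$; for this index Definition~\ref{d:canon-fam} gives $\kron{\alpha_{i_0}}{\beta_{i_0}}=0$. Isolating this factor in the product $\prod_{i=1}^d\kron{\alpha_i}{\beta_i}$ and invoking the absorbing role of~$0$ for multiplication in~$\matN$ (the zero-product property), the whole product vanishes, that is $\kkron{\aalpha}{\bbeta}=0$.

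I do not expect any genuine obstacle in this statement: it is a routine consequence of Definitions~\ref{d:kron-multi-ind} and~\ref{d:canon-fam} together with the elementary arithmetic of~$\matN$. The only point deserving a moment of care is the passage from ``$\aalpha\neq\bbeta$ as families'' to ``there exists a component at which they differ'', which is exactly the characterization of inequality of finite families indexed by $[1..d]$ and would be used implicitly to produce the index~$i_0$ above.
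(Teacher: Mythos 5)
Your proposal is correct and matches the paper's own argument, which likewise reduces the claim to Definition~\ref{d:kron-multi-ind}, the definition of the scalar Kronecker delta, and the zero-product property in~$\matN$. The extra care you note about extracting an index $i_0$ where the families differ is fine and implicit in the paper's ``direct consequence'' proof.
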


\begin{proof}
  Direct consequence of
  Definition~\thref{d:kron-multi-ind},
  \assume{the definition of the (scalar) Kronecker delta}, and
  \assume{the zero-product property in $\matN$}.
\end{proof}

\subsection{Sets $\calAkd$ and $\calCkd$ of multi-indices}
\label{ss:sets-Akd-Ckd-mutli-indices}

\begin{definition}[sets of multi-indices $\calAkd$ and $\calCkd$]
  \label{d:multi-ind-Akd-Ckd}
  \mbox{}\hfill
  Let~$d\geq1$.
  Let~$k\in\matN$.
  The {\em set of multi-indices of length at most~$k$ (resp. of length~$k$)} is
  denoted~$\calAkd$ (resp.~$\calCkd$), and is defined by
  \begin{align}
    \label{e:multi-ind-Akd-Ckd-1}
    \calAkd &\eqdef \{ \aalpha \in \matNd \st \len{\aalpha} \leq k \},\\
    \label{e:multi-ind-Akd-Ckd-2}
    \calCkd &\eqdef \{ \aalpha \in \matNd \st \len{\aalpha} = k \}.
  \end{align}
  Let~$i\in[1..d]$.
  The {\em subset of~$\calAkd$ of multi-indices with zero $i$-th component} is
  defined by
  \begin{equation}
    \label{e:multi-ind-Akd-Ckd-3}
    \calAkdi \eqdef \{ \aalpha \in \calAkd \st \alpha_i = 0 \}.
  \end{equation}
\end{definition}

\begin{lemma}[multi-indices $\calAkd$ for $d=1$ is $\calAki$]
  \label{l:multi-ind-Akd-for-d-eq-1-is-Ak1}
  \mbox{}\\
  Let~$k\in\matN$.
  Then, $\calAkd$ for~$d\eqdef1$ and~$\calAki$ from
  Definition~\ref{d:multi-ind-Ak1} coincide.
\end{lemma}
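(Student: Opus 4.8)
The plan is simply to unfold both definitions and to observe that, for $d\eqdef1$, the two defining predicates coincide once $\matN^1$ is assimilated to~$\matN$. First I would recall from Definition~\thref{d:multi-ind-Akd-Ckd} that, specialized to $d\eqdef1$,
\[
  \calAkd = \{ \aalpha \in \matN^1 \st \len{\aalpha} \leq k \},
\]
and from Definition~\thref{d:len-multi-ind} that a one-component multi-index $\aalpha=(\alpha_1)$ has length $\len{\aalpha}=\sum_{i=1}^1\alpha_i=\alpha_1$.

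Next I would invoke the convention (stated among the notations) that cartesian products such as $\matN^1$ are assimilated to~$\matN$, so that a multi-index $\aalpha\in\matN^1$ is identified with the scalar $\alpha_1\in\matN$ and $\len{\aalpha}$ with $\alpha_1$. Under this identification the membership condition $\len{\aalpha}\leq k$ reduces to $\alpha\leq k$.

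Finally I would compare with Definition~\thref{d:multi-ind-Ak1}, which sets $\calAki\eqdef\{\alpha\in\matN\st\alpha\leq k\}$: the two sets are now defined by the same predicate, hence they coincide.

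The statement carries no genuine mathematical difficulty; the only point requiring any attention—and the one I expect to be slightly delicate in a formalization, although it is immediate on paper—is the identification of the one-fold product $\matN^1$ with~$\matN$ (equivalently, of a length-one family with its single entry), which is precisely what makes the length function reduce to the identity in this case. Everything else is a direct unfolding of the definitions.
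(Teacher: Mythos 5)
Your proof is correct and follows exactly the paper's route: unfold Definition~\ref{d:multi-ind-Akd-Ckd} with $d\eqdef1$, use Definition~\ref{d:len-multi-ind} to reduce $\len{\aalpha}$ to $\alpha_1$, and compare with Definition~\ref{d:multi-ind-Ak1}. Your explicit remark about identifying $\matN^1$ with $\matN$ is a reasonable addition that the paper leaves implicit.
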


\begin{proof}
  Direct consequence of
  Definition~\threfc{d:multi-ind-Akd-Ckd}{%
    \eqref{e:multi-ind-Akd-Ckd-1} with $d\eqdef1$},
  Definition~\threfc{d:len-multi-ind}{%
    with $d\eqdef1$, thus $\len{\aalpha}=\alpha_1$}, and
  Definition~\thref{d:multi-ind-Ak1}.
\end{proof}

\begin{lemma}[indices are smaller than maximal length]
  \label{l:ind-smaller-than-max-len}
  \mbox{}\\
  Let~$d\geq1$.
  Let~$k\in\matN$.
  Let~$\aalpha\in\calAkd\cup\calCkd$.
  Let~$i\in[1..d]$.
  Then, we have $\alpha_i\leq k$.
\end{lemma}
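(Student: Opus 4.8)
The plan is to reduce both membership cases to the single bound $\len{\aalpha}\leq k$, and then to use the nonnegativity of the components of a multi-index to isolate $\alpha_i$ inside the sum that defines the length.

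First I would observe that, whether $\aalpha\in\calAkd$ or $\aalpha\in\calCkd$, Definition~\thref{d:multi-ind-Akd-Ckd} yields $\len{\aalpha}\leq k$. In the first case this is the defining condition itself, and in the second case the defining condition $\len{\aalpha}=k$ gives $\len{\aalpha}\leq k$ by reflexivity of the order on~$\matN$. This collapses the disjunction and leaves a single inequality to exploit.

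Next I would unfold the length via Definition~\thref{d:len-multi-ind}, writing $\len{\aalpha}=\sum_{j=1}^d\alpha_j$. Since every component $\alpha_j$ lies in~$\matN$ and is therefore nonnegative, each individual term is bounded above by the whole sum; in particular $\alpha_i\leq\sum_{j=1}^d\alpha_j=\len{\aalpha}$. Combining this with the bound from the previous step yields the chain $\alpha_i\leq\len{\aalpha}\leq k$, which is the desired conclusion.

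There is no real obstacle here: the statement is an immediate consequence of the abelian monoid properties of~$\matN$ (nonnegativity and monotonicity of addition) together with the two definitions. The only point requiring a moment's care is to treat the two membership cases uniformly, which is precisely why I first reduce them to $\len{\aalpha}\leq k$ before manipulating the sum.
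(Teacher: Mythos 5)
Your proof is correct and follows essentially the same route as the paper's: both establish $\alpha_i\leq\len{\aalpha}$ from Definition~\ref{d:len-multi-ind} (via nonnegativity of the components), obtain $\len{\aalpha}\leq k$ from Definition~\ref{d:multi-ind-Akd-Ckd} in either membership case, and conclude by transitivity of the order on~$\matN$.
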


\begin{proof}
  \mbox{}\\
  Direct consequence of
  Definition~\threfc{d:len-multi-ind}{thus $\alpha_i\leq\len{\aalpha}$},
  Definition~\threfc{d:multi-ind-Akd-Ckd}{thus $\len{\aalpha}\leq k$}, and
  \assume{the definition of order (transitivity)}.
\end{proof}

\begin{lemma}[first $\calCkd$]
  \label{l:first-Ckd}
  \mbox{}\hfill
  Let~$d\geq1$.
  Let~$k\in\matN$.
  Then, we have
  \begin{align}
    \label{e:first-multi-ind-Ckd-1}
    (d = 1)&&
    \calCki = \{ k \},
    &\AND \card (\calCki) = 1,&&\\
    \label{e:first-multi-ind-Ckd-2}
    (d = 2)&&
    \calCkii = \{ (k - i, i) \in \matN^2 \st i \in [0..k] \},
    &\AND \card (\calCkii) = k + 1,&&\\
    \label{e:first-multi-ind-Ckd-3}
    (k = 0)&&
    \calCod = \{ \zzero \in \matNd \},
    &\AND \card (\calCod) = 1,&&\\
    \label{e:first-multi-ind-Ckd-4}
    (k = 1)&&
    \calCid = \{ \ee_i \in \matNd \st i \in [1..d] \},
    &\AND \card (\calCid) = d.&&
  \end{align}
\end{lemma}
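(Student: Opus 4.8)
The plan is to dispatch the four cases independently, each time unfolding Definition~\ref{d:multi-ind-Akd-Ckd} of $\calCkd$ together with Definition~\ref{d:len-multi-ind} of the length $\len{\aalpha}=\sum_{i=1}^d\alpha_i$, and then to evaluate each cardinal by exhibiting an explicit bijection onto an integer interval whose size is already known. For the case $d=1$, I would invoke Lemma~\ref{l:multi-ind-Akd-for-d-eq-1-is-Ak1}, or simply note that for $d\eqdef1$ the length reduces to $\len{\aalpha}=\alpha_1$; hence $\calCki=\{\alpha\in\matN\st\alpha=k\}=\{k\}$, a singleton of cardinal~$1$. For the case $k=0$, since every component lies in $\matN$, the equation $\sum_{i=1}^d\alpha_i=0$ forces each $\alpha_i=0$, because a finite sum of nonnegative integers vanishes iff every summand vanishes. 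Thus $\calCod=\{\zzero\}$ (using Definition~\ref{d:canon-fam}), again of cardinal~$1$.

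For the case $d=2$, a pair $(\alpha_1,\alpha_2)\in\matN^2$ belongs to $\calCkii$ iff $\alpha_1+\alpha_2=k$; given such a pair, $\alpha_2\leq k$ and $\alpha_1=k-\alpha_2$, while conversely for any $i\in[0..k]$ the pair $(k-i,i)$ has nonnegative entries and length~$k$. This makes $i\longmapsto(k-i,i)$ a bijection from $[0..k]$ onto $\calCkii$, so $\card(\calCkii)=\card([0..k])=k+1$ by Definition~\ref{d:multi-ind-Ak1} and Lemma~\ref{l:card-Ak1}.

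For the case $k=1$, the crucial step is the characterization that a multi-index of length~$1$ over $\matN$ has exactly one component equal to~$1$ and all others equal to~$0$: if two or more components were $\geq1$ the length would be $\geq2$, and if all vanished the length would be~$0$, so precisely one position $j$ carries a nonzero value, necessarily equal to~$1$ by nonnegativity of the remaining entries. This identifies such an $\aalpha$ with $\ee_j$ (Definition~\ref{d:canon-fam}), and conversely each $\ee_i$ has length~$1$; hence $\calCid=\{\ee_i\st i\in[1..d]\}$. Since the $\ee_i$ are pairwise distinct, the indexation $i\longmapsto\ee_i$ over $[1..d]$ is injective, giving $\card(\calCid)=d$.

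The main obstacle is the $k=1$ case, where the membership characterization requires the nonnegativity of the components and a short case analysis, followed by checking both inclusions and the injectivity of $i\longmapsto\ee_i$. The remaining three cases collapse almost immediately onto the definitions and onto the known value $\card([0..k])=k+1$.
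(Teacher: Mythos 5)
Your proof is correct and follows essentially the same route as the paper, which simply declares the result a direct consequence of Definition~\ref{d:canon-fam}, Definition~\ref{d:multi-ind-Akd-Ckd}, and Lemma~\ref{l:ind-smaller-than-max-len}; your case-by-case unfolding (including the bound $\alpha_i\leq k$ used implicitly in the $d=2$ and $k=1$ cases, which is exactly that last lemma) is just the elaboration the paper leaves to the reader.
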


\begin{proof}
  Direct consequence of
  Definition~\thref{d:canon-fam},
  Definition~\thref{d:multi-ind-Akd-Ckd}, and
  Lemma~\thref{l:ind-smaller-than-max-len}.
\end{proof}

\subsection{Slices of $\calCkd$ and its cardinal}
\label{ss:slices-Ckd-cardinal}

\begin{remark}
  \label{r:tilde-and-check-notations}
  In the sequel, when $d\geq2$, the check notation~$\caalpha$ denotes the
  {\em last} $d-1$ components of the multi-index~$\aalpha\in\matNd$, and the
  tilde notation~$\taalpha$ denotes the {\em first} ones.\\
  Thus, for all $\aalpha\eqdef(\alpha_1,\ldots,\alpha_d)\in\matNd$, we have
  $\aalpha=(\alpha_1,\caalpha)=(\taalpha,\alpha_d)$.

  The notation naturally extends to vectors in~$\matRd$, and other objects such
  as subsets and functions may also inherit the check or tilde decoration,
  {\eg} see Definition~\ref{d:slices-Sckdi-and-Stkdi} and
  Lemma~\ref{l:card-slices-of-Ckd}.
\end{remark}

\begin{definition}[slices $\calSckdi$ and $\calStkdi$]
  \label{d:slices-Sckdi-and-Stkdi}
  \mbox{}\hfill
  Let~$d\geq2$.
  Let~$k\in\matN$.
  Let~$i\in[0..k]$.\\
  The {\em $i$-th ``vertical'' (resp. ``horizontal'') slice of~$\calCkd$}
  is denoted~$\calSckdi$ (resp.~$\calStkdi$), and is defined by
  \begin{equation}
    \label{e:slices-Sckdi-and-Stkdi}
    \calSckdi \eqdef
    \left\{
      (i, \caalpha) \in \matNd \st \caalpha \in \calCdmi{k-i}
    \right\},
    \AND
    \calStkdi \eqdef
    \left\{
      (\taalpha, i) \in \matNd \st \taalpha \in \calCdmi{k-i}
    \right\}.
  \end{equation}
\end{definition}

\begin{remark}
  Geometrically, the ``vertical'' slice~$\calSckdi$ is the intersection
  of~$\calCkd$ (subset in~$\matNd$ of the hyperplane of equation
  $\sum_{i=1}^d\alpha_i=k$) with the (vertical) first canonical
  hyperplane of equation~$\alpha_1=i$, see Figure~\ref{f:lag-k3-d3-slices}.
  The order discussed in Section~\ref{ss:discussion} is consistent with the
  decomposition of~$\calCkd$ in slices as $\calCkd=\biguplus_{i=k}^0\calSckdi$
  (decreasing from~$k$ to~0), see Lemma~\ref{l:slices-of-multi-ind-Ckd}.

  In the same way, the ``horizontal'' slice~$\calStkdi$ is the intersection
  of~$\calCkd$ with the (horizontal) last canonical hyperplane of
  equation~$\alpha_d=i$.
\end{remark}

\begin{figure}[htb]
  \centering
  \begin{tikzpicture}[scale=4,math3d] 

        \def\kk{3}

        \def\colk{black}
        \def\colo{magenta}
        \def\coli{darkgreen}
        \def\colii{red}
        \def\coliii{blue}

        \def\opacity{0.3}
        \def\opacityi{0.7}
        \def\opacityii{0.6}

	\coordinate (A) at (0,0,0);
	\coordinate (B) at  ($ (A) + (1,0,0) $);
	\coordinate (C) at  ($ (A) + (0,1,0) $);
	\coordinate (D) at  ($ (A) + (0,0,1) $);
        \draw (A) node[left] {$\hvv_0$} ;  
        \draw (B) node[below=11pt] {$\hvv_1=(1,0,0)$} ; 
        \draw (C) node[above=2pt] {$\hvv_2$} ; 
        \draw (D) node[right=1.5pt] {$\hvv_3=(0,0,1)$} ; 
	\draw[line width=1.0pt,rounded corners=0.5pt] (A) -- (B) -- (D) -- cycle;
	\draw[line width=1.0pt,rounded corners=0.5pt] (A) -- (C) -- (B) -- cycle;
	\draw[line width=1.0pt,rounded corners=0.5pt] (B) -- (C) -- (D) -- cycle;
	\draw[line width=1.6pt,rounded corners=0.5pt] (A) -- (C) -- (D) -- cycle;

        \node[color=\colk] (K3) at ($ (A) + (0,-0.3,0.7) $) {$\hK_{3}$};

        \pgfmathparse{1-1/\kk}\let\kkp\pgfmathresult
        \coordinate (B1) at ($ \kkp*(A) + 1/\kk*(B) $) ;
        \coordinate (C1) at ($ \kkp*(A) + 1/\kk*(C) $) ;
        \coordinate (D1) at ($ \kkp*(A) + 1/\kk*(D) $) ;
        \pgfmathparse{1-2/\kk}\let\kkp\pgfmathresult
        \coordinate (B2) at ($ \kkp*(A) + 2/\kk*(B) $) ;
        \coordinate (C2) at ($ \kkp*(A) + 2/\kk*(C) $) ;
        \coordinate (D2) at ($ \kkp*(A) + 2/\kk*(D) $) ;

        \coordinate (eps) at  ($ 1/\kk*(0,0.05,-0.05) $); 
	\coordinate (A210) at  ($ (A) + 1/\kk*(2,1,0) $);
	\coordinate (A201) at  ($ (A) + 1/\kk*(2,0,1) $);
	\coordinate (A120) at  ($ (A) + 1/\kk*(1,2,0) $);
	\coordinate (A102) at  ($ (A) + 1/\kk*(1,0,2) $);
        \coordinate (CD) at ($ 1/2*(C) + 1/2*(D) $);

        \draw[color=\colk,fill=\colo!10,fill opacity=\opacity]   (A) -- (C) -- (D) -- cycle;
        \draw[color=\colk,fill=\colo!10,fill opacity=\opacity]  (B1) -- (A120) -- (A102) -- cycle;
        \draw[color=\colk,fill=\colo!10,fill opacity=\opacity]  (B2) -- (A210) -- (A201) -- cycle;

        \fill[color=\colk,fill opacity=\opacityii] (A) circle (0.6pt);
        \foreach \x in {1,0}  {
          \pgfmathparse{1-\x}\let\YY\pgfmathresult
          \foreach \y in {0,...,\YY} {
            \pgfmathparse{1-\x-\y}\let\z\pgfmathresult
            \fill[color=\colk,fill opacity=\opacityii] ($ (A) + 1/\kk*(\x,\y,\z) $) circle (0.6pt);
          }
        }
        \foreach \x in {2,1,...,0}  {
          \pgfmathparse{2-\x}\let\YY\pgfmathresult
          \foreach \y in {0,...,\YY} {
            \pgfmathparse{2-\x-\y}\let\z\pgfmathresult
            \fill[color=\colk,fill opacity=\opacityii] ($ (A) + 1/\kk*(\x,\y,\z) $) circle (0.6pt);
          }
        }
        \draw[color=\colo,dashed,line width=2.5pt,<->,>=latex]  ($ (C) - (eps) $) -- ($ (D) + (eps) $);

        \draw[color=\colo,dashed,line width=2.5pt,<->,>=latex] ($ (A210) - (eps) $) -- ($ (A201) + (eps) $);
        \draw[color=\colo,dashed,line width=2.5pt,<->,>=latex] ($ (A120) - (eps) $) -- ($ (A102) + (eps) $);
        \newcount\z
        \foreach \x in {3,2,...,0}  {
          \pgfmathparse{3-\x}\let\YY\pgfmathresult
          \foreach \y in {0,...,\YY} {
            \pgfmathsetcount{\z}{3-\x-\y} 
            \coordinate (Axyz) at  ($ (A) + 1/\kk*(\x,\y,\z) $) ;
            \node[color=\coliii,below=-0.1] (Nxyz) at (Axyz) {$\haa_{(\x,\y,\the\z)}$};
            \fill[color=\coliii] (Axyz) circle (0.8pt);
          }
        }

        \draw (B) node[color=\colo,above left] {${\bf \calSckd{3}}$} ;  
        \draw (A201) node[color=\colo,above left] {${\bf \calSckd{2}}$} ; 
        \draw (A102) node[color=\colo,above left] {${\bf \calSckd{1}}$} ; 
        \draw (D) node[color=\colo,above left] {${\bf \calSckd{0}}$} ;

	\coordinate (A) at (0,1.55,0);
	\coordinate (B) at  ($ (A) + (1,0,0) $);
	\coordinate (C) at  ($ (A) + (0,1,0) $);
	\coordinate (D) at  ($ (A) + (0,0,1) $);
        \draw (A) node[left] {$\hvv_0$} ;  
        \draw (B) node[below=11pt] {$\hvv_1=(1,0,0)$} ; 
        \draw (C) node[above=2pt] {$\hvv_2$} ; 
        \draw (D) node[right=1.5pt] {$\hvv_3=(0,0,1)$} ; 
	\draw[line width=1.0pt,rounded corners=0.5pt] (A) -- (B) -- (D) -- cycle;
	\draw[line width=1.0pt,rounded corners=0.5pt] (A) -- (C) -- (B) -- cycle;
	\draw[line width=1.0pt,rounded corners=0.5pt] (B) -- (C) -- (D) -- cycle;
	\draw[line width=1.6pt,rounded corners=0.5pt] (A) -- (C) -- (D) -- cycle;

        \node[color=\colk] (K3) at ($ (A) + (0,-0.3,0.7) $) {$\hK_{3}$};

        \pgfmathparse{1-1/\kk}\let\kkp\pgfmathresult
        \coordinate (B1) at ($ \kkp*(A) + 1/\kk*(B) $) ;
        \coordinate (C1) at ($ \kkp*(A) + 1/\kk*(C) $) ;
        \coordinate (D1) at ($ \kkp*(A) + 1/\kk*(D) $) ;
        \pgfmathparse{1-2/\kk}\let\kkp\pgfmathresult
        \coordinate (B2) at ($ \kkp*(A) + 2/\kk*(B) $) ;
        \coordinate (C2) at ($ \kkp*(A) + 2/\kk*(C) $) ;
        \coordinate (D2) at ($ \kkp*(A) + 2/\kk*(D) $) ;

        \coordinate (eps) at  ($ 1/\kk*(0.05,-0.05,0) $); 
        \coordinate (BC) at ($ 1/2*(B) + 1/2*(C) $);
        \coordinate (A201) at  ($ (A) + 1/\kk*(2,0,1) $);
        \coordinate (A021) at  ($ (A) + 1/\kk*(0,2,1) $);
	\coordinate (A102) at  ($ (A) + 1/\kk*(1,0,2) $);
	\coordinate (A012) at  ($ (A) + 1/\kk*(0,1,2) $);

        \draw[color=\colk,fill=\coli!30,fill opacity=\opacity]   (A) -- (B) -- (C) -- cycle;
        \draw[color=\colk,fill=\coli!30,fill opacity=\opacity]  (D1) -- (A201) -- (A021) -- cycle;
        \draw[color=\colk,fill=\coli!30,fill opacity=\opacity]  (D2) -- (A102) -- (A012) -- cycle;

        \fill[color=\colk,fill opacity=\opacityii] (A) circle (0.6pt);
        \foreach \x in {1,0}  {
          \pgfmathparse{1-\x}\let\YY\pgfmathresult
          \foreach \y in {0,...,\YY} {
            \pgfmathparse{1-\x-\y}\let\z\pgfmathresult
            \fill[color=\colk,fill opacity=\opacityii] ($ (A) + 1/\kk*(\x,\y,\z) $) circle (0.6pt);
          }
        }
        \foreach \x in {2,1,...,0}  {
          \pgfmathparse{2-\x}\let\YY\pgfmathresult
          \foreach \y in {0,...,\YY} {
            \pgfmathparse{2-\x-\y}\let\z\pgfmathresult
            \fill[color=\colk,fill opacity=\opacityii] ($ (A) + 1/\kk*(\x,\y,\z) $) circle (0.6pt);
          }
        }
        \draw[color=\coli,dashed,line width=2.5pt,<->,>=latex]  ($ (B) - (eps) $) -- ($ (C) + (eps) $);

        \draw[color=\coli,dashed,line width=2.5pt,<->,>=latex] ($ (A201) - (eps) $) -- ($ (A021) + (eps) $);
        
        \draw[color=\coli,dashed,line width=2.5pt,<->,>=latex] ($ (A102) - (eps) $) -- ($ (A012) + (eps) $);
        \newcount\z
        \foreach \x in {3,2,...,0}  {
          \pgfmathparse{3-\x}\let\YY\pgfmathresult
          \foreach \y in {0,...,\YY} {
            \pgfmathsetcount{\z}{3-\x-\y} 
            \coordinate (Axyz) at  ($ (A) + 1/\kk*(\x,\y,\z) $) ;
            \node[color=\coliii,below right] (Nxyz) at (Axyz) {$\haa_{(\x,\y,\the\z)}$};
            \fill[color=\coliii] (Axyz) circle (0.8pt);
          }
        }

        \draw (C) node[color=\coli,above right] {${\bf \calStkd{0}}$} ;  
        \draw (A021) node[color=\coli,above right] {${\bf \calStkd{1}}$} ; 
        \draw (A012) node[color=\coli,above right] {${\bf \calStkd{2}}$} ; 
        \draw (D) node[color=\coli,above right] {${\bf \calStkd{3}}$} ;

\end{tikzpicture}
  \caption[Vertical and horizontal slices]{%
    Vertical slices~$\calSckdi$ (left) and horizontal slices~$\calStkdi$
    (right), in the case $d=k=3$ (see
    Definition~\ref{d:slices-Sckdi-and-Stkdi}).\\
    The reference Lagrange node~$\haa_{(\alpha_1,\alpha_2,\alpha_3)}$ in blue
    corresponds to the element $(\alpha_1,\alpha_2,\alpha_3)\in\calCkd$.
    For instance, the set $\calSckd{1}=\{(1,2,0),(1,1,1),(1,0,2)\}$ is depicted
    by the nodes linked by a dashed arrow.}
  \label{f:lag-k3-d3-slices}
\end{figure}
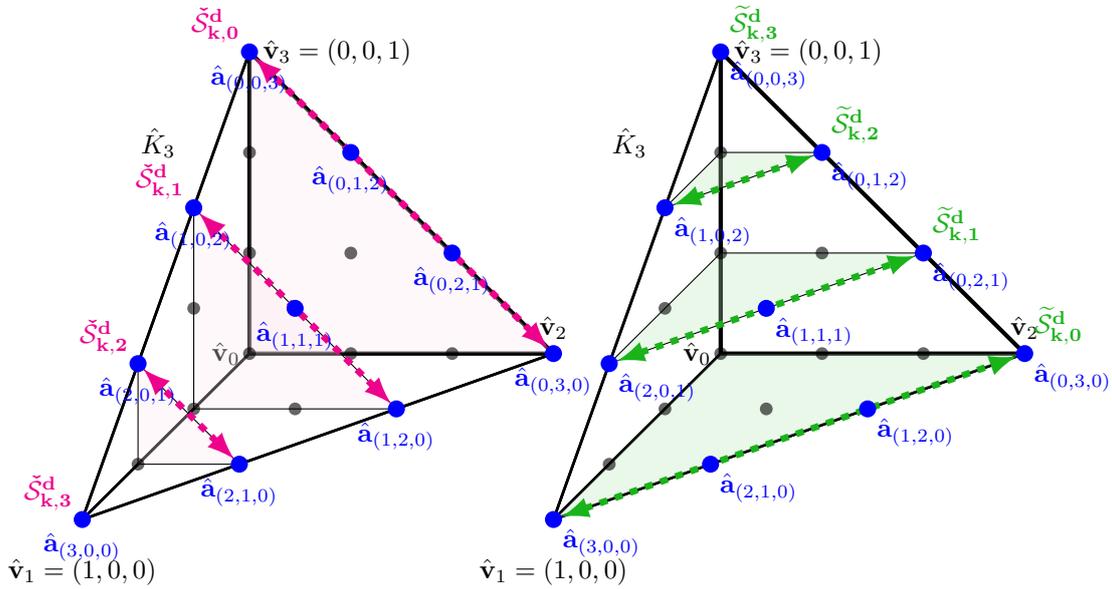

\begin{lemma}[slices of $\calCkd$]
  \label{l:slices-of-multi-ind-Ckd}
  \mbox{}\\
  Let~$d\geq 2$.
  Let~$k\in\matN$.
  Let $i\in[0..k]$.
  Then, $\calStkdi$ and~$\calSckdi$ are subsets of~$\calCkd$, and we have
  \begin{equation}
    \label{e:slices-of-multi-ind-Ckd}
    \calCkd = \biguplus_{i=0}^k \calSckdi = \biguplus_{i=0}^k \calStkdi.
  \end{equation}
\end{lemma}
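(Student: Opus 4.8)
The plan is to establish the result for the vertical slices $\calSckdi$ and then obtain the horizontal case $\calStkdi$ by the symmetric argument (swapping the roles of the first and last components, {\ie} working with $\taalpha$ and $\alpha_d$ in place of $\caalpha$ and $\alpha_1$). For the vertical slices there are three things to check: that each $\calSckdi$ is contained in $\calCkd$, that these sets cover $\calCkd$, and that they are pairwise disjoint, so that the union is genuinely a $\biguplus$.

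The key computation is that the length splits off the first component: for any $\aalpha=(\alpha_1,\caalpha)\in\matNd$, Definition~\ref{d:len-multi-ind} gives $\len{\aalpha}=\sum_{i=1}^d\alpha_i=\alpha_1+\sum_{i=2}^d\alpha_i=\alpha_1+\len{\caalpha}$, where $\len{\caalpha}$ is the length of the $(d-1)$-multi-index $\caalpha$ (alternatively this is Lemma~\ref{l:len-multi-ind-is-add}). First I would use this for the inclusion $\calSckdi\subset\calCkd$: if $\aalpha=(i,\caalpha)$ with $\caalpha\in\calCdmi{k-i}$, then by Definition~\ref{d:slices-Sckdi-and-Stkdi} we have $\len{\caalpha}=k-i$, hence $\len{\aalpha}=i+(k-i)=k$, so $\aalpha\in\calCkd$ by Definition~\ref{d:multi-ind-Akd-Ckd}.

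For coverage, I would take any $\aalpha\in\calCkd$, write it as $(\alpha_1,\caalpha)$, and set $i\eqdef\alpha_1$. By Lemma~\ref{l:ind-smaller-than-max-len} we have $\alpha_1\leq k$, so $i\in[0..k]$; and the length-splitting gives $\len{\caalpha}=\len{\aalpha}-\alpha_1=k-i$, so $\caalpha\in\calCdmi{k-i}$ and thus $\aalpha=(i,\caalpha)\in\calSckdi$. Together with the inclusions this yields $\bigcup_{i=0}^k\calSckdi=\calCkd$. Disjointness is immediate from the definition: every element of $\calSckdi$ has first component equal to $i$, so if $i\neq j$ then $\calSckdi\cap\calSckd{j}=\emptyset$, since a common element would have $\alpha_1=i$ and $\alpha_1=j$. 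This upgrades the union to the disjoint union $\biguplus_{i=0}^k\calSckdi=\calCkd$.

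The horizontal case is entirely parallel, splitting off the last component via $\len{\aalpha}=\len{\taalpha}+\alpha_d$ and using $\alpha_d\leq k$ (again Lemma~\ref{l:ind-smaller-than-max-len}). There is no genuine obstacle here; the only point requiring a little care is making the first/last-component decomposition precise and consistent with the tilde/check conventions of Remark~\ref{r:tilde-and-check-notations} — in particular that $(\alpha_1,\caalpha)$ and $(\taalpha,\alpha_d)$ denote the same reindexings of $\aalpha$ used in Definition~\ref{d:slices-Sckdi-and-Stkdi} — and checking that the index bound $i\in[0..k]$ is exactly what lets the cover range over $[0..k]$.
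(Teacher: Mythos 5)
Your proof is correct and follows essentially the same route as the paper's: inclusion via the length splitting $\len{\aalpha}=\alpha_1+\len{\caalpha}$, coverage by taking $i\eqdef\alpha_1$ together with Lemma~\ref{l:ind-smaller-than-max-len}, disjointness from the fixed first component, and the symmetric argument on the last component for $\calStkdi$. The only cosmetic difference is that the paper writes out the horizontal case explicitly rather than invoking symmetry.
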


\begin{proof}
  \proofparskip{Case of~$\calSckdi$}
  Let~$\aalpha\in\calSckdi$.
  Then, from
  Definition~\thref{d:slices-Sckdi-and-Stkdi}, and
  Definition~\thref{d:multi-ind-Akd-Ckd},
  there exists~$\caalpha\in\calCdmi{k-i}$ such that $\aalpha=(i,\caalpha)$,
  and we have $\len{\aalpha}=i+\len{\caalpha}=i+k-i=k$, {\ie}
  $\aalpha\in\calCkd$, and $\calSckdi\subset\calCkd$.
  Thus, we have $\bigcup_{i=0}^k\calSckdi\subset\calCkd$.

  Let $i,j\in[0..k]$.
  Let~$\aalpha\in\calSckdi\cap\calSckd{j}$.
  Then, from
  Definition~\thref{d:slices-Sckdi-and-Stkdi},
  there exist~$\caalpha^i\in\calCdmi{k-i}$ and~$\caalpha^j\in\calCdmi{k-j}$
  such that $\aalpha=(i,\caalpha_i)=(j,\caalpha_j)$.
  Thus, we have $i=j$ and $\caalpha_i=\caalpha_j$, and by contrapositive, we
  have $\calSckdi\cap\calSckd{j}=\emptyset$ when $i\neq j$.

  Let~$\aalpha\in\calCkd$.
  Let~$\caalpha\eqdef(\alpha_2,\ldots,\alpha_d)\in\matN^{d-1}$.
  Then, from
  Lemma~\thref{l:ind-smaller-than-max-len}, and
  Definition~\thref{d:multi-ind-Akd-Ckd},
  we have $0\leq\alpha_1\leq k$, $\caalpha\in\calCdmi{k-\alpha_1}$, and
  $\aalpha=(\alpha_1,\caalpha)\in\calSckd{\alpha_1}$.
  Thus, we have $\calCkd\subset\biguplus_{i=0}^k\calSckdi$.

  Therefore, we have $\calCkd = \biguplus_{i=0}^k \calSckdi$.

  \proofparskip{Case of~$\calStkdi$} The proof is very similar.
  Let~$\aalpha\in\calStkdi$.
  Then, from
  Definition~\thref{d:slices-Sckdi-and-Stkdi}, and
  Definition~\thref{d:multi-ind-Akd-Ckd},
  there exists~$\taalpha\in\calCdmi{k-i}$ such that $\aalpha=(\taalpha,i)$,
  and we have $\len{\aalpha}=\len{\taalpha}+i=k-i+i=k$, {\ie}
  $\aalpha\in\calCkd$, and $\calStkdi\subset\calCkd$.
  Thus, we have $\bigcup_{i=0}^k\calStkdi\subset\calCkd$.

  Let $i,j\in[0..k]$.
  Let~$\aalpha\in\calStkdi\cap\calStkd{j}$.
  Then, from
  Definition~\thref{d:slices-Sckdi-and-Stkdi},
  there exist~$\caalpha^i\in\calCdmi{k-i}$ and~$\caalpha^j\in\calCdmi{k-j}$
  such that $\aalpha=(\caalpha_i,i)=(\caalpha_j,j)$.
  Thus, we have $\caalpha_i=\caalpha_j$ and $i=j$, and by contrapositive, we
  have $\calStkdi\cap\calStkd{j}=\emptyset$ when $i\neq j$.

  Let~$\aalpha\in\calCkd$.
  Let~$\caalpha\eqdef(\alpha_1,\ldots,\alpha_{d-1})\in\matN^{d-1}$.
  Then, from
  Lemma~\thref{l:ind-smaller-than-max-len}, and
  Definition~\thref{d:multi-ind-Akd-Ckd},
  we have $0\leq\alpha_d\leq k$, $\taalpha\in\calCdmi{k-\alpha_d}$, and
  $\aalpha=(\taalpha,\alpha_d)\in\calStkd{\alpha_d}$.
  Thus, we have $\calCkd\subset\biguplus_{i=0}^k\calStkdi$.

  Therefore, we have $\calCkd=\biguplus_{i=0}^k\calStkdi$.
\end{proof}

\begin{lemma}[cardinal of slices of $\calCkd$]
  \label{l:card-slices-of-Ckd}
  \mbox{}\hfill
  Let~$d\geq2$.
  Let~$k\in\matN$.
  Let~$i\in[0..k]$.\\
  Let $\cfhi_{k,d}^i\eqdef
  (\aalpha=(i,\caalpha)\in\calSckdi\longmapsto\caalpha\in\calCdmi{k-i})$, and
  $\tfhi_{k,d}^i\eqdef
  (\aalpha=(\taalpha,i)\in\calStkdi\longmapsto\taalpha\in\calCdmi{k-i})$.\\
  Then, $\cfhi_{k,d}^i$ and $\tfhi_{k,d}^i$ are bijections, and we have
  $\card(\calSckdi)=\card(\calStkdi)=\card(\calCdmi{k-i})$.
\end{lemma}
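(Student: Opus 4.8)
The plan is to observe that both $\cfhi_{k,d}^i$ and $\tfhi_{k,d}^i$ are nothing but the coordinate projections that forget the component fixed to the value~$i$, and that each has an obvious two-sided inverse given by the corresponding insertion map.

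First I would treat $\cfhi_{k,d}^i$. By Definition~\ref{d:slices-Sckdi-and-Stkdi}, every element of~$\calSckdi$ is of the form $(i,\caalpha)$ with $\caalpha\in\calCdmi{k-i}$, so the map indeed takes values in~$\calCdmi{k-i}$ and is well-defined. Injectivity is immediate: if $\cfhi_{k,d}^i(i,\caalpha)=\cfhi_{k,d}^i(i,\cbbeta)$, then $\caalpha=\cbbeta$, hence $(i,\caalpha)=(i,\cbbeta)$. For surjectivity, given any $\caalpha\in\calCdmi{k-i}$, the tuple $(i,\caalpha)$ lies in~$\calSckdi$ by Definition~\ref{d:slices-Sckdi-and-Stkdi} and is sent to~$\caalpha$. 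Thus $\cfhi_{k,d}^i$ is a bijection, with inverse $\caalpha\mapsto(i,\caalpha)$.

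The argument for $\tfhi_{k,d}^i$ is entirely symmetric, forgetting instead the last component fixed to~$i$: each element of~$\calStkdi$ has the form $(\taalpha,i)$ with $\taalpha\in\calCdmi{k-i}$, injectivity follows from equality of the first $d-1$ components, and surjectivity is witnessed by $\taalpha\mapsto(\taalpha,i)$. Finally, since a bijection between finite sets preserves the cardinal (among the assumed results on bijective functions), the equalities $\card(\calSckdi)=\card(\calStkdi)=\card(\calCdmi{k-i})$ follow at once.

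There is no genuine obstacle here: the entire content is unwinding Definition~\ref{d:slices-Sckdi-and-Stkdi}, and everything reduces to bookkeeping. The only point demanding a little care is the check/tilde convention (Remark~\ref{r:tilde-and-check-notations}), which must be read correctly so that~$\cfhi_{k,d}^i$ forgets the \emph{first} coordinate (fixed to~$i$) while~$\tfhi_{k,d}^i$ forgets the \emph{last} one; getting this right guarantees that the two maps are genuinely the projections onto the variable part~$\caalpha$, resp.~$\taalpha$.
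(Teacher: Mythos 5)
Your proof is correct and follows essentially the same route as the paper: well-definedness and surjectivity are read off from Definition~\ref{d:slices-Sckdi-and-Stkdi} (you make the surjectivity witness $(i,\caalpha)$, resp. $(\taalpha,i)$, explicit where the paper calls it obvious), injectivity comes from equality of the fixed component, and the cardinal equalities follow from bijectivity. Nothing is missing.
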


\begin{proof}
  From
  Definition~\thref{d:slices-Sckdi-and-Stkdi},
  the applications~$\cfhi_{k,d}^i$ and~$\tfhi_{k,d}^i$ are well defined, and
  obviously surjective.
  Moreover, let $\aalpha,\bbeta\in\calSckdi$, such that
  $\caalpha=\cfhi_{k,d}^i(\aalpha)=\cfhi_{k,d}^i(\bbeta)=\cbbeta$.
  Thus, we have $(i,\caalpha)=(i,\cbbeta)$, and~$\cfhi_{k,d}^i$ is injective.
  In the same manner, let $\aalpha,\bbeta\in\calStkdi$, such that
  $\taalpha=\tfhi_{k,d}^i(\aalpha)$ equals $\tfhi_{k,d}^i(\bbeta)=\tbbeta$.
  Thus, we have $(\taalpha,i)=(\tbbeta,i)$, and~$\tfhi_{k,d}^i$ is injective.

  Therefore, $\cfhi_{k,d}^i$ and~$\tfhi_{k,d}^i$ are bijective, and we have the
  result.
\end{proof}

\begin{lemma}[cardinal of $\calCkd$]
  \label{l:card-Ckd}
  \mbox{}\hfill
  Let~$d\geq1$.
  Let~$k\in\matN$.
  Then, the number of elements of~$\calCkd$ is
  \begin{equation}
    \label{e:card-Ckd}
    \card(\calCkd)
    = \binom{k + d - 1}{d - 1} = \binom{k + d - 1}{k}
    = \frac{\fact{(k + d - 1)}}{\fact{k} \; \fact{(d - 1)}}.
  \end{equation}
\end{lemma}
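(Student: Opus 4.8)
The plan is to prove the formula by induction on the dimension $d\geq1$, reducing $\calCkd$ to lower-dimensional layers via its slice decomposition and then summing the resulting binomial coefficients with the hockey-stick identity already proved in Lemma~\ref{l:prop-binom-coef}.

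First I would dispatch the base case $d=1$. Here Lemma~\ref{l:first-Ckd} (equation~\eqref{e:first-multi-ind-Ckd-1}) gives $\calCki=\{k\}$, hence $\card(\calCki)=1$, which agrees with $\binom{k+1-1}{1-1}=\binom{k}{0}=1$ by property~\eqref{e:prop-binom-coef-0}. For the inductive step, suppose $d\geq2$ and that for every $j\in\matN$ one has $\card(\calCdmi{j})=\binom{j+d-2}{d-2}$ (this is the claimed formula in dimension $d-1$). By Lemma~\ref{l:slices-of-multi-ind-Ckd} the vertical slices partition $\calCkd$ as $\calCkd=\biguplus_{i=0}^k\calSckdi$, so by additivity of the cardinal over a finite disjoint union,
\[
  \card(\calCkd)=\sum_{i=0}^k\card(\calSckdi).
\]
Lemma~\ref{l:card-slices-of-Ckd} then gives $\card(\calSckdi)=\card(\calCdmi{k-i})$, and after the change of summation index $j\eqdef k-i$ (running from $k$ down to $0$) together with the induction hypothesis,
\[
  \card(\calCkd)=\sum_{j=0}^k\card(\calCdmi{j})=\sum_{j=0}^k\binom{j+d-2}{d-2}.
\]

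Finally I would invoke the hockey-stick identity~\eqref{e:prop-binom-coef-4} of Lemma~\ref{l:prop-binom-coef} with $p\eqdef d-1$ and $n\eqdef k$. The side condition $p\geq1$ holds precisely because $d\geq2$ in the inductive step, and it yields
\[
  \sum_{j=0}^k\binom{j+(d-1)-1}{(d-1)-1}=\binom{k+d-1}{d-1},
\]
which is exactly $\card(\calCkd)=\binom{k+d-1}{d-1}$. The remaining two expressions in~\eqref{e:card-Ckd} follow without effort: the equality $\binom{k+d-1}{d-1}=\binom{k+d-1}{k}$ is the symmetry property~\eqref{e:prop-binom-coef-2} applied with $n\eqdef k+d-1$ and $p\eqdef d-1$ (so that $n-p=k$ and the hypothesis $p\leq n$ reduces to $0\leq k$), while the factorial form is the defining formula of Definition~\ref{d:binom-coef} since $d-1\leq k+d-1$.

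Every ingredient is already available, so no genuinely hard computation remains; the steps requiring care are purely bookkeeping. I would pay attention to the index shift $j=k-i$ and, above all, to the side condition $p=d-1\geq1$ that legitimizes the hockey-stick identity in the inductive step, the degenerate dimension $d=1$ being excluded from that application and instead settled directly by the explicit description of $\calCki$.
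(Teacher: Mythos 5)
Your proof is correct and follows essentially the same route as the paper: induction on the dimension, the slice decomposition of $\calCkd$ from Lemma~\ref{l:slices-of-multi-ind-Ckd} together with Lemma~\ref{l:card-slices-of-Ckd}, and the identity~\eqref{e:prop-binom-coef-4} to evaluate the resulting sum, with the symmetry~\eqref{e:prop-binom-coef-2} and Definition~\ref{d:binom-coef} handling the last two equalities. The only cosmetic difference is that you use the vertical slices~$\calSckdi$ where the paper uses the horizontal ones~$\calStkdi$, which is immaterial since both have the same cardinal.
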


\begin{proof}
  From
  Lemma~\threfc{l:prop-binom-coef}{\eqref{e:prop-binom-coef-2}}, and
  Definition~\thref{d:binom-coef},
  we have the last two equalities.

  \proofparskip{First equality}
  For all~$d\in\matNstar$, let
  $\PropPP(d)\eqdef[\card(\calCkd)=\binom{k+d-1}{d-1}]$.\\
  \proofpar{Induction:~$\PropPP(1)$}
  Direct consequence of
  Lemma~\threfc{l:first-Ckd}{\eqref{e:first-multi-ind-Ckd-1}}, and
  Lemma~\threfc{l:prop-binom-coef}{\eqref{e:prop-binom-coef-0}}.\\
  \proofpar{Induction: $\PropPP(d)\Implies \PropPP(d+1)$}\\
  Assume that~$\PropPP(d)$ holds.
  Then, from
  Lemma~\thref{l:slices-of-multi-ind-Ckd},
  Lemma~\thref{l:card-slices-of-Ckd}, and
  Lemma~\threfc{l:prop-binom-coef}{\eqref{e:prop-binom-coef-4}},
  we have
  \begin{align*}
    \card(\calCkdpi)
    &= \sum_{i = 0}^k \card (\calStkdpii)
    = \sum_{i = 0}^k \card (\calCd{k - i})\\
    &= \sum_{i = 0}^k \binom{k - i + d - 1}{d - 1}
    = \sum_{j = 0}^k \binom{j + d - 1}{d - 1}
    = \binom{k + d}{d}.
  \end{align*}

  \medskip\noindent
  This concludes the induction on~$d$, and we have for all $d\geq1$,
  $\PropPP(d)$.
\end{proof}

\subsection{Layers of $\calAkd$ and its cardinal}
\label{ss:layers-Akd-cardinal}

\begin{lemma}[$\calCkd$ are layers of $\calAkd$]
  \label{l:Ckd-layers-Akd}
  \mbox{}\hfill
  Let~$d\geq1$.
  Let~$k\in\matN$.
  Then, we have
  \begin{equation}
    \label{e:Ckd-layers-Akd}
    \calAod = \calCod
    \AND
    \calAkpid = \calAkd \uplus \calCkpid,
  \end{equation}
  thus, $\calAkd=\biguplus_{l=0}^k\calCld$ and the sequence
  $(\calAkd)_{k\in\matN}$ is increasing.
\end{lemma}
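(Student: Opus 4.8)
The plan is to reduce everything to two elementary facts about the order on $\matN$: first, that $n \leq 0 \Equiv n = 0$, and second, that $n \leq k+1 \Equiv (n \leq k) \Disj (n = k+1)$, with the two alternatives being mutually exclusive. Both equalities in the statement then follow by rewriting the defining conditions $\len{\aalpha} \leq k$ and $\len{\aalpha} = k$ of Definition~\thref{d:multi-ind-Akd-Ckd} on multi-index sets.

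First I would establish $\calAod = \calCod$. By Definition~\thref{d:multi-ind-Akd-Ckd}, we have $\calAod = \{\aalpha \in \matNd \st \len{\aalpha} \leq 0\}$ and $\calCod = \{\aalpha \in \matNd \st \len{\aalpha} = 0\}$; since $\len{\aalpha} \in \matN$, the conditions $\len{\aalpha} \leq 0$ and $\len{\aalpha} = 0$ are equivalent (\assume{order properties of~$\matN$}), so the two sets coincide.

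Next I would prove the recurrence $\calAkpid = \calAkd \uplus \calCkpid$. For the underlying set equality, a multi-index $\aalpha \in \matNd$ satisfies $\len{\aalpha} \leq k+1$ iff $\len{\aalpha} \leq k$ or $\len{\aalpha} = k+1$, which is exactly membership in $\calAkd$ or in $\calCkpid$. For the disjointness implicit in the symbol~$\uplus$, I would note that no multi-index can satisfy both $\len{\aalpha} \leq k$ and $\len{\aalpha} = k+1$, since $k+1 > k$; hence $\calAkd \cap \calCkpid = \emptyset$.

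Finally, the decomposition $\calAkd = \biguplus_{l=0}^k \calCld$ follows by a straightforward induction on~$k$: the base case $k = 0$ is the equality $\calAod = \calCod$, and the inductive step substitutes $\calAkd = \biguplus_{l=0}^k \calCld$ into the recurrence $\calAkpid = \calAkd \uplus \calCkpid$ to obtain $\biguplus_{l=0}^{k+1} \calCld$, where all unions remain disjoint because the $\calCld$ have pairwise distinct lengths. Monotonicity of $(\calAkd)_{k \in \matN}$ is then immediate, as the recurrence yields $\calAkd \subset \calAkd \uplus \calCkpid = \calAkpid$. There is no genuine obstacle here; the only point requiring a little care is verifying the disjointness conditions attached to each~$\uplus$ and~$\biguplus$, rather than merely the underlying set equalities.
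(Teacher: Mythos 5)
Your proof is correct and follows essentially the same route as the paper, which simply records the result as a direct consequence of Definition~\ref{d:multi-ind-Akd-Ckd} and the definition of disjoint union; you have merely spelled out the elementary order facts on~$\matN$ and the disjointness checks that the paper leaves implicit.
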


\begin{proof}
  Direct consequence of
  Definition~\thref{d:multi-ind-Akd-Ckd}, and
  \assume{the definition of (disjoint) union}.
\end{proof}

\begin{lemma}[first multi-indices $\calAkd$]
  \label{l:first-multi-ind-Akd}
  \mbox{}\hfill
  Let~$d\geq1$.
  Let~$k\in\matN$.
  Then, we have
  \begin{align}
    \label{e:first-multi-ind-Akd-0}
    (d = 1)&&
    \calAki = [0..k],
    &\AND \card (\calAki) = k + 1,&&\\
    \label{e:first-multi-ind-Akd-1}
    (k = 0)&&
    \calAod = \{ \zzero \in \matNd \},
    &\AND \card (\calAod) = 1,&&\\
    \label{e:first-multi-ind-Akd-2}
    (k = 1)&&
    \calAid = \{ \zzero, \ee_1, \ldots, \ee_d \},
    &\AND \card (\calAid) = d + 1.&&
  \end{align}
\end{lemma}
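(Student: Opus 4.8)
The plan is to prove the three displayed lines separately, in each case reducing to facts about the length function and to the layer decomposition of $\calAkd$ already available. First I would dispose of the case $d=1$, which is immediate: Lemma~\ref{l:multi-ind-Akd-for-d-eq-1-is-Ak1} identifies $\calAkd$ at $d\eqdef1$ with $\calAki$, Definition~\ref{d:multi-ind-Ak1} gives $\calAki=[0..k]$, and Lemma~\ref{l:card-Ak1} gives $\card(\calAki)=k+1$. This settles~\eqref{e:first-multi-ind-Akd-0} with no further work.

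Next, for $k=0$, I would argue from Definition~\ref{d:len-multi-ind} that $\len{\aalpha}$ is a sum of natural numbers, so $\len{\aalpha}\leq0$ forces $\len{\aalpha}=0$ and hence every component $\alpha_i=0$, i.e. $\aalpha=\zzero$; thus $\calAod=\{\zzero\}$ with cardinal~$1$. Equivalently, and in keeping with the earlier results, I would invoke Lemma~\ref{l:Ckd-layers-Akd} to get $\calAod=\calCod$, then read off $\calCod=\{\zzero\}$ and $\card(\calCod)=1$ directly from Lemma~\ref{l:first-Ckd}, equation~\eqref{e:first-multi-ind-Ckd-3}. Either route yields~\eqref{e:first-multi-ind-Akd-1}.

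Finally, for $k=1$, I would use the layer formula of Lemma~\ref{l:Ckd-layers-Akd} with $k\eqdef0$, namely $\calAid=\calAod\uplus\calCid$. Combining $\calAod=\calCod=\{\zzero\}$ from the previous case with $\calCid=\{\ee_i\st i\in[1..d]\}$ and $\card(\calCid)=d$ from Lemma~\ref{l:first-Ckd}, equation~\eqref{e:first-multi-ind-Ckd-4}, gives $\calAid=\{\zzero,\ee_1,\ldots,\ee_d\}$, and additivity of cardinal over the disjoint union gives $\card(\calAid)=1+d=d+1$. The whole statement is a chain of direct consequences of existing definitions and lemmas, so no genuine obstacle is expected; the only point deserving a word of care is the disjointness in this last case, which holds because $\zzero\notin\calCid$ (since $\len{\zzero}=0\neq1$), so that the union is indeed disjoint and the cardinals add.
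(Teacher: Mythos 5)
Your proof is correct and follows essentially the same route as the paper, which derives all three cases directly from the layer decomposition of Lemma~\ref{l:Ckd-layers-Akd} together with the explicit descriptions of $\calCkd$ in Lemma~\ref{l:first-Ckd}. The only cosmetic difference is that for $d=1$ you route through Lemma~\ref{l:multi-ind-Akd-for-d-eq-1-is-Ak1} and Lemma~\ref{l:card-Ak1} instead of summing the singleton layers $\calC^1_l=\{l\}$, and your remark on disjointness for $k=1$ is already guaranteed by the statement of Lemma~\ref{l:Ckd-layers-Akd} itself.
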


\begin{proof}
  \mbox{}\\
  Direct consequence of
  Definition~\thref{l:Ckd-layers-Akd}, and
  Definition~\thref{l:first-Ckd}.
\end{proof}

\begin{lemma}[cardinal of $\calAkd$]
  \label{l:card-Akd}
  \mbox{}\hfill
  Let~$d\geq1$.
  Let~$k\in\matN$.
  Then, the number of elements of~$\calAkd$ is
  \begin{equation}
    \label{e:card-Akd}
    \card(\calAkd)
    = \binom{k + d}{d} = \binom{k + d}{k}
    = \frac{\fact{(k + d)}}{\fact{k} \; \fact{d}}.
  \end{equation}
  By extension, for $d=0$, we also set
  $\card(\calAko)\eqdef\binom{k}{0}=\binom{k}{k}=1$.
\end{lemma}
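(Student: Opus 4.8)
The plan is to reduce the count of $\calAkd$ to a sum of the already-known cardinals of the layers $\calCld$, and then to collapse that sum with the hockey-stick identity for binomial coefficients. First I would dispatch the last two equalities, which are purely statements about binomial coefficients: since $d \leq k + d$, Lemma~\thref{l:prop-binom-coef} (specifically~\eqref{e:prop-binom-coef-2} with $n \eqdef k + d$ and $p \eqdef d$, so that $n - p = k$) gives $\binom{k + d}{d} = \binom{k + d}{k}$, and the factorial formula of Definition~\thref{d:binom-coef} (valid here because $d \leq k + d$) gives the explicit value $\frac{\fact{(k + d)}}{\fact{k}\,\fact{d}}$.

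For the main (first) equality, I would start from the layer decomposition $\calAkd = \biguplus_{l = 0}^k \calCld$ furnished by Lemma~\thref{l:Ckd-layers-Akd}. As this is a disjoint union, the additivity of cardinality over finite disjoint unions (used in the same spirit in the proof of Lemma~\thref{l:card-Ckd}) yields
\[
  \card(\calAkd) = \sum_{l = 0}^k \card(\calCld).
\]
I would then substitute the value of each layer cardinal from Lemma~\thref{l:card-Ckd}, namely $\card(\calCld) = \binom{l + d - 1}{d - 1}$, to obtain $\card(\calAkd) = \sum_{l = 0}^k \binom{l + d - 1}{d - 1}$.

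Finally I would apply the hockey-stick identity~\eqref{e:prop-binom-coef-4} of Lemma~\thref{l:prop-binom-coef} with $n \eqdef k$ and $p \eqdef d$; its hypothesis $p \geq 1$ is met because $d \geq 1$. This collapses the sum to $\binom{k + d}{d}$, which establishes the first equality and hence the whole chain. The $d = 0$ clause is not part of the proof proper but a labelling convention recorded in the statement: one simply notes $\card(\calAko) \eqdef \binom{k}{0} = \binom{k}{k} = 1$, consistent with Definition~\thref{d:binom-coef} and~\eqref{e:prop-binom-coef-0}.

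I expect no genuine obstacle: the result is an immediate corollary of the layer structure of $\calAkd$ and the previously computed cardinal of $\calCkd$. The only points requiring care are the side conditions on the binomial-coefficient lemmas — namely $p \leq n$ for the symmetry~\eqref{e:prop-binom-coef-2} and $p \geq 1$ for the hockey-stick~\eqref{e:prop-binom-coef-4} — both of which hold trivially under the standing hypotheses $d \geq 1$ and $k \in \matN$. An alternative route, mirroring the induction on~$d$ carried out for Lemma~\thref{l:card-Ckd} via the slice decomposition, would also succeed, but the layer-sum argument is shorter because it reuses $\card(\calCld)$ directly rather than re-deriving it.
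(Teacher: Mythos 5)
Your proof is correct and follows exactly the paper's route: the layer decomposition $\calAkd=\biguplus_{l=0}^k\calCld$ from Lemma~\ref{l:Ckd-layers-Akd}, the value $\card(\calCld)=\binom{l+d-1}{d-1}$ from Lemma~\ref{l:card-Ckd}, the hockey-stick identity~\eqref{e:prop-binom-coef-4}, and~\eqref{e:prop-binom-coef-2} and~\eqref{e:prop-binom-coef-0} for the remaining equalities and the $d=0$ convention. Nothing to add.
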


\begin{proof}
  Direct consequence of
  Lemma~\threfc{l:Ckd-layers-Akd}{%
    thus, $\card(\calAkd)=\sum_{l=0}^k\card(\calCld)$},
  Lemma~\thref{l:card-Ckd},
  Lemma~\threfc{l:prop-binom-coef}{\eqref{e:prop-binom-coef-4}}, and
  Lemma~\threfc{l:prop-binom-coef}{%
    \eqref{e:prop-binom-coef-2}, \eqref{e:prop-binom-coef-0},
    thus last two equalities hold,
    and with $d=0$ too}.
\end{proof}

\subsection{Other cardinals}
\label{ss:other-cardinals}

\begin{remark}
  In Figure~\ref{f:fkdi_d32}, we plot examples of the mappings~$\fkdo$
  and~$\fkd{1}$, defined over multi-index sets in Lemmas~\ref{l:card-Ckd-Akdm1}
  and~\ref{l:card-Akdi-Akdm1}.

  These functions are used for the transfer of nodes from a $(d-1)$-simplex to
  the hyperface of a $d$-simplex, see Lemmas~\ref{l:face-hyperpl-lag-nodes-Pkd}
  and~\ref{l:im-nodes-by-geo-hyperface-mapping}.
  The transfer is made via the geometric applications~$\phindv{\thetinjdmi{i}}$
  (for $i\in[0..d]$), that map an hyperface to a simplex, see
  Lemma~\ref{l:geo-hyperface-mapping}, and Figure~\ref{f:geo-hyper-map_d32}.
\end{remark}

\begin{lemma}[cardinal of $\calCkd$ and $\calAkdmi$]
  \label{l:card-Ckd-Akdm1}
  \mbox{}\hfill
  Let~$d\geq1$.
  Let~$k\in\matN$.\\
  Let $\fkdo\eqdef
  (\caalpha\in\calAkdmi\longmapsto(k-\len{\caalpha},\caalpha)\in\calCkd)$, and
  $\tfkdo\eqdef
  (\taalpha\in\calAkdmi\longmapsto(\taalpha,k-\len{\taalpha})\in\calCkd)$.\\
  Then, $\fkdo$ and~$\tfkdo$ are bijections, and we have
  $\card(\calCkd)=\card(\calAkdmi)$.
\end{lemma}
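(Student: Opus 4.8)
The plan is to prove that each of the two maps is a bijection by exhibiting its explicit inverse, namely the map that simply forgets the newly inserted coordinate; bijectivity of $\fkdo$ and $\tfkdo$ then immediately yields $\card(\calCkd) = \card(\calAkdmi)$, since a bijection preserves cardinality. I would treat $\fkdo$ in detail and then observe that $\tfkdo$ is handled symmetrically, the roles of the first and last component being exchanged.

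For $\fkdo$, first I would check that it is well defined with values in $\calCkd$. Given $\caalpha \in \calAkdmi$, Definition~\ref{d:multi-ind-Akd-Ckd} gives $\len{\caalpha} \leq k$, so $k - \len{\caalpha}$ is a genuine natural number (no truncation via the $n - p \eqdef 0$ convention is needed here), and, using the tilde/check decomposition of Remark~\ref{r:tilde-and-check-notations} together with Definition~\ref{d:len-multi-ind} and the monoid properties of $\matN$, the length of $\fkdo(\caalpha) = (k - \len{\caalpha}, \caalpha)$ equals $(k - \len{\caalpha}) + \len{\caalpha} = k$; hence $\fkdo(\caalpha) \in \calCkd$. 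Next I would introduce the candidate inverse $g \colon \calCkd \to \calAkdmi$ sending $\aalpha = (\alpha_1, \caalpha)$ to its last $d-1$ components $\caalpha$; this is well defined because, for $\aalpha \in \calCkd$, one has $\len{\caalpha} = \len{\aalpha} - \alpha_1 = k - \alpha_1 \leq k$ (here Lemma~\ref{l:ind-smaller-than-max-len} guarantees $\alpha_1 \leq k$, so the subtraction is again honest). Finally I would verify that the two composites are identities: $g(\fkdo(\caalpha)) = \caalpha$ directly, and $\fkdo(g(\aalpha)) = (k - \len{\caalpha}, \caalpha) = (\alpha_1, \caalpha) = \aalpha$, using $k - \len{\caalpha} = \alpha_1$, which follows from $\len{\aalpha} = k$. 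This establishes that $\fkdo$ is bijective.

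The argument for $\tfkdo$ is identical after swapping the first and last coordinates: its inverse is the map $\aalpha = (\taalpha, \alpha_d) \mapsto \taalpha$, and the same three checks (values in $\calCkd$, values in $\calAkdmi$, and the two composites) go through verbatim. Having both bijections in hand, $\card(\calCkd) = \card(\calAkdmi)$ follows at once. I would also remark that the degenerate case $d = 1$ is covered without change: then $\calAkdmi = \calAko$ is the singleton of empty multi-indices (with $\card(\calAko) = 1$ by the convention recalled in Lemma~\ref{l:card-Akd}), $\calCki = \{k\}$, and both maps send the empty multi-index to $(k)$.

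There is no real obstacle here; the statement is essentially bookkeeping. The only points requiring care are purely notational and conventional: keeping the tilde/check decomposition consistent so that $(k - \len{\caalpha}, \caalpha)$ and $(\alpha_1, \caalpha)$ are compared coordinatewise, and checking that every subtraction $k - \len{\caalpha}$ actually lands in $\matN$ rather than being clamped to $0$ — which is precisely why the domain is $\calAkdmi$ (length $\leq k$) and not all of $\matN^{d-1}$. Once these are pinned down, injectivity, surjectivity, and the cardinality equality are all immediate.
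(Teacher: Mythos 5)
Your proof is correct, but it runs in the opposite logical direction from the paper's. The paper first obtains $\card(\calCkd)=\card(\calAkdmi)$ as an immediate consequence of the already-established binomial formulas (Lemmas~\ref{l:card-Ckd} and~\ref{l:card-Akd}, both giving $\binom{k+d-1}{d-1}$), then checks well-definedness, observes that $\fkdo$ and $\tfkdo$ are obviously injective, and concludes bijectivity from the fact that an injection between finite sets of equal cardinality is a bijection. You instead exhibit the explicit two-sided inverse (the map forgetting the inserted first, resp.\ last, component), verify both composites are identities, and only then deduce the cardinality identity from the bijection. Both arguments are sound; your well-definedness checks, the identity $k-\len{\caalpha}=\alpha_1$ for $\aalpha\in\calCkd$, and the treatment of the $d=1$ case via $\card(\calAko)=1$ are all in order. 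What the paper's route buys is brevity, since the counting lemmas are available and only injectivity needs checking; what your route buys is independence from the binomial-coefficient machinery (your argument would prove $\card(\calCkd)=\card(\calAkdmi)$ even before those cardinals are computed, in the spirit of the alternative induction mentioned in Remark~\ref{r:isom-dim-Pkd}) and a constructive inverse, which is the more convenient form for a formalization where one actually wants to compute preimages.
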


\begin{proof}
  \proofparskip{Identity on the cardinals}\\
  Direct consequence of
  Lemma~\thref{l:card-Ckd}, and
  Lemma~\thref{l:card-Akd}.

  \proofparskip{Well-defined applications}
  Let~$\caalpha,\taalpha\in\calAkdmi$.
  Then, from
  Definition~\thref{d:multi-ind-Akd-Ckd}, and
  Definition~\thref{d:len-multi-ind},
  we have
  $k-\len{\caalpha},k-\len{\taalpha}\geq0$, and
  $\len{\fkdo(\caalpha)}=(k-\len{\caalpha})+\len{\caalpha}=k
  =\len{\taalpha}+(k-\len{\taalpha})=\len{\tfkdo(\taalpha)}$.
  Thus, $\fkdo$ and~$\tfkdo$ are well defined.

  \proofparskip{Bijections}
  Direct consequence of
  the definition of~$\fkdo$ and~$\tfkdo$ (obviously injective), and
  \assume{the fact that injectivity and cardinal equality imply bijectivity}.
\end{proof}

\begin{figure}[htb]
  \centering
  \resizebox{0.8\linewidth}{!}{
    \input{figtikz_fem_TriaToTetx2_k3}
  }
  \caption[Mappings~$\fkdo$ and~$\fkd{1}$]{%
    Mappings~$\fkdo$ and~$\fkd{1}$ in the case $d=k=3$ (see
    Lemmas~\ref{l:card-Ckd-Akdm1} and~\ref{l:card-Akdi-Akdm1}).\\
    The multi-indices in~$\calAiiidii$ are mapped to multi-indices
    of~$\calCiiidiii$ (for~$\fkdo$) or to multi-indices of~$\calAiiidiiione$
    (for~$\fkd{1}$).
    This is illustrated geometrically with the representation of the nodes in
    the triangle or the tetrahedra.\\
    For~$\fkdo$, the reference triangle nodes are mapped onto the nodes of the
    blue face of the tetrahedron.
    This face, opposite vertex~$\hvv_0$, contains the nodes having indices
    in~$\calCkd$.
    The coloring of the nodes is intended to help see the mapping:
    for all $(i,j)\in\calAiidiii$, we have $\fkdo(i,j)=(3-(i+j),i,j)$.\\
    For~$\fkd{1}$, the reference triangle nodes are mapped onto the
    nodes of the magenta face of the tetrahedron.
    This face, opposite vertex~$\hvv_1$, contains the nodes having indices
    in~$\calAkdone$.
    For all $(i,j)\in\calAiidiii$, we have $\fkd{1}(i,j)=(0,i,j)$.}
  \label{f:fkdi_d32}
\end{figure}

\begin{lemma}[cardinal of $\calAkdi$ and $\calAkdmi$]
  \label{l:card-Akdi-Akdm1}
  \mbox{}\hfill
  Let~$d\geq1$.
  Let~$k\in\matN$.
  Let~$i\in[1..d]$.\\
  Let $\fkdi\eqdef(\aalphap\in\calAkdmi\longmapsto
  (\alphap_1,\dots,\alphap_{i-1},0,\alphap_i,\dots,\alphap_{d-1})\in\calAkdi)$.\\
  Then, $\fkdi$ is a bijection that preserves length, and we have
  $\card(\calAkdi)=\card(\calAkdmi)$.
\end{lemma}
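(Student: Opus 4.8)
The plan is to prove directly that $\fkdi$ is a bijection, its inverse being the ``deletion'' map that removes the $i$-th coordinate, and then to read off the cardinal identity. Contrary to Lemma~\ref{l:card-Ckd-Akdm1}, where bijectivity followed from injectivity together with an \emph{a priori} cardinal equality supplied by the independent formulas of Lemmas~\ref{l:card-Ckd} and~\ref{l:card-Akd}, here there is no independent count of $\calAkdi$ available, so surjectivity has to be established by hand.

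First I would check well-definedness and length preservation, which amount to the same computation. By construction the $i$-th component of $\fkdi(\aalphap)$ is $0$, so from Definition~\ref{d:len-multi-ind} inserting it leaves the sum of components unchanged, whence $\len{\fkdi(\aalphap)}=\len{\aalphap}$. Since $\aalphap\in\calAkdmi$ gives $\len{\aalphap}\leq k$ by Definition~\ref{d:multi-ind-Akd-Ckd}, we obtain $\len{\fkdi(\aalphap)}\leq k$, hence $\fkdi(\aalphap)\in\calAkd$; as its $i$-th component vanishes, $\fkdi(\aalphap)\in\calAkdi$ by~\eqref{e:multi-ind-Akd-Ckd-3}. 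This settles at once that $\fkdi$ is well defined and that it preserves length.

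Next I would establish bijectivity. Injectivity is immediate: if $\fkdi(\aalphap)=\fkdi(\bbetap)$, then comparing the components in positions other than $i$ yields $\aalphap=\bbetap$. For surjectivity, given $\aalpha\in\calAkdi$, so that $\alpha_i=0$ and $\len{\aalpha}\leq k$, I would set $\aalphap\eqdef(\alpha_1,\dots,\alpha_{i-1},\alpha_{i+1},\dots,\alpha_d)\in\matN^{d-1}$; since $\alpha_i=0$, deleting it does not change the sum, so $\len{\aalphap}=\len{\aalpha}\leq k$, that is $\aalphap\in\calAkdmi$, and by construction $\fkdi(\aalphap)=\aalpha$. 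Hence $\fkdi$ is a bijection from $\calAkdmi$ onto $\calAkdi$, and $\card(\calAkdi)=\card(\calAkdmi)$ follows from the mere existence of this bijection.

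There is no real obstacle here; the only point requiring care is the index bookkeeping around position $i$, the boundary cases $i=1$ and $i=d$ being covered uniformly by the insertion/deletion description. When $d=1$ the source set is the degenerate $\calAko$, whose single empty multi-index is sent to the zero multi-index $(0)\in\calA^1_{k,1}$, in agreement with the extended convention of Lemma~\ref{l:card-Akd}; the verification that insertion and deletion are mutually inverse is otherwise routine.
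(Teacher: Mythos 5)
Your proof is correct and follows essentially the same route as the paper's: well-definedness and length preservation in one computation, injectivity from the definition, and surjectivity via the explicit deletion map $\aalpha\mapsto(\alpha_1,\dots,\alpha_{i-1},\alpha_{i+1},\dots,\alpha_d)$, with the cardinal identity read off from the bijection. Your side remarks (the contrast with Lemma~\ref{l:card-Ckd-Akdm1} and the degenerate case $d=1$) are accurate but not needed.
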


\begin{proof}
  \proofparskip{(1) Well-defined application and preservation of length}\\
  Let $\aalphap\in\calAkdmi$.
  Then, from
  Definition~\thref{d:len-multi-ind}, and
  Definition~\thref{d:multi-ind-Akd-Ckd},
  we have
  $\len{\fkdi(\aalphap)}=\len{\aalphap}\leq k$, and
  $\fkdi(\aalphap)\in\calAkdi$.

  \proofparskip{(2) Injectivity}
  Direct consequence of
  the definition of~$\fkdi$.

  \proofparskip{(3) Surjectivity}\\
  Let~$\aalpha\in\calAkdi$.
  Let~$\aalphap\eqdef
  (\alpha_1,\ldots,\alpha_{i-1},\alpha_{i+1},\ldots,\alpha_d)\in\matN^{d-1}$.
  Then, from~(1), and
  Definition~\thref{d:multi-ind-Akd-Ckd},
  we have $\fkdi(\aalphap)=\aalpha$ and
  $\len{\aalphap}=\len{\aalpha}\leq k$.
  Thus, $\aalphap\in\calAkdmi$.

  Therefore, from~(2), (3),
  \assume{the definition of bijectivity}, and
  \assume{the definition of cardinal},
  $\fkdi$~is bijective, and we have the equality of cardinals.
\end{proof}

\section{Multivariate polynomials}
\label{s:multivar-pol}

\begin{remark}
  See also univariate polynomials in Section~\ref{s:compl-univar-pol}.

  In the following statements, $\matPkd$ is defined as the space of polynomials
  of total degree at most~$k$.
  Note that the notion of degree is fully defined \emph{afterwards}.
\end{remark}

\subsection{Monomials and polynomials of $\matPkd$}
\label{ss:monom-Pkd}

\begin{definition}[monomial in $d$ variables]
  \label{d:monom-kd}
  \mbox{}\hfill
  Let~$d\geq1$.
  Let~$k\in\matN$.
  Let~$\aalpha\in\calCkd$.\\
  The {\em monomial of degree~$k$ in~$d$ variables of
    multi-exponent~$\aalpha$} is denoted~$\XX^\aalpha$, and is defined by
  \begin{equation}
    \label{e:monom-kd}
    \XX^\aalpha \eqdef
    \left(
      \xx \in \matRd \longmapsto \prod_{i = 1}^d x_i^{\alpha_i} \in \matR
    \right).
  \end{equation}

  Moreover, 1~is a shortcut for~$\XX^\zzero$ (the constant function of value~1,
  which may be omitted in a multiplicative context), and for all $i\in[1..d]$,
  $X_i$ is a shortcut for~$\XX^{\ee_i}=(\xx\mapsto x_i)$.
\end{definition}

\begin{lemma}[monomial in $d$ variables for $d=1$ is monomial of a single
  variable]
  \label{l:monom-kd-for-d-eq-1-is-monom-k1}
  \mbox{}\\
  Let~$k\in\matN$.
  Then, $\XX^\aalpha$ for $\aalpha\in\calCkd$ and $d\eqdef1$, and~$X^k$
  from Definition~\ref{d:monom-k1} coincide.
\end{lemma}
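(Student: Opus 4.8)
The plan is to unfold both definitions under the specialization $d\eqdef1$ and observe that they collapse to the same single-variable expression. First I would invoke Lemma~\ref{l:first-Ckd} (equation~\eqref{e:first-multi-ind-Ckd-1}), which gives $\calCki=\{k\}$; hence the unique multi-index $\aalpha\in\calCki$ is the one-component family with $\alpha_1=k$. This already pins down the exponent appearing in $\XX^\aalpha$ as the integer $k$ that parametrizes $X^k$ in Definition~\ref{d:monom-k1}.

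Next I would expand Definition~\ref{d:monom-kd} in the case $d\eqdef1$: the defining product $\prod_{i=1}^d x_i^{\alpha_i}$ reduces to a single factor, so $\XX^\aalpha(\xx)=x_1^{\alpha_1}=x_1^k$ for $\xx\in\matR^1$. Comparing this with Definition~\ref{d:monom-k1}, where $X^k(x)=x^k$ for $x\in\matR$, the two functions have the same rule once a length-one family is read as a scalar. The proof is therefore essentially a chain of direct consequences of Lemma~\ref{l:first-Ckd}, Definition~\ref{d:monom-kd}, and Definition~\ref{d:monom-k1}, in that order.

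The only subtlety---hardly a genuine obstacle---is the identification of $\matRd$ for $d\eqdef1$ with $\matR$, that is, treating the singleton family $\xx=(x_1)$ as the scalar $x_1$ so that the domains of the two monomial functions agree. This is exactly the cartesian-product assimilation already adopted in the Notations chapter (where $E^{d+1}$, $E^d\times E$, and $E\times E^d$ are assimilated), and with that convention in force the equality of the two functions is immediate. I expect no computation to be needed beyond reading off $\prod_{i=1}^1$ as a single term.
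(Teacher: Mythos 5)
Your proof is correct and follows exactly the same route as the paper, which likewise cites Definition~\ref{d:monom-kd} (with $d\eqdef1$), Lemma~\ref{l:first-Ckd}, and Definition~\ref{d:monom-k1} (with $\alpha\eqdef k$) as a direct consequence. Your additional remark about identifying the one-component family $(x_1)\in\matR^1$ with the scalar $x_1$ is a reasonable elaboration of what the paper leaves implicit.
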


\begin{proof}
  \mbox{}\\
  Direct consequence of
  Definition~\threfc{d:monom-kd}{%
    \eqref{e:monom-kd} with $d\eqdef1$},
  Lemma~\threfc{l:first-Ckd}{with $d\eqdef1$}, and
  Definition~\threfc{d:monom-k1}{with $\alpha\eqdef k$}.
\end{proof}

\begin{definition}[polynomial space $\matPkd$]
  \label{d:pol-space-Pkd}
  \mbox{}\hfill
  Let~$d\geq1$.
  Let~$k\in\matN$.\\
  The {\em space of polynomials of degree at most~$k$ of~$d$ variables} is
  denoted~$\matPkd$, and is defined by
  \begin{equation}
    \label{e:pol-space-Pkd}
    \matPkd \eqdef \Span{\XX^\aalpha}_{\aalpha \in \calAkd}
    = \left\{
      \left( \xx \longmapsto
        \sum_{\aalpha \in \calAkd}
        a_{\aalpha} x_1^{\alpha_1} \ldots x_d^{\alpha_d}
      \right) : \ArRdR
    \rightst
    \left.
      \vphantom{\sum_{\aalpha \in \calAkd}
        a_\aalpha x_1^{\alpha_1} \ldots x_d^{\alpha_d}}
      (a_\aalpha)_{\aalpha \in \calAkd} \in \matR
    \right\}.
  \end{equation}
\end{definition}

\begin{lemma}[polynomial space $\matPkd$ for $d=1$ is $\matPki$]
  \label{l:pol-space-Pkd-for-d-eq-1-is-Pk1}
  \mbox{}\\
  Let~$k\in\matN$.
  Then, $\matPkd$ for $d\eqdef1$, and~$\matPki$ from
  Definition~\ref{d:pol-space-Pk1} coincide.
\end{lemma}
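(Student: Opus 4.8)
The plan is to unfold both definitions in the case $d \eqdef 1$ and observe that they produce the span of one and the same family of functions.

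First I would reconcile the index sets. By Lemma~\ref{l:multi-ind-Akd-for-d-eq-1-is-Ak1}, the set $\calAkd$ for $d \eqdef 1$ coincides with $\calAki = [0..k]$, so Definition~\ref{d:pol-space-Pkd} specialized to $d \eqdef 1$ reads $\matPkd = \Span{\XX^\aalpha}_{\aalpha \in [0..k]}$.

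Next I would identify the generators one by one. For each $\alpha \in [0..k]$, Definition~\ref{d:monom-kd} with $d \eqdef 1$ gives $\XX^\alpha = \left( x \in \matR \longmapsto \prod_{i=1}^1 x_i^{\alpha_i} \right) = (x \mapsto x^\alpha)$, which is exactly the univariate monomial $X^\alpha$ of Definition~\ref{d:monom-k1}. Hence the generating family $(\XX^\alpha)_{\alpha \in [0..k]}$ of $\matPkd$ agrees term by term with the family $(1, X, X^2, \ldots, X^k) = (X^\alpha)_{\alpha \in [0..k]}$ appearing in Definition~\ref{d:pol-space-Pk1} of $\matPki$. Since the two spaces are spans taken over identical families, they coincide, which is the claim.

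There is no genuine obstacle here: the statement is pure bookkeeping connecting the two notational conventions. The only point worth a moment's care is that Lemma~\ref{l:monom-kd-for-d-eq-1-is-monom-k1} records the monomial identification only for $\aalpha \in \calCkd$ (length exactly $k$), whereas I need it for every $\alpha \in \calAki = [0..k]$. I would sidestep this either by applying that lemma once per length class, using $\calAki = \biguplus_{l=0}^k \calC^1_l$ from Lemma~\ref{l:Ckd-layers-Akd}, or, more directly, by reading the identity $\XX^\alpha = X^\alpha$ off Definition~\ref{d:monom-kd} as above, which I find the cleaner route.
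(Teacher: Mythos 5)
Your proposal is correct and matches the paper's own proof, which likewise just unfolds Definition~\ref{d:pol-space-Pkd} with $d\eqdef1$, identifies the index set via the fact that $\calAki=[0..k]$ (the paper cites Lemma~\ref{l:first-multi-ind-Akd} where you cite Lemma~\ref{l:multi-ind-Akd-for-d-eq-1-is-Ak1}, but these carry the same content here), and compares with Definition~\ref{d:pol-space-Pk1}. Your ``cleaner route'' of reading $\XX^\alpha=X^\alpha$ directly off the explicit sum formula is exactly what the paper does implicitly, so there is nothing to add.
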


\begin{proof}
  Direct consequence of
  Definition~\threfc{d:pol-space-Pkd}{\eqref{e:pol-space-Pkd} with $d\eqdef1$},
  Lem\-ma~\threfc{l:first-multi-ind-Akd}{with $d\eqdef1$}, and
  Definition~\thref{d:pol-space-Pk1}.
\end{proof}

\begin{lemma}[$\matPkd$ is {\vectorspace}]
  \label{l:Pkd-sp}
  \mbox{}\hfill
  Let~$d\geq1$.
  Let~$k\in\matN$.
  Then, $\matPkd$ is a {\vectorspace}.
\end{lemma}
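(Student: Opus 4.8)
The plan is to read the result directly off the definition of $\matPkd$. By Definition~\ref{d:pol-space-Pkd}, we have $\matPkd=\Span{\XX^\aalpha}_{\aalpha\in\calAkd}$, {\ie} $\matPkd$ is the linear span of the (finite) family of monomials $(\XX^\aalpha)_{\aalpha\in\calAkd}$, each of which is, by Definition~\thref{d:monom-kd}, a function from~$\matRd$ to~$\matR$, hence an element of the function space~$\FRdR$.

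First I would recall that~$\FRdR$, the set of functions from~$\matRd$ to~$\matR$ equipped with the usual pointwise linear operations, is itself a {\vectorspace}; this is the standard ``linear operations over functions'' fact, of the same kind as the one used through Lemma~\thref{LM-l:space-of-functions-to-space} in the proof of Lemma~\ref{l:space-aff-maps}. Thus the monomials~$\XX^\aalpha$ all live in a common ambient {\vectorspace}. Next I would invoke the basic fact that the linear span of any family of vectors in a {\vectorspace} is a vector subspace: by Definition~\thref{LM-d:linear-span} the span consists precisely of the (finite) linear combinations of the family, so it is trivially closed under linear combinations and is therefore a vector subspace by Lemma~\thref{LM-l:closed-under-linear-combination-is-subspace}. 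Since a vector subspace is in particular a {\vectorspace}, this immediately yields that~$\matPkd$ is a {\vectorspace}, which concludes the proof.

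There is essentially no obstacle here: the statement is a one-line ``direct consequence'' of the definition together with results already assumed known about function spaces and linear spans. The only point worth flagging is the bookkeeping of naming the correct ambient space~$\FRdR$ so that the span lemma applies verbatim; the finiteness of~$\calAkd$ (guaranteed by Lemma~\ref{l:card-Akd}) is not needed for this particular claim and will matter only later, when computing $\dim\matPkd$.
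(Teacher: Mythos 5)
Your proof is correct and takes the same route as the paper: the paper's proof is a one-line appeal to Definition~\ref{d:pol-space-Pkd} together with the (assumed-known) fact that a linear span is a vector subspace. Your additional bookkeeping about the ambient function space~$\FRdR$ is a harmless elaboration of the same argument.
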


\begin{proof}
  Direct consequence of
  Definition~\thref{d:pol-space-Pkd}, and
  \assume{the definition of the linear span}.
\end{proof}

\begin{lemma}[$\matPkd$ is nondecreasing sequence in $k$]
  \label{l:Pkd-nondecr-k}
  \mbox{}\\
  Let~$d\geq1$.
  Then, the sequence $(\matPkd)_{k\in\matN}$ is nondecreasing for the inclusion.
\end{lemma}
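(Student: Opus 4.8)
The plan is to prove that $\matPkd\subset\matPkpid$ for all $k\in\matN$, which is exactly the statement that the sequence $(\matPkd)_{k\in\matN}$ is nondecreasing for the inclusion. So the whole task reduces to establishing a single one-step inclusion, uniformly in $k$.

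First, I would unfold Definition~\thref{d:pol-space-Pkd} to express both spaces as linear spans of monomials, namely $\matPkd=\Span{\XX^\aalpha}_{\aalpha\in\calAkd}$ and $\matPkpid=\Span{\XX^\aalpha}_{\aalpha\in\calAkpid}$. Thus each space is generated by the family of monomials indexed by the corresponding set of multi-indices, and comparing the two spaces amounts to comparing their index sets.

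Next, I would invoke Lemma~\thref{l:Ckd-layers-Akd}, which provides the decomposition $\calAkpid=\calAkd\uplus\calCkpid$ and states that the sequence $(\calAkd)_{k\in\matN}$ is increasing; in particular, $\calAkd\subset\calAkpid$. Hence the generating family $(\XX^\aalpha)_{\aalpha\in\calAkd}$ of $\matPkd$ is a subfamily of the generating family $(\XX^\aalpha)_{\aalpha\in\calAkpid}$ of $\matPkpid$.

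Finally, the conclusion follows from the monotonicity of the linear span under inclusion of the generating set: the span of a subfamily is contained in the span of the whole family, a basic linear algebra fact among the assumed results. There is no genuine obstacle here; the only points requiring care are citing the correct elementary property of spans rather than reproving it, and observing at the outset that ``nondecreasing for the inclusion'' is equivalent to the single inclusion $\matPkd\subset\matPkpid$ for every $k$.
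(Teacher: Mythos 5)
Your proposal is correct and follows exactly the paper's route: unfold Definition~\ref{d:pol-space-Pkd} to view each $\matPkd$ as the span of monomials indexed by $\calAkd$, use Lemma~\ref{l:Ckd-layers-Akd} to get $\calAkd\subset\calAkpid$, and conclude by monotonicity of the linear span. Nothing to add.
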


\begin{proof}
  Direct consequence of
  Definition~\thref{d:pol-space-Pkd},
  Lemma~\threfc{l:Ckd-layers-Akd}{increasing sequence}, and
  \assume{monotonicity of the linear span}.
\end{proof}

\begin{lemma}[constant and affine spaces $\matPod$ and  $\matPid$]
  \label{l:pol-space-P0d-P1d}
  \mbox{}\\
  Let~$d\geq1$.
  The spaces of polynomials of~$d$ variables of degree at most~0 and~1 are
  respectively the spaces of constant and affine maps,
  \begin{align}
    \label{e:pol-space-P0d-P1d-1}
    \matPod
    &= \Span{1}
      = \left\{ (\xx \longmapsto a_\zzero) \st a_\zzero \in \matR \right\},\\
    \label{e:pol-space-P0d-P1d-2}
    \matPid
    &= \Span{1, X_1, X_2, \ldots, X_d}\\
    \nonumber
    &= \left\{
      (\xx \longmapsto a_\zzero + a_{\ee_1} x_1 + \dots + a_{\ee_d} x_d)
      \st a_\zzero, a_{\ee_1}, \ldots, a_{\ee_d} \in \matR
    \right\}\\
    \nonumber
    &= \AffRdR.
  \end{align}
  Thus, $p\in\matPid$ iff
  there exists $a_\zzero\in\matR$ and $A\in\calM_{1,d}(\matR)$
  such that,
  for all $\xx\in\matRd$, $p(\xx)=a_\zzero+A\xx$.
\end{lemma}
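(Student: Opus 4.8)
The plan is to reduce both claims directly to the definition of $\matPkd$ as a linear span of monomials (Definition~\ref{d:pol-space-Pkd}), combined with the explicit descriptions of the index sets $\calAod$ and $\calAid$ already computed in Lemma~\ref{l:first-multi-ind-Akd}. For the constant case, I would instantiate Definition~\ref{d:pol-space-Pkd} at $k\eqdef0$ to get $\matPod=\Span{\XX^\aalpha}_{\aalpha\in\calAod}$, then invoke~\eqref{e:first-multi-ind-Akd-1} to replace $\calAod$ by the singleton $\{\zzero\}$, and finally use Definition~\ref{d:monom-kd} (the convention $\XX^\zzero=1$) to conclude $\matPod=\Span{1}$. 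Unfolding the linear span of the single function~$1$ gives exactly the set of constant maps $\{(\xx\mapsto a_\zzero)\st a_\zzero\in\matR\}$, yielding~\eqref{e:pol-space-P0d-P1d-1}.

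The affine case follows the same pattern but with one extra identification at the end. Instantiating Definition~\ref{d:pol-space-Pkd} at $k\eqdef1$ gives $\matPid=\Span{\XX^\aalpha}_{\aalpha\in\calAid}$; then~\eqref{e:first-multi-ind-Akd-2} rewrites $\calAid$ as $\{\zzero,\ee_1,\ldots,\ee_d\}$, and Definition~\ref{d:monom-kd} (with $\XX^\zzero=1$ and $\XX^{\ee_i}=X_i$) turns this into $\Span{1,X_1,\ldots,X_d}$. Expanding the linear span produces the set of maps $\xx\mapsto a_\zzero+a_{\ee_1}x_1+\cdots+a_{\ee_d}x_d$, which is the middle expression in~\eqref{e:pol-space-P0d-P1d-2}.

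To identify this set with $\AffRdR$, I would appeal to Lemma~\ref{l:equiv-def-aff-map-finite-dim} in the case $p\eqdef d$ and $q\eqdef1$: a function $f:\ArRdR$ is affine iff there exist $A\in\calM_{1,d}(\matR)$ and $\cc\in\matR$ with $f(\xx)=\cc+A\xx$ for all $\xx\in\matRd$. Taking $\cc\eqdef a_\zzero$ and $A\eqdef(a_{\ee_1},\ldots,a_{\ee_d})$, the matrix--vector product $A\xx$ expands, by the definition of matrix--vector product, to $\sum_{i=1}^d a_{\ee_i}x_i$, so the two descriptions coincide. The final sentence of the statement is then merely a reformulation of this equality in terms of the scalar $a_\zzero$ and the row matrix~$A$.

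I expect no genuine obstacle here: the proof is a chain of rewrites through Definitions~\ref{d:pol-space-Pkd} and~\ref{d:monom-kd} and Lemma~\ref{l:first-multi-ind-Akd}, with the affine identification supplied by Lemma~\ref{l:equiv-def-aff-map-finite-dim}. The only mildly delicate point is the bookkeeping that matches the coordinate sum $\sum_{i=1}^d a_{\ee_i}x_i$ with the matrix--vector product $A\xx$, which is purely notational.
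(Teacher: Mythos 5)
Your proposal is correct and follows essentially the same route as the paper, which also proves this by unfolding Definition~\ref{d:pol-space-Pkd}, substituting the explicit index sets from Lemma~\ref{l:first-multi-ind-Akd}, and identifying the result with $\AffRdR$ via Lemma~\ref{l:equiv-def-aff-map-finite-dim}. The only cosmetic difference is that you cite Definition~\ref{d:monom-kd} for the shorthands $\XX^\zzero=1$ and $\XX^{\ee_i}=X_i$ where the paper cites Definition~\ref{d:canon-fam} for the families $\zzero$ and $\ee_i$; both are part of the same notational bookkeeping.
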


\begin{proof}
  Direct consequence of
  Definition~\thref{d:pol-space-Pkd},
  Definition~\thref{d:canon-fam},
  Lemma~\thref{l:first-multi-ind-Akd},
  Lemma~\thref{l:equiv-def-aff-map-finite-dim}.
\end{proof}

\begin{definition}[degree of polynomial]
  \label{d:deg-pol}
  \mbox{}\hfill
  Let~$d\geq1$.
  Let~$k\in\matN$.
  Let~$(a_\aalpha)_{\aalpha\in\calAkd}\in\matR$.\\
  Let $p\eqdef\sum_{\aalpha\in\calAkd}a_\aalpha\XX^\aalpha\in\matPkd$.
  The {\em degree of~$p$} is denoted~$\deg p$, and is defined by
  \begin{equation}
    \label{e:deg-pol}
    \deg p \eqdef
    \max (\len{\aalpha})_{a_\aalpha \neq 0},
  \end{equation}
  with the convention that the maximum of an empty family is~$-\infty$, {\ie}
  when~$p\eqdef0$.
\end{definition}

\begin{lemma}[values of degree of polymial]
  \label{l:val-deg-pol}
  \mbox{}\\
  Let~$d\geq1$.
  Let~$k\in\matN$.
  Let~$p\in\matPkd$.
  Then, we have $\deg p\in\{-\infty\}\cup[0..k]$.
\end{lemma}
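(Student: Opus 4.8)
The plan is to unfold the two relevant definitions and split on whether $p$ is the zero polynomial. First I would invoke Definition~\ref{d:pol-space-Pkd} to write $p=\sum_{\aalpha\in\calAkd}a_\aalpha\XX^\aalpha$ for some coefficient family $(a_\aalpha)_{\aalpha\in\calAkd}\in\matR$, and recall from Definition~\ref{d:deg-pol} that $\deg p=\max(\len{\aalpha})_{a_\aalpha\neq0}$, with the maximum of an empty family set to $-\infty$ by convention. The whole claim then reduces to bounding the lengths of the indices that carry a nonzero coefficient.

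The zero case is immediate: when $p=0$, all coefficients vanish, so the family $(\len{\aalpha})_{a_\aalpha\neq0}$ is empty, and the convention gives $\deg p=-\infty\in\{-\infty\}\cup[0..k]$.

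For the nonzero case, the family $(\len{\aalpha})_{a_\aalpha\neq0}$ is nonempty, hence its maximum is attained at some $\aalpha_0\in\calAkd$ with $a_{\aalpha_0}\neq0$, and $\deg p=\len{\aalpha_0}$. Since $\aalpha_0\in\calAkd$, Definition~\ref{d:multi-ind-Akd-Ckd} yields $\len{\aalpha_0}\leq k$, and since $\len{\aalpha_0}\in\matN$ by Definition~\ref{d:len-multi-ind}, it is nonnegative. Therefore $\len{\aalpha_0}\in[0..k]$, that is, $\deg p\in[0..k]$, which together with the zero case gives $\deg p\in\{-\infty\}\cup[0..k]$.

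I do not expect any real obstacle: the argument is a direct unfolding of the definitions together with the elementary fact that every multi-index in $\calAkd$ has length in $[0..k]$. The only point deserving a word of care is that $\deg p$ is read off a chosen representation of $p$ in the span, but the bound $\len{\aalpha}\leq k$ holds for \emph{every} index occurring in $\calAkd$, so the conclusion is insensitive to that choice and no linear-independence (uniqueness of representation) result is needed here.
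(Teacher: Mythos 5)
Your proof is correct and follows the same route as the paper, which simply cites Definition~\ref{d:deg-pol}: the indices in the defining sum range over $\calAkd$, so every length appearing in the maximum lies in $[0..k]$, and the empty-family convention handles $p=0$. Your closing remark about independence from the chosen representation is a reasonable extra precaution but not needed beyond what the definition already provides.
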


\begin{proof}
  Direct consequence of
  Definition~\thref{d:deg-pol}.
\end{proof}

\begin{lemma}[monomials of $\calCkd$ have degree $k$]
  \label{l:deg-monom-Ckd-is-k}
  \mbox{}\\
  Let~$d\geq1$.
  Let~$k\in\matN$.
  Let~$\aalpha\in\calCkd$.
  Then, we have $\deg\XX^\aalpha=\len{\aalpha}=k$.
\end{lemma}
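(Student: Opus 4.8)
The plan is to treat the two equalities separately, as they have quite different flavors. The equality $\len{\aalpha} = k$ is immediate: since $\aalpha \in \calCkd$, Definition~\thref{d:multi-ind-Akd-Ckd} gives $\len{\aalpha} = k$ by the very definition of $\calCkd$ as the set of multi-indices of length exactly~$k$. No computation is needed here.

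For the remaining equality $\deg \XX^\aalpha = \len{\aalpha}$, I would first observe that $\aalpha \in \calCkd \subseteq \calAkd$, using Lemma~\thref{l:Ckd-layers-Akd}, so that $\XX^\aalpha$ genuinely belongs to $\matPkd$ and Definition~\thref{d:deg-pol} applies. The idea is then to exhibit the coefficient family of $\XX^\aalpha$ on the monomial family $(\XX^\bbeta)_{\bbeta \in \calAkd}$: setting $a_\bbeta \eqdef \kkron{\aalpha}{\bbeta}$, Lemma~\thref{l:kron-multi-ind-val} gives $\XX^\aalpha = \sum_{\bbeta \in \calAkd} a_\bbeta \XX^\bbeta$ with the single nonzero coefficient $a_\aalpha = 1$. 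Definition~\thref{d:deg-pol} then reads $\deg \XX^\aalpha = \max (\len{\bbeta})_{a_\bbeta \neq 0}$, and since the only index with $a_\bbeta \neq 0$ is $\bbeta = \aalpha$, the maximum of this singleton family is exactly $\len{\aalpha}$. Note that $\XX^\aalpha$ is nonzero (its coefficient at $\aalpha$ is~$1$), so the empty-family convention yielding $-\infty$ does not intervene.

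The main obstacle is the well-definedness of the degree: Definition~\thref{d:deg-pol} reads $\deg p$ off from a coefficient family, so I must ensure that the coefficient family of $\XX^\aalpha$ is the canonical one above, i.e. that the representation of $\XX^\aalpha$ on the monomial family is \emph{unique}. This rests on the linear independence of the monomials $(\XX^\bbeta)_{\bbeta \in \calAkd}$ in $\matPkd$ (established by computing partial derivatives and evaluating them at~$\zzero$, as announced in the introduction), which I would cite to pin down the coefficient family. Once that uniqueness is in hand, the argument above is a pure reading-off from the definitions, carrying no analytic content.
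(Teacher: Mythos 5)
Your proof is correct and follows essentially the same route as the paper, which disposes of the lemma as a direct consequence of Definitions~\ref{d:deg-pol}, \ref{d:monom-kd} and~\ref{d:multi-ind-Akd-Ckd}: the second equality is the definition of $\calCkd$, and the first is read off from Definition~\ref{d:deg-pol} applied to the representation of $\XX^\aalpha$ whose only nonzero coefficient sits at $\aalpha$.

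One caveat on your well-definedness remark. The point you raise is legitimate (Definition~\ref{d:deg-pol} defines the degree from a coefficient family, so as a function of the polynomial it implicitly relies on uniqueness of that family), but resolving it by citing the linear independence of the monomials (Lemma~\ref{l:monom-free-in-Pkd}) would create an ordering problem in this development: that lemma appears only later, and its proof rests on Lemma~\ref{l:pder-monom}, whose statement and proof in turn cite the present Lemma~\ref{l:deg-monom-Ckd-is-k}. The cycle is avoidable in substance (the freeness proof only uses the derivative formula, not the degree claim of Lemma~\ref{l:pder-monom}), but as the statements are organized here you cannot invoke freeness at this point without restructuring. The paper simply treats $\deg\XX^\aalpha$ as the degree of the canonical representation and does not raise the issue.
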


\begin{proof}
  Direct consequence of
  Definition~\thref{d:deg-pol},
  Definition~\thref{d:monom-kd},
  Definition~\thref{d:multi-ind-Akd-Ckd}.
\end{proof}

\begin{lemma}[$\matPkd$ is space of degree at most $k$]
  \label{l:deg-Pkd-leq-k}
  \mbox{}\hfill
  Let~$d\geq1$.\\
  Let~$\calP(\matRd)$ be the infinite-dimensional space of polynomials of~$d$
  variables on~$\matR$.
  Let~$k\in\matN$.\\
  Then, $\matPkd$ is the space of polynomials of degree at most $k$,
  \begin{equation}
    \label{e:deg-Pkd-leq-k}
    \matPkd = \left\{ p \in \calP(\matRd) \st \deg p \leq k \right\}.
  \end{equation}
\end{lemma}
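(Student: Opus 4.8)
The plan is to establish the claimed equality by double inclusion, playing the definition of~$\matPkd$ as the span of the monomials~$\XX^\aalpha$ for $\aalpha\in\calAkd$ (Definition~\thref{d:pol-space-Pkd}) against the definition of the degree in terms of multi-index lengths (Definition~\thref{d:deg-pol}). The pivotal observation in both directions is simply that, by Definition~\thref{d:multi-ind-Akd-Ckd}, a multi-index~$\aalpha$ belongs to~$\calAkd$ precisely when $\len{\aalpha}\leq k$.

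First I would prove $\matPkd\subseteq\{p\in\calP(\matRd)\st\deg p\leq k\}$. Any $p\in\matPkd$ is, by Definition~\thref{d:pol-space-Pkd}, a finite linear combination $p=\sum_{\aalpha\in\calAkd}a_\aalpha\XX^\aalpha$; in particular it is a polynomial, so $p\in\calP(\matRd)$. Its degree is $\deg p=\max(\len{\aalpha})_{a_\aalpha\neq0}$ by Definition~\thref{d:deg-pol}, and since every index entering this maximum lies in~$\calAkd$ and hence satisfies $\len{\aalpha}\leq k$, we obtain $\deg p\leq k$ (the case $p=0$ giving $\deg p=-\infty\leq k$); this is exactly the range recorded in Lemma~\thref{l:val-deg-pol}.

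For the reverse inclusion $\{p\in\calP(\matRd)\st\deg p\leq k\}\subseteq\matPkd$, I would take $p\in\calP(\matRd)$ with $\deg p\leq k$ and expand it in the monomial basis of~$\calP(\matRd)$, writing $p=\sum_{\bbeta}a_\bbeta\XX^\bbeta$ over finitely many $\bbeta\in\matNd$. The hypothesis $\deg p\leq k$ means, through Definition~\thref{d:deg-pol}, that $\len{\bbeta}\leq k$ for every $\bbeta$ with $a_\bbeta\neq0$, {\ie} each such~$\bbeta$ lies in~$\calAkd$. Discarding the vanishing terms indexed outside~$\calAkd$, the polynomial~$p$ is a linear combination of $(\XX^\aalpha)_{\aalpha\in\calAkd}$, hence $p\in\Span{\XX^\aalpha}_{\aalpha\in\calAkd}=\matPkd$.

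The step I expect to require the most care is this reverse inclusion, since it relies on identifying the degree defined on all of~$\calP(\matRd)$ with the multi-index formula of Definition~\thref{d:deg-pol}: this presupposes that the monomials form a basis of~$\calP(\matRd)$, so that the expansion $p=\sum_{\bbeta}a_\bbeta\XX^\bbeta$ is unique and the degree is well defined. I would therefore invoke the linear independence of the monomial family as the underlying fact; with that in hand, everything else is a direct unwinding of the definitions of~$\matPkd$, of the degree, and of~$\calAkd$.
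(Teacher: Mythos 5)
Your proposal is correct and follows essentially the same route as the paper, which simply unwinds Definition~\ref{d:pol-space-Pkd}, Definition~\ref{d:deg-pol}, and the fact that $\calAkd$ collects exactly the multi-indices of length at most~$k$ (the paper phrases this via Lemma~\ref{l:Ckd-layers-Akd}, you via Definition~\ref{d:multi-ind-Akd-Ckd}, which is the same content). Your extra caution about the monomials forming a basis of~$\calP(\matRd)$ so that the degree is well defined on all of~$\calP(\matRd)$ is a legitimate point that the paper leaves implicit among its assumed basic facts.
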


\begin{proof}
  Direct consequence of
  Definition~\thref{d:pol-space-Pkd},
  Definition~\thref{d:deg-pol}, and
  Lemma~\thref{l:Ckd-layers-Akd}.
\end{proof}

\subsection{Product of polynomials}
\label{ss:prod-polynom}

\begin{lemma}[product of monomials]
  \label{l:prod-monom}
  \mbox{}\\
  Let~$d\geq1$.
  Let~$k,l\in\matN$.
  Let~$\aalpha\in\calCkd$ and~$\bbeta\in\calCld$.
  Then, we have
  \begin{equation}
    \label{e:prod-monom}
    \XX^\aalpha \XX^\bbeta =  \XX^{\aalpha + \bbeta},
    \AND
    \deg (\XX^\aalpha \XX^\bbeta)
    = \deg \XX^\aalpha + \deg \XX^\bbeta
    = \len{\aalpha} + \len{\bbeta} = k + l.
  \end{equation}
\end{lemma}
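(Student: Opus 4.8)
The plan is to prove the functional identity first and then read off the degree formula from it, using only the definition of monomials together with the additivity of length and the already-established value of the degree of a monomial indexed by $\calCkd$. No induction is required.

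First I would establish $\XX^\aalpha\,\XX^\bbeta=\XX^{\aalpha+\bbeta}$ by checking that the two functions agree at every point. Fix $\xx\in\matRd$. By Definition~\thref{d:monom-kd}, the product evaluates to $\left(\prod_{i=1}^d x_i^{\alpha_i}\right)\left(\prod_{i=1}^d x_i^{\beta_i}\right)$. Using \assume{commutative ring properties of~$\matR$} to interleave and regroup the two finite products factor by factor, together with the exponent law $x_i^{\alpha_i}\,x_i^{\beta_i}=x_i^{\alpha_i+\beta_i}$, this equals $\prod_{i=1}^d x_i^{\alpha_i+\beta_i}$, which is precisely $\XX^{\aalpha+\bbeta}(\xx)$ by Definition~\thref{d:monom-kd} applied to the multi-index $\aalpha+\bbeta\in\matNd$. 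Since this holds for all $\xx$, the two monomials are equal as functions.

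Next I would handle the degree chain. From Lemma~\thref{l:len-multi-ind-is-add} the length is additive, so $\len{\aalpha+\bbeta}=\len{\aalpha}+\len{\bbeta}$. Because $\aalpha\in\calCkd$ and $\bbeta\in\calCld$, Definition~\thref{d:multi-ind-Akd-Ckd} gives $\len{\aalpha}=k$ and $\len{\bbeta}=l$, whence $\len{\aalpha+\bbeta}=k+l$, that is, $\aalpha+\bbeta\in\calCd{k+l}$. Applying Lemma~\thref{l:deg-monom-Ckd-is-k} to each of the three relevant monomials yields $\deg\XX^\aalpha=\len{\aalpha}=k$, $\deg\XX^\bbeta=\len{\bbeta}=l$, and $\deg\XX^{\aalpha+\bbeta}=\len{\aalpha+\bbeta}=k+l$. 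Combining the last equality with the functional identity from the previous paragraph gives $\deg(\XX^\aalpha\,\XX^\bbeta)=\deg\XX^{\aalpha+\bbeta}=k+l=\deg\XX^\aalpha+\deg\XX^\bbeta$, completing the stated chain.

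There is no genuine obstacle here: the only point demanding a little care is the bookkeeping in the first step, where the product of two $d$-fold products is rearranged into a single $d$-fold product. I would simply justify this by the associativity and commutativity of multiplication in~$\matR$ rather than spelling out the reindexing, since the factors pair up coordinate by coordinate. Everything else is a direct invocation of the cited definitions and lemmas.
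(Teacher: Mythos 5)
Your proposal is correct and follows essentially the same route as the paper: the pointwise rearrangement of the two $d$-fold products via the commutative ring properties of~$\matR$ and Definition~\ref{d:monom-kd}, then the degree chain via Lemma~\ref{l:len-multi-ind-is-add} and Lemma~\ref{l:deg-monom-Ckd-is-k}. The paper states this as a one-line ``direct consequence'' of exactly those ingredients; your write-up just makes the intermediate observation $\aalpha+\bbeta\in\calCd{k+l}$ explicit, which is a sound and welcome detail.
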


\begin{proof}
  Direct consequence of
  Definition~\thref{d:monom-kd},
  \assume{commutative ring properties of~$\matR$
    (thus $\prod_{i=1}^dx_i^{\alpha_i}\times\prod_{i=1}^dx_i^{\beta_i}=
    \prod_{i=1}^dx_i^{\alpha_i}x_i^{\beta_i}$)},
  Lemma~\thref{l:deg-monom-Ckd-is-k}, and
  Lemma~\thref{l:len-multi-ind-is-add}.
\end{proof}

\begin{lemma}[product of monomial and polynomial]
  \label{l:prod-monom-polynom}
  \mbox{}\\
  Let~$d\geq1$.
  Let~$k,l\in\matN$.
  Let~$\aalpha\in\calAkd$.
  Let~$q\in\matPld$.
  Then, we have $\XX^\aalpha\,q\in\matPkpld$.
\end{lemma}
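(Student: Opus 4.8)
The plan is to prove the statement $\XX^\aalpha\,q\in\matPkpld$ by reducing the product of a monomial with a general polynomial to a sum of products of monomials, each of which is handled by the previous lemma on the product of two monomials (Lemma~\ref{l:prod-monom}). The key observation is that $\matPkpld$ is defined as the linear span of the monomials $\XX^\ggamma$ for $\ggamma\in\calAkpld$ (Definition~\ref{d:pol-space-Pkd}), so it is a {\vectorspace} (Lemma~\ref{l:Pkd-sp}) and is in particular closed under finite linear combinations. Thus it suffices to write $q$ on the monomial basis and show that each resulting monomial product lands in $\matPkpld$.

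First I would use Definition~\ref{d:pol-space-Pkd} to write $q=\sum_{\bbeta\in\calAld}b_\bbeta\XX^\bbeta$ for some coefficients $(b_\bbeta)_{\bbeta\in\calAld}\in\matR$. Then, from the distributivity of multiplication over addition of functions (a ring/{\vectorspace} property), we obtain
\[
  \XX^\aalpha\,q = \sum_{\bbeta\in\calAld} b_\bbeta\,\XX^\aalpha\XX^\bbeta.
\]
Next I would fix $\bbeta\in\calAld$ and apply Lemma~\ref{l:prod-monom} to get $\XX^\aalpha\XX^\bbeta=\XX^{\aalpha+\bbeta}$, together with $\len{\aalpha+\bbeta}=\len{\aalpha}+\len{\bbeta}$ (which also follows directly from Lemma~\ref{l:len-multi-ind-is-add}). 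Since $\aalpha\in\calAkd$ and $\bbeta\in\calAld$, Definition~\ref{d:multi-ind-Akd-Ckd} gives $\len{\aalpha}\leq k$ and $\len{\bbeta}\leq l$, hence $\len{\aalpha+\bbeta}\leq k+l$, {\ie} $\aalpha+\bbeta\in\calAkpld$, so $\XX^{\aalpha+\bbeta}\in\matPkpld$ by Definition~\ref{d:pol-space-Pkd}.

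One technical point is that Lemma~\ref{l:prod-monom} as stated takes $\aalpha\in\calCkd$ and $\bbeta\in\calCld$ ({\ie} indices of \emph{exact} length), whereas here $\aalpha\in\calAkd$ and $\bbeta\in\calAld$ have length \emph{at most} $k$ and $l$. The clean fix is to invoke Lemma~\ref{l:prod-monom} with the exact lengths $\len{\aalpha}$ and $\len{\bbeta}$: each multi-index trivially belongs to $\calCd{\len{\aalpha}}$ and $\calCd{\len{\bbeta}}$, which is all the lemma needs, and only the identity $\XX^\aalpha\XX^\bbeta=\XX^{\aalpha+\bbeta}$ is used. I expect this to be the only real obstacle, and it is minor. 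Finally, since each term $b_\bbeta\XX^{\aalpha+\bbeta}$ lies in $\matPkpld$ and $\matPkpld$ is a {\vectorspace} (Lemma~\ref{l:Pkd-sp}), the finite sum $\XX^\aalpha\,q$ belongs to $\matPkpld$, which concludes the proof.
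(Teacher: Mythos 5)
Your proof is correct and follows essentially the same route as the paper: expand $q$ on the monomial basis, apply the product-of-monomials lemma termwise, bound $\len{\aalpha+\bbeta}\leq k+l$, and conclude by linearity of $\matPkpld$. Your remark about applying Lemma~\ref{l:prod-monom} with the exact lengths $\len{\aalpha}$ and $\len{\bbeta}$ is a valid clarification of a point the paper's proof passes over silently.
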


\begin{proof}
  From
  Definition~\thref{d:pol-space-Pkd},
  \assume{commutative ring properties of~$\matR$},
  Lem\-ma~\thref{l:prod-monom}, and
  Definition~\threfc{d:multi-ind-Akd-Ckd}{%
    thus $\len{\aalpha}\leq k$ and $\len{\bbeta}\leq l$},
  there exists~$(b_\bbeta)_{\bbeta\in\calAld}\in\matR$ such that
  \[
  \XX^\aalpha \, q
  = \XX^\aalpha \left( \sum_{\bbeta \in \calAld} b_\bbeta \XX^\bbeta \right)
  = \sum_{\bbeta \in \calAld} b_\bbeta \XX^{\aalpha + \bbeta},
  \AND
  \len{\aalpha + \bbeta} \leq k + l.
  \]
  Thus, from
  Definition~\thref{d:pol-space-Pkd},
  we have $\XX^\aalpha\,q\in\matPkpld$.
\end{proof}

\begin{lemma}[product of two polynomials]
  \label{l:prod-2-polynom}
  \mbox{}\\
  Let~$d\geq1$.
  Let~$k,l\in\matN$.
  Let~$p\in\matPkd$ and $q\in\matPld$.
  Then, we have $pq\in\matPkpld$.
\end{lemma}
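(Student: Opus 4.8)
The plan is to reduce the product of two general polynomials to the already-established case of a monomial times a polynomial, namely Lemma~\ref{l:prod-monom-polynom} (\emph{product of monomial and polynomial}), and then to invoke the fact that $\matPkpld$ is stable under finite linear combinations. First I would use Definition~\ref{d:pol-space-Pkd} (\emph{polynomial space $\matPkd$}) to expand $p$ on the monomial family: there exists $(a_\aalpha)_{\aalpha\in\calAkd}\in\matR$ such that $p=\sum_{\aalpha\in\calAkd}a_\aalpha\XX^\aalpha$. The polynomial $q\in\matPld$ is kept as it is, since its internal structure will be handled by the monomial lemma.

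Next I would distribute the product over this finite sum. Using the commutative ring properties of~$\matR$ and the definition of linear operations over functions (so that pointwise multiplication by~$q$ distributes over the finite linear combination defining~$p$), I get
\[
  pq
  = \left( \sum_{\aalpha \in \calAkd} a_\aalpha \XX^\aalpha \right) q
  = \sum_{\aalpha \in \calAkd} a_\aalpha \left( \XX^\aalpha q \right).
\]
Then, for each $\aalpha\in\calAkd$, Lemma~\ref{l:prod-monom-polynom} gives $\XX^\aalpha q\in\matPkpld$, so $pq$ is exhibited as a finite linear combination of elements of $\matPkpld$.

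Finally I would conclude by closure: since $\matPkpld$ is a {\vectorspace} by Lemma~\ref{l:Pkd-sp} ($\matPkd$ \emph{is {\vectorspace}}), equivalently since $\matPkpld=\Span{\XX^\aalpha}_{\aalpha\in\calA^d_{k+l}}$ is closed under linear combinations, the sum $\sum_{\aalpha\in\calAkd}a_\aalpha(\XX^\aalpha q)$ belongs to $\matPkpld$, which yields $pq\in\matPkpld$. There is no real obstacle here; the statement is essentially a bilinearity bookkeeping argument. The only point requiring a little care is the distribution step, that is, justifying that multiplying the finite sum defining~$p$ by~$q$ commutes term by term with the sum (an application of the algebraic properties of the function space), after which the monomial lemma and vector-space closure finish the proof immediately.
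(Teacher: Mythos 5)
Your proposal is correct and matches the paper's own proof essentially verbatim: expand $p$ on the monomial basis via Definition~\ref{d:pol-space-Pkd}, distribute $q$ over the finite sum using the ring properties of~$\matR$, apply Lemma~\ref{l:prod-monom-polynom} to each term $\XX^\aalpha q$, and conclude by the {\vectorspace} closure of $\matPkpld$ from Lemma~\ref{l:Pkd-sp}. No differences worth noting.
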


\begin{proof}
  From
  Definition~\thref{d:pol-space-Pkd}, and
  \assume{ring properties of $\matR$},
  there exists coefficients~$(a_\aalpha)_{\aalpha\in\calAkd}\in\matR$ such that
  $pq=\sum_{\aalpha\in\calAkd}a_\aalpha(\XX^\aalpha\,q)$.
  Thus, from
  Lemma~\thref{l:prod-monom-polynom}, and
  Lemma~\thref{l:Pkd-sp},
  we have $pq\in\matPkpld$.
\end{proof}

\subsection{Linear independence of monomials}
\label{ss:free-monom}

\begin{lemma}[partial derivative of monomials]
  \label{l:pder-monom}
  \mbox{}\hfill
  Let~$d\geq1$.
  Let~$k,l\in\matN$.\\
  Let $\aalpha\in\calAkd$ and $\bbeta\in\calAld$.
  Then, the partial derivative of order~$\bbeta$ of the
  monomial~$\XX^\aalpha$ is
  \begin{equation}
    \label{e:pder-monom}
    \pder^\bbeta \XX^\aalpha
    = \left\{
      \begin{array}{ll}
        \dps
        \prod_{i = 1}^d \left(
          \prod_{j = 0}^{\beta_i - 1} (\alpha_i - j)
        \right) X_i^{\alpha_i - \beta_i}
        & \mbox{when } \forall i \in [1..d],\; \beta_i \leq \alpha_i,\\
        0 & \mbox{otherwise}.
      \end{array}
    \right.
  \end{equation}
  Thus, $\pder^\bbeta\XX^\aalpha\in\matPkd$.

  Moreover, if for all $i\in[1..d]$, $\beta_i\leq\alpha_i$, then we have
  $\deg(\pder^\bbeta\XX^\aalpha)=\len{\aalpha}-\len{\bbeta}(=k-l)$.
\end{lemma}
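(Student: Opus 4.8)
The plan is to exploit the product structure of the monomial $\XX^\aalpha(\xx)=\prod_{i=1}^d x_i^{\alpha_i}$ (Definition~\ref{d:monom-kd}), in which the $i$-th factor $X_i^{\alpha_i}$ depends only on the variable $x_i$. Since the partial derivative of order $\bbeta$ is the composition $\pder^\bbeta=\pder_1^{\beta_1}\circ\cdots\circ\pder_d^{\beta_d}$, and polynomials are $\Cinf$ so that the individual partials commute and each $\pder_i$ annihilates the factors not involving $x_i$, I expect the derivative to factorize as
\[
  \pder^\bbeta \XX^\aalpha = \prod_{i=1}^d \pder_i^{\beta_i}\bigl(X_i^{\alpha_i}\bigr).
\]
The whole argument then reduces to the one-variable computation of each $\pder_i^{\beta_i}(X_i^{\alpha_i})$.

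First I would establish the univariate power rule: for $\alpha,\beta\in\matN$, using the elementary identity $\frac{d}{dx}x^m=m\,x^{m-1}$ and a short induction on $\beta$, one obtains
\[
  \frac{d^\beta}{dx^\beta}X^\alpha = \left(\prod_{j=0}^{\beta-1}(\alpha-j)\right)X^{\alpha-\beta}
\]
when $\beta\leq\alpha$, and $0$ when $\beta>\alpha$ (after $\alpha$ derivatives one reaches the constant $\fact{\alpha}$, which the next derivative kills). Substituting this factor by factor into the factorized expression above yields the stated formula in the case $\forall i,\ \beta_i\leq\alpha_i$. In the complementary case there is an index $i_0$ with $\beta_{i_0}>\alpha_{i_0}$, so the factor $\pder_{i_0}^{\beta_{i_0}}(X_{i_0}^{\alpha_{i_0}})$ vanishes and the whole product is $0$, as required.

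For the remaining assertions, in the case $\forall i,\ \beta_i\leq\alpha_i$ the result equals $c\,\XX^{\aalpha-\bbeta}$ with $\aalpha-\bbeta\in\matNd$ and scalar $c=\prod_{i=1}^d\prod_{j=0}^{\beta_i-1}(\alpha_i-j)$; each inner product runs over factors $\alpha_i,\alpha_i-1,\dots,\alpha_i-\beta_i+1\geq 1$, so $c>0$. Since $\len{\aalpha-\bbeta}=\len{\aalpha}-\len{\bbeta}\leq\len{\aalpha}\leq k$ (Definition~\ref{d:len-multi-ind} and $\aalpha\in\calAkd$), the multi-index $\aalpha-\bbeta$ lies in $\calAkd$, whence $\XX^{\aalpha-\bbeta}\in\matPkd$ (Definition~\ref{d:pol-space-Pkd}) and $\pder^\bbeta\XX^\aalpha\in\matPkd$ by Lemma~\ref{l:Pkd-sp}; in the complementary case the derivative is $0\in\matPkd$. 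The degree formula then follows from Lemma~\ref{l:deg-monom-Ckd-is-k}, which gives $\deg\XX^{\aalpha-\bbeta}=\len{\aalpha-\bbeta}=\len{\aalpha}-\len{\bbeta}$, together with the invariance of degree under multiplication by the nonzero scalar $c$ (immediate from Definition~\ref{d:deg-pol}); the parenthetical $k-l$ is the value when $\aalpha$ and $\bbeta$ have maximal lengths.

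The main obstacle will be the factorization step, i.e.\ justifying cleanly that $\pder^\bbeta$ distributes over the product so that only the $i$-th factor is affected by $\pder_i^{\beta_i}$. This rests on the Leibniz product rule combined with $\pder_i X_m^{\alpha_m}=0$ for $m\neq i$, and making it fully rigorous likely calls for a double induction (over the variables $i\in[1..d]$ and over the orders $\beta_i$), which is where the genuine bookkeeping lies; by comparison the univariate power rule and the degree/membership conclusions are routine.
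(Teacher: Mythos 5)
Your proposal is correct and follows essentially the same route as the paper: reduce $\pder^\bbeta\XX^\aalpha$ to a product of univariate derivatives $\pder_i^{\beta_i}(X_i^{\alpha_i})$, apply the one-variable power rule in each factor, and conclude membership in $\matPkd$ and the degree formula from $\len{\aalpha-\bbeta}=\len{\aalpha}-\len{\bbeta}$ together with Lemma~\ref{l:deg-monom-Ckd-is-k}. The only difference is one of detail: the paper treats both the factorization across variables and the univariate power rule as assumed ``differentiation rules,'' whereas you spell out the Leibniz/commutation argument and the induction on $\beta$.
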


\begin{proof}
  Direct consequence of
  Definition~\thref{d:monom-kd},
  Lemma~\thref{l:deg-monom-Ckd-is-k}, and
  \assume{the differentiation rules} for univariate polynomials,
    \[
      \forall i \in [1..d],\quad
      \frac{\p^{\beta_i} (x_i^{\alpha_i})}{\p x_i^{\beta_i}}
      = \left\{
        \begin{array}{ll}
          x_i^{\alpha_i}
          & \mbox{when } \beta_i = 0,\\
          \dps \left(
            \prod_{j = 0}^{\beta_i - 1} (\alpha_i - j)
          \right) X_i^{\alpha_i - \beta_i}
          & \mbox{when } 0 < \beta_i \leq \alpha_i,\\
          0 & \mbox{when } \alpha_i < \beta_i.
        \end{array}
      \right.
    \]
\end{proof}

\begin{remark}
  \mbox{}\\
  Note that in the first clause of~\eqref{e:pder-monom}, we use the convention
  that a product indexed by the empty set equals~1, the identity
  element for multiplication, {\ie} when some $\beta_i$ is zero.

  Note also that the second clause may be omitted, provided the use of the
  convention that subtraction is closed in~$\matN$ ({\ie} $n-p=0$ when $n<p$).
  Indeed, when there exists some~$i\in[1..d]$ such that $\alpha_i<\beta_i$,
  then $x_i^{\alpha_i-\beta_i}=1$ and $\prod_{j=0}^{\beta_i-1}(\alpha_i-j)=0$.
\end{remark}

\begin{lemma}[partial derivative is linear]
  \label{l:pder-is-lin}
  \mbox{}\hfill
  Let~$d\geq1$.
  Let $k,l\in\matN$.
  Let $\bbeta\in\calAld$.\\
  Then, $\pder^\bbeta\in\calL(\matPkd,\matPkd)$, {\ie} $\pder^\bbeta$ is linear
  from~$\matPkd$ to~$\matPkd$, and for all
  $(a_\aalpha)_{\aalpha\in\calAkd}\in\matR$, we have
  \begin{equation}
    \label{e:pder-is-lin}
    \pder^\bbeta \left(
      \sum_{\aalpha \in \calAkd} a_\aalpha \XX^\aalpha \right)
    = \sum_{\aalpha \in \calAkd} a_\aalpha \pder^\bbeta \XX^\aalpha
    = \sum_{\substack{\aalpha \in \calAkd\\
        \forall i \in [1..d],\ \beta_i \leq \alpha_i}}
      a_\aalpha \prod_{i = 1}^d \left(
        \prod_{j = 0}^{\beta_i - 1} (\alpha_i - j)
      \right) X_i^{\alpha_i - \beta_i}.
    \end{equation}
\end{lemma}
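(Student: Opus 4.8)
The plan is to reduce the whole statement to the per-monomial computation already performed in Lemma~\thref{l:pder-monom}, combined with the assumed linearity of differentiation. First I would recall from Definition~\thref{d:pol-space-Pkd} that every element of~$\matPkd$ is a finite linear combination $\sum_{\aalpha\in\calAkd}a_\aalpha\XX^\aalpha$, so it suffices to describe the action of~$\pder^\bbeta$ on such a representation. There are then two things to establish: the displayed chain of equalities~\eqref{e:pder-is-lin}, and the fact that $\pder^\bbeta$ restricts to a linear endomorphism of~$\matPkd$, {\ie} $\pder^\bbeta\in\calL(\matPkd,\matPkd)$.

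For the equalities, the first one is exactly the linearity of the partial derivative operator, which is part of \assume{the rules of differentiation} already taken as known (the same fact used in Lemma~\thref{l:monom-free-in-Pk1}); this gives $\pder^\bbeta(\sum_{\aalpha\in\calAkd}a_\aalpha\XX^\aalpha)=\sum_{\aalpha\in\calAkd}a_\aalpha\,\pder^\bbeta\XX^\aalpha$. The second equality is then obtained by substituting the explicit formula for $\pder^\bbeta\XX^\aalpha$ from Lemma~\thref{l:pder-monom} into each summand: for every~$\aalpha$ such that $\beta_i>\alpha_i$ for some $i\in[1..d]$, that lemma gives $\pder^\bbeta\XX^\aalpha=0$, so these terms drop out of the sum, leaving precisely the restricted sum over $\{\aalpha\in\calAkd\st\forall i\in[1..d],\ \beta_i\leq\alpha_i\}$ with the stated product expression.

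For the membership $\pder^\bbeta\in\calL(\matPkd,\matPkd)$, I would again lean on Lemma~\thref{l:pder-monom}, which already asserts $\pder^\bbeta\XX^\aalpha\in\matPkd$ for each $\aalpha\in\calAkd$. Since $\matPkd$ is a {\vectorspace} by Lemma~\thref{l:Pkd-sp}, it is closed under finite linear combinations, so $\pder^\bbeta p=\sum_{\aalpha\in\calAkd}a_\aalpha\,\pder^\bbeta\XX^\aalpha\in\matPkd$ for every $p\in\matPkd$. Combined with the linearity invoked above, this shows that $\pder^\bbeta$ is a well-defined linear map from~$\matPkd$ into~$\matPkd$, which is the remaining claim.

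I do not expect a genuine obstacle: the entire mathematical content sits in Lemma~\thref{l:pder-monom} and in the assumed linearity of differentiation. The only point deserving a word of care is the passage from $\sum_{\aalpha\in\calAkd}a_\aalpha\,\pder^\bbeta\XX^\aalpha$ to the restricted sum, which must be justified by observing that the discarded summands are identically zero (so that removing them is an equality of polynomials), rather than by any analytic estimate.
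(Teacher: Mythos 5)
Your proposal is correct and follows essentially the same route as the paper, which proves the lemma as a direct consequence of the assumed linearity of partial derivatives, Definition~\ref{d:pol-space-Pkd}, and Lemma~\ref{l:pder-monom}; your explicit justification for dropping the vanishing summands and your appeal to Lemma~\ref{l:Pkd-sp} for stability of~$\matPkd$ merely spell out details the paper leaves implicit.
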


\begin{proof}
  Direct consequence of
  \assume{the linearity of partial derivative},
  Definition~\thref{d:pol-space-Pkd}, and
  Lemma~\thref{l:pder-monom}.
\end{proof}

\begin{lemma}[partial derivative of 0]
  \label{l:pder-0}
  \mbox{}\\
  Let~$d\geq1$.
  Let~$l\in\matN$.
  Let~$\bbeta\in\calAld$.
  Then, we have $\pder^\bbeta0=0$.
\end{lemma}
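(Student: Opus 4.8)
The plan is to reduce the claim to the linearity of the partial derivative operator, already established earlier, together with the elementary fact that a linear map sends the zero vector to the zero vector. First I would observe that the zero polynomial belongs to $\matPkd$ for every $k\in\matN$; in particular, taking $k\eqdef0$, Lemma~\ref{l:pol-space-P0d-P1d} gives $0\in\matPod=\Span{1}$, so $\pder^\bbeta$ is indeed defined on it. The choice of ambient space is immaterial, since $0$ is common to all the spaces $\matPkd$ and the output is the same in each case.

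Then I would invoke Lemma~\ref{l:pder-is-lin}, which asserts that $\pder^\bbeta\in\calL(\matPkd,\matPkd)$ is a linear map, and combine it with the fact that every linear map preserves the zero element. Concretely, this last fact follows from Lemma~\ref{LM-l:linear-map-preserves-linear-combinations} applied to a trivial linear combination (for instance $0=0\cdot 0$), which yields $\pder^\bbeta 0 = 0\cdot\pder^\bbeta 0 = 0$. This gives the result at once.

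There is no genuine obstacle here: the statement is an immediate corollary of linearity, and the only point worth making explicit is that $\pder^\bbeta$ maps the zero polynomial to the zero polynomial independently of the degree bound used to realize $0$ inside some $\matPkd$. As an alternative I could argue by a one-line direct computation from the explicit sum formula~\eqref{e:pder-is-lin}: writing $0=\sum_{\aalpha\in\calAkd}0\cdot\XX^\aalpha$, that formula gives $\pder^\bbeta 0=\sum_{\aalpha\in\calAkd}0\cdot\pder^\bbeta\XX^\aalpha=0$, but the linearity argument is cleaner and I would prefer it.
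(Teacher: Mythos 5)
Your proof is correct and follows exactly the paper's route: the paper's entire proof is ``Direct consequence of Lemma~\ref{l:pder-is-lin} (\emph{partial derivative is linear})'', i.e.\ linearity of $\pder^\bbeta$ plus the fact that a linear map sends $0$ to $0$. The extra details you supply (that $0$ lies in every $\matPkd$, and the explicit trivial linear combination) are harmless elaborations of the same argument.
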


\begin{proof}
  Direct consequence of
  Lemma~\thref{l:pder-is-lin}.
\end{proof}

\begin{lemma}[derivating more than degree is 0]
  \label{l:pdef-more-than-deg-is-0}
  \mbox{}\\
  Let~$d\geq1$.
  Let~$k,l\in\matN$.
  Let $\aalpha\in\calCkd$ and $\bbeta\in\calCld$.
  Assume that $k<l$.
  Then, we have $\pder^\bbeta\XX^\aalpha=0$.
\end{lemma}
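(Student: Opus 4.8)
The plan is to reduce the statement to the explicit case analysis already carried out in Lemma~\thref{l:pder-monom}. That lemma computes $\pder^\bbeta\XX^\aalpha$ through formula~\eqref{e:pder-monom}, which vanishes precisely in its ``otherwise'' branch, namely whenever there exists some index $i\in[1..d]$ with $\alpha_i<\beta_i$. So it suffices to show that the hypothesis $k<l$ forces the existence of such an index, and then to invoke Lemma~\thref{l:pder-monom} directly.

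First I would argue by contraposition on the condition appearing in~\eqref{e:pder-monom}. Suppose, toward a contradiction, that $\beta_i\leq\alpha_i$ for all $i\in[1..d]$. Summing these inequalities over $i$, and using Definition~\thref{d:len-multi-ind} (the length of a multi-index is the sum of its components) together with \assume{monotony of addition in~$\matN$}, I obtain $\len{\bbeta}\leq\len{\aalpha}$. Since $\aalpha\in\calCkd$ and $\bbeta\in\calCld$ give $\len{\aalpha}=k$ and $\len{\bbeta}=l$ by Definition~\thref{d:multi-ind-Akd-Ckd}, this reads $l\leq k$, which contradicts the assumption $k<l$.

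Hence there exists $i\in[1..d]$ with $\alpha_i<\beta_i$, and applying the second clause of~\eqref{e:pder-monom} in Lemma~\thref{l:pder-monom} yields $\pder^\bbeta\XX^\aalpha=0$, as required.

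I expect essentially no obstacle here: the only content is the monotonicity argument that turns a componentwise inequality $\bbeta\leq\aalpha$ into the length inequality $\len{\bbeta}\leq\len{\aalpha}$, which is immediate from the additivity of the length (as recorded in Lemma~\thref{l:len-multi-ind-is-add}). The main conceptual point is simply to recognize that the vanishing of $\pder^\bbeta\XX^\aalpha$ in Lemma~\thref{l:pder-monom} is governed by the componentwise comparison of $\bbeta$ and $\aalpha$, and that $k<l$ precludes $\bbeta$ from being dominated by $\aalpha$.
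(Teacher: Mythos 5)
Your proof is correct and follows essentially the same route as the paper: the paper also splits on whether $\beta_i\leq\alpha_i$ holds for all $i$, rules that case out via Definition~\ref{d:multi-ind-Akd-Ckd} and the monotonicity of addition in~$\matN$ (yielding the impossible $l\leq k$), and concludes in the remaining case directly from Lemma~\ref{l:pder-monom}. Your contrapositive phrasing is just a cosmetic variant of that case analysis.
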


\begin{proof}
  \proofpar{Case $\forall i\in[1..d],\ \beta_i\leq\alpha_i$}
  Then, from
  Definition~\thref{d:multi-ind-Akd-Ckd}, and
  \assume{the monotonicity of addition in $\matN$},
  we have $l=\len{\bbeta}\leq\len{\aalpha}=k$, which is impossible.

  \proofparskip{Case $\exists i\in[1..d],\ \alpha_i<\beta_i$}
  Direct consequence of
  Lemma~\thref{l:pder-monom}.
\end{proof}

\begin{lemma}[partial derivative of monomials at 0]
  \label{l:pder-monom-at-0}
  \mbox{}\\
  Let~$d\geq1$.
  Let~$k,l\in\matN$.
  Let $\aalpha\in\calAkd$ and $\bbeta\in\calAld$.
  Then, we have
  $\pder^\bbeta\XX^\aalpha(\zzero)=\ffact{\aalpha}\;\kkron{\aalpha}{\bbeta}$.
\end{lemma}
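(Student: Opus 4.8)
The plan is to compute $\pder^\bbeta\XX^\aalpha(\zzero)$ directly from the explicit formula of Lemma~\ref{l:pder-monom}, and then to match the result against $\ffact{\aalpha}\,\kkron{\aalpha}{\bbeta}$ through a case split governed by whether $\aalpha=\bbeta$. The whole argument is elementary once that formula and the multi-index definitions are in hand.

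First I would treat the case where there exists $i\in[1..d]$ with $\alpha_i<\beta_i$. Then Lemma~\ref{l:pder-monom} gives $\pder^\bbeta\XX^\aalpha=0$, so its value at $\zzero$ is $0$; moreover $\aalpha\neq\bbeta$, so Lemma~\ref{l:kron-multi-ind-val} yields $\kkron{\aalpha}{\bbeta}=0$, and both sides vanish.

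Next I would assume $\beta_i\leq\alpha_i$ for all $i$, so that $\aalpha-\bbeta$ is a genuine multi-index in $\matNd$. Regrouping the scalar and monomial factors in the first clause of~\eqref{e:pder-monom} (using Definition~\ref{d:monom-kd} to recognise $\prod_{i=1}^d X_i^{\alpha_i-\beta_i}=\XX^{\aalpha-\bbeta}$) gives
\[
  \pder^\bbeta\XX^\aalpha
  = \left( \prod_{i=1}^d \prod_{j=0}^{\beta_i-1}(\alpha_i-j) \right)
    \XX^{\aalpha-\bbeta}.
\]
Evaluating at $\zzero$, the monomial $\XX^{\aalpha-\bbeta}(\zzero)=\prod_{i=1}^d 0^{\alpha_i-\beta_i}$ equals $1$ when $\aalpha=\bbeta$ and $0$ otherwise, under the convention $0^0=1$. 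I would then finish with two sub-cases. If $\aalpha\neq\bbeta$, some index has $\alpha_i-\beta_i>0$, forcing $\XX^{\aalpha-\bbeta}(\zzero)=0$, which matches $\kkron{\aalpha}{\bbeta}=0$. If $\aalpha=\bbeta$, each inner product telescopes to $\prod_{j=0}^{\alpha_i-1}(\alpha_i-j)=\fact{\alpha_i}$, so the scalar factor is $\prod_{i=1}^d\fact{\alpha_i}=\ffact{\aalpha}$ by Definition~\ref{d:fact-multi-ind}, while $\XX^\zzero(\zzero)=1$, giving $\pder^\bbeta\XX^\aalpha(\zzero)=\ffact{\aalpha}=\ffact{\aalpha}\,\kkron{\aalpha}{\bbeta}$.

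The only delicate points, rather than genuine obstacles, are the $0^0=1$ convention when evaluating $\XX^{\aalpha-\bbeta}$ at the origin, and the identity $\prod_{j=0}^{\alpha_i-1}(\alpha_i-j)=\fact{\alpha_i}$; both are routine, and everything else is bookkeeping already packaged in Lemma~\ref{l:pder-monom} together with Definitions~\ref{d:fact-multi-ind} and~\ref{d:kron-multi-ind}.
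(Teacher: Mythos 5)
Your proof is correct and follows essentially the same route as the paper's: both evaluate the explicit formula of Lemma~\ref{l:pder-monom} at the origin and conclude by a case split on the relative sizes of the $\alpha_i$ and $\beta_i$, using that $x_i^{\alpha_i-\beta_i}$ vanishes at $x_i=0$ when $\beta_i<\alpha_i$ and that $\prod_{j=0}^{\alpha_i-1}(\alpha_i-j)=\fact{\alpha_i}$ when $\aalpha=\bbeta$. The only difference is organizational (you split first on whether some $\alpha_i<\beta_i$ so that the first clause of~\eqref{e:pder-monom} applies cleanly in the remaining case, whereas the paper splits first on $\aalpha=\bbeta$), which is immaterial.
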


\begin{proof}
  From
  Definition~\thref{d:fact-multi-ind}, and
  Definition~\thref{d:kron-multi-ind},
  we have
  $\ffact{\aalpha}\;\kkron{\aalpha}{\bbeta}=\prod_{i=1}^d\fact{\alpha_i}$ when
  $\aalpha=\bbeta$, and~0 otherwise.

  \proofparskip{Case~$\aalpha=\bbeta$}
  Direct consequence of
  Lemma~\threfc{l:pder-monom}{%
    since for all $i\in[1..d]$, $x_i^{\alpha_i-\beta_i}=1$ and
    $\prod_{j=0}^{\beta_i-1}(\alpha_i-j)=\fact{\alpha_i}$}.

  \proofparskip{Case~$\exists i\in[1..d],\ \alpha_i\neq\beta_i$}\\
  \proofpar{Case~$\alpha_i<\beta_i$}
  Direct consequence of
  Lemma~\threfc{l:pder-monom}{$\pder^{\bbeta} \XX^\aalpha=0$}.\\
  \proofpar{Case~$\beta_i<\alpha_i$}
  Direct consequence of
  Lemma~\threfc{l:pder-monom}{%
    since $x_i^{\alpha_i-\beta_i}=0$ when $x_i\eqdef0$}.
\end{proof}

\begin{lemma}[monomials are free in $\matPkd$]
  \label{l:monom-free-in-Pkd}
  \mbox{}\\
  Let~$d\geq1$.
  Let~$k\in\matN$.
  Then, $(\XX^\aalpha)_{\aalpha\in\calAkd}$ is free in~$\matPkd$.
\end{lemma}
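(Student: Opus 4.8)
The plan is to unfold the definition of a free family and reduce the linear independence to the already-established behavior of partial derivatives evaluated at the origin. Concretely, I would start from an arbitrary family $(a_\aalpha)_{\aalpha\in\calAkd}$ of scalars such that $\sum_{\aalpha\in\calAkd} a_\aalpha \XX^\aalpha = 0$ in $\matPkd$, and aim to show that $a_\bbeta = 0$ for each fixed $\bbeta\in\calAkd$. The key device is Lemma~\ref{l:pder-monom-at-0}, which isolates a single coefficient through the identity $\pder^\bbeta\XX^\aalpha(\zzero) = \ffact{\aalpha}\,\kkron{\aalpha}{\bbeta}$.

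First, I would fix $\bbeta\in\calAkd$ and apply the differential operator $\pder^\bbeta$ to the vanishing polynomial. Using the linearity of $\pder^\bbeta$ on $\matPkd$ (Lemma~\ref{l:pder-is-lin}) together with the fact that $\pder^\bbeta 0 = 0$ (Lemma~\ref{l:pder-0}), I obtain
\begin{equation*}
  0 = \pder^\bbeta \left( \sum_{\aalpha\in\calAkd} a_\aalpha \XX^\aalpha \right)
  = \sum_{\aalpha\in\calAkd} a_\aalpha\, \pder^\bbeta \XX^\aalpha.
\end{equation*}
Evaluating both sides at $\zzero$ and invoking Lemma~\ref{l:pder-monom-at-0} turns this into $\sum_{\aalpha\in\calAkd} a_\aalpha\,\ffact{\aalpha}\,\kkron{\aalpha}{\bbeta} = 0$. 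By the value of the Kronecker delta of multi-indices (Lemma~\ref{l:kron-multi-ind-val}), every summand with $\aalpha\neq\bbeta$ drops out, leaving $a_\bbeta\,\ffact{\bbeta} = 0$.

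Finally, since $\ffact{\bbeta} > 0$ by Lemma~\ref{l:fact-multi-ind-pos}, the zero-product property in $\matR$ forces $a_\bbeta = 0$. As $\bbeta\in\calAkd$ was arbitrary, all coefficients vanish, and the definition of freedom yields that $(\XX^\aalpha)_{\aalpha\in\calAkd}$ is free in $\matPkd$. I do not anticipate a genuine obstacle here: all the analytic content has been pushed into Lemma~\ref{l:pder-monom-at-0}, and the only points requiring a little care are the bookkeeping of which index is being differentiated versus summed over, and the fact that evaluation at $\zzero$ commutes with the finite sum (already implicit in working with polynomial functions).
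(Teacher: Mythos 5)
Your proposal is correct and follows essentially the same route as the paper's proof: apply $\pder^\bbeta$ using Lemmas~\ref{l:pder-is-lin} and~\ref{l:pder-0}, evaluate at~$\zzero$ via Lemma~\ref{l:pder-monom-at-0} to isolate $a_\bbeta\,\ffact{\bbeta}=0$, and conclude with Lemma~\ref{l:fact-multi-ind-pos} and the zero-product property. The only cosmetic difference is that you explicitly invoke Lemma~\ref{l:kron-multi-ind-val} to collapse the sum, a step the paper leaves implicit.
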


\begin{proof}
  Let $(\lambda_\aalpha)_{\aalpha\in\calAkd}\in\matR$ such that
  $\sum_{\aalpha\in\calAkd}\lambda_\aalpha\XX^\aalpha=0$.
  Let~$\bbeta\in\calAkd$.\\
  Then, from
  Lemma~\thref{l:pder-is-lin},
  Lemma~\thref{l:pder-0}, and
  Lemma~\threfc{l:pder-monom-at-0}{with $\xx\eqdef\zzero$},
  we have
  \[
    0 = \sum_{\aalpha \in \calAkd}
      \lambda_\aalpha \pder^\bbeta \XX^\aalpha (\zzero)
    = \sum_{\aalpha \in \calAkd}
      \lambda_\aalpha \ffact{\bbeta} \; \kkron{\aalpha}{\bbeta}
    = \lambda_\bbeta \ffact{\bbeta}.
  \]
  Thus, from
  Lemma~\thref{l:fact-multi-ind-pos},
  \assume{the zero-product property in~$\matR$},
  Definition~\threfc{d:pol-space-Pkd}{thus $\XX^\aalpha\in\matPkd$}, and
  \assume{the definition of freedom},
  we have $\lambda_\bbeta=0$, and $(\XX^\aalpha)_{\aalpha\in\calAkd}$ is free
  in~$\matPkd$.
\end{proof}

\begin{lemma}[monomials are a basis of $\matPkd$]
  \label{l:monom-basis-Pkd}
  \mbox{}\\
  Let~$d\geq1$.
  Let~$k\in\matN$.
  Then, $(\XX^\aalpha)_{\aalpha\in\calAkd}$ is a basis of~$\matPkd$.
\end{lemma}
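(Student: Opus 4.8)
The plan is to combine the two facts already established about the monomial family $(\XX^\aalpha)_{\aalpha\in\calAkd}$: that it generates $\matPkd$, which holds by the very definition of this space, and that it is free. Since a basis is exactly a free generating family, these two observations together close the argument, and there is no further work to do.

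Concretely, I would first invoke Definition~\ref{d:pol-space-Pkd}, which sets $\matPkd\eqdef\Span{\XX^\aalpha}_{\aalpha\in\calAkd}$; this immediately exhibits $(\XX^\aalpha)_{\aalpha\in\calAkd}$ as a generating family of $\matPkd$. I would then apply Lemma~\ref{l:monom-free-in-Pkd}, which states that this same family is free in $\matPkd$. Unfolding the definition of a basis as a free generating family then yields the claim directly. So the proof reduces to: direct consequence of Definition~\ref{d:pol-space-Pkd}, Lemma~\ref{l:monom-free-in-Pkd}, and the definition of a basis.

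The hard part is not in this statement at all: all the genuine content lives upstream in Lemma~\ref{l:monom-free-in-Pkd}, whose freedom argument differentiates each monomial to order~$\bbeta$ and evaluates at~$\zzero$ (relying on Lemmas~\ref{l:pder-is-lin}, \ref{l:pder-0}, and~\ref{l:pder-monom-at-0}, together with the positivity of $\ffact{\bbeta}$). By contrast, the present lemma is a one-line bookkeeping consequence, and I expect no obstacle in assembling it.

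\begin{proof}
  Direct consequence of
  Definition~\thref{d:pol-space-Pkd},
  Lemma~\thref{l:monom-free-in-Pkd}, and
  \assume{the definition of a basis (free generating family)}.
\end{proof}
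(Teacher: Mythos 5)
Your proof is correct and matches the paper's own argument exactly: both cite Definition~\ref{d:pol-space-Pkd} for the generating property, Lemma~\ref{l:monom-free-in-Pkd} for freedom, and conclude via the definition of a basis as a free generating family. Nothing to add.
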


\begin{proof}
  Direct consequence of
  Definition~\threfc{d:pol-space-Pkd}{thus monomials are genera\-tors},
  Lemma~\thref{l:monom-free-in-Pkd}, and
  \assume{the definition of basis}.
\end{proof}

\begin{lemma}[dimension of $\matPkd$]
  \label{l:dim-Pkd}
  \mbox{}\\
  Let~$d\geq1$.
  Let~$k\in\matN$.
  Then, $\matPkd$ is a {\vectorspace} of dimension
  \begin{equation}
    \label{e:dim-Pkd-1}
    \dim \matPkd
    = \binom{k + d}{d} = \binom{k + d}{k}
    = \frac{\fact{(k + d)}}{\fact{k} \; \fact{d}}
    \qquad (=\card(\calAkd)).
  \end{equation}

  For instance, we have
  \begin{equation}
    \label{e:dim-Pkd-2}
    \left.\begin{array}{ll}
       \left\{
      \begin{array}{l}
        \dim \matPod = 1,\\
        \dim \matPid = d + 1,\\
        \dim \matPiid = \frac12 (d + 1) (d + 2),\\
      \end{array}
    \right. \mbox{ and }
       \left\{
      \begin{array}{l}
        \dim \matPki = (k + 1),\\
        \dim \matPkii = \frac12 (k + 1) (k + 2),\\
        \dim \matPkiii = \frac16 (k + 1) (k + 2) (k + 3).
      \end{array}
    \right.
    \end{array}\right.
  \end{equation}
\end{lemma}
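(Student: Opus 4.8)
The plan is to read the dimension straight off the monomial basis, so that almost all of the work has already been done by the preceding lemmas. First I would invoke Lemma~\thref{l:monom-basis-Pkd}, which asserts that the family $(\XX^\aalpha)_{\aalpha\in\calAkd}$ is a basis of~$\matPkd$. Since the dimension of a {\vectorspace} is by definition the common cardinality of its bases, it then suffices to count the number of distinct elements in this particular basis, {\ie} to evaluate $\card\{\XX^\aalpha \st \aalpha\in\calAkd\}$.

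The one point requiring a word of care is that this count equals $\card(\calAkd)$ and not something smaller: the index map $\aalpha\longmapsto\XX^\aalpha$ could \emph{a priori} identify two multi-indices. This is ruled out by freedom, namely Lemma~\thref{l:monom-free-in-Pkd}, since a free family admits no repetition; hence $\aalpha\longmapsto\XX^\aalpha$ is injective on~$\calAkd$ and the basis has exactly $\card(\calAkd)$ elements. Therefore $\dim\matPkd=\card(\calAkd)$, and the three displayed closed forms in~\eqref{e:dim-Pkd-1} follow immediately from Lemma~\thref{l:card-Akd}, which already records $\card(\calAkd)=\binom{k+d}{d}=\binom{k+d}{k}=\frac{\fact{(k+d)}}{\fact{k}\;\fact{d}}$.

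For the sample values in~\eqref{e:dim-Pkd-2} I would simply specialize the binomial coefficient. Setting $k\eqdef0,1,2$ and using Lemma~\threfc{l:prop-binom-coef}{\eqref{e:prop-binom-coef-0}, \eqref{e:prop-binom-coef-1} and~\eqref{e:prop-binom-coef-2}} gives $\dim\matPod=\binom{d}{d}=1$, $\dim\matPid=\binom{d+1}{d}=\binom{d+1}{1}=d+1$, and $\dim\matPiid=\binom{d+2}{2}=\frac12(d+1)(d+2)$; dually, fixing $d\eqdef1,2,3$ yields $\dim\matPki=\binom{k+1}{1}=k+1$, $\dim\matPkii=\binom{k+2}{2}=\frac12(k+1)(k+2)$, and $\dim\matPkiii=\binom{k+3}{3}=\frac16(k+1)(k+2)(k+3)$. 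There is no genuine obstacle here: the entire argument is a direct consequence of the basis and cardinality lemmas, and the only step one must not skip is the injectivity of $\aalpha\longmapsto\XX^\aalpha$ needed to pass from a basis \emph{indexed by}~$\calAkd$ to one whose cardinality \emph{is}~$\card(\calAkd)$.
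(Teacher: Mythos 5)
Your proof is correct and follows the same route as the paper, which also reads the dimension off the monomial basis of Lemma~\thref{l:monom-basis-Pkd} and then applies Lemma~\thref{l:card-Akd}; your extra remark that freedom forces $\aalpha\longmapsto\XX^\aalpha$ to be injective is a welcome precision the paper leaves implicit.
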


\begin{proof}
  Direct consequence of
  Lemma~\thref{l:monom-basis-Pkd},
  \assume{the definition of the dimension}, and
  Lemma~\thref{l:card-Akd}.
\end{proof}

\subsection{Decomposition of polynomials, isomorphisms of $\matPkd$}
\label{ss:decomp-polynom-isomporph}

\begin{lemma}[isomorphism between $\matPod$ and $\matPodmi$]
  \label{l:isom-P0d-P0dm1}
  \mbox{}\\
  Let~$d\geq2$.
  Let $\xi^d\eqdef
  ((\xx\mapsto a_0)\in\matPod\longmapsto(\txx\mapsto a_0)\in\matPodmi)$.\\
  Then, for all~$p\in\matPod$, for all~$\xx\in\matRd$, we have
  $\xi^d(p)(\txx)=p(\xx)$, and~$\xi^d$ is an isomorphism.
\end{lemma}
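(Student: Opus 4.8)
The plan is to reduce everything to the explicit description of constant polynomial spaces furnished by Lemma~\thref{l:pol-space-P0d-P1d}. Applying that lemma in dimension $d$ and, since $d\geq2$, also in dimension $d-1$, one gets $\matPod=\{(\xx\mapsto a_0)\st a_0\in\matR\}$ and $\matPodmi=\{(\txx\mapsto a_0)\st a_0\in\matR\}$. The first thing I would record is that $\xi^d$ is well defined: every $p\in\matPod$ equals $(\xx\mapsto a_0)$ for a \emph{unique} scalar $a_0$, because the constant value is recovered as $a_0=p(\xx)$ for any $\xx$ (in particular $a_0=p(\zzero)$). Hence the assignment $p\mapsto(\txx\mapsto a_0)\in\matPodmi$ is unambiguous, and this uniqueness is exactly the point underpinning the whole argument.

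The pointwise identity is then immediate: for $p=(\xx\mapsto a_0)\in\matPod$ and any $\xx\in\matRd$, both $\xi^d(p)(\txx)$ and $p(\xx)$ are equal to the defining constant $a_0$, so $\xi^d(p)(\txx)=p(\xx)$. Next I would verify linearity directly from the definition of linear operations over functions: for $p=(\xx\mapsto a_0)$, $q=(\xx\mapsto b_0)$ and scalars $\lambda,\mu$, the combination $\lambda p+\mu q$ is the constant map of value $\lambda a_0+\mu b_0$, whose image is $(\txx\mapsto\lambda a_0+\mu b_0)=\lambda\,\xi^d(p)+\mu\,\xi^d(q)$. Thus $\xi^d\in\calL(\matPod,\matPodmi)$.

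For bijectivity there are two short routes, and I would pick whichever fits the paper's style best. The direct one: $\xi^d(p)=0$ forces $a_0=0$, hence $p=0$, giving injectivity; and any $(\txx\mapsto a_0)\in\matPodmi$ is the image of $(\xx\mapsto a_0)\in\matPod$, giving surjectivity. Alternatively, invoking Lemma~\thref{l:dim-Pkd} with $k\eqdef0$ yields $\dim\matPod=\dim\matPodmi=1<\infty$, so injectivity together with Lemma~\thref{l:inj-or-surj-and-dim-implies-bij} already forces bijectivity. There is essentially no genuine obstacle in this lemma; the only step deserving a moment's care is the well-definedness of $\xi^d$, which rests on the uniqueness of the constant coefficient, that is, on the freedom of the family $(1)$ in $\matPod$ that is implicit in Lemma~\thref{l:pol-space-P0d-P1d}.
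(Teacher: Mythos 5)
Your proof is correct and follows essentially the same route as the paper: well-definedness and the pointwise identity come from Lemma~\ref{l:pol-space-P0d-P1d}, linearity is immediate, and bijectivity follows from injectivity together with $\dim\matPod=\dim\matPodmi=1$ (Lemmas~\ref{l:dim-Pkd} and~\ref{l:inj-or-surj-and-dim-implies-bij}), which is precisely the second of your two options. Your alternative direct surjectivity argument is equally valid but not what the paper uses.
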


\begin{proof}
  Direct consequence of
  Lemma~\threfc{l:pol-space-P0d-P1d}{%
    thus $\xi^d$~is well-defined, and the identity holds},
  Lemma~\threfc{l:dim-Pkd}{$\dim\matPod=\dim\matPodmi=1$},
  \assume{the definition of order (reflexivity)},
  Lemma~\threfc{l:inj-or-surj-and-dim-implies-bij}{%
    $\xi^d$ is obviously linear and injective}, and
  Definition~\thref{LM-d:isomorphism}.
\end{proof}

\begin{remark}
  \mbox{}\\
  Note in the next lemma the use of the tilde notation of
  Remark~\ref{r:tilde-and-check-notations}, such that $\xx=(\txx,x_d)$.

  Note also that the decomposition extracts~$x_d$, and thus uses the
  horizontal slices~$\calStkdi$.
  A similar result, exhibiting~$x_1$ and using vertical slices~$\calSckdi$,
  could be written, $p(\xx)=\check{p}_0(\check{\xx})+x_1\,p_1(\xx)$, with
  $\xx=(x_1,\check{\xx})$ and obvious modifications in the notations.
\end{remark}

\begin{remark}
  See the sketch of the next proof in
  Section~\ref{s:sketch-of-the-proof-of-Euclid-div-Pkd}.
\end{remark}

\begin{lemma}[decomposition of $\matPkd$]
  \label{l:decomp-Pkd}
  \mbox{}\hfill
  Let~$d\geq2$.
  Let~$k\geq1$.
  Let $p\in\matPkd$.\\
  Then, there exist unique $\tp_0\in\matPkdmi$ and $p_1\in\matPkmid$ such that
  \begin{equation}
    \label{e:decomp-Pkd}
    \forall \xx \in \matRd,\quad
    p (\xx) = \tp_0 (\txx) + x_d \, p_1 (\xx)
    \qquad (\mbox{{\ie} } p = \tp_0 + X_d \, p_1),
  \end{equation}
  which also writes
  $p (x_1, \ldots, x_d)
    = \tp_0 (x_1, \ldots, x_{d - 1}) + x_d \, p_1 (x_1, \ldots, x_d)$.
\end{lemma}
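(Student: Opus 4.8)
The plan is to establish existence by induction on $k \geq 1$ (for fixed $d \geq 2$) and to handle uniqueness by a short evaluation-plus-freedom argument, following the structure of Section~\ref{s:sketch-of-the-proof-of-Euclid-div-Pkd}. Throughout I would use the tilde notation of Remark~\ref{r:tilde-and-check-notations}, writing $\xx = (\txx, x_d)$, and rely on three ingredients: the description of $\matPkd$ as the span of monomials (Definition~\ref{d:pol-space-Pkd}), the layer decomposition $\calAkpid = \calAkd \uplus \calCkpid$ (Lemma~\ref{l:Ckd-layers-Akd}), and the linear independence of monomials (Lemma~\ref{l:monom-free-in-Pkd}).

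For the base case $k = 1$, I would invoke $\matPid = \Span{1, X_1, \ldots, X_d}$ from Lemma~\ref{l:pol-space-P0d-P1d}: writing $p = a_\zzero + \sum_{i=1}^d a_{\ee_i} X_i$, I set $\tp_0 \eqdef a_\zzero + \sum_{i=1}^{d-1} a_{\ee_i} X_i \in \matPidmi$ and $p_1 \eqdef a_{\ee_d} \in \matPod$, which gives $p = \tp_0 + X_d\,p_1$ at once. For the inductive step, assuming the claim for $k$, I would take $p = \sum_{\aalpha \in \calAkpid} a_\aalpha \XX^\aalpha \in \matPkpid$ and split it as $p = q + r$ along the partition above, with $q \in \matPkd$ and $r = \sum_{\aalpha \in \calCkpid} a_\aalpha \XX^\aalpha$ the top-degree homogeneous part. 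Applying the induction hypothesis to $q$ yields $\tq_0 \in \matPkdmi \subset \matPkpidmi$ and $q_1 \in \matPkmid \subset \matPkd$ (the inclusions coming from Lemma~\ref{l:Pkd-nondecr-k}) with $q = \tq_0 + X_d\,q_1$. The term $r$ I would treat directly: grouping each $\aalpha \in \calCkpid$ by its last component $i = \alpha_d$ and writing $\aalpha = (\taalpha, i)$ with $\taalpha \in \calCdmi{k+1-i}$, I set $b_{\taalpha_i} \eqdef a_{(\taalpha_i, i)}$ and define $\tr_0$ and $r_1$ as in the sketch. The slice $i = 0$ contributes $\tr_0 \in \matPkpidmi$ (since then $\len{\taalpha_0} = k+1$), while for $i \geq 1$ each monomial $\tXX^{\taalpha_i} X_d^{i-1}$ has total degree $(k+1-i) + (i-1) = k$, so $r_1 \in \matPkd$ and $r = \tr_0 + X_d\,r_1$. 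Setting $\tp_0 \eqdef \tq_0 + \tr_0$ and $p_1 \eqdef q_1 + r_1$ then closes the induction.

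For uniqueness, I would suppose $\tq_0 + X_d\,q_1 = \tr_0 + X_d\,r_1$ with $\tq_0, \tr_0 \in \matPkdmi$ and $q_1, r_1 \in \matPkmid$. Evaluating at $\xx = (\txx, 0)$ annihilates both $X_d$-terms and gives $\tq_0 = \tr_0$, whence $X_d(q_1 - r_1) = 0$; expanding $q_1 - r_1$ in the monomial basis and applying Lemma~\ref{l:monom-free-in-Pkd} forces all its coefficients to vanish, so $q_1 = r_1$.

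The main obstacle I anticipate is not conceptual but the bookkeeping in the inductive step: checking the degree counts that place $\tr_0$ in $\matPkpidmi$ and each $X_d$-factored monomial of $r_1$ in $\matPkd$, and confirming that the re-indexing $\aalpha = (\taalpha, \alpha_d)$ of $\calCkpid$ by its last exponent is a genuine bijection onto the disjoint union of the slices $\calCdmi{k+1-i}$, so that no monomial of $r$ is double-counted or dropped. Everything else — linearity of the splitting, the subspace inclusions, and the final evaluation argument — should be routine.
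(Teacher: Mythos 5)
Your proposal is correct and follows essentially the same route as the paper's proof: existence by induction on~$k$ with the base case from Lemma~\ref{l:pol-space-P0d-P1d}, the split $p=q+r$ along $\calAkpid=\calAkd\uplus\calCkpid$, the treatment of~$r$ by grouping $\calCkpid$ according to the last exponent (which the paper formalizes as the horizontal slices $\calStkpidi$ of Definition~\ref{d:slices-Sckdi-and-Stkdi}, with Lemmas~\ref{l:slices-of-multi-ind-Ckd} and~\ref{l:card-slices-of-Ckd} supplying exactly the partition/bijection bookkeeping you flag as the main obstacle), and uniqueness by evaluating at $x_d=0$ and invoking Lemma~\ref{l:monom-free-in-Pkd}. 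No gaps.
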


\begin{proof}
  \proofpar{Existence}
  For all $k\geq1$, let
  $P(k)\eqdef
  [\forall p\in\matPkd,\exists\tp_0\in\matPkdmi,\exists p_1\in\matPkmid,
  p=\tp_0+\XX_d\,p_1]$.\\
  \proofpar{Induction:~$P(1)$}
  Let~$p\in\matPid$.
  Then, from
  Lemma~\thref{l:pol-space-P0d-P1d}, and
  \assume{commutative ring properties of~$\matR$}
  there exist
  \[
    a_\zzero, (a_{\ee_i})_{i \in [1..d]} \in \matR,\quad
    \tp_0 \eqdef a_\zzero + \sum_{i = 1}^{d - 1} a_{\ee_i} X_i
    \in \matPidmi,
    \AND
    p_1 \eqdef a_{\ee_d} \in \matPod,
  \]
  such that
  $p=a_\zzero+\sum_{i=1}^da_{\ee_i}X_i=\tp_0+X_d\,p_1$.\\
  \proofpar{Induction: $P(k)\Implies P(k+1)$}
  Assume that~$P(k)$ holds.
  Let $p\in\matPkpid$.\\
  Then, from
  Definition~\thref{d:pol-space-Pkd}, and
  Lemma~\thref{l:Ckd-layers-Akd},
  there exist $(a_\aalpha)_{\aalpha\in\calAkpid}\in\matR$,
  $q\eqdef\sum_{\aalpha\in\calAkd}a_\aalpha\XX^\aalpha\in\matPkd$, and
  $r\eqdef\sum_{\aalpha\in\calCkpid}a_\aalpha\XX^\aalpha\in\matPkpid$ such that
  $p=\sum_{\aalpha\in\calAkpid}a_\aalpha\XX^\aalpha=q+r$.
  By induction hypothesis, and from
  Lemma~\thref{l:Pkd-nondecr-k},
  there exist
  $\tq_0\in\matPkdmi\subset\matPkpidmi$ and $q_1\in\matPkmid\subset\matPkd$,
  such that $q=\tq_0+X_d\,q_1$.\\
  For all $i\in[0..k+1]$, for all $\taalpha_i \in \calCdmi{k+1-i}$, let
  $b_{\taalpha_i}\eqdef a_{(\taalpha_i,i)}=a_{(\tfhi_{k+1,d}^i)^{-1}(\taalpha_i)}$.
  Then, from
  Lemma~\thref{l:slices-of-multi-ind-Ckd},
  Lemma~\thref{l:card-slices-of-Ckd}, and
  Definition~\thref{d:slices-Sckdi-and-Stkdi},
  we have
  \begin{align*}
    r
    &= \sum_{\aalpha \in \calStkpido} a_\aalpha \XX^\aalpha
      + \sum_{i = 1}^{k + 1} \sum_{\aalpha \in \calStkpidi}
        a_\aalpha \XX^\aalpha\\
    &= \underbrace{\sum_{\taalpha_0 \in \calCkpidmi}
        b_{\taalpha_0} \tXX^{\taalpha_0}}_{\eqdef \tr_0}
      + X_d \; \underbrace{\sum_{i = 1}^{k + 1}
        \sum_{\taalpha_i \in \calCdmi{k + 1 - i}}
          b_{\taalpha_i} \tXX^{\taalpha_i} X_d^{i - 1}}_{\eqdef r_1}.
  \end{align*}
  Let $i\in[1..k+1]$.
  Let $\taalpha_i\in\calCdmi{k+1-i}$.
  Then, from
  Lemma~\thref{l:deg-monom-Ckd-is-k},
  we have $\deg(\tXX^{\taalpha_i}X_d^{i-1})=(k+1-i)+(i-1)=k$.
  Thus, from
  Lemma~\thref{l:Ckd-layers-Akd}, and
  Definition~\thref{d:pol-space-Pkd},
  we have $\tr_0\in\matPkpidmi$ and $r_1\in\matPkd$.

  Let~$\tp_0\eqdef\tq_0+\tr_0$ and~$p_1\eqdef q_1+r_1$.
  Then, from
  \assume{ring properties of $\matR$}, and
  Lemma~\thref{l:Pkd-sp},
  we have $\tp_0\in\matPkpidmi$, $p_1\in\matPkd$, and
  $p=\tp_0+X_d\,p_1$, {\ie} $P(k+1)$ holds.\\
  This concludes the induction on~$k$, and we have existence for all
  $k\geq1$.

  \proofparskip{Uniqueness}
  Let $p\in\matPkd$, $\tq_0,\tr_0\in\matPkdmi$, and $q_1,r_1\in\matPkmid$, such
  that,
  \[
    \forall \xx \in \matRd,\quad
    p (\xx) = \tq_0 (\txx) + x_d \, q_1 (\xx)
    = \tr_0 (\txx) + x_d \, r_1 (\xx).
  \]
  Let~$\txx\in\matRdmi$.
  Let~$\xx\eqdef(\txx,0)$.
  Then, we have $\tq_0(\txx)=\tr_0(\txx)$, {\ie} $\tq_0=\tr_0$, thus
  $X_d(q_1-r_1)=0$.
  Thus, from
  Lemma~\thref{l:Pkd-sp},
  Definition~\thref{d:pol-space-Pkd},
  Lemma~\thref{l:monom-free-in-Pkd}, and
  \assume{the definition of freedom},
  we have $q_1-r_1\in\matPkmid$ and there exists
  $(a_\aalpha)_{\aalpha\in\calAkmid}\in\matR$ such that
  $q_1-r_1=\sum_{\aalpha\in\calAkmid}a_\aalpha\XX^\aalpha$, and thus
  \[
    X_d (q_1 - r_1)
    = \sum_{\aalpha \in \calAkmid} a_\aalpha \XX^{\aalpha + \ee_d}
    = 0 \; (\mbox{in } \matPkd),
  \]
  which implies $a_\aalpha=0$ for all $\aalpha\in\calAkmid$, {\ie} $q_1=r_1$.\\
  Therefore, we have uniqueness.
\end{proof}

\begin{remark}
  \label{r:isom-dim-Pkd}
  Lemmas~\ref{l:decomp-Pkd} and~\ref{l:isom-Pkd-Pkdm1xPkm1d} allow to write
  $\dim(\matPkd)=\dim(\matPkdmi)+\dim(\matPkmid)$.
  Actually, this could be a way to prove (by induction) that
  $\dim(\matPkd)=\binomkpdd$.
  But, we preferred to prove directly that $\card(\calCkd)=\binom{k+d-1}{d-1}$,
  by working on slices of~$\calCkd$.
\end{remark}

\begin{lemma}[isomorphism between $\matPkd$ and $\matPkdmi\times\matPkmid$]
  \label{l:isom-Pkd-Pkdm1xPkm1d}
  \mbox{}\\
  Let~$d\geq2$.
  Let~$k\geq1$.
  Let $\zkd\eqdef\left(p\in\matPkd\longmapsto
    (\tp_0,p_1)\in\matPkdmi\times\matPkmid\right)$ with $p=\tp_0+X_d\,p_1$.\\
  Then, $\zkd$ is an isomorphism.
\end{lemma}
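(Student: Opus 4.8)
The plan is to exhibit $\zkd$ as the inverse of the obvious reconstruction map, and then to lean on the preceding decomposition lemma together with the fact that the inverse of a linear isomorphism is again linear. Concretely, I would first introduce
\[
  \sigma \eqdef \left(
    (\tp_0, p_1) \in \matPkdmi \times \matPkmid
    \longmapsto \tp_0 + X_d \, p_1 \in \matPkd
  \right),
\]
where, exactly as in Lemma~\ref{l:decomp-Pkd}, $\tp_0$ is read as a polynomial in $d$ variables by ignoring the last one. The first task is to check that $\sigma$ really maps into $\matPkd$ and is linear: the embedded $\tp_0$ lies in $\matPkd$ because appending a zero last exponent does not change the length of a multi-index, while $X_d\,p_1 = \XX^{\ee_d}\,p_1$ lies in $\matPkd$ by Lemma~\ref{l:prod-monom-polynom} (taking $\aalpha \eqdef \ee_d$, so $\len{\aalpha} = 1$ and $1 + (k-1) = k$); closure of the sum then follows from Lemma~\ref{l:Pkd-sp}. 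Linearity of $\sigma$ is immediate from distributivity of the linear operations on functions.

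The key input is then Lemma~\ref{l:decomp-Pkd}. Its existence part says that every $p \in \matPkd$ can be written as $\tp_0 + X_d\,p_1$, which is precisely surjectivity of $\sigma$; its uniqueness part says that such a representation is unique, which is precisely injectivity of $\sigma$. Hence $\sigma$ is a bijective linear map, i.e.\ an isomorphism in $\LSp{\matPkdmi \times \matPkmid}{\matPkd}$ (Definition~\ref{LM-d:isomorphism}), after noting that $\matPkdmi \times \matPkmid$ is a {\vectorspace} as a product of {\vectorspace}s.

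Finally, I would observe that $\zkd$ is nothing but $\sigma^{-1}$: by construction $\zkd(p) = (\tp_0, p_1)$ is defined to be the unique pair with $\sigma(\tp_0, p_1) = p$. Therefore $\zkd$ is bijective, and by Lemma~\ref{l:inverse-of-isomorphism-is-linear-map} it is a linear map, hence an isomorphism. The only genuinely delicate point is the bookkeeping around the embedding $\matPkdmi \hookrightarrow \matPkd$: one must ensure that the tilde reading of $\tp_0$ as a polynomial in $d$ variables is used consistently in the definition of $\zkd$, in the statement of Lemma~\ref{l:decomp-Pkd}, and in the definition of $\sigma$, so that $\zkd$ and $\sigma$ are genuinely mutual inverses. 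Everything else is a direct transcription of the decomposition lemma, with no new computation required.
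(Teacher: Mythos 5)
Your proof is correct, but it takes a genuinely different route from the paper's. The paper proves directly that $\zkd$ is linear (by applying the uniqueness clause of Lemma~\ref{l:decomp-Pkd} to $\lambda p+\mu q$) and injective (from $\zkd(p)=0$ one gets $p=0+X_d\,0=0$), and then obtains surjectivity, hence bijectivity, from the dimension count $\dim\matPkd=\binom{k+d}{d}=\binom{k+d-1}{d-1}+\binom{k-1+d}{d}=\dim(\matPkdmi\times\matPkmid)$, i.e.\ Pascal's rule of Lemma~\ref{l:prop-binom-coef} combined with Lemma~\ref{l:inj-or-surj-and-dim-implies-bij}. You instead construct the reconstruction map $\sigma$ explicitly, read its surjectivity and injectivity straight off the existence and uniqueness clauses of Lemma~\ref{l:decomp-Pkd}, and transfer linearity to $\zkd=\sigma^{-1}$ via Lemma~\ref{l:inverse-of-isomorphism-is-linear-map}; the well-definedness of $\sigma$ is correctly reduced to Lemmas~\ref{l:prod-monom-polynom} and~\ref{l:Pkd-sp} together with the embedding underlying Lemma~\ref{l:Pkd-nondecr-d}. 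Your route buys independence from the combinatorial identity: no dimension computation is needed anywhere. This is exactly what would be required to carry out the alternative sketched in Remark~\ref{r:isom-dim-Pkd}, namely deriving $\dim\matPkd=\binom{k+d}{d}$ by induction \emph{from} the isomorphism --- the paper's own proof cannot serve that purpose, since it consumes the dimension formula as an input. What the paper's route buys in exchange is a direct, self-contained verification of the linearity of $\zkd$ itself, without the detour through the inverse map. Both arguments are sound; your bookkeeping caveat about using the tilde embedding of $\matPkdmi$ into $\matPkd$ consistently is the right thing to watch, and it is handled the same way in Lemma~\ref{l:decomp-Pkd}.
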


\begin{proof}
  From
  Lemma~\thref{l:decomp-Pkd},
  $\zkd$~is well-defined for all $p\in\matPkd$.

  \proofparskip{Linearity}
  Let $p,q\in\matPkd$, and $\lambda,\mu\in\matR$.
  Then, from
  Lemma~\thref{l:decomp-Pkd},
  \assume{commutative ring properties of~$\matR$},
  Lemma~\thref{l:Pkd-sp}, and
  Lemma~\thref{LM-l:closed-under-linear-combination-is-subspace},
  there exist unique $\tp_0,\tq_0\in\matPkdmi$ and $p_1,q_1\in\matPkmid$ such
  that $p=\tp_0+X_dp_1$ and $q=\tq_0+X_dq_1$, thus
  $\lambda p+\mu q=\lambda\tp_0+\mu\tq_0+X_d(\lambda p_1+\mu q_1)$ with
  $\lambda\tp_0+\mu\tq_0\in\matPkdmi$ and $\lambda p_1+\mu q_1\in\matPkmid$.
  Hence, from
  Lemma~\threfc{l:decomp-Pkd}{for $\lambda p+\mu q$, uniqueness},
  Definition~\thref{LM-d:product-vector-operations},
  Lemma~\thref{LM-l:product-is-space}, and
  Lemma~\thref{LM-l:linear-map-preserves-linear-combinations},
  we have
  \[
    \zkd (\lambda p + \mu q)
    = (\lambda \tp_0 + \mu \tq_0, \lambda p_1 + \mu q_1)
    = \lambda (\tp_0, p_1) + \mu (\tq_0, q_1)
    = \lambda \zkd (p) + \mu \zkd (q),
  \]
  and~$\zkd$ is a linear map from~$\matPkd$ to the product {\vectorspace}
  $\matPkdmi\times\matPkmid$.

  \proofparskip{Injectivity}
  Let~$p\in\matPkd$.
  Assume that $\zkd(p)=0$.
  Then, from
  Lemma~\thref{l:decomp-Pkd},
  Definition~\thref{LM-d:kernel}, and
  Lemma~\thref{LM-l:injective-linear-map-has-zero-kernel},
  there exist unique $\tp_0\in\matPkdmi$ and $p_1\in\matPkmid$ such that
  $p=\tp_0+X_dp_1$, thus $\zkd(p)=(\tp_0,p_1)=0$, $p=0+X_d0=0$, and~$\zkd$ is
  injective.

  \proofparskip{Dimension}
  From
  Lemma~\threfc{l:dim-Pkd}{three times},
  Lemma~\threfc{l:prop-binom-coef}{\eqref{e:prop-binom-coef-3}}, and
  \assume{the rule of dimension for product of spaces},
  we have
  \[
    \dim(\matPkd)
    = \binomkpdd
    = \binom{k + d - 1}{d - 1} + \binom{k - 1 + d}{d}
    = \dim (\matPkdmi) + \dim (\matPkmid)
    = \dim (\matPkdmi \times \matPkmid).
  \]

  \proofparskip{Isomorphism}
  Direct consequence of
  \assume{the definition of order (reflexivity)},
  Lemma~\thref{l:inj-or-surj-and-dim-implies-bij}, and
  Definition~\thref{LM-d:isomorphism}.
\end{proof}

\begin{remark}
  Note in the next lemma that, by an abuse of notation, for all $p\in\matPkd$,
  the function that sends~$(\xx,x_{d+1})$ to~$p(\xx)$ is sometimes still
  denoted~$p$.
\end{remark}


\begin{lemma}[$\matPkd$ is nondecreasing sequence in $d$]
  \label{l:Pkd-nondecr-d}
  \mbox{}\hfill
  Let~$k\in\matN$.\\
  Then, the sequence~$(\matPkd)_{d\geq1}$ is nondecreasing for the inclusion,
  in the sense that, for all $d\geq1$, for all $p\in\matPkd$, the function
  $((\xx,x_{d+1})\in\matRd\times\matR\longmapsto p(\xx)\in\matR)$ belongs
  to~$\matPkdpi$.
\end{lemma}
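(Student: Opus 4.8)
The plan is to reduce the statement to the level of monomials, using the description of $\matPkd$ as the linear span of monomials from Definition~\ref{d:pol-space-Pkd}. First I would take $p\in\matPkd$ and write $p=\sum_{\aalpha\in\calAkd}a_\aalpha\XX^\aalpha$ for some $(a_\aalpha)_{\aalpha\in\calAkd}\in\matR$. Since $\matPkdpi$ is a {\vectorspace} (Lemma~\ref{l:Pkd-sp}), it is closed under linear combinations, so it suffices to show that, for each $\aalpha\in\calAkd$, the function $((\xx,x_{d+1})\longmapsto\XX^\aalpha(\xx))$ lies in $\matPkdpi$; the general case then follows by taking the corresponding linear combination of these functions.

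The key observation is that this function is itself a monomial in $d+1$ variables. Indeed, setting $\bbeta\eqdef(\aalpha,0)=(\alpha_1,\ldots,\alpha_d,0)\in\matN^{d+1}$, Definition~\ref{d:monom-kd} gives $\XX^\bbeta(\xx,x_{d+1})=\prod_{i=1}^{d}x_i^{\alpha_i}\cdot x_{d+1}^{0}=\XX^\aalpha(\xx)$, so the two functions coincide. Moreover, from Definition~\ref{d:len-multi-ind} the appended zero does not change the length, $\len{\bbeta}=\len{\aalpha}$, and since $\aalpha\in\calAkd$ we have $\len{\aalpha}\leq k$ (Definition~\ref{d:multi-ind-Akd-Ckd}). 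Hence $\len{\bbeta}\leq k$, that is $\bbeta\in\calA^{d+1}_k$, and therefore $\XX^\bbeta\in\matPkdpi$ by Definition~\ref{d:pol-space-Pkd}.

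Combining the two steps, $p$ viewed as a function of $(\xx,x_{d+1})$ equals $\sum_{\aalpha\in\calAkd}a_\aalpha\XX^{(\aalpha,0)}$, a linear combination of elements of $\matPkdpi$, and hence belongs to $\matPkdpi$. I do not expect any genuine mathematical obstacle here: the only delicate point is the abuse of notation flagged in the preceding remark, namely the identification of a polynomial in $d$ variables with the function of $d+1$ variables that ignores the last coordinate. The argument above makes this identification precise by exhibiting the explicit multi-index $(\aalpha,0)$ that realizes each $d$-variable monomial as a genuine $(d+1)$-variable monomial of the same length.
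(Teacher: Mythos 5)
Your proof is correct, but it takes a genuinely different route from the paper's. You work directly at the level of monomials: each $\XX^\aalpha$ with $\aalpha\in\calAkd$ is identified with the $(d+1)$-variable monomial $\XX^{(\aalpha,0)}$, whose multi-index has the same length, hence lies in $\calA^{d+1}_k$, and you conclude by linearity. The paper instead reuses the Euclidean-division isomorphism $\Zkd{k}{d+1}$ between $\matPkdpi$ and $\matPkd\times\matPdpi{k-1}$ (Lemma~\ref{l:isom-Pkd-Pkdm1xPkm1d}), applying its inverse to the pair $(p,0)$; since that lemma requires $k\geq1$, the paper must treat $k=0$ separately via Lemma~\ref{l:pol-space-P0d-P1d}. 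Your argument is more elementary and uniform in $k$ (no case split, no appeal to the decomposition machinery), at the cost of manipulating an explicit monomial expansion; the paper's version is shorter on the page because it delegates all the work to the already-proved isomorphism, which is the tool it systematically uses to relate $\matPkd$ to lower-dimensional spaces. One minor point worth making explicit in your write-up: the coefficients $(a_\aalpha)$ in the expansion of $p$ need not be unique a priori, but this is harmless since you only need one representation to exhibit the function of $d+1$ variables as an element of the span.
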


\begin{proof}
  Let~$d\geq1$.
  Let~$p\in\matPkd$.

  \proofparskip{Case $k=0$}
  Direct consequence of
  Lemma~\thref{l:pol-space-P0d-P1d}.

  \proofparskip{Case $k\geq1$}
  Let~$\tp_0\eqdef p\in\matPkd$ and~$p_1\eqdef0$.
  Let~$\xx\in\matRd$ and $x_{d+1}\in\matR$.
  Then, from
  Lemma~\threfc{l:Pkd-sp}{thus $p_1=0\in\matPkmidpi$},
  Lemma~\thref{l:isom-Pkd-Pkdm1xPkm1d}, and
  Definition~\thref{d:pol-space-Pkd},
  we have $\hp=(\Zkd{k}{d+1})^{-1}(\tp_0,p_1)\in\matPkdpi$, and
  $\hp(\xx,x_{d+1})=p(\xx)+x_{d+1}0=p(\xx)$.
\end{proof}

\begin{lemma}[expression of $\matPkd$ as polynomial of $x_d$]
  \label{l:Pkd-as-pol-xd}
  \mbox{}\hfill
  Let~$d\geq2$.
  Let~$k\in\matN$.
  Let~$p\in\matPkd$.\\
  Then, for all $i\in[0..k]$, there exists a unique $\tr_i\in\matPkmidmi$, such
  that
  \begin{equation}
    \label{e:Pkd-as-pol-xd}
    \forall \xx \in \matRd,\quad
    p (\xx) = \sum_{i = 0}^k \tr_i (\txx) x_d^i
    \qquad \left( \mbox{{\ie} } p = \sum_{i = 0}^k \tr_i X_d^i \right).
  \end{equation}
  Thus, for all $\txx\in\matRdmi$, the function
  $(x_d\longmapsto p(\txx,x_d))$ belongs to~$\matPki$.
\end{lemma}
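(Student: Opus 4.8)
The plan is to establish existence and uniqueness simultaneously by induction on~$k$, peeling off one power of~$X_d$ at each step via the Euclidean division of Lemma~\ref{l:decomp-Pkd}. For the base case $k=0$, any $p\in\matPod$ is a constant by Lemma~\ref{l:pol-space-P0d-P1d} (equivalently, I could invoke the isomorphism of Lemma~\ref{l:isom-P0d-P0dm1}); I would then take $\tr_0\in\matPodmi$ to be that same constant, so that $p=\tr_0 X_d^0$ with the single coefficient $\tr_0(\txx)=p(\xx)$ forced, giving both existence and uniqueness at once.

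For the inductive step (existence), assume the statement holds for~$k$ and let $p\in\matPkpid$. Since $k+1\geq1$, Lemma~\ref{l:decomp-Pkd} provides unique $\tp_0\in\matPkpidmi$ and $p_1\in\matPkd$ with $p=\tp_0+X_d\,p_1$. Applying the induction hypothesis to $p_1$ yields $\ts_i\in\matPdmi{k-i}$ for $i\in[0..k]$ with $p_1=\sum_{i=0}^k\ts_i X_d^i$, hence $X_d\,p_1=\sum_{i=0}^k\ts_i X_d^{i+1}=\sum_{j=1}^{k+1}\ts_{j-1}X_d^j$. Setting $\tr_0\eqdef\tp_0$ and $\tr_j\eqdef\ts_{j-1}$ for $j\in[1..k+1]$ gives $p=\sum_{i=0}^{k+1}\tr_i X_d^i$. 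The degree bookkeeping to verify is $\tr_j=\ts_{j-1}\in\matPdmi{k-(j-1)}=\matPdmi{(k+1)-j}$, which holds because $k-(j-1)=(k+1)-j$, while $\tr_0=\tp_0\in\matPkpidmi=\matPdmi{(k+1)-0}$.

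For uniqueness at level $k+1$, I would take an arbitrary representation $p=\sum_{i=0}^{k+1}\tr_i X_d^i$ with $\tr_i\in\matPdmi{(k+1)-i}$ and rewrite it as $p=\tr_0+X_d\bigl(\sum_{j=0}^k\tr_{j+1}X_d^j\bigr)$. To apply the uniqueness clause of Lemma~\ref{l:decomp-Pkd}, the inner sum must be shown to lie in~$\matPkd$: each $\tr_{j+1}\in\matPdmi{k-j}$, regarded as a $d$-variable polynomial through Lemma~\ref{l:Pkd-nondecr-d} and multiplied by $X_d^j$, lies in~$\matPkd$ by Lemma~\ref{l:prod-monom-polynom} (degree at most $(k-j)+j=k$), and $\matPkd$ is a {\vectorspace}; likewise $\tr_0\in\matPkpidmi$. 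Lemma~\ref{l:decomp-Pkd} then forces $\tr_0=\tp_0$ and $\sum_{j=0}^k\tr_{j+1}X_d^j=p_1$, and the induction hypothesis applied to~$p_1$ forces $\tr_{j+1}=\ts_j$ for all~$j$, pinning down every coefficient. The step I expect to be the main obstacle is precisely this uniqueness argument, namely the careful passage between ``a polynomial in~$x_d$ whose coefficients lie in the spaces $\matPdmi{(k+1)-i}$'' and ``an element of~$\matPkd$'' (the degree accounting together with the identification of $(d-1)$-variable polynomials as $d$-variable ones). A clean alternative, should the degree bookkeeping prove cumbersome, is to expand both candidate representations in the monomial basis and invoke the linear independence of $(\XX^\aalpha)_{\aalpha\in\calAkd}$ from Lemma~\ref{l:monom-free-in-Pkd}, using that the products $\tXX^{\taalpha}X_d^i=\XX^{(\taalpha,i)}$ are pairwise distinct monomials of~$\matPkd$. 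Finally, the closing assertion that $(x_d\longmapsto p(\txx,x_d))\in\matPki$ reads off directly from the displayed identity together with Definition~\ref{d:pol-space-Pk1}.
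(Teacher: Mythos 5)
Your proof is correct, and its existence half is essentially the paper's: the same induction on~$k$, peeling one power of~$X_d$ per step via Lemma~\ref{l:decomp-Pkd} and re-indexing the coefficients of the quotient (the paper merely splits off a separate case~$P(1)$ before the general step, and phrases the base case through the isomorphism~$\xi^d$ of Lemma~\ref{l:isom-P0d-P0dm1} rather than ``constants'', which is the same thing). Where you genuinely diverge is uniqueness. The paper does it in one stroke, with no induction and no degree bookkeeping: fix $\txx\in\matRdmi$, subtract the two candidate representations to get $\sum_{i=0}^k(\tr_i(\txx)-\ts_i(\txx))\,X_d^i=0$ as a univariate polynomial in~$x_d$, and invoke the freedom of $(1,X,\ldots,X^k)$ in~$\matPki$ (Lemma~\ref{l:monom-free-in-Pk1}) to force $\tr_i(\txx)=\ts_i(\txx)$ for every~$i$ and every~$\txx$. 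Your inductive route through the uniqueness clause of Lemma~\ref{l:decomp-Pkd} works, but it is exactly the step you flag as the obstacle that makes it heavier: you must re-embed the $(d-1)$-variable coefficients into~$\matPkd$ (Lemmas~\ref{l:Pkd-nondecr-d} and~\ref{l:prod-monom-polynom}) before the uniqueness of the Euclidean division applies. What your approach buys is that it never leaves the decomposition machinery; what the paper's buys is brevity and independence from the existence induction. Your proposed fallback via the monomial basis of~$\matPkd$ and Lemma~\ref{l:monom-free-in-Pkd} is a third valid option, intermediate between the two.
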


\begin{proof}
  \proofpar{Existence}
  For all $k\in\matN$, let
  $P(k)\eqdef
  [\forall p\in\matPkd,(\forall i\in[0..k],\exists\tr_i\in\matPkmidmi),
  p=\sum_{i=0}^k\tr_iX_d^i]$.\\
  \proofpar{Induction:~$P(0)$}
  Let $p\in\matPod$.
  Then, from
  Lemma~\thref{l:isom-P0d-P0dm1},
  $\tr_0\eqdef\xi^d(p)\in\matPodmi$ is such that, for all $\txx\in\matRdmi$,
  for all $x_d\in\matR$, $p(\txx,x_d)=\tr_0(\txx)$, {\ie}~$P(0)$ holds.\\
  \proofpar{Induction:~$P(1)$}
  Let $p\in\matPid$.
  Then, from
  Lemma~\thref{l:decomp-Pkd},
  there exists $\tr_0\in\matPidmi$ and $q_1\in\matPod$, such that
  $p=\tr_0+X_dq_1$.
  Thus, from
   Lemma~\thref{l:isom-P0d-P0dm1},
  $\tr_1\eqdef\xi^d(q_1)\in\matPodmi$ is such that, for all $\txx\in\matRdmi$,
  for all $x_d\in\matR$, $q_1(\txx,x_d)=\tr_1(\txx)$, hence
  $p=\tr_0+X_d\tr_1$ with $\tr_0\in\matPidmi$ and $\tr_1\in\matPodmi$,
  {\ie}~$P(1)$ holds.\\
  \proofpar{Induction: $P(k)\Implies P(k+1)$}
  Let~$k\geq1$.
  Assume that~$P(k)$ holds.
  Let~$p\in\matPkpid$.
  Then, from
  Lemma~\thref{l:decomp-Pkd},
  there exist $\tr_0\in\matPkpidmi$ and $q_1\in\matPkd$, such that
  $p=\tr_0+X_dq_1$.
  By induction hypothesis, for all $i\in[0..k]$, there exist
  $\ts_i\in\matPkmidmi$, such that $q_1=\sum_{i=0}^{k}\ts_iX_d^i$.
  For all $i\in[1..k+1]$, let $\tr_i\eqdef\ts_{i-1}\in\matPkpimidmi$.
  Then, from
  \assume{commutative ring properties of~$\matR$},
  we have $p=\sum_{i=0}^{k+1}\tr_iX_d^i$, {\ie}~$P(k+1)$ holds.\\
  This concludes the induction on~$k$, and we have existence for all
  $k\in\matN$.

  \proofparskip{Uniqueness}
  Let~$p\in\matPkd$, and for all $i\in[0..k]$, let $\tr_i,\ts_i\in\matPkmidmi$
  such that
  \[
    p = \sum_{i = 0}^k \tr_i X_d^i = \sum_{i = 0}^k \ts_i X_d^i.
  \]
  Let~$\txx\in\matRdmi$.
  Then, from
  \assume{ring properties of~$\matR$}, and
  Lemma~\thref{l:monom-free-in-Pk1},
  we have $\sum_{i=0}^k(\tr_i(\txx)-\ts_i(\txx))X_d^i=0$, and thus, for all
  $i\in[0..k]$, $\tr_i(\txx)-\ts_i(\txx)=0$, {\ie} $\tr_i=\ts_i$.

  Therefore, we have uniqueness.
\end{proof}

\begin{remark}
  Lemma~\ref{l:Pkd-as-pol-xd} expresses a $d$-multivariate polynomial as an
  univariate polynomial with coefficients that are $(d-1)$-multivariate
  polynomials.
  The proof uses Horner's rule,
  \[
    p = \tr_0 + X_d (\tr_1 + X_d (\tr_2 + X_d
    (\ldots + X_d (\tr_{k - 1} + X_d \tr_k) \ldots))).
  \]
\end{remark}

\subsection{Product of polynomials (alternate)}
\label{ss:prod-polynom-alt}

\begin{remark}
  The next lemma provides an alternate proof for Lemma~\ref{l:prod-2-polynom}.
  This second proof avoids the double sum on multi-indices that is not
  explicitly formulated in the first proof.
  It relies on the decomposition of polynomials of Lemma~\ref{l:decomp-Pkd} and
  on a strong double induction.
\end{remark}

\begin{lemma}[product of two polynomials (alternate proof)]
  \label{l:prod-2-polynom-alt-proof}
  \mbox{}\\
  Let~$d\geq1$.
  Let~$k,l\in\matN$.
  Let~$p\in\matPkd$ and~$q\in\matPld$.
  Then, we have $pq\in\matPkpld$.
\end{lemma}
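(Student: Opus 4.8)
The plan is to prove the statement by a single application of the strong double induction of Lemma~\ref{l:strong-double-induction}, carried out not on the two degrees $(k,l)$ at fixed dimension but on the pair $(m,n)\eqdef(d-1,k+l)\in\matN^2$. Concretely, I would fix the predicate $P(m,n)$ asserting that, writing $d\eqdef m+1$, for all $k,l\in\matN$ with $k+l=n$ and all $p\in\matPkd$, $q\in\matPld$, one has $pq\in\matPkpld$. Establishing $P(m,n)$ for all $(m,n)$ then yields the lemma for every dimension $d=m+1\geq1$ and all degrees $k,l$.

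For the inductive step, assuming $P$ on every node $(m_1,n_1)$ with $m_1\le m$, $n_1\le n$ and $(m_1,n_1)\ne(m,n)$, I would first dispatch the easy configurations. If $m=0$ (\ie $d=1$), the result is immediate from the univariate product Lemma~\ref{l:prod-2-polynom-univ} combined with Lemma~\ref{l:pol-space-Pkd-for-d-eq-1-is-Pk1}. If $k=0$ or $l=0$ (in particular if $n=0$), then one factor is constant by Lemma~\ref{l:pol-space-P0d-P1d}, so $pq$ is a scalar multiple of the other factor and hence lies in $\matPkpld$ since that space is a {\vectorspace} (Lemma~\ref{l:Pkd-sp}).

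The core case is $d\ge2$ and $k,l\ge1$. Here I would decompose both factors with Lemma~\ref{l:decomp-Pkd}: $p=\tp_0+X_d\,p_1$ with $\tp_0\in\matPkdmi$, $p_1\in\matPkmid$, and likewise $q=\tq_0+X_d\,q_1$ with $\tq_0\in\matPldmi$, $q_1\in\matPlmid$. Expanding (commutative ring properties of~$\matR$, applied pointwise) gives
\[
  pq = \tp_0\tq_0 + X_d\,(\tp_0 q_1 + p_1\tq_0) + X_d^2\,p_1 q_1.
\]
Each piece is then placed in $\matPkpld$ by a distinct induction hypothesis: the term $\tp_0\tq_0$ is a product in dimension~$d-1$ of total degree $k+l$, so $P(m-1,n)$ gives $\tp_0\tq_0\in\matPkpldmi$, which embeds into $\matPkpld$ by Lemma~\ref{l:Pkd-nondecr-d}; the terms $\tp_0 q_1$ and $p_1\tq_0$ have total degree $k+l-1$ in dimension~$d$, so $P(m,n-1)$ puts them in $\matP^d_{k+l-1}$, after which multiplication by $X_d=\XX^{\ee_d}$ lands them in $\matPkpld$ by Lemma~\ref{l:prod-monom-polynom}; and $p_1 q_1$ has total degree $k+l-2$, so $P(m,n-2)$ puts it in $\matP^d_{k+l-2}$, after which multiplication by $X_d^2=\XX^{2\ee_d}$ again yields $\matPkpld$ via the same lemma. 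Summing the four pieces and invoking Lemma~\ref{l:Pkd-sp} closes the step.

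The main obstacle, and the reason a naive double induction on $(k,l)$ at fixed~$d$ does not close, is precisely the term $\tp_0\tq_0$: both factors are free of $x_d$, so their product is genuinely a product in dimension~$d-1$ of the full degrees $k$ and~$l$, with no drop in total degree. This forces the induction to decrease the dimension as well, which is exactly the role of the index $m=d-1$, while the companion index $n=k+l$ decreases for the three remaining terms. The bookkeeping to verify is that $(m-1,n)$, $(m,n-1)$, and $(m,n-2)$ are all admissible predecessor nodes in the core case (using $m\ge1$ and $n=k+l\ge2$), and that Lemma~\ref{l:prod-monom-polynom} is applied with $\aalpha\eqdef\ee_d$ and $\aalpha\eqdef2\ee_d$ respectively.
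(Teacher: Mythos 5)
Your proposal is correct and follows essentially the same route as the paper's proof: a strong double induction on dimension and total degree, with the core case handled by decomposing both factors via Lemma~\ref{l:decomp-Pkd} and dispatching the terms $\tp_0\tq_0$, $X_d(\tp_0 q_1+p_1\tq_0)$, and $X_d^2\,p_1q_1$ to the induction hypotheses at $(d-1,n)$, $(d,n-1)$, and $(d,n-2)$ respectively. The only differences are cosmetic bookkeeping (you reindex by $m=d-1$ and phrase the predicate with $k+l=n$ rather than $k+l\leq n$), so no further comment is needed.
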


\begin{proof}
  For all~$d\geq1$, for all~$n\in\matN$, let~$\PropPP$ be the property defined
  by
  \[
  \PropPP (d, n)
  \eqdef \left[
    \forall k, l \in \matN,\;
    \forall p \in \matPkd,\;
    \forall q \in \matPld,\quad
    k + l \leq n \IMPLIES pq \in \matPd{n}
    \right].
  \]
  Then, from
  Lemma~\threfc{l:Pkd-nondecr-k}{%
    thus $k+l\leq n\Implies\matPkpld\subset\matPnd$},
  the result to prove is equivalent to establish~$\PropPP$.
  Let~$d\geq1$ and~$n\in\matN$.

  \proofparskip{Strong double induction}\\
  Assume that for all $d_1,n_1\in\matN$, $1\leq d_1\leq d$, $n_1\leq n$, and
  $(d_1,n_1)\neq(d,n)$ implies $\PropPP(d_1,n_1)$.
  Let us show~$\PropPP(d,n)$.
  Let~$k,l\in\matN$.
  Let~$p\in\matPkd$ and~$q\in\matPld$.
  Assume that~$k+l\leq n$.
  Let us show $pq\in\matPnd$.

  \proofparskip{Case $d=1$}
  Direct consequence of
  Lemma~\thref{l:prod-2-polynom-univ}.

  \proofparskip{Case $k=0$ or $l=0$}
  Direct consequence of
  Lemma~\threfc{l:pol-space-P0d-P1d}{%
    thus $p=a_0\XX^\zzero$ (resp. $q=b_0\XX^\zzero$) is constant},
  \assume{commutative ring properties of $\matR$
    (thus $pq=a_0q$ (resp. $pq=b_0p$))},
  Lemma~\thref{l:Pkd-sp},
  Lemma~\threfc{LM-l:closed-under-vector-operations-is-subspace}{%
    closed under scalar multiplication, thus $pq$ is in $\matPkpld$}, and
  Lemma~\threfc{l:Pkd-nondecr-k}{thus $\matPkpld\subset\matPnd$}.

  \proofparskip{Case $k+l<n$}
  Then, by hypothesis, $\PropPP(d,k+l)$ holds, {\ie} $pq\in\matPnd$.

  \proofparskip{Case $d\geq2$, $k\neq0$, $l\neq0$ and $k+l=n\geq1$}\\
  Then, from
  Lemma~\thref{l:decomp-Pkd}, and
  \assume{commutative ring properties of $\matR$},
  there exist $\tp_0\in\matPkdmi$, $p_1\in\matPkmid$, $\tq_0\in\matPldmi$ and
  $q_1\in\matPlmid$, such that
  \[
    p = \tp_0 + X_d \, p_1,\quad
    q = \tq_0 + X_d \, q_1,\quad \mbox{thus} \quad
    pq = \tp_0 \tq_0 + X_d \, (\tp_0 q_1 + p_1 \tq_0) + X_d^2 \, p_1 q_1.
  \]
  Then, by hypothesis (with $d_1\eqdef d-1<d$ and $n_1\eqdef n$), and from
  Lemma~\thref{l:Pkd-nondecr-d},
  $\PropPP(d-1,n)$ holds, {\ie} $\tp_0\tq_0\in\matPndmi$, and thus
  \[
    \left(
      \xx \eqdef (\txx, x_d) \in \matRd \longmapsto (\tp_0 \tq_0) (\txx)
    \right)
    \in \matPnd.
  \]
  Next, from
  Lemma~\thref{l:Pkd-nondecr-d}, and
  by hypothesis (with $d_1\eqdef d$ and $n_1\eqdef n-1<n$),
  we have
  $(\xx\mapsto\tp_0(\txx))\in\matPkd$,
  $(\xx\mapsto\tq_0(\txx))\in\matPld$, and
  $\PropPP(d,n-1)$ holds, {\ie} $(\xx\mapsto\tp_0(\txx)q_1(\xx))$ and
  $(\xx\mapsto p_1(\xx)\tq_0(\txx))$ belong to~$\matPnmid$.
  Hence, from
  Lemma~\thref{l:Pkd-sp},
  Lemma~\thref{LM-l:closed-under-vector-operations-is-subspace}, and
  Lemma~\threfc{l:prod-monom-polynom}{with $\aalpha\eqdef\ee_d$},
  we have $\tp_0q_1+p_1\tq_0\in\matPnmid$, and
  $X_d\,(\tp_0q_1+p_1\tq_0)\in\matPnd$.\\
  Moreover, by hypothesis (with $d_1\eqdef d$ and $n_1\eqdef n-2<n$), and from
  Lemma~\threfc{l:prod-monom-polynom}{with $\aalpha\eqdef2\ee_d$},
  $\PropPP(d,n-2)$ holds, {\ie} $p_1q_1\in\matPnmiid$, and
  $X_d^2\,p_1q_1\in\matPnd$.\\
  Finally, from
  Lemma~\thref{l:Pkd-sp}, and
  Lemma~\thref{LM-l:closed-under-vector-operations-is-subspace},
  we have $pq\in\matPnd$.

  \medskip\noindent
  Therefore, we always have $pq\in\matPnd$, and from
  Lemma~\thref{l:strong-double-induction},
  the property holds for all~$d\geq1$ and for all~$n\in\matN$.
\end{proof}

\begin{lemma}[product of polynomials]
  \label{l:prod-polynom}
  \mbox{}\hfill
  Let~$d,n\geq1$.
  Let~$(k_i)_{i\in[1..n]}\in\matN$.
  Let~$k\eqdef\sum_{i=1}^nk_i$.
  For all $i\in[1..n]$, let~$p_i\in\matPd{k_i}$.
  Then, we have $\prod_{i=1}^np_i\in\matPkd$.
\end{lemma}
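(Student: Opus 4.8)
The plan is to prove the statement by induction on $n\geq1$, peeling off one factor at each step and invoking the product of two polynomials, Lemma~\ref{l:prod-2-polynom}.

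For the base case $n=1$, we have $k=k_1$, and the one-term product reduces to $p_1\in\matPd{k_1}=\matPkd$, so there is nothing to prove; since $n\geq1$, no empty-product convention is involved.

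For the inductive step, I would assume the result for some fixed $n\geq1$ and consider a family $(p_i)_{i\in[1..n+1]}$ with $p_i\in\matPd{k_i}$ for all $i$. Setting $k'\eqdef\sum_{i=1}^n k_i$, the induction hypothesis yields $\prod_{i=1}^n p_i\in\matPd{k'}$. Then, using the commutative ring structure of~$\matR$ to write $\prod_{i=1}^{n+1}p_i=\left(\prod_{i=1}^n p_i\right)p_{n+1}$, I would apply Lemma~\ref{l:prod-2-polynom} to the two factors, which lie in $\matPd{k'}$ and $\matPd{k_{n+1}}$ respectively, to obtain that the product belongs to $\matPd{k'+k_{n+1}}$. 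Since $k'+k_{n+1}=\sum_{i=1}^{n+1}k_i$ is precisely the incremented value of~$k$, this establishes the claim for $n+1$ and closes the induction.

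This argument is essentially bookkeeping, so I do not anticipate any genuine obstacle. The only points requiring a little care are maintaining the running degree sum $k'\eqdef\sum_{i=1}^n k_i$ across the induction, and the associativity of the finite product that lets the last factor be separated; one could symmetrically peel off the first factor instead, or replace the appeal to Lemma~\ref{l:prod-2-polynom} by its variant Lemma~\ref{l:prod-2-polynom-alt-proof}.
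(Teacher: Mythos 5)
Your proof is correct and matches the paper's own argument, which is simply ``by induction on~$n$'' from Lemma~\ref{l:prod-2-polynom} (or its alternate Lemma~\ref{l:prod-2-polynom-alt-proof}); you have just written out the bookkeeping the paper leaves implicit.
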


\begin{proof}
  Direct consequence (by induction on~$n$) of
  Lemma~\thref{l:prod-2-polynom}, or
  Lemma~\thref{l:prod-2-polynom-alt-proof}.
\end{proof}

\subsection{Composition of polynomials and affine mappings}
\label{ss:compose-polynom-aff-map}

\begin{lemma}[affine mapping of monomials is $\matPkl$]
  \label{l:aff-mapping-of-monom-is-Pkl}
  \mbox{}\\
  Let~$l,d\geq1$.
  Let~$k\in\matN$.
  Let~$f\in\AffRlRd$.
  Let~$\aalpha\in\calCkd$.
  Then, we have $\XX^\aalpha\circ f\in\matPkl$.
\end{lemma}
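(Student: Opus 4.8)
The plan is to reduce the composition to a product of affine factors and then invoke the product-of-polynomials lemma twice. First I would use Lemma~\ref{l:equiv-def-aff-map-finite-dim} (with its $p$ taken to be~$l$ and its $q$ taken to be~$d$) to obtain a matrix $A \in \calM_{d,l}(\matR)$ and a vector $\cc \in \matRd$ such that $f(\xx) = \cc + A\xx$ for all $\xx \in \matRl$. Writing $f_i \eqdef (\xx \mapsto c_i + \underline{A}_i\,\xx)$ for the $i$-th component of $f$ (for $i \in [1..d]$), each $f_i$ is of the form ``a constant plus a $1 \times l$ matrix applied to~$\xx$'', so Lemma~\ref{l:pol-space-P0d-P1d} (with its dimension~$d$ taken to be~$l$) yields $f_i \in \matPil$.

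Next, from Definition~\ref{d:monom-kd} and the definition of composition I would record the key identity
\begin{equation*}
  \XX^\aalpha \circ f = \prod_{i=1}^d f_i^{\alpha_i},
\end{equation*}
which holds because $(\XX^\aalpha \circ f)(\xx) = \prod_{i=1}^d (f(\xx)_i)^{\alpha_i} = \prod_{i=1}^d f_i(\xx)^{\alpha_i}$. I would then check that each factor $f_i^{\alpha_i}$ belongs to $\matPl{\alpha_i}$: when $\alpha_i = 0$ this factor is the constant~$1$, which lies in $\matPl{0}$; when $\alpha_i \geq 1$ it is a product of $\alpha_i$ copies of $f_i \in \matPil$, so Lemma~\ref{l:prod-polynom} (applied in dimension~$l$, with $n \eqdef \alpha_i$ factors each of degree~$1$) gives $f_i^{\alpha_i} \in \matPl{\alpha_i}$.

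Finally, I would apply Lemma~\ref{l:prod-polynom} once more, now to the $d$ factors $f_i^{\alpha_i}$ in dimension~$l$: their degrees $\alpha_1, \ldots, \alpha_d$ sum to $\len{\aalpha} = k$ since $\aalpha \in \calCkd$, so the product $\prod_{i=1}^d f_i^{\alpha_i}$ lies in $\matPl{k} = \matPkl$, which is exactly the claim. Note that this outer application always has $d \geq 1$ factors, so no separate treatment of the case $k = 0$ (the constant monomial) is required. The argument is essentially bookkeeping; the only points needing care are splitting on whether $\alpha_i$ vanishes (the inner use of the product lemma needs $\alpha_i \geq 1$) and keeping the ambient variable dimension~$l$ clearly distinct from the number of factors~$d$ throughout the two invocations of Lemma~\ref{l:prod-polynom}.
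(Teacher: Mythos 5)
Your proof is correct and follows essentially the same route as the paper's: decompose $\XX^\aalpha\circ f$ into the product $\prod_{i=1}^d f_i^{\alpha_i}$ with each component $f_i\in\matPil$, then apply Lemma~\ref{l:prod-polynom} first with $n\eqdef\alpha_i$ and then with $n\eqdef d$. Your explicit handling of the case $\alpha_i=0$ (where the inner product is empty and Lemma~\ref{l:prod-polynom} requires $n\geq1$) is a small point of extra care that the paper's proof glosses over.
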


\begin{proof}
  Let~$q\eqdef\XX^\aalpha\circ f$.\\
  For all $i\in[1..d]$, let~$f_i$ be the $i$-th component of~$f$.
  Let~$\yy\in\matRl$.
  Then, from
  Lemma~\thref{l:equiv-def-aff-map-finite-dim},
  Lemma~\thref{l:pol-space-P0d-P1d}, and
  Definition~\thref{d:monom-kd},
  $f_i$ belongs to~$\matPil$, and we have
  \[
    q (\yy) = \prod_{i = 1}^d \left[ f_i (\yy) \right]^{\alpha_i},
    \AND
    \left[ f_i (\yy) \right]^{\alpha_i} = \prod_{j = 1}^{\alpha_i} f_i (\yy).
  \]
  Thus, from
  Lemma~\threfc{l:prod-polynom}{%
    with $n\eqdef\alpha_i$ and $p_j\eqdef f_i\in\matPil$ for all
    $j\in[1..\alpha_i]$, then with $n\eqdef d$ and
    $p_i\eqdef f_i^{\alpha_i}\in\matPl{\alpha_i}$ for all $i\in[1..d]$,
    thus $q\in\matPl{\sum_{i=1}^d\alpha_i}$},
  Definition~\threfc{d:multi-ind-Akd-Ckd}{thus $\len{\aalpha}=k$}, and
  Definition~\thref{d:len-multi-ind},
  we have $q\in\matPkl$.
\end{proof}

\begin{lemma}[affine mapping of $\matPkd$ is $\matPkl$]
  \label{l:aff-mapping-of-Pkd-is-Pkl}
  \mbox{}\\
  Let~$l,d\geq1$.
  Let~$k\in\matN$.
  Let~$f\in\AffRlRd$.
  Let~$p\in\matPkd$.
  Then, we have $p\circ f\in\matPkl$.
\end{lemma}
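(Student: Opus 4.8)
The plan is to reduce the statement about an arbitrary polynomial to the monomial case just established in Lemma~\ref{l:aff-mapping-of-monom-is-Pkl}. First I would invoke Definition~\ref{d:pol-space-Pkd} to write $p$ as a finite linear combination of monomials, $p = \sum_{\aalpha \in \calAkd} a_\aalpha \XX^\aalpha$ with coefficients $(a_\aalpha)_{\aalpha \in \calAkd} \in \matR$. Since composition with $f$ is computed pointwise, for every $\yy \in \matRl$ one has $(p \circ f)(\yy) = \sum_{\aalpha \in \calAkd} a_\aalpha (\XX^\aalpha \circ f)(\yy)$, that is, $p \circ f = \sum_{\aalpha \in \calAkd} a_\aalpha (\XX^\aalpha \circ f)$ as functions on~$\matRl$.

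Next I would treat each term separately. The one point requiring care is that Lemma~\ref{l:aff-mapping-of-monom-is-Pkl} is stated for multi-indices in~$\calCkd$ (length exactly~$k$), whereas the summation here ranges over~$\calAkd$ (length at most~$k$). For a fixed $\aalpha \in \calAkd$, I would set $m \eqdef \len{\aalpha} \leq k$, so that $\aalpha \in \calCd{m}$. Applying Lemma~\ref{l:aff-mapping-of-monom-is-Pkl} with $k$ replaced by~$m$ gives $\XX^\aalpha \circ f \in \matPl{m}$. Then, since $m \leq k$, Lemma~\ref{l:Pkd-nondecr-k} (monotonicity of the degree-filtration for the inclusion) yields $\matPl{m} \subset \matPkl$, hence $\XX^\aalpha \circ f \in \matPkl$.

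Finally I would conclude using the vector space structure: by Lemma~\ref{l:Pkd-sp}, $\matPkl$ is a {\vectorspace}, hence closed under linear combinations of its elements, so the finite sum $\sum_{\aalpha \in \calAkd} a_\aalpha (\XX^\aalpha \circ f)$ lies in~$\matPkl$, giving $p \circ f \in \matPkl$. I do not anticipate any genuine obstacle: the only subtlety is the bookkeeping between $\calAkd$ and $\calCkd$, which is resolved by the degree-monotonicity of Lemma~\ref{l:Pkd-nondecr-k}, and the remaining steps are the routine distribution of composition over a finite linear combination together with closure of a {\vectorspace} under such combinations.
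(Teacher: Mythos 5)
Your proposal is correct and follows essentially the same route as the paper: decompose $p$ into monomials via Definition~\ref{d:pol-space-Pkd}, distribute composition over the sum, apply Lemma~\ref{l:aff-mapping-of-monom-is-Pkl} to each monomial of length $m\leq k$, and conclude with Lemmas~\ref{l:Pkd-nondecr-k} and~\ref{l:Pkd-sp}. The paper merely phrases the bookkeeping between $\calAkd$ and the $\calCd{i}$ as an explicit double sum over layers (via Lemma~\ref{l:Ckd-layers-Akd}), which is the same observation you make pointwise for each $\aalpha$.
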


\begin{proof}
  From
  Definition~\thref{d:pol-space-Pkd},
  \assume{distributivity of composition over addition},
  Lemma~\thref{l:Ckd-layers-Akd},
  Lemma~\threfc{l:aff-mapping-of-monom-is-Pkl}{%
    thus $\XX^\aalpha\circ f\in\matPl{i}$ for all $i\in[0..k]$ and for all
    $\aalpha\in\calCd{i}$}, and
  Lemma~\threfc{l:Pkd-nondecr-k}{%
    thus $\matPl{i}\subset\matPkl$ for all $i\in[0..k]$}, and
  Lemma~\thref{l:Pkd-sp},
  we have
  \[
    q \eqdef p \circ f
    = \left( \sum_{\aalpha \in \calAkd} a_\aalpha \XX^\aalpha \right) \circ f
    = \sum_{\aalpha \in \calAkd} a_\aalpha (\XX^\aalpha \circ f)
    = \sum_{i = 0}^k \sum_{\aalpha \in \calCd{i}} a_\aalpha (\XX^\aalpha \circ f),
  \]
  and $q\in\matPkl$.
\end{proof}

\section{{\ToPPid} affine polynomials and affine geometric mapping}
\label{s:P1d-aff-pol-aff-geom-map}

\subsection{Reference affine Lagrange polynomials}
\label{ss:ref-aff-lag-pol}

\begin{definition}[reference Lagrange polynomials of $\matPid$]
  \label{d:lag-pol-P1d-ref}
  \mbox{}\hfill
  Let~$d\geq1$.
  Let~$i\in[0..d]$.
  Let~$\hxx\in\matRd$.
  The $i$-th {\em reference Lagrange polynomials of~$\matPid$} is denoted
  $\hcalL^{1,d}_i$, and is defined by
  \begin{align}
    \label{e:lag-pol-P1d-ref-0}
    (i = 0)&&
    \hcalL^{1, d}_0 (\hxx) &\eqdef 1 - \sum_{j = 1}^d \hx_j,&&\\
    \label{e:lag-pol-P1d-ref-1}
    (i \in [1..d])&&
    \hcalL^{1, d}_i (\hxx) &\eqdef \hx_i.&&
  \end{align}
\end{definition}

\begin{lemma}[reference Lagrange polynomials of $\matPid$ for $d=1$
    are reference Lagrange polynomials of $\matPki$ for $k=1$]
  \label{l:ref-lag-pol-P1d-for-d-eq-1-are-ref-lag-pol-Pk1-for-k}
  \mbox{}\\
  The reference Lagrange polynomials of~$\matPid$ for~$d\eqdef1$ and the
  reference Lagrange polynomials of~$\matPki$ for~$k\eqdef1$ from
  Lemma~\thref{l:lag-basis-Pk1-ref} coincide.
\end{lemma}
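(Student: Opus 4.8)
The plan is to verify the coincidence by unfolding both definitions in the single overlapping case $d=1$, $k=1$, and checking that the two resulting families of functions are literally equal.

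First I would specialize Definition~\ref{d:lag-pol-P1d-ref} to $d=1$. Writing the single reference variable as $\hxx=\hx$, so that $\hx_1=\hx$, the sum $\sum_{j=1}^1\hx_j$ reduces to $\hx$; hence $\hcalL^{1,1}_0(\hx)=1-\hx$ and $\hcalL^{1,1}_1(\hx)=\hx$.

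Next I would identify the reference Lagrange nodes of $\matPki$ for $k=1$. By Definition~\ref{d:lag-nodes-Pk1-ref}, since $k\geq1$ we have $\hat{h}=\frac1k=1$, so that $\ha_0=0$ and $\ha_1=1$; these two nodes are distinct. Using $\hcalL^{k,1}_i=\calL^{\famnodeki{\ha}}_i$ from Lemma~\ref{l:lag-basis-Pk1-ref} together with the product formula of Definition~\ref{d:lag-pol-Pk1}, I would then compute
\[
  \hcalL^{1,1}_0(x)=\frac{x-\ha_1}{\ha_0-\ha_1}=\frac{x-1}{-1}=1-x,
  \qquad
  \hcalL^{1,1}_1(x)=\frac{x-\ha_0}{\ha_1-\ha_0}=\frac{x}{1}=x.
\]

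Comparing the two computations shows that for each $i\in\{0,1\}$ the two reference Lagrange polynomials agree as functions on $\matR$, which is exactly the asserted coincidence. There is essentially no obstacle: the statement is a pure unfolding of definitions. The only point requiring a little care is the bookkeeping of the two different superscript conventions, $\hcalL^{1,d}_i$ for $\matPid$ versus $\hcalL^{k,1}_i$ for $\matPki$, which collapse to the same symbol $\hcalL^{1,1}_i$ precisely when $d=k=1$; one should double-check that the numerical nodes $0$ and $1$ indeed produce the affine functions $1-\hx$ and $\hx$ and not their transposition.
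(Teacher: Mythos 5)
Your proposal is correct and follows essentially the same route as the paper: unfold Definition~\ref{d:lag-pol-P1d-ref} at $d=1$, unfold Definition~\ref{d:lag-nodes-Pk1-ref} at $k=1$ to get the nodes $\ha_0=0$ and $\ha_1=1$, and apply the product formula of Definition~\ref{d:lag-pol-Pk1} via Lemma~\ref{l:lag-basis-Pk1-ref}. Your remark that the nodes are distinct corresponds exactly to the paper's appeal to the field properties of $\matR$ (the denominators $1-0$ and $0-1$ being nonzero), so nothing is missing.
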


\begin{proof}
  Direct consequence of
  Definition~\threfc{d:lag-pol-P1d-ref}{with $d\eqdef1$},
  Definition~\threfc{d:lag-pol-Pk1}{with $k\eqdef1$},
  Definition~\threfc{d:lag-nodes-Pk1-ref}{%
    \eqref{e:lag-nodes-Pk1-ref-1} with $k\eqdef1$},
  Lemma~\threfc{l:lag-basis-Pk1-ref}{with $k\eqdef1$}, and
  \assume{field properties of~$\matR$
    (with $1-0\neq0$ and $0-1\neq0$ for $k\eqdef1$)}.
\end{proof}

\begin{lemma}[reference Lagrange polynomials is basis of $\matPid$]
  \label{l:lag-pol-is-basis-P1d-ref}
  \mbox{}\hfill
  Let~$d\geq1$.\\
  Then, the reference Lagrange polynomials~$(\hcalL^{1,d}_i)_{i\in[0..d]}$
  form a basis of~$\matPid$, that satisfies
  \begin{align}
    \label{e:lag-pol-is-basis-P1d-ref-1}
    \forall i \in [0..d],\quad
    & \deg \hcalL^{1, d}_i = 1,\\
    \label{e:lag-pol-is-basis-P1d-ref-2}
    \forall i, j \in [0..d],\quad
    & \hcalL^{1, d}_i (\hvv_j) = \kron{i}{j},\\
    \label{e:lag-pol-is-basis-P1d-ref-3}
    & \sum_{i = 0}^d \hcalL^{1, d}_i = 1.
  \end{align}
\end{lemma}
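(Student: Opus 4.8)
The plan is to mirror the one-dimensional argument of Lemma~\thref{l:lag-pol-is-basis-Pk1}: first establish the three stated identities by direct computation from the explicit formulas of Definition~\thref{d:lag-pol-P1d-ref}, then use the interpolation identity to deduce that the family is free, and finally invoke a dimension count to promote it to a basis.

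First I would record that each $\hcalL^{1, d}_i$ genuinely lies in $\matPid$. Indeed, $\hcalL^{1, d}_0 = 1 - \sum_{j = 1}^d X_j$ and, for $i \in [1..d]$, $\hcalL^{1, d}_i = X_i$; both are affine maps, so membership in $\matPid$ follows from Definition~\thref{d:lag-pol-P1d-ref} together with Lemma~\thref{l:pol-space-P0d-P1d}, which identifies $\matPid$ with $\AffRdR$. The degree identity~\eqref{e:lag-pol-is-basis-P1d-ref-1} is then immediate from Definition~\thref{d:deg-pol}, since each of these polynomials carries a nonzero coefficient on some $X_j$ and hence has degree exactly~$1$.

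For the interpolation identity~\eqref{e:lag-pol-is-basis-P1d-ref-2}, I would evaluate the explicit formulas at the reference vertices $\hvv_j$ of Definition~\thref{d:fam-ref-aff-pts}, splitting into the four cases according to whether $i$ and $j$ vanish. Using $\hvv_0 = \zzero$ and $\hvv_j = \ee_j$ for $j \in [1..d]$, together with $(\ee_j)_i = \kron{i}{j}$ from Definition~\thref{d:canon-fam}, one checks that $\hcalL^{1, d}_0 (\zzero) = 1$, that $\hcalL^{1, d}_0 (\ee_j) = 1 - 1 = 0$ for $j \geq 1$, that $\hcalL^{1, d}_i (\zzero) = 0$ for $i \geq 1$, and that $\hcalL^{1, d}_i (\ee_j) = \kron{i}{j}$ for $i, j \geq 1$; in every case the value equals $\kron{i}{j}$. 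The partition-of-unity identity~\eqref{e:lag-pol-is-basis-P1d-ref-3} is then a one-line computation, summing the formulas to get $\big( 1 - \sum_{j = 1}^d \hx_j \big) + \sum_{i = 1}^d \hx_i = 1$ for all $\hxx \in \matRd$.

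Finally, to obtain the basis property I would use~\eqref{e:lag-pol-is-basis-P1d-ref-2} to establish freedom: if $\sum_{i = 0}^d \lambda_i \hcalL^{1, d}_i = 0$, then evaluating at each $\hvv_j$ and applying the Kronecker identity yields $\lambda_j = 0$. Since $\dim \matPid = d + 1$ by Lemma~\thref{l:dim-Pkd} and the family has exactly $d + 1$ elements, Lemma~\thref{l:free-family-of-dim-elements-is-basis} then upgrades the free family to a basis. This last step is the only place where anything beyond direct substitution is required, and it is the main (though mild) conceptual point; being entirely analogous to the segment case of Lemma~\thref{l:lag-pol-is-basis-Pk1}, I do not expect any genuine obstacle.
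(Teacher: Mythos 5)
Your proposal is correct and follows essentially the same route as the paper's proof: the three identities are obtained by direct computation from Definition~\ref{d:lag-pol-P1d-ref}, Lemma~\ref{l:pol-space-P0d-P1d}, Definition~\ref{d:deg-pol} and Definition~\ref{d:fam-ref-aff-pts}, and the basis property follows from freedom (via evaluation at the reference vertices using~\eqref{e:lag-pol-is-basis-P1d-ref-2}) combined with the dimension count of Lemma~\ref{l:dim-Pkd} and Lemma~\ref{l:free-family-of-dim-elements-is-basis}. No gaps.
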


\begin{proof}
  \proofparskip{Degree~\eqref{e:lag-pol-is-basis-P1d-ref-1}, in~$\matPid$, and
    identities~\eqref{e:lag-pol-is-basis-P1d-ref-2}
    and~\eqref{e:lag-pol-is-basis-P1d-ref-3}}\\
  Direct consequence of
  Definition~\thref{d:lag-pol-P1d-ref},
  Lemma~\thref{l:pol-space-P0d-P1d},
  Definition~\thref{d:deg-pol},
  \assume{additive group properties of~$\matR$}, and
  Definition~\thref{d:fam-ref-aff-pts}.

  \proofparskip{Basis}
  Let $(\lambda_i)_{i\in[0..d]}\in\matR$ such that, for all $\hxx\in\matRd$,
  $\sum_{i=0}^d \lambda_{i}\hcalL^{1,d}_i(\hxx)=0$.
  Then, from
  \eqref{e:lag-pol-is-basis-P1d-ref-2}, and
  \assume{ring properties of~$\matR$},
  taking $\hxx\eqdef\hvv_j$ provides
  $\sum_{i=0}^d\lambda_i\kron{i}{j}=\lambda_j=0$.
  Thus, from
  \assume{the definition of freedom},
  Lemma~\threfc{l:dim-Pkd}{with $k\eqdef1$},  and
  Lemma~\thref{l:free-family-of-dim-elements-is-basis},
  the reference Lagrange polynomials form a basis of~$\matPid$.
\end{proof}

\begin{remark}
  In the next lemma, the linear maps from~$\matRn$ to~$\matRp$ are assimilated
  to their matrix relative to the canonical bases.
\end{remark}

\begin{lemma}[differential of reference Lagrange polynomials]
  \label{l:diff-lag-pol-ref}
  \mbox{}\hfill
  Let~$d\geq1$.
  Let~$\hxx\in\matRd$.
  Let~$i\in[0..d]$.
  Then, we have $\hcalL^{1,d}_i\in\CinfRdR$ and
  \begin{align}
    \label{e:diff-lag-pol-ref-0}
    (i = 0)&&
    \Diff \hcalL^{1, d}_0 (\hxx) &= -\oone \in \calM_{1, d} (\matR),&&\\
    \label{e:diff-lag-pol-ref-1}
    (i \in [1..d])&&
    \Diff \hcalL^{1, d}_i (\hxx) &= \ee_i \in \calM_{1, d} (\matR).&&
  \end{align}
\end{lemma}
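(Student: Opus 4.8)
The plan is to exploit the fact that each reference Lagrange polynomial $\hcalL^{1,d}_i$ is an affine map from $\matRd$ to $\matR$, so that its differential is constant and equal to the linear part, which can be read off directly from Definition~\ref{d:lag-pol-P1d-ref}. First I would record that $\hcalL^{1,d}_i\in\matPid$ by Lemma~\ref{l:lag-pol-is-basis-P1d-ref}, and that $\matPid=\AffRdR$ by Lemma~\ref{l:pol-space-P0d-P1d}. Hence each $\hcalL^{1,d}_i$ is affine, and invoking the assumed differentiability of affine maps ($\Cinf$) gives $\hcalL^{1,d}_i\in\CinfRdR$, which settles the regularity claim.

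Next, using Lemma~\ref{l:pol-space-P0d-P1d} in the form $p(\hxx)=a_\zzero+A\hxx$ with $a_\zzero\in\matR$ and $A\in\calM_{1,d}(\matR)$, together with the assumed rule that the differential of an affine map $\hxx\mapsto a_\zzero+A\hxx$ at any point equals its linear part $A$, I would identify $A$ in each case directly from the definition. For $i=0$, rewriting $\hcalL^{1,d}_0(\hxx)=1-\sum_{j=1}^d\hx_j=1+(-\oone)\hxx$ via the definition of the matrix--vector product with the row $-\oone$ yields $A=-\oone$, hence $\Diff\hcalL^{1,d}_0(\hxx)=-\oone$. For $i\in[1..d]$, rewriting $\hcalL^{1,d}_i(\hxx)=\hx_i=\ee_i\hxx$ with $\ee_i\in\calM_{1,d}(\matR)$ the $i$-th canonical row yields $A=\ee_i$, hence $\Diff\hcalL^{1,d}_i(\hxx)=\ee_i$.

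The only delicate point is purely notational: one must present the linear part as a $1\times d$ matrix (a row vector), so that the matrix--vector product reproduces the scalar expressions of Definition~\ref{d:lag-pol-P1d-ref}, and observe that the resulting differential does not depend on the base point $\hxx$, as expected for an affine map. Beyond invoking the standard differentiation of affine maps, I expect no genuine analytic obstacle here; the whole argument reduces to matching the linear part of each $\hcalL^{1,d}_i$ with the announced canonical row ($-\oone$ or $\ee_i$).
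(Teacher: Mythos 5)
Your proof is correct and follows essentially the same route as the paper: it cites Lemma~\ref{l:lag-pol-is-basis-P1d-ref} and Lemma~\ref{l:pol-space-P0d-P1d} to see that the reference Lagrange polynomials are affine, then uses the assumed facts that affine maps on $\matRd$ are $\Cinf$ with constant differential equal to their linear part. Your explicit identification of the linear part as the row $-\oone$ or $\ee_i$ merely spells out what the paper leaves implicit.
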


\begin{proof}
  Direct consequence of
  Lemma~\thref{l:lag-pol-is-basis-P1d-ref},
  Lem\-ma~\threfc{l:pol-space-P0d-P1d}{%
    thus reference Lagrange polynomials are affine maps},
  \assume{affine maps in $\matRd$ are~$\Cinf$}, and
  \assume{the rules of derivation (thus affine maps have constant differential
    and the differential is the linear part)}.
\end{proof}

\subsection{Affine geometric mapping from $\matRd$ to $\matRd$}
\label{ss:aff-geomap-Rd}

\begin{remark}
  In Figure~\ref{f:geo-map-d=2}, we plot an example of~$\phigeodv$, the affine
  geometric mapping from the reference simplex in~$\matRd$ to the current
  simplex in~$\matRd$.
  This mapping transforms the $i$-th hyperface ({\ie} opposite the vertex
  $i\in[0..d]$) of the reference simplex to the $i$-th hyperface of the current
  simplex, see Figure~\ref{f:geo-map-d=3-hyperplane}.

  Other affine geometric mappings are also introduced.
  First, in Definition~\ref{d:geo-l-face-mapping}, the affine geometric
  mapping~$\phindlv:\ArRlRd$ (for $l\in[1..d]$) is a generalization
  of~$\phigeodv$, that allows to pass from the reference $l$-simplex to the
  current $d$-simplex.
  The injective mapping~$\indl:[0..l]\to[0..d]$ is used to select the chosen
  $l$-face of the $d$-simplex.

  Next, one can define the geometric mapping~$\phindv{\thetinjdmi{i}}:\ArRdmiR$
  (for $i\in[0..d]$), see Lemma~\ref{l:geo-hyperface-mapping}.
  It transfers the reference $(d-1)$-simplex to the $i$-th hyperface of the
  current $d$-simplex, see Figure~\ref{f:geo-hyper-map_d32}.

  Finally, the geometric mapping~$\phinddv:\ArRdRd$ is defined in
  Lemma~\ref{l:geo-mapping-permut}.
  It transforms the reference $d$-simplex to a ``permutation'' of the current
  $d$-simplex ({\ie} with a permutation of the vertices), see
  Figure~\ref{f:geo-map-perm-circ0-d=3-hyperplane} in the case of a circular
  permutation (defined in~Lemma~\ref{l:circ-permut}).
  It is used in the proof of the factorization of polynomial
  Lemma~\ref{l:factor-zero-pol-hyperpl-Pkd}.
\end{remark}

\begin{definition}[geometric mapping]
  \label{d:geo-mapping}
  \mbox{}\hfill
  Let~$d\geq1$.
  Let~$\famvertd{\vv}$ be $d+1$ points in~$\matRd$.\\
  The {\em geometric mapping associated with $\famvertd{\vv}$} is
  denoted~$\phigeodv$, and is defined by
  \begin{equation}
    \label{e:geo-mapping}
    \forall \hxx \in \matRd,\quad
    \phigeodv (\hxx)
    = \sum_{i = 0}^d \hcalL^{1, d}_i (\hxx) \, \vv_i \quad\in \matRd.
  \end{equation}
\end{definition}

\begin{lemma}[geometric mapping for $d=1$ is geometric mapping in dimension~1]
  \label{l:geo-mapping-for-d-eq-1-is-geo-mapping-1d}
  \mbox{}\\
  Let~$\famverti{\vv}=\famverti{v}$ be two points in~$\matR^1$.\\
  Then, $\phigeodv$ for $d\eqdef1$ and~$\phigeoi{v}$ from
  Definition~\ref{d:geo-mapping-1d} coincide.
\end{lemma}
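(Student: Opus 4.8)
The plan is to unfold both definitions in the case $d\eqdef1$ and to check that the resulting functions of the single variable~$\hx$ agree. First I would apply Definition~\ref{d:geo-mapping} with $d\eqdef1$, which gives, for all $\hx\in\matR^1$,
\[
  \phigeodv (\hx)
  = \hcalL^{1,1}_0 (\hx) \, v_0 + \hcalL^{1,1}_1 (\hx) \, v_1,
\]
after identifying the point~$\vv_i\in\matR^1$ with the scalar~$v_i\in\matR$ (using the assumption $\famverti{\vv}=\famverti{v}$). The sum collapses to the two terms $i=0$ and $i=1$ since $[0..d]=[0..1]$.

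Next I would evaluate the two reference Lagrange polynomials of~$\matPid$ at~$d\eqdef1$ via Definition~\ref{d:lag-pol-P1d-ref}: equation~\eqref{e:lag-pol-P1d-ref-0} yields $\hcalL^{1,1}_0(\hx)=1-\hx$ (the sum $\sum_{j=1}^1\hx_j$ reducing to~$\hx$), and equation~\eqref{e:lag-pol-P1d-ref-1} yields $\hcalL^{1,1}_1(\hx)=\hx$. Substituting these and simplifying with the field properties of~$\matR$ gives
\[
  \phigeodv (\hx) = (1 - \hx) \, v_0 + \hx \, v_1 = (v_1 - v_0) \, \hx + v_0,
\]
which is exactly the expression defining~$\phigeoiv$ in Definition~\ref{d:geo-mapping-1d}. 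Thus the two mappings coincide as functions from~$\matR$ to~$\matR$.

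There is no real obstacle here: the statement is a routine unfolding, and the algebraic identity needed is in fact already recorded as the first equality in~\eqref{e:prop-geo-mapping-1d-1} of Lemma~\ref{l:prop-geo-mapping-1d}. Consequently the cleanest presentation would simply invoke Definition~\ref{d:geo-mapping} (with $d\eqdef1$), Definition~\ref{d:lag-pol-P1d-ref} (with $d\eqdef1$), Definition~\ref{d:geo-mapping-1d}, and the identity~\eqref{e:prop-geo-mapping-1d-1}, with the only point worth a word being the identification of~$\matR^1$ with~$\matR$ so that the single-component point~$\hxx$ is read as the scalar~$\hx$.
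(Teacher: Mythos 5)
Your proof is correct and follows essentially the same route as the paper: unfold Definition~\ref{d:geo-mapping} and Definition~\ref{d:lag-pol-P1d-ref} at $d\eqdef1$, then rearrange with the (commutative ring) properties of~$\matR$ to recover Definition~\ref{d:geo-mapping-1d}. The extra remark about~\eqref{e:prop-geo-mapping-1d-1} is harmless but not needed.
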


\begin{proof}
  Direct consequence of
  Definition~\threfc{d:geo-mapping}{with $d\eqdef1$},
  Definition~\threfc{d:lag-pol-P1d-ref}{with $d\eqdef1$},
  Definition~\thref{d:geo-mapping-1d}, and
  \assume{commutative ring properties of~$\matR$}.
\end{proof}

\begin{lemma}[reference geometric mapping is identity]
  \label{l:ref-geo-mapping-is-id}
  \mbox{}\\
  Let~$d\geq1$.
  Then, the geometric mapping for reference vertices~$\famvertd{\hvv}$ is the
  identity, $\phigeod{\hvv}=\identity{\matRd}$.
\end{lemma}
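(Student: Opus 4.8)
The plan is a direct unfolding of the definition. First I would apply Definition~\ref{d:geo-mapping} with $\famvertd{\vv}\eqdef\famvertd{\hvv}$, so that for every $\hxx\in\matRd$,
\[
  \phigeod{\hvv}(\hxx) = \sum_{i=0}^d \hcalL^{1,d}_i(\hxx)\,\hvv_i.
\]
Next I would substitute the reference vertices from Definition~\ref{d:fam-ref-aff-pts}. Since $\hvv_0=\zzero$, the $i=0$ summand equals $\hcalL^{1,d}_0(\hxx)\,\zzero=\zzero$ and drops out; and for $i\in[1..d]$ we have $\hvv_i=\ee_i$, so the sum reduces to $\sum_{i=1}^d \hcalL^{1,d}_i(\hxx)\,\ee_i$.

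Then I would use the explicit value of the reference Lagrange polynomials from Definition~\ref{d:lag-pol-P1d-ref}, equation~\eqref{e:lag-pol-P1d-ref-1}, namely $\hcalL^{1,d}_i(\hxx)=\hx_i$ for $i\in[1..d]$, which turns the expression into $\sum_{i=1}^d \hx_i\,\ee_i$. Finally, invoking the fact that $(\ee_i)_{i\in[1..d]}$ is the canonical basis of $\matRd$ (Definition~\ref{d:canon-fam}), this last sum equals $\hxx$. Since $\phigeod{\hvv}(\hxx)=\hxx$ holds for all $\hxx\in\matRd$, I conclude $\phigeod{\hvv}=\identity{\matRd}$.

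There is no genuine obstacle here: the statement is a routine evaluation of the defining formula. The only point deserving a moment's care is that the $i=0$ term vanishes because it is multiplied by the zero vector $\hvv_0=\zzero$, and \emph{not} because its scalar coefficient $\hcalL^{1,d}_0(\hxx)=1-\sum_{j=1}^d\hx_j$ is zero (in general it is not). Keeping this distinction explicit avoids an incorrect simplification and is the one subtlety worth flagging in the write-up.
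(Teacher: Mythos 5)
Your proof is correct and follows exactly the paper's route: the paper's proof is a one-line citation of Definitions~\ref{d:geo-mapping}, \ref{d:lag-pol-P1d-ref} and~\ref{d:fam-ref-aff-pts}, which is precisely the unfolding you carry out. Your remark that the $i=0$ term drops because $\hvv_0=\zzero$ (and not because its coefficient vanishes) is a worthwhile clarification of that same computation.
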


\begin{proof}
  Direct consequence of
  Definition~\thref{d:geo-mapping},
  Definition~\thref{d:lag-pol-P1d-ref}, and
  Definition~\thref{d:fam-ref-aff-pts}.
\end{proof}

\begin{figure}[htb]
  \centering
  \resizebox{0.85\linewidth}{!}{
    \begin{tikzpicture}
\draw[step=.5cm, very thin, black!20] (-4,-4) ; 
\coordinate[label=left : $\hvv_0$] (a) at (-3,-3) ;
\coordinate[label=above : $\hvv_2$] (c) at (-3,0) ;
\coordinate[label=right : $\hvv_1$] (b) at (0,-3) ;

\coordinate[label=left : $0$] (0) at (-2.7,-3.3) ;
\coordinate[label=above : $1$] (2) at (-3.3,-0.3) ;
\coordinate[label=right : $1$] (1) at (-0.3,-3.3) ;

\draw[thick] (a)--(b)--(c)--cycle ;

\coordinate[label=left : \textcolor{black}{$\vv_0 = \phigeodv (\hvv_0)$}] (A) at (4.5,-1.5) ;
\coordinate[label=above : \textcolor{black}{$\vv_2 = \phigeodv (\hvv_2) $}] (C) at (5.5,1) ;
\coordinate[label=right : \textcolor{black}{$\vv_1 = \phigeodv (\hvv_1) $}] (B) at (6.5,-3) ;
\draw[thick] (A)--(B)--(C)--cycle ;

\node (arrow) [] {};
    \draw[-latex,blue, very thick] ($(arrow.north west)+(-1,-1)$) arc
    [start angle=160,
     end angle=70,
     x radius=3.8cm,
     y radius =2cm] ;
     
\coordinate[label=above : \textcolor{blue}{$\phigeodv$}] (T) at (1,0.5) ;
\coordinate[label=left : \textcolor{blue}{$\hKd$}] (Khat) at (-1.7,-2) ;
\coordinate[label=right : \textcolor{blue}{$\Kvd$}] (Khat) at (5,-1) ;

\end{tikzpicture}
  }
  \caption[Geometric mapping]{Geometric mapping $\phigeodv$ in the
    case $d=2$, see Definition~\ref{d:geo-mapping}.
  }
  \label{f:geo-map-d=2}
\end{figure}

\begin{lemma}[properties of geometric mapping]
  \label{l:prop-geo-mapping}
  \mbox{}\hfill
  Let~$d\geq1$.\\
  Let~$\famvertd{\vv}$ be $d+1$ points in~$\matRd$.
  Let $\Ageodv\eqdef\begin{pmatrix}\vv_1-\vv_0&\vv_2-\vv_0&\dots&
    \vv_d-\vv_0\end{pmatrix}$ in~$\calM_{d,d}$.\\
  Then, $\phigeodv$ belongs to~$\AffRdRd$, and we have
  \begin{align}
    \label{e:prop-geo-mapping-1}
    \forall j\in[0..d],\quad
    \forall \hxx \in \matRd,\quad
    &\phigeodv (\hxx) = \vv_j +
      \sum_{i = 0, i \neq j}^d \hcalL^{1, d}_i (\hxx) (\vv_i - \vv_j),\\
    \label{e:prop-geo-mapping-2}
    \forall \hxx \in \matRd,\quad
    &\phigeodv (\hxx)
      = \vv_0 +  \sum_{i = 1}^d \hx_i (\vv_{i} - \vv_{0})
      = \vv_0 + \Ageodv \hxx,\\
    \label{e:prop-geo-mapping-3}
    \forall i\in[0..d],\quad
    &\phigeodv (\hvv_i) = \vv_i,\\
    \label{e:prop-geo-mapping-3bis}
    &\phigeodv (\hKd) = \Kvd.
  \end{align}

  Moreover, if $\famvertd{\vv}$ is affinely independent, then $\phigeodv$~is
  bijective, $\invphigeodv$~is affine, and
  \begin{align}
    \label{e:prop-geo-mapping-4}
    \forall \xx \in \matRd,\quad
    &\invphigeodv (\xx) = \invAgeodv (\xx - \vv_0),\\
    \label{e:prop-geo-mapping-5}
    \forall i \in [0..d],\quad
    &\invphigeodv (\vv_i) = \hvv_i.
  \end{align}
\end{lemma}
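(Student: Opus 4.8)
The plan is to establish the two explicit formulas~\eqref{e:prop-geo-mapping-2} first, since everything else follows from them together with the basis properties of the reference Lagrange polynomials. Expanding $\phigeodv(\hxx)=\sum_{i=0}^d\hcalL^{1,d}_i(\hxx)\,\vv_i$ and substituting the explicit expressions from Definition~\ref{d:lag-pol-P1d-ref} (namely $\hcalL^{1,d}_0(\hxx)=1-\sum_{i=1}^d\hx_i$ and $\hcalL^{1,d}_i(\hxx)=\hx_i$ for $i\in[1..d]$), I would collect the $\vv_0$ terms to obtain $\phigeodv(\hxx)=\vv_0+\sum_{i=1}^d\hx_i(\vv_i-\vv_0)$; by the definition of the matrix--vector product this last sum is exactly $\Ageodv\hxx$, since the $i$-th column of~$\Ageodv$ is $\vv_i-\vv_0$. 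Then~\eqref{e:prop-geo-mapping-2} together with Lemma~\ref{l:equiv-def-aff-map-finite-dim} (with $\cc\eqdef\vv_0$ and $A\eqdef\Ageodv$) immediately yields $\phigeodv\in\AffRdRd$, proving the first claim.

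Next, for~\eqref{e:prop-geo-mapping-1} I would use the partition of unity $\sum_{i=0}^d\hcalL^{1,d}_i=1$ from Lemma~\ref{l:lag-pol-is-basis-P1d-ref}: fixing $j$ and writing $\sum_{i\neq j}\hcalL^{1,d}_i=1-\hcalL^{1,d}_j$, a direct regrouping shows $\vv_j+\sum_{i\neq j}\hcalL^{1,d}_i(\hxx)(\vv_i-\vv_j)=\sum_{i=0}^d\hcalL^{1,d}_i(\hxx)\,\vv_i=\phigeodv(\hxx)$. For~\eqref{e:prop-geo-mapping-3} I would evaluate at the reference vertices and use the Kronecker identity $\hcalL^{1,d}_i(\hvv_j)=\kron{i}{j}$ (again from Lemma~\ref{l:lag-pol-is-basis-P1d-ref}), so that $\phigeodv(\hvv_i)=\sum_{l=0}^d\kron{l}{i}\,\vv_l=\vv_i$.

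For the image of the simplex~\eqref{e:prop-geo-mapping-3bis}, the cleanest route is to avoid manipulating the defining inequalities directly: by Lemma~\ref{l:simplex-of-ref-vert-is-ref-simplex} we have $\hKd=\Khvd$, which by Definition~\ref{d:simplex} is the set of barycentric combinations $\sum_{i=0}^d\mu_i\hvv_i$ with $\mu_i\geq0$ and $\sum_i\mu_i=1$. Since $\phigeodv$ is affine, Lemma~\ref{l:aff-map-preserves-baryc} lets me push it through such combinations, and using~\eqref{e:prop-geo-mapping-3} I obtain $\phigeodv(\hKd)=\{\sum_i\mu_i\vv_i\mid\mu_i\geq0,\ \sum_i\mu_i=1\}$, which is $\Kvd$ again by Definition~\ref{d:simplex}.

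Finally, for the invertibility statements~\eqref{e:prop-geo-mapping-4} and~\eqref{e:prop-geo-mapping-5}, I would unpack affine independence: by Definition~\ref{d:aff-indep-family} the family $(\vv_i-\vv_0)_{i\in[1..d]}$ is free, hence a free family of $d$ vectors in the $d$-dimensional space $\matRd$, which is a basis by Lemma~\ref{l:free-family-of-dim-elements-is-basis}; thus the columns of~$\Ageodv$ form a basis and the linear part $\phi\eqdef(\hxx\mapsto\Ageodv\hxx)$ is an isomorphism with inverse $\vv\mapsto\invAgeodv\vv$. Applying Lemma~\ref{l:inv-of-aff-map-is-aff-map} to $\phigeodv=\vv_0+\phi$ with origin~$\zzero$ then gives that $\phigeodv$ is bijective, that $\invphigeodv$ is affine, and the formula $\invphigeodv(\xx)=\zzero+\phi^{-1}(\xx-\vv_0)=\invAgeodv(\xx-\vv_0)$, which is~\eqref{e:prop-geo-mapping-4}; equation~\eqref{e:prop-geo-mapping-5} then follows by applying $\invphigeodv$ to~\eqref{e:prop-geo-mapping-3}. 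The main obstacle is really just the identification of the abstract linear part of $\phigeodv$ with the concrete matrix $\Ageodv$, so that Lemma~\ref{l:inv-of-aff-map-is-aff-map} can be invoked with $\phi^{-1}=\invAgeodv$; everything else rests on routine computation using the partition-of-unity and Kronecker properties of the reference Lagrange basis.
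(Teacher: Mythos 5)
Your proof is correct, and most of it matches the paper's argument up to reordering: the paper derives \eqref{e:prop-geo-mapping-1} first from the partition of unity and then obtains \eqref{e:prop-geo-mapping-2} as the case $j=0$, whereas you compute \eqref{e:prop-geo-mapping-2} directly from the explicit formulas for the $\hcalL^{1,d}_i$ and then recover \eqref{e:prop-geo-mapping-1}; for the bijectivity you pass through ``a free family of $d$ vectors in $\matRd$ is a basis'' where the paper goes through the zero-kernel criterion of Lemma~\ref{l:inj-aff-map-is-zero-linear-ker} combined with Lemma~\ref{l:inj-or-surj-and-dim-implies-bij} --- these two routes are interchangeable. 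The one genuinely different step is \eqref{e:prop-geo-mapping-3bis}: the paper proves the two inclusions by hand, checking from Definitions~\ref{d:ref-simplex} and~\ref{d:simplex} that the coefficients $\hcalL^{1,d}_i(\hxx)$ are nonnegative and sum to one when $\hxx\in\hKd$, and conversely reconstructing an explicit preimage $\hxx\eqdef(\mu_i)_{i\in[1..d]}$ for each point of~$\Kvd$; you instead identify $\hKd$ with $\Khvd$ via Lemma~\ref{l:simplex-of-ref-vert-is-ref-simplex} and push the convex combinations through $\phigeodv$ using Lemma~\ref{l:aff-map-preserves-baryc} together with \eqref{e:prop-geo-mapping-3}. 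Your route is shorter and more conceptual, and it is legitimate here since neither auxiliary lemma depends on the present one; the paper's computation is more self-contained and, as a by-product, exhibits the explicit preimage, which is the same device it reuses for the $l$-face mappings in Lemma~\ref{l:prop-geo-l-face-mapping}.
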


\begin{proof}
  \proofpar{Identities~\eqref{e:prop-geo-mapping-1}
    and~\eqref{e:prop-geo-mapping-2}, affine}
  Let $j\in[0..d]$.
  Then, from
  Definition~\thref{d:geo-mapping},
  Lemma~\threfc{l:lag-pol-is-basis-P1d-ref}{%
    \eqref{e:lag-pol-is-basis-P1d-ref-3}, extracting~$j$}, and
  \assume{ring properties of~$\matR$},
  identity~\eqref{e:prop-geo-mapping-1} holds.
  Thus, from
  \eqref{e:prop-geo-mapping-1} (with $j\eqdef0$),
  Definition~\threfc{d:lag-pol-P1d-ref}{\eqref{e:lag-pol-P1d-ref-1}},
  \assume{the columnwise rule for matrix-vector product},
  \assume{commutative ring properties of~$\matR$}, and
  Definition~\threfc{d:aff-map}{%
    with $\phi\eqdef\Ageodv$ and $\yyo\eqdef\vv_0$},
  identities~\eqref{e:prop-geo-mapping-2} hold, and~$\phigeodv$ is affine.

  \proofparskip{Identity~\eqref{e:prop-geo-mapping-3}}
  Direct consequence of
  Definition~\thref{d:geo-mapping}, and
  Lemma~\threfc{l:lag-pol-is-basis-P1d-ref}{%
    \eqref{e:lag-pol-is-basis-P1d-ref-2}}.

  \proofparskip{Identity~\eqref{e:prop-geo-mapping-3bis}}\\
  \proofpar{$\phigeodv(\hKd)\subset\Kvd$}
  Let~$\hxx\in\hKd$.
  Then, from
  Definition~\thref{d:lag-pol-P1d-ref},
  Definition~\thref{d:ref-simplex},
  \assume{ordered field properties of~$\matR$},
  Lemma~\threfc{l:lag-pol-is-basis-P1d-ref}{%
    \eqref{e:lag-pol-is-basis-P1d-ref-3}},
  Definition~\thref{d:geo-mapping}, and
  Definition~\thref{d:simplex},
  we have, for all $i\in[1..d]$, $\hcalL^{1,d}_i(\hxx)=\hx_i\geq0$,
  and $\hcalL^{1,d}_0(\hxx)=1-\sum_{i=0}^d\hcalL^{1,d}_i\geq0$,
  thus~$\phigeodv(\hxx)$ is a convex combination of the~$\vv_i$'s, and it
  belongs to~$\Kvd$.\\
  \proofpar{$\Kvd\subset\phigeodv(\hKd)$}
  Let~$\xx\in\Kvd$.
  Then, from
  Definition~\thref{d:simplex},
  there exists $(\mu_i)_{i\in[0..d]}\in\matR$ such that, for all $i\in[0..d]$,
  $\mu_i\geq0$, $\sum_{i=0}^d\mu_i=1$ and $\xx=\sum_{i=0}^d\mu_i\vv_i$.\\
  Let~$\hxx\eqdef(\mu_i)_{i\in[1..d]}\in\matRd$.
  Then, from
  Definition~\thref{d:lag-pol-P1d-ref},
  \assume{ordered field properties of~$\matR$},
  Definition~\thref{d:ref-simplex}, and
  Definition~\thref{d:geo-mapping},
  we have, for all $i\in[1..d]$, $\hcalL^{1,d}_i(\hxx)=\mu_i\geq0$ and
  $\hcalL^{1,d}_0(\hxx)=1-\sum_{j=1}^d\mu_j=\mu_0\geq0$, thus
  $\hxx\in\hKd$ and $\phigeodv(\hxx)=\xx$, {\ie} $\xx\in\phigeodv(\hKd)$.\\
  Therefore, we have equality.

  \proofparskip{Bijection, $\invphigeodv$ affine, and
    identity~\eqref{e:prop-geo-mapping-4}}
  Direct consequence of
  \eqref{e:prop-geo-mapping-2},
  Lemma~\thref{l:inj-aff-map-is-zero-linear-ker},
  Definition~\thref{LM-d:kernel},
  Definition~\thref{d:aff-indep-family},
  Lemma~\thref{l:inj-or-surj-and-dim-implies-bij}, and
  Lemma~\thref{l:inv-of-aff-map-is-aff-map}.

  \proofparskip{Identity~\eqref{e:prop-geo-mapping-5}}
  Direct consequence of
  \eqref{e:prop-geo-mapping-3}.
\end{proof}

\begin{lemma}[differential of geometric mapping]
  \label{l:diff-geo-mapping}
  \mbox{}\\
  Let~$d\geq1$.
  Let~$\famvertd{\vv}$ be $d+1$ points in~$\matRd$.
  Then, $\phigeodv$ belongs to~$\CinfRdRd$, and we have
  \begin{equation}
    \label{e:diff-geo-mapping-1}
    \forall \hxx \in \matRd,\quad \Diff \phigeodv (\hxx) = \Ageodv.
  \end{equation}

  Moreover, if~$\famvertd{\vv}$ is affinely independent, $\invphigeodv$ also
  belongs to~$\CinfRdRd$, and we have
  \begin{equation}
    \label{e:diff-geo-mapping-2}
    \forall \xx \in \matRd,\quad \Diff \invphigeodv (\xx) = \invAgeodv.
  \end{equation}
\end{lemma}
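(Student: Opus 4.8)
The plan is to reduce both claims to the single calculus fact that an affine map between finite-dimensional spaces is $\Cinf$ and has constant differential equal to its linear part; the geometric mapping and its inverse have already been exhibited in affine form in Lemma~\ref{l:prop-geo-mapping}, so no fresh computation will be needed.

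For the first statement, I would start from identity~\eqref{e:prop-geo-mapping-2}, which reads $\phigeodv(\hxx) = \vv_0 + \Ageodv\,\hxx$ for all $\hxx \in \matRd$. This displays $\phigeodv$ as an affine map with constant term $\vv_0$ and linear part the matrix $\Ageodv$. Invoking that affine maps on $\matRd$ are $\Cinf$ and that the rules of derivation give an affine map a constant differential equal to its linear part, I conclude $\phigeodv \in \CinfRdRd$ and $\Diff\phigeodv(\hxx) = \Ageodv$ for every $\hxx$.

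For the second statement, I would assume $\famvertd{\vv}$ affinely independent and use identity~\eqref{e:prop-geo-mapping-4}, namely $\invphigeodv(\xx) = \invAgeodv(\xx - \vv_0)$, which (after expanding to $\invAgeodv\,\xx - \invAgeodv\,\vv_0$) is again affine, now with constant term $-\invAgeodv\,\vv_0$ and linear part $\invAgeodv$. Applying the same two calculus facts then yields $\invphigeodv \in \CinfRdRd$ and $\Diff\invphigeodv(\xx) = \invAgeodv$ for all $\xx$.

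The only point needing care is reading off the correct linear part from each affine representation — that $\Ageodv$ is the linear part of $\phigeodv$ and $\invAgeodv$ that of $\invphigeodv$ — but both are immediate from \eqref{e:prop-geo-mapping-2} and \eqref{e:prop-geo-mapping-4}, so there is no genuine obstacle here. An alternative, purely computational route would differentiate the defining sum $\sum_{i=0}^d \hcalL^{1,d}_i(\hxx)\,\vv_i$ columnwise, using the differentials $-\oone$ and $\ee_i$ of Lemma~\ref{l:diff-lag-pol-ref} to recover the $j$-th column as $\vv_j - \vv_0$; I would keep this in reserve but prefer the affine argument, as it directly reuses results already established.
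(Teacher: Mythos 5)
Your proof is correct and follows essentially the same route as the paper: the paper's proof also cites Lemma~\ref{l:prop-geo-mapping} (for the affine forms of $\phigeodv$ and $\invphigeodv$) together with the admitted facts that affine maps on $\matRd$ are $\Cinf$ with constant differential equal to their linear part. Nothing is missing.
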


\begin{proof}
  Direct consequence of
  Lemma~\thref{l:prop-geo-mapping},
  \assume{affine maps in $\matRd$ are $\Cinf$}, and
  \assume{the rules of derivation (thus affine maps have constant differential
    and the differential is the linear part)}.
\end{proof}

\subsection{Current affine Lagrange polynomials}
\label{ss:cur-aff-lag-pol}

\begin{lemma}[Lagrange polynomials of $\matPid$]
  \label{l:lag-pol-P1d}
  \mbox{}\\
  Let~$d\geq1$.
  Let~$\famvertd{\vv}$ be $d+1$ affinely independent points in~$\matRd$.
  Let~$i\in[0..d]$.\\
  Then, $\calL^{\famvertd{\vv}}_i\eqdef\hcalL^{1,d}_i\circ\invphigeodv$ is
  well-defined, and we have
  $\hcalL^{1,d}_i=\calL^{\famvertd{\vv}}_i\circ\phigeodv$.

  $\calL^{\famvertd{\vv}}_i$ is called the {\em $i$-th Lagrange polynomial
    of~$\matPid$ associated with points~$\famvertd{\vv}$}.
\end{lemma}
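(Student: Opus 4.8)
The plan is to read off both claims directly from the bijectivity of the geometric mapping established in Lemma~\thref{l:prop-geo-mapping}, combined with the elementary algebra of function composition.

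First I would settle well-definedness. The reference Lagrange polynomial $\hcalL^{1,d}_i$ is a function from $\matRd$ to $\matR$ by Definition~\thref{d:lag-pol-P1d-ref}, so the composite $\hcalL^{1,d}_i\circ\invphigeodv$ is meaningful precisely when $\invphigeodv$ is a genuine function on $\matRd$. Since $\famvertd{\vv}$ is assumed affinely independent, Lemma~\thref{l:prop-geo-mapping} guarantees that $\phigeodv$ is bijective, so that its inverse $\invphigeodv:\ArRdRd$ exists; hence $\calL^{\famvertd{\vv}}_i\eqdef\hcalL^{1,d}_i\circ\invphigeodv$ is a well-defined map from $\matRd$ to $\matR$. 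One could additionally record, via Lemma~\thref{l:aff-mapping-of-Pkd-is-Pkl} with $k\eqdef1$ applied to the affine map $\invphigeodv$, that it actually lies in $\matPid$, which justifies the name given to it; but the statement only demands well-definedness.

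Then I would derive the second identity by composing on the right with $\phigeodv$. Substituting the definition of $\calL^{\famvertd{\vv}}_i$ and using that $\invphigeodv\circ\phigeodv=\identity{\matRd}$, which holds by the defining property of the inverse since $\phigeodv$ is bijective, I would obtain
\[
  \calL^{\famvertd{\vv}}_i\circ\phigeodv
  = \hcalL^{1,d}_i\circ\invphigeodv\circ\phigeodv
  = \hcalL^{1,d}_i\circ\identity{\matRd}
  = \hcalL^{1,d}_i,
\]
where the final equality records that composing on the right with the identity map leaves a function unchanged.

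I do not anticipate a genuine obstacle here: this lemma is a bookkeeping statement that transports the reference Lagrange basis of $\matPid$ onto the current simplex through the geometric mapping, and all of its substance is concentrated in the bijectivity of $\phigeodv$, which is imported wholesale from Lemma~\thref{l:prop-geo-mapping}. The single point requiring a moment's attention is to invoke the affine-independence hypothesis explicitly, since that is exactly what licenses the use of $\invphigeodv$; the remainder is the formal manipulation of composites.
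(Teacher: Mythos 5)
Your proposal is correct and follows the same route as the paper: well-definedness comes from the bijectivity of~$\phigeodv$ under affine independence (Lemma~\ref{l:prop-geo-mapping}), and the identity $\hcalL^{1,d}_i=\calL^{\famvertd{\vv}}_i\circ\phigeodv$ is obtained by associativity of composition together with the defining property of the inverse. The extra remark that $\calL^{\famvertd{\vv}}_i\in\matPid$ is not needed here (the paper defers it to Lemma~\ref{l:lag-pol-is-basis-P1d}), but it does no harm.
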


\begin{proof}
  \proofpar{Existence}
  Direct consequence of
  Lemma~\thref{l:prop-geo-mapping}.

  \proofparskip{Property}
  Direct consequence of
  \assume{associativity of composition of functions}, and
  \assume{the definition of the inverse}.
\end{proof}

\begin{lemma}[Lagrange polynomials of reference vertices are reference Lagrange
  polynomials of $\matPid$]
  \label{l:lag-pol-of-ref-vert-are-ref-lag-pol-P1d}
  \mbox{}\hfill
  Let~$d\geq1$.
  Let~$i\in[0..d]$.
  Then, we have $\calL^{\famvertd{\hvv}}_i=\hcalL^{1,d}_i$.
\end{lemma}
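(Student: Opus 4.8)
The plan is to recognize this statement as an immediate corollary of the fact that the reference geometric mapping is the identity. First I would note that the family of reference points $\famvertd{\hvv}$ is affinely independent (Lemma~\ref{l:ref-affine-vert-is-affinely-indep}), so that $\phigeod{\hvv}$ is bijective by Lemma~\ref{l:prop-geo-mapping}, and hence the quantity $\calL^{\famvertd{\hvv}}_i\eqdef\hcalL^{1,d}_i\circ\invphigeod{\hvv}$ from Lemma~\ref{l:lag-pol-P1d} is indeed well-defined. This settles the hypotheses needed to even write down the left-hand side.

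The key observation is Lemma~\ref{l:ref-geo-mapping-is-id}, which states that $\phigeod{\hvv}=\identity{\matRd}$. Since the inverse of the identity map is the identity itself, we obtain $\invphigeod{\hvv}=\identity{\matRd}$. Substituting this into the defining formula of Lemma~\ref{l:lag-pol-P1d} then yields
\[
  \calL^{\famvertd{\hvv}}_i = \hcalL^{1,d}_i \circ \identity{\matRd} = \hcalL^{1,d}_i,
\]
where the final equality is just the fact that composing a function with the identity leaves it unchanged.

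There is essentially no obstacle here: the result is a one-line consequence of the definition of the current Lagrange polynomials combined with the identification of the reference geometric mapping with $\identity{\matRd}$. The only points worth stating explicitly in the write-up are the well-definedness (affine independence of $\famvertd{\hvv}$ via Lemma~\ref{l:ref-affine-vert-is-affinely-indep}) and the elementary fact that $\identity{\matRd}$ is its own inverse. Accordingly, the proof is best presented simply as a direct consequence of Lemma~\ref{l:lag-pol-P1d} and Lemma~\ref{l:ref-geo-mapping-is-id}.
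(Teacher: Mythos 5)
Your proof is correct and follows exactly the paper's route: the paper also derives the result as a direct consequence of Lemma~\ref{l:lag-pol-P1d} (the definition $\calL^{\famvertd{\vv}}_i\eqdef\hcalL^{1,d}_i\circ\invphigeodv$) and Lemma~\ref{l:ref-geo-mapping-is-id} ($\phigeod{\hvv}=\identity{\matRd}$). Your additional remark on well-definedness via Lemma~\ref{l:ref-affine-vert-is-affinely-indep} is a harmless (and reasonable) extra detail.
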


\begin{proof}
  Direct consequence of
  Lemma~\thref{l:lag-pol-P1d}, and
  Lemma~\thref{l:ref-geo-mapping-is-id}.
\end{proof}

\begin{lemma}[Lagrange polynomials is basis of $\matPid$]
  \label{l:lag-pol-is-basis-P1d}
  \mbox{}\\
  Let~$d\geq1$.
  Let~$\famvertd{\vv}$ be $d+1$ affinely independent points in~$\matRd$.\\
  Then, the Lagrange polynomials $(\calL^{\famvertd{\vv}}_i)_{i\in[0..d]}$ form
  a basis of~$\matPid$, that satisfies
  \begin{align}
    \label{e:lag-pol-is-basis-P1d-1}
    \forall i, j \in [0..d],\quad
    &\calL^{\famvertd{\vv}}_i (\vv_j) = \kron{i}{j},\\
    \label{e:lag-pol-is-basis-P1d-2}
    &\sum_{i = 0}^d \calL^{\famvertd{\vv}}_i = 1.
  \end{align}
\end{lemma}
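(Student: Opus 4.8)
The plan is to mirror the proof of the reference case (Lemma~\ref{l:lag-pol-is-basis-P1d-ref}), transporting each property through the affine geometric mapping $\phigeodv$. Recall from Lemma~\ref{l:lag-pol-P1d} that $\calL^{\famvertd{\vv}}_i = \hcalL^{1,d}_i \circ \invphigeodv$, and from Lemma~\ref{l:prop-geo-mapping} that, since $\famvertd{\vv}$ is affinely independent, $\invphigeodv$ is a well-defined affine map sending each vertex $\vv_j$ back to the reference vertex $\hvv_j$ (identity~\eqref{e:prop-geo-mapping-5}). Everything else is a transcription of the reference argument along this bijection.

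First I would establish the two interpolation identities. For~\eqref{e:lag-pol-is-basis-P1d-1}, I evaluate $\calL^{\famvertd{\vv}}_i(\vv_j) = \hcalL^{1,d}_i(\invphigeodv(\vv_j)) = \hcalL^{1,d}_i(\hvv_j) = \kron{i}{j}$, using~\eqref{e:prop-geo-mapping-5} and the reference identity~\eqref{e:lag-pol-is-basis-P1d-ref-2}. For~\eqref{e:lag-pol-is-basis-P1d-2}, since right-composition with $\invphigeodv$ distributes over the finite sum, $\sum_{i=0}^d \calL^{\famvertd{\vv}}_i = \big( \sum_{i=0}^d \hcalL^{1,d}_i \big) \circ \invphigeodv = 1 \circ \invphigeodv = 1$, using the reference identity~\eqref{e:lag-pol-is-basis-P1d-ref-3} and the fact that the constant function $1$ composed with any map is still $1$.

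Next I would check that the family is a basis. The key step --- and the only one that is not a routine transcription of the reference proof --- is membership in $\matPid$: since $\invphigeodv \in \AffRdRd$ (Lemma~\ref{l:prop-geo-mapping}) and $\hcalL^{1,d}_i \in \matPid$ (Lemma~\ref{l:lag-pol-is-basis-P1d-ref}), Lemma~\ref{l:aff-mapping-of-Pkd-is-Pkl} applied with $l \eqdef d$ and $k \eqdef 1$ gives $\calL^{\famvertd{\vv}}_i = \hcalL^{1,d}_i \circ \invphigeodv \in \matPid$. Freedom then follows exactly as in the reference case: if $\sum_{i=0}^d \lambda_i \calL^{\famvertd{\vv}}_i = 0$, then evaluating at $\vv_j$ and using~\eqref{e:lag-pol-is-basis-P1d-1} yields $\lambda_j = 0$ for each $j \in [0..d]$.

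Finally, since $\dim \matPid = d + 1$ by Lemma~\ref{l:dim-Pkd} (with $k \eqdef 1$) and the family has $d + 1$ free elements, Lemma~\ref{l:free-family-of-dim-elements-is-basis} concludes that $(\calL^{\famvertd{\vv}}_i)_{i \in [0..d]}$ is a basis of $\matPid$. The main obstacle is conceptual rather than computational: one must remember to invoke the affine-image lemma to know that the transported polynomials still lie in the degree-one space, since the defining formula for $\calL^{\famvertd{\vv}}_i$ involves the inverse geometric mapping and is not manifestly polynomial of degree at most one.
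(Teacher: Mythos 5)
Your proof is correct and follows the paper's own argument almost exactly: membership in $\matPid$, the two identities via $\invphigeodv(\vv_j)=\hvv_j$ (identity~\eqref{e:prop-geo-mapping-5}) and the reference identities~\eqref{e:lag-pol-is-basis-P1d-ref-2}--\eqref{e:lag-pol-is-basis-P1d-ref-3}, then freedom plus the dimension count $\dim\matPid=d+1$ to conclude with Lemma~\ref{l:free-family-of-dim-elements-is-basis}. The only divergence is the membership step: you invoke the general composition Lemma~\ref{l:aff-mapping-of-Pkd-is-Pkl} (with $k\eqdef1$, $l\eqdef d$), whereas the paper uses the lighter route $\matPid=\AffRdR$ (Lemma~\ref{l:pol-space-P0d-P1d}) together with closure of affine maps under composition (Lemma~\ref{l:aff-map-are-closed-by-composition}); both are available at this point in the development, so either is fine.
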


\begin{proof}
  \proofpar{In $\matPid$}
  Direct consequence of
  Lemma~\thref{l:lag-pol-P1d},
  Lemma~\threfc{l:pol-space-P0d-P1d}{thus $\matPid=\AffRdR$},
  Lemma~\threfc{l:lag-pol-is-basis-P1d-ref}{thus $\hcalL^{1,d}_i$ is affine},
  Lemma~\threfc{l:prop-geo-mapping}{thus $\invphigeodv$ is affine}, and
  Lemma~\thref{l:aff-map-are-closed-by-composition}.

  \proofparskip{Identities~\eqref{e:lag-pol-is-basis-P1d-1}
    and~\eqref{e:lag-pol-is-basis-P1d-2}}\\
  Direct consequence of
  Lemma~\thref{l:lag-pol-P1d},
  Lemma~\threfc{l:prop-geo-mapping}{\eqref{e:prop-geo-mapping-5}},
  Lemma~\threfc{l:lag-pol-is-basis-P1d-ref}{%
    \eqref{e:lag-pol-is-basis-P1d-ref-2}
    and~\eqref{e:lag-pol-is-basis-P1d-ref-3}}, and
  \assume{distributivity of composition over addition}.

  \proofparskip{Basis}
  Let $(\lambda_i)_{i\in[0..d]}\in\matR$, such that for all $\xx\in\matRd$,
  $\sum_{i=0}^d\lambda_i\calL^{\famvertd{\vv}}_i(\xx)=0$.
  Then, from~\eqref{e:lag-pol-is-basis-P1d-1}, and
  \assume{ring properties of~$\matR$},
  taking~$\xx\eqdef\vv_j$ provides
  $\sum_{i=0}^d\lambda_i\kron{i}{j}=\lambda_j=0$.
  Thus, from
  \assume{the definition of freedom},
  Lemma~\threfc{l:dim-Pkd}{with $k\eqdef1$}, and
  Lemma~\thref{l:free-family-of-dim-elements-is-basis},
  the Lagrange polynomials form a basis of~$\matPid$.
\end{proof}

\begin{lemma}[decomposition of $\matPid$ polynomial in Lagrange basis]
  \label{l:decomp-P1d-pol-in-lag-basis}
  \mbox{}\\
  Let~$d\geq1$.
  Let~$\famvertd{\vv}$ be $d+1$ affinely independent points in~$\matRd$.
  Let~$p\in\matPid$.
  Then, we have
  \begin{equation}
    \label{e:decomp-P1d-pol-in-lag-basis}
    p = \sum_{i = 0}^d p (\vv_i) \, \calL^{\famvertd{\vv}}_i.
  \end{equation}
\end{lemma}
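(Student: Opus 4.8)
The plan is to mimic the univariate argument of Lemma~\ref{l:decomp-Pk1-pol-in-lag-basis}, replacing the role of the Lagrange nodes by the vertices~$\famvertd{\vv}$ and the univariate Lagrange basis by the family~$(\calL^{\famvertd{\vv}}_i)_{i\in[0..d]}$. The whole statement is really just the coordinate-extraction step of a basis decomposition.

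First I would invoke Lemma~\ref{l:lag-pol-is-basis-P1d}, which asserts that $(\calL^{\famvertd{\vv}}_i)_{i\in[0..d]}$ is a basis of~$\matPid$. Since $p\in\matPid$, there exist (unique) scalars $(\alpha_i)_{i\in[0..d]}\in\matR$ such that $p=\sum_{i=0}^d\alpha_i\calL^{\famvertd{\vv}}_i$, and in particular, for all $\xx\in\matRd$, $p(\xx)=\sum_{i=0}^d\alpha_i\calL^{\famvertd{\vv}}_i(\xx)$.

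Next, I would fix $j\in[0..d]$ and evaluate this identity at $\xx\eqdef\vv_j$. Using identity~\eqref{e:lag-pol-is-basis-P1d-1} from Lemma~\ref{l:lag-pol-is-basis-P1d}, namely $\calL^{\famvertd{\vv}}_i(\vv_j)=\kron{i}{j}$, together with the defining property of the Kronecker delta, I would obtain $p(\vv_j)=\sum_{i=0}^d\alpha_i\kron{i}{j}=\alpha_j$. Substituting $\alpha_j=p(\vv_j)$ for each $j$ back into the decomposition then yields the claimed formula.

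There is no genuine obstacle here: all the substance is already packaged in Lemma~\ref{l:lag-pol-is-basis-P1d} (the basis property, which requires the affine independence of~$\famvertd{\vv}$, and the Kronecker identity at the vertices). The only point requiring care is that the coefficients of the basis decomposition are a priori arbitrary and must be \emph{identified} by evaluating at the vertices, exactly as the node-evaluation step fixes the coefficients in the univariate proof of Lemma~\ref{l:decomp-Pk1-pol-in-lag-basis}.
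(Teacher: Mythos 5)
Your proof is correct and matches the paper's own argument exactly: the paper also derives the result as a direct consequence of the basis property from Lemma~\ref{l:lag-pol-is-basis-P1d} together with the Kronecker identity~\eqref{e:lag-pol-is-basis-P1d-1}, identifying the coefficients by evaluation at the vertices just as in the univariate Lemma~\ref{l:decomp-Pk1-pol-in-lag-basis}. Nothing further is needed.
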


\begin{proof}
  Direct consequence of
  Lemma~\threfc{l:lag-pol-is-basis-P1d}{%
    basis, then \eqref{e:lag-pol-is-basis-P1d-1}}, and
  \assume{ring properties of~$\matR$}.
\end{proof}

\begin{lemma}[differential of Lagrange polynomials $\matPid$]
  \label{l:diff-lag-pol-P1d}
  \mbox{}\\
  Let~$d\geq1$.
  Let~$\famvertd{\vv}$ be $d+1$ affinely independent points in~$\matRd$.
  Let~$\xx\in\matRd$.
  Let~$i\in[0..d]$.\\
  Then, we have $\calL^{\famvertd{\vv}}_i\in\CinfRdR$ and
  \begin{align}
    \label{e:diff-lag-pol-P1d-0}
    (i = 0)&&
    \Diff \calL^{\famvertd{\vv}}_0 (\xx) &=
      -\sum_{j = 1}^d \underline{(\Ageodv)^{-1}}_j \in \calM_{1, d}(\matR),&&\\
    \label{e:diff-lag-pol-P1d-1}
    (i \in [1..d])&&
    \Diff \calL^{\famvertd{\vv}}_i (\xx) &=
      \underline{(\Ageodv)^{-1}}_i \in \calM_{1, d}(\matR),&&
  \end{align}
  where, we recall that for any matrix~$B$, $\underline{B}_i$ denotes its
  $i$-th line.
\end{lemma}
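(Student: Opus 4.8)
The plan is to differentiate the factorization $\calL^{\famvertd{\vv}}_i=\hcalL^{1,d}_i\circ\invphigeodv$ provided by Lemma~\ref{l:lag-pol-P1d} using the chain rule, and to read off the result as a matrix product. First I would settle smoothness: since $\famvertd{\vv}$ is affinely independent, Lemma~\ref{l:diff-geo-mapping} gives $\invphigeodv\in\CinfRdRd$ with $\Diff\invphigeodv(\xx)=\invAgeodv$ for all $\xx$, and Lemma~\ref{l:diff-lag-pol-ref} gives $\hcalL^{1,d}_i\in\CinfRdR$. As a composition of two $\Cinf$ maps, $\calL^{\famvertd{\vv}}_i$ then belongs to $\CinfRdR$.

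Next I would apply the chain rule at an arbitrary point $\xx\in\matRd$:
\[
  \Diff \calL^{\famvertd{\vv}}_i (\xx)
  = \Diff \hcalL^{1,d}_i (\invphigeodv (\xx)) \circ \Diff \invphigeodv (\xx).
\]
The crucial simplification is that $\hcalL^{1,d}_i$ is affine (Lemma~\ref{l:pol-space-P0d-P1d} and Lemma~\ref{l:lag-pol-is-basis-P1d-ref}), hence its differential is constant and equal to its linear part; by Lemma~\ref{l:diff-lag-pol-ref} this constant is $-\oone\in\calM_{1,d}(\matR)$ when $i=0$, and $\ee_i\in\calM_{1,d}(\matR)$ when $i\in[1..d]$. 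The second factor is the constant $\invAgeodv$ by Lemma~\ref{l:diff-geo-mapping}. In particular $\Diff\calL^{\famvertd{\vv}}_i(\xx)$ does not depend on $\xx$, consistently with $\calL^{\famvertd{\vv}}_i\in\matPid$ being affine.

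Finally I would identify the composition of these linear maps with the product of their matrices in the canonical bases, and use the rowwise rule for the matrix--matrix product: multiplying the row $\ee_i$ by the $d\times d$ matrix $\invAgeodv$ selects its $i$-th line $\underline{(\Ageodv)^{-1}}_i$, while multiplying by $-\oone$ sums all the lines, giving $-\sum_{j=1}^d\underline{(\Ageodv)^{-1}}_j$. This yields the two announced formulas. The computation is entirely routine; the only point that needs care is the bookkeeping of the row/line conventions (that a $1\times d$ row vector times a $d\times d$ matrix returns a combination of its lines), since everything else reduces to the chain rule and the constancy of the differential of an affine map.
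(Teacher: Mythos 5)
Your proposal is correct and follows essentially the same route as the paper: the chain rule applied to the factorization $\calL^{\famvertd{\vv}}_i=\hcalL^{1,d}_i\circ\invphigeodv$, with the constant differentials supplied by Lemmas~\ref{l:diff-lag-pol-ref} and~\ref{l:diff-geo-mapping}, and the row/line bookkeeping giving the two formulas. The only cosmetic difference is that you obtain smoothness as a composition of $\Cinf$ maps, whereas the paper notes directly that $\calL^{\famvertd{\vv}}_i\in\matPid$ is affine and affine maps are $\Cinf$; both are fine.
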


\begin{proof}
  Direct consequence of
  Lemma~\thref{l:lag-pol-is-basis-P1d},
  Lemma~\threfc{l:pol-space-P0d-P1d}{%
    thus Lagrange polynomials are affine maps},
  \assume{affine maps in $\matRd$ are~$\Cinf$},
  Lemma~\thref{l:diff-lag-pol-ref},
  Lemma~\thref{l:diff-geo-mapping}, and
  \assume{the rules derivation for composition}.
\end{proof}

\subsection{Nontrivial current simplex}
\label{ss:nontrivial-curr-simplex}

\begin{lemma}[{\nontrivial} simplex]
  \label{l:non-trivial-simplex}
  \mbox{}\hfill
  Let~$d\geq1$.
  Let~$\famvertd{\vv}$ be $d+1$ affinely independent points in~$\matRd$.
  Then, $\Kvd$ has a {\nonempty} interior.
  It is then said to be {\em {\nontrivial}}.
\end{lemma}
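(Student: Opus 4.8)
The plan is to transport the nonempty interior of the reference simplex $\hKd$ to $\Kvd$ through the geometric mapping $\phigeodv$, exploiting that, under the affine independence hypothesis, this mapping is an affine bijection with affine (hence continuous) inverse, and that it sends $\hKd$ onto $\Kvd$.

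First I would invoke Lemma~\thref{l:prop-geo-mapping}: since $\famvertd{\vv}$ is affinely independent, $\phigeodv$ belongs to $\AffRdRd$, is bijective, satisfies $\phigeodv(\hKd)=\Kvd$ (identity~\eqref{e:prop-geo-mapping-3bis}), and has affine inverse $\invphigeodv$. By Lemma~\thref{l:diff-geo-mapping}, $\invphigeodv\in\CinfRdRd$, so in particular it is continuous (alternatively, continuity of the affine inverse follows from Lemma~\thref{l:cont-aff-map-is-cont-linear-map}). Next, Lemma~\thref{l:non-trivial-ref-simplex} furnishes a nonempty open set contained in $\hKd$; concretely, an open ball $\calB\subset\hKd$ with $\calB\neq\emptyset$.

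The core step is to show that $\phigeodv(\calB)$ is open in $\matRd$. Since $\phigeodv$ is bijective with inverse $\invphigeodv$, one has $\phigeodv(\calB)=(\invphigeodv)^{-1}(\calB)$; because $\invphigeodv$ is continuous and $\calB$ is open, this preimage is open. Moreover $\phigeodv(\calB)$ is nonempty (as $\calB\neq\emptyset$ and $\phigeodv$ is injective) and, by monotonicity of images together with $\calB\subset\hKd$ and~\eqref{e:prop-geo-mapping-3bis}, it is contained in $\phigeodv(\hKd)=\Kvd$. Hence $\Kvd$ contains a nonempty open set, and from the definition of the interior (the largest open subset), $\Interior{\Kvd}\neq\emptyset$, so $\Kvd$ is {\nontrivial}.

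I expect the main obstacle to be the justification that $\phigeodv(\calB)$ is open: this is exactly the place where both the bijectivity of $\phigeodv$ and the continuity of $\invphigeodv$ are essential, rewriting the direct image as a preimage under the continuous map $\invphigeodv$. The remaining ingredients are a straightforward assembly of Lemmas~\thref{l:prop-geo-mapping}, \thref{l:diff-geo-mapping}, and~\thref{l:non-trivial-ref-simplex}.
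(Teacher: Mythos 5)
Your proposal is correct and follows essentially the same route as the paper: take a nonempty open subset of $\hKd$ from Lemma~\thref{l:non-trivial-ref-simplex}, push it forward by $\phigeodv$, and use continuity of $\invphigeodv$ to see the image is open, nonempty, and contained in $\Kvd=\phigeodv(\hKd)$. The only cosmetic difference is that the paper appeals to ``homeomorphisms are open maps'' as a black box, whereas you unfold that fact via the identity $\phigeodv(\calB)=(\invphigeodv)^{-1}(\calB)$.
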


\begin{proof}
  From
  Lemma~\thref{l:non-trivial-ref-simplex}, and
  \assume{the definition of the interior},
  let~$\hU$ be a {\nonempty} open set included in~$\hKd$.
  Then, from
  Lemma~\thref{l:diff-geo-mapping},
  \assume{$\Cinf$ is continuous (thus $\phigeodv$ is an homeomorphism)}, and
  \assume{homeomorphisms are open maps},
  $U\eqdef\phigeodv(\hU)$ is open.
  Therefore, from
  \assume{image of $\emptyset$ is $\emptyset$ (contrapositive)},
  \assume{image is nondecreasing},
  Lemma~\threfc{l:prop-geo-mapping}{\eqref{e:prop-geo-mapping-3bis}}, and
  \assume{the definition of the interior},
  we have $\emptyset\neq U\subset\Kvd$, and~$\Kvd$ has a {\nonempty} interior.
\end{proof}

\section{Barycentric coordinates}
\label{s:baryc-coord}

\begin{remark}
  In the next lemma, the uniqueness of the barycentric coordinates actually
  makes them functions of the point~$\xx$.
  In particular, in~\eqref{e:baryc-coor-2}, they do not depend on the choice
  for~$j$.
\end{remark}

\begin{lemma}[barycentric coordinate]
  \label{l:baryc-coor}
  \mbox{}\\
  Let~$d\geq1$.
  Let~$\famvertd{\vv}$ be $d+1$ affinely independent points in~$\matRd$.
  Let~$\xx\in\matRd$.\\
  Then, there exists unique~$(\barcvd_i)_{i\in[0..d]}\in\matR$ satisfying the
  next two equivalent decompositions
  \begin{align}
    \label{e:baryc-coor-1}
    \xx &= \sum_{i = 0}^d \barcvd_i \, \vv_i
    &&\mbox{with} \quad \sum_{i = 0}^d \barcvd_i = 1,\\
    \label{e:baryc-coor-2}
    \forall j \in [0..d],\quad
    \xx &= \vv_j + \sum_{i = 0, i \neq j}^d \barcvd_i (\vv_i - \vv_j)
    &&\mbox{with} \quad \barcvd_j = 1 - \sum_{i = 0, i \neq j}^d \barcvd_i.
  \end{align}
  The $\barcvd_i$'s are called
  {\em barycentric coordinates of~$\xx$ with respect to~$\famvertd{\vv}$},
  they are functions of~$\xx$.
\end{lemma}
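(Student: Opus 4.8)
The plan is to reduce the affine statement to the standard fact about coordinates in a basis, the basis being furnished by the hypothesis of affine independence. Since $\famvertd{\vv}$ is affinely independent, Definition~\ref{d:aff-indep-family} gives that the family $(\vv_i-\vv_0)_{i\in[1..d]}$ is free in~$\matRd$. As this is a free family of $d$ vectors in the $d$-dimensional space~$\matRd$ (and $d\geq1$, so the dimension is positive), Lemma~\ref{l:free-family-of-dim-elements-is-basis} shows it is a basis of~$\matRd$.

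First I would establish that the two systems~\eqref{e:baryc-coor-1} and~\eqref{e:baryc-coor-2} are equivalent as conditions on a tuple $(\barcvd_i)_{i\in[0..d]}\in\matR$, so that it suffices to work with~\eqref{e:baryc-coor-1}. Assuming~\eqref{e:baryc-coor-1}, for any $j\in[0..d]$ the constraint $\sum_{i=0}^d\barcvd_i=1$ is exactly $\barcvd_j=1-\sum_{i=0,i\neq j}^d\barcvd_i$, and isolating the $\vv_j$ term in $\xx=\sum_{i=0}^d\barcvd_i\vv_i$ yields $\xx=\vv_j+\sum_{i=0,i\neq j}^d\barcvd_i(\vv_i-\vv_j)$, which is~\eqref{e:baryc-coor-2}. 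Conversely, summing the coefficients in~\eqref{e:baryc-coor-2} recovers $\sum_{i=0}^d\barcvd_i=1$, and expanding gives~\eqref{e:baryc-coor-1}. This is the same algebraic manipulation already carried out in the proof of Lemma~\ref{l:baryc-closure-is-aff-sub-sp}.

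For existence, I would apply the basis property to the vector $\xx-\vv_0$: there exist unique scalars $(\barcvd_i)_{i\in[1..d]}$ with $\xx-\vv_0=\sum_{i=1}^d\barcvd_i(\vv_i-\vv_0)$, and I set $\barcvd_0\eqdef1-\sum_{i=1}^d\barcvd_i$. Expanding $\vv_0+\sum_{i=1}^d\barcvd_i(\vv_i-\vv_0)$ and collecting the $\vv_0$ term then gives $\xx=\sum_{i=0}^d\barcvd_i\vv_i$ together with $\sum_{i=0}^d\barcvd_i=1$, that is~\eqref{e:baryc-coor-1}. For uniqueness, I would subtract two solutions of~\eqref{e:baryc-coor-1}; writing $\nu_i$ for the differences, one has $\sum_{i=0}^d\nu_i=0$ and $\sum_{i=0}^d\nu_i\vv_i=0$, whence eliminating $\nu_0$ gives $\sum_{i=1}^d\nu_i(\vv_i-\vv_0)=0$, and the freedom of the basis forces $\nu_i=0$ for $i\in[1..d]$, and then $\nu_0=0$.

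The argument is essentially routine linear algebra; the only point requiring care is the bookkeeping in passing between the $d+1$ affine coefficients constrained by $\sum_{i=0}^d\barcvd_i=1$ and the $d$ genuinely free linear coordinates in the basis $(\vv_i-\vv_0)_{i\in[1..d]}$, so no single step is a real obstacle. As an alternative packaging of existence and uniqueness, one could instead read off $(\barcvd_i)_{i\in[1..d]}$ as the components of $\invphigeodv(\xx)=\invAgeodv(\xx-\vv_0)$ from Lemma~\ref{l:prop-geo-mapping} (identity~\eqref{e:prop-geo-mapping-4}), which encapsulates both properties in the already-established bijectivity of the geometric mapping.
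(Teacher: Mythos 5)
Your proof is correct and follows essentially the same route as the paper's: affine independence makes $(\vv_i-\vv_0)_{i\in[1..d]}$ a basis, existence comes from expanding $\xx-\vv_0$ in that basis and setting $\barcvd_0\eqdef1-\sum_{i=1}^d\barcvd_i$, uniqueness from subtracting two solutions and invoking freedom, and the equivalence of the two decompositions from the same coefficient bookkeeping. The only (harmless) organizational difference is that the paper establishes decomposition~\eqref{e:baryc-coor-2} for an arbitrary~$j$ first and deduces independence on~$j$ from the uniqueness of~\eqref{e:baryc-coor-1}, whereas you fix $j\eqdef0$ for existence and obtain the general-$j$ form via the equivalence of the two systems.
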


\begin{proof}
  \proofpar{Existence}
  Let $j\in[0..d]$.
  Then, from
  Lemma~\threfc{LM-l:closed-under-vector-operations-is-subspace}{%
    thus $\xx-\vv_j$ belongs to the {\vectorspace}~$\matRd$},
  Lemma~\thref{l:equiv-def-aff-indep-family},
  Lemma~\thref{l:free-family-of-dim-elements-is-basis}, and
  \assume{the definition of basis},
  there exist unique real numbers $(\mu_i^j)_{i\in[0..d]\setminus\{j\}}$ such that
  $\xx-\vv_j=\sum_{i=0,i\neq j}\mu_i^j(\vv_i-\vv_j)$, {\ie} the
  decomposition~\eqref{e:baryc-coor-2} exists, and is unique (but still
  depends on~$j$).

  Let~$\mu_j^j\eqdef1-\sum_{i=0,i\neq j}\mu_i^j$.
  Then, from
  \assume{commutative ring properties of~$\matR$},
  we have $\xx=\sum_{i=0}\mu_i^j\vv_i$ and $\sum_{i=0}^d\mu_i^j=1$, {\ie} the
  decomposition~\eqref{e:baryc-coor-1} exists.

  \proofparskip{Uniqueness}
  Let $(\lambda_i)_{i\in[0..d]},(\mu_i)_{i\in[0..d]}\in\matR$, such that
  \[
    \xx = \sum_{i = 0}^d \lambda_i \vv_i = \sum_{i = 0}^d \mu_i \vv_i
    \AND
    \sum_{i = 0}^d \lambda_i =\sum_{i = 0}^d \mu_i = 1.
  \]
  For all~$i\in[0..d]$, let~$\xi_i\eqdef\lambda_i-\mu_i$.
  Then, from
  \assume{ring properties of~$\matR$},
  we have successively $\sum_{i=0}^d\xi_i\vv_i=0$, $\sum_{i=0}^d\xi_i=0$,
  $\xi_0=-\sum_{i=1}^d\xi_i$, and $\sum_{i=1}^d\xi_i(\vv_i-\vv_0)=0$.
  Thus, from
  Definition~\thref{d:aff-indep-family},
  \assume{the definition of freedom}, and
  \assume{additive group properties of~$\matR$},
  we have, for all~$i\in[1..d]$, $\xi_i=0$, and $\xi_0=0$.
  Therefore, the decomposition~\eqref{e:baryc-coor-1} is unique.

  \proofparskip{Equivalence and independence on~$j$}\\
  Let~$j\in[0..d]$.
  Then, from~\eqref{e:baryc-coor-1}, and
  \assume{ring properties of~$\matR$},
  we have
  \[
    \xx = \left( 1 - \sum_{i = 0, i \neq j} \barcvd_i \right) \vv_j
    + \sum_{i = 0, i \neq j} \barcvd_i \vv_i,
  \]
  and thus $\xx-\vv_j=\sum_{i=0,i\neq j}^d\barcvd_i(\vv_i-\vv_j)$, {\ie}
  equivalence of the two decompositions.
  Moreover, from the uniqueness of the $\mu_i^j$'s defined above, we have, for
  all $i\in[0..d]\setminus\{j\}$, $\mu_i^j=\barcvd_i$, {\ie} the
  decomposition~\eqref{e:baryc-coor-2} is independent on~$j$.
\end{proof}

\begin{lemma}[Lagrange polynomials of $\matPid$ are barycentric coordinate]
  \label{l:lag-pol-P1d-is-baryc-coor}
  \mbox{}\\
  Let~$d\geq1$.
  Let~$\famvertd{\vv}$ be $d+1$ affinely independent points in~$\matRd$.\\
  Then, barycentric coordinates and Lagrange polynomials coincide, for all
  $i\in[0..d]$, $\barcvd_i=\calL^{\famvertd{\vv}}_i$.
  Thus, we have, for all $\xx\in\matRd$,
  $\xx=\sum_{i=0}^d\calL^{\famvertd{\vv}}_i(\xx)\,\vv_i$ with
  $\sum_{i=0}^d\calL^{\famvertd{\vv}}_i(\xx)=1$.

  Moreover, let $\bbarcvd\eqdef(\barcvd_i)_{i\in[1..d]}$.
  Then, we have $\bbarcvd=\invphigeodv\in\AffRdRd$, {\ie}
  \begin{equation}
    \label{e:lag-pol-P1d-is-baryc-coor-2}
    \forall \xx \in \matRd,\quad
    \xx = \phigeodv (\bbarcvd (\xx)),
    \quad\mbox{or equivalently}\quad
    \forall \hxx \in \matRd,\quad
    \hxx = \bbarcvd (\phigeodv (\hxx)).
  \end{equation}
\end{lemma}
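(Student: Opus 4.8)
The plan is to establish the first identity $\barcvd_i = \calL^{\famvertd{\vv}}_i$ by showing that the Lagrange polynomials, evaluated at a point $\xx$, provide a barycentric decomposition of $\xx$, and then to invoke the uniqueness part of Lemma~\ref{l:baryc-coor}. The cleanest route is to exploit the relation $\calL^{\famvertd{\vv}}_i = \hcalL^{1,d}_i \circ \invphigeodv$ from Lemma~\ref{l:lag-pol-P1d} together with the very definition of the geometric mapping.

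First I would write, from Definition~\ref{d:geo-mapping}, $\phigeodv(\hxx) = \sum_{i=0}^d \hcalL^{1,d}_i(\hxx)\,\vv_i$ for all $\hxx \in \matRd$. Since $\famvertd{\vv}$ is affinely independent, $\phigeodv$ is bijective with $\phigeodv \circ \invphigeodv = \identity{\matRd}$ (Lemma~\ref{l:prop-geo-mapping}), so substituting $\hxx \eqdef \invphigeodv(\xx)$ and using Lemma~\ref{l:lag-pol-P1d} yields
\[
  \xx = \phigeodv(\invphigeodv(\xx))
  = \sum_{i=0}^d \hcalL^{1,d}_i(\invphigeodv(\xx))\,\vv_i
  = \sum_{i=0}^d \calL^{\famvertd{\vv}}_i(\xx)\,\vv_i.
\]
Combining this with $\sum_{i=0}^d \calL^{\famvertd{\vv}}_i = 1$ from \eqref{e:lag-pol-is-basis-P1d-2}, the family $(\calL^{\famvertd{\vv}}_i(\xx))_{i \in [0..d]}$ satisfies exactly the two conditions of decomposition~\eqref{e:baryc-coor-1}; the uniqueness statement of Lemma~\ref{l:baryc-coor} then gives $\barcvd_i = \calL^{\famvertd{\vv}}_i(\xx)$ for every $i \in [0..d]$, which is the first claim, while the displayed decomposition with sum equal to one is the second claim.

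For the identification $\bbarcvd = \invphigeodv$, I would specialize to the indices $i \in [1..d]$. By Definition~\ref{d:lag-pol-P1d-ref}, $\hcalL^{1,d}_i(\hxx) = \hx_i$ for $i \in [1..d]$, so $\calL^{\famvertd{\vv}}_i = \hcalL^{1,d}_i \circ \invphigeodv$ is precisely the $i$-th component of $\invphigeodv$. Assembling these $d$ components, the vector $\bbarcvd(\xx) = (\barcvd_i(\xx))_{i \in [1..d]} = (\calL^{\famvertd{\vv}}_i(\xx))_{i \in [1..d]}$ equals $\invphigeodv(\xx)$ for all $\xx$, that is $\bbarcvd = \invphigeodv$, which belongs to $\AffRdRd$ by Lemma~\ref{l:prop-geo-mapping}. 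The two equivalent forms in~\eqref{e:lag-pol-P1d-is-baryc-coor-2} are then immediate: both $\xx = \phigeodv(\bbarcvd(\xx))$ and $\hxx = \bbarcvd(\phigeodv(\hxx))$ are just the composition identities $\phigeodv \circ \invphigeodv = \identity{\matRd}$ and $\invphigeodv \circ \phigeodv = \identity{\matRd}$.

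There is no deep obstacle here; the work is almost entirely the assembly of results already proved. The only point requiring care is the index bookkeeping: the barycentric coordinates carry $d+1$ entries indexed by $[0..d]$, whereas $\bbarcvd$ and $\invphigeodv$ live in $\matRd$ with components indexed by $[1..d]$, so the reduction from the full family to $\bbarcvd$ hinges on the special role of $\hcalL^{1,d}_0$ (which encodes the affine constraint $\sum_{i=0}^d \barcvd_i = 1$) as opposed to the coordinate forms $\hcalL^{1,d}_i = \hx_i$ for $i \geq 1$. I would also make explicit at the outset that affine independence is what licenses the substitution $\hxx \eqdef \invphigeodv(\xx)$, since it is exactly the hypothesis needed for $\phigeodv$ to be invertible.
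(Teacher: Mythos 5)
Your proof is correct, and it uses the same basic ingredients as the paper (Definition~\ref{d:geo-mapping}, Lemma~\ref{l:lag-pol-P1d}, the bijectivity from Lemma~\ref{l:prop-geo-mapping}, and the uniqueness part of Lemma~\ref{l:baryc-coor}), but it assembles them in the opposite order and replaces one computation. The paper first proves $\bbarcvd=\invphigeodv$ by writing $\xx-\vv_0=\sum_{i=1}^d\barcvd_i(\xx)(\vv_i-\vv_0)=\Ageodv\,\bbarcvd(\xx)$ (decomposition~\eqref{e:baryc-coor-2} with $j\eqdef0$) and inverting the matrix via~\eqref{e:prop-geo-mapping-4}; only then does it derive $\barcvd_i=\calL^{\famvertd{\vv}}_i$ by comparing the two barycentric decompositions of $\xx=\phigeodv(\bbarcvd(\xx))$. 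You go the other way: you obtain $\xx=\sum_{i=0}^d\calL^{\famvertd{\vv}}_i(\xx)\,\vv_i$ directly from $\xx=\phigeodv(\invphigeodv(\xx))$, check the affine constraint with~\eqref{e:lag-pol-is-basis-P1d-2}, conclude $\barcvd_i=\calL^{\famvertd{\vv}}_i$ by uniqueness, and then read off $\bbarcvd=\invphigeodv$ componentwise from $\hcalL^{1,d}_i(\hxx)=\hx_i$ for $i\in[1..d]$. Your route avoids the explicit matrix $\Ageodv$ and its inverse entirely, at the cost of invoking the partition-of-unity identity~\eqref{e:lag-pol-is-basis-P1d-2} (which the paper's ordering does not need, since it gets the sum condition for free from Lemma~\ref{l:baryc-coor} applied to $\bbarcvd(\xx)$); both are legitimate, and there is no circularity since Lemma~\ref{l:lag-pol-is-basis-P1d} is proved independently of this lemma. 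Your closing remark about the asymmetric roles of $\hcalL^{1,d}_0$ and the coordinate forms $\hcalL^{1,d}_i=\hx_i$ is exactly the right point of care.
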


\begin{proof}
  \proofpar{(i) Equality $\bbarcvd=\invphigeodv$}
  Let~$\xx\in\matRd$.
  Then, from
  Lemma~\threfc{l:baryc-coor}{\eqref{e:baryc-coor-2} with $j\eqdef0$},
  \assume{additive group properties of~$\matR$},
  \assume{the columnwise rule for matrix-vector product},
  \assume{the rules of matrix-vector product (inverse matrix)}, and
  Lemma~\thref{l:prop-geo-mapping},
  we have
  \begin{gather*}
    \xx - \vv_0 = \sum_{i = 1}^d \barcvd_i (\xx) (\vv_i - \vv_0)
    = \Ageodv \bbarcvd (\xx),\\
    \bbarcvd (\xx) = \invAgeodv (\xx - \vv_0) = \invphigeodv (\xx),
  \end{gather*}
  and thus $\bbarcvd=\invphigeodv\in\AffRdRd$.

  \proofparskip{(ii) Identities~\eqref{e:lag-pol-P1d-is-baryc-coor-2}}
  Direct consequence of (i), and
  \assume{the definition of the inverse}.

  \proofparskip{(iii) Identity $\barcvd_i=\calL^{\famvertd{\vv}}_i$}\\
  Let~$\xx\in\matRd$.
  Then, from
  Lemma~\threfc{l:baryc-coor}{\eqref{e:baryc-coor-1}, then uniqueness},
  (ii),
  Definition~\thref{d:geo-mapping},
  (i), and
  Lemma~\thref{l:lag-pol-P1d},
  we have
  \begin{align*}
    \sum_{i = 0}^d \barcvd_i (\xx) \, \vv_i
    &= \xx
    = \phigeodv (\bbarcvd (\xx))\\
    &= \sum_{i = 0}^d \hcalL^{1, d}_i (\bbarcvd (\xx)) \, \vv_i
    = \sum_{i = 0}^d \hcalL^{1, d}_i \left(
      \invphigeodv (\xx) \right) \, \vv_i
    = \sum_{i = 0}^d \calL^{\famvertd{\vv}}_i (\xx) \, \vv_i,
  \end{align*}
  and thus, for all $i\in[0..d]$, $\barcvd_i=\calL^{\famvertd{\vv}}_i$.
\end{proof}

\begin{lemma}[decomposition of $\matPid$ polynomial with barycentric
  coordinates]
  \label{l:decomp-aff-map-with-baryc-coor}
  \mbox{}\\
  Let~$d\geq1$.
  Let~$\famvertd{\vv}$ be $d+1$ affinely independent points in~$\matRd$.
  Let~$p\in\matPid$.
  Then, we have
  \begin{equation}
    \label{e:decomp-aff-map-with-baryc-coor}
    p = \sum_{i = 0}^d p (\vv_i) \, \barcvd_i.
  \end{equation}
\end{lemma}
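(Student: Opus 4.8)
The plan is to derive this identity immediately from two results already established, with essentially no new computation required. The key observation is that the right-hand side of the claimed decomposition differs from a known decomposition only by a change of notation for the coefficients' companion functions.

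First I would invoke Lemma~\ref{l:decomp-P1d-pol-in-lag-basis} (decomposition of $\matPid$ polynomial in Lagrange basis), whose hypotheses are met exactly (here $d\geq1$, the family $\famvertd{\vv}$ is affinely independent, and $p\in\matPid$). This yields
\[
  p = \sum_{i = 0}^d p (\vv_i) \, \calL^{\famvertd{\vv}}_i.
\]
Next I would use Lemma~\ref{l:lag-pol-P1d-is-baryc-coor} (Lagrange polynomials of $\matPid$ are barycentric coordinate), which establishes under the same hypotheses that the Lagrange polynomials and the barycentric coordinate functions coincide, namely $\barcvd_i = \calL^{\famvertd{\vv}}_i$ for all $i\in[0..d]$. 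Substituting this equality term by term into the display above gives $p = \sum_{i = 0}^d p (\vv_i) \, \barcvd_i$, which is precisely the claim.

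Thus the entire argument is a direct consequence of Lemma~\ref{l:decomp-P1d-pol-in-lag-basis} and Lemma~\ref{l:lag-pol-P1d-is-baryc-coor}, chained by the identification of the two families of functions. There is no real obstacle here: the only point requiring a moment's care is purely bookkeeping, namely checking that the hypotheses (affine independence of $\famvertd{\vv}$ and membership $p\in\matPid$) are shared by both invoked lemmas so that the substitution $\barcvd_i = \calL^{\famvertd{\vv}}_i$ is legitimate at every index $i\in[0..d]$. Since both lemmas are stated under identical assumptions, this matches automatically, and the proof closes in one line.
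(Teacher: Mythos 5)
Your proof is correct and follows exactly the paper's own argument: the paper likewise obtains the identity as a direct consequence of Lemma~\ref{l:decomp-P1d-pol-in-lag-basis} combined with the identification $\barcvd_i=\calL^{\famvertd{\vv}}_i$ from Lemma~\ref{l:lag-pol-P1d-is-baryc-coor}. Nothing is missing.
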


\begin{proof}
  Direct consequence of
  Lemma~\thref{l:decomp-P1d-pol-in-lag-basis}, and
  Lemma~\thref{l:lag-pol-P1d-is-baryc-coor}.
\end{proof}

\section{Hyperplanes and $l$-faces}
\label{s:hyperplanes-l-faces}

\subsection{Hyperplanes and $d-1$-faces}
\label{ss:hyperplanes-d-1-faces}

\begin{definition}[face hyperplane]
  \label{d:face-hyperpl}
  \mbox{}\\
  Let~$d\geq1$.
  Let~$\famvertd{\vv}$ be $d+1$ points in~$\matRd$.
  Let $i\in[0..d]$.
  The {\em $i$-th face hyperplane} opposite vertex~$\vv_i$ is
  denoted~$\calHvd_i$, and is defined as the affine subspace of~$\matRd$
  \begin{align}
    \label{e:face-hyperpl-0}
    (i = 0)&&
    \calHvd_0 &\eqdef \vv_1 + \Span{\vv_j - \vv_1}_{j \in [2..d]},&&\\
    \label{e:face-hyperpl-1}
    (i \in [1..d]&&
    \calHvd_i &\eqdef \vv_0
      + \Span{\vv_j - \vv_0}_{j \in [1..d] \setminus \{ i \}}.&&
  \end{align}
  For the reference vertices~$\famvertd{\hvv}$, the $i$-th reference face
  hyperplane is denoted~$\hcalH^d_i\eqdef\calH^{\famvertd{\hvv}}_i$.
\end{definition}

\begin{figure}[htb]
  \centering
  \resizebox{0.85\linewidth}{!}{
    \input{figtikz_fem_TetToTet_k3}
  }
  \caption[Geometric transformation]{%
    Geometric transformation~$\phigeodv$ in the case $d=k=3$.\\
    The reference simplex~$\hKd$ is mapped onto the current simplex~$\Kvd$,
    and reference nodes~$\haa_\aalpha$ onto current nodes~$\aa_\aalpha$, see
    Lemmas~\ref{l:prop-geo-mapping} and~\ref{l:lag-nodes-Pkd-im-ref}.
    For instance, we show that the reference hyperplane~$\hcalH^d_0$
    containing the face~$\hH^d_0$ is mapped onto~$\calHvd_0$ containing the
    face~$\Hvd_0$.
    The nodes in these two faces are colored in order to help see the
    correspondence: for all $\aalpha\in\calCiiidiii$, we have
    $\aa_\aalpha=\phigeodv(\haa_\aalpha)$.}
  \label{f:geo-map-d=3-hyperplane}
\end{figure}

\begin{lemma}[equivalent definition of face hyperplane]
  \label{l:equiv-def-face-hyperpl}
  \mbox{}\hfill
  Let~$d\geq1$.
  Let~$\famvertd{\vv}$ be $d+1$ points in~$\matRd$.
  Let $i\in[0..d]$.
  Then, the $i$-th face hyperplane is characterized by
  \begin{equation}
    \label{e:equiv-def-face-hyperpl-1}
    \calHvd_i = \left\{
      \sum_{j = 0, j \neq i}^d \mu_j \vv_j \in \matRd
    \rightst \left.
      \forall j \in [0..d] \setminus \{ i \},\quad \mu_j \in \matR
      \CONJ \sum_{j = 0, j \neq i}^d \mu_j = 1
      \right\} .
  \end{equation}

  Moreover, if $\famvertd{\vv}$ is affinely independent, then we have
  \begin{equation}
    \label{e:equiv-def-face-hyperpl-2}
    \calHvd_i = \Ker{\barcvd_i},
  \end{equation}
  where~$\barcvd_i$ is the $i$-th barycentric coordinate application.
\end{lemma}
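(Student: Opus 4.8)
The plan is to derive the barycentric characterization~\eqref{e:equiv-def-face-hyperpl-1} from Lemma~\thref{l:baryc-closure-is-aff-sub-sp} applied to the subfamily of $d$ points obtained by dropping~$\vv_i$, and then to deduce~\eqref{e:equiv-def-face-hyperpl-2} from the uniqueness of barycentric coordinates in Lemma~\thref{l:baryc-coor}.

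First I would re-index the subfamily using the jump enumeration $\thetinjdmi{i}$ of Lemma~\thref{l:jump-enum}, which is a bijection from $[0..d-1]$ onto $[0..d]\setminus\{i\}$. Applying Lemma~\thref{l:baryc-closure-is-aff-sub-sp} (with $n\eqdef d-1$) to the family $(\vv_{\thetinjdmi{i}(l)})_{l\in[0..d-1]}$ yields simultaneously the spanning form~\eqref{e:baryc-closure-is-aff-sub-sp-1} (valid for any base index) and the barycentric form~\eqref{e:baryc-closure-is-aff-sub-sp-2}. Choosing base index $l_0\eqdef0$ and splitting into the cases $i=0$ and $i\in[1..d]$, I would note that $\vv_{\thetinjdmi{i}(0)}$ equals $\vv_1$ in the first case and $\vv_0$ in the second (since $\thetinjdmi{0}=\Identity_{[0..d-1]}+1$ and $\thetinjdmi{i}(0)=0$ when $i\geq1$); the spanning form then reproduces exactly Definition~\thref{d:face-hyperpl}, so the affine subspace $\calF$ produced by the lemma coincides with $\calHvd_i$. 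Finally, transporting the summation index of $\sum_{l=0}^{d-1}\mu_l\,\vv_{\thetinjdmi{i}(l)}$ along $\thetinjdmi{i}$ rewrites the barycentric form as the right-hand side of~\eqref{e:equiv-def-face-hyperpl-1}, which establishes the first equality.

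For~\eqref{e:equiv-def-face-hyperpl-2}, assuming $\famvertd{\vv}$ affinely independent, I would proceed by double inclusion using Lemma~\thref{l:baryc-coor}. If $\barcvd_i(\xx)=0$, then decomposition~\eqref{e:baryc-coor-1} reads $\xx=\sum_{j=0,j\neq i}^d\barcvd_j(\xx)\,\vv_j$ with $\sum_{j=0,j\neq i}^d\barcvd_j(\xx)=1$, so $\xx$ lies in the set~\eqref{e:equiv-def-face-hyperpl-1}, hence in $\calHvd_i$. Conversely, if $\xx\in\calHvd_i$, then~\eqref{e:equiv-def-face-hyperpl-1} gives $\xx=\sum_{j=0,j\neq i}^d\mu_j\vv_j$ with $\sum_{j=0,j\neq i}^d\mu_j=1$; setting $\mu_i\eqdef0$ produces a full decomposition $\xx=\sum_{j=0}^d\mu_j\vv_j$ with $\sum_{j=0}^d\mu_j=1$, and the uniqueness clause of Lemma~\thref{l:baryc-coor} forces $\mu_j=\barcvd_j(\xx)$ for all $j$, in particular $\barcvd_i(\xx)=\mu_i=0$, {\ie} $\xx\in\Ker{\barcvd_i}$.

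The step I expect to be the main obstacle is the purely combinatorial bookkeeping of the second paragraph: reconciling the asymmetric base points of Definition~\thref{d:face-hyperpl} (which uses $\vv_1$ when $i=0$ but $\vv_0$ when $i\geq1$) with the uniform statement of Lemma~\thref{l:baryc-closure-is-aff-sub-sp}, and transporting the summation index through $\thetinjdmi{i}$ so that $\sum_l\mu_l=1$ becomes $\sum_{j\neq i}\mu_j=1$ without index slips. Affine independence plays no role there; it is needed only for the uniqueness invoked in the second inclusion of~\eqref{e:equiv-def-face-hyperpl-2}.
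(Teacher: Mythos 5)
Your proof is correct. For identity~\eqref{e:equiv-def-face-hyperpl-1} you do exactly what the paper does, namely invoke Lemma~\ref{l:baryc-closure-is-aff-sub-sp} on the subfamily $(\vv_j)_{j\in[0..d]\setminus\{i\}}$; your extra bookkeeping with $\thetinjdmi{i}$ and the check that the base point ($\vv_1$ when $i=0$, $\vv_0$ when $i\geq1$) matches Definition~\ref{d:face-hyperpl} is precisely the content the paper leaves implicit in its ``direct consequence'' citation.

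For identity~\eqref{e:equiv-def-face-hyperpl-2} you take a mildly different route. The paper argues by an equivalence chain starting from Definition~\ref{d:face-hyperpl} and decomposition~\eqref{e:baryc-coor-2} of Lemma~\ref{l:baryc-coor} (the ``origin plus span'' form), with a case split on $i$ to pick the origin ($j\eqdef1$ when $i=0$, $j\eqdef0$ when $i\in[1..d]$): a point of $\calHvd_i$ is exactly one whose coefficient on $\vv_i-\vv_j$ vanishes. You instead reuse the just-proved characterization~\eqref{e:equiv-def-face-hyperpl-1} together with decomposition~\eqref{e:baryc-coor-1} and its uniqueness clause, arguing by double inclusion; the key step of extending $(\mu_j)_{j\neq i}$ by $\mu_i\eqdef0$ and invoking uniqueness is sound. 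This avoids the case split on $i$ and is arguably cleaner, at the cost of depending on part~\eqref{e:equiv-def-face-hyperpl-1} rather than on the definition directly. One small refinement to your closing remark: affine independence is needed not only for the uniqueness in the reverse inclusion but already for $\barcvd_i$ to be well defined (Lemma~\ref{l:baryc-coor} assumes it), so it underlies both inclusions of~\eqref{e:equiv-def-face-hyperpl-2}; it is indeed not needed for~\eqref{e:equiv-def-face-hyperpl-1}.
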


\begin{proof}
  \proofpar{Identity~\eqref{e:equiv-def-face-hyperpl-1}}
  Direct consequence of
  Lemma~\threfc{l:baryc-closure-is-aff-sub-sp}{%
    with $n\eqdef d-1\geq0$, and $\vv_j$ for $j\in[0..d]\setminus\{i\}$}.

  \proofparskip{Identity~\eqref{e:equiv-def-face-hyperpl-2}}
  \proofpar{Case $i=0$}
  Then, from
  Definition~\thref{d:face-hyperpl},
  \assume{ring properties of~$\matR$},
  Lemma~\threfc{l:baryc-coor}{\eqref{e:baryc-coor-2} with $j\eqdef1$},
  we have
  \begin{align*}
    \xx \in \calHvd_0
    &\EQUIV \exists (\tilde{\mu}_j)_{j \in [2..d]} \in \matR,\quad
      \xx = \vv_1 + 0 (\vv_0 - \vv_1)
      + \sum_{j = 2}^d \tilde{\mu}_j (\vv_j - \vv_1)\\
    &\EQUIV \barcvd_0 (\xx) = 0.
  \end{align*}
  \proofpar{Case $i\in[1..d]$}
  Then, from
  Definition~\thref{d:face-hyperpl},
  \assume{ring properties of~$\matR$},
  Lem\-ma~\threfc{l:baryc-coor}{\eqref{e:baryc-coor-2} with $j\eqdef0$},
  we have
  \begin{align*}
    \xx \in \calHvd_i
    &\EQUIV
      \exists (\tilde{\mu}_j)_{j \in [1..d] \setminus \{ i \}} \in \matR,\quad
      \xx = \vv_0 + 0 (\vv_i - \vv_0)
      + \sum_{j = 1, j \neq i}^d \tilde{\mu}_j (\vv_j - \vv_0)\\
    &\EQUIV \barcvd_i (\xx) = 0.
  \end{align*}

  \medskip\noindent
  Therefore, from
  Definition~\thref{LM-d:kernel},
  the identity always holds.
\end{proof}

\begin{lemma}[reference face hyperplane]
  \label{l:ref-face-hyperpl}
  \mbox{}\hfill
  Let~$d\geq1$.
  Let $i\in[0..d]$.
  Then, we have
  \begin{align}
    \label{e:ref-face-hyperpl-0}
    (i = 0)&&
    \hcalH^d_0 &= \left\{
      \vphantom{\sum_{j = 1}^d \hx_j = 1}
      \hxx \in \matRd \rightst
      \left. \sum_{j = 1}^d \hx_j = 1 \right\}
    = \Ker{\hcalL^{1,d}_0},&&\\
    \label{e:ref-face-hyperpl-1}
    (i \in [1..d])&&
    \hcalH^d_i &= \left\{ \hxx \in \matRd \st \hx_i = 0 \right\}
    = \Ker{\hcalL^{1,d}_i}.&&
  \end{align}
\end{lemma}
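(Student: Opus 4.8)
The plan is to reduce the reference case to the general results on face hyperplanes already established, and then to read off the kernels explicitly. First I would record, via Lemma~\ref{l:ref-affine-vert-is-affinely-indep}, that the family of reference points $\famvertd{\hvv}$ is affinely independent; this is precisely the standing hypothesis required to invoke the barycentric characterization of the face hyperplane.

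Next, since $\hcalH^d_i\eqdef\calH^{\famvertd{\hvv}}_i$ by Definition~\ref{d:face-hyperpl}, I would apply Lemma~\ref{l:equiv-def-face-hyperpl} (identity~\eqref{e:equiv-def-face-hyperpl-2}) to the affinely independent family $\famvertd{\hvv}$, which yields $\hcalH^d_i=\Ker{\barc{\famvertd{\hvv}}_i}$, where $\barc{\famvertd{\hvv}}_i$ denotes the $i$-th barycentric coordinate relative to the reference vertices. Then I would identify this coordinate with the reference Lagrange polynomial: by Lemma~\ref{l:lag-pol-P1d-is-baryc-coor} barycentric coordinates coincide with Lagrange polynomials, so $\barc{\famvertd{\hvv}}_i=\calL^{\famvertd{\hvv}}_i$, and by Lemma~\ref{l:lag-pol-of-ref-vert-are-ref-lag-pol-P1d} the Lagrange polynomials associated with the reference vertices are exactly the reference Lagrange polynomials, $\calL^{\famvertd{\hvv}}_i=\hcalL^{1,d}_i$. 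Chaining these equalities gives $\hcalH^d_i=\Ker{\hcalL^{1,d}_i}$, which is the second equality in both~\eqref{e:ref-face-hyperpl-0} and~\eqref{e:ref-face-hyperpl-1}.

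Finally, I would compute the kernels explicitly from Definition~\ref{d:lag-pol-P1d-ref}. For $i=0$, the condition $\hcalL^{1,d}_0(\hxx)=1-\sum_{j=1}^d\hx_j=0$ is equivalent to $\sum_{j=1}^d\hx_j=1$ by the additive group properties of $\matR$, giving the first set in~\eqref{e:ref-face-hyperpl-0}; for $i\in[1..d]$, the condition $\hcalL^{1,d}_i(\hxx)=\hx_i=0$ is already the displayed description in~\eqref{e:ref-face-hyperpl-1}.

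This argument is essentially a bookkeeping of earlier results, so I do not expect a genuine obstacle. The only point that needs minor care is ensuring the affine-independence hypothesis of Lemma~\ref{l:equiv-def-face-hyperpl} is explicitly in place before the barycentric characterization is used, and keeping the case split $i=0$ versus $i\in[1..d]$ aligned with the two clauses of Definition~\ref{d:lag-pol-P1d-ref}.
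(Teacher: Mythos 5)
Your proposal is correct and follows exactly the same route as the paper's proof, which cites the same chain: affine independence of the reference vertices, the kernel characterization~\eqref{e:equiv-def-face-hyperpl-2} of Lemma~\ref{l:equiv-def-face-hyperpl}, the identification of barycentric coordinates with Lagrange polynomials (Lemmas~\ref{l:lag-pol-P1d-is-baryc-coor} and~\ref{l:lag-pol-of-ref-vert-are-ref-lag-pol-P1d}), and the explicit formulas of Definition~\ref{d:lag-pol-P1d-ref}. No gaps.
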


\begin{proof}
  Direct consequence of
  Lemma~\thref{l:ref-affine-vert-is-affinely-indep},
  Lemma~\threfc{l:equiv-def-face-hyperpl}{\eqref{e:equiv-def-face-hyperpl-2}},
  Lemma~\thref{l:lag-pol-P1d-is-baryc-coor},
  Lemma~\thref{l:lag-pol-of-ref-vert-are-ref-lag-pol-P1d},
  Definition~\thref{d:lag-pol-P1d-ref}, and
  Definition~\thref{LM-d:kernel}.
\end{proof}

\begin{lemma}[face hyperplane is image of reference face hyperplane]
  \label{l:face-hyperpl-im-ref-face-hyperpl}
  \mbox{}\\
  Let~$d\geq1$.
  Let~$\famvertd{\vv}$ be $d+1$ affinely independent points in~$\matRd$.
  Let $i\in[0..d]$.
  Then, we have
  \begin{equation}
    \label{e:face-hyperpl-im-ref-face-hyperpl}
    \calHvd_i = \phigeodv \left( \hcalH^d_i \right).
  \end{equation}
\end{lemma}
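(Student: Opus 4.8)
The plan is to reduce the claimed set equality to an identity between kernels and then invoke the general ``image of a kernel is a kernel'' principle for surjective maps. Since $\famvertd{\vv}$ is affinely independent, Lemma~\ref{l:equiv-def-face-hyperpl} gives $\calHvd_i=\Ker{\barcvd_i}$, and Lemma~\ref{l:ref-face-hyperpl} gives $\hcalH^d_i=\Ker{\hcalL^{1,d}_i}$. Thus the target~\eqref{e:face-hyperpl-im-ref-face-hyperpl} is equivalent to $\Ker{\barcvd_i}=\phigeodv(\Ker{\hcalL^{1,d}_i})$.

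First I would establish the key composition identity $\barcvd_i\circ\phigeodv=\hcalL^{1,d}_i$. This follows by combining Lemma~\ref{l:lag-pol-P1d}, which states $\hcalL^{1,d}_i=\calL^{\famvertd{\vv}}_i\circ\phigeodv$, with Lemma~\ref{l:lag-pol-P1d-is-baryc-coor}, which identifies $\calL^{\famvertd{\vv}}_i=\barcvd_i$. Consequently $\Ker{\barcvd_i\circ\phigeodv}=\Ker{\hcalL^{1,d}_i}=\hcalH^d_i$.

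Next, I would recall from Lemma~\ref{l:prop-geo-mapping} that $\phigeodv$ is bijective (the vertices being affinely independent), hence in particular surjective. Viewing $\barcvd_i$ as a map from $\matRd$ into the additive monoid $\matR$, Lemma~\ref{l:im-ker-eq-ker} applied with $f\eqdef\phigeodv$ and $g\eqdef\barcvd_i$ yields $\phigeodv(\Ker{\barcvd_i\circ\phigeodv})=\Ker{\barcvd_i}$. Substituting the composition identity from the previous step gives $\phigeodv(\hcalH^d_i)=\phigeodv(\Ker{\hcalL^{1,d}_i})=\Ker{\barcvd_i}=\calHvd_i$, which is the claim.

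The main obstacle is only a bookkeeping point: one must apply the composition identity in the correct direction, since Lemma~\ref{l:lag-pol-P1d} expresses $\hcalL^{1,d}_i$ as $\calL^{\famvertd{\vv}}_i$ \emph{precomposed} with $\phigeodv$, and it is essential that $f=\phigeodv$ (not its inverse) so that the surjectivity hypothesis of Lemma~\ref{l:im-ker-eq-ker} is met. An equivalent route avoids Lemma~\ref{l:im-ker-eq-ker} altogether: using $\barcvd_i=\hcalL^{1,d}_i\circ\invphigeodv$ (again Lemmas~\ref{l:lag-pol-P1d} and~\ref{l:lag-pol-P1d-is-baryc-coor}) together with the bijectivity of $\phigeodv$, for $\xx\in\matRd$ one checks the chain $\xx\in\calHvd_i\Equiv\barcvd_i(\xx)=0\Equiv\hcalL^{1,d}_i(\invphigeodv(\xx))=0\Equiv\invphigeodv(\xx)\in\hcalH^d_i\Equiv\xx\in\phigeodv(\hcalH^d_i)$. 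Either way the proof is short given the preceding results.
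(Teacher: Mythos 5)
Your proof is correct, but it takes a genuinely different route from the one the paper gives for this particular lemma. The paper's own proof is a direct two-case computation ($i=0$ and $i\in[1..d]$): it unwinds Definition~\ref{d:face-hyperpl}, writes a point of $\calHvd_i$ with explicit coefficients $(\mu_j)$, and exhibits the preimage $\hmmu\in\hcalH^d_i$ by hand using the formulas for $\hcalL^{1,d}_j$ and Definition~\ref{d:geo-mapping} — a chain of equivalences relying only on ring properties of $\matR$ and the definitions. Your argument instead passes through the kernel characterizations $\calHvd_i=\Ker{\barcvd_i}$ and $\hcalH^d_i=\Ker{\hcalL^{1,d}_i}$, the identity $\barcvd_i\circ\phigeodv=\hcalL^{1,d}_i$, and Lemma~\ref{l:im-ker-eq-ker} with $f\eqdef\phigeodv$ surjective. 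All the ingredients you cite are available without circularity, and your handling of the direction of composition (so that the surjectivity hypothesis applies to $\phigeodv$ rather than its inverse) is the right care to take. It is worth noting that your route is exactly the technique the paper itself uses later to prove the more general identity~\eqref{e:geo-mapping-permut-3} in Lemma~\ref{l:geo-mapping-permut}, of which the present lemma is the special case $\indd\eqdef\identity{[0..d]}$; so your proof is shorter and better aligned with the paper's later machinery, at the cost of invoking the affine-independence hypothesis (needed for the kernel characterization via barycentric coordinates), which the paper's elementary computation does not actually exercise.
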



\begin{proof}
  \proofpar{Case $i=0$}
  Then, from
  Definition~\thref{d:face-hyperpl},
  \assume{ring properties of~$\matR$},
  Definition~\threfc{d:lag-pol-P1d-ref}{%
    thus $\hcalL^{1,d}_j(\hmmu)=\mu_j$ for all $j\in[2..d]$,
    $\hcalL^{1,d}_1(\hmmu)$ equals $(1-\sum_{j=2}^d\mu_j)$ and
    $\hcalL^{1,d}_0(\hmmu)=0$},
  Definition~\thref{d:geo-mapping}, and
  Definition~\thref{d:fam-ref-aff-pts},
  we have
  \begin{align*}
    \xx \in \calHvd_0
    &\EQUIV \exists (\mu_j)_{j \in [2..d]} \in \matR,\quad
      \xx = \vv_1 + \sum_{j = 2}^d \mu_j (\vv_j - \vv_1)\\
    &\EQUIV \exists (\mu_j)_{j \in [2..d]} \in \matR,\quad
      \xx = 0 \vv_0 + \left( 1 - \sum_{j = 2}^d \mu_j \right) \vv_1
      + \sum_{j = 2}^d \mu_j \vv_j\\
    & \EQUIV \exists (\mu_j)_{j \in [2..d]} \in \matR,\quad
      \mu_1 \eqdef 1 - \sum_{j = 2}^d \mu_j \;\Conj\;
      \xx = \left( 1 - \sum_{j = 1}^d \mu_j \right) \vv_0
      + \sum_{j = 1}^d \mu_j \vv_j \\
    & \EQUIV \exists (\mu_j)_{j \in [2..d]} \in \matR,\quad
      \hmmu \eqdef \left( 1 - \sum_{j = 2}^d \mu_j \right) \ee_1
      + \sum_{j = 2}^d \mu_j \ee_j \in \matRd
      \;\Conj\; \xx = \phigeodv (\hmmu)\\
    &\EQUIV \exists (\mu_j)_{j \in [2..d]} \in \matR,\quad
      \hmmu \eqdef \hvv_1 + \sum_{j = 2}^d \mu_j (\hvv_j - \hvv_1)
      \in \hcalH^d_0 \CONJ \xx = \phigeodv (\hmmu).
  \end{align*}
  Therefore, we have the equality.

  \proofparskip{Case $i\in[1..d]$}
  Then, from
  Definition~\thref{d:face-hyperpl},
  \assume{ring properties of~$\matR$},
  Definition~\threfc{d:lag-pol-P1d-ref}{%
    thus $\hcalL^{1,d}_j(\hmmu)=\mu_j$ for all $j\in[1..d]\setminus\{i\}$,
    $\hcalL^{1,d}_0(\hmmu)=(1-\sum_{l=1,l\neq i}^d\mu_l)$ and
    $\hcalL^{1,d}_i(\hmmu)=0$},
  Definition~\thref{d:geo-mapping}, and
  Definition~\thref{d:fam-ref-aff-pts},
  we have
  \begin{align*}
    \xx \in \calHvd_i
    &\EQUIV \exists (\mu_j)_{j \in [1..d] \setminus \{ i \}} \in \matR,\quad
      \xx = \vv_0 + \sum_{j = 1, j \neq i}^d \mu_j (\vv_j - \vv_0)\\
    &\EQUIV \exists (\mu_j)_{j \in [1..d] \setminus \{ i \}} \in \matR,\quad
      \xx = \left( 1 - \sum_{j = 1, j \neq i}^d \mu_j \right) \vv_0
      + 0 \vv_i + \sum_{j = 1, j \neq i}^d \mu_j \vv_j\\
    &\EQUIV \exists (\mu_j)_{j \in [1..d] \setminus \{ i \}} \in \matR,\quad
      \mu_i \eqdef 0 \;\Conj\;
      \xx = \left( 1 - \sum_{j = 1}^d \mu_j \right) \vv_0
      + \sum_{j = 1}^d \mu_j \vv_j\\
    &\EQUIV \exists (\mu_j)_{j \in [1..d] \setminus \{ i \}} \in \matR,\quad
      \hmmu \eqdef 0\ee_i + \sum_{j = 1, j \neq i}^d \mu_j \ee_j \in \matRd
      \;\Conj\; \xx = \phigeodv (\hmmu)\\
    &\EQUIV \exists (\mu_j)_{j \in [1..d] \setminus \{ i \}} \in \matR,\quad
      \hmmu \eqdef \hvv_0 + \sum_{j = 1, j \neq i}^d \mu_j (\hvv_j - \hvv_0)
      \in \hcalH^d_i \;\Conj\; \xx = \phigeodv (\hmmu).
  \end{align*}
  Therefore, we have the equality.
\end{proof}

\begin{definition}[hyperface]
  \label{d:hyperface}
  \mbox{}\\
  Let~$d\geq1$.
  Let~$\famvertd{\vv}$ be $d+1$ points in~$\matRd$.
  Let $i\in[0..d]$.
  The {\em $i$-th hyperface} opposite vertex~$\vv_i$, having the
  vertices~$(\vv_j)_{j\in[0..d]\setminus\{i\}}$, is denoted~$\Hvd_i$, and is
  defined by
  \begin{equation}
    \label{e:hyperface}
    \Hvd_i \eqdef \left\{
      \sum_{j = 0, j \neq i}^d \mu_j \vv_j \in \matRd
    \rightst \left.
      \forall j \in [0..d] \setminus \{ i \},\quad \mu_j \geq 0
      \CONJ \sum_{j = 0, j \neq i}^d \mu_j = 1
    \right\}.
  \end{equation}
\end{definition}

\begin{lemma}[hyperface is included in face hyperplane]
  \label{l:hyperface-is-incl-face-hyperplane}
  \mbox{}\\
  Let~$d\geq1$.
  Let~$\famvertd{\vv}$ be $d+1$ points in~$\matRd$.
  Let $i\in[0..d]$.
  Then, we have $\Hvd_i\subset\calHvd_i$.
\end{lemma}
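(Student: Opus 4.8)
The statement to prove is that the $i$-th hyperface $\Hvd_i$ is included in the $i$-th face hyperplane $\calHvd_i$, for $d+1$ points $\famvertd{\vv}$ in $\matRd$ and $i\in[0..d]$. Note that no affine independence is assumed here, so I cannot rely on the barycentric-coordinate characterization~\eqref{e:equiv-def-face-hyperpl-2}; I must work with the general descriptions of both sets.

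The plan is to compare the two defining formulas directly. Indeed, the hyperface is defined in Definition~\thref{d:hyperface} by
\[
  \Hvd_i = \left\{
    \sum_{j = 0, j \neq i}^d \mu_j \vv_j \in \matRd
  \rightst \left.
    \forall j \in [0..d] \setminus \{ i \},\; \mu_j \geq 0
    \CONJ \sum_{j = 0, j \neq i}^d \mu_j = 1
  \right\},
\]
while the equivalent definition of the face hyperplane from Lemma~\thref{l:equiv-def-face-hyperpl}, identity~\eqref{e:equiv-def-face-hyperpl-1}, reads
\[
  \calHvd_i = \left\{
    \sum_{j = 0, j \neq i}^d \mu_j \vv_j \in \matRd
  \rightst \left.
    \forall j \in [0..d] \setminus \{ i \},\; \mu_j \in \matR
    \CONJ \sum_{j = 0, j \neq i}^d \mu_j = 1
  \right\} .
\]
The two descriptions coincide except that the hyperface imposes the extra sign constraints $\mu_j\geq0$, whereas the face hyperplane only requires $\mu_j\in\matR$.

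First I would take an arbitrary $\xx\in\Hvd_i$. By Definition~\thref{d:hyperface}, there exist reals $(\mu_j)_{j\in[0..d]\setminus\{i\}}$ with $\mu_j\geq0$ for each such $j$ and $\sum_{j=0,j\neq i}^d\mu_j=1$, such that $\xx=\sum_{j=0,j\neq i}^d\mu_j\vv_j$. Then these same coefficients trivially satisfy the (weaker) membership conditions of~\eqref{e:equiv-def-face-hyperpl-1}: each $\mu_j\in\matR$ since $\mu_j\geq0$, and the sum-to-one condition is identical. Hence $\xx\in\calHvd_i$ by Lemma~\thref{l:equiv-def-face-hyperpl}. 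This gives the desired inclusion $\Hvd_i\subset\calHvd_i$.

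There is no real obstacle here: the result is a direct consequence of Definition~\thref{d:hyperface} and Lemma~\threfc{l:equiv-def-face-hyperpl}{\eqref{e:equiv-def-face-hyperpl-1}}, using only the observation that the nonnegativity constraint defining the hyperface is strictly stronger than the freedom $\mu_j\in\matR$ allowed in the hyperplane, so dropping it preserves membership. The only point requiring a little care is that I should invoke the \emph{equivalent} characterization~\eqref{e:equiv-def-face-hyperpl-1} of $\calHvd_i$ rather than its original Definition~\thref{d:face-hyperpl}, since the former is written in exactly the same barycentric form as the hyperface and thus makes the comparison immediate; matching it against the raw definition~\eqref{e:face-hyperpl-0}--\eqref{e:face-hyperpl-1} would instead require an additional reindexing step.
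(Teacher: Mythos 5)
Your proof is correct and follows exactly the paper's route: the paper also derives the inclusion as a direct consequence of Definition~\ref{d:hyperface} and Lemma~\ref{l:equiv-def-face-hyperpl}, the point being precisely that the hyperface description only adds the nonnegativity constraints to the characterization~\eqref{e:equiv-def-face-hyperpl-1} of the face hyperplane. Your remark that affine independence is not needed (so one must use~\eqref{e:equiv-def-face-hyperpl-1} rather than the kernel characterization) is a correct and pertinent observation.
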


\begin{proof}
  Direct consequence of
  Definition~\thref{d:hyperface}, and
  Lemma~\thref{l:equiv-def-face-hyperpl}.
\end{proof}

\subsection{$l$-face affine spaces and $l$-faces}
\label{ss:l-aff-spaces-l-faces}

\begin{definition}[$l$-face affine space]
  \label{d:l-face-aff-space}
  \mbox{}\hfill
  Let~$d\geq1$.
  Let~$\famvertd{\vv}$ be $d+1$ points in~$\matRd$.\\
  Let $l\in[0..d]$.
  Let~$\indl$ be an injective map from $[0..l]$ to $[0..d]$.
  The {\em $l$-face affine space associated with~$\famvertd{\vv}$ and having
    the vertices
    $\famvert{\vv_\indl}{l,d}\eqdef(\vv_{\indl(j)})_{j\in[0..l]}$}
  is denoted~$\calFvdindl$, and is defined by
  \begin{equation}
    \label{e:l-face-aff-space}
    \calFvdindl \eqdef \left\{
      \sum_{j = 0}^l \mu_j \vv_{\indl (j)} \in \matRd
    \rightst \left.
      \forall j \in [0..l],\quad \mu_j \in \matR \CONJ \sum_{j = 0}^l \mu_j = 1
    \right\}.
  \end{equation}
\end{definition}

\begin{lemma}[equivalent definition of $l$-face affine space]
  \label{l:equiv-def-l-face-aff-space}
  \mbox{}\hfill
  Let~$d\geq1$.\\
  Let~$\famvertd{\vv}$ be $d+1$ points in~$\matRd$.
  Let $l\in[0..d]$.
  Let~$\indl$ be an injective map from $[0..l]$ to $[0..d]$.\\
  Then, $\calFvdindl$ is an affine space, that can be equivalently defined for
  any $i\in[0..l]$ by
  \begin{equation}
    \label{e:equiv-def-l-face-aff-space}
    \calFvdindl = \vv_{\indl (i)}
    + \Span{\vv_{\indl (j)} - \vv_{\indl (i)}}_{j \in [0..l] \setminus \{ i \}}.
  \end{equation}
\end{lemma}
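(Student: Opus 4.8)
The plan is to recognize that $\calFvdindl$ is nothing but the barycentric closure of the $l+1$ points $\famvert{\vv_\indl}{l,d}\eqdef(\vv_{\indl(j)})_{j\in[0..l]}$ in $\matRd$, and then to read both assertions directly off Lemma~\thref{l:baryc-closure-is-aff-sub-sp}. Concretely, I would apply that lemma with $n\eqdef l$ to the family $(\vv_{\indl(j)})_{j\in[0..l]}$, which plays the role of $\famvert{\vv}{n,d}$ there (note $l\geq0$, so this is a legitimate family of $n+1$ points).

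First I would match the two set-descriptions. Characterization~\eqref{e:baryc-closure-is-aff-sub-sp-2} of Lemma~\thref{l:baryc-closure-is-aff-sub-sp}, applied to the family $(\vv_{\indl(j)})_{j\in[0..l]}$, is word for word the defining set of $\calFvdindl$ given in Definition~\thref{d:l-face-aff-space}, equation~\eqref{e:l-face-aff-space}. Hence the two sets coincide, and the lemma states that this common set is an affine subspace of~$\matRd$, which settles the first claim that $\calFvdindl$ is an affine space. For the equivalent description, I would invoke equation~\eqref{e:baryc-closure-is-aff-sub-sp-1} of the same lemma: for each index $i\in[0..l]$ it gives $\calFvdindl=\vv_{\indl(i)}+\Span{\vv_{\indl(j)}-\vv_{\indl(i)}}_{j\in[0..l]\setminus\{i\}}$, which is exactly the right-hand side of~\eqref{e:equiv-def-l-face-aff-space}.

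There is essentially no obstacle here: the entire content is already packaged in Lemma~\thref{l:baryc-closure-is-aff-sub-sp}, and the proof reduces to an instantiation plus a notational matching of indices. The one point worth an explicit remark is that the injectivity of $\indl$ plays \emph{no} role in this statement. The barycentric closure in Lemma~\thref{l:baryc-closure-is-aff-sub-sp} is defined for an arbitrary finite family of points, with no distinctness or independence hypothesis, so the lemma applies verbatim to the (a priori possibly repeated) family $(\vv_{\indl(j)})_{j}$; injectivity of $\indl$ will only be needed in subsequent results, where one wants the selected vertices to be affinely independent via Lemma~\thref{l:aff-indep-closed-by-sub-family}.
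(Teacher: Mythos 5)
Your proof is correct and matches the paper's own argument exactly: the paper also obtains the result as a direct consequence of Definition~\ref{d:l-face-aff-space} and Lemma~\ref{l:baryc-closure-is-aff-sub-sp} instantiated with $n\eqdef l$ and $\vv_j\eqdef\vv_{\indl(j)}$ for $j\in[0..l]$. Your side remark that injectivity of $\indl$ is not used here is also accurate.
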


\begin{proof}
  Direct consequence of
  Definition~\thref{d:l-face-aff-space}, and
  Lemma~\threfc{l:baryc-closure-is-aff-sub-sp}{%
    with $n\eqdef l$ and $\vv_j\eqdef\vv_{\indl(j)}$ for $j\in[0..l]$}.
\end{proof}

\begin{lemma}[$d$-face affine space is full space]
  \label{l:d-face-aff-space-is-full-space}
  \mbox{}\hfill
  Let~$d\geq1$.
  Let~$\famvertd{\vv}$ be $d+1$ affinely independent points in~$\matRd$.
  Let~$\indd$ be an injective map from $[0..d]$ to $[0..d]$.
  Then, $\calFvdindd=\matRd$.
\end{lemma}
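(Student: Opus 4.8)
The plan is to rewrite~$\calFvdindd$ in its ``origin plus direction'' form, identify the direction as the linear span of a basis of~$\matRd$, and conclude that this span is the whole space. First I would apply Lemma~\thref{l:equiv-def-l-face-aff-space} with $l\eqdef d$ and $i\eqdef0$ to express
\[
  \calFvdindd = \vv_{\indd(0)}
  + \Span{\vv_{\indd(j)} - \vv_{\indd(0)}}_{j \in [1..d]}.
\]
Note that this equivalent definition requires only that $\famvertd{\vv}$ be a family of $d+1$ points and that $\indd$ be injective into~$[0..d]$, both of which hold here.

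The key step is to show that the spanning family is free. I would invoke Lemma~\thref{l:aff-indep-closed-by-sub-family} with $n\eqdef d$, $l\eqdef d$, the injective map~$\indd$ (which plays the role of~$\indl$, its injectivity being precisely the hypothesis), and $j\eqdef0$. Since $\famvertd{\vv}$ is affinely independent, this yields that $(\vv_{\indd(j)} - \vv_{\indd(0)})_{j\in[1..d]}$ is free in~$\matRd$. This is a free family of exactly~$d$ vectors, and $\dim\matRd = d$ with $d\geq1$, so Lemma~\thref{l:free-family-of-dim-elements-is-basis} upgrades it to a basis of~$\matRd$; in particular its linear span equals~$\matRd$.

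Finally, substituting $\Span{\vv_{\indd(j)}-\vv_{\indd(0)}}_{j\in[1..d]}=\matRd$ into the first display, I would conclude via Lemma~\thref{l:sub-sp-inv-transl} (applied with the subspace $\Ep\eqdef\matRd$ of $E\eqdef\matRd$ and $\uupo\eqdef\vv_{\indd(0)}\in\matRd$) that $\calFvdindd = \vv_{\indd(0)} + \matRd = \matRd$. There is no genuine obstacle: the whole argument is a matter of correctly matching the hypotheses of the sub-family freeness lemma and of carrying out the dimension count that promotes ``free family of $d$ vectors'' to ``basis''. In particular, the fact that $\indd$ is actually a permutation of~$[0..d]$ is never needed, only its injectivity.
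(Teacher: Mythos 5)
Your proof is correct and follows essentially the same route as the paper: rewrite $\calFvdindd$ via Lemma~\ref{l:equiv-def-l-face-aff-space} with $i\eqdef0$, get freedom of the difference family from Lemma~\ref{l:aff-indep-closed-by-sub-family} with $j\eqdef0$, promote it to a basis by Lemma~\ref{l:free-family-of-dim-elements-is-basis}, and conclude by translating~$\matRd$. The only cosmetic difference is that you close with Lemma~\ref{l:sub-sp-inv-transl} where the paper appeals to the assumed fact that $\matRd$ is an affine space with any origin.
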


\begin{proof}
  Direct consequence of
  Lemma~\threfc{l:equiv-def-l-face-aff-space}{with $i\eqdef0$},
  Lemma~\threfc{l:aff-indep-closed-by-sub-family}{with $j\eqdef0$},
  Definition~\threfc{d:aff-indep-family}{with $\famvert{\vv_\indd}{d}$},
  Lemma~\thref{l:free-family-of-dim-elements-is-basis}, and
  \assume{$\matRd$ is affine space with any origin}.
\end{proof}

\begin{lemma}[$0$-face affine space is vertex]
  \label{l:0-face-aff-space-is-vertex}
  \mbox{}\hfill
  Let~$d\geq1$.
  Let~$\famvertd{\vv}$ be $d+1$ points in~$\matRd$.
  Let~$\ind{0}$ be a map from~$\{0\}$ to $[0..d]$.
  Then, $\calFvdindo$ is the affine space
  $\{\vv_{\ind{0}(0)}\}=\vv_{\ind{0}(0)}+\{\zzero\}$.
\end{lemma}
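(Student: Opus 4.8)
The plan is to unfold Definition~\ref{d:l-face-aff-space} in the degenerate case $l=0$. First I would note that a map $\ind{0}$ with domain $\{0\}$ is automatically injective, so Definition~\ref{d:l-face-aff-space} applies with $l\eqdef0$; here the index set $[0..l]$ collapses to $[0..0]=\{0\}$, so the defining sum $\sum_{j=0}^{l}\mu_j\,\vv_{\ind{0}(j)}$ reduces to the single term $\mu_0\,\vv_{\ind{0}(0)}$, and the accompanying constraint $\sum_{j=0}^{l}\mu_j=1$ becomes simply $\mu_0=1$. Hence the only point expressible in this form is $1\cdot\vv_{\ind{0}(0)}=\vv_{\ind{0}(0)}$, which already gives $\calFvdindo=\{\vv_{\ind{0}(0)}\}$.

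For the second description $\vv_{\ind{0}(0)}+\{\zzero\}$, I would invoke Lemma~\ref{l:equiv-def-l-face-aff-space} with $l\eqdef0$ and $i\eqdef0$: it asserts both that $\calFvdindo$ is an affine space and that it equals $\vv_{\ind{0}(0)}+\Span{\vv_{\ind{0}(j)}-\vv_{\ind{0}(0)}}_{j\in[0..0]\setminus\{0\}}$. Since $[0..0]\setminus\{0\}=\emptyset$, the spanning family is empty, and the convention $\Span{\emptyset}\eqdef\{\zzero\}$ (recalled in Lemma~\ref{l:baryc-closure-is-aff-sub-sp}) yields $\vv_{\ind{0}(0)}+\{\zzero\}$. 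Finally, $\vv_{\ind{0}(0)}+\{\zzero\}=\{\vv_{\ind{0}(0)}\}$ by the additive properties of $\matRd$, so the two descriptions coincide.

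There is essentially no hard step here: the entire content is the careful handling of the empty/singleton edge cases. The only point needing minor attention is checking that the explicit singleton obtained directly from Definition~\ref{d:l-face-aff-space} agrees with the translated trivial subspace obtained through the empty-span convention; but these match immediately once $\Span{\emptyset}=\{\zzero\}$ is applied, so the result follows as a direct consequence of Definition~\ref{d:l-face-aff-space} and Lemma~\ref{l:equiv-def-l-face-aff-space}.
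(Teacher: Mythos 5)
Your proposal is correct and follows essentially the same route as the paper, which also observes that a map on a singleton is automatically injective and then applies Lemma~\ref{l:equiv-def-l-face-aff-space} with $l\eqdef0$ (the empty span collapsing to $\{\zzero\}$ by convention). Your additional direct unfolding of Definition~\ref{d:l-face-aff-space} is harmless extra detail, not a different argument.
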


\begin{proof}
  Direct consequence of
  \assume{the injectivity of a function defined on singleton}, and
  Lem\-ma~\threfc{l:equiv-def-l-face-aff-space}{with $l\eqdef0$}.
\end{proof}

\begin{definition}[$l$-face]
  \label{d:l-face}
  \mbox{}\hfill
  Let~$d\geq1$.
  Let~$\famvertd{\vv}$ be $d+1$ points in~$\matRd$.\\
  Let $l\in[0..d]$.
  Let~$\indl$ an injective map from $[0..l]$ to $[0..d]$.
  The {\em $l$-face associated with~$\famvertd{\vv}$ and having the vertices
    $\famvert{\vv_\indl}{l,d}$} is denoted~$\Fvdindl$, and is defined by
  \begin{equation}
    \label{e:l-face}
    \Fvdindl \eqdef \left\{
      \sum_{j = 0}^l \mu_j \vv_{\indl (j)} \in \matRd
    \rightst \left.
      \forall j \in [0..l],\quad \mu_j \geq 0 \CONJ \sum_{j = 0}^l \mu_j = 1
    \right\}.
  \end{equation}
\end{definition}

\begin{lemma}[$l$-face is included in $l$-face affine space]
  \label{l:l-face-is-incl-l-face-aff-space}
  \mbox{}\hfill
  Let~$d\geq1$.\\
  Let~$\famvertd{\vv}$ be $d+1$ points in~$\matRd$.
  Let $l\in[0..d]$.
  Let~$\indl$ an injective map from $[0..l]$ to $[0..d]$.\\
  Then, we have $\Fvdindl\subset\calFvdindl$.
\end{lemma}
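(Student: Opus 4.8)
The final statement to prove is Lemma~\ref{l:l-face-is-incl-l-face-aff-space}, asserting $\Fvdindl\subset\calFvdindl$.

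The plan is to prove this inclusion directly by comparing the two defining conditions. By Definition~\ref{d:l-face}, an element~$\xx$ of the $l$-face $\Fvdindl$ is of the form $\xx=\sum_{j=0}^l\mu_j\vv_{\indl(j)}$ where the coefficients satisfy the two conditions: for all $j\in[0..l]$, $\mu_j\geq0$, and $\sum_{j=0}^l\mu_j=1$. By Definition~\ref{d:l-face-aff-space}, the $l$-face affine space $\calFvdindl$ consists of exactly those $\xx=\sum_{j=0}^l\mu_j\vv_{\indl(j)}$ for which the coefficients satisfy the weaker two conditions: for all $j\in[0..l]$, $\mu_j\in\matR$, and $\sum_{j=0}^l\mu_j=1$.

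First I would take an arbitrary $\xx\in\Fvdindl$ and extract the coefficients $(\mu_j)_{j\in[0..l]}$ given by Definition~\ref{d:l-face}. Then I would observe that the nonnegativity condition $\mu_j\geq0$ trivially implies membership in~$\matR$, so the very same coefficients witness that $\xx$ satisfies the defining property of~$\calFvdindl$ in Definition~\ref{d:l-face-aff-space}, since the affine-sum condition $\sum_{j=0}^l\mu_j=1$ is shared verbatim between the two definitions. Hence $\xx\in\calFvdindl$, and the inclusion follows.

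This is an entirely routine consequence of the fact that the positivity constraints defining the (convex) $l$-face are strictly stronger than the real-coefficient constraints defining the $l$-face affine space; it is structurally identical to Lemma~\ref{l:hyperface-is-incl-face-hyperplane}, which establishes $\Hvd_i\subset\calHvd_i$ by the same reasoning. There is no genuine obstacle here: the only ingredient beyond the two definitions is the trivial implication that a nonnegative real is a real (an \emph{ordered field property of~$\matR$}). Consequently the proof is a direct consequence of Definition~\ref{d:l-face} and Definition~\ref{d:l-face-aff-space}.
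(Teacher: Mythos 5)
Your proof is correct and follows exactly the paper's route: the paper also derives the inclusion as a direct consequence of Definition~\ref{d:l-face} and Definition~\ref{d:l-face-aff-space}, the only observation needed being that the nonnegativity constraint on the coefficients is stronger than mere membership in~$\matR$. Nothing further is required.
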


\begin{proof}
  Direct consequence of
  Definition~\thref{d:l-face}, and
  Definition~\thref{d:l-face-aff-space}.
\end{proof}

\begin{lemma}[$d$-face is simplex]
  \label{l:l-face-is-simplex}
  \mbox{}\\
  Let~$d\geq1$.
  Let~$\famvertd{\vv}$ be $d+1$ points in~$\matRd$.
  Let~$\indd$ be an injective map from $[0..d]$ to $[0..d]$.\\
  Then, $\indd$ is a bijective permutation of $[0..d]$, and we have
  $\Fvdindd=\Kvd$.
\end{lemma}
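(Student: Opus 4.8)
The plan is to establish the two assertions of the statement separately, the permutation claim first and the set equality second, the latter by a change-of-index argument driven by that permutation. First I would prove that $\indd$ is a bijective permutation of $[0..d]$: since $\indd$ is by hypothesis an injective map from $[0..d]$ to $[0..d]$, and since its domain and codomain are the same finite set (of cardinality $d+1$), the fact that injectivity together with cardinal equality implies bijectivity (as already used for the circular permutation in Lemma~\ref{l:circ-permut}) shows that $\indd$ is bijective, hence a permutation of $[0..d]$. In particular its inverse $\inddinv$ is also a permutation of $[0..d]$, which I will use in one of the two inclusions.

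Second, I would prove $\Fvdindd=\Kvd$ by double inclusion, reindexing the defining convex combinations through the bijection $\indd$. Recall the defining expressions: Definition~\ref{d:l-face} with $l\eqdef d$ writes $\Fvdindd$ as the set of $\sum_{j=0}^d\mu_j\vv_{\indd(j)}$ with $\mu_j\geq0$ and $\sum_{j=0}^d\mu_j=1$, while Definition~\ref{d:simplex} writes $\Kvd$ as the set of $\sum_{i=0}^d\nu_i\vv_i$ with $\nu_i\geq0$ and $\sum_{i=0}^d\nu_i=1$. For the inclusion $\Fvdindd\subset\Kvd$, I would take $\xx\in\Fvdindd$ with coefficients $(\mu_j)_{j\in[0..d]}$ as above, set $\nu_i\eqdef\mu_{\inddinv(i)}$ for all $i\in[0..d]$, and perform the substitution $i\eqdef\indd(j)$, which is legitimate precisely because $\indd$ is a bijection of $[0..d]$. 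This substitution preserves the full index set, so $\xx=\sum_{i=0}^d\nu_i\vv_i$, $\sum_{i=0}^d\nu_i=\sum_{j=0}^d\mu_j=1$, and each $\nu_i=\mu_{\inddinv(i)}\geq0$, whence $\xx\in\Kvd$.

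The reverse inclusion $\Kvd\subset\Fvdindd$ is symmetric: starting from $\xx=\sum_{i=0}^d\nu_i\vv_i\in\Kvd$, I would set $\mu_j\eqdef\nu_{\indd(j)}$ and reindex via $j\eqdef\inddinv(i)$, again preserving the nonnegativity and the unit-sum constraints, to conclude $\xx\in\Fvdindd$. Combining the two inclusions gives the desired equality. I expect the only point requiring care to be the bookkeeping of the change of summation index under the permutation, namely checking that relabelling by $\indd$ (or $\inddinv$) is a bijection of the index set and therefore carries the constraints $\mu_j\geq0$ and $\sum_j\mu_j=1$ to the corresponding constraints on the $\nu_i$; everything else is an immediate unwinding of the two definitions.
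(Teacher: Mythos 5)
Your proposal is correct and matches the paper's proof: the bijectivity of $\indd$ follows from injectivity plus cardinal equality, and the set equality is obtained by reindexing the convex combinations via $i\eqdef\indd(j)\Leftrightarrow j=\inddinv(i)$ with $\lambda_i\eqdef\mu_{\inddinv(i)}$. The only cosmetic difference is that the paper writes the reindexing as a single chain of set equalities rather than two separate inclusions.
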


\begin{proof}
  \proofpar{Bijectivity}\\
  Direct consequence of
  \assume{the fact that injectivity and cardinal equality imply bijectivity}.

  \proofparskip{Equality}
  From
  Definition~\thref{d:l-face}, and
  Definition~\thref{d:simplex},
  we have
  \begin{align*}
    \Fvdindd
    &= \left\{
      \sum_{j = 0}^d \mu_j \vv_{\indd (j)} \in \matRd
      \rightst \left.
      \forall j \in [0..d],\quad \mu_j \geq 0 \CONJ \sum_{j = 0}^d \mu_j = 1
      \right\}\\
    &= \left\{
      \sum_{i = 0}^d \lambda_i \vv_i \in \matRd
      \rightst \left.
      \forall i \in [0..d],\quad \lambda_i \geq 0
      \CONJ \sum_{i = 0}^d \lambda_i = 1
      \right\} = \Kvd,
  \end{align*}
  with $i\eqdef\indd(j)\Equiv j=\inddinv(i)$ ($\indd$ bijective in $[0..d]$)
  and $\lambda_i\eqdef\mu_{\inddinv(i)}$.
\end{proof}

\begin{remark}
  In the next lemma, the injection $\thetinjdmi{i}$ from~$[0..d-1]$ to~$[0..d]$
  is the ``jump'' enumeration function defined in Lemma~\ref{l:jump-enum}.
  The $(d-1)$-face is used to build the geometric hyperface mapping in
  Lemma~\ref{l:geo-hyperface-mapping}, see also
  Figure~\ref{f:geo-hyper-map_d32}.
\end{remark}

\begin{lemma}[$(d-1)$-face is hyperface]
  \label{l:d-1-face-aff-sp-is-hyperface-aff-sp}
  \mbox{}\\
  Let~$d\geq1$, and $i\in[0..d]$.
  Let~$\famvertd{\vv}$ be $d+1$ points in~$\matRd$.\\
  Then, we have
  $\famvert{\vv_{\thetinjdmi{i}}}{d-1,d}=(\vv_{j})_{j\in[0..d]\setminus\{i\}}$.

  Moreover, if $\famvertd{\vv}$ is affinely independent, then we have
  $\calHvd_i=\calFvd{\thetinjdmi{i}}$ and $\Hvd_i=\Fvd{\thetinjdmi{i}}$.
\end{lemma}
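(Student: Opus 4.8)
The plan is to establish the three assertions in turn: the family identity first, then the two set equalities. For the family identity $\famvert{\vv_{\thetinjdmi{i}}}{d-1,d}=(\vv_j)_{j\in[0..d]\setminus\{i\}}$, I would invoke Lemma~\thref{l:jump-enum}: the jump map $\thetinjdmi{i}$ is injective from $[0..d-1]$ to $[0..d]$ with image $[0..d]\setminus\{i\}$, so the reindexed family $(\vv_{\thetinjdmi{i}(j)})_{j\in[0..d-1]}$ runs over exactly the vertices $(\vv_j)_{j\in[0..d]\setminus\{i\}}$. This part requires no affine independence, and in fact neither do the remaining two, since all the objects involved are defined without it.

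For the equality of affine spaces $\calHvd_i=\calFvd{\thetinjdmi{i}}$, the clean observation is that the base point selected in Definition~\thref{d:face-hyperpl} is precisely the value $\vv_{\thetinjdmi{i}(0)}$. I would apply Lemma~\thref{l:equiv-def-l-face-aff-space} with origin index $m\eqdef0$, which rewrites $\calFvd{\thetinjdmi{i}}$ as $\vv_{\thetinjdmi{i}(0)}+\Span{\vv_{\thetinjdmi{i}(j)}-\vv_{\thetinjdmi{i}(0)}}_{j\in[1..d-1]}$, and then split into the two cases of Definition~\thref{d:face-hyperpl}. When $i=0$, Lemma~\thref{l:jump-enum} gives $\thetinjdmi{0}=\Identity_{[0..d-1]}+1$, so $\thetinjdmi{0}(0)=1$ and the spanning indices run over $[2..d]$, recovering $\vv_1+\Span{\vv_j-\vv_1}_{j\in[2..d]}=\calHvd_0$. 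When $i\in[1..d]$, since $0<i$ we have $\thetinjdmi{i}(0)=0$, while the remaining images $\{\thetinjdmi{i}(j):j\in[1..d-1]\}$ form exactly $[1..d]\setminus\{i\}$, recovering $\vv_0+\Span{\vv_j-\vv_0}_{j\in[1..d]\setminus\{i\}}=\calHvd_i$.

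For the equality of faces $\Hvd_i=\Fvd{\thetinjdmi{i}}$, I would reindex the barycentric sums exactly as in the proof of Lemma~\thref{l:l-face-is-simplex}. Since $\thetinjdmi{i}$ is a bijection from $[0..d-1]$ onto $[0..d]\setminus\{i\}$, setting $\jp\eqdef\thetinjdmi{i}(j)$ and $\lambda_{\jp}\eqdef\mu_{(\thetinjdmi{i})^{-1}(\jp)}$ turns the defining constraints of $\Fvd{\thetinjdmi{i}}$ in Definition~\thref{d:l-face} (nonnegative coefficients summing to $1$ indexed by $[0..d-1]$) into those of $\Hvd_i$ in Definition~\thref{d:hyperface} (nonnegative coefficients summing to $1$ indexed by $[0..d]\setminus\{i\}$), and conversely. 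I expect the only obstacle to be notational bookkeeping: tracking where the origin index of the affine-space description lands under the jump map, which is exactly what forces the two-case structure inherited from Definition~\thref{d:face-hyperpl}. Once the identities $\thetinjdmi{i}(0)=0$ for $i\geq1$ and $\thetinjdmi{0}(0)=1$ are isolated, both equalities reduce to matching index sets.
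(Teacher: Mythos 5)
Your proof is correct, and its core mechanism is the same as the paper's: everything reduces to the fact that $\thetinjdmi{i}$ is a bijection from $[0..d-1]$ onto $[0..d]\setminus\{i\}$ (Lemma~\thref{l:jump-enum}), so the families and the barycentric sums match after reindexing. The one place you diverge is the affine-space equality: you compare the origin-plus-span forms (Lemma~\thref{l:equiv-def-l-face-aff-space} against Definition~\thref{d:face-hyperpl}), which forces you to track where the origin index lands under the jump map and hence to split into the cases $i=0$ and $i\in[1..d]$. The paper instead routes both sides through the origin-free barycentric characterization~\eqref{e:equiv-def-face-hyperpl-1} of Lemma~\thref{l:equiv-def-face-hyperpl} together with Definition~\thref{d:l-face-aff-space}, so the two sets are literally the same set of barycentric combinations of $(\vv_j)_{j\in[0..d]\setminus\{i\}}$ and no case analysis is needed. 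Your side remark that affine independence is never used is also accurate: all the characterizations invoked (including~\eqref{e:equiv-def-face-hyperpl-1}) hold for arbitrary families of points, so the hypothesis in the statement is not load-bearing for these two equalities.
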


\begin{proof}
  \mbox{}\\
  Direct consequence of
  Lemma~\threfc{l:jump-enum}{%
    with $n\eqdef d-1$, injectivity and image of~$\thetinjdmi{i}$},
  Definition~\threfc{d:l-face-aff-space}{%
    with $l\eqdef d-1$ and $\indl\eqdef\thetinjdmi{i}$},
  Lemma~\thref{l:equiv-def-face-hyperpl},
  Definition~\thref{d:hyperface}, and
  Definition~\thref{d:l-face}.
\end{proof}

\subsection{Geometric $l$-face mappings}
\label{ss:geomap-l-face}

\begin{remark}
  In this section, we assume that~$l\geq1$ to avoid treating polynomials over
  $0$-dimensional affine spaces.
\end{remark}

\begin{definition}[geometric $l$-face mapping]
  \label{d:geo-l-face-mapping}
  \mbox{}\hfill
  Let~$d\geq1$.
  Let~$\famvertd{\vv}$ be $d+1$ points in~$\matRd$.\\
  Let $l\in[1..d]$.
  Let~$\indl$ be an injective map from $[0..l]$ to $[0..d]$.
  The {\em geometric $l$-face mapping associated with~$\famvertd{\vv}$ and
    the $l$-face~$\Fvdindl$} is denoted $\phindlv$, and is defined by
  \begin{equation}
    \label{e:geo-l-face-mapping}
    \forall \hxx \in \matRl,\quad
    \phindlv (\hxx) \eqdef
    \sum_{j = 0}^l \hcalL^{1, l}_j (\hxx) \, \vv_{\indl (j)} \in \matRd.
  \end{equation}
\end{definition}

\begin{lemma}[geometric $d$-face mapping is geometric mapping]
  \label{l:geo-d-face-mapping-is-geo-mapping}
  \mbox{}\\
  Let~$d\geq1$.
  Let~$\famvertd{\vv}$ be $d+1$ points in~$\matRd$.
  Then, we have $\phind{\identity{\matRd}}{\vv}=\phigeodv$.
\end{lemma}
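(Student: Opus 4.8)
The plan is that this reduces to a termwise comparison of two definitions, so there is no real difficulty. First I would instantiate Definition~\thref{d:geo-l-face-mapping} at $l \eqdef d$ with the injective index map taken to be the identity on $[0..d]$, which is exactly the map appearing as the subscript of $\phind{\identity{\matRd}}{\vv}$. With $l = d$, the reference Lagrange polynomials $\hcalL^{1,l}_j$ occurring in~\eqref{e:geo-l-face-mapping} are precisely the $\hcalL^{1,d}_j$, and since the identity fixes every index, the selected vertices satisfy $\vv_{\identity(j)} = \vv_j$ for all $j \in [0..d]$.

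Then I would conclude by the following chain, valid for all $\hxx \in \matRd$ (noting that the domain $\matRl$ of the geometric $d$-face mapping coincides with the domain $\matRd$ of the geometric mapping when $l = d$):
\[
  \phind{\identity{\matRd}}{\vv} (\hxx)
  = \sum_{j = 0}^d \hcalL^{1, d}_j (\hxx) \, \vv_{\identity (j)}
  = \sum_{j = 0}^d \hcalL^{1, d}_j (\hxx) \, \vv_j
  = \phigeodv (\hxx),
\]
where the first equality is Definition~\thref{d:geo-l-face-mapping}, the middle one uses $\identity (j) = j$, and the last is Definition~\thref{d:geo-mapping}. Equality of the two maps then follows from \assume{the fact that two functions agreeing pointwise on a common domain are equal}.

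There is essentially no obstacle here: the only points worth checking are that the index-set identity transports the vertex family trivially and that the two domains agree once $l = d$, both of which are immediate from the definitions. Thus the statement is a direct consequence of Definitions~\thref{d:geo-l-face-mapping} and~\thref{d:geo-mapping}.
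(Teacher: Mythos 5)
Your proof is correct and matches the paper's own argument, which likewise just unfolds Definition~\ref{d:geo-l-face-mapping} with $l\eqdef d$ and the (injective) identity index map and compares it termwise with Definition~\ref{d:geo-mapping}. The extra detail you supply (the coincidence of domains and the pointwise equality of the sums) is exactly what the paper leaves implicit.
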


\begin{proof}
  Direct consequence of
  \assume{injectivity of the identity map},
  Definition~\thref{d:geo-l-face-mapping}, and
  Definition~\thref{d:geo-mapping}.
\end{proof}

\begin{remark}
  The next lemma (on~$\phindlv$) is an extension of
  Lemma~\ref{l:prop-geo-mapping} (on~$\phigeodv$).
\end{remark}

\begin{lemma}[properties of geometric $l$-face mapping]
  \label{l:prop-geo-l-face-mapping}
  \mbox{}\hfill
  Let~$d\geq1$.
  Let~$\famvertd{\vv}$ be $d+1$ points in~$\matRd$.
  Let $l\in[1..d]$.
  Let~$\indl$ an injective map from $[0..l]$ to $[0..d]$.\\
  Let $\Bindlv\eqdef\begin{pmatrix}\vv_{\indl(1)}-\vv_{\indl(0)}
    &\vv_{\indl(2)}-\vv_{\indl(0)}&\dots&\vv_{\indl(l)}-\vv_{\indl(0)}\end{pmatrix}$
  in~$\calM_{d,l}$.\\
  Then, $\phindlv$ belongs to~$\AffRlRd$, and we have
  \begin{align}
    \label{e:prop-geo-l-face-mapping-1}
    \forall j \in [0..l],\quad
    \forall \hxx \in \matRl,\quad
    &\phindlv (\hxx) = \vv_{\indl (j)} +
      \sum_{k = 0, k \neq j}^l \hcalL^{1, l}_k (\hxx)
      (\vv_{\indl (k)} - \vv_{\indl (j)}),\\
    \label{e:prop-geo-l-face-mapping-2}
    \forall \hxx \in \matRl,\quad
    &\phindlv (\hxx)
      = \vv_{\indl (0)}
      + \sum_{j = 1}^l \hx_j (\vv_{\indl (j)} - \vv_{\indl (0)})
      = \vv_{\indl (0)} + \Bindlv \hxx,\\
    \label{e:prop-geo-l-face-mapping-3}
    \forall j \in [0..l],\quad
    &\phindlv (\hvv_j^l) = \vv_{\indl (j)}
      \mbox{ (where $\hvv_j^{l}$ is a reference vertex in $\matRlmi$)},\\
    \label{e:prop-geo-l-face-mapping-4}
    &\phindlv (\matRl) = \calFvdindl,\\
    \label{e:prop-geo-l-face-mapping-4bis}
    &\phindlv (\hKl) = \Fvdindl.
  \end{align}

  Moreover, if $\famvertd{\vv}$ is affinely independent, then $\phindlv$~is
  bijective from~$\matRl$ onto~$\calFvdindl$, its inverse~$\invphindlv$
  from~$\calFvdindl$ onto~$\matRl$ is affine, and we have
  \begin{equation}
    \label{e:prop-geo-l-face-mapping-5}
    \forall j \in [0..l],\quad \invphindlv (\vv_{\indl (j)}) = \hvv_j^{l}.
  \end{equation}
\end{lemma}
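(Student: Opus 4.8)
The plan is to mirror the proof of Lemma~\ref{l:prop-geo-mapping} (the case of $\phigeodv$), of which this statement is the announced $l$-face extension, replacing the source space $\matRd$ by $\matRl$, the reference Lagrange polynomials $\hcalL^{1,d}_i$ by their dimension-$l$ counterparts $\hcalL^{1,l}_j$, and the full vertex family $\famvertd{\vv}$ by the subfamily $\famvert{\vv_\indl}{l,d}=(\vv_{\indl(j)})_{j\in[0..l]}$. First I would establish the two barycentric-extraction identities, deduce affineness, then treat the image identities, and finally the bijectivity block under affine independence.

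For~\eqref{e:prop-geo-l-face-mapping-1} I start from Definition~\ref{d:geo-l-face-mapping} and the partition of unity $\sum_{k=0}^l\hcalL^{1,l}_k=1$ from Lemma~\ref{l:lag-pol-is-basis-P1d-ref} (applied in dimension~$l$): writing $\hcalL^{1,l}_j=1-\sum_{k\neq j}\hcalL^{1,l}_k$ and substituting turns $\sum_k\hcalL^{1,l}_k(\hxx)\vv_{\indl(k)}$ into $\vv_{\indl(j)}+\sum_{k\neq j}\hcalL^{1,l}_k(\hxx)(\vv_{\indl(k)}-\vv_{\indl(j)})$, exactly as for~\eqref{e:prop-geo-mapping-1}. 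Specializing to $j=0$ and using $\hcalL^{1,l}_k(\hxx)=\hx_k$ for $k\in[1..l]$ (Definition~\ref{d:lag-pol-P1d-ref}) gives the first equality of~\eqref{e:prop-geo-l-face-mapping-2}; recognizing the columnwise matrix--vector product rewrites the sum as $\Bindlv\hxx$, and Lemma~\ref{l:equiv-def-aff-map-finite-dim} (with $\phi\eqdef\Bindlv$, $\cc\eqdef\vv_{\indl(0)}$, origin~$\zzero$) yields $\phindlv\in\AffRlRd$. Identity~\eqref{e:prop-geo-l-face-mapping-3} follows by evaluating Definition~\ref{d:geo-l-face-mapping} at a reference vertex $\hvv_j^l$ and invoking $\hcalL^{1,l}_k(\hvv_j^l)=\kron{k}{j}$ from Lemma~\ref{l:lag-pol-is-basis-P1d-ref}.

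For the image identities, \eqref{e:prop-geo-l-face-mapping-4} follows from~\eqref{e:prop-geo-l-face-mapping-2}: as $\hxx$ ranges over $\matRl$, the set $\{\phindlv(\hxx)\}$ is exactly $\vv_{\indl(0)}+\Span{\vv_{\indl(j)}-\vv_{\indl(0)}}_{j\in[1..l]}$, which equals $\calFvdindl$ by Lemma~\ref{l:equiv-def-l-face-aff-space} (taken at $i\eqdef0$); note this needs no affine independence. For~\eqref{e:prop-geo-l-face-mapping-4bis} I argue by double inclusion as in the proof of~\eqref{e:prop-geo-mapping-3bis}: if $\hxx\in\hKl$ then $\hcalL^{1,l}_k(\hxx)\geq0$ for all~$k$ and they sum to~$1$ (Definitions~\ref{d:ref-simplex} and~\ref{d:lag-pol-P1d-ref}, Lemma~\ref{l:lag-pol-is-basis-P1d-ref}), so $\phindlv(\hxx)$ is a convex combination of the $\vv_{\indl(k)}$, hence lies in $\Fvdindl$ (Definition~\ref{d:l-face}); conversely, any $\xx=\sum_{j=0}^l\mu_j\vv_{\indl(j)}\in\Fvdindl$ is the image of $\hxx\eqdef(\mu_j)_{j\in[1..l]}$, which belongs to $\hKl$ because $\mu_0=1-\sum_{j\geq1}\mu_j\geq0$.

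The \emph{main obstacle} is the final block (bijectivity and affineness of the inverse) under affine independence, because when $l<d$ the map $\phindlv:\ArRlRd$ is not a self-map, so the square-case tool Lemma~\ref{l:inv-of-aff-map-is-aff-map} used for $\phigeodv$ does not apply; instead I would use the affine-submap framework. I view $\phindlv=\vv_{\indl(0)}+\Bindlv$ (origin~$\zzero$) with output restricted to the affine subspace $\calFvdindl$ of origin $\vv_{\indl(0)}$ and direction $\Fp\eqdef\Span{\vv_{\indl(j)}-\vv_{\indl(0)}}_{j\in[1..l]}$. By Lemma~\ref{l:aff-indep-closed-by-sub-family} the subfamily $\famvert{\vv_\indl}{l,d}$ is affinely independent, so by Definition~\ref{d:aff-indep-family} the columns of $\Bindlv$ are free and thus form a basis of $\Fp$; hence $\Bindlv$ restricts to an isomorphism from $\matRl$ onto $\Fp$. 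The hypotheses $\Bindlv(\matRl)=\Fp$ and $\vv_{\indl(0)}\in\calFvdindl$ being met, Lemma~\ref{l:inv-of-aff-sub-map-is-aff-sub-map} yields that $\phindlv$ is bijective onto $\calFvdindl$ with affine inverse $\invphindlv$. Finally, \eqref{e:prop-geo-l-face-mapping-5} follows from~\eqref{e:prop-geo-l-face-mapping-3} and the definition of the inverse, since each $\vv_{\indl(j)}=\phindlv(\hvv_j^l)$ lies in the domain $\calFvdindl$.
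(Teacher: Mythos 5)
Your proof is correct and follows essentially the same route as the paper's: partition of unity for \eqref{e:prop-geo-l-face-mapping-1}--\eqref{e:prop-geo-l-face-mapping-2}, double inclusion for the images, freedom of the columns of $\Bindlv$ via Lemma~\ref{l:aff-indep-closed-by-sub-family}, and Lemma~\ref{l:inv-of-aff-sub-map-is-aff-sub-map} for the affine inverse. The only (immaterial) variations are that you derive \eqref{e:prop-geo-l-face-mapping-4} from the span characterization of Lemma~\ref{l:equiv-def-l-face-aff-space} rather than by the paper's barycentric double inclusion, and that you obtain bijectivity and the affine inverse in one application of the submap lemma where the paper first runs a separate zero-kernel argument.
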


\begin{proof}
  \proofpar{Identities~\eqref{e:prop-geo-l-face-mapping-1}
    and~\eqref{e:prop-geo-l-face-mapping-2}, $\phindlv$ affine}
  Let $j\in[0..l]$.
  Then, from
  Definition~\thref{d:geo-l-face-mapping},
  Lemma~\threfc{l:lag-pol-is-basis-P1d-ref}{%
    \eqref{e:lag-pol-is-basis-P1d-ref-3} with $d\eqdef l$, extracting $j$}, and
  \assume{ring properties of~$\matR$},
  identity~\eqref{e:prop-geo-l-face-mapping-1} holds.
  Thus, from~\eqref{e:prop-geo-l-face-mapping-1} (with $j\eqdef0$),
  Definition~\threfc{d:lag-pol-P1d-ref}{\eqref{e:lag-pol-P1d-ref-1}},
  \assume{the columnwise rule for matrix-vector product},
  \assume{commutative ring properties of~$\matR$}, and
  Definition~\threfc{d:aff-map}{%
    with $\phi\eqdef\Bindlv$ and $\yyo\eqdef\vv_{\indl(0)}$},
  identities~\eqref{e:prop-geo-l-face-mapping-2} hold, and $\phindlv$~is
  affine.

  \proofparskip{Identity~\eqref{e:prop-geo-l-face-mapping-3}}
  Direct consequence of
  Definition~\thref{d:geo-l-face-mapping}, and
  Lem\-ma~\threfc{l:lag-pol-is-basis-P1d-ref}{%
    \eqref{e:lag-pol-is-basis-P1d-ref-2}}.

  \proofparskip{Identity~\eqref{e:prop-geo-l-face-mapping-4}}
  \proofpar{$\phindlv(\matRl)\subset\calFvdindl$}\\
  Direct consequence of
  Definition~\thref{d:geo-l-face-mapping},
  Lemma~\threfc{l:lag-pol-is-basis-P1d-ref}{%
    \eqref{e:lag-pol-is-basis-P1d-ref-3}}, and
  Definition~\thref{d:l-face-aff-space}.\\
  \proofpar{$\calFvdindl\subset\phindlv(\matRl)$}
  From
  Definition~\thref{d:l-face-aff-space},
  let $\xx=\sum_{j=0}^l\mu_j\vv_{\indl(j)}\in\calFvdindl$ with
  $\sum_{j=0}^l \mu_j=1$.
  Let $\hxx\eqdef(\mu_j)_{j\in[1..l]}\in\matRl$.
  Then, from
  Definition~\thref{d:lag-pol-P1d-ref}, and
  Definition~\thref{d:geo-l-face-mapping},
  we have for all $j\in[1..l]$, $\hcalL^{1,l}_j(\hxx)=\mu_j$ and
  $\hcalL^{1,l}_0(\hxx)=1-\sum_{j=1}^l\mu_j=\mu_0$, thus
  $\xx=\phindlv(\hxx)\in\phindlv(\matRl)$.\\
  Therefore, we have the equality.

  \proofparskip{Identity~\eqref{e:prop-geo-l-face-mapping-4bis}}
  \proofpar{$\phindlv(\hKl)\subset\Fvdindl$}
  Direct consequence of
  Definition~\thref{d:geo-l-face-mapping},
  Definition~\thref{d:ref-simplex},
  Definition~\threfc{d:lag-pol-P1d-ref}{%
    thus for all $\hxx\in\hKl$, for all $i\in[0..l]$,
    $\hcalL^{1,l}_i(\hxx)\geq0$}, and
  Lemma~\threfc{l:lag-pol-is-basis-P1d-ref}{%
    \eqref{e:lag-pol-is-basis-P1d-ref-3}}, and
  Definition~\thref{d:l-face}.\\
  \proofpar{$\Fvdindl\subset\phindlv(\hKl)$}
  From
  Definition~\thref{d:l-face},
  let $\xx=\sum_{j=0}^l \mu_j\vv_{\indl(j)}\in\Fvdindl$ with for all
  $j\in[0..l]$, $\mu_j\geq0$ and $\sum_{j=0}^l\mu_j=1$.
  Let $\hxx\eqdef(\mu_j)_{j\in[1..l]}\in\matRl$.
  Then, from
  Definition~\thref{d:lag-pol-P1d-ref},
  Definition~\thref{d:ref-simplex}, and
  Definition~\thref{d:geo-l-face-mapping},
  we have
  \[
    \forall j \in [1..l],\; \hx_j = \hcalL^{1, l}_j (\hxx) = \mu_j \geq 0
    \AND
    1 - \sum_{j = 1}^l \hx_j = \hcalL^{1, l}_0 (\hxx)
    = 1 - \sum_{j = 1}^l \mu_j = \mu_0 \geq 0,
  \]
  thus $\hxx\in\hKl$ and $\xx=\phindlv(\hxx)\in\phindlv(\hKl)$.\\
  Therefore, we have the equality.

  \proofparskip{Bijection}
  Direct consequence of
  \eqref{e:prop-geo-l-face-mapping-2},
  Lemma~\thref{l:inj-aff-map-is-zero-linear-ker},
  Definition~\thref{LM-d:kernel},
  Lemma~\threfc{l:aff-indep-closed-by-sub-family}{%
    with $j\eqdef0$, thus $(\vv_{\indl(i)-}\vv_{\indl(0)})_{i\in[1..l]}$ is free},
  \eqref{e:prop-geo-l-face-mapping-4}, and
  \assume{the definition of bijectivity}.

  \proofparskip{$\invphindlv$ affine}
  Direct consequence of~\eqref{e:prop-geo-l-face-mapping-2},
  Lemma~\threfc{l:inv-of-aff-sub-map-is-aff-sub-map}{%
    with $E=\Ep=\calEp\eqdef\matRl$ ($\xxo\eqdef\zzero$), $F\eqdef\matRd$,
    $\calFp\eqdef\calFvdindl$, and
    $\yypo\eqdef\vv_{\indl(0)}\in\calFvdindl$}, and
  Lemma~\thref{l:equiv-def-l-face-aff-space}.

  \proofparskip{Equation~\eqref{e:prop-geo-l-face-mapping-5}}
  Direct consequence of
  Equation~\eqref{e:prop-geo-l-face-mapping-3}.
\end{proof}

\begin{lemma}[geometric $l$-face mapping of $\matPkd$ is $\matPkl$]
  \label{l:geo-l-face-mapping-of-Pkd-is-Pkl}
  \mbox{}\\
  Let~$d\geq1$.
  Let~$k\in\matN$.
  Let~$\famvertd{\vv}$ be $d+1$ points in~$\matRd$.
  Let $l\in[1..d]$.
  Let~$\indl$ an injective map from $[0..l]$ to $[0..d]$.
  Let $p\in\matPkd$.
  Then, we have $p\circ\phindlv\in\matPkl$.
\end{lemma}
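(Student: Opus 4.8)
The plan is to reduce this statement immediately to two results already established in the excerpt, so that essentially no new work is required. The key observation is that $\phindlv$ is an affine map from $\matRl$ to $\matRd$, and that composing a polynomial in $\matPkd$ with any such affine map lands in $\matPkl$. Both facts are available: the first from Lemma~\ref{l:prop-geo-l-face-mapping} and the second from Lemma~\ref{l:aff-mapping-of-Pkd-is-Pkl}.

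Concretely, first I would invoke Lemma~\ref{l:prop-geo-l-face-mapping}, which asserts (under the present hypotheses $d\geq1$, $l\in[1..d]$, $\indl$ injective from $[0..l]$ to $[0..d]$) that $\phindlv$ belongs to $\AffRlRd$. Note that the affine independence of $\famvertd{\vv}$ is \emph{not} needed here: membership in $\AffRlRd$ is part of the unconditional first half of Lemma~\ref{l:prop-geo-l-face-mapping}, before the ``moreover'' clause about bijectivity. This matters because the statement to prove assumes only that $\famvertd{\vv}$ are $d+1$ points in $\matRd$, with no independence hypothesis.

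Then I would apply Lemma~\ref{l:aff-mapping-of-Pkd-is-Pkl} with $f\eqdef\phindlv\in\AffRlRd$ and the given $p\in\matPkd$, which yields directly $p\circ\phindlv\in\matPkl$. The hypotheses match exactly: we have $l,d\geq1$ (since $l\in[1..d]$ forces $l\geq1$ and $d\geq l\geq1$), $k\in\matN$, and $p\in\matPkd$.

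There is no genuine obstacle here: the statement is a one-line ``direct consequence'' of Lemma~\ref{l:prop-geo-l-face-mapping} and Lemma~\ref{l:aff-mapping-of-Pkd-is-Pkl}. The only point requiring a moment of care is confirming that the affine-map property of $\phindlv$ holds without assuming affine independence of the vertices, so that the composition lemma can be applied at the stated level of generality; the substantive content (the behavior of monomials and polynomials under affine precomposition, via Lemma~\ref{l:aff-mapping-of-monom-is-Pkl} and Lemma~\ref{l:prod-polynom}) has already been carried out upstream.
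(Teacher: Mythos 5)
Your proof is correct and matches the paper's own argument exactly: the paper also derives this as a direct consequence of Lemma~\ref{l:prop-geo-l-face-mapping} (giving $\phindlv\in\AffRlRd$) followed by Lemma~\ref{l:aff-mapping-of-Pkd-is-Pkl}. Your observation that affine independence of the vertices is not needed for the affineness of $\phindlv$ is accurate and consistent with how the paper states both lemmas.
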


\begin{proof}
  Direct consequence of
  Lemma~\threfc{l:prop-geo-l-face-mapping}{%
    thus $\phindlv$ belongs to $\AffRlRd$}, and
  Lemma~\thref{l:aff-mapping-of-Pkd-is-Pkl}.
\end{proof}

\begin{lemma}[geometric mapping of $\matPkd$ is $\matPkd$]
  \label{l:geo-mapping-of-Pkd-is-Pkd}
  \mbox{}\\
  Let~$d\geq1$.
  Let~$k\in\matN$.
  Let~$\famvertd{\vv}$ be $d+1$ points in~$\matRd$.
  Let~$p\in\matPkd$.
  Then, we have $p\circ\phigeodv\in\matPkd$.

  Moreover, if $\famvertd{\vv}$ is affinely independent,
  then~$p\circ\invphigeodv$ also belongs to~$\matPkd$.
\end{lemma}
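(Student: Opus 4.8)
The plan is to observe that the statement is a direct specialization of the general result on affine precomposition, Lemma~\ref{l:aff-mapping-of-Pkd-is-Pkl}, to the case where the source dimension equals the target dimension, i.e. $l\eqdef d$. That lemma asserts that for any $f\in\AffRlRd$ and any $p\in\matPkd$ one has $p\circ f\in\matPkl$; taking $l\eqdef d$ yields precisely that precomposition of a polynomial of $\matPkd$ with an affine self-map of $\matRd$ stays in $\matPkd$. So the whole proof reduces to checking that the two maps in question, $\phigeodv$ and $\invphigeodv$, are affine maps from $\matRd$ to $\matRd$, after which the conclusion is immediate.

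For the first assertion, I would first invoke Lemma~\ref{l:prop-geo-mapping}, which states that $\phigeodv$ belongs to $\AffRdRd$ (without any affine-independence hypothesis). Then, applying Lemma~\ref{l:aff-mapping-of-Pkd-is-Pkl} with $l\eqdef d$ and $f\eqdef\phigeodv$ to the given $p\in\matPkd$ gives $p\circ\phigeodv\in\matPd{d}=\matPkd$, which is the desired conclusion.

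For the \emph{moreover} part, I would use the second half of Lemma~\ref{l:prop-geo-mapping}: under the hypothesis that $\famvertd{\vv}$ is affinely independent, $\phigeodv$ is bijective and its inverse $\invphigeodv$ is affine, hence $\invphigeodv\in\AffRdRd$. A second application of Lemma~\ref{l:aff-mapping-of-Pkd-is-Pkl}, now with $f\eqdef\invphigeodv$, then yields $p\circ\invphigeodv\in\matPkd$.

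I do not expect any genuine obstacle here, since all the substantive work has already been carried out upstream: the polynomial stability under affine precomposition is packaged in Lemma~\ref{l:aff-mapping-of-Pkd-is-Pkl} (which itself rests on the monomial decomposition and the product lemmas), and the affineness of $\phigeodv$ and the affineness and existence of $\invphigeodv$ are packaged in Lemma~\ref{l:prop-geo-mapping}. The only point requiring a moment's attention is the dimensional bookkeeping, namely recognizing that the earlier lemma is stated for a general source dimension $l\geq1$ and that the present situation is its diagonal case $l=d$; this is exactly analogous to how Lemma~\ref{l:geo-mapping-of-Pk1-is-Pk1} handled the one-dimensional instance, but here the algebraic route via affine maps makes the argument shorter and avoids any case split on whether the vertices coincide.
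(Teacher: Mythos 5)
Your proof is correct and follows essentially the same route as the paper: both reduce everything to the affine-precomposition result (Lemma~\ref{l:aff-mapping-of-Pkd-is-Pkl} with $l\eqdef d$), using the affineness of $\invphigeodv$ from Lemma~\ref{l:prop-geo-mapping} for the \emph{moreover} part. The only cosmetic difference is that for the first assertion the paper cites Lemmas~\ref{l:geo-d-face-mapping-is-geo-mapping} and~\ref{l:geo-l-face-mapping-of-Pkd-is-Pkl} (the $d$-face specialization of the $l$-face machinery) rather than invoking the affineness of $\phigeodv$ directly, but that lemma is itself just affineness plus Lemma~\ref{l:aff-mapping-of-Pkd-is-Pkl}, so the two arguments coincide.
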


\begin{proof}
  Direct consequence of
  Lemma~\thref{l:geo-d-face-mapping-is-geo-mapping},
  Lemma~\threfc{l:geo-l-face-mapping-of-Pkd-is-Pkl}{%
    with $l\eqdef d$ and $\indl\eqdef\identity{\matRd}$},
  Lemma~\threfc{l:prop-geo-mapping}{thus $\invphigeodv$ is affine}, and
  Lemma~\threfc{l:aff-mapping-of-Pkd-is-Pkl}{with $l\eqdef d$}.
\end{proof}

\subsection{Geometric hyperface mapping}
\label{ss:geomap-hyperface}

\begin{figure}[htb]
  \centering
  \resizebox{0.85\linewidth}{!}{
    \input{figtikz_fem_ImTriaFaceTet_k3}
  }
  \caption[Geometric hyperface mapping]{%
    Geometric hyperface mapping in the case $d=k=3$ (see
    Lemma~\ref{l:geo-hyperface-mapping}).\\
    The reference triangle~$\hK_2$ is mapped onto~$\Hvd_0$, the $0$-th face
    opposite vertex~$\vv_0$ that is depicted in blue.
    The correspondence between the reference nodes in the triangle ($d=2$) and
    the face nodes of the tetrahedron ($d=3$) is illustrated by the colors:
    we have $\phindv{\thetinjdmi{0}}(\haa_{i,j})=\aa_{(3-(i+j),i.j)}$, for all
    $(i,j)\in\calAiidiii$ (see
    Lemma~\ref{l:im-nodes-by-geo-hyperface-mapping}).}
  \label{f:geo-hyper-map_d32}
\end{figure}

\begin{lemma}[geometric hyperface mapping]
  \label{l:geo-hyperface-mapping}
  \mbox{}\hfill
  Let~$d\geq1$.
  Let~$\famvertd{\vv}$ be $d+1$ points in~$\matRd$.\\
  Let~$i\in[0..d]$.
  Let~$\hxx \in \matRdmi$.
  Then, we have $\phindv{\thetinjdmi{i}}\in\AffRdmiRd$ and
  \begin{align}
    \label{e:geo-hyperface-mapping-0}
    (i = 0)&&
    \phindv{\thetinjdmi{0}} (\hxx) &= \vv_1 +
      \sum_{j = 1}^{d - 1} \hx_j (\vv_{j + 1} - \vv_1),&&\\
    \label{e:geo-hyperface-mapping-1}
    (i \in [1..d - 1])&&
    \phindv{\thetinjdmi{i}} (\hxx) &= \vv_0 +
      \sum_{j = 1}^{i - 1} \hx_j (\vv_j - \vv_0) +
      \sum_{j = i}^{d - 1} \hx_j (\vv_{j + 1} - \vv_0),&&\\
    \label{e:geo-hyperface-mapping-2}
    \forall j \in [0..d] && \phindv{\thetinjdmi{i}} (\hvv_j^{d - 1}) &= \left\{
      \begin{array}{ll}
        \vv_j & \mbox{if } j < i,\\
        \vv_{j + 1} & \mbox{if } j \geq i.
      \end{array}
      \right.&&
  \end{align}

  Moreover, if $\famvertd{\vv}$ is affinely independent, then
  $\phindv{\thetinjdmi{i}}$~is bijective from~$\matRdmi$ onto~$\calHvd_i$, and
  its inverse $\left(\phindv{\thetinjdmi{i}}\right)^{-1}$ from~$\calHvd_i$
  onto~$\matRdmi$ is affine.
\end{lemma}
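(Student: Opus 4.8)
The plan is to recognize $\phindv{\thetinjdmi{i}}$ as the geometric $(d-1)$-face mapping of Definition~\ref{d:geo-l-face-mapping}, instantiated with $l \eqdef d-1$ and $\indl \eqdef \thetinjdmi{i}$, and then to specialize the general properties already established in Lemma~\ref{l:prop-geo-l-face-mapping}. First I would invoke Lemma~\ref{l:jump-enum} to record that $\thetinjdmi{i}$ is injective from $[0..d-1]$ into $[0..d]$, so that the instantiation is legitimate (note that the face-mapping machinery requires $l \in [1..d]$, hence $d \geq 2$; the case $d = 1$ is discussed below). With this, Lemma~\ref{l:prop-geo-l-face-mapping} immediately yields $\phindv{\thetinjdmi{i}} \in \AffRdmiRd$, which is the first assertion, for every $i \in [0..d]$.

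For the explicit expressions, I would start from the affine form~\eqref{e:prop-geo-l-face-mapping-2}, which reads $\phindv{\thetinjdmi{i}}(\hxx) = \vv_{\thetinjdmi{i}(0)} + \sum_{j=1}^{d-1} \hx_j (\vv_{\thetinjdmi{i}(j)} - \vv_{\thetinjdmi{i}(0)})$, and substitute the closed form of $\thetinjdmi{i}$ given by~\eqref{e:jump-enum}. When $i = 0$, one has $\thetinjdmi{0}(j) = j+1$ for every $j \in [0..d-1]$, in particular $\thetinjdmi{0}(0) = 1$, which reproduces~\eqref{e:geo-hyperface-mapping-0} directly. When $i \in [1..d-1]$, one has $\thetinjdmi{i}(0) = 0$, then $\thetinjdmi{i}(j) = j$ for $j < i$ and $\thetinjdmi{i}(j) = j+1$ for $j \geq i$; splitting the sum at $j = i$ recovers the two partial sums of~\eqref{e:geo-hyperface-mapping-1}. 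The vertex identity~\eqref{e:geo-hyperface-mapping-2} is obtained the same way, substituting the values of $\thetinjdmi{i}$ into~\eqref{e:prop-geo-l-face-mapping-3}, which states $\phindv{\thetinjdmi{i}}(\hvv_j^{d-1}) = \vv_{\thetinjdmi{i}(j)}$. For $i = d$ no explicit formula is claimed, so only the affine membership, the vertex identity, and the bijectivity below are required, and these follow from the general specialization without extra work.

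For the final claim, assuming $\famvertd{\vv}$ affinely independent, the ``moreover'' part of Lemma~\ref{l:prop-geo-l-face-mapping} gives that $\phindv{\thetinjdmi{i}}$ is bijective from $\matRdmi$ onto the $(d-1)$-face affine space $\calFvd{\thetinjdmi{i}}$, with affine inverse. It then remains only to identify this affine space with the face hyperplane, which is exactly the content of Lemma~\ref{l:d-1-face-aff-sp-is-hyperface-aff-sp}, namely $\calFvd{\thetinjdmi{i}} = \calHvd_i$. Substituting this identity yields bijectivity onto $\calHvd_i$ with affine inverse, completing the statement.

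The main obstacle is not any single hard computation but the degenerate case $d = 1$: there $l = d-1 = 0$, and both Definition~\ref{d:geo-l-face-mapping} and Lemma~\ref{l:prop-geo-l-face-mapping} are stated only for $l \geq 1$, so the reduction above does not apply verbatim. I expect this to require either restricting the statement to $d \geq 2$ or a short separate argument, observing that for $d = 1$ the domain $\matRdmi$ is a single point and $\phindv{\thetinjdmi{i}}$ reduces to the constant map with value $\vv_{\thetinjdmi{i}(0)}$ (equal to $\vv_1$ when $i = 0$ and $\vv_0$ when $i = 1$), which is trivially affine and, mapping onto the singleton hyperplane $\calHvd_i$, bijective. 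A secondary point to watch is the index range in~\eqref{e:geo-hyperface-mapping-2}: the reference vertices $\hvv_j^{d-1}$ live in $\matRdmi$ and are naturally indexed by $j \in [0..d-1]$, matching the $[0..l]$ range of~\eqref{e:prop-geo-l-face-mapping-3}, so the substitution of the values of $\thetinjdmi{i}$ should be carried out over exactly that range.
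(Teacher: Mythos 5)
Your proof follows exactly the paper's route: the paper's entire proof is ``Direct consequence of Lemma~\ref{l:d-1-face-aff-sp-is-hyperface-aff-sp} and Lemma~\ref{l:prop-geo-l-face-mapping}'', i.e.\ specialize the geometric $l$-face mapping properties to $l\eqdef d-1$, $\indl\eqdef\thetinjdmi{i}$, and identify $\calFvd{\thetinjdmi{i}}$ with $\calHvd_i$. Your added observation about the degenerate case $d=1$ (where $l=0$ falls outside the stated range of Definition~\ref{d:geo-l-face-mapping}) is a legitimate point that the paper silently glosses over, and your proposed fix is adequate.
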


\begin{proof}
  Direct consequence of
  Lemma~\thref{l:d-1-face-aff-sp-is-hyperface-aff-sp}, and
  Lemma~\thref{l:prop-geo-l-face-mapping}.
\end{proof}

\begin{lemma}[hyperface geometric mapping of $\matPkd$ is $\matPkdmi$]
  \label{l:hyperface-geo-mapping-of-Pkd-is-Pkdmi}
  \mbox{}\hfill
  Let $d\geq2$.
  Let~$k\in\matN$.\\
  Let~$\famvertd{\vv}$ be $d+1$ points in~$\matRd$.
  Let $i\in[0..d]$.
  Let $p\in\matPkd$.
  Then, we have $p\circ\phindv{\thetinjdmi{i}}\in\matPkdmi$.
\end{lemma}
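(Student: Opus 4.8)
The plan is to recognize $p\circ\phindv{\thetinjdmi{i}}$ as an instance of the composition of a polynomial with an affine map, and then to invoke the results already established for $l$-face mappings. First I would observe that, since $d\geq2$, we have $l\eqdef d-1\geq1$, so that the geometric hyperface mapping $\phindv{\thetinjdmi{i}}=\phind{\thetinjdmi{i}}{\vv}$ is exactly the geometric $l$-face mapping associated with $\famvertd{\vv}$ and the injective map $\thetinjdmi{i}:\Arrow{[0..d-1]}{[0..d]}$ (injectivity being provided by Lemma~\ref{l:jump-enum}, with $n\eqdef d-1$).

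The cleanest route is then to apply Lemma~\ref{l:geo-l-face-mapping-of-Pkd-is-Pkl} (geometric $l$-face mapping of $\matPkd$ is $\matPkl$) with $l\eqdef d-1$ and $\indl\eqdef\thetinjdmi{i}$: this immediately yields $p\circ\phindv{\thetinjdmi{i}}\in\matPkdmi$, which is the desired conclusion. Alternatively, one can go one step lower and first use Lemma~\ref{l:geo-hyperface-mapping} to record that $\phindv{\thetinjdmi{i}}\in\AffRdmiRd$, and then feed this affine map into Lemma~\ref{l:aff-mapping-of-Pkd-is-Pkl} with $l\eqdef d-1$ and $f\eqdef\phindv{\thetinjdmi{i}}$; this produces the same result.

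There is essentially no obstacle here: the statement is a direct specialization of the general fact that composing a polynomial of total degree at most~$k$ with an affine map lands again in the polynomial space of the same degree bound on the source space. The only point that requires minimal attention is the hypothesis $l\geq1$ in both of the cited lemmas, which is precisely why $d\geq2$ is assumed (so that $l=d-1\geq1$); the case $d=1$ would degenerate to $0$-dimensional faces and is deliberately excluded, consistently with the convention adopted in Section~\ref{ss:geomap-l-face}.
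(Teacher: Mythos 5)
Your proposal is correct and matches the paper's own proof: both reduce the statement to Lemma~\ref{l:geo-l-face-mapping-of-Pkd-is-Pkl} applied with $l\eqdef d-1$ and $\indl\eqdef\thetinjdmi{i}$, the injectivity of $\thetinjdmi{i}$ being supplied by Lemma~\ref{l:jump-enum} (recorded in the paper via Lemma~\ref{l:d-1-face-aff-sp-is-hyperface-aff-sp}). The alternative route you sketch through Lemmas~\ref{l:geo-hyperface-mapping} and~\ref{l:aff-mapping-of-Pkd-is-Pkl} is also valid but is not needed.
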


\begin{proof}
  Direct consequence of
  Lemma~\thref{l:d-1-face-aff-sp-is-hyperface-aff-sp}, and
  Lemma~\threfc{l:geo-l-face-mapping-of-Pkd-is-Pkl}{%
    with $l\eqdef d-1$ and $\indl\eqdef\thetinjdmi{i}$}.
\end{proof}

\subsection{Geometric mapping with permutation}
\label{ss:geomap-perm}

\begin{figure}[htb]
  \centering
  \resizebox{0.85\linewidth}{!}{
    \begin{tikzpicture}[scale=4,math3d] 

        \def\kk{3}

        \def\colk{black}
        \def\colo{magenta}
        \def\coli{darkgreen}
        \def\colii{red}
        \def\coliii{blue}

        \def\opacity{0.4}
        \def\opacityi{0.6}
        \def\opacityo{0.2}

	\coordinate (A) at (0,0,0);
	\coordinate (B) at  ($ (A) + (1,0,0) $);
	\coordinate (C) at  ($ (A) + (0,1,0) $);
	\coordinate (D) at  ($ (A) + (0,0,1) $);
        \draw (A) node[left] {$\hvv_0$} ;  
        \draw (B) node[above left] {$\hvv_1$} ; 
        \draw (C) node[above=2pt] {$\hvv_2$} ; 
        \draw (D) node[right=1.5pt] {$\hvv_3$} ; 
	\draw[line width=1.0pt,rounded corners=0.5pt] (A) -- (B) -- (D) -- cycle;
	\draw[line width=1.0pt,fill=\coliii!40,fill opacity=\opacity,rounded corners=0.5pt] (A) -- (C) -- (B) -- cycle;
	\draw[line width=1.0pt,rounded corners=0.5pt] (B) -- (C) -- (D) -- cycle;
	\draw[line width=1.6pt,rounded corners=0.5pt] (A) -- (C) -- (D) -- cycle;

        \pgfmathparse{1-1/\kk}\let\kkp\pgfmathresult
        \coordinate (B1) at ($ \kkp*(A) + 1/\kk*(B) $) ;
        \coordinate (C1) at ($ \kkp*(A) + 1/\kk*(C) $) ;
        \coordinate (D1) at ($ \kkp*(A) + 1/\kk*(D) $) ;
        \pgfmathparse{1-2/\kk}\let\kkp\pgfmathresult
        \coordinate (B2) at ($ \kkp*(A) + 2/\kk*(B) $) ;
        \coordinate (C2) at ($ \kkp*(A) + 2/\kk*(C) $) ;
        \coordinate (D2) at ($ \kkp*(A) + 2/\kk*(D) $) ;

        %

        \coordinate (E) at ($ -0.15*(A) - 0.15*(B) + 1.3*(C) $) ;
        \coordinate (F) at ($ -2.6*(A) + 1.3*(B) + 1.3*(C) $) ;
        \coordinate (G) at ($ -0.15*(A) + 1.3*(B) - 0.15*(C) $) ;
        \coordinate (H) at ($ 1.4*(A) - 0.2*(B) - 0.2*(C) $) ;

        \draw[color=\coliii,line width=0.2pt,fill=\coliii!40,fill
          opacity=\opacityo]  (E) -- (F) -- (G) -- (H) -- cycle;

        \node [color=\coliii] (K3) at ($ (A) + (0.15,0.15,0) $) {$\hH^d_d$};
        \node [color=\coliii] (K33) at ($ (F) - (0.3,0.3,0) $) {$\hcalH^d_d$};
        \node [color=\colk] (K33) at ($ -0.3*(C) + 0.6*(D) $) {$\hKd$};

        \fill[color=\colk,fill opacity=1.0] (A) circle (0.6pt);
        \foreach \x in {3,2,...,0}  {
          \pgfmathparse{\kk-\x}\let\YY\pgfmathresult
          \foreach \y in {0,...,\YY} {
            \pgfmathparse{\kk-\x-\y}\let\z\pgfmathresult
            \fill[color=\colk,fill opacity=1.0] ($ (A) + 1/\kk*(\x,\y,\z) $) circle (0.6pt);
          }
        }
	\coordinate (nA) at (0,1.8,0 );
	\coordinate (nB) at  ($ (nA) + (-1.7,0.3,-0.4) $);
	\coordinate (nC) at  ($ (nA) + (0,0.9,-0.4) $);
	\coordinate (nD) at  ($ (nA) + (0.05,0,1.3) $);
        \draw (nA) node[below=2.5pt] {$\vv_0=\phipermcd{0}(\hvv_3)$} ;  
        \draw (nB) node[above] {$\vv_1=\phipermcd{0}(\hvv_0)$} ; 
        \draw (nC) node[below] {$\vv_2=\phipermcd{0}(\hvv_1)$} ; 
        \draw (nD) node[right=1.5pt] {$\vv_3=\phipermcd{0}(\hvv_2)$} ; 
	\draw[line width=1.0pt,rounded corners=0.5pt] (nA) -- (nB) -- (nD) -- cycle;
	\draw[line width=1.0pt,rounded corners=0.5pt] (nA) -- (nC) -- (nB) -- cycle;
	\draw[line width=1.0pt,fill=\coliii!40,fill opacity=\opacity,rounded corners=0.5pt] (nB) -- (nC) -- (nD) -- cycle;
	
        \coordinate (nE) at ($ 0.3*(nB) - 0.45*(nC) + 1.15*(nD) $) ;
        \coordinate (nF) at ($ -0.4*(nB) + 0.2*(nC) + 1.15*(nD) $) ;
        \coordinate (nG) at ($ -0.15*(nB) + 1.3*(nC) - 0.15*(nD) $) ;
        \coordinate (nH) at ($ 1.3*(nB) - 0.15*(nC) - 0.15*(nD) $) ;

        \draw[color=\coliii,line width=0.2pt,fill=\coliii!40,fill
          opacity=\opacityo]  (nE) -- (nF) -- (nG) -- (nH) -- cycle;

        \node [color=\coliii] (nK3) at ($ (nD) + (0,0.1,0.2) $) {$\calHvd_0$};

        \pgfmathparse{1-1/\kk}\let\kkp\pgfmathresult
        \coordinate (nB1) at ($ \kkp*(nA) + 1/\kk*(nB) $) ;
        \coordinate (nC1) at ($ \kkp*(nA) + 1/\kk*(nC) $) ;
        \coordinate (nD1) at ($ \kkp*(nA) + 1/\kk*(nD) $) ;
        \pgfmathparse{1-2/\kk}\let\kkp\pgfmathresult
        \coordinate (nB2) at ($ \kkp*(nA) + 2/\kk*(nB) $) ;
        \coordinate (nC2) at ($ \kkp*(nA) + 2/\kk*(nC) $) ;
        \coordinate (nD2) at ($ \kkp*(nA) + 2/\kk*(nD) $) ;


        \node [color=\coliii] (nK33) at ($ (nD) + (0,0.55,-0.5) $) {$\Hvd_0$};
        \node [color=\colk] (nK333) at ($ 0.6*(nA) + 0.4*(nD) + (0,-0.15,0) $) {$\Kvd$};

        \foreach \x in {3,2,1,...,0}  {
          \pgfmathparse{3-\x}\let\YY\pgfmathresult
          \foreach \y in {0,...,\YY} {
            \pgfmathparse{3-\x-\y}\let\z\pgfmathresult
            \pgfmathparse{\x/\kk}\let\xkk\pgfmathresult
            \pgfmathparse{\y/\kk}\let\ykk\pgfmathresult
            \pgfmathparse{\z/\kk}\let\zkk\pgfmathresult
            \pgfmathparse{1-\xkk-\ykk-\zkk}\let\okk\pgfmathresult
            \fill[color=\colk,fill opacity=1.0]
            ($ \okk*(nA) + \xkk*(nB) + \ykk*(nC) + \zkk*(nD) $) circle (0.6pt);
          }
        }
        \newcount\x
        \pgfmathsetcount{\x}{0}
        \foreach \k in {0,...,\kk}  {
          \foreach \y in {0,...,\k} {
            \pgfmathparse{\k-\x-\y}\let\z\pgfmathresult
            \pgfmathparse{\x/\kk}\let\xkk\pgfmathresult
            \pgfmathparse{\y/\kk}\let\ykk\pgfmathresult
            \pgfmathparse{\z/\kk}\let\zkk\pgfmathresult
            \pgfmathparse{1-\xkk-\ykk-\zkk}\let\okk\pgfmathresult
            \fill[color=\colk,fill opacity=1.0]
            ($ \okk*(nA) + \xkk*(nB) + \ykk*(nC) + \zkk*(nD) $) circle (0.6pt);
          }
        }

        \tikzstyle{fl}=[->,>=latex,very thick];
        \node (K2DD) at  ($  1/2*(C) + 1/2*(D) $) {} ;
        \node (nK3D) at  ($ 0.7*(nD) + 0.3*(nA) $) {};
        \draw[fl] (K2DD) to[bend left] (nK3D);
        \node (fkd1) at  ($ (C) + (0,0.1,0.8) $) {$\phipermcd{0}$};

\end{tikzpicture}
  }
  \caption[Geometric transformation with a circular permutation]{%
    Geometric transformation~$\phipermcd{0}$ in the case $d=k=3$, with the
    circular permutation~$\permcdo$ that maps $(0,1,2,3)$ to
    $(1,2,3,0)$, see Lemma~\ref{l:circ-permut}.\\
    The reference simplex~$\hKd$ is mapped onto the current simplex~$\Kvd$, see
    Lemma~\ref{l:geo-mapping-permut}.
    The reference vertices are mapped onto the current vertices with a change
    of indices: for all $j\in[0..d]$, we have
    $\phipermcd{0}(\hvv_j)=\vv_{\permcdo(j)}$, see
    Lemma~\ref{l:prop-geo-l-face-mapping}.
    Note that as~$\permcdo(d)=0$, the reference hyperplane~$\hcalH^d_d$
    (containing the face~$\hH^d_d$ opposite vertex~$\hvv_d$) is mapped
    onto~$\calHvd_0$ (containing the face~$\Hvd_0$ opposite vertex~$\vv_0$).\\
    In this figure, only the visible nodes are depicted.}
  \label{f:geo-map-perm-circ0-d=3-hyperplane}
\end{figure}


\begin{lemma}[geometric mapping with permutation]
  \label{l:geo-mapping-permut}
  \mbox{}\hfill
  Let~$d\geq1$.
  Let~$\famvertd{\vv}$ be $d+1$\\ affinely independent points in~$\matRd$.
  Let~$\indd$ be an injective mapping from $[0..d]$ to~$[0..d]$.
  Let~$\hxx\in\matRd$.\\
  Then, $\indd$~is a bijective permutation of $[0..d]$, $\phinddv\in\AffRdRd$
  is bijective, and we have
  \begin{align}
    \label{e:geo-mapping-permut-1}
    \phinddv (\hxx)
    &= \sum_{i = 0}^d \hcalL^{1, d}_{\inddinv (i)} (\hxx) \, \vv_i\\
    \label{e:geo-mapping-permut-1bis}
    \forall i \in [0..d],\quad
    \phinddv (\hxx)
    &= \vv_i +
      \sum_{k = 0, k \neq i}^d \hcalL^{1, d}_{\inddinv (k)} (\hxx)
      (\vv_k - \vv_i),\\
    \label{e:geo-mapping-permut-2}
    \forall i \in [0..d],\quad
    \barcvd_i = \calL^{\famvertd{\vv}}_i
    &= \hcalL^{1, d}_{\inddinv (i)} \circ \invphinddv,\\
    \label{e:geo-mapping-permut-3}
    \forall j \in [0..d],\quad
    \phinddv (\hcalH^d_j) &= \calHvd_{\indd (j)},\\
    \label{e:geo-mapping-permut-4}
    \phinddv (\hKd) &= \Kvd.
  \end{align}
\end{lemma}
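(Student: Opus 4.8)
The plan is to derive everything from the properties of the geometric $l$-face mapping (Lemma~\ref{l:prop-geo-l-face-mapping}) specialized to $l \eqdef d$ and $\indl \eqdef \indd$, since by Definition~\ref{d:geo-l-face-mapping} the map $\phinddv$ is exactly $\phindlv$ in that case. First I would record that $\indd$ is a bijective permutation of $[0..d]$ by injectivity together with cardinal equality (as in the first part of Lemma~\ref{l:l-face-is-simplex}); this legitimizes the use of $\inddinv$ throughout. Then $\phinddv \in \AffRdRd$ follows directly from Lemma~\ref{l:prop-geo-l-face-mapping}, and its bijectivity \emph{onto $\matRd$} follows from the same lemma (which gives bijectivity onto $\calFvdindd$) combined with Lemma~\ref{l:d-face-aff-space-is-full-space} ($\calFvdindd = \matRd$).

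For~\eqref{e:geo-mapping-permut-1}, I would start from the defining sum $\phinddv(\hxx) = \sum_{j=0}^d \hcalL^{1,d}_j(\hxx)\,\vv_{\indd(j)}$ and reindex by $i \eqdef \indd(j)$ (equivalently $j = \inddinv(i)$), the change of summation variable being justified by the bijectivity of $\indd$ on $[0..d]$. Identity~\eqref{e:geo-mapping-permut-1bis} then follows by extracting the index $i$ from~\eqref{e:geo-mapping-permut-1}, using the permuted partition of unity $\sum_{i=0}^d \hcalL^{1,d}_{\inddinv(i)} = \sum_{m=0}^d \hcalL^{1,d}_m = 1$ (Lemma~\ref{l:lag-pol-is-basis-P1d-ref}, equation~\eqref{e:lag-pol-is-basis-P1d-ref-3}, and bijectivity of $\inddinv$) together with ring arithmetic, exactly as in the proof of~\eqref{e:prop-geo-mapping-1}.

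Identity~\eqref{e:geo-mapping-permut-2} is the crux. The equality $\barcvd_i = \calL^{\famvertd{\vv}}_i$ is already available from Lemma~\ref{l:lag-pol-P1d-is-baryc-coor}, so it remains to show $\calL^{\famvertd{\vv}}_i = \hcalL^{1,d}_{\inddinv(i)} \circ \invphinddv$. Setting $g_i \eqdef \hcalL^{1,d}_{\inddinv(i)} \circ \invphinddv$, I would establish two facts: first, $g_i \in \matPid$, because $\hcalL^{1,d}_{\inddinv(i)}$ and $\invphinddv$ are affine (Lemma~\ref{l:lag-pol-is-basis-P1d-ref} and Lemma~\ref{l:prop-geo-l-face-mapping}) and affine maps are closed under composition (Lemma~\ref{l:aff-map-are-closed-by-composition}); second, $g_i(\vv_j) = \kron{i}{j}$ for all $j \in [0..d]$. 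For the second fact I would use $\invphinddv(\vv_j) = \hvv_{\inddinv(j)}$, obtained from~\eqref{e:prop-geo-l-face-mapping-5} (that is $\invphindlv(\vv_{\indl(j)}) = \hvv_j$) after the substitution $j \mapsto \inddinv(j)$, and then evaluate $\hcalL^{1,d}_{\inddinv(i)}(\hvv_{\inddinv(j)}) = \kron{\inddinv(i)}{\inddinv(j)} = \kron{i}{j}$ by~\eqref{e:lag-pol-is-basis-P1d-ref-2} and injectivity of $\inddinv$. Decomposing $g_i$ in the Lagrange basis (Lemma~\ref{l:decomp-P1d-pol-in-lag-basis}) then forces $g_i = \sum_j g_i(\vv_j)\,\calL^{\famvertd{\vv}}_j = \calL^{\famvertd{\vv}}_i$.

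Finally, for~\eqref{e:geo-mapping-permut-4} I would simply specialize~\eqref{e:prop-geo-l-face-mapping-4bis} to $l = d$ and use $\Fvdindd = \Kvd$ (Lemma~\ref{l:l-face-is-simplex}). For~\eqref{e:geo-mapping-permut-3} I would rewrite both hyperplanes as kernels, $\hcalH^d_j = \Ker{\hcalL^{1,d}_j}$ (Lemma~\ref{l:ref-face-hyperpl}) and $\calHvd_{\indd(j)} = \Ker{\barcvd_{\indd(j)}}$ (Lemma~\ref{l:equiv-def-face-hyperpl}), and note from~\eqref{e:geo-mapping-permut-2} with $i = \indd(j)$ that $\barcvd_{\indd(j)} = \hcalL^{1,d}_j \circ \invphinddv$. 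Applying Lemma~\ref{l:im-ker-eq-ker} with the surjection $f \eqdef \invphinddv$ and $g \eqdef \hcalL^{1,d}_j$ yields $\invphinddv(\Ker{\barcvd_{\indd(j)}}) = \hcalH^d_j$, and applying the bijection $\phinddv$ to both sides gives $\phinddv(\hcalH^d_j) = \calHvd_{\indd(j)}$. The main obstacle I anticipate is purely bookkeeping: keeping the permutation $\indd$ and its inverse $\inddinv$ on the correct side in~\eqref{e:geo-mapping-permut-2}, after which the kernel/image argument for~\eqref{e:geo-mapping-permut-3} is routine.
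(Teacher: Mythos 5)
Your proposal is correct and follows the paper's overall architecture: everything is specialized from Lemma~\ref{l:prop-geo-l-face-mapping} with $l\eqdef d$, the bijectivity of~$\indd$ and the identifications $\calFvdindd=\matRd$ and $\Fvdindd=\Kvd$ come from Lemmas~\ref{l:d-face-aff-space-is-full-space} and~\ref{l:l-face-is-simplex}, and \eqref{e:geo-mapping-permut-3} is obtained via Lemma~\ref{l:im-ker-eq-ker} applied to the kernel characterizations of the two face hyperplanes. The one place where you genuinely diverge is \eqref{e:geo-mapping-permut-2}: you characterize $g_i\eqdef\hcalL^{1,d}_{\inddinv(i)}\circ\invphinddv$ as an element of~$\matPid$ taking the values $\kron{i}{j}$ at the~$\vv_j$ (using $\invphinddv(\vv_j)=\hvv_{\inddinv(j)}$ and the closure of affine maps under composition) and conclude by the Lagrange-basis decomposition of Lemma~\ref{l:decomp-P1d-pol-in-lag-basis}, whereas the paper writes $\xx=\sum_{i=0}^d\barcvd_i(\xx)\,\vv_i=\sum_{i=0}^d\hcalL^{1,d}_{\inddinv(i)}(\invphinddv(\xx))\,\vv_i$ and invokes the uniqueness of barycentric coordinates (Lemma~\ref{l:baryc-coor}). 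Your route costs the extra verifications that $g_i$ is affine and that the inverse maps vertices to reference vertices, but spares you the permuted partition of unity $\sum_{i=0}^d\hcalL^{1,d}_{\inddinv(i)}=1$ that the uniqueness argument needs; both are equally rigorous. A second, cosmetic difference: in \eqref{e:geo-mapping-permut-3} you take $f\eqdef\invphinddv$ as the surjection in Lemma~\ref{l:im-ker-eq-ker} and then push forward by $\phinddv$, while the paper takes $f\eqdef\phinddv$ directly, which saves that last step.
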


\begin{proof}
  \proofpar{Bijections, $\phinddv$ affine,
    identities~\eqref{e:geo-mapping-permut-1},
    \eqref{e:geo-mapping-permut-1bis}}
  Direct consequence of
  Lemma~\thref{l:l-face-is-simplex},
  Lemma~\threfc{l:prop-geo-l-face-mapping}{with $l\eqdef d$},
  Lem\-ma~\thref{l:d-face-aff-space-is-full-space}, and
  Definition~\thref{d:geo-l-face-mapping}.

  \proofparskip{Identities~\eqref{e:geo-mapping-permut-2}}
  Let~$\xx\in\matRd$.
  Let $\hyy\eqdef\invphinddv(\xx)\in\matRd$, {\ie} such that
  $\phinddv(\hyy)=\xx$.
  From
  Lemma~\threfc{l:baryc-coor}{decomposition, then uniqueness},
  \eqref{e:geo-mapping-permut-1},
  Lemma~\threfc{l:lag-pol-is-basis-P1d-ref}{%
    \eqref{e:lag-pol-is-basis-P1d-ref-3}}, and
  Lemma~\thref{l:lag-pol-P1d-is-baryc-coor},
  we have
  \[
    \xx = \sum_{i = 0}^d \barcvd_i (\xx) \, \vv_i
    = \sum_{i = 0}^d \hcalL^{1, d}_{\inddinv (i)} (\hyy) \, \vv_{i}
    = \sum_{i = 0}^d \hcalL^{1, d}_{\inddinv(i)}
    \circ \invphinddv (\xx) \, \vv_{i}
  \]
  and $\sum_{i = 0}^d \hcalL^{1, d}_{\inddinv (i)} = 1$.
  Thus, for all $i\in[0..d]$,
  \(
    \hcalL^{1,d}_{\inddinv(i)}\circ\invphinddv
    = \barcvd_i
    = \calL^{\famvertd{\vv}}_i.
  \)

  \proofparskip{Identity~\eqref{e:geo-mapping-permut-3}}
  Let $j\in[0..d]$.
  Let $i\eqdef\indd(j)$, {\ie} such that $\inddinv(i)=j$.
  Then, from
  Lemma~\thref{l:ref-face-hyperpl},
  \eqref{e:geo-mapping-permut-2} (thus
  $\hcalL^{1,d}_j=\barcvd_i\circ\phinddv$),
  \assume{the definition of bijectivity},
  Lemma~\threfc{l:im-ker-eq-ker}{%
    with $f\eqdef\phinddv$ surjective, and $g\eqdef\barcvd_i$}, and
  Lemma~\thref{l:equiv-def-face-hyperpl},
  we have
  \[
    \phinddv \left( \hcalH^d_j \right)
    = \phinddv \left( \Ker{\hcalL^{1, d}_j} \right)
    = \Ker{\barcvd_i}
    = \calHvd_i.
  \]

  \proofparskip{Identity~\eqref{e:geo-mapping-permut-4}}
  Direct consequence of
  Lemma~\threfc{l:prop-geo-l-face-mapping}{%
    \eqref{e:prop-geo-l-face-mapping-4bis} with $l\eqdef d$}, and
  Lemma~\thref{l:l-face-is-simplex}.
\end{proof}

\section{Lagrange nodes for $\FElagP{k}{d}$}
\label{s:lag-nodes-Pkd}

\begin{remark}
  We recall the distinct notations for vertices~$\famvertd{\vv}$, and for
  nodes~$\famnodekd{\aa}$ (see Remark~\ref{r:fam-nodes-fam-verts}), as well as
  Definition~\ref{d:isobaryc} for isobarycenters.
\end{remark}

\subsection{Lagrange nodes for the current element}
\label{ss:lag-nodes-curr}

\begin{definition}[Lagrange nodes of $\matPkd$]
  \label{d:lag-nodes-Pkd}
  \mbox{}\hfill
  Let~$d\geq1$.
  Let~$k\in\matN$.
  Let~$\famvertd{\vv}$ be $d+1$ points in~$\matRd$.
  The {\em Lagrange nodes of~$\matPkd$} are denoted
  $\famnodekd{\aa}=(\aa_\aalpha)_{\aalpha\in\calAkd}$, and are defined by
  \begin{align}
    \label{e:lag-nodes-Pkd-0}
    (k = 0)&&
    \aa_\zzero &\eqdef
      \bisobarycv \left( = \frac{1}{d + 1} \sum_{i = 0}^d \vv_i \right),&&\\
    \label{e:lag-nodes-Pkd-1}
    (k \geq 1)&&
    \forall \aalpha \in \calAkd,\quad
    \aa_\aalpha &\eqdef
      \vv_0 + \sum_{i = 1}^d \frac{\alpha_i}{k} (\vv_i -  \vv_0).&&
  \end{align}
\end{definition}

\begin{lemma}[Lagrange nodes of $\matPkd$ for $d=1$ are Lagrange nodes of
  $\matPki$]
  \label{l:lag-nodes-Pkd-for-d-eq-1-are-lag-nodes-Pk1}
  \mbox{}\\
  Let~$k\in\matN$.
  Let~$\famverti{\vv}=\famverti{v}$ be two points in~$\matR^1$.\\
  Then, $\famnodekd{\aa}$ for $d\eqdef1$, and~$\famnodeki{a}$ from
  Definition~\ref{d:lag-nodes-Pk1} coincide.
\end{lemma}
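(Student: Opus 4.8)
The plan is to prove this by a straightforward case analysis on~$k$, exploiting the fact that in dimension~$d=1$ the two definitions are syntactically almost identical once the multi-index bookkeeping is unwound. The first observation I would make is that, by Lemma~\ref{l:multi-ind-Akd-for-d-eq-1-is-Ak1}, the index set $\calAkd$ for $d\eqdef1$ coincides with $\calAki=[0..k]$, so that each multi-index $\aalpha\in\calAkd$ is nothing but a single integer $i\eqdef\alpha_1\in[0..k]$. Under this identification, the Lagrange node $\aa_\aalpha$ of~$\matPkd$ is to be compared with the Lagrange node $a_i$ of~$\matPki$, and identifying $\famverti{\vv}=\famverti{v}$ as in the statement reduces everything to checking that the two defining formulas agree term by term.

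For the base case $k=0$, I would invoke \eqref{e:lag-nodes-Pkd-0} of Definition~\ref{d:lag-nodes-Pkd} with $d\eqdef1$: the single node is the isobarycenter $\bisobarycv$, which by Definition~\ref{d:isobaryc} (with $n\eqdef1$) equals $\frac{1}{2}(\vv_0+\vv_1)=\frac{v_0+v_1}{2}$. This is exactly $\isobarycv=a_0$ as given by \eqref{e:lag-nodes-Pk1-0} of Definition~\ref{d:lag-nodes-Pk1}, so the two nodes coincide.

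For the case $k\geq1$, I would apply \eqref{e:lag-nodes-Pkd-1} with $d\eqdef1$, where the sum $\sum_{i=1}^d\frac{\alpha_i}{k}(\vv_i-\vv_0)$ collapses to the single term $\frac{\alpha_1}{k}(\vv_1-\vv_0)$; hence $\aa_{\alpha_1}=\vv_0+\frac{\alpha_1}{k}(\vv_1-\vv_0)$. Writing $i\eqdef\alpha_1$ and $h\eqdef\frac{v_1-v_0}{k}$, this is precisely $v_0+i\,h$, which is the defining formula \eqref{e:lag-nodes-Pk1-1} of Definition~\ref{d:lag-nodes-Pk1}. There is no genuine obstacle here: the only point requiring a little care is the explicit identification of the one-dimensional multi-index~$\aalpha$ with the integer index~$i$ (via Lemma~\ref{l:multi-ind-Akd-for-d-eq-1-is-Ak1}), and the reduction of the $d$-fold sum to a single term; once these are made explicit, the equality of the two families of nodes is immediate from the definitions.
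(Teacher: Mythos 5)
Your proof is correct and follows essentially the same route as the paper: the paper's proof is a one-line ``direct consequence'' of Definition~\ref{d:lag-nodes-Pkd} with $d\eqdef1$, the identification of the index sets $\calAkd$ (for $d=1$) with $[0..k]$, and Definition~\ref{d:lag-nodes-Pk1}, which is exactly the unwinding you carry out explicitly (your use of Lemma~\ref{l:multi-ind-Akd-for-d-eq-1-is-Ak1} instead of Lemma~\ref{l:first-multi-ind-Akd} for the index identification is an immaterial difference). Your case split on $k=0$ versus $k\geq1$ and the term-by-term comparison are precisely the computation the paper leaves implicit.
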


\begin{proof}
  Direct consequence of
  Definition~\threfc{d:lag-nodes-Pkd}{with $d\eqdef1$},
  Lemma~\threfc{l:first-multi-ind-Akd}{with $d\eqdef1$}, and
  Definition~\thref{d:lag-nodes-Pk1}.
\end{proof}

\begin{lemma}[number of Lagrange nodes of $\matPkd$]
  \label{l:num-lag-nodes-Pkd}
  \mbox{}\\
  Let~$d\geq1$.
  Let~$k\in\matN$.
  Let~$\famvertd{\vv}$ be $d+1$ affinely independent points in~$\matRd$.
  Let $\aalpha,\bbeta\in\calAkd$.\\
  Then, $\aalpha\neq\bbeta$ implies $\aa_\aalpha\neq\aa_\bbeta$.
  Thus, there are~$\binomkpdd$ distinct Lagrange nodes in~$\famnodekd{\aa}$.
\end{lemma}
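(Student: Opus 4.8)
The plan is to prove the stated implication in its contrapositive form, namely to show that the map $\aalpha \in \calAkd \longmapsto \aa_\aalpha \in \matRd$ is injective, and then to read off the number of distinct nodes as the cardinal of this injective domain. Following the two clauses of Definition~\ref{d:lag-nodes-Pkd}, I would split the argument into the cases $k=0$ and $k\geq1$, since the nodes are defined by different formulas there. The case $k=0$ is immediate: by Lemma~\ref{l:first-multi-ind-Akd} the index set $\calAod$ is the singleton $\{\zzero\}$, so there are no two distinct multi-indices to compare and the implication $\aalpha\neq\bbeta\Implies\aa_\aalpha\neq\aa_\bbeta$ holds vacuously; there is exactly one node $\aa_\zzero=\bisobarycv$, consistent with $\binom{d}{d}=1$.

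For $k\geq1$, I would assume $\aa_\aalpha=\aa_\bbeta$ and derive $\aalpha=\bbeta$. Using the defining formula~\eqref{e:lag-nodes-Pkd-1} and subtracting, the $\vv_0$ terms cancel and, by the field properties of~$\matR$, one obtains
\[
  \zzero = \aa_\aalpha - \aa_\bbeta
  = \sum_{i = 1}^d \frac{\alpha_i - \beta_i}{k} (\vv_i - \vv_0).
\]
Since $\famvertd{\vv}$ is affinely independent, Definition~\ref{d:aff-indep-family} gives that the family $(\vv_i - \vv_0)_{i\in[1..d]}$ is free in the {\vectorspace}~$\matRd$, so every coefficient vanishes, {\ie} $\frac{\alpha_i-\beta_i}{k}=0$, and hence $\alpha_i=\beta_i$ for all $i\in[1..d]$, {\ie} $\aalpha=\bbeta$. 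By contraposition, $\aalpha\neq\bbeta$ implies $\aa_\aalpha\neq\aa_\bbeta$, which is the first assertion.

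Finally, since $\aalpha\mapsto\aa_\aalpha$ is injective on~$\calAkd$, the set of distinct Lagrange nodes is in bijection with~$\calAkd$, whose cardinal is $\binomkpdd$ by Lemma~\ref{l:card-Akd}; this yields the count. The argument is essentially routine, and the only point deserving care---the main (minor) obstacle---is to treat the coefficients $\frac{\alpha_i-\beta_i}{k}$ correctly as real scalars (they may be negative, and the relevant subtraction is in~$\matR$, not in~$\matN$) so that freedom of the family in the real vector space~$\matRd$ applies, and to note that equality of all $d$ components indexed by~$[1..d]$ is exactly equality of the multi-indices $\aalpha$ and~$\bbeta$.
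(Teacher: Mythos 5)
Your proof is correct and follows essentially the same route as the paper: injectivity of $\aalpha\mapsto\aa_\aalpha$ from the affine independence of~$\famvertd{\vv}$, then the count via $\card(\calAkd)=\binomkpdd$ (Lemma~\ref{l:card-Akd}). The only cosmetic difference is that the paper obtains the injectivity by citing the uniqueness of the barycentric decomposition~\eqref{e:baryc-coor-2} with $j\eqdef0$ (Lemma~\ref{l:baryc-coor}), whereas you unfold that uniqueness into the underlying freedom argument for $(\vv_i-\vv_0)_{i\in[1..d]}$ — the substance is identical.
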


\begin{proof}
  Direct consequence of
  Definition~\thref{d:lag-nodes-Pkd},
  Lemma~\threfc{l:baryc-coor}{%
    uniqueness in~\eqref{e:baryc-coor-2} with $j\eqdef0$}, and
  Lemma~\thref{l:card-Akd}.
\end{proof}

\begin{lemma}[barycentric coordinates of Lagrange nodes of $\matPkd$]
  \label{l:baryc-coor-lag-nodes-Pkd}
  \mbox{}\hfill
  Let~$d\geq1$.
  Let~$k\in\matN$.\\
  Let~$\famvertd{\vv}$ be $d+1$ affinely independent points in~$\matRd$.
  Let~$\aalpha\in\calAkd$.
  Let~$i\in[0..d]$.\\
  Then, the barycentric coordinates of the Lagrange node~$\aa_\aalpha$ are
  \begin{align}
    \label{e:baryc-coor-lag-nodes-Pkd-0}
    (k = 0)&&
    &\barcvd_i (\aa_\zzero) = \frac{1}{d + 1} \in [0,1],&&\\
    \label{e:baryc-coor-lag-nodes-Pkd-1}
    (k \geq 1)&&
    &\left\{
      \begin{array}{rl}
        (i = 0) \qquad & \displaystyle
        \barcvd_0 (\aa_\aalpha) = 1 - \frac{\len{\aalpha}}{k} \in [0,1],\\
        (i \in [1..d]) \qquad & \displaystyle
        \barcvd_i (\aa_\aalpha) = \frac{\alpha_i}{k} \in [0,1].
      \end{array}
    \right.&&
  \end{align}

  Moreover, if $k\geq1$, then the Lagrange node is equivalently defined by
  \begin{equation}
    \label{e:baryc-coor-lag-nodes-Pkd-2}
    \aa_\aalpha =  \left(1 - \frac{\len{\aalpha}}{k} \right) \vv_0
    + \sum_{i = 1}^d \frac{\alpha_i}{k} \vv_i
    \quad \mbox{with }
    \left( 1 - \frac{\len{\aalpha}}{k} \right)
    + \sum_{i = 1}^d \frac{\alpha_i}{k} = 1.
  \end{equation}
\end{lemma}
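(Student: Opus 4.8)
The plan is to exhibit $\aa_\aalpha$ explicitly as a barycentric combination of the vertices $\famvertd{\vv}$ and then read off its barycentric coordinates by the uniqueness clause of Lemma~\ref{l:baryc-coor}. The affine independence hypothesis is exactly what guarantees that these coordinates are well-defined and unique, so that matching coefficients is a legitimate inference.

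First I would treat the case $k = 0$. By Definition~\ref{d:lag-nodes-Pkd} (\eqref{e:lag-nodes-Pkd-0}), we have $\aa_\zzero = \frac{1}{d+1} \sum_{i=0}^d \vv_i = \sum_{i=0}^d \frac{1}{d+1} \vv_i$, and $\sum_{i=0}^d \frac{1}{d+1} = 1$. This is precisely a decomposition of the shape \eqref{e:baryc-coor-1}, so the uniqueness in Lemma~\ref{l:baryc-coor} forces $\barcvd_i(\aa_\zzero) = \frac{1}{d+1}$ for every $i \in [0..d]$; the bound $\frac{1}{d+1} \in [0,1]$ follows from $d \geq 1$ and the ordered field properties of $\matR$, which yields \eqref{e:baryc-coor-lag-nodes-Pkd-0}.

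For $k \geq 1$, the key observation is that Definition~\ref{d:lag-nodes-Pkd} (\eqref{e:lag-nodes-Pkd-1}) already writes $\aa_\aalpha = \vv_0 + \sum_{i=1}^d \frac{\alpha_i}{k} (\vv_i - \vv_0)$, which is exactly the shape of the $j = 0$ decomposition \eqref{e:baryc-coor-2}. Since $\famvertd{\vv}$ is affinely independent, the family $(\vv_i - \vv_0)_{i \in [1..d]}$ is free (Definition~\ref{d:aff-indep-family}), so the coefficients in that decomposition are unique; matching terms gives $\barcvd_i(\aa_\aalpha) = \frac{\alpha_i}{k}$ for $i \in [1..d]$, and then $\barcvd_0(\aa_\aalpha) = 1 - \sum_{i=1}^d \frac{\alpha_i}{k} = 1 - \frac{\len{\aalpha}}{k}$ by Definition~\ref{d:len-multi-ind}. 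For the bounds I would use that $\aalpha \in \calAkd$, so $\len{\aalpha} \leq k$ yields $1 - \frac{\len{\aalpha}}{k} \in [0,1]$, and Lemma~\ref{l:ind-smaller-than-max-len} gives $0 \leq \alpha_i \leq k$, whence $\frac{\alpha_i}{k} \in [0,1]$; this establishes \eqref{e:baryc-coor-lag-nodes-Pkd-1}. The equivalent form \eqref{e:baryc-coor-lag-nodes-Pkd-2} is then immediate, either by substituting these coordinates into \eqref{e:baryc-coor-1}, or directly by rewriting $\vv_0 - \sum_{i=1}^d \frac{\alpha_i}{k} \vv_0 = (1 - \frac{\len{\aalpha}}{k}) \vv_0$ inside \eqref{e:lag-nodes-Pkd-1}; the sum-to-one identity is then a one-line computation.

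There is no substantial obstacle here, since the whole argument reduces to matching coefficients against a known decomposition. The only point requiring genuine care is to invoke the uniqueness part of Lemma~\ref{l:baryc-coor} rather than merely its existence, so that reading off the coordinates from the explicit combination is justified; this is precisely the step where the affine independence of $\famvertd{\vv}$ enters.
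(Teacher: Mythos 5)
Your proof is correct and follows essentially the same route as the paper: exhibit $\aa_\aalpha$ in the form of Lemma~\ref{l:baryc-coor} (decomposition~\eqref{e:baryc-coor-1} for $k=0$, and~\eqref{e:baryc-coor-2} with $j\eqdef0$ for $k\geq1$, which is literally what Definition~\ref{d:lag-nodes-Pkd} provides) and invoke uniqueness to read off the coordinates, with the bounds coming from $\len{\aalpha}\leq k$ and Lemma~\ref{l:ind-smaller-than-max-len}. The paper's proof is a terse citation list of exactly these ingredients, and your emphasis on the uniqueness clause (where affine independence enters) is the right one.
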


\begin{proof}
  Direct consequence of
  Definition~\thref{d:lag-nodes-Pkd},
  Definition~\threfc{d:isobaryc}{for $k=0$},
  Lemma~\threfc{l:baryc-coor}{%
    \eqref{e:baryc-coor-1} for $k=0$,
    \eqref{e:baryc-coor-1} and \eqref{e:baryc-coor-2} with $j\eqdef0$ for
    $k\geq1$, and uniqueness},
  Definition~\threfc{d:multi-ind-Akd-Ckd}{for $k\geq1$},
  Definition~\threfc{d:len-multi-ind}{for $k\geq1$},
  Lemma~\threfc{l:ind-smaller-than-max-len}{for $k\geq1$}, and
  \assume{ordered field properties of $\matR$
    (with $d+1>0$ for $k=0$, and $k>0$ for $k\geq1$)}.
\end{proof}

\begin{lemma}[vertices are Lagrange nodes of $\matPkd$]
  \label{l:vert-lag-nodes-Pkd}
  \mbox{}\\
  Let~$d\geq1$.
  Let~$k\geq1$.
  Let~$\famvertd{\vv}$ be $d+1$ points in~$\matRd$.
  Then, the vertices are nodes of~$\famnodeid{\aa}$,
  \begin{equation}
    \label{e:vert-lag-nodes-Pkd}
    \vv_0 = \aa_\zzero \AND
    \forall i \in [1..d],\quad \vv_i = \aa_{k \ee_i}.
  \end{equation}
\end{lemma}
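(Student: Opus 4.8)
The plan is to apply Definition~\ref{d:lag-nodes-Pkd} directly: since $k\geq1$, the Lagrange nodes are given by formula~\eqref{e:lag-nodes-Pkd-1}, namely $\aa_\aalpha = \vv_0 + \sum_{i=1}^d \frac{\alpha_i}{k}(\vv_i - \vv_0)$ for every $\aalpha\in\calAkd$. First I would check that the two multi-indices appearing in the statement are admissible, {\ie} belong to~$\calAkd$: from Definition~\ref{d:canon-fam} and Definition~\ref{d:len-multi-ind} one has $\len{\zzero}=0\leq k$ and $\len{k\ee_i}=k\leq k$, so both $\zzero$ and $k\ee_i$ lie in~$\calAkd$ (indeed in~$\calCod$ and~$\calCkd$ respectively), and the formula applies to each of them.

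Next I would evaluate the formula at each one. For $\aalpha\eqdef\zzero$, every component $\alpha_i$ is zero, so each coefficient $\frac{\alpha_i}{k}$ vanishes and the whole sum drops out, giving $\aa_\zzero = \vv_0$. For $\aalpha\eqdef k\ee_i$ with $i\in[1..d]$, Definition~\ref{d:canon-fam} gives $(k\ee_i)_j = k\,\kron{i}{j}$, so the only nonzero term of the sum is the $i$-th one, with coefficient $\frac{k}{k}$. Since $k\geq1$, field properties of~$\matR$ yield $\frac{k}{k}=1$, and thus $\aa_{k\ee_i} = \vv_0 + (\vv_i - \vv_0) = \vv_i$. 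Together these establish the two claimed identities, showing that the vertices are indeed among the Lagrange nodes.

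There is no genuine obstacle here: the statement is an immediate specialization of the definition of the Lagrange nodes. The only points requiring a word of care are the admissibility check $\zzero, k\ee_i\in\calAkd$ (so that~\eqref{e:lag-nodes-Pkd-1} is applicable at these indices) and the use of the hypothesis $k\geq1$ to legitimate the simplification $\frac{k}{k}=1$; this is precisely why the special case $k=0$, where the single node is set at the isobarycenter by~\eqref{e:lag-nodes-Pkd-0}, is excluded from the statement.
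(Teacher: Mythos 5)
Your proof is correct and follows essentially the same route as the paper's: a direct application of Definition~\ref{d:lag-nodes-Pkd} for $k\geq1$, with the admissibility check $\len{k\ee_i}=k\leq k$ (hence $k\ee_i\in\calAkd$) and the Kronecker-delta computation reducing the sum to the single term $\frac{k}{k}(\vv_i-\vv_0)$. Nothing is missing.
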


\begin{proof}
  \proofpar{Case $i=0$}
  Direct consequence of
  Definition~\threfc{d:lag-nodes-Pkd}{with $k>0$}.

  \proofparskip{Case $i\in[1..d]$}
  From
  Definition~\thref{d:canon-fam},
  Definition~\thref{d:len-multi-ind},
  Definition~\thref{d:multi-ind-Akd-Ckd}, and
  Definition~\threfc{d:lag-nodes-Pkd}{with $k>0$},
  we have $\len{k\ee_i}=k\sum_{j=1}^d\kron{i}{j}=k\leq k$, thus
  $k\ee_i\in\calAkd$, and
  \[
    \aa_{k \ee_i}
    = \vv_0 + \sum_{j = 1}^d \frac{k\kron{i}{j}}{k} (\vv_j - \vv_0)
    = \vv_0 + (\vv_i - \vv_0) = \vv_i.
  \]

  \medskip\noindent
  Therefore, the identities hold.
\end{proof}

\begin{lemma}[Lagrange nodes of $\matPid$ are vertices]
  \label{l:lag-nodes-Pid-are-vert}
  \mbox{}\hfill
  Let~$d\geq1$.
  Let~$\famvertd{\vv}$ be $d+1$\\ affinely independent points in~$\matRd$.
  Then, the nodes of~$\famnodeid{\aa}$ are the vertices,
  \begin{equation}
    \label{e:lag-nodes-Pid-are-vert}
    \aa_\zzero = \vv_0 \AND
    \forall i \in [1..d],\quad \aa_{\ee_i} = \vv_i.
  \end{equation}
\end{lemma}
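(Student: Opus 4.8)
The plan is to observe that this statement is nothing but the case $k\eqdef1$ of Lemma~\ref{l:vert-lag-nodes-Pkd}. That lemma states, for any $d\geq1$, any $k\geq1$, and any $d+1$ points $\famvertd{\vv}$ in $\matRd$, that $\vv_0=\aa_\zzero$ and, for all $i\in[1..d]$, $\vv_i=\aa_{k\ee_i}$. Taking $k\eqdef1$ is legitimate since $1\geq1$, and then $k\ee_i=\ee_i$, so the conclusion reads exactly $\aa_\zzero=\vv_0$ and $\aa_{\ee_i}=\vv_i$ for all $i\in[1..d]$, which is the desired equality.

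First I would invoke Lemma~\ref{l:vert-lag-nodes-Pkd} with $k\eqdef1$ to obtain both families of identities at once. I would also remark that the hypothesis of affine independence assumed here is stronger than necessary: Lemma~\ref{l:vert-lag-nodes-Pkd} requires only $d+1$ arbitrary points, so affine independence is not actually used in this proof (it is presumably retained in the statement only for coherence with the surrounding results on $\matPid$ nodes). If a self-contained argument were preferred, the same conclusion falls out directly from Definition~\ref{d:lag-nodes-Pkd} with $k\eqdef1$: the formula $\aa_\aalpha=\vv_0+\sum_{i=1}^d\frac{\alpha_i}{1}(\vv_i-\vv_0)$ gives $\aa_\zzero=\vv_0$ and, for $\aalpha\eqdef\ee_j$, $\aa_{\ee_j}=\vv_0+(\vv_j-\vv_0)=\vv_j$.

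There is essentially no obstacle here: the result is a one-line corollary of the preceding lemma. The only point worth verifying is the bookkeeping that the index set of $\famnodeid{\aa}$ is $\calAid=\{\zzero,\ee_1,\ldots,\ee_d\}$ (Lemma~\ref{l:first-multi-ind-Akd}), so that the nodes listed, namely $\aa_\zzero$ and the $\aa_{\ee_i}$ for $i\in[1..d]$, do exhaust the whole family, confirming that every Lagrange node of $\matPid$ is a vertex.
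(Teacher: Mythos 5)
Your proof is correct and matches the paper's own argument exactly: it is a direct consequence of Lemma~\ref{l:vert-lag-nodes-Pkd} with $k\eqdef1$ together with Lemma~\ref{l:first-multi-ind-Akd} to identify the index set $\calAid$. Your side remarks (that affine independence is not actually used, and the direct computation from Definition~\ref{d:lag-nodes-Pkd}) are accurate but not needed.
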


\begin{proof}
  Direct consequence of
  Lemma~\threfc{l:first-multi-ind-Akd}{with $k=1$}, and
  Lemma~\threfc{l:vert-lag-nodes-Pkd}{with $k=1$}.
\end{proof}

\subsection{Sub-vertices and sub-nodes}
\label{ss:sub-vertices-sub-nodes}

\begin{definition}[sub-vertices of Lagrange nodes of $\matPkd$]
  \label{d:sub-vert-lag-nodes-Pkd}
  \mbox{}\\
  Let~$d\geq1$.
  Let~$k\geq1$.
  Let~$\famvertd{\vv}$ be $d+1$ points in~$\matRd$.
  Let~$i\in[0..d]$.
  The {\em sub-vertices} of the Lagrange nodes of~$\matPkd$ with respect
  to~$\vv_0$ are denoted~$\famvertd{\uvv}=(\uvv_i)_{i\in[0..d]}$, and are
  defined by
  \begin{align}
    \label{e:sub-vert-lag-nodes-Pkd-0}
    (i = 0)&&
    \uvv_0 &\eqdef \vv_0,&&\\
    \label{e:sub-vert-lag-nodes-Pkd-1}
    (i \in [1..d])&&
    \uvv_i &\eqdef \aa_{(k - 1) \ee_i}.&&
  \end{align}
\end{definition}

\begin{lemma}[equivalent definition of sub-vertices of Lagrange nodes of
    $\matPkd$]
  \label{l:equiv-def-sub-vert-lag-nodes-Pkd}
  \mbox{}\\
  Let~$d\geq1$.
  Let~$k\geq1$.
  Let~$\famvertd{\vv}$ be $d+1$ points in~$\matRd$.
  Let $i\in[0..d]$.
  Then, we have
  \begin{equation}
    \label{e:equiv-def-sub-vert-lag-nodes-Pkd}
    \uvv_i = \frac{1}{k} \vv_0 + \frac{k - 1}{k} \vv_i.
  \end{equation}
\end{lemma}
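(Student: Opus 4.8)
The plan is to prove the identity by unfolding Definition~\ref{d:sub-vert-lag-nodes-Pkd} and Definition~\ref{d:lag-nodes-Pkd}, distinguishing the two cases $i=0$ and $i\in[1..d]$ that already structure the definition of the sub-vertices. In both cases the claim reduces to a direct computation using the field properties of~$\matR$; no affine independence of~$\famvertd{\vv}$ and no unisolvence argument are needed here, so the two cases can be dispatched quickly.

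For the case $i=0$, I would simply note that $\uvv_0\eqdef\vv_0$ by Definition~\ref{d:sub-vert-lag-nodes-Pkd}, while the right-hand side is $\frac{1}{k}\vv_0+\frac{k-1}{k}\vv_0=\left(\frac{1}{k}+\frac{k-1}{k}\right)\vv_0=\vv_0$, the coefficient summing to~$1$ precisely because $k\geq1$. Hence the two sides coincide, and the general node formula is not invoked in this case.

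For the case $i\in[1..d]$, the key preliminary step is to check that the multi-index $(k-1)\ee_i$ lies in~$\calAkd$, so that formula~\eqref{e:lag-nodes-Pkd-1} applies: from Definition~\ref{d:canon-fam}, Definition~\ref{d:len-multi-ind} and Definition~\ref{d:multi-ind-Akd-Ckd}, its length is $\len{(k-1)\ee_i}=\sum_{j=1}^d(k-1)\kron{i}{j}=k-1\leq k$. Then, applying Definition~\ref{d:lag-nodes-Pkd} with $k\geq1$ and using that the $j$-th component of $(k-1)\ee_i$ is $(k-1)\kron{i}{j}$, I would compute
\[
  \uvv_i = \aa_{(k-1)\ee_i}
  = \vv_0 + \sum_{j=1}^d \frac{(k-1)\kron{i}{j}}{k}(\vv_j-\vv_0)
  = \vv_0 + \frac{k-1}{k}(\vv_i-\vv_0),
\]
and finally rearrange, using the additive group properties of~$\matR$, into $\left(1-\frac{k-1}{k}\right)\vv_0+\frac{k-1}{k}\vv_i=\frac{1}{k}\vv_0+\frac{k-1}{k}\vv_i$, which is the desired expression.

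The computation is entirely routine, and there is no genuine obstacle. The only two points deserving attention are the membership check $(k-1)\ee_i\in\calAkd$ (needed to legitimately invoke the node formula~\eqref{e:lag-nodes-Pkd-1}) and the separate definition of $\uvv_0$: for $i=0$ the general formula is bypassed, and the identity holds simply because the two barycentric coefficients $\frac1k$ and $\frac{k-1}k$ add up to~$1$.
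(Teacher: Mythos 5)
Your proof is correct and follows essentially the same computation as the paper's: check that $(k-1)\ee_i\in\calAkd$, unfold the node definition, and rearrange the coefficients. The only difference is in how the case $i\in[1..d]$ is justified: the paper cites the barycentric form~\eqref{e:baryc-coor-lag-nodes-Pkd-2} of Lemma~\ref{l:baryc-coor-lag-nodes-Pkd}, whereas you compute directly from~\eqref{e:lag-nodes-Pkd-1} in Definition~\ref{d:lag-nodes-Pkd}. Your route is arguably slightly cleaner here, because Lemma~\ref{l:baryc-coor-lag-nodes-Pkd} formally carries an affine-independence hypothesis on~$\famvertd{\vv}$ that the present statement does not assume (the needed identity is a pure coefficient rearrangement that never uses it), so working straight from the definition avoids invoking a lemma under weaker hypotheses than it is stated for.
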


\begin{proof}
  Direct consequence of
  Definition~\thref{d:sub-vert-lag-nodes-Pkd},
  \assume{ordered field properties of~$\matR$ (with $k>0$),
    thus $\frac1k+\frac{k-1}{k}=1$ (case $i=0$),
    $1-\frac{k-1}{k}=\frac{1}{k}$, and $0\leq k-1<k$},
  Definition~\thref{d:canon-fam},
  Definition~\threfc{d:len-multi-ind}{%
    thus $\len{(k-1)\ee_i}=k-1$},
  Definition~\threfc{d:multi-ind-Akd-Ckd}{thus $(k-1)\ee_i\in\calAkd$}, and
  Lemma~\threfc{l:baryc-coor-lag-nodes-Pkd}{%
    \eqref{e:baryc-coor-lag-nodes-Pkd-2}}.
\end{proof}

\begin{remark}
  \label{r:sub-vertices}
  An illustration of the sub-vertices (see
  Definition~\ref{d:sub-vert-lag-nodes-Pkd}) and of $\matPkmid$~sub-nodes (see
  Lemma~\ref{l:Pkm1d-sub-nodes-sub-vert-are-some-nodes-Pkd}) in dimension $d=3$
  and for $k=3$ can be seen in Figure~\ref{f:sub-vert-sub-nodes_d3-k3}.

  The sub-vertices~$\famvertd{\uvv}$ are~$\vv_0$ and the nodes of~$\matPkd$
  that are the closest to the vertices along the axes.
  The $\matPkmid$~sub-nodes are the nodes of~$\matPkd$, except the nodes
  indexed by~$\aalpha\in\calCkd$ (that lie in the face~$\Hvd_0$).
  These sub-vertices and sub-nodes are used in the step~2 of the proof of the
  unisolvence Lemma~\ref{l:lag-lin-forms-Pkd-inj}.
\end{remark}

\begin{figure}[htb]
  \centering
  \resizebox{0.5\linewidth}{!}{
    \begin{tikzpicture}[scale=4,math3d] 

        \def\kk{3}

        \def\colo{magenta}
        \def\coli{darkgreen}
        \def\colii{red}
        \def\coliii{blue}

	\coordinate (A) at (0,0,0);
	\coordinate (B) at  ($ (A) + (1,0,0) $);
	\coordinate (C) at  ($ (A) + (0,1,0) $);
	\coordinate (D) at  ($ (A) + (0,0,1) $);
        \draw (A) node[above right] {$\uvv_0$} ;  
        \draw (B) node[above left] {$\vv_1$} ; 
        \draw (C) node[above=2pt] {$\vv_2$} ; 
        \draw (D) node[right=1.5pt] {$\vv_3$} ; 
	\draw[line width=1.0pt,rounded corners=0.5pt] (A) -- (B) -- (D) -- cycle;
	\draw[line width=1.0pt,rounded corners=0.5pt] (A) -- (C) -- (B) -- cycle;
	\draw[line width=1.6pt,rounded corners=0.5pt] (A) -- (C) -- (D) -- cycle;

        \pgfmathparse{1-1/\kk}\let\kkp\pgfmathresult
        \coordinate (B1) at ($ \kkp*(A) + 1/\kk*(B) $) ;
        \coordinate (C1) at ($ \kkp*(A) + 1/\kk*(C) $) ;
        \coordinate (D1) at ($ \kkp*(A) + 1/\kk*(D) $) ;
        \coordinate (B2) at ($ \kkp*(A) + 2/\kk*(B) $) ;
        \coordinate (C2) at ($ \kkp*(A) + 2/\kk*(C) $) ;
        \coordinate (D2) at ($ \kkp*(A) + 2/\kk*(D) $) ;

        \draw[color=\coli,line width=0.8pt]   (B1) -- (C1) -- (D1) -- cycle;
        \draw[color=\colii,line width=0.8pt]  (B2) -- (C2) -- (D2) -- cycle;
        \draw[color=\coliii,line width=0.4pt]  (B) -- (C) -- (D) -- cycle;

        \coordinate (Bp) at  ($ (B) - 1/\kk*(1,0,0) $) ;
        \coordinate (Cp) at  ($ (C) - 1/\kk*(0,1,0) $) ;
        \coordinate (Dp) at  ($ (D) - 1/\kk*(0,0,1) $) ;
        \draw (Bp) node[above] {$\uvv_1$} ; 
        \draw (Cp) node[above right] {$\uvv_2$} ; 
        \draw (Dp) node[above right] {$\uvv_3$} ; 
        
        \fill[color=\colo] (A) circle (1.0pt);
        \newcount\z
        \foreach \x in {1,0}  {
          \pgfmathparse{1-\x}\let\YY\pgfmathresult
          \foreach \y in {0,...,\YY} {
            \pgfmathsetcount{\z}{1-\x-\y} 
            \coordinate (Axyz) at  ($ (A) + 1/\kk*(\x,\y,\z) $) ;
            \node[color=\coli,below] (Nxyz) at (Axyz) {$\uaa_{(\x,\y,\the\z)}$};
            \fill[color=\coli] (Axyz) circle (1.0pt);
          }
        }
        \newcount\z
        \foreach \x in {2,1,...,0}  {
          \pgfmathparse{2-\x}\let\YY\pgfmathresult
          \foreach \y in {0,...,\YY} {
            \pgfmathsetcount{\z}{2-\x-\y} 
            \coordinate (Axyz) at  ($ (A) + 1/\kk*(\x,\y,\z) $) ;
            \node[color=\colii,below] (Nxyz) at (Axyz) {$\uaa_{(\x,\y,\the\z)}$};
            \fill[color=\colii] (Axyz) circle (1.0pt);
          }
        }
        \newcount\z
        \foreach \x in {3,2,...,0}  {
          \pgfmathparse{3-\x}\let\YY\pgfmathresult
          \foreach \y in {0,...,\YY} {
            \pgfmathsetcount{\z}{3-\x-\y} 
            \coordinate (Axyz) at  ($ (A) + 1/\kk*(\x,\y,\z) $) ;
            \fill[color=\coliii] (Axyz) circle (0.6pt);
          }
        }

\end{tikzpicture}
  }
  \caption[Sub-vertices and Lagrange sub-nodes]{%
    Sub-vertices~$\famvertd{\uvv}$ with respect to $\vv_0$ and Lagrange
    $\matPkmid$~sub-nodes $\famnodekmid{\uaa}$, in the case $d=k=3$.\\
    The Lagrange~$\matPkd$ nodes~$\famnodekd{\aa}$ are set in the tetrahedron
    defined by the vertices $\famvert{\vv}{3}=§(\vv_0,\vv_1,\vv_2,\vv_3)$.\\
    The sub-vertices are
    $\famvert{\uvv}{3}\eqdef(\vv_0,\aa_{2\ee_1},\aa_{2\ee_2},\aa_{2\ee_3})$,
    see Definition~\ref{d:sub-vert-lag-nodes-Pkd} and
    Remark~\ref{r:sub-vertices}.\\
    The $\matPkmid$~sub-nodes~$\famnode{\uaa}{\calAiidiii}$ are defined with
    respect to the tetrahedron whose vertices are~$\famvert{\uvv}{3}$.
    These are the nodes~$\famnode{\aa}{\calAiiidiii}$, except the (small blue)
    nodes corresponding to~$k=3$ (with indices in~$\calCiiidiii$), see
    Lemma~\ref{l:Pkm1d-sub-nodes-sub-vert-are-some-nodes-Pkd}.\\
    Thus, geometrically, passing from~$\matPkd$ to~$\matPkmid$ amounts to
    remove the face~$\Hvd_0$ from the tetrahedron defined by~$\famvertd{\vv}$.}
  \label{f:sub-vert-sub-nodes_d3-k3}
\end{figure}

\begin{lemma}[sub-vertices are affinely independent]
  \label{l:sub-vert-aff-indep}
  \mbox{}\\
  Let~$d\geq1$.
  Let~$k\geq2$.
  Let~$\famvertd{\vv}$ be $d+1$ affinely independent points in~$\matRd$.\\
  Then, the sub-vertices~$\famvertd{\uvv}$ are affinely independent.
\end{lemma}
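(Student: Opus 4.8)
The plan is to reduce affine independence of the sub-vertices to the affine independence of the original vertices, which holds by hypothesis. By Definition~\ref{d:aff-indep-family}, proving that $\famvertd{\uvv}$ is affinely independent amounts to showing that the family $(\uvv_i - \uvv_0)_{i\in[1..d]}$ is free in the {\vectorspace}~$\matRd$. So the first step is to obtain a clean expression for each difference vector $\uvv_i - \uvv_0$.

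To this end, I would invoke the equivalent formula from Lemma~\ref{l:equiv-def-sub-vert-lag-nodes-Pkd}, namely $\uvv_i = \frac{1}{k}\vv_0 + \frac{k-1}{k}\vv_i$ for all $i\in[0..d]$ (in particular $\uvv_0 = \vv_0$, consistent with Definition~\ref{d:sub-vert-lag-nodes-Pkd}). A direct computation using the field properties of~$\matR$ then gives, for all $i\in[1..d]$,
\begin{equation*}
  \uvv_i - \uvv_0
  = \frac{1}{k}\vv_0 + \frac{k-1}{k}\vv_i - \vv_0
  = \frac{k-1}{k}(\vv_i - \vv_0).
\end{equation*}
Thus each sub-vertex difference is simply a scalar multiple of the corresponding vertex difference, with the \emph{same} scalar $\frac{k-1}{k}$ for every index~$i$.

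The key observation is that this common scalar is nonzero: since $k\geq 2$, we have $k-1\geq 1>0$ and $k>0$, so $\frac{k-1}{k}\neq 0$. (This is precisely where the hypothesis $k\geq 2$ is used; for $k=1$ all sub-vertices would collapse onto~$\vv_0$.) Now, by Definition~\ref{d:aff-indep-family} applied to the affinely independent family $\famvertd{\vv}$, the family $(\vv_i - \vv_0)_{i\in[1..d]}$ is free in~$\matRd$. Multiplying each member of a free family by a common nonzero scalar preserves freedom, so $(\uvv_i - \uvv_0)_{i\in[1..d]} = \left(\frac{k-1}{k}(\vv_i - \vv_0)\right)_{i\in[1..d]}$ is free as well.

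Therefore, again by Definition~\ref{d:aff-indep-family}, the sub-vertices $\famvertd{\uvv}$ are affinely independent. I do not anticipate any genuine obstacle here: the argument is essentially a one-line computation followed by the elementary fact that scaling a free family by a nonzero constant keeps it free. The only point requiring care is to record explicitly that $\frac{k-1}{k}\neq 0$, justifying it from the ordered field properties of~$\matR$ together with $k\geq 2$, so that the freedom of the scaled family is legitimate.
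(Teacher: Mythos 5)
Your proposal is correct and follows essentially the same route as the paper: both use Lemma~\ref{l:equiv-def-sub-vert-lag-nodes-Pkd} to write $\uvv_i-\uvv_0=\frac{k-1}{k}(\vv_i-\vv_0)$ and then deduce freedom from that of $(\vv_i-\vv_0)_{i\in[1..d]}$ together with $\frac{k-1}{k}\neq0$. The paper merely unfolds your ``scaling a free family by a nonzero scalar preserves freedom'' step into an explicit vanishing linear combination and the zero-product property, which is the same argument.
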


\begin{proof}
  Let~$(\mu_i)_{i\in[1..d]}$ such that $\sum_{i=1}^d\mu_i(\uvv_i-\uvv_0)=0$.
  Then, from
  Lemma~\thref{l:equiv-def-sub-vert-lag-nodes-Pkd}, and
  \assume{field properties of~$\matR$ (with $k\neq0$)},
  \[
    \sum_{i = 1}^d \mu_i (\uvv_i - \uvv_0)
    = \sum_{i = 1}^d \mu_i \left(
      \frac{k - 1}{k} \vv_i + \left( \frac{1}{k} - 1 \right) \vv_0 \right)
    = \sum_{i = 1}^d \mu_i \frac{k - 1}{k} (\vv_i - \vv_0) = 0.
  \]
  Thus, from
  Definition~\thref{d:aff-indep-family},
  \assume{the zero-product property in~$\matR$
    (with $\frac{k-1}{k}\neq0$)}, and
  \assume{the definition of freedom},
  we have $\mu_i=0$ for all $i\in[1..d]$, thus $(\uvv_i-\uvv_0)_{i\in[1..d]}$
  is a free family, and $\famvertd{\uvv}$ is affinely independent.
\end{proof}

\begin{lemma}[$\matPkmid$ sub-nodes of sub-vertices are some nodes of
  $\matPkmid$]
  \label{l:Pkm1d-sub-nodes-sub-vert-are-some-nodes-Pkd}
  \mbox{}\\
  Let~$d\geq1$.
  Let~$k\geq2$.
  Let~$\famvertd{\vv}$ be $d+1$ affinely independent points in~$\matRd$.
  The Lagrange nodes of~$\matPkmid$ with respect to the
  sub-vertices~$\famvertd{\uvv}$ are called {\em $\matPkmid$~sub-nodes} and
  denoted as~$\famnodekmid{\uaa}$.\\
  Then, these sub-nodes are some of the nodes~$\famnodekd{\aa}$ of $\matPkd$
  (defined with respect to $\famvertd{\vv}$),
  \begin{equation}
    \label{e:Pkm1d-sub-nodes-sub-vert-are-some-nodes-Pkd}
    \forall \aalpha \in \calAkmid,\qquad \uaa_\aalpha = \aa_\aalpha.
  \end{equation}
\end{lemma}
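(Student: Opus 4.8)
The plan is to unfold both sides from the definition of Lagrange nodes and reduce the claim to an elementary arithmetic identity relating the degrees $k$ and $k-1$. Since $k\geq2$, we have $k-1\geq1$, and by Lemma~\ref{l:sub-vert-aff-indep} the sub-vertices $\famvertd{\uvv}$ are affinely independent, so the Lagrange nodes $\famnodekmid{\uaa}$ of $\matPkmid$ with respect to $\famvertd{\uvv}$ are well-defined by Definition~\ref{d:lag-nodes-Pkd} (in its $k\geq1$ case, applied with degree $k-1$). Moreover, as $(\calAkd)_{k\in\matN}$ is increasing (Lemma~\ref{l:Ckd-layers-Akd}), every $\aalpha\in\calAkmid$ also lies in $\calAkd$, so the right-hand side $\aa_\aalpha$ denotes a genuine Lagrange node of $\matPkd$.

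First I would fix $\aalpha\in\calAkmid$ and write, using Definition~\ref{d:lag-nodes-Pkd} for degree $k-1$ and the sub-vertices (together with $\uvv_0=\vv_0$),
\[
  \uaa_\aalpha = \uvv_0 + \sum_{i=1}^d \frac{\alpha_i}{k-1}(\uvv_i - \uvv_0).
\]
Next I would invoke Lemma~\ref{l:equiv-def-sub-vert-lag-nodes-Pkd}, which gives $\uvv_i=\frac1k\vv_0+\frac{k-1}{k}\vv_i$, hence $\uvv_i-\uvv_0=\frac{k-1}{k}(\vv_i-\vv_0)$ for all $i\in[1..d]$. Substituting and using the cancellation $\frac{\alpha_i}{k-1}\cdot\frac{k-1}{k}=\frac{\alpha_i}{k}$ (valid since $k-1\neq0$) yields
\[
  \uaa_\aalpha = \vv_0 + \sum_{i=1}^d \frac{\alpha_i}{k}(\vv_i-\vv_0) = \aa_\aalpha,
\]
where the last equality is exactly Definition~\ref{d:lag-nodes-Pkd} for degree $k$.

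The computation itself is routine; the only genuine care needed is bookkeeping. I must make sure to apply the node definition with the correct degree ($k-1$, not $k$) and the correct family of vertices (the sub-vertices $\uvv$, not $\vv$), to record that affine independence of the sub-vertices (Lemma~\ref{l:sub-vert-aff-indep}) is what licenses speaking of their Lagrange nodes at all, and to justify that $\aalpha\in\calAkmid\subseteq\calAkd$ so that both sides refer to points indexed over compatible multi-index sets. There is no serious obstacle beyond keeping these index ranges and the degree shift straight.
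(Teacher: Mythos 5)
Your proof is correct and follows essentially the same route as the paper: both arguments substitute the expression $\uvv_i=\frac{1}{k}\vv_0+\frac{k-1}{k}\vv_i$ from Lemma~\ref{l:equiv-def-sub-vert-lag-nodes-Pkd} into the degree-$(k-1)$ node formula and cancel $\frac{\alpha_i}{k-1}\cdot\frac{k-1}{k}=\frac{\alpha_i}{k}$, relying on Lemma~\ref{l:sub-vert-aff-indep} and on $\calAkmid\subset\calAkd$. The only (harmless) difference is that the paper computes in the barycentric form of Lemma~\ref{l:baryc-coor-lag-nodes-Pkd}, which forces it to re-collect the coefficient of $\vv_0$, whereas your use of the displaced-vector form of Definition~\ref{d:lag-nodes-Pkd} together with $\uvv_i-\uvv_0=\frac{k-1}{k}(\vv_i-\vv_0)$ makes that bookkeeping unnecessary.
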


\begin{proof}
  Let $\aalpha\in\calAkmid$.
  Then, from
  Lemma~\thref{l:sub-vert-aff-indep},
  Lem\-ma~\threfc{l:baryc-coor-lag-nodes-Pkd}{%
    \eqref{e:baryc-coor-lag-nodes-Pkd-2},
    first for $\famnodekmid{\uaa}$ with $k-1\geq1$ and $\famvertd{\uvv}$,
    then for $\famnodekd{\aa}$ with $k\geq1$ and $\famvertd{\vv}$},
  Lemma~\thref{l:equiv-def-sub-vert-lag-nodes-Pkd},
  \assume{field properties of~$\matR$ (with $k\geq2$, thus $k-1\neq0$)},
  Definition~\thref{d:len-multi-ind}, and
  Lemma~\threfc{l:Ckd-layers-Akd}{thus $\aalpha\in\calAkd$},
  we have
  \begin{align*}
    \uaa_\aalpha
    &= \left( 1 - \frac{\len{\aalpha}}{k - 1} \right) \uvv_0
      + \sum_{i = 1}^d \frac{\alpha_i}{k - 1} \uvv_i
    = \left( 1 - \frac{\len{\aalpha}}{k - 1} \right) \vv_0
      + \sum_{i = 1}^d \frac{\alpha_i}{k - 1}
      \left( \frac{1}{k} \vv_0 + \frac{k - 1}{k} \vv_i \right)\\
    &= \left( 1 - \frac{\len{\aalpha}}{k - 1}
      + \frac{\len{\aalpha}}{(k - 1) k} \right) \vv_0
      + \sum_{i = 1}^d \frac{\alpha_i}{k} \vv_i
    = \left( 1 - \frac{\len{\aalpha}}{k} \right) \vv_0
      + \sum_{i = 1}^d \frac{\alpha_i}{k} \vv_i = \aa_\aalpha.
  \end{align*}
\end{proof}

\subsection{Lagrange nodes for the reference element}
\label{ss:lag-nodes-ref}

\begin{lemma}[reference Lagrange nodes of $\matPkd$]
  \label{l:lag-nodes-Pkd-ref}
  \mbox{}\hfill
  Let~$d\geq1$.
  Let~$k\in\matN$.\\
  The {\em reference Lagrange nodes of~$\matPkd$} are
  denoted~$\famnodekd{\haa}$, and are defined as the Lagrange nodes
  of~$\matPkd$ for the reference vertices.
  Let~$\aalpha\in\calAkd$.
  Then, we have
  \begin{align}
    \label{e:lag-nodes-Pkd-ref-0}
    (k = 0)&&
    \haa_\zzero &= \hbisobarycd,&&\\
    \label{e:lag-nodes-Pkd-ref-1}
    (k \geq 1)&&
    \haa_\aalpha &=
    \hvv_0 + \sum_{i = 1}^d \frac{\alpha_i}{k} (\hvv_i - \hvv_0)
    = \sum_{i = 1}^d \frac{\alpha_i}{k} \ee_i&&\\
    \nonumber
    && &= \left( 1 - \frac{\len{\aalpha}}{k} \right) \hvv_0
    + \sum_{i = 1}^d \frac{\alpha_i}{k} \hvv_i
    \quad \mbox{with }
    \left( 1 - \frac{\len{\aalpha}}{k} \right)
    + \sum_{i = 1}^d \frac{\alpha_i}{k} = 1.&&
  \end{align}
\end{lemma}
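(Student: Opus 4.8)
The plan is to recognize that this lemma is merely the specialization of the general Lagrange-node definition (Definition~\ref{d:lag-nodes-Pkd}) to the reference vertices $\famvertd{\hvv}$, combined with the explicit values $\hvv_0=\zzero$ and $\hvv_i=\ee_i$ for $i\in[1..d]$ supplied by Definition~\ref{d:fam-ref-aff-pts}. First I would record that, by construction, the reference Lagrange nodes $\famnodekd{\haa}$ are nothing but the nodes $\famnodekd{\aa}$ of Definition~\ref{d:lag-nodes-Pkd} evaluated at the reference vertices, and that these reference vertices are affinely independent by Lemma~\ref{l:ref-affine-vert-is-affinely-indep}, so that the barycentric results are applicable.

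For the case $k=0$, I would read off~\eqref{e:lag-nodes-Pkd-0}: it gives $\haa_\zzero=\bisobaryc{\famvertd{\hvv}}$, the isobarycenter of the reference vertices, which is exactly $\hbisobarycd$ by Lemma~\ref{l:ref-isobaryc}. This establishes~\eqref{e:lag-nodes-Pkd-ref-0} at once. For the case $k\geq1$, I would start from~\eqref{e:lag-nodes-Pkd-1}, namely $\haa_\aalpha=\hvv_0+\sum_{i=1}^d\frac{\alpha_i}{k}(\hvv_i-\hvv_0)$, which is the first asserted equality. Substituting $\hvv_0=\zzero$ and $\hvv_i=\ee_i$ from Definition~\ref{d:fam-ref-aff-pts} and using the vector-space axioms of $\matRd$ collapses this to $\sum_{i=1}^d\frac{\alpha_i}{k}\ee_i$, giving the second equality. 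For the final barycentric expression with the coefficients summing to one, I would invoke~\eqref{e:baryc-coor-lag-nodes-Pkd-2} of Lemma~\ref{l:baryc-coor-lag-nodes-Pkd} (legitimate since the reference vertices are affinely independent), which rewrites $\aa_\aalpha$ as $\left(1-\frac{\len{\aalpha}}{k}\right)\vv_0+\sum_{i=1}^d\frac{\alpha_i}{k}\vv_i$; specializing the vertices to the reference ones yields precisely the displayed form.

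There is essentially no hard step here: the statement is a direct consequence of facts already proved, and the entire content is the substitution $\hvv_0=\zzero$, $\hvv_i=\ee_i$ together with the identification of the $k=0$ node with the reference isobarycenter. The only points deserving a moment of care are checking that the affine-independence hypothesis of Lemma~\ref{l:baryc-coor-lag-nodes-Pkd} is met (it is, via Lemma~\ref{l:ref-affine-vert-is-affinely-indep}) and that the convention for $\hbisobarycd$ in Lemma~\ref{l:ref-isobaryc} agrees with the isobarycenter $\bisobaryc{\famvertd{\hvv}}$ appearing in the $k=0$ clause; both match by Definition~\ref{d:isobaryc}. The remaining manipulations are routine algebra in $\matRd$.
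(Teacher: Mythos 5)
Your proof is correct and follows essentially the same route as the paper's: specialize Definition~\ref{d:lag-nodes-Pkd} to the reference vertices, use Lemma~\ref{l:ref-isobaryc} for $k=0$, and combine Lemma~\ref{l:baryc-coor-lag-nodes-Pkd}~\eqref{e:baryc-coor-lag-nodes-Pkd-2} with Definition~\ref{d:fam-ref-aff-pts} for $k\geq1$. Your explicit check of the affine-independence hypothesis via Lemma~\ref{l:ref-affine-vert-is-affinely-indep} is a small point of extra care that the paper's citation list leaves implicit.
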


\begin{proof}
  Direct consequence of
  Definition~\thref{d:lag-nodes-Pkd},
  Lemma~\threfc{l:ref-isobaryc}{for $k=0$},
  Lemma~\threfc{l:baryc-coor-lag-nodes-Pkd}{%
    \eqref{e:baryc-coor-lag-nodes-Pkd-2} for $k\geq1$}, and
  Definition~\threfc{d:fam-ref-aff-pts}{for $k\geq1$}.
\end{proof}

\begin{lemma}[reference Lagrange nodes of $\matPkd$ for $d=1$ are reference
  Lagrange nodes of $\matPki$]
  \label{l:lag-nodes-Pkd-ref-for-d-eq-1-are-lag-nodes-Pk1-ref}
  \mbox{}\hfill
  Let~$k\in\matN$.
  Then, $\famnodekd{\haa}$ for $d\eqdef1$ and~$\famnodeki{\ha}$ from
  Definition~\ref{d:lag-nodes-Pk1-ref} coincide.
\end{lemma}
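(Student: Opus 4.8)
The plan is to specialize the explicit formulas for the reference nodes of $\matPkd$ to $d\eqdef1$ and match them term by term with Definition~\ref{d:lag-nodes-Pk1-ref}. First I would record the two facts that make the specialization transparent: by Lemma~\ref{l:multi-ind-Akd-for-d-eq-1-is-Ak1} the index set $\calAkd$ for $d\eqdef1$ is $\calAki=[0..k]$, so a one-dimensional multi-index $\aalpha$ is simply an integer $\alpha_1\in[0..k]$; and by Definition~\ref{d:fam-ref-aff-pts} the reference vertices in dimension~$1$ are $\hvv_0=\zzero=0$ and $\hvv_1=\ee_1=1$ in~$\matR$. With these in hand, the families $\famnodekd{\haa}$ (for $d\eqdef1$) and $\famnodeki{\ha}$ are both indexed by $[0..k]$, and it remains to compare them entrywise, following the $k=0$ versus $k\geq1$ split of Lemma~\ref{l:lag-nodes-Pkd-ref}.

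For $k=0$, Lemma~\ref{l:lag-nodes-Pkd-ref} gives $\haa_\zzero=\hbisobaryc{1}$, which equals $\frac12$ by Lemma~\ref{l:ref-isobaryc} (its single component is $\frac{1}{1+1}$); this agrees with $\ha_0=\hisobaryci=\frac12$ from Definition~\ref{d:lag-nodes-Pk1-ref}. For $k\geq1$, Lemma~\ref{l:lag-nodes-Pkd-ref} gives $\haa_\aalpha=\sum_{i=1}^{1}\frac{\alpha_i}{k}\ee_i=\frac{\alpha_1}{k}$, using $\ee_1=1$; identifying $\alpha_1$ with the integer index $i\in[0..k]$, this is exactly $\frac{i}{k}=i\hat h=\ha_i$ with $\hat h=\frac1k$, as in Definition~\ref{d:lag-nodes-Pk1-ref}. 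Hence the two families coincide.

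I do not expect a genuine obstacle: the whole content is the bookkeeping between the one-dimensional multi-index $\aalpha=\alpha_1$ and the integer labelling $i$ of the segment nodes, together with the standard $k=0$/$k\geq1$ case split dictated by the isobarycenter convention. As an alternative route avoiding the direct computation, one could chain three already-established facts: the reference nodes are the current nodes evaluated at the reference vertices (Lemma~\ref{l:lag-nodes-Pkd-ref}), the $d\eqdef1$ current nodes reduce to the $\matPki$ nodes (Lemma~\ref{l:lag-nodes-Pkd-for-d-eq-1-are-lag-nodes-Pk1}, applied with $\famverti{v}\eqdef(0,1)$), and the $\matPki$ nodes at $(0,1)$ match the reference $\matPki$ nodes by comparing Definition~\ref{d:lag-nodes-Pk1} with Definition~\ref{d:lag-nodes-Pk1-ref}.
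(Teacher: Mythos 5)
Your proof is correct and follows essentially the same route as the paper: specialize the explicit formulas of Lemma~\ref{l:lag-nodes-Pkd-ref} to $d\eqdef1$, identify $\calAkd$ with $[0..k]$, and compare entrywise with Definition~\ref{d:lag-nodes-Pk1-ref} (the paper cites Lemma~\ref{l:first-multi-ind-Akd} for the index-set identification where you cite the equivalent Lemma~\ref{l:multi-ind-Akd-for-d-eq-1-is-Ak1}, an immaterial difference). Your computation of both the $k=0$ and $k\geq1$ cases matches what the paper leaves implicit in its ``direct consequence'' proof.
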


\begin{proof}
  Direct consequence of
  Lemma~\threfc{l:lag-nodes-Pkd-ref}{with $d\eqdef1$},
  Definition~\threfc{d:canon-fam}{$\ee_i$},
  Lemma~\threfc{l:first-multi-ind-Akd}{with $d\eqdef1$}, and
  Definition~\thref{d:lag-nodes-Pk1-ref}.
\end{proof}

\begin{lemma}[equivalent definition of reference Lagrange nodes of $\matPkd$]
  \label{l:equiv-def-lag-nodes-Pkd-ref}
  \mbox{}\\
  Let~$d\geq1$.
  Let~$k\in\matN$.
  Let~$\aalpha\in\calAkd$.
  Let~$i\in[1..d]$.
  Then, we have
  \begin{align}
    \label{e:equiv-def-lag-nodes-Pkd-ref-0}
    (k = 0)&& (\haa_\zzero)_i &= \frac{1}{d + 1},&&\\
    \label{e:equiv-def-lag-nodes-Pkd-ref-1}
    (k \geq 1)&& (\haa_\aalpha)_i &= \frac{\alpha_i}{k}.&&
  \end{align}
\end{lemma}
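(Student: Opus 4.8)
The plan is to observe that this statement merely reads off the $i$-th coordinate of the closed-form expressions for the reference Lagrange nodes already established in Lemma~\ref{l:lag-nodes-Pkd-ref}. Accordingly, I would split the proof along the same case distinction $k=0$ versus $k\geq1$ used there, and in each case extract the $i$-th component of the corresponding identity. No new geometric or algebraic content is needed; the work is entirely a projection onto coordinates.

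For the case $k=0$, I would invoke~\eqref{e:lag-nodes-Pkd-ref-0}, which gives $\haa_\zzero=\hbisobarycd$, and then apply Lemma~\ref{l:ref-isobaryc}, which states precisely that $(\hbisobarycd)_i=\frac{1}{d+1}$ for every $i\in[1..d]$. Composing these two facts immediately yields $(\haa_\zzero)_i=\frac{1}{d+1}$, which is~\eqref{e:equiv-def-lag-nodes-Pkd-ref-0}.

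For the case $k\geq1$, I would start from~\eqref{e:lag-nodes-Pkd-ref-1}, which expresses the node as $\haa_\aalpha=\sum_{j=1}^d\frac{\alpha_j}{k}\ee_j$. Taking the $i$-th component and using Definition~\ref{d:canon-fam} (so that $(\ee_j)_i=\kron{j}{i}$) together with the sifting property of the Kronecker delta, one obtains $(\haa_\aalpha)_i=\sum_{j=1}^d\frac{\alpha_j}{k}\kron{j}{i}=\frac{\alpha_i}{k}$, which is~\eqref{e:equiv-def-lag-nodes-Pkd-ref-1}.

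There is no real obstacle here: the result is a direct corollary of Lemma~\ref{l:lag-nodes-Pkd-ref}. The only points to keep in mind are that the case split guarantees $k>0$ whenever the quotient $\frac{\alpha_i}{k}$ appears, so the expression is well-defined, and that the coordinate-extraction map is linear, which justifies passing the $i$-th component through the finite sum. Everything else is the routine evaluation of $(\ee_j)_i$ via the canonical family of Definition~\ref{d:canon-fam}.
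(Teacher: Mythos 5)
Your proof is correct and follows exactly the paper's route: it reads off the $i$-th coordinate from Lemma~\ref{l:lag-nodes-Pkd-ref}, using Lemma~\ref{l:ref-isobaryc} for the case $k=0$ and Definition~\ref{d:canon-fam} (the Kronecker property of the $\ee_j$) for the case $k\geq1$. The paper's proof cites precisely these three ingredients, so there is nothing to add.
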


\begin{proof}
  Direct consequence of
  Lemma~\thref{l:lag-nodes-Pkd-ref},
  Lemma~\threfc{l:ref-isobaryc}{for $k=0$}, and
  Definition~\threfc{d:canon-fam}{for $k\geq1$}.
\end{proof}

\begin{lemma}[number of reference Lagrange nodes of $\matPkd$]
  \label{l:num-lag-nodes-Pkd-ref}
  \mbox{}\hfill
  Let~$d\geq1$.
  Let~$k\in\matN$.\\
  Then, there are~$\binomkpdd$ distinct reference Lagrange
  nodes~$\famnodekd{\haa}$.
\end{lemma}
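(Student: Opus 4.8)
The plan is to reduce this counting statement to the analogous result already proved for general affinely independent vertices, namely Lemma~\ref{l:num-lag-nodes-Pkd}, simply by specializing to the reference configuration. The reference Lagrange nodes $\famnodekd{\haa}$ were introduced in Lemma~\ref{l:lag-nodes-Pkd-ref} as nothing but the Lagrange nodes of $\matPkd$ taken with respect to the family of reference vertices $\famvertd{\hvv}$ of Definition~\ref{d:fam-ref-aff-pts}. So the whole task is to verify that the hypotheses of Lemma~\ref{l:num-lag-nodes-Pkd} are satisfied when one chooses $\famvertd{\vv}\eqdef\famvertd{\hvv}$, and then to read off the conclusion.

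First I would recall, via Lemma~\ref{l:ref-affine-vert-is-affinely-indep}, that the reference vertices $\famvertd{\hvv}$ form an affinely independent family in $\matRd$. This is exactly the standing assumption required by Lemma~\ref{l:num-lag-nodes-Pkd}. Next I would apply that lemma with $\famvertd{\vv}\eqdef\famvertd{\hvv}$: it gives that for all $\aalpha,\bbeta\in\calAkd$, $\aalpha\neq\bbeta$ implies $\haa_\aalpha\neq\haa_\bbeta$, i.e.\ the indexing $\aalpha\mapsto\haa_\aalpha$ is injective on $\calAkd$. Consequently the number of distinct reference nodes equals $\card(\calAkd)$, which by Lemma~\ref{l:card-Akd} is $\binomkpdd$.

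There is no real obstacle in this argument; it is a direct specialization of an existing result. The only point demanding attention is the bookkeeping of the hypothesis chain: one must make sure that the affine independence needed by Lemma~\ref{l:num-lag-nodes-Pkd} is genuinely discharged for the reference vertices, which is precisely the content of Lemma~\ref{l:ref-affine-vert-is-affinely-indep}, and that the identification of $\famnodekd{\haa}$ as Lagrange nodes for $\famvertd{\hvv}$ is the one fixed in Lemma~\ref{l:lag-nodes-Pkd-ref}. With those two citations in place, the count follows immediately.
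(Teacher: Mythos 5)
Your proof is correct and follows essentially the same route as the paper: both rest on the affine independence of the reference vertices (Lemma~\ref{l:ref-affine-vert-is-affinely-indep}), the injectivity of $\aalpha\mapsto\haa_\aalpha$ coming from uniqueness of barycentric coordinates, and $\card(\calAkd)=\binomkpdd$ from Lemma~\ref{l:card-Akd}. The only cosmetic difference is that you factor the injectivity through the already-proved general Lemma~\ref{l:num-lag-nodes-Pkd}, whereas the paper re-cites Lemma~\ref{l:baryc-coor} directly for the reference vertices; the underlying argument is identical.
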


\begin{proof}
  Direct consequence of
  Lemma~\thref{l:lag-nodes-Pkd-ref},
  Lemma~\thref{l:ref-affine-vert-is-affinely-indep},
  Lemma~\threfc{l:baryc-coor}{%
    uniqueness in~\eqref{e:baryc-coor-2} with $j\eqdef0$ and
    $\famvertd{\hvv}$}, and
  Lemma~\thref{l:card-Akd}.
\end{proof}

\begin{remark}
  In the next lemma, the proof could also have been done using
  Lemma~\ref{l:aff-map-preserves-baryc}.
\end{remark}

\begin{lemma}[Lagrange nodes of $\matPkd$ are image of reference]
  \label{l:lag-nodes-Pkd-im-ref}
  \mbox{}\hfill
  Let~$d\geq1$.
  Let~$k\in\matN$.\\
  Let~$\famvertd{\vv}$ be $d+1$ points in~$\matRd$.
  Let $\aalpha\in\calAkd$.
  Then, we have $\aa_\aalpha=\phigeodv(\haa_\aalpha)$.
\end{lemma}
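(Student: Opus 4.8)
The plan is to split along the two regimes $k=0$ and $k\geq1$ of Definition~\thref{d:lag-nodes-Pkd}, and in each to exploit that $\phigeodv$ is affine and carries each reference vertex onto the corresponding current vertex. From Lemma~\thref{l:prop-geo-mapping} I would record three ingredients: $\phigeodv\in\AffRdRd$, the vertex identity $\phigeodv(\hvv_i)=\vv_i$ for all $i\in[0..d]$ (identity~\eqref{e:prop-geo-mapping-3}), and the explicit columnwise form $\phigeodv(\hxx)=\vv_0+\sum_{i=1}^d\hx_i(\vv_i-\vv_0)$ (identity~\eqref{e:prop-geo-mapping-2}). I would also recall from Lemma~\thref{l:lag-nodes-Pkd-ref} that $\haa_\aalpha$ is by definition the Lagrange node of $\matPkd$ built from the reference vertices $\famvertd{\hvv}$.

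For the case $k\geq1$ the computation is immediate. By Lemma~\thref{l:equiv-def-lag-nodes-Pkd-ref} the coordinates of the reference node are $(\haa_\aalpha)_i=\frac{\alpha_i}{k}$ for $i\in[1..d]$. Substituting these into the explicit form~\eqref{e:prop-geo-mapping-2} yields
\[
  \phigeodv(\haa_\aalpha)
  = \vv_0 + \sum_{i=1}^d \frac{\alpha_i}{k}(\vv_i-\vv_0),
\]
which is exactly the defining formula~\eqref{e:lag-nodes-Pkd-1} for $\aa_\aalpha$, so $\aa_\aalpha=\phigeodv(\haa_\aalpha)$.

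The case $k=0$ cannot be handled by reading off coordinates, since the node $\aa_\zzero$ is the isobarycenter $\bisobarycv$ while $\haa_\zzero=\hbisobarycd=\bisobaryc{\famvertd{\hvv}}$. Here I would invoke Lemma~\thref{l:aff-map-preserves-isobaryc}: applied to the affine map $\phigeodv$ and the family $\famvertd{\hvv}$, it gives $\phigeodv(\bisobaryc{\famvertd{\hvv}})=\bisobaryc{\famvert{\phigeodv(\hvv)}{d}}$. The vertex identity $\phigeodv(\hvv_i)=\vv_i$ then identifies the image family $\famvert{\phigeodv(\hvv)}{d}$ with $\famvertd{\vv}$, whence $\phigeodv(\haa_\zzero)=\bisobaryc{\famvertd{\vv}}=\aa_\zzero$.

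A uniform alternative, noted in the remark preceding the statement, would replace both cases by a single appeal to Lemma~\thref{l:aff-map-preserves-baryc}: write the reference node in barycentric form with weights $(1-\frac{\len{\aalpha}}{k},\frac{\alpha_1}{k},\dots,\frac{\alpha_d}{k})$ for $k\geq1$ (respectively equal weights $\frac{1}{d+1}$ for $k=0$) through~\eqref{e:baryc-coor-lag-nodes-Pkd-2}, then push it through $\phigeodv$ using barycenter preservation and the vertex identity. There is no genuine obstacle in either route; the only point demanding a little care is the $k=0$ case, where one must route through the isobarycenter (or barycenter) preservation lemma rather than a direct coordinate substitution.
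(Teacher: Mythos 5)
Your proof is correct, and for $k\geq1$ it is exactly the paper's argument: read off $(\haa_\aalpha)_i=\alpha_i/k$ from Lemma~\ref{l:equiv-def-lag-nodes-Pkd-ref} and substitute into identity~\eqref{e:prop-geo-mapping-2} of Lemma~\ref{l:prop-geo-mapping} to land on Definition~\ref{d:lag-nodes-Pkd}. The only divergence is in the case $k=0$: the paper stays with the same coordinate substitution, putting $(\haa_\zzero)_i=\frac{1}{d+1}$ into~\eqref{e:prop-geo-mapping-2} and using field properties of $\matR$ (with $d+1\neq0$) to rearrange $\vv_0+\sum_{i=1}^d\frac{1}{d+1}(\vv_i-\vv_0)$ into $\frac{1}{d+1}\sum_{i=0}^d\vv_i=\aa_\zzero$, whereas you invoke Lemma~\ref{l:aff-map-preserves-isobaryc} together with the vertex identity~\eqref{e:prop-geo-mapping-3}. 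Both routes are valid; yours replaces one line of arithmetic by an extra lemma citation and is precisely the alternative the paper itself signals in the remark preceding the statement (via Lemma~\ref{l:aff-map-preserves-baryc}), so the difference is cosmetic rather than structural.
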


\begin{proof}
  Direct consequence of
  Lemma~\threfc{l:prop-geo-mapping}{%
    \eqref{e:prop-geo-mapping-2}},
  Lemma~\thref{l:equiv-def-lag-nodes-Pkd-ref},
  \assume{field properties of~$\matR$ (with $d+1\neq0$) for $k=0$}, and
  Definition~\thref{d:lag-nodes-Pkd}.
\end{proof}

\subsection{Lagrange nodes and face hyperplanes}
\label{ss:lag-nodes-face-hyperp}

\begin{lemma}[face hyperplanes of Lagrange nodes of $\matPkd$]
  \label{l:face-hyperpl-lag-nodes-Pkd}
  \mbox{}\hfill
  Let~$d\geq1$.
  Let~$k\geq1$.
  Let~$\famvertd{\vv}$ be $d+1$ affinely independent points in~$\matRd$.
  Let~$\aalpha\in\calAkd$.
  Let~$i\in[0..d]$.
  Then, we have
  \begin{align}
    \label{e:face-hyperpl-lag-nodes-Pkd-0}
    (i = 0)&&
    \aa_\aalpha \in \calHvd_0
    \EQUIV \len{\aalpha} &= k
    \EQUIV \aalpha \in \calCkd,&&\\
    \label{e:face-hyperpl-lag-nodes-Pkd-1}
    (i \in [1..d])&&
    \aa_\aalpha \in \calHvd_i
    \EQUIV \alpha_i &= 0
    \EQUIV \aalpha \in \calAkdi.&&
  \end{align}
  Thus, we have
  $\card\left\{\aa_\aalpha\in\calHvd_i\rightst
    \left.\vphantom{\calHvd_i}\aalpha\in\calAkd\right\}=\binom{k+d-1}{d-1}$.
\end{lemma}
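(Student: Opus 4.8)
The plan is to collapse each membership statement $\aa_\aalpha\in\calHvd_i$ into the vanishing of a single barycentric coordinate, and then to read that coordinate off from the explicit formulas for the nodes. Since $\famvertd{\vv}$ is affinely independent, Lemma~\ref{l:equiv-def-face-hyperpl} gives the characterization $\calHvd_i=\Ker{\barcvd_i}$, so that $\aa_\aalpha\in\calHvd_i$ is equivalent to $\barcvd_i(\aa_\aalpha)=0$, where $\barcvd_i$ is the $i$-th barycentric coordinate application. This is the crux of the reduction: it turns a geometric containment into a scalar equation whose value is already tabulated.

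First I would treat the case $i=0$. From Lemma~\ref{l:baryc-coor-lag-nodes-Pkd} (with $k\geq1$) we have $\barcvd_0(\aa_\aalpha)=1-\frac{\len{\aalpha}}{k}$, and since $k\neq0$ this vanishes if and only if $\len{\aalpha}=k$, which by Definition~\ref{d:multi-ind-Akd-Ckd} (and $\aalpha\in\calAkd$) means exactly $\aalpha\in\calCkd$; this yields the first chain of equivalences. For $i\in[1..d]$, the same lemma gives $\barcvd_i(\aa_\aalpha)=\frac{\alpha_i}{k}$, which vanishes if and only if $\alpha_i=0$, i.e. $\aalpha\in\calAkdi$ by Definition~\ref{d:multi-ind-Akd-Ckd}; this yields the second chain. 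Both cases are therefore immediate once the barycentric value is substituted, using only ordered field properties of $\matR$ to manipulate the fraction.

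For the final cardinality assertion I would argue that the node map $\aalpha\mapsto\aa_\aalpha$ is injective on $\calAkd$, which is precisely Lemma~\ref{l:num-lag-nodes-Pkd}; hence the cardinal of $\{\aa_\aalpha\st\aalpha\in\calAkd,\ \aa_\aalpha\in\calHvd_i\}$ equals the cardinal of the index set $\{\aalpha\in\calAkd\st\aa_\aalpha\in\calHvd_i\}$. By the equivalences just established this index set is $\calCkd$ when $i=0$ and $\calAkdi$ when $i\in[1..d]$. Finally, $\card(\calCkd)=\binom{k+d-1}{d-1}$ by Lemma~\ref{l:card-Ckd}, while $\card(\calAkdi)=\card(\calAkdmi)=\binom{k+d-1}{d-1}$ by combining Lemma~\ref{l:card-Akdi-Akdm1} with the value $\card(\calA^{d-1}_k)=\binom{k+d-1}{d-1}$ coming from Lemma~\ref{l:card-Akd}; so both cases land on the same number.

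I do not expect a genuine obstacle here, as the statement is essentially a dictionary translation between node containment and multi-index conditions. The only points requiring a little care are bookkeeping ones: making sure the hypothesis $k\geq1$ is invoked so that Lemma~\ref{l:baryc-coor-lag-nodes-Pkd}'s $k\geq1$ formulas apply (the $k=0$ clause is irrelevant), and verifying that the two seemingly different index sets $\calCkd$ and $\calAkdi$ really have the common cardinal $\binom{k+d-1}{d-1}$ so that the single displayed count is correct for every $i\in[0..d]$. The injectivity step is what licenses replacing a count of points by a count of multi-indices, and it is the one place where affine independence of $\famvertd{\vv}$ is used a second time, so I would state it explicitly rather than leave it implicit.
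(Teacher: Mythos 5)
Your proposal is correct and follows essentially the same route as the paper: reduce membership in $\calHvd_i$ to the vanishing of $\barcvd_i(\aa_\aalpha)$ via Lemma~\ref{l:equiv-def-face-hyperpl}, read off the value from Lemma~\ref{l:baryc-coor-lag-nodes-Pkd}, and count via injectivity of $\aalpha\mapsto\aa_\aalpha$ (Lemma~\ref{l:num-lag-nodes-Pkd}) together with the cardinal lemmas. The only cosmetic difference is that the paper chains $\card(\calAkdi)=\card(\calAkdmi)=\card(\calCkd)=\binom{k+d-1}{d-1}$ through Lemma~\ref{l:card-Ckd-Akdm1}, whereas you evaluate $\card(\calAkdmi)$ directly from Lemma~\ref{l:card-Akd}; both land on the same binomial.
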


\begin{proof}
  \proofparskip{Equivalences}
  Direct consequence of
  Lemma~\threfc{l:equiv-def-face-hyperpl}{\eqref{e:equiv-def-face-hyperpl-2}},
  Lemma~\threfc{l:baryc-coor-lag-nodes-Pkd}{
    \eqref{e:baryc-coor-lag-nodes-Pkd-1}},
  \assume{field properties of~$\matR$ (with $k>0$)},
  Definition~\thref{d:multi-ind-Akd-Ckd},
  Lemma~\thref{l:card-Akdi-Akdm1}, and
  Lemma~\thref{l:card-Akd}.

  \proofparskip{Cardinal}
  \proofpar{Case $i=0$}
  Direct consequence of
  Definition~\thref{d:multi-ind-Akd-Ckd}, and
  Lemma~\thref{l:card-Ckd}.\\
  \proofpar{Case $i\in[1..d]$}
  Direct consequence of
  Lemma~\threfc{l:num-lag-nodes-Pkd}{%
    injectivity, thus
    $\card\{\aa_{\aalpha}\in\calHvd_i\st\aalpha\in\calAkd\}=\card\{\calAkdi\}$},
  Lemma~\thref{l:card-Akdi-Akdm1},
  Lem\-ma~\thref{l:card-Ckd-Akdm1}, and
  Lemma~\thref{l:card-Ckd}.
\end{proof}

\begin{remark}
  In the next lemma, we recall that~$\thetinjdmi{i}$ is defined in
  Lemma~\ref{l:jump-enum}, $\fkdi$ in Lemmas~\ref{l:card-Ckd-Akdm1} ($i=0$)
  and~\ref{l:card-Akdi-Akdm1} ($i\in[1..d]$), and~$\phindlv$ in
  Definition~\ref{d:geo-l-face-mapping}.
  See also Figure~\ref{f:geo-hyper-map_d32}.
\end{remark}

\begin{lemma}[image of nodes by geometric hyperface mapping]
  \label{l:im-nodes-by-geo-hyperface-mapping}
  \mbox{}\hfill
  Let~$d\geq2$.
  Let~$k\geq1$.\\
  Let~$\famvertd{\vv}$ be $d+1$ affinely independent points in~$\matRd$.
  Let~$\famnodekdmi{\haa^{d-1}}$ be the reference Lagrange nodes of~$\matPkdmi$
  and~$\famnodekd{\aa}$ be the Lagrange nodes of~$\matPkd$
  for~$\famvertd{\vv}$.
  Let~$i\in[0..d]$.
  Let~$\aalphap\in\calAkdmi$.\\
  Then, we have
  \begin{equation}
    \label{e:im-nodes-by-geo-hyperface-mapping}
    \phindv{\thetinjdmi{i}} (\haa_{\aalphap}^{d - 1}) = \aa_{\fkdi (\aalphap)}.
  \end{equation}
  %
\end{lemma}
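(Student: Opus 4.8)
The plan is to express the reference node $\haa_{\aalphap}^{d-1}$ as a barycenter of the reference vertices, push that barycenter through the affine hyperface map, and then recognize the result as the barycentric form of $\aa_{\fkdi(\aalphap)}$. Since $k\geq1$, Lemma~\ref{l:lag-nodes-Pkd-ref} applied in dimension $d-1$ (equation~\eqref{e:lag-nodes-Pkd-ref-1}) writes
\[
  \haa_{\aalphap}^{d-1} = \left(1 - \frac{\len{\aalphap}}{k}\right)\hvv_0^{d-1}
  + \sum_{j=1}^{d-1}\frac{\alphap_j}{k}\,\hvv_j^{d-1},
\]
a barycentric combination whose coefficients sum to~$1$. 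By Lemma~\ref{l:geo-hyperface-mapping} the map $\phindv{\thetinjdmi{i}}$ belongs to~$\AffRdmiRd$, so Lemma~\ref{l:aff-map-preserves-baryc} transports this combination through it, and equation~\eqref{e:geo-hyperface-mapping-2}, read together with the definition of the jump enumeration (Lemma~\ref{l:jump-enum}), gives $\phindv{\thetinjdmi{i}}(\hvv_j^{d-1}) = \vv_{\thetinjdmi{i}(j)}$. Hence
\[
  \phindv{\thetinjdmi{i}}(\haa_{\aalphap}^{d-1})
  = \left(1 - \frac{\len{\aalphap}}{k}\right)\vv_{\thetinjdmi{i}(0)}
  + \sum_{j=1}^{d-1}\frac{\alphap_j}{k}\,\vv_{\thetinjdmi{i}(j)}.
\]

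For the right-hand side I set $\bbeta\eqdef\fkdi(\aalphap)$, which lies in $\calAkd$ (in $\calCkd$ for $i=0$ by Lemma~\ref{l:card-Ckd-Akdm1}, in $\calAkdi$ for $i\in[1..d]$ by Lemma~\ref{l:card-Akdi-Akdm1}). Since $k\geq1$, Lemma~\ref{l:baryc-coor-lag-nodes-Pkd} (equation~\eqref{e:baryc-coor-lag-nodes-Pkd-2}) then gives $\aa_{\bbeta} = \left(1 - \frac{\len{\bbeta}}{k}\right)\vv_0 + \sum_{m=1}^d \frac{\beta_m}{k}\,\vv_m$. It remains only to verify that the two barycentric combinations agree coefficient by coefficient, which I would do by treating the two cases of~$\fkdi$ separately.

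In the case $i=0$, one has $\thetinjdmi{0}(j)=j+1$ for all~$j$, while $\fkdo(\aalphap)=(k-\len{\aalphap},\aalphap)$ has length~$k$ by Lemma~\ref{l:len-multi-ind-is-add}; thus the $\vv_0$-coefficient $1-\len{\bbeta}/k$ vanishes, the $\vv_1$-coefficient equals $(k-\len{\aalphap})/k = 1-\len{\aalphap}/k$, and $\beta_{j+1}=\alphap_j$, so the expressions coincide after the shift $m=j+1$. In the case $i\in[1..d]$, $\fkdi$ inserts a~$0$ in position~$i$ and preserves length ($\len{\bbeta}=\len{\aalphap}$), and $\thetinjdmi{i}(0)=0$ with $\thetinjdmi{i}(j)=j$ for $j<i$ and $\thetinjdmi{i}(j)=j+1$ for $j\geq i$; splitting $\sum_{m=1}^d$ at $m=i$ (where $\beta_i=0$) and reindexing the upper block by $m=j+1$ again produces an exact match. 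The main obstacle is purely combinatorial: keeping the jump enumeration $\thetinjdmi{i}$ aligned with the insertion map $\fkdi$, and in particular tracking how the total length $\len{\aalphap}$ is promoted to the leading coordinate (for $i=0$) versus merely preserved (for $i\in[1..d]$), so that the weight attached to $\vv_0$ or $\vv_1$ lands on the correct vertex. Once this bookkeeping is fixed, each case reduces to a single shift of summation index, and no analytic content beyond the cited lemmas is needed.
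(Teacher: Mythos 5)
Your proposal is correct and follows essentially the same route as the paper's proof: both express $\haa_{\aalphap}^{d-1}$ and $\aa_{\fkdi(\aalphap)}$ as barycentric combinations via Lemmas~\ref{l:lag-nodes-Pkd-ref} and~\ref{l:baryc-coor-lag-nodes-Pkd}, split into the cases $i=0$ and $i\in[1..d]$, and conclude with the affineness of $\phindv{\thetinjdmi{i}}$, equation~\eqref{e:geo-hyperface-mapping-2}, and Lemma~\ref{l:aff-map-preserves-baryc}. The only difference is cosmetic — you push the barycenter through the map before matching coefficients, whereas the paper matches the two coefficient families first and applies barycenter preservation at the end.
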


\begin{proof}
  From
  Lemma~\threfc{l:lag-nodes-Pkd-ref}{%
    \eqref{e:lag-nodes-Pkd-ref-1} with $d-1\geq1$},
  we have
  \[
    \haa_{\aalphap}^{d - 1} =
    \left( 1 - \frac{\len{\aalphap}}{k} \right) \hvv_0^{d - 1}
    + \sum_{j = 1}^{d - 1} \frac{\alphap_j}{k} \hvv_j^{d - 1}
    \quad \mbox{with }
    \left( 1 - \frac{\len{\aalphap}}{k} \right)
    + \sum_{j = 1}^{d - 1} \frac{\alphap_j}{k} = 1.
  \]
  \proofpar{Case $i=0$}
  Then, from
  Lemma~\thref{l:card-Ckd-Akdm1},
  Definition~\thref{d:multi-ind-Akd-Ckd},
  Lemma~\threfc{l:baryc-coor-lag-nodes-Pkd}{%
    \eqref{e:baryc-coor-lag-nodes-Pkd-2}}, and
  \assume{field properties of~$\matR$ (with $k\neq0$)},
  we have $\fkdo(\aalphap)=
  (k-\len{\aalphap},\aalphap)\in\calCkd\subset\calAkd$,
  $\len{\fkdo(\aalphap)}=k$,
  \[
    \aa_{\fkdo (\aalphap)} =
    \left( 1 - \frac{\len{\aalphap}}{k} \right) \vv_1
    + \sum_{j = 1}^{d - 1} \frac{\alphap_j}{k} \vv_{j + 1}.
  \]
  \proofpar{Case $i\in[1..d]$}
  Then, from
  Lemma~\thref{l:card-Akdi-Akdm1},
  Definition~\thref{d:multi-ind-Akd-Ckd}, and
  Lemma~\threfc{l:baryc-coor-lag-nodes-Pkd}{%
    \eqref{e:baryc-coor-lag-nodes-Pkd-2}},
  we have $\fkdi(\aalphap)=(\alphap_1,\ldots,\alphap_{i-1},0,
  \alphap_i,\ldots,\alphap_{d-1})\in\calAkdi\subset\calAkd$,
  $\len{\fkdi(\aalphap)}=\len{\aalphap}$, and
  \[
    \aa_{\fkdi (\aalphap)} =
    \left( 1 - \frac{\len{\aalphap}}{k} \right) \vv_0
    + \sum_{j = 1}^{i - 1} \frac{\alphap_j}{k} \vv_j
    + \sum_{j = i}^{d - 1} \frac{\alphap_j}{k} \vv_{j + 1}.
  \]
  Therefore, in both cases, $\haa_{\aalphap}^{d-1}$ is the barycenter of
  $(\hvv_j^{d-1})_{j\in[0..d-1]}$ and~$\aa_{\fkdi(\aalphap)}$ is the barycenter
  of $(\vv_j)_{j\in[0..d]\setminus\{i\}}$ with the same coefficients.
  Thus, from
  Lemma~\threfc{l:geo-hyperface-mapping}{%
    affineness and \eqref{e:geo-hyperface-mapping-2},
    thus $\phindv{\thetinjdmi{i}}(\hvv_j^{d-1})=\vv_j$ for $j<i$ and
      $\phindv{\thetinjdmi{i}}(\hvv_j^{d-1})=\vv_{j+1}$ for $j\geq i$},  and
  Lemma~\thref{l:aff-map-preserves-baryc},
  we have the equality.
\end{proof}

\section{Lagrange linear forms for $\FElagP{k}{d}$}
\label{s:lag-lin-forms-Pkd}

\begin{definition}[Lagrange linear forms for $\matPkd$]
  \label{d:lag-lin-forms-Pkd}
  \mbox{}\\
  Let~$d\geq1$.
  Let~$k\in\matN$.
  Let~$\famvertd{\vv}$ be $d+1$ points in~$\matRd$.
  The {\em Lagrange linear forms} associated with the Lagrange nodes
  of~$\matPkd$ are denoted
  $\Sigma^{\famnodekd{\aa}}=(\sigma_\aalpha)_{\aalpha\in\calAkd}$, and are
  defined by
  \begin{equation}
    \label{e:lag-lin-forms-Pkd}
    \forall \aalpha \in \calAkd,\quad
    \forall f : \ArRdR,\quad
    \sigma_\aalpha (f) \eqdef f (\aa_\aalpha).
  \end{equation}
\end{definition}

\begin{lemma}[Lagrange linear forms for $\matPkd$ for $d\!\!=\!\!1$ are
  Lagrange linear forms for $\matPki$]
  \label{l:lag-lin-forms-Pkd-for-d-eq-1-are-lag-lin-forms-Pk1}
  Let~$k\in\matN$.
  Let~$\famverti{\vv}=\famverti{v}$ be two points in~$\matR^1$.\\
  Then, $\Sigma^{\famnodekd{\aa}}$ for $d\eqdef1$ and~$\Sigma^{\famnodeki{\aa}}$
  from Definition~\ref{d:lag-lin-forms-Pk1} coincide.
\end{lemma}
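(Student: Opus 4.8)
The plan is to treat this as a definition-matching (coincidence) lemma, exactly in the style of Lemmas~\ref{l:lag-nodes-Pkd-for-d-eq-1-are-lag-nodes-Pk1} and~\ref{l:lag-nodes-Pkd-ref-for-d-eq-1-are-lag-nodes-Pk1-ref}. Since both families of Lagrange linear forms are defined purely as evaluation maps at the respective Lagrange nodes, the statement reduces to checking that, when $d\eqdef1$, the three ingredients entering the two definitions coincide: the indexing set, the evaluation points (the nodes), and the function space on which the forms act. So first I would unfold Definition~\thref{d:lag-lin-forms-Pkd} with $d\eqdef1$, namely $\sigma_\aalpha(f)\eqdef f(\aa_\aalpha)$ for $\aalpha\in\calAkd$ and $f:\ArRdR$, and place it side by side with Definition~\thref{d:lag-lin-forms-Pk1}, namely $\sigma_i(f)\eqdef f(a_i)$ for $i\in[0..k]$ and $f:\ArRR$.

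Next I would discharge the three identifications in turn. For the indexing set, I invoke Lemma~\thref{l:multi-ind-Akd-for-d-eq-1-is-Ak1}, which gives that $\calAkd$ for $d\eqdef1$ is exactly $\calAki=[0..k]$; thus both families are indexed by the same set, with the multi-index $\aalpha\in\matN^1$ identified with the scalar $\alpha\in\matN$. For the evaluation points, I invoke Lemma~\thref{l:lag-nodes-Pkd-for-d-eq-1-are-lag-nodes-Pk1}, which states precisely that $\famnodekd{\aa}$ for $d\eqdef1$ coincides with $\famnodeki{a}$ from Definition~\ref{d:lag-nodes-Pk1}; hence for each index the node at which we evaluate is the same point of $\matR$. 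For the function domain, I note that $\matR^1=\matR$, so $\ArRdR$ for $d\eqdef1$ is literally $\ArRR$, and the two evaluation rules $f\mapsto f(\aa_\aalpha)$ and $f\mapsto f(a_i)$ become the same map once the index and node identifications are applied.

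Combining these three matchings, the two families $\Sigma^{\famnodekd{\aa}}$ (for $d\eqdef1$) and $\Sigma^{\famnodeki{\aa}}$ are termwise equal, which is the claimed coincidence. In the write-up I expect this to appear as a ``Direct consequence of'' proof citing Definition~\thref{d:lag-lin-forms-Pkd}, Lemma~\thref{l:multi-ind-Akd-for-d-eq-1-is-Ak1}, Lemma~\thref{l:lag-nodes-Pkd-for-d-eq-1-are-lag-nodes-Pk1}, and Definition~\thref{d:lag-lin-forms-Pk1}, with no intervening calculation.

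The hard part will essentially be nonexistent, since all the substance has already been carried by the node-coincidence Lemma~\ref{l:lag-nodes-Pkd-for-d-eq-1-are-lag-nodes-Pk1}; the only thing demanding mild care is the bookkeeping of the index identification between a length-one multi-index $\aalpha\in\matN^1$ and the corresponding integer $\alpha\in[0..k]$, together with the trivial identification $\matR^1=\matR$ for the argument space. Given the formalization target, I would keep these identifications explicit rather than silent, but they require no genuine argument.
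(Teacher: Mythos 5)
Your proposal is correct and matches the paper's proof, which is likewise a ``direct consequence'' argument citing Definition~\ref{d:lag-lin-forms-Pkd} with $d\eqdef1$, Lemma~\ref{l:lag-nodes-Pkd-for-d-eq-1-are-lag-nodes-Pk1}, and Definition~\ref{d:lag-lin-forms-Pk1}. The only cosmetic difference is that the paper identifies the index set via Lemma~\ref{l:first-multi-ind-Akd} (with $d\eqdef1$) rather than Lemma~\ref{l:multi-ind-Akd-for-d-eq-1-is-Ak1}; both deliver the same identification $\calAki=[0..k]$.
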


\begin{proof}
  Direct consequence of
  Definition~\threfc{d:lag-lin-forms-Pkd}{with $d\eqdef1$},
  Lem\-ma~\thref{l:lag-nodes-Pkd-for-d-eq-1-are-lag-nodes-Pk1},
  Lemma~\threfc{l:first-multi-ind-Akd}{with $d\eqdef1$}, and
  Definition~\thref{d:lag-lin-forms-Pk1}.
\end{proof}

\begin{lemma}[Lagrange linear forms of $\matPkd$ are linear]
  \label{l:lag-lin-forms-are-linear-Pkd}
  \mbox{}\\
  Let~$d\geq1$.
  Let~$k\in\matN$.
  Let~$\famvertd{\vv}$ be $d+1$ points in~$\matRd$.
  Let~$\aalpha\in\calAkd$.
  Then, $\sigma_\aalpha$ is linear.
\end{lemma}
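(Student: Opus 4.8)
The plan is to unfold the definition of the Lagrange linear form and to reduce the claimed linearity to the pointwise definition of the vector space structure on spaces of functions. By Definition~\thref{d:lag-lin-forms-Pkd}, the map $\sigma_\aalpha$ is the evaluation functional at the fixed Lagrange node $\aa_\aalpha\in\matRd$, namely $\sigma_\aalpha(f)=f(\aa_\aalpha)$ for every $f:\ArRdR$. Since $\aalpha\in\calAkd$ is fixed and the points $\famvertd{\vv}$ are given, the node $\aa_\aalpha$ is a well-defined element of $\matRd$, so $\sigma_\aalpha$ is a genuine functional on the space of functions from $\matRd$ to $\matR$ (in particular on $\matPkd$).

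First I would fix scalars $\lambda,\lambdap\in\matR$ and functions $f,g:\ArRdR$, and then evaluate $\lambda f+\lambdap g$ at $\aa_\aalpha$. By the definition of the linear operations over functions, which are defined pointwise, we have $(\lambda f+\lambdap g)(\aa_\aalpha)=\lambda\,f(\aa_\aalpha)+\lambdap\,g(\aa_\aalpha)$. Substituting the definition of $\sigma_\aalpha$ back in yields $\sigma_\aalpha(\lambda f+\lambdap g)=\lambda\,\sigma_\aalpha(f)+\lambdap\,\sigma_\aalpha(g)$, which is exactly the preservation of linear combinations required for linearity.

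This mirrors exactly the one-dimensional statements already established, namely Lemma~\thref{l:lag-lin-forms-are-linear-Pk1} and Lemma~\thref{l:lag-lin-forms-are-linear-Pk1-ref}, both of which are direct consequences of the corresponding definition together with \assume{the definition of linear operations over functions}. The main obstacle is essentially nonexistent here: there is no case analysis, no dependence on unisolvence or on affine independence of the vertices, and no need to invoke any structure of $\matPkd$ beyond the fact that it is a space of functions valued in $\matR$. The only point worth stating explicitly is that the argument never uses any property of $\aa_\aalpha$ other than its being a fixed point of $\matRd$, so the conclusion holds for an arbitrary family $\famvertd{\vv}$ of points, affinely independent or not.
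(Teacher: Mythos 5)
Your proof is correct and follows exactly the paper's route: the paper's own proof is a direct consequence of Definition~\ref{d:lag-lin-forms-Pkd} and the pointwise definition of linear operations over functions, which is precisely what you unfold. Your remark that affine independence of $\famvertd{\vv}$ is not needed is also consistent with the statement's hypotheses.
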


\begin{proof}
  Direct consequence of
  Definition~\thref{d:lag-lin-forms-Pkd}, and
  \assume{the definition of linear operations over functions.}
\end{proof}

\begin{lemma}[cardinal of Lagrange linear forms of $\matPkd$]
  \label{l:card-lag-lin-forms-Pkd}
  \mbox{}\hfill
  Let~$d\geq1$.
  Let~$k\in\matN$.\\
  Let~$\famvertd{\vv}$ be $d+1$ affinely independent points in~$\matRd$.
  Then, $\card\left(\Sigma^{\famnodekd{\aa}}\right)=\binomkpdd$.
\end{lemma}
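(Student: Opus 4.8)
The plan is to show that the indexing map $\aalpha\mapsto\sigma_\aalpha$ from $\calAkd$ to the space of linear forms is injective, so that the cardinal of the family $\Sigma^{\famnodekd{\aa}}$ equals the cardinal of its index set $\calAkd$, and then to invoke Lemma~\ref{l:card-Akd} to obtain $\card(\calAkd)=\binomkpdd$. Concretely, I would organize the argument in two short steps followed by the concluding count.

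First I would reduce the distinctness of the linear forms to the distinctness of the nodes. By Lemma~\ref{l:num-lag-nodes-Pkd}, which is exactly where the affine independence of $\famvertd{\vv}$ enters, the Lagrange nodes $(\aa_\aalpha)_{\aalpha\in\calAkd}$ are pairwise distinct: $\aalpha\neq\bbeta$ implies $\aa_\aalpha\neq\aa_\bbeta$. Next I would upgrade distinctness of nodes to distinctness of functionals. By Definition~\ref{d:lag-lin-forms-Pkd}, each $\sigma_\aalpha$ is the evaluation functional $f\mapsto f(\aa_\aalpha)$ defined on the whole space $\ArRdR$; hence two such functionals attached to distinct points are distinct, since given $\aa_\aalpha\neq\aa_\bbeta$ one may pick any function $f:\ArRdR$ separating the two points (for instance an affine coordinate, or a function taking value $1$ at $\aa_\aalpha$ and $0$ at $\aa_\bbeta$), which yields $\sigma_\aalpha(f)\neq\sigma_\bbeta(f)$ and therefore $\sigma_\aalpha\neq\sigma_\bbeta$. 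This establishes injectivity of $\aalpha\mapsto\sigma_\aalpha$.

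Finally, injectivity of the indexing means the number of elements of the family equals $\card(\calAkd)$, and Lemma~\ref{l:card-Akd} gives $\card(\calAkd)=\binomkpdd$, which concludes. There is no real obstacle here: the only point requiring a moment of care is the separation step, and it is immediate precisely because the Lagrange linear forms act on all of $\ArRdR$ rather than merely on $\matPkd$, so that no appeal to unisolvence is needed to separate the nodes. The case $k=0$, where $\calAod=\{\zzero\}$ reduces to a single node and a single linear form, is covered directly, since $\card(\calAod)=1=\binomkpdd$.
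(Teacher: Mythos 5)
Your proof is correct and follows essentially the same route as the paper, which also deduces the count from Definition~\ref{d:lag-lin-forms-Pkd} together with the distinctness of the Lagrange nodes established in Lemma~\ref{l:num-lag-nodes-Pkd} (that lemma already records that there are $\binomkpdd$ distinct nodes, so your extra appeal to Lemma~\ref{l:card-Akd} is harmless but not needed). Your separation step, justifying that evaluation functionals at distinct points are distinct because they act on all of $\ArRdR$, is a detail the paper leaves implicit, and it is valid.
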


\begin{proof}
  Direct consequence of
  Definition~\thref{d:lag-lin-forms-Pkd}, and
  Lemma~\thref{l:num-lag-nodes-Pkd}.
\end{proof}

\begin{definition}[reference Lagrange linear forms for $\matPkd$]
  \label{d:lag-lin-forms-Pkd-ref}
  \mbox{}\\
  Let~$d\geq1$.
  Let~$k\in\matN$.
  The {\em reference Lagrange linear forms} associated with the reference
  Lagrange nodes of~$\matPkd$ are denoted
  $\hSigmakd=(\hsigma_\aalpha)_{\aalpha\in\calAkd}$, and are defined by
  \begin{equation}
    \label{e:lag-lin-forms-Pkd-ref}
    \forall \aalpha \in \calAkd,\quad
    \forall \hf : \ArRdR,\quad
    \hsigma_\aalpha (\hf) \eqdef \hf (\haa_\aalpha).
  \end{equation}
\end{definition}

\begin{lemma}[reference Lagrange linear forms for $\matPkd$ for $d=1$ are
  reference Lagrange linear forms for $\matPki$]
  \label{l:lag-lin-forms-Pkd-ref-for-d-eq-1-are-lag-lin-forms-Pk1-ref}
  \mbox{}\hfill
  Let~$k\in\matN$.
  Then, $\hSigmakd$ for $d\eqdef1$ and~$\hSigmaki$
  from Definition~\ref{d:lag-lin-forms-Pk1-ref} coincide.
\end{lemma}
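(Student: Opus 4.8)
The plan is to identify the two definitions term by term, relying on the coincidence lemmas already established for the underlying index sets and reference nodes; this is the exact reference-element analogue of Lemma~\ref{l:lag-lin-forms-Pkd-for-d-eq-1-are-lag-lin-forms-Pk1}. First I would specialize Definition~\ref{d:lag-lin-forms-Pkd-ref} to $d\eqdef1$, obtaining the family $\hSigmakd=(\hsigma_\aalpha)_{\aalpha\in\calAki}$ given by $\hsigma_\aalpha(\hf)\eqdef\hf(\haa_\aalpha)$ for every test function $\hf:\ArRR$, where I use the identification $\matR^1=\matR$ so that $\ArRdR$ becomes $\ArRR$.

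Next I would rewrite the index set: by Lemma~\ref{l:first-multi-ind-Akd} (with $d\eqdef1$), or equivalently Lemma~\ref{l:multi-ind-Akd-for-d-eq-1-is-Ak1}, we have $\calAki=[0..k]$, so the multi-index $\aalpha$ is an ordinary integer $i\in[0..k]$ and the family is indexed precisely as $\hSigmaki$ in Definition~\ref{d:lag-lin-forms-Pk1-ref}. I would then apply Lemma~\ref{l:lag-nodes-Pkd-ref-for-d-eq-1-are-lag-nodes-Pk1-ref} to identify the reference nodes, namely $\haa_i=\ha_i$ for all $i\in[0..k]$. Substituting this equality into the evaluation formula gives $\hsigma_i(\hf)=\hf(\haa_i)=\hf(\ha_i)$, which is exactly the defining relation of Definition~\ref{d:lag-lin-forms-Pk1-ref}; hence the two families of linear forms coincide.

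I expect no genuine obstacle here: the statement is purely a bookkeeping identification, and all the substantive content (that the index sets agree and that the reference nodes agree) has already been discharged by the earlier coincidence lemmas. The only point that deserves a word of care is the implicit identification of the domains of the test functions, $\ArRdR=\ArRR$ at $d=1$, which guarantees that $\hsigma_\aalpha$ and $\hsigma_i$ act on the same space of functions; this is immediate from $\matR^1=\matR$. Thus the proof reduces to a short chain of ``direct consequence of'' invocations in the style used throughout this chapter.
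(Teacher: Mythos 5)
Your proposal is correct and follows exactly the same route as the paper's proof, which cites Definition~\ref{d:lag-lin-forms-Pkd-ref} (with $d\eqdef1$), Lemma~\ref{l:lag-nodes-Pkd-ref-for-d-eq-1-are-lag-nodes-Pk1-ref}, Lemma~\ref{l:first-multi-ind-Akd} (with $d\eqdef1$), and Definition~\ref{d:lag-lin-forms-Pk1-ref}. Your remark on identifying $\ArRdR$ with $\ArRR$ at $d=1$ is a reasonable extra word of care but adds nothing substantive beyond the paper's argument.
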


\begin{proof}
  Direct consequence of
  Definition~\threfc{d:lag-lin-forms-Pkd-ref}{with $d\eqdef1$},
  Lemma~\thref{l:lag-nodes-Pkd-ref-for-d-eq-1-are-lag-nodes-Pk1-ref},
  Lemma~\threfc{l:first-multi-ind-Akd}{with $d\eqdef1$}, and
  Definition~\thref{d:lag-lin-forms-Pk1-ref}.
\end{proof}

\begin{lemma}[Lagrange linear forms of $\matPkd$ are images of reference]
  \label{l:lag-lin-forms-Pkd-im-ref}
  \mbox{}\\
  Let~$d\geq1$.
  Let~$k\in\matN$.
  Let~$\famvertd{\vv}$ be $d+1$ points in~$\matRd$.
  Let~$\aalpha\in\calAkd$.
  Let~$f:\ArRdR$.\\
  Then, we have $\sigma_\aalpha(f)=\hsigma_\aalpha(\hf)$ where
  $\hf=f\circ\phigeodv$.
\end{lemma}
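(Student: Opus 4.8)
The plan is to prove the identity $\sigma_\aalpha(f)=\hsigma_\aalpha(\hf)$ with $\hf=f\circ\phigeodv$ by simply unfolding the definitions of the two Lagrange linear forms and connecting the current and reference Lagrange nodes through the geometric mapping. This is a short computation: the statement is essentially the $d$-dimensional analogue of Lemma~\ref{l:lag-lin-forms-Pk1-im-ref}, and the proof structure should mirror it exactly.

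First I would fix $\aalpha\in\calAkd$ and $f:\ArRdR$, and set $\hf\eqdef f\circ\phigeodv$. Then I would evaluate the left-hand side using Definition~\ref{d:lag-lin-forms-Pkd}, which gives $\sigma_\aalpha(f)=f(\aa_\aalpha)$. The key geometric input is Lemma~\ref{l:lag-nodes-Pkd-im-ref}, which states $\aa_\aalpha=\phigeodv(\haa_\aalpha)$; substituting this yields
\[
  \sigma_\aalpha (f) = f (\aa_\aalpha) = f (\phigeodv (\haa_\aalpha))
  = (f \circ \phigeodv) (\haa_\aalpha) = \hf (\haa_\aalpha).
\]
Finally, applying Definition~\ref{d:lag-lin-forms-Pkd-ref}, which gives $\hsigma_\aalpha(\hf)=\hf(\haa_\aalpha)$, I would conclude that $\sigma_\aalpha(f)=\hf(\haa_\aalpha)=\hsigma_\aalpha(\hf)$.

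Thus the whole argument is a direct consequence of Definition~\ref{d:lag-lin-forms-Pkd}, Lemma~\ref{l:lag-nodes-Pkd-im-ref}, the definition of composition of functions, and Definition~\ref{d:lag-lin-forms-Pkd-ref}. There is no real obstacle here: the only substantive ingredient is Lemma~\ref{l:lag-nodes-Pkd-im-ref} (itself resting on the properties of the geometric mapping in Lemma~\ref{l:prop-geo-mapping}), and everything else is a matter of rewriting. I note that, as in the remark preceding Lemma~\ref{l:lag-lin-forms-Pkd-im-ref}, an alternative route would invoke Lemma~\ref{l:aff-map-preserves-baryc} to relate the nodes, but the node-image identity already packages exactly what is needed, so the composition-based path is the cleanest.
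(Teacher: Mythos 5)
Your proof is correct and follows exactly the paper's own argument: unfold Definition~\ref{d:lag-lin-forms-Pkd} to get $\sigma_\aalpha(f)=f(\aa_\aalpha)$, substitute $\aa_\aalpha=\phigeodv(\haa_\aalpha)$ from Lemma~\ref{l:lag-nodes-Pkd-im-ref}, and conclude via Definition~\ref{d:lag-lin-forms-Pkd-ref}. Nothing to add.
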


\begin{proof}
  Direct consequence of
  Definition~\thref{d:lag-lin-forms-Pkd},
  Lemma~\thref{l:lag-nodes-Pkd-im-ref}, and
  Definition~\thref{d:lag-lin-forms-Pkd-ref}.
\end{proof}

\section{Unisolvence for $\FElagP{0}{d}$}
\label{s:unisolvence-P0d}

\begin{lemma}[Lagrange linear forms for $\matPod$ are injective]
  \label{l:lag-lin-forms-P0d-inj}
  \mbox{}\\
  Let $d\geq1$.
  Let $\famvertd{\vv}$ be $d+1$ points of~$\matRd$.
  Then, $\phi_{\Sigma^{\famnodeod{\aa}}}$ is injective.
\end{lemma}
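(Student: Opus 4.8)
The plan is to establish injectivity of the linear map $\phi_{\Sigma^{\famnodeod{\aa}}}$ by showing that its kernel is trivial, and then to conclude through the standard characterization of injectivity by a zero kernel. The crucial simplification is that $\matPod$ is nothing but the space of constant functions, so that a single point evaluation already pins down any of its elements.

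First I would record that, since $k=0$, the multi-index set collapses to a singleton: by Lemma~\ref{l:first-multi-ind-Akd}, $\calAod=\{\zzero\}$, so that $\Sigma^{\famnodeod{\aa}}$ reduces to the single Lagrange linear form $\sigma_\zzero$ given by $\sigma_\zzero(f)=f(\aa_\zzero)$, where $\aa_\zzero=\bisobarycv$ is the isobarycenter of $\famvertd{\vv}$ (Definition~\ref{d:lag-nodes-Pkd} and Definition~\ref{d:lag-lin-forms-Pkd}). Linearity of $\phi_{\Sigma^{\famnodeod{\aa}}}$ follows from Lemma~\ref{l:lag-lin-forms-are-linear-Pkd} together with the linearity of $\phi_\Sigma$ noted in Definition~\ref{d:fe-triple}.

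Next, to compute the kernel, I would take $p\in\matPod$ with $\phi_{\Sigma^{\famnodeod{\aa}}}(p)=0$, that is, $\sigma_\zzero(p)=p(\aa_\zzero)=0$. By Lemma~\ref{l:pol-space-P0d-P1d}, there exists $a_\zzero\in\matR$ such that $p=(\xx\mapsto a_\zzero)$; evaluating at $\aa_\zzero$ gives $a_\zzero=p(\aa_\zzero)=0$, hence $p=0$. Thus $\Ker{\phi_{\Sigma^{\famnodeod{\aa}}}}=\{0\}$, and injectivity follows from Definition~\ref{LM-d:kernel} and Lemma~\ref{LM-l:injective-linear-map-has-zero-kernel}.

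The argument contains no real obstacle: the only point worth noting is that, unlike the cases $k\geq1$, one does not invoke any decomposition in a Lagrange basis (such as Lemma~\ref{l:decomp-Pk1-pol-in-lag-basis}), because the value at the single node $\aa_\zzero$ is, by constancy, the value of $p$ everywhere. Note also that affine independence of $\famvertd{\vv}$ is not required here, which is consistent with the hypothesis of the statement asking only for $d+1$ arbitrary points.
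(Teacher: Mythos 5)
Your proof is correct and follows essentially the same route as the paper: identify $\matPod$ with the constant functions via Lemma~\ref{l:pol-space-P0d-P1d}, observe that vanishing at the single node $\aa_\zzero$ forces the constant to be zero, and conclude injectivity through the zero-kernel characterization. The extra observations you add (that $\calAod$ is a singleton, and that affine independence is not needed) are accurate but not required.
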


\begin{proof}
  Direct consequence of
  Definition~\thref{d:fe-triple},
  Lemma~\threfc{l:pol-space-P0d-P1d}{%
    thus for all $p\in\matPod$, there exists $a_0\in\matR$ such that
      $p=(\xx\mapsto a_0)$},
  Definition~\threfc{d:lag-lin-forms-Pkd}{%
    thus $\sigma_{\aalpha}(p)=p(\aa_{\aalpha})=a_0=0$, and
    $p=0$},
  Definition~\thref{LM-d:kernel}, and
  Lemma~\thref{LM-l:injective-linear-map-has-zero-kernel}.
\end{proof}

\begin{remark}
  In the next Lemma, the hypothesis of affinely independent points is not
  necessary.
  In fact, Lemmas~\ref{l:card-lag-lin-forms-Pkd} and~\ref{l:num-lag-nodes-Pkd}
  do not require this hypothesis when~$k\eqdef0$, as there is a single node set
  at the isobarycenter of $\famvertd{\vv}$, which is always defined.

  However, to have a proper FE, it is required to have a {\nondegenerate}
  simplex, and thus the hypothesis remains necessary at that stage.
  This is why we did not relax it here (otherwise, the statements in
  Lemmas~\ref{l:card-lag-lin-forms-Pkd} and~\ref{l:num-lag-nodes-Pkd} would
  have been more intricate).
\end{remark}

\begin{lemma}[unisolvence of $\matPod$]
  \label{l:unisolvence-P0d}
  \mbox{}\\
  Let~$d\geq1$.
  Let $\famvertd{\vv}$ be $d+1$ affinely independent points of~$\matRd$.\\
  Then, $\left(\Kvd,\matPod,\Sigma^{\famnodeod{\aa}}\right)$ satisfies the
  unisolvence property.
\end{lemma}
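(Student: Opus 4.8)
The plan is to follow exactly the pattern used for the one-dimensional case in Lemma~\ref{l:unisolvence-Pk1} and for the reference case in Lemma~\ref{l:unisolvence-Pk1-ref}, specialized to the degenerate degree $k=0$. The unisolvence property is, by Definition~\ref{d:fe-triple}, the bijectivity of $\phi_{\Sigma^{\famnodeod{\aa}}}$, and the cleanest route is to invoke Lemma~\ref{l:inj-implies-unisolvence}. That lemma reduces the goal to two checks: first, the dimension inequality $\dim \matPod \geq \card\left(\Sigma^{\famnodeod{\aa}}\right)$; and second, the injectivity of $\phi_{\Sigma^{\famnodeod{\aa}}}$.

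For the first ingredient, I would apply Lemma~\ref{l:dim-Pkd} with $k\eqdef0$ to get $\dim\matPod = \binom{d}{d} = 1$, and Lemma~\ref{l:card-lag-lin-forms-Pkd} (with $k\eqdef0$, using the affine independence of $\famvertd{\vv}$ that is assumed in the statement) to get $\card\left(\Sigma^{\famnodeod{\aa}}\right) = \binom{d}{d} = 1$. Since both quantities equal~$1$, the required inequality $\dim\matPod = 1 \geq 1 = \card\left(\Sigma^{\famnodeod{\aa}}\right)$ holds by reflexivity of the order on~$\matN$. For the second ingredient, the injectivity of $\phi_{\Sigma^{\famnodeod{\aa}}}$ is precisely Lemma~\ref{l:lag-lin-forms-P0d-inj}, which holds for any $d+1$ points (no affine independence needed there). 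Combining these via Lemma~\ref{l:inj-implies-unisolvence} yields the unisolvence property.

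There is no genuine obstacle here: the $k=0$ case is the simplest instance, since $\matPod$ is one-dimensional (the constant maps, by Lemma~\ref{l:pol-space-P0d-P1d}) and there is a single Lagrange node placed at the isobarycenter $\bisobarycv$ of $\famvertd{\vv}$. The only point requiring a little care is bookkeeping the hypotheses: the injectivity lemma does not require affine independence (a constant polynomial vanishing at one point is identically zero), whereas the cardinal lemma does, and both are available under the stated assumption. As noted in the remark preceding the statement, affine independence is not logically needed for the $k=0$ cardinality count (there is always exactly one node), but it is retained so that the simplex $\Kvd$ is {\nondegenerate} and the triple genuinely forms a finite element; I would mirror that convention and simply pass the affine independence hypothesis through to Lemma~\ref{l:card-lag-lin-forms-Pkd}.
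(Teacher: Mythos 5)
Your proof is correct and follows exactly the paper's own route: reduce via Lemma~\ref{l:inj-implies-unisolvence} to the dimension inequality (from Lemmas~\ref{l:dim-Pkd} and~\ref{l:card-lag-lin-forms-Pkd} with $k\eqdef0$) plus the injectivity of $\phi_{\Sigma^{\famnodeod{\aa}}}$ from Lemma~\ref{l:lag-lin-forms-P0d-inj}. The only cosmetic difference is that the paper explicitly cites Lemma~\ref{l:prop-binom-coef}~\eqref{e:prop-binom-coef-0} to evaluate $\binom{d}{d}=1$, which you use implicitly.
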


\begin{proof}
  Direct consequence of
  Lemma~\thref{l:dim-Pkd},
  Lemma~\threfc{l:card-lag-lin-forms-Pkd}{with $k\eqdef0$},
  Lemma~\threfc{l:prop-binom-coef}{\eqref{e:prop-binom-coef-0}},
  \assume{the definition of order (reflexivity,
    thus $\dim\matPod=1\geq1=\card\left(\Sigma^{\famnodeod{\aa}}\right)$)},
  Lemma~\thref{l:lag-lin-forms-P0d-inj}, and
  Lemma~\thref{l:inj-implies-unisolvence}.
\end{proof}

\section{Unisolvence for $\FElagP{1}{d}$}
\label{s:unisolvence-P1d}

\begin{lemma}[decomposition of $\matPid$ polynomial with $\sigma_\aalpha$]
  \label{l:decomp-P1d-pol-with-sigma}
  \mbox{}\\
  Let~$d\geq1$.
  Let~$\famvertd{\vv}$ be $d+1$ affinely independent points in~$\matRd$.
  Let $p\in\matPid$.
  Then, we have
  \begin{equation}
    \label{e:decomp-P1d-pol-with-sigma}
    p = \sigma_\zzero (p) \, \barcvd_0
    + \sum_{i = 1}^d \sigma_{\ee_i} (p) \, \barcvd_i
    = \sigma_\zzero (p) \, \calL^{\famvertd{\vv}}_0
    + \sum_{i = 1}^d \sigma_{\ee_i} (p) \, \calL^{\famvertd{\vv}}_i.
  \end{equation}
\end{lemma}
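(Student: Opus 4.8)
The plan is to recognize that this statement is an immediate specialization of the already-established decomposition on the Lagrange basis, once the Lagrange linear forms are identified with evaluation at the vertices. First I would recall from Lemma~\ref{l:lag-nodes-Pid-are-vert} that the Lagrange nodes of~$\matPid$ coincide with the vertices, namely $\aa_\zzero=\vv_0$ and $\aa_{\ee_i}=\vv_i$ for all $i\in[1..d]$. Combining this with the definition of the Lagrange linear forms (Definition~\ref{d:lag-lin-forms-Pkd}), which sets $\sigma_\aalpha(p)=p(\aa_\aalpha)$, yields $\sigma_\zzero(p)=p(\vv_0)$ and $\sigma_{\ee_i}(p)=p(\vv_i)$ for each $i\in[1..d]$.

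Next I would invoke Lemma~\ref{l:decomp-P1d-pol-in-lag-basis}, which gives the decomposition $p=\sum_{i=0}^d p(\vv_i)\,\calL^{\famvertd{\vv}}_i$ of any affine polynomial on the Lagrange basis. Splitting the sum into its $i=0$ term and the remaining terms, and substituting the identities $p(\vv_0)=\sigma_\zzero(p)$ and $p(\vv_i)=\sigma_{\ee_i}(p)$ from the previous step, directly produces the second equality of the statement,
\[
  p = \sigma_\zzero(p)\,\calL^{\famvertd{\vv}}_0
  + \sum_{i=1}^d \sigma_{\ee_i}(p)\,\calL^{\famvertd{\vv}}_i.
\]

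Finally, to obtain the first equality I would use Lemma~\ref{l:lag-pol-P1d-is-baryc-coor}, which states that for all $i\in[0..d]$ the barycentric coordinate $\barcvd_i$ equals the Lagrange polynomial $\calL^{\famvertd{\vv}}_i$; replacing each $\calL^{\famvertd{\vv}}_i$ by $\barcvd_i$ in the expression above gives the barycentric form. There is no real obstacle here: the argument is a pure bookkeeping combination of three earlier results, and the only point requiring a moment of care is the correct matching of multi-indices to vertices ($\zzero\leftrightarrow\vv_0$ and $\ee_i\leftrightarrow\vv_i$), so that the node at which each $\sigma$ evaluates is indeed the intended vertex.
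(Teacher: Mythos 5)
Your proposal is correct and follows essentially the same route as the paper: identify $\sigma_\zzero(p)=p(\vv_0)$ and $\sigma_{\ee_i}(p)=p(\vv_i)$ via Lemma~\ref{l:lag-nodes-Pid-are-vert} and Definition~\ref{d:lag-lin-forms-Pkd}, then apply the Lagrange-basis decomposition. The only cosmetic difference is that the paper cites Lemma~\ref{l:decomp-aff-map-with-baryc-coor} directly for the barycentric form, whereas you re-derive it by substituting $\barcvd_i=\calL^{\famvertd{\vv}}_i$ from Lemma~\ref{l:lag-pol-P1d-is-baryc-coor} --- which is exactly how that cited lemma is itself proved.
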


\begin{proof}
  Direct consequence of
  Definition~\thref{d:lag-lin-forms-Pkd},
  Lemma~\threfc{l:lag-nodes-Pid-are-vert}{%
    thus $\sigma_\zzero(p)=p(\vv_0)$, and $\sigma_{\ee_i}(p)=p(\vv_i)$
    for all $i\in[1..d]$},
  Lem\-ma~\thref{l:decomp-aff-map-with-baryc-coor}, and
  Lemma~\thref{l:decomp-P1d-pol-in-lag-basis}.
\end{proof}

\begin{lemma}[Lagrange linear forms for $\matPid$ are injective]
  \label{l:lag-lin-forms-P1d-inj}
  \mbox{}\\
  Let $d\geq1$.
  Let~$\famvertd{\vv}$ be $d+1$ affinely independent points in~$\matRd$.
  Then, $\phi_{\Sigma^{\famnodeid{\aa}}}$ is injective.
\end{lemma}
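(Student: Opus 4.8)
The plan is to reduce injectivity of the linear map $\phi_{\Sigma^{\famnodeid{\aa}}}$ to the triviality of its kernel, and then to exhibit the kernel as $\{0\}$ using the decomposition Lemma~\ref{l:decomp-P1d-pol-with-sigma}, which was set up precisely for this purpose. Since this is exactly the $k=1$ analogue of the reasoning already used for $\matPod$ (Lemma~\ref{l:lag-lin-forms-P0d-inj}) and for $\matPki$ (Lemma~\ref{l:lag-lin-forms-Pk1-inj}), the skeleton of the argument is the same: prove the kernel is zero, then invoke Lemma~\ref{LM-l:injective-linear-map-has-zero-kernel}.

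Concretely, I would take $p\in\matPid$ lying in the kernel, that is $\phi_{\Sigma^{\famnodeid{\aa}}}(p)=0$. By Definition~\ref{d:fe-triple}, this means $\sigma_\aalpha(p)=0$ for every $\aalpha\in\calAid$. Using Lemma~\ref{l:first-multi-ind-Akd} with $k\eqdef1$, we have $\calAid=\{\zzero,\ee_1,\ldots,\ee_d\}$, so the kernel hypothesis supplies exactly $\sigma_\zzero(p)=0$ together with $\sigma_{\ee_i}(p)=0$ for all $i\in[1..d]$. Feeding these vanishing values into Lemma~\ref{l:decomp-P1d-pol-with-sigma}, which (using that $\famvertd{\vv}$ is affinely independent) gives
\[
  p = \sigma_\zzero (p) \, \calL^{\famvertd{\vv}}_0
  + \sum_{i = 1}^d \sigma_{\ee_i} (p) \, \calL^{\famvertd{\vv}}_i,
\]
we obtain $p=0$. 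Hence, from Definition~\ref{LM-d:kernel}, $\Ker{\phi_{\Sigma^{\famnodeid{\aa}}}}=\{0\}$, and Lemma~\ref{LM-l:injective-linear-map-has-zero-kernel} concludes that $\phi_{\Sigma^{\famnodeid{\aa}}}$ is injective.

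There is no deep obstacle in this particular statement: the genuine content has already been absorbed into the upstream results, namely the identification of the barycentric coordinates with the affine Lagrange polynomials (Lemma~\ref{l:lag-pol-P1d-is-baryc-coor}), the fact that the Lagrange nodes of $\matPid$ are the vertices (Lemma~\ref{l:lag-nodes-Pid-are-vert}), and the decomposition Lemma~\ref{l:decomp-P1d-pol-with-sigma} itself. The only point demanding a little care is bookkeeping of the index set: one must check that $\Sigma^{\famnodeid{\aa}}$ is indexed precisely by $\calAid=\{\zzero,\ee_1,\ldots,\ee_d\}$, so that the vanishing of all $\sigma_\aalpha(p)$ furnishes exactly the coefficients $\sigma_\zzero(p)$ and $\sigma_{\ee_i}(p)$ that appear in the decomposition, with none left over and none missing.
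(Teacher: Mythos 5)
Your proof is correct and follows exactly the same route as the paper: identify $\calAid=\{\zzero,\ee_1,\ldots,\ee_d\}$ via Lemma~\ref{l:first-multi-ind-Akd} with $k\eqdef1$, use the decomposition of Lemma~\ref{l:decomp-P1d-pol-with-sigma} to show that the vanishing of all $\sigma_\aalpha(p)$ forces $p=0$, and conclude by the zero-kernel characterization of injectivity. Nothing to add.
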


\begin{proof}
  Direct consequence of
  Definition~\thref{d:fe-triple},
  Lemma~\threfc{l:first-multi-ind-Akd}{with $k\eqdef1$},
  Lemma~\threfc{l:decomp-P1d-pol-with-sigma}{thus
    $\forall\aalpha\in\calAid$, $\sigma_{\aalpha}(p)=0$
    implies $p=0$},
  Definition~\thref{LM-d:kernel}, and
  Lemma~\thref{LM-l:injective-linear-map-has-zero-kernel}.
\end{proof}

\begin{lemma}[unisolvence of $\matPid$]
  \label{l:unisolvence-P1d}
  \mbox{}\hfill
  Let~$d\geq1$.
  Let~$\famvertd{\vv}$ be $d+1$ affinely independent points in~$\matRd$.
  Then, $\left(\Kvd,\matPid,\Sigma^{\famnodeid{\aa}}\right)$ satisfies the
  unisolvence property.
\end{lemma}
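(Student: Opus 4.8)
The plan is to mirror the template already used for the unisolvence of $\matPod$ in Lemma~\ref{l:unisolvence-P0d} and for the reference segment in Lemma~\ref{l:unisolvence-Pk1-ref}: unisolvence will be obtained from Lemma~\ref{l:inj-implies-unisolvence}, which only requires the injectivity of $\phi_{\Sigma^{\famnodeid{\aa}}}$ together with the dimension inequality $\dim\matPid\geq\card(\Sigma^{\famnodeid{\aa}})$. So the proof is purely an assembly of previously established facts, specialized to $k\eqdef1$.

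First I would pin down the two cardinals. Applying Lemma~\ref{l:dim-Pkd} with $k\eqdef1$ gives $\dim\matPid=\binom{1+d}{d}$, and applying Lemma~\ref{l:card-lag-lin-forms-Pkd} with $k\eqdef1$ (valid since $\famvertd{\vv}$ is affinely independent) gives $\card(\Sigma^{\famnodeid{\aa}})=\binom{1+d}{d}$ as well. Since both quantities coincide, the inequality $\dim\matPid\geq\card(\Sigma^{\famnodeid{\aa}})$ holds by reflexivity of the order on $\matN$, so no binomial simplification is even needed at this stage.

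Next I would invoke the injectivity of $\phi_{\Sigma^{\famnodeid{\aa}}}$, which is exactly the content of Lemma~\ref{l:lag-lin-forms-P1d-inj} under the affine independence hypothesis. Feeding this injectivity together with the dimension inequality into Lemma~\ref{l:inj-implies-unisolvence} then yields that $(\Kvd,\matPid,\Sigma^{\famnodeid{\aa}})$ satisfies the unisolvence property, which is the claim.

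There is no genuine obstacle here: the substantive step, namely injectivity, has already been discharged in Lemma~\ref{l:lag-lin-forms-P1d-inj} (itself resting on the decomposition Lemma~\ref{l:decomp-P1d-pol-with-sigma} and on the identification of the $\matPid$ Lagrange nodes with the vertices of $\Kvd$). The present statement is therefore a bookkeeping corollary, and the only point requiring care is to supply the correct instances ($k\eqdef1$) to the dimension and cardinal lemmas so that the two sides of the order comparison line up.
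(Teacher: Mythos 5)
Your proof is correct and follows essentially the same route as the paper: dimension from Lemma~\ref{l:dim-Pkd}, cardinal from Lemma~\ref{l:card-lag-lin-forms-Pkd} with $k\eqdef1$, injectivity from Lemma~\ref{l:lag-lin-forms-P1d-inj}, and the conclusion via Lemma~\ref{l:inj-implies-unisolvence}. The only (immaterial) difference is that the paper additionally invokes Lemma~\ref{l:prop-binom-coef} to reduce both counts to $d+1$, whereas you correctly observe that equality of the two binomial coefficients already suffices for the reflexive comparison.
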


\begin{proof}
  Direct consequence of
  Lemma~\thref{l:dim-Pkd},
  Lemma~\threfc{l:card-lag-lin-forms-Pkd}{with $k\eqdef1$},
  Lemma~\threfc{l:prop-binom-coef}{%
    \eqref{e:prop-binom-coef-1} with $n\eqdef d+1$},
  \assume{the definition of order (reflexivity,
    thus $\dim\matPid=d+1$ is greater than or equal to
    $d+1=\card\left(\Sigma^{\famnodeid{\aa}}\right)$)},
  Lemma~\thref{l:lag-lin-forms-P1d-inj}, and
  Lemma~\thref{l:inj-implies-unisolvence}.
\end{proof}

\section{Unisolvence for $\FElagP{k}{d}$}
\label{s:unisolvence-Pkd}

\begin{lemma}[factorization of zero polynomial on last reference hyperplane]
  \label{l:factor-zero-pol-last-ref-hyperpl}
  \mbox{}\\
  Let~$d\geq1$.
  Let~$k\geq1$.
  Let $\hp\in\matPkd$.
  Then, we have the equivalence
  \begin{equation}
    \label{e:factor-zero-pol-last-ref-hyperpl}
    \hp_{|\hcalH^d_d} = 0
    \EQUIV
    \exists \hq \in \matPkmid,\quad \hp = \hcalL^{1,d}_d \, \hq.
  \end{equation}
\end{lemma}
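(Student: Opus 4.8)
The plan is to first rewrite both sides of the equivalence in concrete terms. By Definition~\thref{d:lag-pol-P1d-ref} (case $i\in[1..d]$), the polynomial $\hcalL^{1,d}_d$ is nothing but the coordinate monomial $X_d$, and by Lemma~\thref{l:ref-face-hyperpl} (the case $i\in[1..d]$ with $i\eqdef d$) the reference hyperplane is $\hcalH^d_d=\{\hxx\in\matRd : \hx_d=0\}=\Ker{\hcalL^{1,d}_d}$. So the content to establish is that $\hp$ vanishes on $\{\hx_d=0\}$ if and only if $\hp$ factors as $X_d\,\hq$ with $\hq\in\matPkmid$.

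The reverse implication is immediate. Given $\hp=\hcalL^{1,d}_d\,\hq$, for any $\hxx$ with $\hx_d=0$ one has $\hcalL^{1,d}_d(\hxx)=\hx_d=0$, hence $\hp(\hxx)=0$, i.e. $\hp_{|\hcalH^d_d}=0$. One may also note for well-definedness that $X_d\,\hq$ indeed lies in $\matPkd$ by Lemma~\thref{l:prod-monom-polynom} with $\aalpha\eqdef\ee_d$, consistent with the standing hypothesis $\hp\in\matPkd$.

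For the forward implication I would split on the dimension. When $d\geq2$, I would apply the Euclidean division Lemma~\thref{l:decomp-Pkd} to $\hp\in\matPkd$ (valid since $d\geq2$ and $k\geq1$): it produces unique $\tp_0\in\matPkdmi$ and $p_1\in\matPkmid$ with $\hp(\hxx)=\tp_0(\txx)+x_d\,p_1(\hxx)$. Evaluating this identity at the points $\hxx=(\txx,0)\in\hcalH^d_d$ and invoking $\hp_{|\hcalH^d_d}=0$ yields $\tp_0(\txx)=0$ for every $\txx\in\matRdmi$, i.e. $\tp_0=0$; therefore $\hp=X_d\,p_1=\hcalL^{1,d}_d\,p_1$, and $\hq\eqdef p_1$ works. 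When $d=1$, the hyperplane degenerates to the single point $\hcalH^1_1=\{0\}$, so the hypothesis reduces to $\hp(0)=0$; writing $\hp=\sum_{\alpha=0}^k a_\alpha X^\alpha$ via Definition~\thref{d:pol-space-Pk1}, this forces $a_0=0$, whence $\hp=X\sum_{\alpha=1}^k a_\alpha X^{\alpha-1}=\hcalL^{1,1}_1\,\hq$ with $\hq\eqdef\sum_{\beta=0}^{k-1}a_{\beta+1}X^\beta\in\matPi{k-1}$.

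The only genuine work is the forward direction, and there the crux is the clean reuse of the already-proved Euclidean division: once $\tp_0$ is recognized as the restriction of $\hp$ to $\{\hx_d=0\}$, an equality of honest functions on $\matRdmi$ that immediately gives $\tp_0=0$, the factorization is automatic. The only mild obstacle is bookkeeping, namely that Lemma~\thref{l:decomp-Pkd} is stated for $d\geq2$, which is why the univariate base case $d=1$ must be handled separately; there, however, the argument is the elementary observation that vanishing at the origin kills the constant term.
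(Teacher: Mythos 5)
Your proof is correct and follows essentially the same route as the paper's: the reverse implication via the kernel characterization of $\hcalH^d_d$ from Lemma~\ref{l:ref-face-hyperpl} together with the zero-product property, and the forward implication via the Euclidean division of Lemma~\ref{l:decomp-Pkd} followed by evaluation at the points $(\txx,0)$ to force $\tp_0=0$. Your separate treatment of $d=1$ is in fact a point in your favour: the paper's proof invokes Lemma~\ref{l:decomp-Pkd} without comment even though that lemma is stated only for $d\geq2$, so your explicit univariate base case (vanishing at~$0$ kills the constant term, leaving a factor of $X=\hcalL^{1,1}_1$) closes a small gap that the paper leaves implicit.
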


\begin{proof}
  \proofpar{From right to left}
  Direct consequence of
  Lemma~\threfc{l:ref-face-hyperpl}{%
    \eqref{e:ref-face-hyperpl-1} with $i\eqdef d$},
  Definition~\thref{LM-d:kernel}, and
  \assume{the zero-product property in~$\matR$}.

  \proofparskip{From left to right}
  From
  Lemma~\thref{l:decomp-Pkd},
  Lemma~\threfc{l:ref-face-hyperpl}{%
    \eqref{e:ref-face-hyperpl-1} with $i\eqdef d$}, and
  Definition~\threfc{d:lag-pol-P1d-ref}{%
    \eqref{e:lag-pol-P1d-ref-1} with $i\eqdef d$},
  there exists $\tp_0\in\matPkdmi$ and $\hq\in\matPkmid$ such that
  $\hp=\tp_0+X_d\,\hq$, thus for all
  $\hxx=(\hx_1,\ldots,\hx_{d-1},0)\in\hcalH^d_d$, we have
  \[
    0 = \hp (\hxx) = \tp_0 (\hx_1, \ldots, \hx_{d - 1})
    + 0 \, \hq (\hx_1, \ldots, \hx_{d - 1}, 0)
    = \tp_0 (\hx_1, \ldots, \hx_{d - 1}),
  \]
  {\ie} $\tp_0=0$, and $\hp=\hcalL^{1,d}_d\,\hq$.
\end{proof}

\begin{remark}
  In the second part of the proof of the next lemma, we used the circular
  permutation~$\permcd{i}$, see Lemma~\ref{l:circ-permut}.
  The geometric mapping with permutation~$\phipermcd{i}$ is illustrated in
  Figure~\ref{f:geo-map-perm-circ0-d=3-hyperplane} when~$i=0$ and~$d=3$.

  Note that one could also use
  for~$\indd$ the transposition~$\permtrspd{i}$ exchanging~$i$
  and~$d$, see Lemma~\ref{l:trsp},
  which is also injective and satisfies $\permtrspd{i}(d)=i$.
\end{remark}

\begin{lemma}[factorization of zero polynomial on hyperplane $\matPkd$]
  \label{l:factor-zero-pol-hyperpl-Pkd}
  \mbox{}\hfill
  Let~$d\geq1$.
  Let~$k\geq1$.
  Let~$\famvertd{\vv}$ be $d+1$ affinely independent points in~$\matRd$.
  Let~$p\in\matPkd$.
  Let~$i\in[0..d]$.\\
  Then, we have the equivalence
  \begin{equation}
    \label{e:factor-zero-pol-hyperpl-Pkd}
    p_{|\calHvd_i} = 0
    \EQUIV
    \exists q \in \matPkmid,\quad p = \barcvd_i \, q.
  \end{equation}
\end{lemma}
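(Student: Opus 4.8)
The plan is to prove the two implications separately, reducing the forward direction to the reference case already handled in Lemma~\ref{l:factor-zero-pol-last-ref-hyperpl} by transporting the configuration through a geometric mapping with a circular permutation, so that the $i$-th current hyperplane is aligned with the \emph{last} reference hyperplane $\hcalH^d_d$.

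For the easy direction (from right to left), I would assume $p = \barcvd_i \, q$ with $q \in \matPkmid$. Since $\barcvd_i = \calL^{\famvertd{\vv}}_i$ belongs to $\matPid$ by Lemma~\ref{l:lag-pol-P1d-is-baryc-coor} and Lemma~\ref{l:pol-space-P0d-P1d}, the product $\barcvd_i \, q$ indeed lies in $\matPkd$ by Lemma~\ref{l:prod-2-polynom} (with $1 + (k-1) = k$), which is consistent with the hypothesis $p \in \matPkd$. Then, using the characterization $\calHvd_i = \Ker{\barcvd_i}$ of Lemma~\ref{l:equiv-def-face-hyperpl} (identity~\eqref{e:equiv-def-face-hyperpl-2}) together with the zero-product property in~$\matR$, for every $\xx \in \calHvd_i$ one has $\barcvd_i(\xx) = 0$, hence $p(\xx) = \barcvd_i(\xx)\,q(\xx) = 0$, giving $p_{|\calHvd_i} = 0$.

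For the hard direction (from left to right), assume $p_{|\calHvd_i} = 0$. I would use the geometric mapping with circular permutation $\phipermcd{i}$, where $\permcd{i}$ is the injective map of Lemma~\ref{l:circ-permut} satisfying $\permcd{i}(d) = i$. By Lemma~\ref{l:geo-mapping-permut}, $\phipermcd{i} \in \AffRdRd$ is bijective, identity~\eqref{e:geo-mapping-permut-3} gives $\phipermcd{i}(\hcalH^d_d) = \calHvd_{\permcd{i}(d)} = \calHvd_i$, and identity~\eqref{e:geo-mapping-permut-2} yields $\barcvd_i = \hcalL^{1,d}_{\permcdinv{i}(i)} \circ \invphipermcd{i} = \hcalL^{1,d}_d \circ \invphipermcd{i}$, since $\permcdinv{i}(i) = d$. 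Setting $\hp \eqdef p \circ \phipermcd{i}$, I get $\hp \in \matPkd$ from Lemma~\ref{l:aff-mapping-of-Pkd-is-Pkl} (with $l \eqdef d$), and $\hp$ vanishes on $\hcalH^d_d$: for $\hxx \in \hcalH^d_d$ we have $\phipermcd{i}(\hxx) \in \calHvd_i$, whence $\hp(\hxx) = p(\phipermcd{i}(\hxx)) = 0$.

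I would then apply Lemma~\ref{l:factor-zero-pol-last-ref-hyperpl} to obtain $\hq \in \matPkmid$ with $\hp = \hcalL^{1,d}_d \, \hq$. Composing on the right with the affine inverse $\invphipermcd{i}$ (affine by Lemma~\ref{l:prop-geo-l-face-mapping} with $l \eqdef d$) and using $p = \hp \circ \invphipermcd{i}$ together with $\barcvd_i = \hcalL^{1,d}_d \circ \invphipermcd{i}$, I obtain
\[
  p = (\hcalL^{1,d}_d \, \hq) \circ \invphipermcd{i}
  = \left( \hcalL^{1,d}_d \circ \invphipermcd{i} \right)
    \left( \hq \circ \invphipermcd{i} \right)
  = \barcvd_i \, q,
\]
where $q \eqdef \hq \circ \invphipermcd{i} \in \matPkmid$ by Lemma~\ref{l:aff-mapping-of-Pkd-is-Pkl} applied at degree $k-1$. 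This completes the equivalence. The main obstacle is precisely this forward direction: pulling back along $\phigeodv$ would only relate $\hcalH^d_i$ with $\calHvd_i$ (Lemma~\ref{l:face-hyperpl-im-ref-face-hyperpl}), whereas the reference factorization is available \emph{only} on $\hcalH^d_d$; the permutation $\permcd{i}$ is what realigns the $i$-th hyperplane with the last one and makes the reduction possible.
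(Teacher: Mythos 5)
Your proof is correct and follows essentially the same route as the paper: the reverse implication via $\calHvd_i=\Ker{\barcvd_i}$ and the zero-product property, and the forward implication by transporting through the circular-permutation mapping $\phipermcd{i}$ so that $\calHvd_i$ corresponds to $\hcalH^d_d$, applying Lemma~\ref{l:factor-zero-pol-last-ref-hyperpl}, and pulling back by the affine inverse. The only (immaterial) differences are bookkeeping choices of which auxiliary lemmas justify the affineness of $\invphipermcd{i}$ and the membership $\hp\in\matPkd$.
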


\begin{proof}
  \proofpar{From right to left}
  Direct consequence of
  Lemma~\threfc{l:equiv-def-face-hyperpl}{%
    \eqref{e:equiv-def-face-hyperpl-2} with $i\eqdef0$},
  Definition~\thref{LM-d:kernel}, and
  \assume{the zero-product property in~$\matR$}.

  \proofparskip{From left to right}
  Assume that $p_{|\calHvd_i}=0$.
  Let~$\hp\eqdef p\circ\phipermcd{i}$.\\
  Then, from
  Lemma~\threfc{l:circ-permut}{%
    thus $\permcd{i}$ bijective, $\permcd{i}(d)=i$ and $\permcdinv{i}(i)=d$},
  \assume{the definition of bijectivity (implies injectivity)},
  Lemma~\threfc{l:geo-mapping-permut}{%
    with $\indd\eqdef\permcd{i}$,
    \eqref{e:geo-mapping-permut-3} with $j\eqdef d$,
    thus $\phipermcd{i}(\hcalH^d_d)=\calHvd_i$ and $\hp_{|\hcalH^d_d}=0$},
  Lemma~\thref{l:factor-zero-pol-last-ref-hyperpl},
  Lemma~\threfc{l:geo-mapping-permut}{%
    thus $\phipermcd{i}$ is affine and bijective, and
    \eqref{e:geo-mapping-permut-2} with $i\eqdef i$},
  \assume{the rules of composition with a bijective function},
  Lemma~\thref{l:inv-of-aff-map-is-aff-map}, and
  Lemma~\threfc{l:aff-mapping-of-Pkd-is-Pkl}{%
    with $l\eqdef d$, $k\eqdef k-1$ and $f\eqdef\invphipermcd{i}$ affine},
  there exists~$\hq\in\matPkmid$ such that $\hp=\hcalL^{1,d}_d\,\hq$, and
  thus we have
  \[
    p = \left( \hcalL^{1, d}_{\permcdinv{i}(i)} \circ \invphipermcd{i} \right)
    \left( \hq \circ \invphipermcd{i} \right) = \barcvd_i \, q,
  \]
  with $q\eqdef\hq\circ\invphipermcd{i}\in\matPkmid$.
\end{proof}

\begin{remark}
  See the sketch of the next two proofs in
  Section~\ref{s:sketch-of-the-proof-of-unisolvence-Pkd}.

  The first step dealing with the induction from dimension~$d-1$ to~$d$ (with
  constant degree~$k$) is illustrated in Figure~\ref{f:geo-hyper-map_d32}.
  The second step that passes from degree~$k-1$ to~$k$ (with constant
  dimension~$d$) is illustrated in Figure~\ref{f:sub-vert-sub-nodes_d3-k3}.
\end{remark}

\begin{lemma}[Lagrange linear forms for $\matPkd$ are injective]
  \label{l:lag-lin-forms-Pkd-inj}
  \mbox{}\hfill
  Let~$d\geq1$.
  Let~$k\geq1$.
  Let~$\famvertd{\vv}$ be $d+1$ affinely independent points in~$\matRd$.
  Then, $\phi_{\Sigma^{\famnodekd{\aa}}}$ is injective.
\end{lemma}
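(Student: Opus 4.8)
The plan is to recast the injectivity of $\phi_{\Sigma^{\famnodekd{\aa}}}$ as a vanishing statement and then prove it by the diagonal double induction of Lemma~\ref{l:double-induction-by-diagonal}. Since each $\sigma_\aalpha$ is linear (Lemma~\ref{l:lag-lin-forms-are-linear-Pkd}), by Lemma~\ref{LM-l:injective-linear-map-has-zero-kernel} together with Definition~\ref{LM-d:kernel} and Definition~\ref{d:lag-lin-forms-Pkd}, injectivity is equivalent to the implication: for every $p\in\matPkd$, if $p(\aa_\aalpha)=0$ for all $\aalpha\in\calAkd$, then $p=0$. I would establish this implication for \emph{all} affinely independent families $\famvertd{\vv}$ at once, applying Lemma~\ref{l:double-induction-by-diagonal} to the predicate $\PropPP(m,n)$ stating the vanishing implication in dimension $m+1$ and degree $n+1$. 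With this shift, the base line $\PropPP(m,0)$ is the case $k=1$ (any dimension), the base line $\PropPP(0,n)$ is the case $d=1$ (any degree), and the diagonal step $\PropPP(m,n+1)\CONJ\PropPP(m+1,n)\IMPLIES\PropPP(m+1,n+1)$ becomes, upon writing $d=m+2$ and $k=n+2$, exactly ``the hypotheses for $(d-1,k)$ and $(d,k-1)$ imply $(d,k)$.''

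For the base cases, the case $d=1$ is Lemma~\ref{l:lag-lin-forms-Pk1-inj} read through Lemmas~\ref{l:lag-lin-forms-Pkd-for-d-eq-1-are-lag-lin-forms-Pk1} and~\ref{l:lag-nodes-Pkd-for-d-eq-1-are-lag-nodes-Pk1}, and the case $k=1$ is Lemma~\ref{l:lag-lin-forms-P1d-inj}; both rest on expressing a polynomial through its nodal values via the decomposition results (Lemmas~\ref{l:decomp-Pk1-pol-in-lag-basis} and~\ref{l:decomp-P1d-pol-with-sigma}). For the inductive step, with $d,k\geq2$, I would assume the vanishing implication holds for $(d-1,k)$ and $(d,k-1)$ and take $p\in\matPkd$ vanishing on $\famnodekd{\aa}$. \emph{Step 1 (drop the dimension):} set $p_0\eqdef p\circ\phindv{\thetinjdmi{0}}$, which lies in $\matPkdmi$ by Lemma~\ref{l:hyperface-geo-mapping-of-Pkd-is-Pkdmi}. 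By Lemma~\ref{l:im-nodes-by-geo-hyperface-mapping}, $\phindv{\thetinjdmi{0}}(\haa^{d-1}_{\aalphap})=\aa_{\fkdo(\aalphap)}$ with $\fkdo(\aalphap)\in\calCkd\subset\calAkd$ (Lemma~\ref{l:Ckd-layers-Akd}), so $p_0$ vanishes on all reference Lagrange nodes $\famnodekdmi{\haa^{d-1}}$. The reference vertices being affinely independent (Lemma~\ref{l:ref-affine-vert-is-affinely-indep}), the hypothesis for $(d-1,k)$ gives $p_0=0$; since $\phindv{\thetinjdmi{0}}$ is onto $\calHvd_0$ (Lemma~\ref{l:geo-hyperface-mapping}), $p$ vanishes on the whole hyperplane $\calHvd_0$, and Lemma~\ref{l:factor-zero-pol-hyperpl-Pkd} (with $i\eqdef0$) yields $q\in\matPkmid$ with $p=\barcvd_0\,q$.

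\emph{Step 2 (drop the degree):} for $\aalpha\in\calAkmid\subset\calAkd$, Lemma~\ref{l:baryc-coor-lag-nodes-Pkd} gives $\barcvd_0(\aa_\aalpha)=1-\len{\aalpha}/k>0$ because $\len{\aalpha}\leq k-1$, so $0=p(\aa_\aalpha)=\barcvd_0(\aa_\aalpha)\,q(\aa_\aalpha)$ forces $q(\aa_\aalpha)=0$ by the zero-product property. By Lemma~\ref{l:Pkm1d-sub-nodes-sub-vert-are-some-nodes-Pkd} (valid since $k\geq2$), these nodes coincide with the Lagrange sub-nodes $\uaa_\aalpha$ of $\matPkmid$ built on the sub-vertices $\famvertd{\uvv}$, which are affinely independent by Lemma~\ref{l:sub-vert-aff-indep}. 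Thus $q$ vanishes on $\famnodekmid{\uaa}$, and the hypothesis for $(d,k-1)$ applied to $\famvertd{\uvv}$ gives $q=0$, whence $p=\barcvd_0\,q=0$, closing the induction.

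The main obstacle I anticipate is orchestration rather than any isolated computation: aligning the index shift so that the two base axes and the two inductive hypotheses match Lemma~\ref{l:double-induction-by-diagonal}, and—most delicately—Step~2, where one must verify that the vanishing set of $q$ is \emph{exactly} the family of sub-nodes. This hinges on combining two facts carefully: that $\barcvd_0$ is nonzero precisely on the nodes indexed by $\calAkmid$ (so the factor $\barcvd_0$ may be cancelled there), and that those nodes are literally the Lagrange nodes of $\matPkmid$ for the shifted, affinely independent sub-vertices, which is what makes the lower-degree hypothesis applicable.
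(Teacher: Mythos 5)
Your proposal is correct and follows essentially the same path as the paper's proof: the same double induction by diagonal with base cases $d=1$ and $k=1$, the same factorization $p=\barcvd_0\,q$ via the hyperface mapping and the $(d-1,k)$ hypothesis, and the same cancellation of $q$ on the sub-nodes of the sub-vertices via the $(d,k-1)$ hypothesis. The only (immaterial) variation is that in Step~2 you compute $\barcvd_0(\aa_\aalpha)=1-\len{\aalpha}/k>0$ directly from Lemma~\ref{l:baryc-coor-lag-nodes-Pkd}, whereas the paper deduces $\barcvd_0(\aa_\aalpha)\neq0$ from $\aa_\aalpha\nin\calHvd_0$ using Lemmas~\ref{l:face-hyperpl-lag-nodes-Pkd} and~\ref{l:equiv-def-face-hyperpl}.
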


\begin{proof}
  For all~$d\geq1$ and~$k\geq1$, let~$P$ be the predicate defined by
  \[
    \PropPP (d, k)
    \eqdef \left[
    \forall \famvertd{\vv} \in \matRd,\quad
      \famvertd{\vv} \mbox{ affinely  independent} \IMPLIES
    \phi_{\Sigma^{\famnodekd{\aa}}} \mbox{ is injective} \right]
  \]
  where
  the application~$\phi_{\Sigma^{\famnodekd{\aa}}}$ is defined in
  Definition~\thref{d:fe-triple},
  the Lagrange linear forms~$\Sigma^{\famnodekd{\aa}}$ are defined in
  Definition~\thref{d:lag-lin-forms-Pkd}, and
  the Lagrange nodes~$\famnodekd{\aa}$ are defined in
  Definition~\thref{d:lag-nodes-Pkd}.

  \proofparskip{Double induction:~$\PropPP(1,k)$, for all~$k\geq1$}\\
  Direct consequence of
  Lemma~\threfc{l:aff-indep-family-of-two-elements}{thus $v_0\neq v_1$},
  Lemma~\thref{l:lag-lin-forms-Pk1-inj}, and
  Lemma~\thref{l:lag-lin-forms-Pkd-for-d-eq-1-are-lag-lin-forms-Pk1}.

  \proofparskip{Double induction:~$\PropPP(d,1)$, for all~$d\geq1$}\\
  Direct consequence of
  Lemma~\thref{l:lag-lin-forms-P1d-inj}.

  \proofparskip{Double induction:
    $\PropPP(d-1,k)\Conj\PropPP(d,k-1)\Implies\PropPP(d,k)$,
    for all $d,k\geq2$}\\
  Let~$d\geq2$.
  Let~$k\geq2$.
  Assume that~$\PropPP(d-1,k)$ and~$\PropPP(d,k-1)$ hold.\\
  Let~$\famvertd{\vv}$ be $d+1$ affinely independent points in~$\matRd$.
  Let~$p\in\matPkd$ such that for all $\aalpha\in\calAkd$, $p(\aa_{\aalpha})=0$
  where the Lagrange nodes~$\famnodekd{\aa}$ are defined in
  Definition~\thref{d:lag-nodes-Pkd}.

  \proofparskip{Step 1: factorization using $\PropPP(d-1,k)$}
  Let~$p_0\eqdef p\circ\phindv{\thetinjdmi{0}}$.\\
  Let~$\aalphap\in\calAkdmi$.
  Then, from
  Lemma~\threfc{l:hyperface-geo-mapping-of-Pkd-is-Pkdmi}{%
    with $i\eqdef0$, thus $p_0\in\matPkdmi$},
  Lemma~\thref{l:Ckd-layers-Akd},
  Lemma~\threfc{l:card-Ckd-Akdm1}{%
    thus $\fkdo(\aalphap)\in\calCkd\subset\calAkd$},
  Lemma~\threfc{l:im-nodes-by-geo-hyperface-mapping}{with $i\eqdef0$}, and
  the assumption on~$p$,
  we have $p_0(\haa^{d-1}_{\aalphap})=p(\aa_{\fkdo(\aalphap)})=0$.\\
  Then, from
  Lemma~\threfc{l:ref-affine-vert-is-affinely-indep}{with $d-1\geq1$}, and
  the property~$\PropPP(d-1,k)$ applied to the reference
  vertices~$\famvertdmi{\hvv^{d-1}}$
    (thus using the reference nodes~$\famnodekdmi{\haa}$),
  $\phi_{\Sigma^{\famnodekdmi{\haa}}}$ is injective. Hence, from
  Lemma~\thref{LM-l:injective-linear-map-has-zero-kernel},
  Definition~\thref{LM-d:kernel},
  Definition~\thref{d:fe-triple}, and
  Definition~\thref{d:lag-lin-forms-Pkd},
  we have $\phi_{\Sigma^{\famnodekdmi{\haa}}}(p_0)
  =(p_0(\haa^{d-1}_{\aalphap}))_{\aalphap\in\calAkdmi}=0$ implies $p_0=0$.\\
  Finally, from
  Lemma~\threfc{l:geo-hyperface-mapping}{%
    with $i\eqdef0$, bijection from $\matRdmi$ onto $\calHvd_0$},
  \assume{the rules of composition with a bijective function}, and
  Lemma~\threfc{l:factor-zero-pol-hyperpl-Pkd}{with $i\eqdef0$},
  we have
  \[
    \forall \xx \in \calHvd_0,\qquad
    p (\xx) = p_0 \circ \left( \phindv{\thetinjdmi{0}} \right)^{-1} (\xx) = 0,
  \]
  {\ie} $p_{|\calHvd_0}=0$, and there exists $q\in\matPkmid$ such that
  $p=\barcvd_0\,q$.

  \proofparskip{Step 2: cancellation using $\PropPP(d,k-1)$}
  Let~$\aalpha\in\calAkmid$.\\
  Then, from
  Lemma~\threfc{l:Ckd-layers-Akd}{%
    thus $\aalpha\in\calAkd$ and $\calAkd\setminus\calCkd=\calAkmid$},
  Lemma~\threfc{l:face-hyperpl-lag-nodes-Pkd}{%
    \eqref{e:face-hyperpl-lag-nodes-Pkd-0}, contrapositive,
    thus $\aa_\aalpha\nin\calHvd_0$},
  Definition~\thref{LM-d:kernel},
  Lemma~\threfc{l:equiv-def-face-hyperpl}{%
    \eqref{e:equiv-def-face-hyperpl-2} with $i\eqdef0$, contrapositive,
    thus $\barcvd_0(\aa_\aalpha)\neq0$},
  the assumption on~$p$, and
  \assume{the zero-product property in~$\matR$
    (applied to $\barcvd_0(\aa_\aalpha)\,q(\aa_\aalpha)=0$)},
  we have $q(\aa_\aalpha)=0$.
  Thus, from
  Lemma~\thref{l:Pkm1d-sub-nodes-sub-vert-are-some-nodes-Pkd},
  we have $q(\uaa_\aalpha)=0$.\\
  Finally, from
  Lemma~\thref{l:sub-vert-aff-indep}, and
  the property~$\PropPP(d,k-1)$ applied to the sub-vertices~$\famvertd{\uvv}$
  (thus using the $\matPkmid$~sub-nodes~$\famnodekmid{\uaa}$),
  $\phi_{\Sigma^{\famnodekmid{\uaa}}}$ is injective.
  Hence, from
  Lemma~\thref{LM-l:injective-linear-map-has-zero-kernel},
  Definition~\thref{LM-d:kernel}, and
  Definition~\thref{d:fe-triple},
  we have
  \[
    \phi_{\Sigma^{\famnodekmid{\uaa}}} (q)
    = (q (\uaa_\aalpha))_{\aalpha \in \calAkmid}
    = 0
    \IMPLIES
    q = 0.
  \]
  Thus, we also have~$p=0$, and from
  Lemma~\thref{LM-l:injective-linear-map-has-zero-kernel},
  Definition~\thref{LM-d:kernel}, and
  Definition~\thref{d:fe-triple},
  $\phi_{\Sigma^{\famnodekd{\aa}}}$ is injective, {\ie} the
  property~$\PropPP(d,k)$ holds, which concludes the double induction.

  \medskip\noindent
  Therefore, from
  Lemma~\threfc{l:double-induction-by-diagonal}{%
    starting from~1 for both $d$ and $k$},
  the property~$\PropPP(d,k)$ holds for all $d,k\geq1$.
\end{proof}

\begin{theorem}[unisolvence of $\matPkd$]
  \label{t:unisolvence-Pkd}
  \mbox{}\hfill
  Let~$d\geq1$.
  Let~$k\geq1$.
  Let~$\famvertd{\vv}$ be $d+1$ affinely independent points in~$\matRd$.
  Then, $\left(\Kvd,\matPkd,\Sigma^{\famnodekd{\aa}}\right)$ satisfies the
  unisolvence property.
\end{theorem}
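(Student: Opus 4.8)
The plan is to deduce unisolvence from injectivity together with a dimension count, exactly in the pattern furnished by Lemma~\ref{l:inj-implies-unisolvence}. By Definition~\ref{d:fe-triple}, unisolvence of $\left(\Kvd,\matPkd,\Sigma^{\famnodekd{\aa}}\right)$ means that the collection map $\phi_{\Sigma^{\famnodekd{\aa}}}$ is an isomorphism. Since $\matPkd$ is a finite-dimensional {\vectorspace} (Lemmas~\ref{l:Pkd-sp} and~\ref{l:dim-Pkd}) and the linear forms $\sigma_\aalpha$ are genuinely linear (Lemma~\ref{l:lag-lin-forms-are-linear-Pkd}), it suffices to verify the two hypotheses of Lemma~\ref{l:inj-implies-unisolvence}: that $\dim\matPkd \geq \card\left(\Sigma^{\famnodekd{\aa}}\right)$ and that $\phi_{\Sigma^{\famnodekd{\aa}}}$ is injective.

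First I would pin down the two integer quantities. By Lemma~\ref{l:dim-Pkd}, we have $\dim\matPkd = \binomkpdd$. By Lemma~\ref{l:card-lag-lin-forms-Pkd} (which uses the affine independence of $\famvertd{\vv}$ through the distinctness of the Lagrange nodes, Lemma~\ref{l:num-lag-nodes-Pkd}), we likewise have $\card\left(\Sigma^{\famnodekd{\aa}}\right) = \binomkpdd$. In particular the dimension of the source equals the cardinality of the family, so the required inequality $\dim\matPkd \geq \card\left(\Sigma^{\famnodekd{\aa}}\right)$ holds by reflexivity of the order.

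The decisive input is injectivity, which I would import directly from Lemma~\ref{l:lag-lin-forms-Pkd-inj}: under the standing hypothesis that $\famvertd{\vv}$ is affinely independent, $\phi_{\Sigma^{\famnodekd{\aa}}}$ is injective. With both hypotheses of Lemma~\ref{l:inj-implies-unisolvence} now in hand, that lemma yields unisolvence and closes the argument. Thus the theorem itself is a short corollary of results already established.

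The genuinely hard part is not this assembly but the injectivity statement it rests on, namely Lemma~\ref{l:lag-lin-forms-Pkd-inj}. That proof proceeds by the double induction ``by diagonal'' of Lemma~\ref{l:double-induction-by-diagonal} on $(d,k)$, with the base cases $d=1$ and $k=1$ handled respectively by Lemmas~\ref{l:lag-lin-forms-Pk1-inj} and~\ref{l:lag-lin-forms-P1d-inj}, and the inductive step combining a factorization across the hyperface opposite $\vv_0$ (step~1, using $\PropPP(d-1,k)$, the geometric hyperface mapping of Lemma~\ref{l:geo-hyperface-mapping}, and the Euclidean-division factorization of Lemma~\ref{l:factor-zero-pol-hyperpl-Pkd}) with a reduction to sub-nodes on a shrunken simplex (step~2, using $\PropPP(d,k-1)$ and Lemma~\ref{l:Pkm1d-sub-nodes-sub-vert-are-some-nodes-Pkd}). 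Since all of this is packaged in Lemma~\ref{l:lag-lin-forms-Pkd-inj}, the only subtlety to keep in mind while assembling the theorem is that each invoked lemma requires the affine independence of $\famvertd{\vv}$, which is precisely the standing hypothesis.
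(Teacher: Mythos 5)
Your proposal is correct and matches the paper's own proof, which also assembles Theorem~\ref{t:unisolvence-Pkd} from Lemma~\ref{l:dim-Pkd}, Lemma~\ref{l:card-lag-lin-forms-Pkd}, reflexivity of the order, Lemma~\ref{l:lag-lin-forms-Pkd-inj}, and Lemma~\ref{l:inj-implies-unisolvence}. Your summary of the double-induction argument behind the injectivity lemma is likewise faithful to the paper.
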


\begin{proof}
  Direct consequence of
  Lemma~\thref{l:dim-Pkd},
  Lemma~\thref{l:card-lag-lin-forms-Pkd},
  \assume{the definition of order (reflexivity,
    thus $\dim\matPkd\geq\card\left(\Sigma^{\famnodekd{\aa}}\right)$)},
  Definition~\thref{d:lag-lin-forms-Pkd},
  Lemma~\thref{l:lag-lin-forms-Pkd-inj}, and
  Lemma~\thref{l:inj-implies-unisolvence}.
\end{proof}

\begin{remark}
  \mbox{}\\
  The proof of next lemma (from left to right) is similar to step~1
  in the proof of Lemma~\ref{l:lag-lin-forms-Pkd-inj}.

  This face unisolvence result allows to ensure the continuity of functions
  that are piecewise~$\matPkd$ in each element (assuming that the mesh is
  conforming).
  Indeed, let~$f$ be a function that is in~$\matPkd$ in two adjacent mesh
  cells~$K$ and~$\Kp$, sharing a face $\calF=K\cap\Kp$.
  To have~$f$ continuous, it suffices to enforce that the two
  polynomials~$f_{|K}$ and~$f_{|\Kp}$ take the same values at the nodes of the
  face~$\calF$.
\end{remark}


\begin{lemma}[face unisolvence of $\matPkd$]
  \label{l:face-unisolvence-Pkd}
  \mbox{}\hfill
  Let~$d\geq2$.
  Let~$k\geq1$.
  Let~$\calAkdo\eqdef\calCkd$.\\
  Let~$\famvertd{\vv}$ be $d+1$ affinely independent points in~$\matRd$.
  Let~$p\in\matPkd$.
  Let~$i\in[0..d]$.\\
  Then, we have the {\em face unisolvence property}:
  \begin{equation}
    \label{e:face-unisolvence-Pkd}
    \left( \forall \aalpha \in \calAkdi,\quad
    p (\aa_\aalpha) = 0 \right) \EQUIV p_{|\calHvd_i} = 0.
  \end{equation}
\end{lemma}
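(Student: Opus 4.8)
The plan is to prove the two implications separately. The implication from right to left is immediate, while the one from left to right is a verbatim generalization of step~1 in the proof of Lemma~\thref{l:lag-lin-forms-Pkd-inj}, now carried out for an arbitrary $i\in[0..d]$ instead of only $i=0$.

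\textbf{From right to left.} Assume $p_{|\calHvd_i}=0$. Let $\aalpha\in\calAkdi$ (recall the convention $\calAkdo\eqdef\calCkd$). Then Lemma~\thref{l:face-hyperpl-lag-nodes-Pkd} gives $\aa_\aalpha\in\calHvd_i$ (identity~\eqref{e:face-hyperpl-lag-nodes-Pkd-0} for $i=0$, identity~\eqref{e:face-hyperpl-lag-nodes-Pkd-1} for $i\in[1..d]$), whence $p(\aa_\aalpha)=0$. This direction presents no difficulty.

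\textbf{From left to right.} Assume $p(\aa_\aalpha)=0$ for all $\aalpha\in\calAkdi$. First I would set $p_i\eqdef p\circ\phindv{\thetinjdmi{i}}$ and observe, via Lemma~\thref{l:hyperface-geo-mapping-of-Pkd-is-Pkdmi}, that $p_i\in\matPkdmi$ (here $d-1\geq1$ since $d\geq2$). Next, for any $\aalphap\in\calAkdmi$, Lemma~\thref{l:im-nodes-by-geo-hyperface-mapping} yields $p_i(\haa^{d-1}_{\aalphap})=p(\aa_{\fkdi(\aalphap)})$; since $\fkdi$ maps $\calAkdmi$ into $\calAkdi$ (by Lemma~\thref{l:card-Ckd-Akdm1} for $i=0$, landing in $\calCkd=\calAkdo$, and by Lemma~\thref{l:card-Akdi-Akdm1} for $i\in[1..d]$), the hypothesis on $p$ forces $p_i(\haa^{d-1}_{\aalphap})=0$. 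Then I would invoke the injectivity already established in one dimension lower: the reference vertices in dimension $d-1$ are affinely independent by Lemma~\thref{l:ref-affine-vert-is-affinely-indep}, so Lemma~\thref{l:lag-lin-forms-Pkd-inj}, applied in dimension $d-1\geq1$ and degree $k\geq1$ to the reference nodes, makes $\phi_{\Sigma^{\famnodekdmi{\haa}}}$ injective. As all its components vanish, Definition~\thref{d:lag-lin-forms-Pkd}, Definition~\thref{d:fe-triple}, Definition~\thref{LM-d:kernel}, and Lemma~\thref{LM-l:injective-linear-map-has-zero-kernel} give $p_i=0$. Finally, since $\phindv{\thetinjdmi{i}}$ is a bijection from $\matRdmi$ onto $\calHvd_i$ (Lemma~\thref{l:geo-hyperface-mapping}), composing with its inverse yields $p(\xx)=p_i\circ\left(\phindv{\thetinjdmi{i}}\right)^{-1}(\xx)=0$ for every $\xx\in\calHvd_i$, that is $p_{|\calHvd_i}=0$.

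The only real obstacle is the uniform treatment of the index $i$: I must make sure that the Lagrange nodes lying on $\calHvd_i$ are exactly $\{\aa_\aalpha\st\aalpha\in\calAkdi\}$ under the convention $\calAkdo=\calCkd$, and that $\fkdi$ furnishes the correct bijection onto this index set in both the case $i=0$ (image $\calCkd$) and the case $i\in[1..d]$ (image $\calAkdi$). Once this identification is settled, the argument follows step~1 of the proof of Lemma~\thref{l:lag-lin-forms-Pkd-inj} line by line, the difference being that here the vanishing is propagated from the nodes of $\calHvd_i$ to the whole hyperplane without any subsequent factorization.
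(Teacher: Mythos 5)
Your proof is correct and follows essentially the same route as the paper's: the right-to-left direction via Lemma~\ref{l:face-hyperpl-lag-nodes-Pkd}, and the left-to-right direction by pulling back through $\phindv{\thetinjdmi{i}}$, applying the injectivity of $\phi_{\Sigma^{\famnodekdmi{\haa}}}$ in dimension $d-1$ (Lemma~\ref{l:lag-lin-forms-Pkd-inj}), and transporting the vanishing back to $\calHvd_i$ by bijectivity. The identification of the nodes on $\calHvd_i$ with $\{\aa_\aalpha \st \aalpha\in\calAkdi\}$ via $\fkdi$ (with image $\calCkd=\calAkdo$ for $i=0$) that you flag as the only delicate point is handled in the paper exactly as you propose, through Lemmas~\ref{l:card-Ckd-Akdm1}, \ref{l:card-Akdi-Akdm1} and~\ref{l:im-nodes-by-geo-hyperface-mapping}.
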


\begin{proof}
  \proofpar{From right to left}\\
  Direct consequence of
  Lemma~\threfc{l:face-hyperpl-lag-nodes-Pkd}{thus $\aa_\aalpha\in\calHvd_i$}.

  \proofparskip{From left to right}
  Assume that for all $\aalpha\in\calAkdi$, we have $p(\aa_{\aalpha})=0$.
  Let~$p_i\eqdef p\circ\phindv{\thetinjdmi{i}}$.\\
  Let~$\aalphap\in\calAkdmi$.
  Then, from
  Lemma~\threfc{l:hyperface-geo-mapping-of-Pkd-is-Pkdmi}{%
    thus $p_i\in\matPkdmi$},
  Lemma~\threfc{l:card-Ckd-Akdm1}{%
    for $i=0$, thus $\fkdo(\aalphap)\in\calAkdo$},
  Lemma~\threfc{l:card-Akdi-Akdm1}{%
    for $i\in[1..d]$, thus $\fkdi(\aalphap)\in\calAkdi$},
  Lemma~\thref{l:im-nodes-by-geo-hyperface-mapping}, and
  the assumption on~$p$,
  we have $p_i(\haa^{d-1}_{\aalphap})=p(\aa_{\fkdi(\aalphap)})=0$.\\
  Then, from
  Lemma~\threfc{l:ref-affine-vert-is-affinely-indep}{with $d-1\geq1$},
  Lemma~\threfc{l:lag-lin-forms-Pkd-inj}{%
    with $d-1\geq1$ and $\famvertdmi{\hvv^{d-1}}$},
  Lem\-ma~\thref{LM-l:injective-linear-map-has-zero-kernel},
  Definition~\thref{LM-d:kernel},
  Definition~\thref{d:fe-triple}, and
  Definition~\thref{d:lag-lin-forms-Pkd},
  $\phi_{\Sigma^{\famnodekdmi{\haa}}}$ is injective, and thus
  $\phi_{\Sigma^{\famnodekdmi{\haa}}}(p_i)
  =(p_i(\haa^{d-1}_{\aalphap}))_{\aalphap\in\calAkdmi}=0$ implies $p_i=0$.\\
  Finally, from
  Lemma~\threfc{l:geo-hyperface-mapping}{%
    bijection from $\matRdmi$ onto $\calHvd_i$}, and
  \assume{the rules of composition with a bijective function},
  we have
  \[
    \forall \xx \in \calHvd_i,\quad
    p (\xx) = p_i \circ \left( \phindv{\thetinjdmi{i}} \right)^{-1} (\xx) = 0,
  \]
  {\ie} $p_{|\calHvd_i}=0$.
\end{proof}

\clearpage
\section{{\ToPFEkd} Lagrange finite element}
\label{s:Pkd-lag-fe}

\begin{theorem}[$\FElagP{k}{d}$ Lagrange finite element]
  \label{t:Pkd-lag-fe}
  \mbox{}\hfill
  Let~$d\geq1$.
  Let~$k\in\matN$.
  Let $\famvertd{\vv}$ be $d+1$ affinely independent points of~$\matRd$.
  Then, $\FElagP{k}{d}\eqdef\left(\Kvd,\matPkd,\Sigma^{\famnodekd{\aa}}\right)$
  is a finite element.

  It is called the
  {\em Lagrange finite element of degree~$k$ associated with
    vertices~$\famvertd{\vv}$}.
\end{theorem}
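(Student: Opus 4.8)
The plan is to verify directly the three defining conditions of Definition~\ref{d:fe-triple}, taking $q \eqdef 1$ since every element of $\matPkd$ is a map from $\matRd$ to $\matR$.

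First I would check that the geometry $K = \Kvd$ is admissible. As $\Kvd$ is by Definition~\ref{d:simplex} the convex envelope of the $d+1$ points $\famvertd{\vv}$, it is a polyhedron; the affine independence hypothesis then lets me invoke Lemma~\ref{l:non-trivial-simplex} to conclude that its interior is nonempty, so $K$ is a nontrivial polyhedron as required. Second, I would check that the approximation space $P = \matPkd$ is a nontrivial finite-dimensional vector space: that it is a vector space is Lemma~\ref{l:Pkd-sp}, and Lemma~\ref{l:dim-Pkd} gives $\dim \matPkd = \binomkpdd \geq 1$, so $P$ is finite-dimensional and nonzero.

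Third, and this is the substantive point, I would establish unisolvence, namely that $\phi_{\Sigma^{\famnodekd{\aa}}}$ is an isomorphism. Here a case split on $k$ seems unavoidable, because the main unisolvence Theorem~\ref{t:unisolvence-Pkd} only covers $k \geq 1$, whereas the present statement allows $k \in \matN$, including $k = 0$. For $k = 0$ I would invoke Lemma~\ref{l:unisolvence-P0d}, and for $k \geq 1$ Theorem~\ref{t:unisolvence-Pkd}; both require exactly the affine independence of $\famvertd{\vv}$ that is assumed. Combining the three verified conditions through Definition~\ref{d:fe-triple} (with $q \eqdef 1$) then yields the claim.

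The only real difficulty hides behind the third step, the unisolvence property, but that work is already packaged upstream: the deep part is Theorem~\ref{t:unisolvence-Pkd}, whose proof rests on the double induction of Lemma~\ref{l:lag-lin-forms-Pkd-inj}, together with the degenerate-degree case handled by Lemma~\ref{l:unisolvence-P0d}. At the level of this theorem the task therefore reduces to correctly dispatching the two regimes of $k$ and assembling the geometric, algebraic, and unisolvence ingredients; no new computation is expected.
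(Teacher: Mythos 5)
Your proposal is correct and follows exactly the paper's own proof: the paper likewise verifies the three conditions of Definition~\ref{d:fe-triple} with $q\eqdef1$, citing Lemma~\ref{l:non-trivial-simplex} for the geometry, Lemma~\ref{l:dim-Pkd} for the approximation space, and dispatching unisolvence via Lemma~\ref{l:unisolvence-P0d} for $k=0$ and Theorem~\ref{t:unisolvence-Pkd} for $k\geq1$. The only cosmetic difference is your additional explicit citation of Lemma~\ref{l:Pkd-sp}, which the paper leaves implicit.
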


\begin{proof}
  Direct consequence of
  Lemma~\thref{l:non-trivial-simplex},
  Lemma~\threfc{l:dim-Pkd}{thus $\dim\matPkd\geq1$ and $\matPkd\neq\{0\}$},
  Lemma~\threfc{l:unisolvence-P0d}{for $k=0$},
  Theorem~\threfc{t:unisolvence-Pkd}{for $k\geq1$}, and
  Definition~\threfc{d:fe-triple}{with $q\eqdef1$}.
\end{proof}

\begin{lemma}[$\FElagP{k}{d}$ Lagrange finite element for $d=1$ is
  Lagrange finite element on a current segment]
  \label{l:Pkd-lag-fe-for-d-eq-1-is-Pk1-lag-fe}
  \mbox{}\hfill
  Let~$k\in\matN$.
  Let~$\famverti{\vv}=\famverti{v}$ be two points in~$\matR^1$.\\
  Then, $\FElagP{k}{d}$ for $d\eqdef1$ and~$\FElagP{k}{1}$ from
  Theorem~\ref{t:Pk1-lag-fe} coincide.
\end{lemma}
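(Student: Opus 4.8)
The plan is to establish that the two triples are equal slot by slot, which suffices since both $\FElagP{k}{d}$ (taken at $d\eqdef1$) and $\FElagP{k}{1}$ are \emph{defined} as triples consisting of a geometry, an approximation space, and a family of degrees of freedom. Recall from Theorem~\ref{t:Pkd-lag-fe} that $\FElagP{k}{d}\eqdef(\Kvd,\matPkd,\Sigma^{\famnodekd{\aa}})$, and from Theorem~\ref{t:Pk1-lag-fe} that $\FElagP{k}{1}\eqdef(\Kvi,\matPki,\Sigma^{\famnodeki{a}})$. Under the identification $\famverti{\vv}=\famverti{v}$ of the pair of points in $\matR^1=\matR$, I would compare the three components in turn.

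First, for the geometry: at $d\eqdef1$ the set $\Kvd$ and the set $\Kvi$ are both given by Definition~\ref{d:simplex} applied to the same two points $\famverti{v}$, so they are literally the same convex envelope, and nothing beyond unfolding the definition is needed. Second, for the approximation space, the equality $\matPkd=\matPki$ at $d\eqdef1$ is exactly Lemma~\ref{l:pol-space-Pkd-for-d-eq-1-is-Pk1}. Third, for the degrees of freedom, the equality $\Sigma^{\famnodekd{\aa}}=\Sigma^{\famnodeki{a}}$ at $d\eqdef1$ is exactly Lemma~\ref{l:lag-lin-forms-Pkd-for-d-eq-1-are-lag-lin-forms-Pk1}, which itself rests on the coincidence of the Lagrange nodes proven in Lemma~\ref{l:lag-nodes-Pkd-for-d-eq-1-are-lag-nodes-Pk1}. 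Combining these three identities yields equality of the triples.

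The statement carries no genuine difficulty; it is a bookkeeping result that closes the loop between the segment-specific construction of Chapter~\ref{c:Pk1-lag-fe} and the general construction for $d\geq1$. The only points that need a little care are purely notational: the identification of the boldface family $\famverti{\vv}$ (points of $\matRd$ with $d\eqdef1$) with the scalar family $\famverti{v}$ (points of $\matR$), and the observation that none of the three componentwise identities requires the two points to be distinct (equivalently affinely independent, by Lemma~\ref{l:aff-indep-family-of-two-elements}), since the cited lemmas and Definition~\ref{d:simplex} hold for an arbitrary $\famverti{v}$. Consequently the expected main obstacle is essentially nil: the whole proof amounts to invoking the three already-established coincidence results, in the order geometry, then space, then linear forms.
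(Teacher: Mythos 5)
Your proof is correct and follows essentially the same route as the paper: unfold the two triples from Theorems~\ref{t:Pkd-lag-fe} and~\ref{t:Pk1-lag-fe} and match components, with Lemma~\ref{l:lag-lin-forms-Pkd-for-d-eq-1-are-lag-lin-forms-Pk1} handling the degrees of freedom (the paper leaves the geometry and the identification $\matPkd=\matPki$ implicit, which you spell out via Definition~\ref{d:simplex} and Lemma~\ref{l:pol-space-Pkd-for-d-eq-1-is-Pk1}). Your remark that none of the componentwise identities needs the two points to be distinct is a correct and harmless refinement.
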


\begin{proof}
  Direct consequence of
  Theorem~\threfc{t:Pkd-lag-fe}{with $d\eqdef1$},
  Lem\-ma~\thref{l:lag-lin-forms-Pkd-for-d-eq-1-are-lag-lin-forms-Pk1}, and
  Theorem~\thref{t:Pk1-lag-fe}.
\end{proof}

\begin{theorem}[$\FElagPref{k}{d}$ reference Lagrange finite element]
  \label{t:Pkd-lag-fe-ref}
  \mbox{}\\
  Let~$d\geq1$.
  Let~$k\in\matN$.
  Then, $\FElagPref{k}{d}\eqdef\left(\hKd,\matPkd,\hSigmakd\right)$ is a
  finite element.

  It is called the
  {\em reference Lagrange finite element of degree~$k$ in dimension~$d$}.
\end{theorem}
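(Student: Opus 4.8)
The plan is to obtain the reference Lagrange finite element as the special case of the current Lagrange finite element (Theorem~\ref{t:Pkd-lag-fe}) corresponding to the reference vertices $\famvertd{\hvv}$. The first step is to observe that these reference vertices are affinely independent, which is exactly Lemma~\ref{l:ref-affine-vert-is-affinely-indep}. Hence Theorem~\ref{t:Pkd-lag-fe}, applied with $\famvertd{\vv}\eqdef\famvertd{\hvv}$, already guarantees that the triple $\left(\Khvd,\matPkd,\Sigma^{\famnodekd{\haa}}\right)$ is a finite element, covering both the $k=0$ and $k\geq1$ cases at once.

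The remaining work is purely to identify this triple with $\FElagPref{k}{d}=\left(\hKd,\matPkd,\hSigmakd\right)$ component by component. The geometry matches because the simplex built on the reference vertices is the reference simplex, $\Khvd=\hKd$, by Lemma~\ref{l:simplex-of-ref-vert-is-ref-simplex}. The approximation space $\matPkd$ is literally the same object in both triples. For the degrees of freedom, I would use that the reference Lagrange nodes $\famnodekd{\haa}$ are, by definition (Lemma~\ref{l:lag-nodes-Pkd-ref}), the Lagrange nodes associated with the reference vertices; then comparing Definition~\ref{d:lag-lin-forms-Pkd} (specialized to $\famvertd{\hvv}$) with Definition~\ref{d:lag-lin-forms-Pkd-ref} shows that $\Sigma^{\famnodekd{\haa}}$ and $\hSigmakd$ coincide, since both send $\hf$ to $\hf(\haa_\aalpha)$ for each $\aalpha\in\calAkd$.

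Alternatively, and just as directly, one could reassemble the three defining conditions of Definition~\ref{d:fe-triple} from the building blocks, exactly as in the one-dimensional reference Theorem~\ref{t:Pk1-lag-fe-ref}: the nonempty interior of $\hKd$ from Lemma~\ref{l:non-trivial-ref-simplex}, the nontrivial finite-dimensionality of $\matPkd$ from Lemma~\ref{l:dim-Pkd} (its dimension $\binomkpdd$ is at least $1$), and the unisolvence of $\hSigmakd$ from Lemma~\ref{l:unisolvence-P0d} when $k=0$ and from Theorem~\ref{t:unisolvence-Pkd} when $k\geq1$, both applied to the affinely independent reference vertices. Either route then concludes with $q\eqdef1$ in Definition~\ref{d:fe-triple}.

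I do not expect a genuine obstacle here, since the statement is a specialization of an already-proved general theorem. The only care needed is the bookkeeping that the reference linear forms $\hSigmakd$ really are the current Lagrange forms evaluated at the reference nodes, and that the two cases $k=0$ and $k\geq1$ are both subsumed. These identifications are immediate from the relevant definitions, so no new estimate or construction is required.
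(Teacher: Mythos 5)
Your first route is exactly the paper's proof: it invokes Lemma~\ref{l:ref-affine-vert-is-affinely-indep}, Lemma~\ref{l:simplex-of-ref-vert-is-ref-simplex}, Lemma~\ref{l:lag-nodes-Pkd-ref}, the identification $\hSigmakd=\Sigma^{\famnodekd{\haa}}$ from Definition~\ref{d:lag-lin-forms-Pkd-ref}, and then Theorem~\ref{t:Pkd-lag-fe} applied to the reference vertices. The proposal is correct and essentially identical to the paper's argument; the alternative reassembly you sketch is unnecessary but also sound.
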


\begin{proof}
  Direct consequence of
  Lemma~\thref{l:ref-affine-vert-is-affinely-indep},
  Lemma~\thref{l:simplex-of-ref-vert-is-ref-simplex},
  Lemma~\thref{l:lag-nodes-Pkd-ref},
  Definition~\threfc{d:lag-lin-forms-Pkd-ref}{%
    thus $\hSigmakd=\Sigma^{\famnodekd{\haa}}$}, and
  Theorem~\thref{t:Pkd-lag-fe}.
\end{proof}

\begin{lemma}[$\FElagPref{k}{d}$ reference Lagrange finite element for $d=1$ is
  Lagrange finite element on the reference segment]
  \label{l:Pkd-lag-fe-ref-for-d-eq-1-is-Pk1-lag-fe-ref}
  \mbox{}\\
  Let~$k\in\matN$.
  Then, $\FElagPref{k}{d}$ for $d\eqdef1$ and~$\FElagPref{k}{1}$ from
  Theorem~\ref{t:Pk1-lag-fe-ref} coincide.
\end{lemma}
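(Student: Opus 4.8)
The plan is to prove the identity of the two reference triples component by component, exactly mirroring the structure of the already-established current-element coincidence Lemma~\thref{l:Pkd-lag-fe-for-d-eq-1-is-Pk1-lag-fe}. First I would invoke Theorem~\thref{t:Pkd-lag-fe-ref} with $d\eqdef1$ to write the left-hand object as the triple $(\hKi,\matPkd,\hSigmakd)$ with every occurrence of $d$ set to~$1$, and Theorem~\thref{t:Pk1-lag-fe-ref} to write the right-hand object as $(\hKi,\matPki,\hSigmaki)$. It then suffices to match the three slots of these triples.

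The geometry slot is immediate: the reference simplex $\hKd$ at $d\eqdef1$ is literally $\hK^1=\hKi$ by the notational convention of Definition~\thref{d:ref-simplex}, so nothing is to be proved there. The approximation-space slot is handled by Lemma~\thref{l:pol-space-Pkd-for-d-eq-1-is-Pk1}, which already establishes that $\matPkd$ for $d\eqdef1$ and $\matPki$ coincide. The degrees-of-freedom slot is handled by Lemma~\thref{l:lag-lin-forms-Pkd-ref-for-d-eq-1-are-lag-lin-forms-Pk1-ref}, which states precisely that $\hSigmakd$ at $d\eqdef1$ and $\hSigmaki$ coincide; this is the only slot carrying genuine content, since the two families are a priori defined from different index sets ($\calAkd$ versus $[0..k]$) and from a priori different node definitions, and the lemma reconciles them, ultimately through the identification $\calAkd=\calAki=[0..k]$ of Lemma~\thref{l:first-multi-ind-Akd} together with the reference-node coincidence Lemma~\thref{l:lag-nodes-Pkd-ref-for-d-eq-1-are-lag-nodes-Pk1-ref}.

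I do not expect a real obstacle here: the statement is pure bookkeeping, and every nontrivial identification has already been isolated in an earlier coincidence lemma. The only point requiring a little care is to ensure that the index-set and node identifications line up, so that the two families $\hSigmakd$ at $d\eqdef1$ and $\hSigmaki$ are indexed and evaluated consistently; but this is exactly what the cited linear-forms lemma guarantees. Assembling the three matched slots then yields the equality of the triples, that is, $\FElagPref{k}{d}$ at $d\eqdef1$ equals $\FElagPref{k}{1}$, which concludes the proof as a direct consequence of Theorem~\thref{t:Pkd-lag-fe-ref}, Lemma~\thref{l:pol-space-Pkd-for-d-eq-1-is-Pk1}, Lemma~\thref{l:lag-lin-forms-Pkd-ref-for-d-eq-1-are-lag-lin-forms-Pk1-ref}, and Theorem~\thref{t:Pk1-lag-fe-ref}.
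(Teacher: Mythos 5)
Your proposal is correct and follows essentially the same route as the paper, whose proof is a direct consequence of Theorem~\ref{t:Pkd-lag-fe-ref} (with $d\eqdef1$), Lemma~\ref{l:lag-lin-forms-Pkd-ref-for-d-eq-1-are-lag-lin-forms-Pk1-ref}, and Theorem~\ref{t:Pk1-lag-fe-ref}. The only difference is that you also cite Lemma~\ref{l:pol-space-Pkd-for-d-eq-1-is-Pk1} explicitly for the approximation-space slot, which the paper leaves implicit.
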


\begin{proof}
  Direct consequence of
  Theorem~\threfc{t:Pkd-lag-fe-ref}{with $d\eqdef1$},
  Lemma~\thref{l:lag-lin-forms-Pkd-ref-for-d-eq-1-are-lag-lin-forms-Pk1-ref},
  and
  Theorem~\thref{t:Pk1-lag-fe-ref}.
\end{proof}

\chapter{Conclusions, perspectives}
\label{c:conclusions-perspectives}

We have presented very detailed proofs for the building of the simplicial
Lagrange finite element (FE) in any nonzero dimension~$d$, and of any
order~$k$.
This includes the general definition of FE, some results about simplicial
geometry, the simplified construction in the monodimensional case on a segment
(mainly for didactic purposes, but also to prepare the tensorization in
$d$-cuboids), the construction of multi-indices of given maximum length,
results about multi-variate polynomials, $\matPid$~Lagrange polynomials, affine
geometric mappings, Lagrange nodes and Lagrange linear forms
of~$\FElagP{k}{d}$, and finally the proofs of unisolvence of~$\FElagP{k}{d}$,
and of face unisolvence.

The short-term purpose of this work was to help the formalization in a formal
proof assistant such as {\coq} of the basic concepts of FE, and the building of
the simplest example: $\FElagP{k}{d}$, that is emblematic and widely used in
practice.
First milestones towards this will be~\cite{mou:phd:24,bol:cfs:24} where
special attention will be paid to the formalization of multi-indices.

Our mid-term purpose is now to continue up to the formalization of the
quadrangular and hexahedric Lagrange FE, then of other (non-nodal) families of
FE such as the face-flux driven Raviart-Thomas FE.

The long-term purpose of these studies is the formal proof of programs
implementing the {\fem}.
As a consequence, after having addressed the formalization of the
{\LMT}~\cite{cm:lmt:16,bol:cfp:17}, and the formalization of parts of measure
theory and Lebesgue integration of nonnegative
functions~\cite{cm:li:21,bol:cfl:22,bol:cfl:23}, we will also have to write
very detailed pen-and-paper proofs for $L^p$~Lebesgue spaces and
$W^{m,p}$~Sobolev spaces as Banach spaces, including parts of the distribution
theory, and the concepts and results of the interpolation and approximation
theory to define the {\FEM} itself.

\chapter*{Acknowledgment}

The authors thank Alexandre Ern for fruitful discussions, especially about
unisolvence issues.

\tocotherhead{chapter}

\bibliography{biblio_fem}
\bibliographystyle{plainnat}


\appendix

\listoffigures

\chapter{Lists of statements}
\label{c:lists-of-statements}

This appendix collects the references (name and number) for all statements
present in Part~\ref{p:detailed-proofs}.
These are split into definitions, lemmas, and theorems.

\tocotherhead{section}

\renewcommand{\listtheoremname}{List of Definitions}
\phantomsection \label{s:list-of-definitions}
\listoftheorems[ignoreall,show=definition]

\clearpage
\renewcommand{\listtheoremname}{List of Lemmas}
\phantomsection \label{s:list-of-lemmas}
\listoftheorems[ignoreall,show=lemma]

\clearpage
\renewcommand{\listtheoremname}{List of Theorems}
\phantomsection \label{s:list-of-theorems}
\listoftheorems[ignoreall,show=theorem]



\chapter{The proof cites explicitly\ldots}
\label{c:the-proof-cites-explicitly}

This appendix gathers the explicit citations of the statements listed in
Appendix~\ref{c:lists-of-statements} that appear in the proof of each result
(lemmas and theorems).
Statements from~\cite{cm:lmt:16,cm:li:21} are anonymized.

The dual graph is described in
Appendix~\ref{c:is-explicitly-cited-in-the-proof-of}.

Printing is not advised!

\bigskip

\begin{description}[style=unboxed]

\item[The proof of Lemma~\thref{l:double-induction-by-diagonal}] \mbox{}\\
  has no explicit citation.

\item[The proof of Lemma~\thref{l:strong-double-induction}] \mbox{}\\
  has no explicit citation.

\item[The proof of Lemma~\thref{l:prop-binom-coef}] \mbox{}\\
  cites explicitly:\\
  Definition~\thref{d:binom-coef}.

\item[The proof of Lemma~\thref{l:circ-permut}] \mbox{}\\
  has no explicit citation.

\item[The proof of Lemma~\thref{l:trsp}] \mbox{}\\
  has no explicit citation.

\item[The proof of Lemma~\thref{l:jump-enum}] \mbox{}\\
  has no explicit citation.

\item[The proof of Lemma~\thref{l:im-ker-incl-ker}] \mbox{}\\
  cites explicitly:\\
  Statement(s) from~\cite{cm:lmt:16}.

\item[The proof of Lemma~\thref{l:im-ker-eq-ker}] \mbox{}\\
  cites explicitly:\\
  Statement(s) from~\cite{cm:lmt:16},\\
  Lemma~\thref{l:im-ker-incl-ker}.

\item[The proof of Lemma~\thref{l:sub-sp-inv-transl}] \mbox{}\\
  cites explicitly:\\
  Statement(s) from~\cite{cm:lmt:16}.

\item[The proof of Lemma~\thref{l:rg-lin-map-is-sub-sp}] \mbox{}\\
  cites explicitly:\\
  Statement(s) from~\cite{cm:lmt:16}.

\item[The proof of Lemma~\thref{l:inj-or-surj-and-dim-implies-bij}] \mbox{}\\
  cites explicitly:\\
  Statement(s) from~\cite{cm:lmt:16}.

\item[The proof of Lemma~\thref{l:inverse-of-isomorphism-is-linear-map}] \mbox{}\\
  cites explicitly:\\
  Statement(s) from~\cite{cm:lmt:16}.

\item[The proof of Lemma~\thref{l:free-family-of-dim-elements-is-basis}] \mbox{}\\
  has no explicit citation.

\item[The proof of Lemma~\thref{l:orig-is-in-aff-sub-sp}] \mbox{}\\
  cites explicitly:\\
  Statement(s) from~\cite{cm:lmt:16},\\
  Definition~\thref{d:aff-sub-sp}.

\item[The proof of Lemma~\thref{l:equiv-def-aff-sub-sp}] \mbox{}\\
  cites explicitly:\\
  Statement(s) from~\cite{cm:lmt:16},\\
  Lemma~\thref{l:sub-sp-inv-transl},\\
  Definition~\thref{d:aff-sub-sp}.

\item[The proof of Lemma~\thref{l:aff-sub-sp-inv-chg-orig}] \mbox{}\\
  cites explicitly:\\
  Statement(s) from~\cite{cm:lmt:16},\\
  Lemma~\thref{l:sub-sp-inv-transl},\\
  Definition~\thref{d:aff-sub-sp},\\
  Lemma~\thref{l:equiv-def-aff-sub-sp}.

\item[The proof of Lemma~\thref{l:vect_sub-sp-is-aff-sub-sp}] \mbox{}\\
  cites explicitly:\\
  Lemma~\thref{l:sub-sp-inv-transl},\\
  Lemma~\thref{l:equiv-def-aff-sub-sp}.

\item[The proof of Lemma~\thref{l:aff-plus-vect-is-aff-sub-sp}] \mbox{}\\
  cites explicitly:\\
  Statement(s) from~\cite{cm:lmt:16},\\
  Definition~\thref{d:aff-sub-sp},\\
  Lemma~\thref{l:equiv-def-aff-sub-sp}.

\item[The proof of Lemma~\thref{l:closed-under-baryc-is-aff-sub-sp}] \mbox{}\\
  cites explicitly:\\
  Statement(s) from~\cite{cm:lmt:16},\\
  Definition~\thref{d:aff-sub-sp},\\
  Lemma~\thref{l:equiv-def-aff-sub-sp}.

\item[The proof of Lemma~\thref{l:baryc-closure-is-aff-sub-sp}] \mbox{}\\
  cites explicitly:\\
  Statement(s) from~\cite{cm:lmt:16},\\
  Definition~\thref{d:aff-sub-sp}.

\item[The proof of Lemma~\thref{l:equiv-def-aff-map}] \mbox{}\\
  cites explicitly:\\
  Definition~\thref{d:aff-sub-sp},\\
  Lemma~\thref{l:equiv-def-aff-sub-sp},\\
  Definition~\thref{d:aff-map}.

\item[The proof of Lemma~\thref{l:change-orig-aff-map}] \mbox{}\\
  cites explicitly:\\
  Statement(s) from~\cite{cm:lmt:16},\\
  Definition~\thref{d:aff-sub-sp},\\
  Lemma~\thref{l:equiv-def-aff-sub-sp},\\
  Definition~\thref{d:aff-map}.

\item[The proof of Lemma~\thref{l:rg-aff-map-is-aff-sub-sp}] \mbox{}\\
  cites explicitly:\\
  Lemma~\thref{l:rg-lin-map-is-sub-sp},\\
  Definition~\thref{d:aff-sub-sp},\\
  Definition~\thref{d:aff-map}.

\item[The proof of Lemma~\thref{l:space-aff-maps}] \mbox{}\\
  cites explicitly:\\
  Statement(s) from~\cite{cm:lmt:16},\\
  Definition~\thref{d:set-aff-maps}.

\item[The proof of Lemma~\thref{l:out-restr-aff-map}] \mbox{}\\
  cites explicitly:\\
  Lemma~\thref{l:aff-sub-sp-inv-chg-orig},\\
  Lemma~\thref{l:rg-aff-map-is-aff-sub-sp},\\
  Definition~\thref{d:set-aff-maps}.

\item[The proof of Lemma~\thref{l:aff-sub-map}] \mbox{}\\
  cites explicitly:\\
  Definition~\thref{d:aff-sub-sp},\\
  Definition~\thref{d:aff-map},\\
  Lemma~\thref{l:change-orig-aff-map},\\
  Definition~\thref{d:set-aff-maps},\\
  Lemma~\thref{l:out-restr-aff-map}.

\item[The proof of Lemma~\thref{l:equiv-def-aff-map-finite-dim}] \mbox{}\\
  cites explicitly:\\
  Definition~\thref{d:aff-map}.

\item[The proof of Lemma~\thref{l:aff-map-preserves-baryc}] \mbox{}\\
  cites explicitly:\\
  Statement(s) from~\cite{cm:lmt:16},\\
  Definition~\thref{d:aff-map}.

\item[The proof of Lemma~\thref{l:aff-map-preserves-isobaryc}] \mbox{}\\
  cites explicitly:\\
  Statement(s) from~\cite{cm:lmt:16},\\
  Lemma~\thref{l:aff-map-preserves-baryc},\\
  Definition~\thref{d:isobaryc}.

\item[The proof of Lemma~\thref{l:aff-map-are-closed-by-composition}] \mbox{}\\
  cites explicitly:\\
  Statement(s) from~\cite{cm:lmt:16},\\
  Definition~\thref{d:aff-map},\\
  Lemma~\thref{l:equiv-def-aff-map},\\
  Definition~\thref{d:set-aff-maps}.

\item[The proof of Lemma~\thref{l:cont-aff-map-is-cont-linear-map}] \mbox{}\\
  cites explicitly:\\
  Statement(s) from~\cite{cm:lmt:16},\\
  Lemma~\thref{l:equiv-def-aff-map}.

\item[The proof of Lemma~\thref{l:inj-aff-sub-map-is-zero-linear-ker}] \mbox{}\\
  cites explicitly:\\
  Statement(s) from~\cite{cm:lmt:16},\\
  Lemma~\thref{l:aff-sub-map}.

\item[The proof of Lemma~\thref{l:inj-aff-map-is-zero-linear-ker}] \mbox{}\\
  cites explicitly:\\
  Lemma~\thref{l:inj-aff-sub-map-is-zero-linear-ker}.

\item[The proof of Lemma~\thref{l:surj-aff-sub-map-is-full-linear-rg}] \mbox{}\\
  cites explicitly:\\
  Statement(s) from~\cite{cm:lmt:16},\\
  Definition~\thref{d:aff-sub-sp},\\
  Lemma~\thref{l:equiv-def-aff-sub-sp},\\
  Lemma~\thref{l:aff-sub-sp-inv-chg-orig},\\
  Definition~\thref{d:aff-map},\\
  Lemma~\thref{l:equiv-def-aff-map}.

\item[The proof of Lemma~\thref{l:surj-aff-map-is-full-linear-rg}] \mbox{}\\
  cites explicitly:\\
  Lemma~\thref{l:surj-aff-sub-map-is-full-linear-rg}.

\item[The proof of Lemma~\thref{l:inv-of-aff-sub-map-is-aff-sub-map}] \mbox{}\\
  cites explicitly:\\
  Statement(s) from~\cite{cm:lmt:16},\\
  Lemma~\thref{l:inverse-of-isomorphism-is-linear-map},\\
  Definition~\thref{d:aff-sub-sp},\\
  Lemma~\thref{l:equiv-def-aff-sub-sp},\\
  Definition~\thref{d:aff-map},\\
  Lemma~\thref{l:equiv-def-aff-map},\\
  Lemma~\thref{l:inj-aff-sub-map-is-zero-linear-ker},\\
  Lemma~\thref{l:surj-aff-sub-map-is-full-linear-rg}.

\item[The proof of Lemma~\thref{l:inv-of-aff-map-is-aff-map}] \mbox{}\\
  cites explicitly:\\
  Lemma~\thref{l:inv-of-aff-sub-map-is-aff-sub-map}.

\item[The proof of Lemma~\thref{l:equiv-def-aff-indep-family}] \mbox{}\\
  cites explicitly:\\
  Statement(s) from~\cite{cm:lmt:16},\\
  Definition~\thref{d:aff-indep-family}.

\item[The proof of Lemma~\thref{l:aff-indep-family-of-two-elements}] \mbox{}\\
  cites explicitly:\\
  Definition~\thref{d:aff-indep-family}.

\item[The proof of Lemma~\thref{l:aff-indep-closed-by-sub-family}] \mbox{}\\
  cites explicitly:\\
  Definition~\thref{d:aff-indep-family},\\
  Lemma~\thref{l:equiv-def-aff-indep-family}.

\item[The proof of Lemma~\thref{l:deg-Pi-leq-k}] \mbox{}\\
  cites explicitly:\\
  Lemma~\thref{l:space-aff-maps},\\
  Lemma~\thref{l:equiv-def-aff-map-finite-dim},\\
  Definition~\thref{d:monom-k1},\\
  Definition~\thref{d:pol-space-Pk1}.

\item[The proof of Lemma~\thref{l:monom-free-in-Pk1}] \mbox{}\\
  has no explicit citation.

\item[The proof of Lemma~\thref{l:dim-Pk1}] \mbox{}\\
  cites explicitly:\\
  Lemma~\thref{l:free-family-of-dim-elements-is-basis},\\
  Definition~\thref{d:pol-space-Pk1},\\
  Lemma~\thref{l:monom-free-in-Pk1}.

\item[The proof of Lemma~\thref{l:prod-2-polynom-univ}] \mbox{}\\
  has no explicit citation.

\item[The proof of Lemma~\thref{l:inj-implies-unisolvence}] \mbox{}\\
  cites explicitly:\\
  Lemma~\thref{l:inj-or-surj-and-dim-implies-bij},\\
  Definition~\thref{d:fe-triple}.

\item[The proof of Lemma~\thref{l:dim-of-P}] \mbox{}\\
  has no explicit citation.

\item[The proof of Lemma~\thref{l:dof-basis}] \mbox{}\\
  cites explicitly:\\
  Lemma~\thref{l:free-family-of-dim-elements-is-basis},\\
  Lemma~\thref{l:dim-of-P}.

\item[The proof of Lemma~\thref{l:ref-isobaryc}] \mbox{}\\
  cites explicitly:\\
  Definition~\thref{d:isobaryc},\\
  Definition~\thref{d:fam-ref-aff-pts}.

\item[The proof of Lemma~\thref{l:ref-affine-vert-is-affinely-indep}] \mbox{}\\
  cites explicitly:\\
  Definition~\thref{d:canon-fam},\\
  Definition~\thref{d:aff-indep-family},\\
  Definition~\thref{d:fam-ref-aff-pts}.

\item[The proof of Lemma~\thref{l:coord-in-ref-simplex-smaller-than-1}] \mbox{}\\
  cites explicitly:\\
  Definition~\thref{d:ref-simplex}.

\item[The proof of Lemma~\thref{l:non-trivial-ref-simplex}] \mbox{}\\
  cites explicitly:\\
  Definition~\thref{d:ref-simplex}.

\item[The proof of Lemma~\thref{l:coord-in-simplex-smaller-than-1}] \mbox{}\\
  cites explicitly:\\
  Definition~\thref{d:simplex}.

\item[The proof of Lemma~\thref{l:simplex-of-ref-vert-is-ref-simplex}] \mbox{}\\
  cites explicitly:\\
  Definition~\thref{d:fam-ref-aff-pts},\\
  Definition~\thref{d:ref-simplex},\\
  Definition~\thref{d:simplex}.

\item[The proof of Lemma~\thref{l:card-Ak1}] \mbox{}\\
  cites explicitly:\\
  Lemma~\thref{l:prop-binom-coef},\\
  Definition~\thref{d:multi-ind-Ak1}.

\item[The proof of Lemma~\thref{l:lag-pol-is-basis-Pk1}] \mbox{}\\
  cites explicitly:\\
  Lemma~\thref{l:free-family-of-dim-elements-is-basis},\\
  Definition~\thref{d:pol-space-Pk1},\\
  Lemma~\thref{l:deg-Pi-leq-k},\\
  Lemma~\thref{l:dim-Pk1},\\
  Lemma~\thref{l:prod-2-polynom-univ},\\
  Definition~\thref{d:lag-pol-Pk1}.

\item[The proof of Lemma~\thref{l:decomp-Pk1-pol-in-lag-basis}] \mbox{}\\
  cites explicitly:\\
  Lemma~\thref{l:lag-pol-is-basis-Pk1}.

\item[The proof of Lemma~\thref{l:ref-simplex-non-trivial-in-R}] \mbox{}\\
  cites explicitly:\\
  Definition~\thref{d:ref-simplex},\\
  Lemma~\thref{l:non-trivial-ref-simplex}.

\item[The proof of Lemma~\thref{l:lag-nodes-distinct-Pk1-ref}] \mbox{}\\
  cites explicitly:\\
  Definition~\thref{d:lag-nodes-Pk1-ref}.

\item[The proof of Lemma~\thref{l:lag-basis-Pk1-ref}] \mbox{}\\
  cites explicitly:\\
  Lemma~\thref{l:lag-pol-is-basis-Pk1},\\
  Lemma~\thref{l:lag-nodes-distinct-Pk1-ref}.

\item[The proof of Lemma~\thref{l:lag-lin-forms-are-linear-Pk1-ref}] \mbox{}\\
  cites explicitly:\\
  Definition~\thref{d:lag-lin-forms-Pk1-ref}.

\item[The proof of Lemma~\thref{l:lag-lin-forms-Pk1-ref-inj}] \mbox{}\\
  cites explicitly:\\
  Statement(s) from~\cite{cm:lmt:16},\\
  Definition~\thref{d:fe-triple},\\
  Lemma~\thref{l:decomp-Pk1-pol-in-lag-basis},\\
  Lemma~\thref{l:lag-nodes-distinct-Pk1-ref},\\
  Definition~\thref{d:lag-lin-forms-Pk1-ref}.

\item[The proof of Lemma~\thref{l:unisolvence-Pk1-ref}] \mbox{}\\
  cites explicitly:\\
  Lemma~\thref{l:dim-Pk1},\\
  Lemma~\thref{l:inj-implies-unisolvence},\\
  Lemma~\thref{l:lag-nodes-distinct-Pk1-ref},\\
  Lemma~\thref{l:lag-lin-forms-Pk1-ref-inj}.

\item[The proof of Theorem~\thref{t:Pk1-lag-fe-ref}] \mbox{}\\
  cites explicitly:\\
  Lemma~\thref{l:dim-Pk1},\\
  Definition~\thref{d:fe-triple},\\
  Lemma~\thref{l:ref-simplex-non-trivial-in-R},\\
  Lemma~\thref{l:unisolvence-Pk1-ref}.

\item[The proof of Lemma~\thref{l:prop-geo-mapping-1d}] \mbox{}\\
  cites explicitly:\\
  Lemma~\thref{l:equiv-def-aff-map-finite-dim},\\
  Lemma~\thref{l:inv-of-aff-map-is-aff-map},\\
  Lemma~\thref{l:deg-Pi-leq-k},\\
  Definition~\thref{d:lag-nodes-Pk1-ref},\\
  Lemma~\thref{l:lag-basis-Pk1-ref},\\
  Definition~\thref{d:geo-mapping-1d}.

\item[The proof of Lemma~\thref{l:curr-simplex-non-trivial-in-R}] \mbox{}\\
  cites explicitly:\\
  Definition~\thref{d:simplex}.

\item[The proof of Lemma~\thref{l:curr-simplex-im-ref-in-R}] \mbox{}\\
  cites explicitly:\\
  Lemma~\thref{l:ref-simplex-non-trivial-in-R},\\
  Definition~\thref{d:geo-mapping-1d},\\
  Lemma~\thref{l:curr-simplex-non-trivial-in-R}.

\item[The proof of Lemma~\thref{l:lag-nodes-distinct-Pk1}] \mbox{}\\
  cites explicitly:\\
  Definition~\thref{d:lag-nodes-Pk1}.

\item[The proof of Lemma~\thref{l:lag-nodes-Pk1-im-ref}] \mbox{}\\
  cites explicitly:\\
  Definition~\thref{d:lag-nodes-Pk1-ref},\\
  Definition~\thref{d:geo-mapping-1d},\\
  Definition~\thref{d:lag-nodes-Pk1}.

\item[The proof of Lemma~\thref{l:lag-basis-Pk1}] \mbox{}\\
  cites explicitly:\\
  Lemma~\thref{l:lag-pol-is-basis-Pk1},\\
  Lemma~\thref{l:lag-nodes-distinct-Pk1}.

\item[The proof of Lemma~\thref{l:lag-pol-Pk1-im-ref}] \mbox{}\\
  cites explicitly:\\
  Definition~\thref{d:lag-pol-Pk1},\\
  Lemma~\thref{l:lag-basis-Pk1-ref},\\
  Definition~\thref{d:geo-mapping-1d},\\
  Lemma~\thref{l:lag-nodes-Pk1-im-ref},\\
  Lemma~\thref{l:lag-basis-Pk1}.

\item[The proof of Lemma~\thref{l:geo-mapping-of-Pk1-is-Pk1}] \mbox{}\\
  cites explicitly:\\
  Definition~\thref{d:pol-space-Pk1},\\
  Lemma~\thref{l:lag-basis-Pk1-ref},\\
  Definition~\thref{d:geo-mapping-1d},\\
  Lemma~\thref{l:lag-basis-Pk1},\\
  Lemma~\thref{l:lag-pol-Pk1-im-ref}.

\item[The proof of Lemma~\thref{l:lag-lin-forms-are-linear-Pk1}] \mbox{}\\
  cites explicitly:\\
  Definition~\thref{d:lag-lin-forms-Pk1}.

\item[The proof of Lemma~\thref{l:lag-lin-forms-Pk1-im-ref}] \mbox{}\\
  cites explicitly:\\
  Definition~\thref{d:lag-lin-forms-Pk1-ref},\\
  Lemma~\thref{l:lag-nodes-Pk1-im-ref},\\
  Definition~\thref{d:lag-lin-forms-Pk1}.

\item[The proof of Lemma~\thref{l:lag-lin-forms-Pk1-inj}] \mbox{}\\
  cites explicitly:\\
  Statement(s) from~\cite{cm:lmt:16},\\
  Definition~\thref{d:fe-triple},\\
  Lemma~\thref{l:decomp-Pk1-pol-in-lag-basis},\\
  Lemma~\thref{l:lag-nodes-distinct-Pk1},\\
  Definition~\thref{d:lag-lin-forms-Pk1}.

\item[The proof of Lemma~\thref{l:unisolvence-Pk1}] \mbox{}\\
  cites explicitly:\\
  Lemma~\thref{l:dim-Pk1},\\
  Lemma~\thref{l:inj-implies-unisolvence},\\
  Lemma~\thref{l:lag-nodes-distinct-Pk1},\\
  Lemma~\thref{l:lag-lin-forms-Pk1-inj}.

\item[The proof of Theorem~\thref{t:Pk1-lag-fe}] \mbox{}\\
  cites explicitly:\\
  Lemma~\thref{l:dim-Pk1},\\
  Definition~\thref{d:fe-triple},\\
  Lemma~\thref{l:curr-simplex-non-trivial-in-R},\\
  Lemma~\thref{l:unisolvence-Pk1}.

\item[The proof of Lemma~\thref{l:len-multi-ind-is-add}] \mbox{}\\
  cites explicitly:\\
  Definition~\thref{d:len-multi-ind}.

\item[The proof of Lemma~\thref{l:fact-multi-ind-pos}] \mbox{}\\
  cites explicitly:\\
  Definition~\thref{d:fact-multi-ind}.

\item[The proof of Lemma~\thref{l:kron-multi-ind-val}] \mbox{}\\
  cites explicitly:\\
  Definition~\thref{d:kron-multi-ind}.

\item[The proof of Lemma~\thref{l:multi-ind-Akd-for-d-eq-1-is-Ak1}] \mbox{}\\
  cites explicitly:\\
  Definition~\thref{d:multi-ind-Ak1},\\
  Definition~\thref{d:len-multi-ind},\\
  Definition~\thref{d:multi-ind-Akd-Ckd}.

\item[The proof of Lemma~\thref{l:ind-smaller-than-max-len}] \mbox{}\\
  cites explicitly:\\
  Definition~\thref{d:len-multi-ind},\\
  Definition~\thref{d:multi-ind-Akd-Ckd}.

\item[The proof of Lemma~\thref{l:first-Ckd}] \mbox{}\\
  cites explicitly:\\
  Definition~\thref{d:canon-fam},\\
  Definition~\thref{d:multi-ind-Akd-Ckd},\\
  Lemma~\thref{l:ind-smaller-than-max-len}.

\item[The proof of Lemma~\thref{l:slices-of-multi-ind-Ckd}] \mbox{}\\
  cites explicitly:\\
  Definition~\thref{d:multi-ind-Akd-Ckd},\\
  Lemma~\thref{l:ind-smaller-than-max-len},\\
  Definition~\thref{d:slices-Sckdi-and-Stkdi}.

\item[The proof of Lemma~\thref{l:card-slices-of-Ckd}] \mbox{}\\
  cites explicitly:\\
  Definition~\thref{d:slices-Sckdi-and-Stkdi}.

\item[The proof of Lemma~\thref{l:card-Ckd}] \mbox{}\\
  cites explicitly:\\
  Definition~\thref{d:binom-coef},\\
  Lemma~\thref{l:prop-binom-coef},\\
  Lemma~\thref{l:first-Ckd},\\
  Lemma~\thref{l:slices-of-multi-ind-Ckd},\\
  Lemma~\thref{l:card-slices-of-Ckd}.

\item[The proof of Lemma~\thref{l:Ckd-layers-Akd}] \mbox{}\\
  cites explicitly:\\
  Definition~\thref{d:multi-ind-Akd-Ckd}.

\item[The proof of Lemma~\thref{l:first-multi-ind-Akd}] \mbox{}\\
  cites explicitly:\\
  Lemma~\thref{l:first-Ckd},\\
  Lemma~\thref{l:Ckd-layers-Akd}.

\item[The proof of Lemma~\thref{l:card-Akd}] \mbox{}\\
  cites explicitly:\\
  Lemma~\thref{l:prop-binom-coef},\\
  Lemma~\thref{l:card-Ckd},\\
  Lemma~\thref{l:Ckd-layers-Akd}.

\item[The proof of Lemma~\thref{l:card-Ckd-Akdm1}] \mbox{}\\
  cites explicitly:\\
  Definition~\thref{d:len-multi-ind},\\
  Definition~\thref{d:multi-ind-Akd-Ckd},\\
  Lemma~\thref{l:card-Ckd},\\
  Lemma~\thref{l:card-Akd}.

\item[The proof of Lemma~\thref{l:card-Akdi-Akdm1}] \mbox{}\\
  cites explicitly:\\
  Definition~\thref{d:len-multi-ind},\\
  Definition~\thref{d:multi-ind-Akd-Ckd}.

\item[The proof of Lemma~\thref{l:monom-kd-for-d-eq-1-is-monom-k1}] \mbox{}\\
  cites explicitly:\\
  Definition~\thref{d:monom-k1},\\
  Lemma~\thref{l:first-Ckd},\\
  Definition~\thref{d:monom-kd}.

\item[The proof of Lemma~\thref{l:pol-space-Pkd-for-d-eq-1-is-Pk1}] \mbox{}\\
  cites explicitly:\\
  Definition~\thref{d:pol-space-Pk1},\\
  Lemma~\thref{l:first-multi-ind-Akd},\\
  Definition~\thref{d:pol-space-Pkd}.

\item[The proof of Lemma~\thref{l:Pkd-sp}] \mbox{}\\
  cites explicitly:\\
  Definition~\thref{d:pol-space-Pkd}.

\item[The proof of Lemma~\thref{l:Pkd-nondecr-k}] \mbox{}\\
  cites explicitly:\\
  Lemma~\thref{l:Ckd-layers-Akd},\\
  Definition~\thref{d:pol-space-Pkd}.

\item[The proof of Lemma~\thref{l:pol-space-P0d-P1d}] \mbox{}\\
  cites explicitly:\\
  Definition~\thref{d:canon-fam},\\
  Lemma~\thref{l:equiv-def-aff-map-finite-dim},\\
  Lemma~\thref{l:first-multi-ind-Akd},\\
  Definition~\thref{d:pol-space-Pkd}.

\item[The proof of Lemma~\thref{l:val-deg-pol}] \mbox{}\\
  cites explicitly:\\
  Definition~\thref{d:deg-pol}.

\item[The proof of Lemma~\thref{l:deg-monom-Ckd-is-k}] \mbox{}\\
  cites explicitly:\\
  Definition~\thref{d:multi-ind-Akd-Ckd},\\
  Definition~\thref{d:monom-kd},\\
  Definition~\thref{d:deg-pol}.

\item[The proof of Lemma~\thref{l:deg-Pkd-leq-k}] \mbox{}\\
  cites explicitly:\\
  Lemma~\thref{l:Ckd-layers-Akd},\\
  Definition~\thref{d:pol-space-Pkd},\\
  Definition~\thref{d:deg-pol}.

\item[The proof of Lemma~\thref{l:prod-monom}] \mbox{}\\
  cites explicitly:\\
  Lemma~\thref{l:len-multi-ind-is-add},\\
  Definition~\thref{d:monom-kd},\\
  Lemma~\thref{l:deg-monom-Ckd-is-k}.

\item[The proof of Lemma~\thref{l:prod-monom-polynom}] \mbox{}\\
  cites explicitly:\\
  Definition~\thref{d:multi-ind-Akd-Ckd},\\
  Definition~\thref{d:pol-space-Pkd},\\
  Lemma~\thref{l:prod-monom}.

\item[The proof of Lemma~\thref{l:prod-2-polynom}] \mbox{}\\
  cites explicitly:\\
  Definition~\thref{d:pol-space-Pkd},\\
  Lemma~\thref{l:Pkd-sp},\\
  Lemma~\thref{l:prod-monom-polynom}.

\item[The proof of Lemma~\thref{l:pder-monom}] \mbox{}\\
  cites explicitly:\\
  Definition~\thref{d:monom-kd},\\
  Lemma~\thref{l:deg-monom-Ckd-is-k}.

\item[The proof of Lemma~\thref{l:pder-is-lin}] \mbox{}\\
  cites explicitly:\\
  Definition~\thref{d:pol-space-Pkd},\\
  Lemma~\thref{l:pder-monom}.

\item[The proof of Lemma~\thref{l:pder-0}] \mbox{}\\
  cites explicitly:\\
  Lemma~\thref{l:pder-is-lin}.

\item[The proof of Lemma~\thref{l:pdef-more-than-deg-is-0}] \mbox{}\\
  cites explicitly:\\
  Definition~\thref{d:multi-ind-Akd-Ckd},\\
  Lemma~\thref{l:pder-monom}.

\item[The proof of Lemma~\thref{l:pder-monom-at-0}] \mbox{}\\
  cites explicitly:\\
  Definition~\thref{d:fact-multi-ind},\\
  Definition~\thref{d:kron-multi-ind},\\
  Lemma~\thref{l:pder-monom}.

\item[The proof of Lemma~\thref{l:monom-free-in-Pkd}] \mbox{}\\
  cites explicitly:\\
  Lemma~\thref{l:fact-multi-ind-pos},\\
  Definition~\thref{d:pol-space-Pkd},\\
  Lemma~\thref{l:pder-is-lin},\\
  Lemma~\thref{l:pder-0},\\
  Lemma~\thref{l:pder-monom-at-0}.

\item[The proof of Lemma~\thref{l:monom-basis-Pkd}] \mbox{}\\
  cites explicitly:\\
  Definition~\thref{d:pol-space-Pkd},\\
  Lemma~\thref{l:monom-free-in-Pkd}.

\item[The proof of Lemma~\thref{l:dim-Pkd}] \mbox{}\\
  cites explicitly:\\
  Lemma~\thref{l:card-Akd},\\
  Lemma~\thref{l:monom-basis-Pkd}.

\item[The proof of Lemma~\thref{l:isom-P0d-P0dm1}] \mbox{}\\
  cites explicitly:\\
  Statement(s) from~\cite{cm:lmt:16},\\
  Lemma~\thref{l:inj-or-surj-and-dim-implies-bij},\\
  Lemma~\thref{l:pol-space-P0d-P1d},\\
  Lemma~\thref{l:dim-Pkd}.

\item[The proof of Lemma~\thref{l:decomp-Pkd}] \mbox{}\\
  cites explicitly:\\
  Definition~\thref{d:slices-Sckdi-and-Stkdi},\\
  Lemma~\thref{l:slices-of-multi-ind-Ckd},\\
  Lemma~\thref{l:card-slices-of-Ckd},\\
  Lemma~\thref{l:Ckd-layers-Akd},\\
  Definition~\thref{d:pol-space-Pkd},\\
  Lemma~\thref{l:Pkd-sp},\\
  Lemma~\thref{l:Pkd-nondecr-k},\\
  Lemma~\thref{l:pol-space-P0d-P1d},\\
  Lemma~\thref{l:deg-monom-Ckd-is-k},\\
  Lemma~\thref{l:monom-free-in-Pkd}.

\item[The proof of Lemma~\thref{l:isom-Pkd-Pkdm1xPkm1d}] \mbox{}\\
  cites explicitly:\\
  Statement(s) from~\cite{cm:lmt:16},\\
  Lemma~\thref{l:prop-binom-coef},\\
  Lemma~\thref{l:inj-or-surj-and-dim-implies-bij},\\
  Lemma~\thref{l:Pkd-sp},\\
  Lemma~\thref{l:dim-Pkd},\\
  Lemma~\thref{l:decomp-Pkd}.

\item[The proof of Lemma~\thref{l:Pkd-nondecr-d}] \mbox{}\\
  cites explicitly:\\
  Definition~\thref{d:pol-space-Pkd},\\
  Lemma~\thref{l:Pkd-sp},\\
  Lemma~\thref{l:pol-space-P0d-P1d},\\
  Lemma~\thref{l:isom-Pkd-Pkdm1xPkm1d}.

\item[The proof of Lemma~\thref{l:Pkd-as-pol-xd}] \mbox{}\\
  cites explicitly:\\
  Lemma~\thref{l:monom-free-in-Pk1},\\
  Lemma~\thref{l:isom-P0d-P0dm1},\\
  Lemma~\thref{l:decomp-Pkd}.

\item[The proof of Lemma~\thref{l:prod-2-polynom-alt-proof}] \mbox{}\\
  cites explicitly:\\
  Statement(s) from~\cite{cm:lmt:16},\\
  Lemma~\thref{l:strong-double-induction},\\
  Lemma~\thref{l:prod-2-polynom-univ},\\
  Lemma~\thref{l:Pkd-sp},\\
  Lemma~\thref{l:Pkd-nondecr-k},\\
  Lemma~\thref{l:pol-space-P0d-P1d},\\
  Lemma~\thref{l:prod-monom-polynom},\\
  Lemma~\thref{l:decomp-Pkd},\\
  Lemma~\thref{l:Pkd-nondecr-d}.

\item[The proof of Lemma~\thref{l:prod-polynom}] \mbox{}\\
  cites explicitly:\\
  Lemma~\thref{l:prod-2-polynom},\\
  Lemma~\thref{l:prod-2-polynom-alt-proof}.

\item[The proof of Lemma~\thref{l:aff-mapping-of-monom-is-Pkl}] \mbox{}\\
  cites explicitly:\\
  Lemma~\thref{l:equiv-def-aff-map-finite-dim},\\
  Definition~\thref{d:len-multi-ind},\\
  Definition~\thref{d:multi-ind-Akd-Ckd},\\
  Definition~\thref{d:monom-kd},\\
  Lemma~\thref{l:pol-space-P0d-P1d},\\
  Lemma~\thref{l:prod-polynom}.

\item[The proof of Lemma~\thref{l:aff-mapping-of-Pkd-is-Pkl}] \mbox{}\\
  cites explicitly:\\
  Lemma~\thref{l:Ckd-layers-Akd},\\
  Definition~\thref{d:pol-space-Pkd},\\
  Lemma~\thref{l:Pkd-sp},\\
  Lemma~\thref{l:Pkd-nondecr-k},\\
  Lemma~\thref{l:aff-mapping-of-monom-is-Pkl}.

\item[The proof of Lemma~\thref{l:ref-lag-pol-P1d-for-d-eq-1-are-ref-lag-pol-Pk1-for-k}] \mbox{}\\
  cites explicitly:\\
  Definition~\thref{d:lag-pol-Pk1},\\
  Definition~\thref{d:lag-nodes-Pk1-ref},\\
  Lemma~\thref{l:lag-basis-Pk1-ref},\\
  Definition~\thref{d:lag-pol-P1d-ref}.

\item[The proof of Lemma~\thref{l:lag-pol-is-basis-P1d-ref}] \mbox{}\\
  cites explicitly:\\
  Lemma~\thref{l:free-family-of-dim-elements-is-basis},\\
  Definition~\thref{d:fam-ref-aff-pts},\\
  Lemma~\thref{l:pol-space-P0d-P1d},\\
  Definition~\thref{d:deg-pol},\\
  Lemma~\thref{l:dim-Pkd},\\
  Definition~\thref{d:lag-pol-P1d-ref}.

\item[The proof of Lemma~\thref{l:diff-lag-pol-ref}] \mbox{}\\
  cites explicitly:\\
  Lemma~\thref{l:pol-space-P0d-P1d},\\
  Lemma~\thref{l:lag-pol-is-basis-P1d-ref}.

\item[The proof of Lemma~\thref{l:geo-mapping-for-d-eq-1-is-geo-mapping-1d}] \mbox{}\\
  cites explicitly:\\
  Definition~\thref{d:geo-mapping-1d},\\
  Definition~\thref{d:lag-pol-P1d-ref},\\
  Definition~\thref{d:geo-mapping}.

\item[The proof of Lemma~\thref{l:ref-geo-mapping-is-id}] \mbox{}\\
  cites explicitly:\\
  Definition~\thref{d:fam-ref-aff-pts},\\
  Definition~\thref{d:lag-pol-P1d-ref},\\
  Definition~\thref{d:geo-mapping}.

\item[The proof of Lemma~\thref{l:prop-geo-mapping}] \mbox{}\\
  cites explicitly:\\
  Statement(s) from~\cite{cm:lmt:16},\\
  Lemma~\thref{l:inj-or-surj-and-dim-implies-bij},\\
  Definition~\thref{d:aff-map},\\
  Lemma~\thref{l:inj-aff-map-is-zero-linear-ker},\\
  Lemma~\thref{l:inv-of-aff-map-is-aff-map},\\
  Definition~\thref{d:aff-indep-family},\\
  Definition~\thref{d:ref-simplex},\\
  Definition~\thref{d:simplex},\\
  Definition~\thref{d:lag-pol-P1d-ref},\\
  Lemma~\thref{l:lag-pol-is-basis-P1d-ref},\\
  Definition~\thref{d:geo-mapping}.

\item[The proof of Lemma~\thref{l:diff-geo-mapping}] \mbox{}\\
  cites explicitly:\\
  Lemma~\thref{l:prop-geo-mapping}.

\item[The proof of Lemma~\thref{l:lag-pol-P1d}] \mbox{}\\
  cites explicitly:\\
  Lemma~\thref{l:prop-geo-mapping}.

\item[The proof of Lemma~\thref{l:lag-pol-of-ref-vert-are-ref-lag-pol-P1d}] \mbox{}\\
  cites explicitly:\\
  Lemma~\thref{l:ref-geo-mapping-is-id},\\
  Lemma~\thref{l:lag-pol-P1d}.

\item[The proof of Lemma~\thref{l:lag-pol-is-basis-P1d}] \mbox{}\\
  cites explicitly:\\
  Lemma~\thref{l:free-family-of-dim-elements-is-basis},\\
  Lemma~\thref{l:aff-map-are-closed-by-composition},\\
  Lemma~\thref{l:pol-space-P0d-P1d},\\
  Lemma~\thref{l:dim-Pkd},\\
  Lemma~\thref{l:lag-pol-is-basis-P1d-ref},\\
  Lemma~\thref{l:prop-geo-mapping},\\
  Lemma~\thref{l:lag-pol-P1d}.

\item[The proof of Lemma~\thref{l:decomp-P1d-pol-in-lag-basis}] \mbox{}\\
  cites explicitly:\\
  Lemma~\thref{l:lag-pol-is-basis-P1d}.

\item[The proof of Lemma~\thref{l:diff-lag-pol-P1d}] \mbox{}\\
  cites explicitly:\\
  Lemma~\thref{l:pol-space-P0d-P1d},\\
  Lemma~\thref{l:diff-lag-pol-ref},\\
  Lemma~\thref{l:diff-geo-mapping},\\
  Lemma~\thref{l:lag-pol-is-basis-P1d}.

\item[The proof of Lemma~\thref{l:non-trivial-simplex}] \mbox{}\\
  cites explicitly:\\
  Lemma~\thref{l:non-trivial-ref-simplex},\\
  Lemma~\thref{l:prop-geo-mapping},\\
  Lemma~\thref{l:diff-geo-mapping}.

\item[The proof of Lemma~\thref{l:baryc-coor}] \mbox{}\\
  cites explicitly:\\
  Statement(s) from~\cite{cm:lmt:16},\\
  Lemma~\thref{l:free-family-of-dim-elements-is-basis},\\
  Definition~\thref{d:aff-indep-family},\\
  Lemma~\thref{l:equiv-def-aff-indep-family}.

\item[The proof of Lemma~\thref{l:lag-pol-P1d-is-baryc-coor}] \mbox{}\\
  cites explicitly:\\
  Definition~\thref{d:geo-mapping},\\
  Lemma~\thref{l:prop-geo-mapping},\\
  Lemma~\thref{l:lag-pol-P1d},\\
  Lemma~\thref{l:baryc-coor}.

\item[The proof of Lemma~\thref{l:decomp-aff-map-with-baryc-coor}] \mbox{}\\
  cites explicitly:\\
  Lemma~\thref{l:decomp-P1d-pol-in-lag-basis},\\
  Lemma~\thref{l:lag-pol-P1d-is-baryc-coor}.

\item[The proof of Lemma~\thref{l:equiv-def-face-hyperpl}] \mbox{}\\
  cites explicitly:\\
  Statement(s) from~\cite{cm:lmt:16},\\
  Lemma~\thref{l:baryc-closure-is-aff-sub-sp},\\
  Lemma~\thref{l:baryc-coor},\\
  Definition~\thref{d:face-hyperpl}.

\item[The proof of Lemma~\thref{l:ref-face-hyperpl}] \mbox{}\\
  cites explicitly:\\
  Statement(s) from~\cite{cm:lmt:16},\\
  Lemma~\thref{l:ref-affine-vert-is-affinely-indep},\\
  Definition~\thref{d:lag-pol-P1d-ref},\\
  Lemma~\thref{l:lag-pol-of-ref-vert-are-ref-lag-pol-P1d},\\
  Lemma~\thref{l:lag-pol-P1d-is-baryc-coor},\\
  Lemma~\thref{l:equiv-def-face-hyperpl}.

\item[The proof of Lemma~\thref{l:face-hyperpl-im-ref-face-hyperpl}] \mbox{}\\
  cites explicitly:\\
  Definition~\thref{d:fam-ref-aff-pts},\\
  Definition~\thref{d:lag-pol-P1d-ref},\\
  Definition~\thref{d:geo-mapping},\\
  Definition~\thref{d:face-hyperpl}.

\item[The proof of Lemma~\thref{l:hyperface-is-incl-face-hyperplane}] \mbox{}\\
  cites explicitly:\\
  Lemma~\thref{l:equiv-def-face-hyperpl},\\
  Definition~\thref{d:hyperface}.

\item[The proof of Lemma~\thref{l:equiv-def-l-face-aff-space}] \mbox{}\\
  cites explicitly:\\
  Lemma~\thref{l:baryc-closure-is-aff-sub-sp},\\
  Definition~\thref{d:l-face-aff-space}.

\item[The proof of Lemma~\thref{l:d-face-aff-space-is-full-space}] \mbox{}\\
  cites explicitly:\\
  Lemma~\thref{l:free-family-of-dim-elements-is-basis},\\
  Definition~\thref{d:aff-indep-family},\\
  Lemma~\thref{l:aff-indep-closed-by-sub-family},\\
  Lemma~\thref{l:equiv-def-l-face-aff-space}.

\item[The proof of Lemma~\thref{l:0-face-aff-space-is-vertex}] \mbox{}\\
  cites explicitly:\\
  Lemma~\thref{l:equiv-def-l-face-aff-space}.

\item[The proof of Lemma~\thref{l:l-face-is-incl-l-face-aff-space}] \mbox{}\\
  cites explicitly:\\
  Definition~\thref{d:l-face-aff-space},\\
  Definition~\thref{d:l-face}.

\item[The proof of Lemma~\thref{l:l-face-is-simplex}] \mbox{}\\
  cites explicitly:\\
  Definition~\thref{d:simplex},\\
  Definition~\thref{d:l-face}.

\item[The proof of Lemma~\thref{l:d-1-face-aff-sp-is-hyperface-aff-sp}] \mbox{}\\
  cites explicitly:\\
  Lemma~\thref{l:jump-enum},\\
  Lemma~\thref{l:equiv-def-face-hyperpl},\\
  Definition~\thref{d:hyperface},\\
  Definition~\thref{d:l-face-aff-space},\\
  Definition~\thref{d:l-face}.

\item[The proof of Lemma~\thref{l:geo-d-face-mapping-is-geo-mapping}] \mbox{}\\
  cites explicitly:\\
  Definition~\thref{d:geo-mapping},\\
  Definition~\thref{d:geo-l-face-mapping}.

\item[The proof of Lemma~\thref{l:prop-geo-l-face-mapping}] \mbox{}\\
  cites explicitly:\\
  Statement(s) from~\cite{cm:lmt:16},\\
  Definition~\thref{d:aff-map},\\
  Lemma~\thref{l:inj-aff-map-is-zero-linear-ker},\\
  Lemma~\thref{l:inv-of-aff-sub-map-is-aff-sub-map},\\
  Lemma~\thref{l:aff-indep-closed-by-sub-family},\\
  Definition~\thref{d:ref-simplex},\\
  Definition~\thref{d:lag-pol-P1d-ref},\\
  Lemma~\thref{l:lag-pol-is-basis-P1d-ref},\\
  Definition~\thref{d:l-face-aff-space},\\
  Lemma~\thref{l:equiv-def-l-face-aff-space},\\
  Definition~\thref{d:l-face},\\
  Definition~\thref{d:geo-l-face-mapping}.

\item[The proof of Lemma~\thref{l:geo-l-face-mapping-of-Pkd-is-Pkl}] \mbox{}\\
  cites explicitly:\\
  Lemma~\thref{l:aff-mapping-of-Pkd-is-Pkl},\\
  Lemma~\thref{l:prop-geo-l-face-mapping}.

\item[The proof of Lemma~\thref{l:geo-mapping-of-Pkd-is-Pkd}] \mbox{}\\
  cites explicitly:\\
  Lemma~\thref{l:aff-mapping-of-Pkd-is-Pkl},\\
  Lemma~\thref{l:prop-geo-mapping},\\
  Lemma~\thref{l:geo-d-face-mapping-is-geo-mapping},\\
  Lemma~\thref{l:geo-l-face-mapping-of-Pkd-is-Pkl}.

\item[The proof of Lemma~\thref{l:geo-hyperface-mapping}] \mbox{}\\
  cites explicitly:\\
  Lemma~\thref{l:d-1-face-aff-sp-is-hyperface-aff-sp},\\
  Lemma~\thref{l:prop-geo-l-face-mapping}.

\item[The proof of Lemma~\thref{l:hyperface-geo-mapping-of-Pkd-is-Pkdmi}] \mbox{}\\
  cites explicitly:\\
  Lemma~\thref{l:d-1-face-aff-sp-is-hyperface-aff-sp},\\
  Lemma~\thref{l:geo-l-face-mapping-of-Pkd-is-Pkl}.

\item[The proof of Lemma~\thref{l:geo-mapping-permut}] \mbox{}\\
  cites explicitly:\\
  Lemma~\thref{l:im-ker-eq-ker},\\
  Lemma~\thref{l:lag-pol-is-basis-P1d-ref},\\
  Lemma~\thref{l:baryc-coor},\\
  Lemma~\thref{l:lag-pol-P1d-is-baryc-coor},\\
  Lemma~\thref{l:equiv-def-face-hyperpl},\\
  Lemma~\thref{l:ref-face-hyperpl},\\
  Lemma~\thref{l:d-face-aff-space-is-full-space},\\
  Lemma~\thref{l:l-face-is-simplex},\\
  Definition~\thref{d:geo-l-face-mapping},\\
  Lemma~\thref{l:prop-geo-l-face-mapping}.

\item[The proof of Lemma~\thref{l:lag-nodes-Pkd-for-d-eq-1-are-lag-nodes-Pk1}] \mbox{}\\
  cites explicitly:\\
  Definition~\thref{d:lag-nodes-Pk1},\\
  Lemma~\thref{l:first-multi-ind-Akd},\\
  Definition~\thref{d:lag-nodes-Pkd}.

\item[The proof of Lemma~\thref{l:num-lag-nodes-Pkd}] \mbox{}\\
  cites explicitly:\\
  Lemma~\thref{l:card-Akd},\\
  Lemma~\thref{l:baryc-coor},\\
  Definition~\thref{d:lag-nodes-Pkd}.

\item[The proof of Lemma~\thref{l:baryc-coor-lag-nodes-Pkd}] \mbox{}\\
  cites explicitly:\\
  Definition~\thref{d:isobaryc},\\
  Definition~\thref{d:len-multi-ind},\\
  Definition~\thref{d:multi-ind-Akd-Ckd},\\
  Lemma~\thref{l:ind-smaller-than-max-len},\\
  Lemma~\thref{l:baryc-coor},\\
  Definition~\thref{d:lag-nodes-Pkd}.

\item[The proof of Lemma~\thref{l:vert-lag-nodes-Pkd}] \mbox{}\\
  cites explicitly:\\
  Definition~\thref{d:canon-fam},\\
  Definition~\thref{d:len-multi-ind},\\
  Definition~\thref{d:multi-ind-Akd-Ckd},\\
  Definition~\thref{d:lag-nodes-Pkd}.

\item[The proof of Lemma~\thref{l:lag-nodes-Pid-are-vert}] \mbox{}\\
  cites explicitly:\\
  Lemma~\thref{l:first-multi-ind-Akd},\\
  Lemma~\thref{l:vert-lag-nodes-Pkd}.

\item[The proof of Lemma~\thref{l:equiv-def-sub-vert-lag-nodes-Pkd}] \mbox{}\\
  cites explicitly:\\
  Definition~\thref{d:canon-fam},\\
  Definition~\thref{d:len-multi-ind},\\
  Definition~\thref{d:multi-ind-Akd-Ckd},\\
  Lemma~\thref{l:baryc-coor-lag-nodes-Pkd},\\
  Definition~\thref{d:sub-vert-lag-nodes-Pkd}.

\item[The proof of Lemma~\thref{l:sub-vert-aff-indep}] \mbox{}\\
  cites explicitly:\\
  Definition~\thref{d:aff-indep-family},\\
  Lemma~\thref{l:equiv-def-sub-vert-lag-nodes-Pkd}.

\item[The proof of Lemma~\thref{l:Pkm1d-sub-nodes-sub-vert-are-some-nodes-Pkd}] \mbox{}\\
  cites explicitly:\\
  Definition~\thref{d:len-multi-ind},\\
  Lemma~\thref{l:Ckd-layers-Akd},\\
  Lemma~\thref{l:baryc-coor-lag-nodes-Pkd},\\
  Lemma~\thref{l:equiv-def-sub-vert-lag-nodes-Pkd},\\
  Lemma~\thref{l:sub-vert-aff-indep}.

\item[The proof of Lemma~\thref{l:lag-nodes-Pkd-ref}] \mbox{}\\
  cites explicitly:\\
  Definition~\thref{d:fam-ref-aff-pts},\\
  Lemma~\thref{l:ref-isobaryc},\\
  Definition~\thref{d:lag-nodes-Pkd},\\
  Lemma~\thref{l:baryc-coor-lag-nodes-Pkd}.

\item[The proof of Lemma~\thref{l:lag-nodes-Pkd-ref-for-d-eq-1-are-lag-nodes-Pk1-ref}] \mbox{}\\
  cites explicitly:\\
  Definition~\thref{d:canon-fam},\\
  Definition~\thref{d:lag-nodes-Pk1-ref},\\
  Lemma~\thref{l:first-multi-ind-Akd},\\
  Lemma~\thref{l:lag-nodes-Pkd-ref}.

\item[The proof of Lemma~\thref{l:equiv-def-lag-nodes-Pkd-ref}] \mbox{}\\
  cites explicitly:\\
  Definition~\thref{d:canon-fam},\\
  Lemma~\thref{l:ref-isobaryc},\\
  Lemma~\thref{l:lag-nodes-Pkd-ref}.

\item[The proof of Lemma~\thref{l:num-lag-nodes-Pkd-ref}] \mbox{}\\
  cites explicitly:\\
  Lemma~\thref{l:ref-affine-vert-is-affinely-indep},\\
  Lemma~\thref{l:card-Akd},\\
  Lemma~\thref{l:baryc-coor},\\
  Lemma~\thref{l:lag-nodes-Pkd-ref}.

\item[The proof of Lemma~\thref{l:lag-nodes-Pkd-im-ref}] \mbox{}\\
  cites explicitly:\\
  Lemma~\thref{l:prop-geo-mapping},\\
  Definition~\thref{d:lag-nodes-Pkd},\\
  Lemma~\thref{l:equiv-def-lag-nodes-Pkd-ref}.

\item[The proof of Lemma~\thref{l:face-hyperpl-lag-nodes-Pkd}] \mbox{}\\
  cites explicitly:\\
  Definition~\thref{d:multi-ind-Akd-Ckd},\\
  Lemma~\thref{l:card-Ckd},\\
  Lemma~\thref{l:card-Akd},\\
  Lemma~\thref{l:card-Ckd-Akdm1},\\
  Lemma~\thref{l:card-Akdi-Akdm1},\\
  Lemma~\thref{l:equiv-def-face-hyperpl},\\
  Lemma~\thref{l:num-lag-nodes-Pkd},\\
  Lemma~\thref{l:baryc-coor-lag-nodes-Pkd}.

\item[The proof of Lemma~\thref{l:im-nodes-by-geo-hyperface-mapping}] \mbox{}\\
  cites explicitly:\\
  Lemma~\thref{l:aff-map-preserves-baryc},\\
  Definition~\thref{d:multi-ind-Akd-Ckd},\\
  Lemma~\thref{l:card-Ckd-Akdm1},\\
  Lemma~\thref{l:card-Akdi-Akdm1},\\
  Lemma~\thref{l:geo-hyperface-mapping},\\
  Lemma~\thref{l:baryc-coor-lag-nodes-Pkd},\\
  Lemma~\thref{l:lag-nodes-Pkd-ref}.

\item[The proof of Lemma~\thref{l:lag-lin-forms-Pkd-for-d-eq-1-are-lag-lin-forms-Pk1}] \mbox{}\\
  cites explicitly:\\
  Definition~\thref{d:lag-lin-forms-Pk1},\\
  Lemma~\thref{l:first-multi-ind-Akd},\\
  Lemma~\thref{l:lag-nodes-Pkd-for-d-eq-1-are-lag-nodes-Pk1},\\
  Definition~\thref{d:lag-lin-forms-Pkd}.

\item[The proof of Lemma~\thref{l:lag-lin-forms-are-linear-Pkd}] \mbox{}\\
  cites explicitly:\\
  Definition~\thref{d:lag-lin-forms-Pkd}.

\item[The proof of Lemma~\thref{l:card-lag-lin-forms-Pkd}] \mbox{}\\
  cites explicitly:\\
  Lemma~\thref{l:num-lag-nodes-Pkd},\\
  Definition~\thref{d:lag-lin-forms-Pkd}.

\item[The proof of Lemma~\thref{l:lag-lin-forms-Pkd-ref-for-d-eq-1-are-lag-lin-forms-Pk1-ref}] \mbox{}\\
  cites explicitly:\\
  Definition~\thref{d:lag-lin-forms-Pk1-ref},\\
  Lemma~\thref{l:first-multi-ind-Akd},\\
  Lemma~\thref{l:lag-nodes-Pkd-ref-for-d-eq-1-are-lag-nodes-Pk1-ref},\\
  Definition~\thref{d:lag-lin-forms-Pkd-ref}.

\item[The proof of Lemma~\thref{l:lag-lin-forms-Pkd-im-ref}] \mbox{}\\
  cites explicitly:\\
  Lemma~\thref{l:lag-nodes-Pkd-im-ref},\\
  Definition~\thref{d:lag-lin-forms-Pkd},\\
  Definition~\thref{d:lag-lin-forms-Pkd-ref}.

\item[The proof of Lemma~\thref{l:lag-lin-forms-P0d-inj}] \mbox{}\\
  cites explicitly:\\
  Statement(s) from~\cite{cm:lmt:16},\\
  Definition~\thref{d:fe-triple},\\
  Lemma~\thref{l:pol-space-P0d-P1d},\\
  Definition~\thref{d:lag-lin-forms-Pkd}.

\item[The proof of Lemma~\thref{l:unisolvence-P0d}] \mbox{}\\
  cites explicitly:\\
  Lemma~\thref{l:prop-binom-coef},\\
  Lemma~\thref{l:inj-implies-unisolvence},\\
  Lemma~\thref{l:dim-Pkd},\\
  Lemma~\thref{l:card-lag-lin-forms-Pkd},\\
  Lemma~\thref{l:lag-lin-forms-P0d-inj}.

\item[The proof of Lemma~\thref{l:decomp-P1d-pol-with-sigma}] \mbox{}\\
  cites explicitly:\\
  Lemma~\thref{l:decomp-P1d-pol-in-lag-basis},\\
  Lemma~\thref{l:decomp-aff-map-with-baryc-coor},\\
  Lemma~\thref{l:lag-nodes-Pid-are-vert},\\
  Definition~\thref{d:lag-lin-forms-Pkd}.

\item[The proof of Lemma~\thref{l:lag-lin-forms-P1d-inj}] \mbox{}\\
  cites explicitly:\\
  Statement(s) from~\cite{cm:lmt:16},\\
  Definition~\thref{d:fe-triple},\\
  Lemma~\thref{l:first-multi-ind-Akd},\\
  Lemma~\thref{l:decomp-P1d-pol-with-sigma}.

\item[The proof of Lemma~\thref{l:unisolvence-P1d}] \mbox{}\\
  cites explicitly:\\
  Lemma~\thref{l:prop-binom-coef},\\
  Lemma~\thref{l:inj-implies-unisolvence},\\
  Lemma~\thref{l:dim-Pkd},\\
  Lemma~\thref{l:card-lag-lin-forms-Pkd},\\
  Lemma~\thref{l:lag-lin-forms-P1d-inj}.

\item[The proof of Lemma~\thref{l:factor-zero-pol-last-ref-hyperpl}] \mbox{}\\
  cites explicitly:\\
  Statement(s) from~\cite{cm:lmt:16},\\
  Lemma~\thref{l:decomp-Pkd},\\
  Definition~\thref{d:lag-pol-P1d-ref},\\
  Lemma~\thref{l:ref-face-hyperpl}.

\item[The proof of Lemma~\thref{l:factor-zero-pol-hyperpl-Pkd}] \mbox{}\\
  cites explicitly:\\
  Statement(s) from~\cite{cm:lmt:16},\\
  Lemma~\thref{l:circ-permut},\\
  Lemma~\thref{l:inv-of-aff-map-is-aff-map},\\
  Lemma~\thref{l:aff-mapping-of-Pkd-is-Pkl},\\
  Lemma~\thref{l:equiv-def-face-hyperpl},\\
  Lemma~\thref{l:geo-mapping-permut},\\
  Lemma~\thref{l:factor-zero-pol-last-ref-hyperpl}.

\item[The proof of Lemma~\thref{l:lag-lin-forms-Pkd-inj}] \mbox{}\\
  cites explicitly:\\
  Statement(s) from~\cite{cm:lmt:16},\\
  Lemma~\thref{l:double-induction-by-diagonal},\\
  Lemma~\thref{l:aff-indep-family-of-two-elements},\\
  Definition~\thref{d:fe-triple},\\
  Lemma~\thref{l:ref-affine-vert-is-affinely-indep},\\
  Lemma~\thref{l:lag-lin-forms-Pk1-inj},\\
  Lemma~\thref{l:Ckd-layers-Akd},\\
  Lemma~\thref{l:card-Ckd-Akdm1},\\
  Lemma~\thref{l:equiv-def-face-hyperpl},\\
  Lemma~\thref{l:geo-hyperface-mapping},\\
  Lemma~\thref{l:hyperface-geo-mapping-of-Pkd-is-Pkdmi},\\
  Definition~\thref{d:lag-nodes-Pkd},\\
  Lemma~\thref{l:sub-vert-aff-indep},\\
  Lemma~\thref{l:Pkm1d-sub-nodes-sub-vert-are-some-nodes-Pkd},\\
  Lemma~\thref{l:face-hyperpl-lag-nodes-Pkd},\\
  Lemma~\thref{l:im-nodes-by-geo-hyperface-mapping},\\
  Definition~\thref{d:lag-lin-forms-Pkd},\\
  Lemma~\thref{l:lag-lin-forms-Pkd-for-d-eq-1-are-lag-lin-forms-Pk1},\\
  Lemma~\thref{l:lag-lin-forms-P1d-inj},\\
  Lemma~\thref{l:factor-zero-pol-hyperpl-Pkd}.

\item[The proof of Theorem~\thref{t:unisolvence-Pkd}] \mbox{}\\
  cites explicitly:\\
  Lemma~\thref{l:inj-implies-unisolvence},\\
  Lemma~\thref{l:dim-Pkd},\\
  Definition~\thref{d:lag-lin-forms-Pkd},\\
  Lemma~\thref{l:card-lag-lin-forms-Pkd},\\
  Lemma~\thref{l:lag-lin-forms-Pkd-inj}.

\item[The proof of Lemma~\thref{l:face-unisolvence-Pkd}] \mbox{}\\
  cites explicitly:\\
  Statement(s) from~\cite{cm:lmt:16},\\
  Definition~\thref{d:fe-triple},\\
  Lemma~\thref{l:ref-affine-vert-is-affinely-indep},\\
  Lemma~\thref{l:card-Ckd-Akdm1},\\
  Lemma~\thref{l:card-Akdi-Akdm1},\\
  Lemma~\thref{l:geo-hyperface-mapping},\\
  Lemma~\thref{l:hyperface-geo-mapping-of-Pkd-is-Pkdmi},\\
  Lemma~\thref{l:face-hyperpl-lag-nodes-Pkd},\\
  Lemma~\thref{l:im-nodes-by-geo-hyperface-mapping},\\
  Definition~\thref{d:lag-lin-forms-Pkd},\\
  Lemma~\thref{l:lag-lin-forms-Pkd-inj}.

\item[The proof of Theorem~\thref{t:Pkd-lag-fe}] \mbox{}\\
  cites explicitly:\\
  Definition~\thref{d:fe-triple},\\
  Lemma~\thref{l:dim-Pkd},\\
  Lemma~\thref{l:non-trivial-simplex},\\
  Lemma~\thref{l:unisolvence-P0d},\\
  Theorem~\thref{t:unisolvence-Pkd}.

\item[The proof of Lemma~\thref{l:Pkd-lag-fe-for-d-eq-1-is-Pk1-lag-fe}] \mbox{}\\
  cites explicitly:\\
  Theorem~\thref{t:Pk1-lag-fe},\\
  Lemma~\thref{l:lag-lin-forms-Pkd-for-d-eq-1-are-lag-lin-forms-Pk1},\\
  Theorem~\thref{t:Pkd-lag-fe}.

\item[The proof of Theorem~\thref{t:Pkd-lag-fe-ref}] \mbox{}\\
  cites explicitly:\\
  Lemma~\thref{l:ref-affine-vert-is-affinely-indep},\\
  Lemma~\thref{l:simplex-of-ref-vert-is-ref-simplex},\\
  Lemma~\thref{l:lag-nodes-Pkd-ref},\\
  Definition~\thref{d:lag-lin-forms-Pkd-ref},\\
  Theorem~\thref{t:Pkd-lag-fe}.

\item[The proof of Lemma~\thref{l:Pkd-lag-fe-ref-for-d-eq-1-is-Pk1-lag-fe-ref}] \mbox{}\\
  cites explicitly:\\
  Theorem~\thref{t:Pk1-lag-fe-ref},\\
  Lemma~\thref{l:lag-lin-forms-Pkd-ref-for-d-eq-1-are-lag-lin-forms-Pk1-ref},\\
  Theorem~\thref{t:Pkd-lag-fe-ref}.

\end{description}

\chapter{Is explicitly cited in the proof of\ldots}
\label{c:is-explicitly-cited-in-the-proof-of}

This appendix gathers the explicit citations that appear in the proof of
results (lemmas and theorems) for each statement listed in
Appendix~\ref{c:lists-of-statements}.
Statements from~\cite{cm:lmt:16,cm:li:21} are anonymized.

The dual graph is described in Appendix~\ref{c:the-proof-cites-explicitly}.

Printing is not advised!

\bigskip

\begin{description}[style=unboxed]

\item[Statement(s) from~\cite{cm:lmt:16}] \mbox{}\\
  are explicitly cited in the proof of:\\
  Lemma~\thref{l:im-ker-incl-ker},\\
  Lemma~\thref{l:im-ker-eq-ker},\\
  Lemma~\thref{l:sub-sp-inv-transl},\\
  Lemma~\thref{l:rg-lin-map-is-sub-sp},\\
  Lemma~\thref{l:inj-or-surj-and-dim-implies-bij},\\
  Lemma~\thref{l:inverse-of-isomorphism-is-linear-map},\\
  Lemma~\thref{l:orig-is-in-aff-sub-sp},\\
  Lemma~\thref{l:equiv-def-aff-sub-sp},\\
  Lemma~\thref{l:aff-sub-sp-inv-chg-orig},\\
  Lemma~\thref{l:aff-plus-vect-is-aff-sub-sp},\\
  Lemma~\thref{l:closed-under-baryc-is-aff-sub-sp},\\
  Lemma~\thref{l:baryc-closure-is-aff-sub-sp},\\
  Lemma~\thref{l:change-orig-aff-map},\\
  Lemma~\thref{l:space-aff-maps},\\
  Lemma~\thref{l:aff-map-preserves-baryc},\\
  Lemma~\thref{l:aff-map-preserves-isobaryc},\\
  Lemma~\thref{l:aff-map-are-closed-by-composition},\\
  Lemma~\thref{l:cont-aff-map-is-cont-linear-map},\\
  Lemma~\thref{l:inj-aff-sub-map-is-zero-linear-ker},\\
  Lemma~\thref{l:surj-aff-sub-map-is-full-linear-rg},\\
  Lemma~\thref{l:inv-of-aff-sub-map-is-aff-sub-map},\\
  Lemma~\thref{l:equiv-def-aff-indep-family},\\
  Lemma~\thref{l:lag-lin-forms-Pk1-ref-inj},\\
  Lemma~\thref{l:lag-lin-forms-Pk1-inj},\\
  Lemma~\thref{l:isom-P0d-P0dm1},\\
  Lemma~\thref{l:isom-Pkd-Pkdm1xPkm1d},\\
  Lemma~\thref{l:prod-2-polynom-alt-proof},\\
  Lemma~\thref{l:prop-geo-mapping},\\
  Lemma~\thref{l:baryc-coor},\\
  Lemma~\thref{l:equiv-def-face-hyperpl},\\
  Lemma~\thref{l:ref-face-hyperpl},\\
  Lemma~\thref{l:prop-geo-l-face-mapping},\\
  Lemma~\thref{l:lag-lin-forms-P0d-inj},\\
  Lemma~\thref{l:lag-lin-forms-P1d-inj},\\
  Lemma~\thref{l:factor-zero-pol-last-ref-hyperpl},\\
  Lemma~\thref{l:factor-zero-pol-hyperpl-Pkd},\\
  Lemma~\thref{l:lag-lin-forms-Pkd-inj},\\
  Lemma~\thref{l:face-unisolvence-Pkd}.

\item[Statement(s) from~\cite{cm:li:21}] \mbox{}\\
  is not yet used.

\item[Lemma~\thref{l:double-induction-by-diagonal}] \mbox{}\\
  is explicitly cited in the proof of:\\
  Lemma~\thref{l:lag-lin-forms-Pkd-inj}.

\item[Lemma~\thref{l:strong-double-induction}] \mbox{}\\
  is explicitly cited in the proof of:\\
  Lemma~\thref{l:prod-2-polynom-alt-proof}.

\item[Definition~\thref{d:binom-coef}] \mbox{}\\
  is explicitly cited in the proof of:\\
  Lemma~\thref{l:prop-binom-coef},\\
  Lemma~\thref{l:card-Ckd}.

\item[Lemma~\thref{l:prop-binom-coef}] \mbox{}\\
  is explicitly cited in the proof of:\\
  Lemma~\thref{l:card-Ak1},\\
  Lemma~\thref{l:card-Ckd},\\
  Lemma~\thref{l:card-Akd},\\
  Lemma~\thref{l:isom-Pkd-Pkdm1xPkm1d},\\
  Lemma~\thref{l:unisolvence-P0d},\\
  Lemma~\thref{l:unisolvence-P1d}.

\item[Definition~\thref{d:canon-fam}] \mbox{}\\
  is explicitly cited in the proof of:\\
  Lemma~\thref{l:ref-affine-vert-is-affinely-indep},\\
  Lemma~\thref{l:first-Ckd},\\
  Lemma~\thref{l:pol-space-P0d-P1d},\\
  Lemma~\thref{l:vert-lag-nodes-Pkd},\\
  Lemma~\thref{l:equiv-def-sub-vert-lag-nodes-Pkd},\\
  Lemma~\thref{l:lag-nodes-Pkd-ref-for-d-eq-1-are-lag-nodes-Pk1-ref},\\
  Lemma~\thref{l:equiv-def-lag-nodes-Pkd-ref}.

\item[Lemma~\thref{l:circ-permut}] \mbox{}\\
  is explicitly cited in the proof of:\\
  Lemma~\thref{l:factor-zero-pol-hyperpl-Pkd}.

\item[Lemma~\thref{l:trsp}] \mbox{}\\
  is not yet used.

\item[Lemma~\thref{l:jump-enum}] \mbox{}\\
  is explicitly cited in the proof of:\\
  Lemma~\thref{l:d-1-face-aff-sp-is-hyperface-aff-sp}.

\item[Lemma~\thref{l:im-ker-incl-ker}] \mbox{}\\
  is explicitly cited in the proof of:\\
  Lemma~\thref{l:im-ker-eq-ker}.

\item[Lemma~\thref{l:im-ker-eq-ker}] \mbox{}\\
  is explicitly cited in the proof of:\\
  Lemma~\thref{l:geo-mapping-permut}.

\item[Lemma~\thref{l:sub-sp-inv-transl}] \mbox{}\\
  is explicitly cited in the proof of:\\
  Lemma~\thref{l:equiv-def-aff-sub-sp},\\
  Lemma~\thref{l:aff-sub-sp-inv-chg-orig},\\
  Lemma~\thref{l:vect_sub-sp-is-aff-sub-sp}.

\item[Lemma~\thref{l:rg-lin-map-is-sub-sp}] \mbox{}\\
  is explicitly cited in the proof of:\\
  Lemma~\thref{l:rg-aff-map-is-aff-sub-sp}.

\item[Lemma~\thref{l:inj-or-surj-and-dim-implies-bij}] \mbox{}\\
  is explicitly cited in the proof of:\\
  Lemma~\thref{l:inj-implies-unisolvence},\\
  Lemma~\thref{l:isom-P0d-P0dm1},\\
  Lemma~\thref{l:isom-Pkd-Pkdm1xPkm1d},\\
  Lemma~\thref{l:prop-geo-mapping}.

\item[Lemma~\thref{l:inverse-of-isomorphism-is-linear-map}] \mbox{}\\
  is explicitly cited in the proof of:\\
  Lemma~\thref{l:inv-of-aff-sub-map-is-aff-sub-map}.

\item[Lemma~\thref{l:free-family-of-dim-elements-is-basis}] \mbox{}\\
  is explicitly cited in the proof of:\\
  Lemma~\thref{l:dim-Pk1},\\
  Lemma~\thref{l:dof-basis},\\
  Lemma~\thref{l:lag-pol-is-basis-Pk1},\\
  Lemma~\thref{l:lag-pol-is-basis-P1d-ref},\\
  Lemma~\thref{l:lag-pol-is-basis-P1d},\\
  Lemma~\thref{l:baryc-coor},\\
  Lemma~\thref{l:d-face-aff-space-is-full-space}.

\item[Definition~\thref{d:aff-sub-sp}] \mbox{}\\
  is explicitly cited in the proof of:\\
  Lemma~\thref{l:orig-is-in-aff-sub-sp},\\
  Lemma~\thref{l:equiv-def-aff-sub-sp},\\
  Lemma~\thref{l:aff-sub-sp-inv-chg-orig},\\
  Lemma~\thref{l:aff-plus-vect-is-aff-sub-sp},\\
  Lemma~\thref{l:closed-under-baryc-is-aff-sub-sp},\\
  Lemma~\thref{l:baryc-closure-is-aff-sub-sp},\\
  Lemma~\thref{l:equiv-def-aff-map},\\
  Lemma~\thref{l:change-orig-aff-map},\\
  Lemma~\thref{l:rg-aff-map-is-aff-sub-sp},\\
  Lemma~\thref{l:aff-sub-map},\\
  Lemma~\thref{l:surj-aff-sub-map-is-full-linear-rg},\\
  Lemma~\thref{l:inv-of-aff-sub-map-is-aff-sub-map}.

\item[Lemma~\thref{l:orig-is-in-aff-sub-sp}] \mbox{}\\
  is not yet used.

\item[Lemma~\thref{l:equiv-def-aff-sub-sp}] \mbox{}\\
  is explicitly cited in the proof of:\\
  Lemma~\thref{l:aff-sub-sp-inv-chg-orig},\\
  Lemma~\thref{l:vect_sub-sp-is-aff-sub-sp},\\
  Lemma~\thref{l:aff-plus-vect-is-aff-sub-sp},\\
  Lemma~\thref{l:closed-under-baryc-is-aff-sub-sp},\\
  Lemma~\thref{l:equiv-def-aff-map},\\
  Lemma~\thref{l:change-orig-aff-map},\\
  Lemma~\thref{l:surj-aff-sub-map-is-full-linear-rg},\\
  Lemma~\thref{l:inv-of-aff-sub-map-is-aff-sub-map}.

\item[Lemma~\thref{l:aff-sub-sp-inv-chg-orig}] \mbox{}\\
  is explicitly cited in the proof of:\\
  Lemma~\thref{l:out-restr-aff-map},\\
  Lemma~\thref{l:surj-aff-sub-map-is-full-linear-rg}.

\item[Lemma~\thref{l:vect_sub-sp-is-aff-sub-sp}] \mbox{}\\
  is not yet used.

\item[Lemma~\thref{l:aff-plus-vect-is-aff-sub-sp}] \mbox{}\\
  is not yet used.

\item[Lemma~\thref{l:closed-under-baryc-is-aff-sub-sp}] \mbox{}\\
  is not yet used.

\item[Lemma~\thref{l:baryc-closure-is-aff-sub-sp}] \mbox{}\\
  is explicitly cited in the proof of:\\
  Lemma~\thref{l:equiv-def-face-hyperpl},\\
  Lemma~\thref{l:equiv-def-l-face-aff-space}.

\item[Definition~\thref{d:aff-map}] \mbox{}\\
  is explicitly cited in the proof of:\\
  Lemma~\thref{l:equiv-def-aff-map},\\
  Lemma~\thref{l:change-orig-aff-map},\\
  Lemma~\thref{l:rg-aff-map-is-aff-sub-sp},\\
  Lemma~\thref{l:aff-sub-map},\\
  Lemma~\thref{l:equiv-def-aff-map-finite-dim},\\
  Lemma~\thref{l:aff-map-preserves-baryc},\\
  Lemma~\thref{l:aff-map-are-closed-by-composition},\\
  Lemma~\thref{l:surj-aff-sub-map-is-full-linear-rg},\\
  Lemma~\thref{l:inv-of-aff-sub-map-is-aff-sub-map},\\
  Lemma~\thref{l:prop-geo-mapping},\\
  Lemma~\thref{l:prop-geo-l-face-mapping}.

\item[Lemma~\thref{l:equiv-def-aff-map}] \mbox{}\\
  is explicitly cited in the proof of:\\
  Lemma~\thref{l:aff-map-are-closed-by-composition},\\
  Lemma~\thref{l:cont-aff-map-is-cont-linear-map},\\
  Lemma~\thref{l:surj-aff-sub-map-is-full-linear-rg},\\
  Lemma~\thref{l:inv-of-aff-sub-map-is-aff-sub-map}.

\item[Lemma~\thref{l:change-orig-aff-map}] \mbox{}\\
  is explicitly cited in the proof of:\\
  Lemma~\thref{l:aff-sub-map}.

\item[Lemma~\thref{l:rg-aff-map-is-aff-sub-sp}] \mbox{}\\
  is explicitly cited in the proof of:\\
  Lemma~\thref{l:out-restr-aff-map}.

\item[Definition~\thref{d:set-aff-maps}] \mbox{}\\
  is explicitly cited in the proof of:\\
  Lemma~\thref{l:space-aff-maps},\\
  Lemma~\thref{l:out-restr-aff-map},\\
  Lemma~\thref{l:aff-sub-map},\\
  Lemma~\thref{l:aff-map-are-closed-by-composition}.

\item[Lemma~\thref{l:space-aff-maps}] \mbox{}\\
  is explicitly cited in the proof of:\\
  Lemma~\thref{l:deg-Pi-leq-k}.

\item[Lemma~\thref{l:out-restr-aff-map}] \mbox{}\\
  is explicitly cited in the proof of:\\
  Lemma~\thref{l:aff-sub-map}.

\item[Lemma~\thref{l:aff-sub-map}] \mbox{}\\
  is explicitly cited in the proof of:\\
  Lemma~\thref{l:inj-aff-sub-map-is-zero-linear-ker}.

\item[Lemma~\thref{l:equiv-def-aff-map-finite-dim}] \mbox{}\\
  is explicitly cited in the proof of:\\
  Lemma~\thref{l:deg-Pi-leq-k},\\
  Lemma~\thref{l:prop-geo-mapping-1d},\\
  Lemma~\thref{l:pol-space-P0d-P1d},\\
  Lemma~\thref{l:aff-mapping-of-monom-is-Pkl}.

\item[Lemma~\thref{l:aff-map-preserves-baryc}] \mbox{}\\
  is explicitly cited in the proof of:\\
  Lemma~\thref{l:aff-map-preserves-isobaryc},\\
  Lemma~\thref{l:im-nodes-by-geo-hyperface-mapping}.

\item[Definition~\thref{d:isobaryc}] \mbox{}\\
  is explicitly cited in the proof of:\\
  Lemma~\thref{l:aff-map-preserves-isobaryc},\\
  Lemma~\thref{l:ref-isobaryc},\\
  Lemma~\thref{l:baryc-coor-lag-nodes-Pkd}.

\item[Lemma~\thref{l:aff-map-preserves-isobaryc}] \mbox{}\\
  is not yet used.

\item[Lemma~\thref{l:aff-map-are-closed-by-composition}] \mbox{}\\
  is explicitly cited in the proof of:\\
  Lemma~\thref{l:lag-pol-is-basis-P1d}.

\item[Lemma~\thref{l:cont-aff-map-is-cont-linear-map}] \mbox{}\\
  is not yet used.

\item[Lemma~\thref{l:inj-aff-sub-map-is-zero-linear-ker}] \mbox{}\\
  is explicitly cited in the proof of:\\
  Lemma~\thref{l:inj-aff-map-is-zero-linear-ker},\\
  Lemma~\thref{l:inv-of-aff-sub-map-is-aff-sub-map}.

\item[Lemma~\thref{l:inj-aff-map-is-zero-linear-ker}] \mbox{}\\
  is explicitly cited in the proof of:\\
  Lemma~\thref{l:prop-geo-mapping},\\
  Lemma~\thref{l:prop-geo-l-face-mapping}.

\item[Lemma~\thref{l:surj-aff-sub-map-is-full-linear-rg}] \mbox{}\\
  is explicitly cited in the proof of:\\
  Lemma~\thref{l:surj-aff-map-is-full-linear-rg},\\
  Lemma~\thref{l:inv-of-aff-sub-map-is-aff-sub-map}.

\item[Lemma~\thref{l:surj-aff-map-is-full-linear-rg}] \mbox{}\\
  is not yet used.

\item[Lemma~\thref{l:inv-of-aff-sub-map-is-aff-sub-map}] \mbox{}\\
  is explicitly cited in the proof of:\\
  Lemma~\thref{l:inv-of-aff-map-is-aff-map},\\
  Lemma~\thref{l:prop-geo-l-face-mapping}.

\item[Lemma~\thref{l:inv-of-aff-map-is-aff-map}] \mbox{}\\
  is explicitly cited in the proof of:\\
  Lemma~\thref{l:prop-geo-mapping-1d},\\
  Lemma~\thref{l:prop-geo-mapping},\\
  Lemma~\thref{l:factor-zero-pol-hyperpl-Pkd}.

\item[Definition~\thref{d:aff-indep-family}] \mbox{}\\
  is explicitly cited in the proof of:\\
  Lemma~\thref{l:equiv-def-aff-indep-family},\\
  Lemma~\thref{l:aff-indep-family-of-two-elements},\\
  Lemma~\thref{l:aff-indep-closed-by-sub-family},\\
  Lemma~\thref{l:ref-affine-vert-is-affinely-indep},\\
  Lemma~\thref{l:prop-geo-mapping},\\
  Lemma~\thref{l:baryc-coor},\\
  Lemma~\thref{l:d-face-aff-space-is-full-space},\\
  Lemma~\thref{l:sub-vert-aff-indep}.

\item[Lemma~\thref{l:equiv-def-aff-indep-family}] \mbox{}\\
  is explicitly cited in the proof of:\\
  Lemma~\thref{l:aff-indep-closed-by-sub-family},\\
  Lemma~\thref{l:baryc-coor}.

\item[Lemma~\thref{l:aff-indep-family-of-two-elements}] \mbox{}\\
  is explicitly cited in the proof of:\\
  Lemma~\thref{l:lag-lin-forms-Pkd-inj}.

\item[Lemma~\thref{l:aff-indep-closed-by-sub-family}] \mbox{}\\
  is explicitly cited in the proof of:\\
  Lemma~\thref{l:d-face-aff-space-is-full-space},\\
  Lemma~\thref{l:prop-geo-l-face-mapping}.

\item[Definition~\thref{d:monom-k1}] \mbox{}\\
  is explicitly cited in the proof of:\\
  Lemma~\thref{l:deg-Pi-leq-k},\\
  Lemma~\thref{l:monom-kd-for-d-eq-1-is-monom-k1}.

\item[Definition~\thref{d:pol-space-Pk1}] \mbox{}\\
  is explicitly cited in the proof of:\\
  Lemma~\thref{l:deg-Pi-leq-k},\\
  Lemma~\thref{l:dim-Pk1},\\
  Lemma~\thref{l:lag-pol-is-basis-Pk1},\\
  Lemma~\thref{l:geo-mapping-of-Pk1-is-Pk1},\\
  Lemma~\thref{l:pol-space-Pkd-for-d-eq-1-is-Pk1}.

\item[Lemma~\thref{l:deg-Pi-leq-k}] \mbox{}\\
  is explicitly cited in the proof of:\\
  Lemma~\thref{l:lag-pol-is-basis-Pk1},\\
  Lemma~\thref{l:prop-geo-mapping-1d}.

\item[Lemma~\thref{l:monom-free-in-Pk1}] \mbox{}\\
  is explicitly cited in the proof of:\\
  Lemma~\thref{l:dim-Pk1},\\
  Lemma~\thref{l:Pkd-as-pol-xd}.

\item[Lemma~\thref{l:dim-Pk1}] \mbox{}\\
  is explicitly cited in the proof of:\\
  Lemma~\thref{l:lag-pol-is-basis-Pk1},\\
  Lemma~\thref{l:unisolvence-Pk1-ref},\\
  Theorem~\thref{t:Pk1-lag-fe-ref},\\
  Lemma~\thref{l:unisolvence-Pk1},\\
  Theorem~\thref{t:Pk1-lag-fe}.

\item[Lemma~\thref{l:prod-2-polynom-univ}] \mbox{}\\
  is explicitly cited in the proof of:\\
  Lemma~\thref{l:lag-pol-is-basis-Pk1},\\
  Lemma~\thref{l:prod-2-polynom-alt-proof}.

\item[Definition~\thref{d:fe-triple}] \mbox{}\\
  is explicitly cited in the proof of:\\
  Lemma~\thref{l:inj-implies-unisolvence},\\
  Lemma~\thref{l:lag-lin-forms-Pk1-ref-inj},\\
  Theorem~\thref{t:Pk1-lag-fe-ref},\\
  Lemma~\thref{l:lag-lin-forms-Pk1-inj},\\
  Theorem~\thref{t:Pk1-lag-fe},\\
  Lemma~\thref{l:lag-lin-forms-P0d-inj},\\
  Lemma~\thref{l:lag-lin-forms-P1d-inj},\\
  Lemma~\thref{l:lag-lin-forms-Pkd-inj},\\
  Lemma~\thref{l:face-unisolvence-Pkd},\\
  Theorem~\thref{t:Pkd-lag-fe}.

\item[Lemma~\thref{l:inj-implies-unisolvence}] \mbox{}\\
  is explicitly cited in the proof of:\\
  Lemma~\thref{l:unisolvence-Pk1-ref},\\
  Lemma~\thref{l:unisolvence-Pk1},\\
  Lemma~\thref{l:unisolvence-P0d},\\
  Lemma~\thref{l:unisolvence-P1d},\\
  Theorem~\thref{t:unisolvence-Pkd}.

\item[Lemma~\thref{l:dim-of-P}] \mbox{}\\
  is explicitly cited in the proof of:\\
  Lemma~\thref{l:dof-basis}.

\item[Lemma~\thref{l:dof-basis}] \mbox{}\\
  is not yet used.

\item[Definition~\thref{d:shape-fun}] \mbox{}\\
  is not yet used.

\item[Definition~\thref{d:fam-aff-pts}] \mbox{}\\
  is not yet used.

\item[Definition~\thref{d:fam-ref-aff-pts}] \mbox{}\\
  is explicitly cited in the proof of:\\
  Lemma~\thref{l:ref-isobaryc},\\
  Lemma~\thref{l:ref-affine-vert-is-affinely-indep},\\
  Lemma~\thref{l:simplex-of-ref-vert-is-ref-simplex},\\
  Lemma~\thref{l:lag-pol-is-basis-P1d-ref},\\
  Lemma~\thref{l:ref-geo-mapping-is-id},\\
  Lemma~\thref{l:face-hyperpl-im-ref-face-hyperpl},\\
  Lemma~\thref{l:lag-nodes-Pkd-ref}.

\item[Lemma~\thref{l:ref-isobaryc}] \mbox{}\\
  is explicitly cited in the proof of:\\
  Lemma~\thref{l:lag-nodes-Pkd-ref},\\
  Lemma~\thref{l:equiv-def-lag-nodes-Pkd-ref}.

\item[Lemma~\thref{l:ref-affine-vert-is-affinely-indep}] \mbox{}\\
  is explicitly cited in the proof of:\\
  Lemma~\thref{l:ref-face-hyperpl},\\
  Lemma~\thref{l:num-lag-nodes-Pkd-ref},\\
  Lemma~\thref{l:lag-lin-forms-Pkd-inj},\\
  Lemma~\thref{l:face-unisolvence-Pkd},\\
  Theorem~\thref{t:Pkd-lag-fe-ref}.

\item[Definition~\thref{d:ref-simplex}] \mbox{}\\
  is explicitly cited in the proof of:\\
  Lemma~\thref{l:coord-in-ref-simplex-smaller-than-1},\\
  Lemma~\thref{l:non-trivial-ref-simplex},\\
  Lemma~\thref{l:simplex-of-ref-vert-is-ref-simplex},\\
  Lemma~\thref{l:ref-simplex-non-trivial-in-R},\\
  Lemma~\thref{l:prop-geo-mapping},\\
  Lemma~\thref{l:prop-geo-l-face-mapping}.

\item[Lemma~\thref{l:coord-in-ref-simplex-smaller-than-1}] \mbox{}\\
  is not yet used.

\item[Lemma~\thref{l:non-trivial-ref-simplex}] \mbox{}\\
  is explicitly cited in the proof of:\\
  Lemma~\thref{l:ref-simplex-non-trivial-in-R},\\
  Lemma~\thref{l:non-trivial-simplex}.

\item[Definition~\thref{d:simplex}] \mbox{}\\
  is explicitly cited in the proof of:\\
  Lemma~\thref{l:coord-in-simplex-smaller-than-1},\\
  Lemma~\thref{l:simplex-of-ref-vert-is-ref-simplex},\\
  Lemma~\thref{l:curr-simplex-non-trivial-in-R},\\
  Lemma~\thref{l:prop-geo-mapping},\\
  Lemma~\thref{l:l-face-is-simplex}.

\item[Lemma~\thref{l:coord-in-simplex-smaller-than-1}] \mbox{}\\
  is not yet used.

\item[Lemma~\thref{l:simplex-of-ref-vert-is-ref-simplex}] \mbox{}\\
  is explicitly cited in the proof of:\\
  Theorem~\thref{t:Pkd-lag-fe-ref}.

\item[Definition~\thref{d:multi-ind-Ak1}] \mbox{}\\
  is explicitly cited in the proof of:\\
  Lemma~\thref{l:card-Ak1},\\
  Lemma~\thref{l:multi-ind-Akd-for-d-eq-1-is-Ak1}.

\item[Lemma~\thref{l:card-Ak1}] \mbox{}\\
  is not yet used.

\item[Definition~\thref{d:lag-pol-Pk1}] \mbox{}\\
  is explicitly cited in the proof of:\\
  Lemma~\thref{l:lag-pol-is-basis-Pk1},\\
  Lemma~\thref{l:lag-pol-Pk1-im-ref},\\
  Lemma~\thref{l:ref-lag-pol-P1d-for-d-eq-1-are-ref-lag-pol-Pk1-for-k}.

\item[Lemma~\thref{l:lag-pol-is-basis-Pk1}] \mbox{}\\
  is explicitly cited in the proof of:\\
  Lemma~\thref{l:decomp-Pk1-pol-in-lag-basis},\\
  Lemma~\thref{l:lag-basis-Pk1-ref},\\
  Lemma~\thref{l:lag-basis-Pk1}.

\item[Lemma~\thref{l:decomp-Pk1-pol-in-lag-basis}] \mbox{}\\
  is explicitly cited in the proof of:\\
  Lemma~\thref{l:lag-lin-forms-Pk1-ref-inj},\\
  Lemma~\thref{l:lag-lin-forms-Pk1-inj}.

\item[Lemma~\thref{l:ref-simplex-non-trivial-in-R}] \mbox{}\\
  is explicitly cited in the proof of:\\
  Theorem~\thref{t:Pk1-lag-fe-ref},\\
  Lemma~\thref{l:curr-simplex-im-ref-in-R}.

\item[Definition~\thref{d:lag-nodes-Pk1-ref}] \mbox{}\\
  is explicitly cited in the proof of:\\
  Lemma~\thref{l:lag-nodes-distinct-Pk1-ref},\\
  Lemma~\thref{l:prop-geo-mapping-1d},\\
  Lemma~\thref{l:lag-nodes-Pk1-im-ref},\\
  Lemma~\thref{l:ref-lag-pol-P1d-for-d-eq-1-are-ref-lag-pol-Pk1-for-k},\\
  Lemma~\thref{l:lag-nodes-Pkd-ref-for-d-eq-1-are-lag-nodes-Pk1-ref}.

\item[Lemma~\thref{l:lag-nodes-distinct-Pk1-ref}] \mbox{}\\
  is explicitly cited in the proof of:\\
  Lemma~\thref{l:lag-basis-Pk1-ref},\\
  Lemma~\thref{l:lag-lin-forms-Pk1-ref-inj},\\
  Lemma~\thref{l:unisolvence-Pk1-ref}.

\item[Lemma~\thref{l:lag-basis-Pk1-ref}] \mbox{}\\
  is explicitly cited in the proof of:\\
  Lemma~\thref{l:prop-geo-mapping-1d},\\
  Lemma~\thref{l:lag-pol-Pk1-im-ref},\\
  Lemma~\thref{l:geo-mapping-of-Pk1-is-Pk1},\\
  Lemma~\thref{l:ref-lag-pol-P1d-for-d-eq-1-are-ref-lag-pol-Pk1-for-k}.

\item[Definition~\thref{d:lag-lin-forms-Pk1-ref}] \mbox{}\\
  is explicitly cited in the proof of:\\
  Lemma~\thref{l:lag-lin-forms-are-linear-Pk1-ref},\\
  Lemma~\thref{l:lag-lin-forms-Pk1-ref-inj},\\
  Lemma~\thref{l:lag-lin-forms-Pk1-im-ref},\\
  Lemma~\thref{l:lag-lin-forms-Pkd-ref-for-d-eq-1-are-lag-lin-forms-Pk1-ref}.

\item[Lemma~\thref{l:lag-lin-forms-are-linear-Pk1-ref}] \mbox{}\\
  is not yet used.

\item[Lemma~\thref{l:lag-lin-forms-Pk1-ref-inj}] \mbox{}\\
  is explicitly cited in the proof of:\\
  Lemma~\thref{l:unisolvence-Pk1-ref}.

\item[Lemma~\thref{l:unisolvence-Pk1-ref}] \mbox{}\\
  is explicitly cited in the proof of:\\
  Theorem~\thref{t:Pk1-lag-fe-ref}.

\item[Theorem~\thref{t:Pk1-lag-fe-ref}] \mbox{}\\
  is explicitly cited in the proof of:\\
  Lemma~\thref{l:Pkd-lag-fe-ref-for-d-eq-1-is-Pk1-lag-fe-ref}.

\item[Definition~\thref{d:geo-mapping-1d}] \mbox{}\\
  is explicitly cited in the proof of:\\
  Lemma~\thref{l:prop-geo-mapping-1d},\\
  Lemma~\thref{l:curr-simplex-im-ref-in-R},\\
  Lemma~\thref{l:lag-nodes-Pk1-im-ref},\\
  Lemma~\thref{l:lag-pol-Pk1-im-ref},\\
  Lemma~\thref{l:geo-mapping-of-Pk1-is-Pk1},\\
  Lemma~\thref{l:geo-mapping-for-d-eq-1-is-geo-mapping-1d}.

\item[Lemma~\thref{l:prop-geo-mapping-1d}] \mbox{}\\
  is not yet used.

\item[Lemma~\thref{l:curr-simplex-non-trivial-in-R}] \mbox{}\\
  is explicitly cited in the proof of:\\
  Lemma~\thref{l:curr-simplex-im-ref-in-R},\\
  Theorem~\thref{t:Pk1-lag-fe}.

\item[Lemma~\thref{l:curr-simplex-im-ref-in-R}] \mbox{}\\
  is not yet used.

\item[Definition~\thref{d:lag-nodes-Pk1}] \mbox{}\\
  is explicitly cited in the proof of:\\
  Lemma~\thref{l:lag-nodes-distinct-Pk1},\\
  Lemma~\thref{l:lag-nodes-Pk1-im-ref},\\
  Lemma~\thref{l:lag-nodes-Pkd-for-d-eq-1-are-lag-nodes-Pk1}.

\item[Lemma~\thref{l:lag-nodes-distinct-Pk1}] \mbox{}\\
  is explicitly cited in the proof of:\\
  Lemma~\thref{l:lag-basis-Pk1},\\
  Lemma~\thref{l:lag-lin-forms-Pk1-inj},\\
  Lemma~\thref{l:unisolvence-Pk1}.

\item[Lemma~\thref{l:lag-nodes-Pk1-im-ref}] \mbox{}\\
  is explicitly cited in the proof of:\\
  Lemma~\thref{l:lag-pol-Pk1-im-ref},\\
  Lemma~\thref{l:lag-lin-forms-Pk1-im-ref}.

\item[Lemma~\thref{l:lag-basis-Pk1}] \mbox{}\\
  is explicitly cited in the proof of:\\
  Lemma~\thref{l:lag-pol-Pk1-im-ref},\\
  Lemma~\thref{l:geo-mapping-of-Pk1-is-Pk1}.

\item[Lemma~\thref{l:lag-pol-Pk1-im-ref}] \mbox{}\\
  is explicitly cited in the proof of:\\
  Lemma~\thref{l:geo-mapping-of-Pk1-is-Pk1}.

\item[Lemma~\thref{l:geo-mapping-of-Pk1-is-Pk1}] \mbox{}\\
  is not yet used.

\item[Definition~\thref{d:lag-lin-forms-Pk1}] \mbox{}\\
  is explicitly cited in the proof of:\\
  Lemma~\thref{l:lag-lin-forms-are-linear-Pk1},\\
  Lemma~\thref{l:lag-lin-forms-Pk1-im-ref},\\
  Lemma~\thref{l:lag-lin-forms-Pk1-inj},\\
  Lemma~\thref{l:lag-lin-forms-Pkd-for-d-eq-1-are-lag-lin-forms-Pk1}.

\item[Lemma~\thref{l:lag-lin-forms-are-linear-Pk1}] \mbox{}\\
  is not yet used.

\item[Lemma~\thref{l:lag-lin-forms-Pk1-im-ref}] \mbox{}\\
  is not yet used.

\item[Lemma~\thref{l:lag-lin-forms-Pk1-inj}] \mbox{}\\
  is explicitly cited in the proof of:\\
  Lemma~\thref{l:unisolvence-Pk1},\\
  Lemma~\thref{l:lag-lin-forms-Pkd-inj}.

\item[Lemma~\thref{l:unisolvence-Pk1}] \mbox{}\\
  is explicitly cited in the proof of:\\
  Theorem~\thref{t:Pk1-lag-fe}.

\item[Theorem~\thref{t:Pk1-lag-fe}] \mbox{}\\
  is explicitly cited in the proof of:\\
  Lemma~\thref{l:Pkd-lag-fe-for-d-eq-1-is-Pk1-lag-fe}.

\item[Definition~\thref{d:len-multi-ind}] \mbox{}\\
  is explicitly cited in the proof of:\\
  Lemma~\thref{l:len-multi-ind-is-add},\\
  Lemma~\thref{l:multi-ind-Akd-for-d-eq-1-is-Ak1},\\
  Lemma~\thref{l:ind-smaller-than-max-len},\\
  Lemma~\thref{l:card-Ckd-Akdm1},\\
  Lemma~\thref{l:card-Akdi-Akdm1},\\
  Lemma~\thref{l:aff-mapping-of-monom-is-Pkl},\\
  Lemma~\thref{l:baryc-coor-lag-nodes-Pkd},\\
  Lemma~\thref{l:vert-lag-nodes-Pkd},\\
  Lemma~\thref{l:equiv-def-sub-vert-lag-nodes-Pkd},\\
  Lemma~\thref{l:Pkm1d-sub-nodes-sub-vert-are-some-nodes-Pkd}.

\item[Lemma~\thref{l:len-multi-ind-is-add}] \mbox{}\\
  is explicitly cited in the proof of:\\
  Lemma~\thref{l:prod-monom}.

\item[Definition~\thref{d:fact-multi-ind}] \mbox{}\\
  is explicitly cited in the proof of:\\
  Lemma~\thref{l:fact-multi-ind-pos},\\
  Lemma~\thref{l:pder-monom-at-0}.

\item[Lemma~\thref{l:fact-multi-ind-pos}] \mbox{}\\
  is explicitly cited in the proof of:\\
  Lemma~\thref{l:monom-free-in-Pkd}.

\item[Definition~\thref{d:kron-multi-ind}] \mbox{}\\
  is explicitly cited in the proof of:\\
  Lemma~\thref{l:kron-multi-ind-val},\\
  Lemma~\thref{l:pder-monom-at-0}.

\item[Lemma~\thref{l:kron-multi-ind-val}] \mbox{}\\
  is not yet used.

\item[Definition~\thref{d:multi-ind-Akd-Ckd}] \mbox{}\\
  is explicitly cited in the proof of:\\
  Lemma~\thref{l:multi-ind-Akd-for-d-eq-1-is-Ak1},\\
  Lemma~\thref{l:ind-smaller-than-max-len},\\
  Lemma~\thref{l:first-Ckd},\\
  Lemma~\thref{l:slices-of-multi-ind-Ckd},\\
  Lemma~\thref{l:Ckd-layers-Akd},\\
  Lemma~\thref{l:card-Ckd-Akdm1},\\
  Lemma~\thref{l:card-Akdi-Akdm1},\\
  Lemma~\thref{l:deg-monom-Ckd-is-k},\\
  Lemma~\thref{l:prod-monom-polynom},\\
  Lemma~\thref{l:pdef-more-than-deg-is-0},\\
  Lemma~\thref{l:aff-mapping-of-monom-is-Pkl},\\
  Lemma~\thref{l:baryc-coor-lag-nodes-Pkd},\\
  Lemma~\thref{l:vert-lag-nodes-Pkd},\\
  Lemma~\thref{l:equiv-def-sub-vert-lag-nodes-Pkd},\\
  Lemma~\thref{l:face-hyperpl-lag-nodes-Pkd},\\
  Lemma~\thref{l:im-nodes-by-geo-hyperface-mapping}.

\item[Lemma~\thref{l:multi-ind-Akd-for-d-eq-1-is-Ak1}] \mbox{}\\
  is not yet used.

\item[Lemma~\thref{l:ind-smaller-than-max-len}] \mbox{}\\
  is explicitly cited in the proof of:\\
  Lemma~\thref{l:first-Ckd},\\
  Lemma~\thref{l:slices-of-multi-ind-Ckd},\\
  Lemma~\thref{l:baryc-coor-lag-nodes-Pkd}.

\item[Lemma~\thref{l:first-Ckd}] \mbox{}\\
  is explicitly cited in the proof of:\\
  Lemma~\thref{l:card-Ckd},\\
  Lemma~\thref{l:first-multi-ind-Akd},\\
  Lemma~\thref{l:monom-kd-for-d-eq-1-is-monom-k1}.

\item[Definition~\thref{d:slices-Sckdi-and-Stkdi}] \mbox{}\\
  is explicitly cited in the proof of:\\
  Lemma~\thref{l:slices-of-multi-ind-Ckd},\\
  Lemma~\thref{l:card-slices-of-Ckd},\\
  Lemma~\thref{l:decomp-Pkd}.

\item[Lemma~\thref{l:slices-of-multi-ind-Ckd}] \mbox{}\\
  is explicitly cited in the proof of:\\
  Lemma~\thref{l:card-Ckd},\\
  Lemma~\thref{l:decomp-Pkd}.

\item[Lemma~\thref{l:card-slices-of-Ckd}] \mbox{}\\
  is explicitly cited in the proof of:\\
  Lemma~\thref{l:card-Ckd},\\
  Lemma~\thref{l:decomp-Pkd}.

\item[Lemma~\thref{l:card-Ckd}] \mbox{}\\
  is explicitly cited in the proof of:\\
  Lemma~\thref{l:card-Akd},\\
  Lemma~\thref{l:card-Ckd-Akdm1},\\
  Lemma~\thref{l:face-hyperpl-lag-nodes-Pkd}.

\item[Lemma~\thref{l:Ckd-layers-Akd}] \mbox{}\\
  is explicitly cited in the proof of:\\
  Lemma~\thref{l:first-multi-ind-Akd},\\
  Lemma~\thref{l:card-Akd},\\
  Lemma~\thref{l:Pkd-nondecr-k},\\
  Lemma~\thref{l:deg-Pkd-leq-k},\\
  Lemma~\thref{l:decomp-Pkd},\\
  Lemma~\thref{l:aff-mapping-of-Pkd-is-Pkl},\\
  Lemma~\thref{l:Pkm1d-sub-nodes-sub-vert-are-some-nodes-Pkd},\\
  Lemma~\thref{l:lag-lin-forms-Pkd-inj}.

\item[Lemma~\thref{l:first-multi-ind-Akd}] \mbox{}\\
  is explicitly cited in the proof of:\\
  Lemma~\thref{l:pol-space-Pkd-for-d-eq-1-is-Pk1},\\
  Lemma~\thref{l:pol-space-P0d-P1d},\\
  Lemma~\thref{l:lag-nodes-Pkd-for-d-eq-1-are-lag-nodes-Pk1},\\
  Lemma~\thref{l:lag-nodes-Pid-are-vert},\\
  Lemma~\thref{l:lag-nodes-Pkd-ref-for-d-eq-1-are-lag-nodes-Pk1-ref},\\
  Lemma~\thref{l:lag-lin-forms-Pkd-for-d-eq-1-are-lag-lin-forms-Pk1},\\
  Lemma~\thref{l:lag-lin-forms-Pkd-ref-for-d-eq-1-are-lag-lin-forms-Pk1-ref},\\
  Lemma~\thref{l:lag-lin-forms-P1d-inj}.

\item[Lemma~\thref{l:card-Akd}] \mbox{}\\
  is explicitly cited in the proof of:\\
  Lemma~\thref{l:card-Ckd-Akdm1},\\
  Lemma~\thref{l:dim-Pkd},\\
  Lemma~\thref{l:num-lag-nodes-Pkd},\\
  Lemma~\thref{l:num-lag-nodes-Pkd-ref},\\
  Lemma~\thref{l:face-hyperpl-lag-nodes-Pkd}.

\item[Lemma~\thref{l:card-Ckd-Akdm1}] \mbox{}\\
  is explicitly cited in the proof of:\\
  Lemma~\thref{l:face-hyperpl-lag-nodes-Pkd},\\
  Lemma~\thref{l:im-nodes-by-geo-hyperface-mapping},\\
  Lemma~\thref{l:lag-lin-forms-Pkd-inj},\\
  Lemma~\thref{l:face-unisolvence-Pkd}.

\item[Lemma~\thref{l:card-Akdi-Akdm1}] \mbox{}\\
  is explicitly cited in the proof of:\\
  Lemma~\thref{l:face-hyperpl-lag-nodes-Pkd},\\
  Lemma~\thref{l:im-nodes-by-geo-hyperface-mapping},\\
  Lemma~\thref{l:face-unisolvence-Pkd}.

\item[Definition~\thref{d:monom-kd}] \mbox{}\\
  is explicitly cited in the proof of:\\
  Lemma~\thref{l:monom-kd-for-d-eq-1-is-monom-k1},\\
  Lemma~\thref{l:deg-monom-Ckd-is-k},\\
  Lemma~\thref{l:prod-monom},\\
  Lemma~\thref{l:pder-monom},\\
  Lemma~\thref{l:aff-mapping-of-monom-is-Pkl}.

\item[Lemma~\thref{l:monom-kd-for-d-eq-1-is-monom-k1}] \mbox{}\\
  is not yet used.

\item[Definition~\thref{d:pol-space-Pkd}] \mbox{}\\
  is explicitly cited in the proof of:\\
  Lemma~\thref{l:pol-space-Pkd-for-d-eq-1-is-Pk1},\\
  Lemma~\thref{l:Pkd-sp},\\
  Lemma~\thref{l:Pkd-nondecr-k},\\
  Lemma~\thref{l:pol-space-P0d-P1d},\\
  Lemma~\thref{l:deg-Pkd-leq-k},\\
  Lemma~\thref{l:prod-monom-polynom},\\
  Lemma~\thref{l:prod-2-polynom},\\
  Lemma~\thref{l:pder-is-lin},\\
  Lemma~\thref{l:monom-free-in-Pkd},\\
  Lemma~\thref{l:monom-basis-Pkd},\\
  Lemma~\thref{l:decomp-Pkd},\\
  Lemma~\thref{l:Pkd-nondecr-d},\\
  Lemma~\thref{l:aff-mapping-of-Pkd-is-Pkl}.

\item[Lemma~\thref{l:pol-space-Pkd-for-d-eq-1-is-Pk1}] \mbox{}\\
  is not yet used.

\item[Lemma~\thref{l:Pkd-sp}] \mbox{}\\
  is explicitly cited in the proof of:\\
  Lemma~\thref{l:prod-2-polynom},\\
  Lemma~\thref{l:decomp-Pkd},\\
  Lemma~\thref{l:isom-Pkd-Pkdm1xPkm1d},\\
  Lemma~\thref{l:Pkd-nondecr-d},\\
  Lemma~\thref{l:prod-2-polynom-alt-proof},\\
  Lemma~\thref{l:aff-mapping-of-Pkd-is-Pkl}.

\item[Lemma~\thref{l:Pkd-nondecr-k}] \mbox{}\\
  is explicitly cited in the proof of:\\
  Lemma~\thref{l:decomp-Pkd},\\
  Lemma~\thref{l:prod-2-polynom-alt-proof},\\
  Lemma~\thref{l:aff-mapping-of-Pkd-is-Pkl}.

\item[Lemma~\thref{l:pol-space-P0d-P1d}] \mbox{}\\
  is explicitly cited in the proof of:\\
  Lemma~\thref{l:isom-P0d-P0dm1},\\
  Lemma~\thref{l:decomp-Pkd},\\
  Lemma~\thref{l:Pkd-nondecr-d},\\
  Lemma~\thref{l:prod-2-polynom-alt-proof},\\
  Lemma~\thref{l:aff-mapping-of-monom-is-Pkl},\\
  Lemma~\thref{l:lag-pol-is-basis-P1d-ref},\\
  Lemma~\thref{l:diff-lag-pol-ref},\\
  Lemma~\thref{l:lag-pol-is-basis-P1d},\\
  Lemma~\thref{l:diff-lag-pol-P1d},\\
  Lemma~\thref{l:lag-lin-forms-P0d-inj}.

\item[Definition~\thref{d:deg-pol}] \mbox{}\\
  is explicitly cited in the proof of:\\
  Lemma~\thref{l:val-deg-pol},\\
  Lemma~\thref{l:deg-monom-Ckd-is-k},\\
  Lemma~\thref{l:deg-Pkd-leq-k},\\
  Lemma~\thref{l:lag-pol-is-basis-P1d-ref}.

\item[Lemma~\thref{l:val-deg-pol}] \mbox{}\\
  is not yet used.

\item[Lemma~\thref{l:deg-monom-Ckd-is-k}] \mbox{}\\
  is explicitly cited in the proof of:\\
  Lemma~\thref{l:prod-monom},\\
  Lemma~\thref{l:pder-monom},\\
  Lemma~\thref{l:decomp-Pkd}.

\item[Lemma~\thref{l:deg-Pkd-leq-k}] \mbox{}\\
  is not yet used.

\item[Lemma~\thref{l:prod-monom}] \mbox{}\\
  is explicitly cited in the proof of:\\
  Lemma~\thref{l:prod-monom-polynom}.

\item[Lemma~\thref{l:prod-monom-polynom}] \mbox{}\\
  is explicitly cited in the proof of:\\
  Lemma~\thref{l:prod-2-polynom},\\
  Lemma~\thref{l:prod-2-polynom-alt-proof}.

\item[Lemma~\thref{l:prod-2-polynom}] \mbox{}\\
  is explicitly cited in the proof of:\\
  Lemma~\thref{l:prod-polynom}.

\item[Lemma~\thref{l:pder-monom}] \mbox{}\\
  is explicitly cited in the proof of:\\
  Lemma~\thref{l:pder-is-lin},\\
  Lemma~\thref{l:pdef-more-than-deg-is-0},\\
  Lemma~\thref{l:pder-monom-at-0}.

\item[Lemma~\thref{l:pder-is-lin}] \mbox{}\\
  is explicitly cited in the proof of:\\
  Lemma~\thref{l:pder-0},\\
  Lemma~\thref{l:monom-free-in-Pkd}.

\item[Lemma~\thref{l:pder-0}] \mbox{}\\
  is explicitly cited in the proof of:\\
  Lemma~\thref{l:monom-free-in-Pkd}.

\item[Lemma~\thref{l:pdef-more-than-deg-is-0}] \mbox{}\\
  is not yet used.

\item[Lemma~\thref{l:pder-monom-at-0}] \mbox{}\\
  is explicitly cited in the proof of:\\
  Lemma~\thref{l:monom-free-in-Pkd}.

\item[Lemma~\thref{l:monom-free-in-Pkd}] \mbox{}\\
  is explicitly cited in the proof of:\\
  Lemma~\thref{l:monom-basis-Pkd},\\
  Lemma~\thref{l:decomp-Pkd}.

\item[Lemma~\thref{l:monom-basis-Pkd}] \mbox{}\\
  is explicitly cited in the proof of:\\
  Lemma~\thref{l:dim-Pkd}.

\item[Lemma~\thref{l:dim-Pkd}] \mbox{}\\
  is explicitly cited in the proof of:\\
  Lemma~\thref{l:isom-P0d-P0dm1},\\
  Lemma~\thref{l:isom-Pkd-Pkdm1xPkm1d},\\
  Lemma~\thref{l:lag-pol-is-basis-P1d-ref},\\
  Lemma~\thref{l:lag-pol-is-basis-P1d},\\
  Lemma~\thref{l:unisolvence-P0d},\\
  Lemma~\thref{l:unisolvence-P1d},\\
  Theorem~\thref{t:unisolvence-Pkd},\\
  Theorem~\thref{t:Pkd-lag-fe}.

\item[Lemma~\thref{l:isom-P0d-P0dm1}] \mbox{}\\
  is explicitly cited in the proof of:\\
  Lemma~\thref{l:Pkd-as-pol-xd}.

\item[Lemma~\thref{l:decomp-Pkd}] \mbox{}\\
  is explicitly cited in the proof of:\\
  Lemma~\thref{l:isom-Pkd-Pkdm1xPkm1d},\\
  Lemma~\thref{l:Pkd-as-pol-xd},\\
  Lemma~\thref{l:prod-2-polynom-alt-proof},\\
  Lemma~\thref{l:factor-zero-pol-last-ref-hyperpl}.

\item[Lemma~\thref{l:isom-Pkd-Pkdm1xPkm1d}] \mbox{}\\
  is explicitly cited in the proof of:\\
  Lemma~\thref{l:Pkd-nondecr-d}.

\item[Lemma~\thref{l:Pkd-nondecr-d}] \mbox{}\\
  is explicitly cited in the proof of:\\
  Lemma~\thref{l:prod-2-polynom-alt-proof}.

\item[Lemma~\thref{l:Pkd-as-pol-xd}] \mbox{}\\
  is not yet used.

\item[Lemma~\thref{l:prod-2-polynom-alt-proof}] \mbox{}\\
  is explicitly cited in the proof of:\\
  Lemma~\thref{l:prod-polynom}.

\item[Lemma~\thref{l:prod-polynom}] \mbox{}\\
  is explicitly cited in the proof of:\\
  Lemma~\thref{l:aff-mapping-of-monom-is-Pkl}.

\item[Lemma~\thref{l:aff-mapping-of-monom-is-Pkl}] \mbox{}\\
  is explicitly cited in the proof of:\\
  Lemma~\thref{l:aff-mapping-of-Pkd-is-Pkl}.

\item[Lemma~\thref{l:aff-mapping-of-Pkd-is-Pkl}] \mbox{}\\
  is explicitly cited in the proof of:\\
  Lemma~\thref{l:geo-l-face-mapping-of-Pkd-is-Pkl},\\
  Lemma~\thref{l:geo-mapping-of-Pkd-is-Pkd},\\
  Lemma~\thref{l:factor-zero-pol-hyperpl-Pkd}.

\item[Definition~\thref{d:lag-pol-P1d-ref}] \mbox{}\\
  is explicitly cited in the proof of:\\
  Lemma~\thref{l:ref-lag-pol-P1d-for-d-eq-1-are-ref-lag-pol-Pk1-for-k},\\
  Lemma~\thref{l:lag-pol-is-basis-P1d-ref},\\
  Lemma~\thref{l:geo-mapping-for-d-eq-1-is-geo-mapping-1d},\\
  Lemma~\thref{l:ref-geo-mapping-is-id},\\
  Lemma~\thref{l:prop-geo-mapping},\\
  Lemma~\thref{l:ref-face-hyperpl},\\
  Lemma~\thref{l:face-hyperpl-im-ref-face-hyperpl},\\
  Lemma~\thref{l:prop-geo-l-face-mapping},\\
  Lemma~\thref{l:factor-zero-pol-last-ref-hyperpl}.

\item[Lemma~\thref{l:ref-lag-pol-P1d-for-d-eq-1-are-ref-lag-pol-Pk1-for-k}] \mbox{}\\
  is not yet used.

\item[Lemma~\thref{l:lag-pol-is-basis-P1d-ref}] \mbox{}\\
  is explicitly cited in the proof of:\\
  Lemma~\thref{l:diff-lag-pol-ref},\\
  Lemma~\thref{l:prop-geo-mapping},\\
  Lemma~\thref{l:lag-pol-is-basis-P1d},\\
  Lemma~\thref{l:prop-geo-l-face-mapping},\\
  Lemma~\thref{l:geo-mapping-permut}.

\item[Lemma~\thref{l:diff-lag-pol-ref}] \mbox{}\\
  is explicitly cited in the proof of:\\
  Lemma~\thref{l:diff-lag-pol-P1d}.

\item[Definition~\thref{d:geo-mapping}] \mbox{}\\
  is explicitly cited in the proof of:\\
  Lemma~\thref{l:geo-mapping-for-d-eq-1-is-geo-mapping-1d},\\
  Lemma~\thref{l:ref-geo-mapping-is-id},\\
  Lemma~\thref{l:prop-geo-mapping},\\
  Lemma~\thref{l:lag-pol-P1d-is-baryc-coor},\\
  Lemma~\thref{l:face-hyperpl-im-ref-face-hyperpl},\\
  Lemma~\thref{l:geo-d-face-mapping-is-geo-mapping}.

\item[Lemma~\thref{l:geo-mapping-for-d-eq-1-is-geo-mapping-1d}] \mbox{}\\
  is not yet used.

\item[Lemma~\thref{l:ref-geo-mapping-is-id}] \mbox{}\\
  is explicitly cited in the proof of:\\
  Lemma~\thref{l:lag-pol-of-ref-vert-are-ref-lag-pol-P1d}.

\item[Lemma~\thref{l:prop-geo-mapping}] \mbox{}\\
  is explicitly cited in the proof of:\\
  Lemma~\thref{l:diff-geo-mapping},\\
  Lemma~\thref{l:lag-pol-P1d},\\
  Lemma~\thref{l:lag-pol-is-basis-P1d},\\
  Lemma~\thref{l:non-trivial-simplex},\\
  Lemma~\thref{l:lag-pol-P1d-is-baryc-coor},\\
  Lemma~\thref{l:geo-mapping-of-Pkd-is-Pkd},\\
  Lemma~\thref{l:lag-nodes-Pkd-im-ref}.

\item[Lemma~\thref{l:diff-geo-mapping}] \mbox{}\\
  is explicitly cited in the proof of:\\
  Lemma~\thref{l:diff-lag-pol-P1d},\\
  Lemma~\thref{l:non-trivial-simplex}.

\item[Lemma~\thref{l:lag-pol-P1d}] \mbox{}\\
  is explicitly cited in the proof of:\\
  Lemma~\thref{l:lag-pol-of-ref-vert-are-ref-lag-pol-P1d},\\
  Lemma~\thref{l:lag-pol-is-basis-P1d},\\
  Lemma~\thref{l:lag-pol-P1d-is-baryc-coor}.

\item[Lemma~\thref{l:lag-pol-of-ref-vert-are-ref-lag-pol-P1d}] \mbox{}\\
  is explicitly cited in the proof of:\\
  Lemma~\thref{l:ref-face-hyperpl}.

\item[Lemma~\thref{l:lag-pol-is-basis-P1d}] \mbox{}\\
  is explicitly cited in the proof of:\\
  Lemma~\thref{l:decomp-P1d-pol-in-lag-basis},\\
  Lemma~\thref{l:diff-lag-pol-P1d}.

\item[Lemma~\thref{l:decomp-P1d-pol-in-lag-basis}] \mbox{}\\
  is explicitly cited in the proof of:\\
  Lemma~\thref{l:decomp-aff-map-with-baryc-coor},\\
  Lemma~\thref{l:decomp-P1d-pol-with-sigma}.

\item[Lemma~\thref{l:diff-lag-pol-P1d}] \mbox{}\\
  is not yet used.

\item[Lemma~\thref{l:non-trivial-simplex}] \mbox{}\\
  is explicitly cited in the proof of:\\
  Theorem~\thref{t:Pkd-lag-fe}.

\item[Lemma~\thref{l:baryc-coor}] \mbox{}\\
  is explicitly cited in the proof of:\\
  Lemma~\thref{l:lag-pol-P1d-is-baryc-coor},\\
  Lemma~\thref{l:equiv-def-face-hyperpl},\\
  Lemma~\thref{l:geo-mapping-permut},\\
  Lemma~\thref{l:num-lag-nodes-Pkd},\\
  Lemma~\thref{l:baryc-coor-lag-nodes-Pkd},\\
  Lemma~\thref{l:num-lag-nodes-Pkd-ref}.

\item[Lemma~\thref{l:lag-pol-P1d-is-baryc-coor}] \mbox{}\\
  is explicitly cited in the proof of:\\
  Lemma~\thref{l:decomp-aff-map-with-baryc-coor},\\
  Lemma~\thref{l:ref-face-hyperpl},\\
  Lemma~\thref{l:geo-mapping-permut}.

\item[Lemma~\thref{l:decomp-aff-map-with-baryc-coor}] \mbox{}\\
  is explicitly cited in the proof of:\\
  Lemma~\thref{l:decomp-P1d-pol-with-sigma}.

\item[Definition~\thref{d:face-hyperpl}] \mbox{}\\
  is explicitly cited in the proof of:\\
  Lemma~\thref{l:equiv-def-face-hyperpl},\\
  Lemma~\thref{l:face-hyperpl-im-ref-face-hyperpl}.

\item[Lemma~\thref{l:equiv-def-face-hyperpl}] \mbox{}\\
  is explicitly cited in the proof of:\\
  Lemma~\thref{l:ref-face-hyperpl},\\
  Lemma~\thref{l:hyperface-is-incl-face-hyperplane},\\
  Lemma~\thref{l:d-1-face-aff-sp-is-hyperface-aff-sp},\\
  Lemma~\thref{l:geo-mapping-permut},\\
  Lemma~\thref{l:face-hyperpl-lag-nodes-Pkd},\\
  Lemma~\thref{l:factor-zero-pol-hyperpl-Pkd},\\
  Lemma~\thref{l:lag-lin-forms-Pkd-inj}.

\item[Lemma~\thref{l:ref-face-hyperpl}] \mbox{}\\
  is explicitly cited in the proof of:\\
  Lemma~\thref{l:geo-mapping-permut},\\
  Lemma~\thref{l:factor-zero-pol-last-ref-hyperpl}.

\item[Lemma~\thref{l:face-hyperpl-im-ref-face-hyperpl}] \mbox{}\\
  is not yet used.

\item[Definition~\thref{d:hyperface}] \mbox{}\\
  is explicitly cited in the proof of:\\
  Lemma~\thref{l:hyperface-is-incl-face-hyperplane},\\
  Lemma~\thref{l:d-1-face-aff-sp-is-hyperface-aff-sp}.

\item[Lemma~\thref{l:hyperface-is-incl-face-hyperplane}] \mbox{}\\
  is not yet used.

\item[Definition~\thref{d:l-face-aff-space}] \mbox{}\\
  is explicitly cited in the proof of:\\
  Lemma~\thref{l:equiv-def-l-face-aff-space},\\
  Lemma~\thref{l:l-face-is-incl-l-face-aff-space},\\
  Lemma~\thref{l:d-1-face-aff-sp-is-hyperface-aff-sp},\\
  Lemma~\thref{l:prop-geo-l-face-mapping}.

\item[Lemma~\thref{l:equiv-def-l-face-aff-space}] \mbox{}\\
  is explicitly cited in the proof of:\\
  Lemma~\thref{l:d-face-aff-space-is-full-space},\\
  Lemma~\thref{l:0-face-aff-space-is-vertex},\\
  Lemma~\thref{l:prop-geo-l-face-mapping}.

\item[Lemma~\thref{l:d-face-aff-space-is-full-space}] \mbox{}\\
  is explicitly cited in the proof of:\\
  Lemma~\thref{l:geo-mapping-permut}.

\item[Lemma~\thref{l:0-face-aff-space-is-vertex}] \mbox{}\\
  is not yet used.

\item[Definition~\thref{d:l-face}] \mbox{}\\
  is explicitly cited in the proof of:\\
  Lemma~\thref{l:l-face-is-incl-l-face-aff-space},\\
  Lemma~\thref{l:l-face-is-simplex},\\
  Lemma~\thref{l:d-1-face-aff-sp-is-hyperface-aff-sp},\\
  Lemma~\thref{l:prop-geo-l-face-mapping}.

\item[Lemma~\thref{l:l-face-is-incl-l-face-aff-space}] \mbox{}\\
  is not yet used.

\item[Lemma~\thref{l:l-face-is-simplex}] \mbox{}\\
  is explicitly cited in the proof of:\\
  Lemma~\thref{l:geo-mapping-permut}.

\item[Lemma~\thref{l:d-1-face-aff-sp-is-hyperface-aff-sp}] \mbox{}\\
  is explicitly cited in the proof of:\\
  Lemma~\thref{l:geo-hyperface-mapping},\\
  Lemma~\thref{l:hyperface-geo-mapping-of-Pkd-is-Pkdmi}.

\item[Definition~\thref{d:geo-l-face-mapping}] \mbox{}\\
  is explicitly cited in the proof of:\\
  Lemma~\thref{l:geo-d-face-mapping-is-geo-mapping},\\
  Lemma~\thref{l:prop-geo-l-face-mapping},\\
  Lemma~\thref{l:geo-mapping-permut}.

\item[Lemma~\thref{l:geo-d-face-mapping-is-geo-mapping}] \mbox{}\\
  is explicitly cited in the proof of:\\
  Lemma~\thref{l:geo-mapping-of-Pkd-is-Pkd}.

\item[Lemma~\thref{l:prop-geo-l-face-mapping}] \mbox{}\\
  is explicitly cited in the proof of:\\
  Lemma~\thref{l:geo-l-face-mapping-of-Pkd-is-Pkl},\\
  Lemma~\thref{l:geo-hyperface-mapping},\\
  Lemma~\thref{l:geo-mapping-permut}.

\item[Lemma~\thref{l:geo-l-face-mapping-of-Pkd-is-Pkl}] \mbox{}\\
  is explicitly cited in the proof of:\\
  Lemma~\thref{l:geo-mapping-of-Pkd-is-Pkd},\\
  Lemma~\thref{l:hyperface-geo-mapping-of-Pkd-is-Pkdmi}.

\item[Lemma~\thref{l:geo-mapping-of-Pkd-is-Pkd}] \mbox{}\\
  is not yet used.

\item[Lemma~\thref{l:geo-hyperface-mapping}] \mbox{}\\
  is explicitly cited in the proof of:\\
  Lemma~\thref{l:im-nodes-by-geo-hyperface-mapping},\\
  Lemma~\thref{l:lag-lin-forms-Pkd-inj},\\
  Lemma~\thref{l:face-unisolvence-Pkd}.

\item[Lemma~\thref{l:hyperface-geo-mapping-of-Pkd-is-Pkdmi}] \mbox{}\\
  is explicitly cited in the proof of:\\
  Lemma~\thref{l:lag-lin-forms-Pkd-inj},\\
  Lemma~\thref{l:face-unisolvence-Pkd}.

\item[Lemma~\thref{l:geo-mapping-permut}] \mbox{}\\
  is explicitly cited in the proof of:\\
  Lemma~\thref{l:factor-zero-pol-hyperpl-Pkd}.

\item[Definition~\thref{d:lag-nodes-Pkd}] \mbox{}\\
  is explicitly cited in the proof of:\\
  Lemma~\thref{l:lag-nodes-Pkd-for-d-eq-1-are-lag-nodes-Pk1},\\
  Lemma~\thref{l:num-lag-nodes-Pkd},\\
  Lemma~\thref{l:baryc-coor-lag-nodes-Pkd},\\
  Lemma~\thref{l:vert-lag-nodes-Pkd},\\
  Lemma~\thref{l:lag-nodes-Pkd-ref},\\
  Lemma~\thref{l:lag-nodes-Pkd-im-ref},\\
  Lemma~\thref{l:lag-lin-forms-Pkd-inj}.

\item[Lemma~\thref{l:lag-nodes-Pkd-for-d-eq-1-are-lag-nodes-Pk1}] \mbox{}\\
  is explicitly cited in the proof of:\\
  Lemma~\thref{l:lag-lin-forms-Pkd-for-d-eq-1-are-lag-lin-forms-Pk1}.

\item[Lemma~\thref{l:num-lag-nodes-Pkd}] \mbox{}\\
  is explicitly cited in the proof of:\\
  Lemma~\thref{l:face-hyperpl-lag-nodes-Pkd},\\
  Lemma~\thref{l:card-lag-lin-forms-Pkd}.

\item[Lemma~\thref{l:baryc-coor-lag-nodes-Pkd}] \mbox{}\\
  is explicitly cited in the proof of:\\
  Lemma~\thref{l:equiv-def-sub-vert-lag-nodes-Pkd},\\
  Lemma~\thref{l:Pkm1d-sub-nodes-sub-vert-are-some-nodes-Pkd},\\
  Lemma~\thref{l:lag-nodes-Pkd-ref},\\
  Lemma~\thref{l:face-hyperpl-lag-nodes-Pkd},\\
  Lemma~\thref{l:im-nodes-by-geo-hyperface-mapping}.

\item[Lemma~\thref{l:vert-lag-nodes-Pkd}] \mbox{}\\
  is explicitly cited in the proof of:\\
  Lemma~\thref{l:lag-nodes-Pid-are-vert}.

\item[Lemma~\thref{l:lag-nodes-Pid-are-vert}] \mbox{}\\
  is explicitly cited in the proof of:\\
  Lemma~\thref{l:decomp-P1d-pol-with-sigma}.

\item[Definition~\thref{d:sub-vert-lag-nodes-Pkd}] \mbox{}\\
  is explicitly cited in the proof of:\\
  Lemma~\thref{l:equiv-def-sub-vert-lag-nodes-Pkd}.

\item[Lemma~\thref{l:equiv-def-sub-vert-lag-nodes-Pkd}] \mbox{}\\
  is explicitly cited in the proof of:\\
  Lemma~\thref{l:sub-vert-aff-indep},\\
  Lemma~\thref{l:Pkm1d-sub-nodes-sub-vert-are-some-nodes-Pkd}.

\item[Lemma~\thref{l:sub-vert-aff-indep}] \mbox{}\\
  is explicitly cited in the proof of:\\
  Lemma~\thref{l:Pkm1d-sub-nodes-sub-vert-are-some-nodes-Pkd},\\
  Lemma~\thref{l:lag-lin-forms-Pkd-inj}.

\item[Lemma~\thref{l:Pkm1d-sub-nodes-sub-vert-are-some-nodes-Pkd}] \mbox{}\\
  is explicitly cited in the proof of:\\
  Lemma~\thref{l:lag-lin-forms-Pkd-inj}.

\item[Lemma~\thref{l:lag-nodes-Pkd-ref}] \mbox{}\\
  is explicitly cited in the proof of:\\
  Lemma~\thref{l:lag-nodes-Pkd-ref-for-d-eq-1-are-lag-nodes-Pk1-ref},\\
  Lemma~\thref{l:equiv-def-lag-nodes-Pkd-ref},\\
  Lemma~\thref{l:num-lag-nodes-Pkd-ref},\\
  Lemma~\thref{l:im-nodes-by-geo-hyperface-mapping},\\
  Theorem~\thref{t:Pkd-lag-fe-ref}.

\item[Lemma~\thref{l:lag-nodes-Pkd-ref-for-d-eq-1-are-lag-nodes-Pk1-ref}] \mbox{}\\
  is explicitly cited in the proof of:\\
  Lemma~\thref{l:lag-lin-forms-Pkd-ref-for-d-eq-1-are-lag-lin-forms-Pk1-ref}.

\item[Lemma~\thref{l:equiv-def-lag-nodes-Pkd-ref}] \mbox{}\\
  is explicitly cited in the proof of:\\
  Lemma~\thref{l:lag-nodes-Pkd-im-ref}.

\item[Lemma~\thref{l:num-lag-nodes-Pkd-ref}] \mbox{}\\
  is not yet used.

\item[Lemma~\thref{l:lag-nodes-Pkd-im-ref}] \mbox{}\\
  is explicitly cited in the proof of:\\
  Lemma~\thref{l:lag-lin-forms-Pkd-im-ref}.

\item[Lemma~\thref{l:face-hyperpl-lag-nodes-Pkd}] \mbox{}\\
  is explicitly cited in the proof of:\\
  Lemma~\thref{l:lag-lin-forms-Pkd-inj},\\
  Lemma~\thref{l:face-unisolvence-Pkd}.

\item[Lemma~\thref{l:im-nodes-by-geo-hyperface-mapping}] \mbox{}\\
  is explicitly cited in the proof of:\\
  Lemma~\thref{l:lag-lin-forms-Pkd-inj},\\
  Lemma~\thref{l:face-unisolvence-Pkd}.

\item[Definition~\thref{d:lag-lin-forms-Pkd}] \mbox{}\\
  is explicitly cited in the proof of:\\
  Lemma~\thref{l:lag-lin-forms-Pkd-for-d-eq-1-are-lag-lin-forms-Pk1},\\
  Lemma~\thref{l:lag-lin-forms-are-linear-Pkd},\\
  Lemma~\thref{l:card-lag-lin-forms-Pkd},\\
  Lemma~\thref{l:lag-lin-forms-Pkd-im-ref},\\
  Lemma~\thref{l:lag-lin-forms-P0d-inj},\\
  Lemma~\thref{l:decomp-P1d-pol-with-sigma},\\
  Lemma~\thref{l:lag-lin-forms-Pkd-inj},\\
  Theorem~\thref{t:unisolvence-Pkd},\\
  Lemma~\thref{l:face-unisolvence-Pkd}.

\item[Lemma~\thref{l:lag-lin-forms-Pkd-for-d-eq-1-are-lag-lin-forms-Pk1}] \mbox{}\\
  is explicitly cited in the proof of:\\
  Lemma~\thref{l:lag-lin-forms-Pkd-inj},\\
  Lemma~\thref{l:Pkd-lag-fe-for-d-eq-1-is-Pk1-lag-fe}.

\item[Lemma~\thref{l:lag-lin-forms-are-linear-Pkd}] \mbox{}\\
  is not yet used.

\item[Lemma~\thref{l:card-lag-lin-forms-Pkd}] \mbox{}\\
  is explicitly cited in the proof of:\\
  Lemma~\thref{l:unisolvence-P0d},\\
  Lemma~\thref{l:unisolvence-P1d},\\
  Theorem~\thref{t:unisolvence-Pkd}.

\item[Definition~\thref{d:lag-lin-forms-Pkd-ref}] \mbox{}\\
  is explicitly cited in the proof of:\\
  Lemma~\thref{l:lag-lin-forms-Pkd-ref-for-d-eq-1-are-lag-lin-forms-Pk1-ref},\\
  Lemma~\thref{l:lag-lin-forms-Pkd-im-ref},\\
  Theorem~\thref{t:Pkd-lag-fe-ref}.

\item[Lemma~\thref{l:lag-lin-forms-Pkd-ref-for-d-eq-1-are-lag-lin-forms-Pk1-ref}] \mbox{}\\
  is explicitly cited in the proof of:\\
  Lemma~\thref{l:Pkd-lag-fe-ref-for-d-eq-1-is-Pk1-lag-fe-ref}.

\item[Lemma~\thref{l:lag-lin-forms-Pkd-im-ref}] \mbox{}\\
  is not yet used.

\item[Lemma~\thref{l:lag-lin-forms-P0d-inj}] \mbox{}\\
  is explicitly cited in the proof of:\\
  Lemma~\thref{l:unisolvence-P0d}.

\item[Lemma~\thref{l:unisolvence-P0d}] \mbox{}\\
  is explicitly cited in the proof of:\\
  Theorem~\thref{t:Pkd-lag-fe}.

\item[Lemma~\thref{l:decomp-P1d-pol-with-sigma}] \mbox{}\\
  is explicitly cited in the proof of:\\
  Lemma~\thref{l:lag-lin-forms-P1d-inj}.

\item[Lemma~\thref{l:lag-lin-forms-P1d-inj}] \mbox{}\\
  is explicitly cited in the proof of:\\
  Lemma~\thref{l:unisolvence-P1d},\\
  Lemma~\thref{l:lag-lin-forms-Pkd-inj}.

\item[Lemma~\thref{l:unisolvence-P1d}] \mbox{}\\
  is not yet used.

\item[Lemma~\thref{l:factor-zero-pol-last-ref-hyperpl}] \mbox{}\\
  is explicitly cited in the proof of:\\
  Lemma~\thref{l:factor-zero-pol-hyperpl-Pkd}.

\item[Lemma~\thref{l:factor-zero-pol-hyperpl-Pkd}] \mbox{}\\
  is explicitly cited in the proof of:\\
  Lemma~\thref{l:lag-lin-forms-Pkd-inj}.

\item[Lemma~\thref{l:lag-lin-forms-Pkd-inj}] \mbox{}\\
  is explicitly cited in the proof of:\\
  Theorem~\thref{t:unisolvence-Pkd},\\
  Lemma~\thref{l:face-unisolvence-Pkd}.

\item[Theorem~\thref{t:unisolvence-Pkd}] \mbox{}\\
  is explicitly cited in the proof of:\\
  Theorem~\thref{t:Pkd-lag-fe}.

\item[Lemma~\thref{l:face-unisolvence-Pkd}] \mbox{}\\
  is not yet used.

\item[Theorem~\thref{t:Pkd-lag-fe}] \mbox{}\\
  is explicitly cited in the proof of:\\
  Lemma~\thref{l:Pkd-lag-fe-for-d-eq-1-is-Pk1-lag-fe},\\
  Theorem~\thref{t:Pkd-lag-fe-ref}.

\item[Lemma~\thref{l:Pkd-lag-fe-for-d-eq-1-is-Pk1-lag-fe}] \mbox{}\\
  is not yet used.

\item[Theorem~\thref{t:Pkd-lag-fe-ref}] \mbox{}\\
  is explicitly cited in the proof of:\\
  Lemma~\thref{l:Pkd-lag-fe-ref-for-d-eq-1-is-Pk1-lag-fe-ref}.

\item[Lemma~\thref{l:Pkd-lag-fe-ref-for-d-eq-1-is-Pk1-lag-fe-ref}] \mbox{}\\
  is not yet used.

\end{description}

\end{document}